        \title{Conformal nets II: conformal blocks}       
       \author{Arthur Bartels}
      \address{Westf\"alische Wilhelms-Universit\"at M\"unster\\
               Mathematisches Institut\\
               Einsteinstr.~62,
               D-48149 M\"unster, Deutschland}
        \email{bartelsa@wwu.de}
      \urladdr{http://www.math.uni-muenster.de/u/bartelsa}
       \author{Christopher L. Douglas} 
      \address{Mathematical Institute\\ Radcliffe Observatory Quarter\\ Woodstock Road\\ Oxford\\ OX2 6GG\\ United Kingdom}
        \email{cdouglas@maths.ox.ac.uk}
      \urladdr{http://people.maths.ox.ac.uk/cdouglas}
       \author{Andr{\'e} Henriques}
      \address{Mathematisch Instituut\\
               Universiteit Utrecht, Postbus 80.010\\
               3508 TA Utrecht, The Netherlands}
        \email{a.g.henriques@uu.nl}
      \urladdr{http://www.staff.science.uu.nl/\!\raisebox{-1mm}{~}\!henri105}  
         \date{\today}
  \newcommand{\secDualizability}{Sec.~4}
  \newcommand{\subsecsectorsfornets}{Sec.~1B}
  \newcommand{\eqfunctorphi}{Eq.~(1.12)}
  \newcommand{\secconformalembeddings}{Sec.~1.E}
  \newcommand{\defimplements}{Def.~2.1}
  \newcommand{\lemLconformalimplementaionOP}{Lem.~2.9}
  \newcommand{\propprojectiveimplementationdiffeo}{Prop.~2.10}
  \newcommand{\thmVaccumSector}{Thm.~2.13}
    \newcommand{\propvacuumsectorcovariant}{(ii)}
    \newcommand{\propvacuumsectorL}{(iv)}
  \newcommand{\lemHS}{Lem.~3.5}
  \newcommand{\thmKLallirreduciblesectorsarefinite}{Thm.~3.14}
  \newcommand{\lemdualofHlambda}{Lem.~3.17}
  \newcommand{\thmKLM}{Thm.~3.24}     
  \newcommand{\lemNNN}{Lem.~3.30}
  \newcommand{\coritsapartialSigmasector}{Cor.~3.31}
       \newcommand{\cala}{\mathcal{A}}
       \newcommand{\calb}{\mathcal{B}}
  \newcommand{\IC}{\mathbb{C}}     
  \newcommand{\ID}{\mathbb{D}}
       \newcommand{\cali}{\mathcal{I}}
     \newcommand{\calj}{\mathcal{J}}
  \newcommand{\IP}{\mathbb{P}}     
  \newcommand{\IR}{\mathbb{R}}     \newcommand{\calr}{\mathcal{R}}
  \newcommand{\IU}{\mathbb{U}}
  \newcommand{\IX}{\mathbb{X}}     
  \newcommand{\IZ}{\mathbb{Z}}
  \newcommand{\bfB}{{\mathbf B}}
  \newcounter{
  counter}
  \definecolor{AHcolor}{rgb}{0.5,0.0,0.5}   
  \definecolor{CDcolor}{rgb}{0.7,0.0,0.3}   
  \definecolor{ABcolor}{rgb}{0.2,0.8,0.2}   
  \newcommand{\AB}[1]{\marginpar{\raggedright\tiny\color{ABcolor}{ #1}}}
  \newcommand{\CDcomm}[1]{}
  \newcommand{\ABcomm}[1]{}
  \newcommand{\AHcomm}[1]{}
  \newcommand{\tikzmath}[2][]
     {\vcenter{\hbox{\begin{tikzpicture}[#1]#2
                     \end{tikzpicture}}}
     }
  \definecolor{spacecolor}{gray}{.7}
  \definecolor{antispacecolor}{gray}{.45}
  \theoremstyle{plain}
  \newtheorem{theorem}{Theorem}[section]
  \newtheorem{maintheorem}[theorem]{Main Theorem}
  \newtheorem{lemma}[theorem]{Lemma}
  \newtheorem{corollary}[theorem]{Corollary}
  \newtheorem{proposition}[theorem]{Proposition}
  \newtheorem*{theorem*}{Theorem}
  \newtheorem*{theorem1*}{Theorem 1 (preliminary version: no boundary)}
  \newtheorem*{theorem2*}{Theorem 1}
  \newtheorem*{theorem3*}{Theorem 2 (preliminary version: gluing along closed manifolds)}
  \newtheorem*{theorem4*}{Theorem 2}
  \theoremstyle{definition}
  \newtheorem{definition}[theorem]{Definition}
  \newtheorem{warning}[theorem]{Warning}
  \theoremstyle{remark}
  \newtheorem{remark}[theorem]{Remark}
  \newtheorem*{example*}{Example}
\let\c@equation=\c@theorem\makeatother
  \DeclareMathOperator{\ad}{Ad}
  \DeclareMathOperator{\aut}{Aut}
  \DeclareMathOperator{\Diff}{Diff}
  \DeclareMathOperator{\PU}{PU}
  \DeclareMathOperator{\Rep}{Sect}
  \DeclareMathOperator{\U}{U}
  \DeclareMathOperator{\supp}{supp}
\DeclareMathOperator\bigboxtimes{\tikzmath{
\useasboundingbox (-.2,-.22) rectangle (.23,.2);
\node[scale=1.5] {$\boxtimes$};}}
\def\squared#1{\tikz{\useasboundingbox (-.13,-.11) rectangle (.13,.12);\node[draw, inner sep = 1.5]{\tiny #1};}}
  \newcommand{\INT}{{\mathsf{INT}}}
  \newcommand{\VN}{{\mathsf{VN}}}
  \newcommand{\modules}[1]{{#1\text{-}\mathsf{modules}}}
  \newcommand{\alg}{{\mathit{alg}}}
  \newcommand{\op}{{\mathit{op}}}
  \newcommand{\x}{{\times}}
  \newcommand{\ox}{{\otimes}}
  \newcommand{\dd}{{\partial}}
  \DeclareRobustCommand{\SkipTocEntry}[5]{}
\renewcommand{\thesubsection}{\arabic{section}.{\sc\alph{subsection}}}
\begin{document}

	\begin{abstract} 
Conformal nets provide a mathematical formalism for conformal field theory.  Associated to a conformal net with finite index, we give a construction of the `bundle of conformal blocks', a representation of the mapping class groupoid of closed topological surfaces into the category of finite-dimensional projective Hilbert spaces.  We also construct infinite-dimensional spaces of conformal blocks for topological surfaces with smooth boundary.  We prove that the conformal blocks satisfy a factorization formula for gluing surfaces along circles, and an analogous formula for gluing surfaces along intervals.  We use this interval factorization property to give a new proof of the modularity of the category of representations of a conformal net.
	\end{abstract}

\date{}
\maketitle
\tableofcontents

\newcommand{\comment}[1]{}
\comment{
}


\section*{Introduction}
\subsection*{Conformal field theory and conformal blocks}
\addtocontents{toc}{\SkipTocEntry}
Given a 2d conformal field theory (CFT), its \emph{partition function} $Z$ assigns a number to every Riemann surface~$\Sigma$.%
\footnote{In fact, a non-trivial CFT always has an anomaly and, despite the terminology, the partition function depends not only on a conformal structure on the surface but also on a metric.
However, the partition function transforms in a specified way under changes within a given conformal equivalence class, see~\cite[Sec.3]{Friedan-Shenker(The-analytic-geometry-of-two-dimensional-conformal-field-theory)}.}
The \emph{correlation functions} generalize the partition function:
they assign numbers to Riemann surfaces with marked points (and local coordinates) labelled by fields of the CFT---the fields form a vector space, also called the state space of the conformal field theory.

\emph{Conformal blocks} were first introduced in the famous paper of 
Belavin, Polyakov, and Zamolodchikov \cite{Belavin-Polyakov-Zamolodchikov} for the special case of the Virasoro conformal field theory.
They are certain holomorphic functions $\mathfrak{F}_a$ that enter in a formula for the correlation and partition functions.
In its simplest version, the formula reads
\begin{equation}\label{eq: Z=FFbar}
Z=\sum h^{ab} \mathfrak{F}_a\bar{\mathfrak{F}}_b
\end{equation}
where $h^{ab}$ is a certain matrix that depends only on the genus of the surface, and on the conformal field theory.
These functions $\mathfrak{F}_a$ were called conformal blocks ``because any correlation function is built up of these functions'' \cite{Belavin-Polyakov-Zamolodchikov}.
Unlike the correlation and partition functions which are single-valued functions on the moduli space, the conformal blocks typically depend on more data 
than just the surface and marked points, for instance on a pair-of-pants decomposition of the surface.

It was quickly realized (\cite{Friedan-Shenker(The-analytic-geometry-of-two-dimensional-conformal-field-theory)}, \cite{Moore-Seiberg(classical+quantum-conf-field-theory)})
that the formula \eqref{eq: Z=FFbar} can be rewritten in multiple ways (related by what Moore-Seiberg \cite{Moore-Seiberg(classical+quantum-conf-field-theory)} call ``duality'', that is, the move
$
\tikzmath[scale=.7]{
\filldraw[fill=gray!20] (1,.2) to [in=90, out=180, looseness=1.2] (.8,0) arc (0:-180: .2 and .05) -- (.4,0) to [in=0, out=90] (0,.4) -- (0,.8) to [in=-90, out=0] (.2,1) -- (.6,1) to [in=180, out=-90] (1,.6) arc (90:-90:.05 and .2) -- cycle;
\filldraw[fill=gray!50] (.4,1) circle (.2 and .05) (0,.6) circle (.05 and .2);
\draw[densely dashed] (1,.4) + (0,.2) arc (90:270:.05 and .2) (.6,0) + (.2,0) arc (0:180:.2 and .05);
\pgftransformrotate{45}
\draw (.424,0) arc (-180:0:.283 and .05);
\draw[densely dotted] (.424,0) arc (180:0:.283 and .05);}
\leftrightarrow
\tikzmath[scale=.7]{\pgftransformrotate{-90}
\filldraw[fill=gray!20] (1,.2) to [in=90, out=180, looseness=1.2] (.8,0) -- (.4,0) to [in=0, out=90] (0,.4) -- (0,.8) to [in=-90, out=0] (.2,1) arc (180:0: .2 and .05) -- (.6,1) to [in=180, out=-90] (1,.6) arc (90:-90:.05 and .2) -- cycle;
\draw[densely dashed] (1,.4) + (0,.2) arc (90:270:.05 and .2) (.4,1) + (.2,0) arc (0:-180:.2 and .05);
\filldraw[fill=gray!50] (.6,0) circle (.2 and .05) (0,.6) circle (.05 and .2);
\pgftransformrotate{45}
\draw (.424,0) arc (-180:0:.283 and .05);
\draw[densely dotted] (.424,0) arc (180:0:.283 and .05);}
$~\!)
and that the only object that is truly invariant is the vector space spanned by all the conformal blocks for a given Riemann surface.
This is the so-called \emph{space of conformal blocks}.

\subsection*{The structure of the spaces of conformal blocks}
\addtocontents{toc}{\SkipTocEntry}
As it turns out, the spaces of conformal blocks only depend on an object that contains somewhat less information (but is more delicate to define) than a conformal field theory.
That object goes by the name \emph{chiral conformal field theory}.
To emphasize the difference between chiral conformal field theories and conformal field theories, the latter are also often called \emph{full conformal field theories}.
One may think very roughly of a chiral conformal field theory as being some kind of algebra.
For example (unlike for conformal field theories), there exists a notion of \emph{representation}, also called \emph{sector}, of a chiral conformal field theory.

The space of conformal blocks associated to a Riemann surface $\Sigma$ is denoted $V(\Sigma)$;
the choice of chiral conformal field theory is suppressed from the notation.
There is also a generalization where the Riemann surface is equipped with a finite collection of points $p_i$ (along with first order coordinates at those points), labelled by representations $\lambda_i$ of the chiral conformal field theory.
The space of conformal blocks is then denoted
\begin{equation}\label{eq: VOA conf blocks}
V(\Sigma,p_1,\ldots,p_n;\lambda_1,\ldots,\lambda_n).
\end{equation}

The expected algebraic and differential geometric structures of the spaces of conformal blocks were described in \cite{Friedan-Shenker(The-analytic-geometry-of-two-dimensional-conformal-field-theory)} and \cite{Moore-Seiberg(classical+quantum-conf-field-theory)}.
They should form holomorphic bundles over the moduli spaces of Riemann surfaces of genus $g$ with $n$ marked points (with first order coordinates).
Conformal blocks should certainly be functorial with respect to isomorphisms between Riemann surfaces, but they should also be functorial with respect to orientation reversing isomorphisms~\cite{Moore-Seiberg(Naturality-in-conformal-field-theory)}:
an antiholomorphic isomorphism ${f:\Sigma_1\to\Sigma_2}$ between Riemann surfaces should induce an antiunitary isomorphism
\[
f_*:\,V\big(\Sigma_1,p_1,\ldots,p_n;\lambda_1,\ldots,\lambda_n\big)\,\to\, V\big(\Sigma_2,f(p_1),\ldots,f(p_n);\bar\lambda_1,\ldots,\bar\lambda_n\big),
\]
where $\bar\lambda$ stands for the representation dual to $\lambda$.
These bundles should also be equipped with a hermitian inner product and a projectively flat connection, compatible with both the holomorphic and the unitary structure.

Most importantly, the behavior of these bundles under degeneration should be governed by the so-called factorization formula.
Informally, the factorization formula says that
\begin{equation}\label{eq:fact}
\lim_{\varepsilon\to 0}\,\,\,\,V\Bigg(\tikzmath[scale=1.2]{\useasboundingbox (-.05,-.2) rectangle (2.05,1.2);
\filldraw[fill=gray!25](.9,.6) to[out=90+45, in=0](.4,.85) to[out=180, in=90](0,.5) to[out=-90, in=180](.4,.15) to[out=0, in=-90-45](.9,.4) to[out=45, in=90+45]%
(1.1,.4) to[in=180, out=-45](1.6,.15) to[in=-90, out=0](2,.5) to[in=0, out=90](1.6,.85) to[in=45, out=180](1.1,.6) to[out=-90-45, in=-45](.9,.6);
\filldraw[fill=white](.27,.5) to[out=-50, in=-90-40](.6,.5) to[out=90+40, in=50] (.27,.5);\draw (.27,.5) -- +(-.04,.05) (.6,.5) -- +(.04,.05);
\filldraw[fill=white](2-.27,.5) to[out=-90-40, in=-50](2-.6,.5) to[out=50, in=90+40] (2-.27,.5);\draw (2-.27,.5) -- +(.04,.05) (2-.6,.5) -- +(-.04,.05);%
\draw (1,.5-.059) arc (-90:90:.02 and .059);\draw[densely dotted] (1,.5+.059) arc (90:90+180:.02 and .059);
\draw[stealth-] (1,.4) -- (1,.15) node[below, scale=.8]{\small neck of} node[below, scale=.8, yshift=-8]{\small size $\varepsilon$};
\filldraw[very thin](.27,.83-.05) circle (.02) -- +(0,.15) node[above, scale=.8]{$\scriptstyle \mu_1$}(.55,.832-.05) circle (.02) -- +(0,.15) node[above, scale=.8, xshift=1]{$\scriptstyle \mu_2$}
(2-.27,.83-.05) circle (.02) -- +(0,.15) node[above, scale=.8, xshift=1]{$\scriptstyle \nu_2$}(2-.55,.832-.05) circle (.02) -- +(0,.15) node[above, scale=.8]{$\scriptstyle \nu_1$};  }   
\Bigg) \,\,=\,\,\, \bigoplus_{\lambda}\,\,\,
V\Bigg(\tikzmath[scale=1.2]{\useasboundingbox (-.05,0) rectangle (1,1);
\filldraw[fill=gray!25](.9,.6) to[out=90+45, in=0](.4,.85) to[out=180, in=90](0,.5) to[out=-90, in=180](.4,.15) to[out=0, in=-90-45](.9,.4) to[out=45, in=-45](.9,.6);
\filldraw[fill=white](.27,.5) to[out=-50, in=-90-40](.6,.5) to[out=90+40, in=50] (.27,.5);\draw (.27,.5) -- +(-.04,.05) (.6,.5) -- +(.04,.05);
\filldraw[very thin](.27,.83-.05) circle (.02) -- +(0,.15) node[above, scale=.8]{$\scriptstyle \mu_1$}(.55,.832-.05) circle (.02) -- +(0,.15) node[above, scale=.8, xshift=1]{$\scriptstyle \mu_2$}
(.9,.5) circle (.02) -- +(.03,.17) node[above, scale=.8]{$\scriptstyle \lambda$};  }\Bigg)\otimes  
V\Bigg(\tikzmath[scale=1.2]{\useasboundingbox (1,0) rectangle (2.05,1);
\filldraw[fill=gray!25](1.1,.4) to[in=180, out=-45](1.6,.15) to[in=-90, out=0](2,.5) to[in=0, out=90](1.6,.85) to[in=45, out=180](1.1,.6) to[out=-90-45, in=90+45](1.1,.4);
\filldraw[fill=white](2-.27,.5) to[out=-90-40, in=-50](2-.6,.5) to[out=50, in=90+40] (2-.27,.5);\draw (2-.27,.5) -- +(.04,.05) (2-.6,.5) -- +(-.04,.05);
\filldraw[very thin](2-.27,.83-.05) circle (.02) -- +(0,.15) node[above, scale=.8, xshift=1]{$\scriptstyle \nu_2$}
(2-.55,.832-.05) circle (.02) -- +(0,.15) node[above, scale=.8]{$\scriptstyle \nu_1$}(1.1,.5) circle (.02) -- +(-.03,.17) node[above, scale=.8]{$\scriptstyle \bar\lambda$};  }\Bigg)  
\smallskip
\end{equation}
where the direct sum runs over all the irreducible representations $\lambda$ of the chiral conformal field theory.

\subsection*{Constructions of spaces of conformal blocks}
\addtocontents{toc}{\SkipTocEntry}
If one interprets ``chiral conformal field theory'' to mean ``vertex algebra'', then a construction of the spaces of conformal blocks can be found \cite[Chapter 10]{Frenkel--Ben-Zvi}.
However, as far as we know, there is no general proof of for instance the factorization property. 
The literature is somewhat confusing, and we refer the reader to the review paper \cite{Huang-Lepowsky(Tensor-categories-and-the-mathematics-of-rational-and-logarithmic-conformal-field-theory)} for a detailed history (restricted to genus zero and one) of what has been proved and when, along with an indication of various false claims in the literature and later retractions.

The first mathematical constructions of the spaces of conformal blocks are due to Tsuchiya-Ueno-Yamada \cite{Tsuchiya-Ueno-Yamada(CFT-universal-family)} and Hitchin \cite{Hitchin(Flat-connections-and-geometric-quantization)}, both restricted to the case of WZW models (the WZW models are a certain class of chiral conformal field theories that depend on a choice of compact, simple, connected, simply connected Lie group $G$, and a positive integer $k$).
An isomorphism between these two constructions was later constructed in \cite{Beauville-Laszlo(Conformal-blocks-and-generalized-theta-functions)},
\cite{Faltings(A-proof-for-the-Verlinde-formula)}, and \cite{Kumar-Narasimhan-Ramanathan(Infinite-Grassmannians-and-moduli-spaces-of-G-bundles)},
and the projectively flat connections connections present on both sides were identified in \cite{Laszlo(Hitchin's-and-WZW-connections-are-the-same)}.
Factorization for those models was proved in \cite{Tsuchiya-Ueno-Yamada(CFT-universal-family)}.

The unitarity of the spaces of conformal blocks has turned out to be more difficult than the other properties.
It is believable that one could combine the unitarity of the modular tensor categories coming from quantum groups (\cite{wenzl-cstar}, \cite[Sec.4]{Rowell(From-quantum-groups-to-UMTC)} and references therein), the fact that modular functors are determined by their genus zero data (\cite{Andersen-Ueno(Modular-functors-determined-genus-zero)}),
and the equivalence of the braided tensor categories associated to quantum groups and to affine Lie algebras (\cite[Sec.3]{Huang-Lepowsky(Tensor-categories-and-the-mathematics-of-rational-and-logarithmic-conformal-field-theory)} and references therein) to prove the unitarity of the spaces of conformal blocks for most WZW models.
However, at the moment, the only statement that is clearly available in the literature is the one for the $SU(n)$ WZW models, due to work of Andersen and Ueno \cite{Andersen-Ueno(Abelian-CFT+det-bundles)}, \cite{Andersen-Ueno(Geometric-constr-modular-functors)}, \cite{Andersen-Ueno(Reshetikhin-Turaev-from-CFT)} (see \cite{Ramadas(The-Harder-Narasimhan-trace-and-unitarity-of-the-KZ/Hitchin-connection:-genus-0)} and \cite{Looijenga(Unitarity-of-SL(2)-conformal-blocks-in-genus-zero)} for an alternative proof for the $SU(2)$ WZW models in genus zero).
In an early paper, Axelrod, Della Pietra, and Witten made an attempt at defining the hermitian inner product on the spaces of conformal blocks of the WZW models by writing down a formal expression in terms of functional integrals \cite{Axelrod-DellaPietra-Witten}.
They explained how to give a meaning to their formula in genus one, but were unable to deal with the case of higher genus Riemann surfaces.

\subsection*{Formalisms for weak CFT}
\addtocontents{toc}{\SkipTocEntry}
Positive energy conformal nets \cite[Def.~4.5]{BDH(nets)} (which is what people typically mean by the term ``conformal net'' in the literature, see e.g. \cite{Longo(Lectures-on-Nets-II)})
form, along with vertex algebras and Segal's chiral weak CFTs \cite[Def.~5.2]{Segal(Def-CFT)},
one of the three existing mathematical formalizations of the notion of a chiral conformal field theory. (The extent to which these formalisms are equivalent is not completely understood, but note the important advance by Carpi--Kawahigashi--Longo--Weiner constructing a bijection between strongly local unitary vertex algebras and a corresponding class of conformal nets~\cite{cklw}.)
Given the prominent role that the spaces of conformal blocks play in conformal field theory, 
it is surprising that nobody has defined them within the formalism of conformal nets.
Our paper aims to fill this gap.

The notion of \emph{conformal net} that we will be using here (introduced in~\cite{BDH(nets)}, see Definition~\ref{def:conformal-net} below)
is more general than the one commonly used in the literature.
Our notion 
is closely related to that of a conformally covariant net on 2-dimensional Minkowski space \cite{Brunetti-Guido-Longo(1993modular+duality-in-CQFT)}.
We believe that given such a 2-dimensional net (not necessarily modular invariant),
one can obtain a conformal net in our sense by the procedure of  time-zero restriction~\cite[Sec.4]{Kawahigashi-Longo-Mueger(2001multi-interval)}.
It is even plausible that the above procedure provides an equivalence between these two kinds of nets.


\subsection*{Unitary spaces of conformal blocks}
\addtocontents{toc}{\SkipTocEntry}
As explained above, it is expected that the spaces of conformal blocks form a finite-dimensional vector bundle with hermitian inner product and projectively flat unitary connection over the moduli space of Riemann surfaces.
(In the vertex algebra context, this would be under the assumption that the vertex algebra is unitary \cite{Dong-Lin(Unitary-vertex-operator-algebras)} and subject to appropriate finiteness conditions.)
Passing to the associated bundle of projective spaces, the connection becomes genuinely flat and
its monodromy yields a projective representation of the fundamental group of the moduli space, a.k.a. the mapping class group.

In the present article, we will be working with conformal nets instead of vertex algebras, and with topological instead of Riemann surfaces.
The price to pay for working with topological surfaces instead of Riemann surfaces is that we can only expect the space of conformal blocks $V(\Sigma)$ to form a projective space, not a vector space.
(One way to still get a vector space would be to pick a Lagrangian in $H_1(\Sigma;\IR)$, as explained in \cite{Bakalov-Kirillov(Lect-tens-cat+mod-func)}. We shall not pursue that approach here.)

We first describe the result of our construction in the case of surfaces without boundary.
The object that plays the role of the moduli spaces of Riemann surfaces is the following groupoid.
Its objects are connected oriented topological surfaces and its morphisms are isotopy classes of homeomorphisms.
We allow orientation reversing maps:
the automorphism group of a surface in that groupoid is thus an index two overgroup of the mapping class group.
The simplest form of our main construction is:

\begin{theorem1*}
Let $\cala$ be a conformal net of finite index (Definition \ref{def:finite-nets}), and let $\Sigma$ be a closed connected oriented topological surface.
We can associate to $\cala$ and $\Sigma$ a finite dimensional projective Hilbert space $\IP V(\Sigma)$ called the projective space of conformal blocks of $\Sigma$. 

For every isotopy class of homeomorphism $f:\Sigma_1\to \Sigma_2$ we have an induced map $f_*:\IP V(\Sigma_1)\to \IP V(\Sigma_2)$
which is projective unitary if $f$ is orientation preserving and projective antiunitary otherwise.
These maps satisfy $(f\circ g)_*=f_*\circ g_*$.
\end{theorem1*}

\begin{remark}
Among the conformal nets corresponding to WZW models, only the ones with gauge group $SU(n)$ are known to have finite index
(this was proven by Xu in \cite{Xu(Jones-Wassermann-subfactors)}, based on the work of Wassermann \cite{Wassermann(Operator-algebras-and-conformal-field-theory)}).
For the other WZW conformal nets, the finite index property is also expected to be true but has not yet been proven.
\end{remark}

\begin{remark}
We emphasize that every step of our construction is visibly unitary.
This is in stark contrast with the work of Andersen and Ueno~\cite{Andersen-Ueno(Abelian-CFT+det-bundles)}, \cite{Andersen-Ueno(Geometric-constr-modular-functors)}, \cite{Andersen-Ueno(Reshetikhin-Turaev-from-CFT)}, whose proof of unitarity for the $SU(n)$ WZW spaces of conformal blocks relies on a rather non-trivial identification (in~\cite{Andersen-Ueno(Reshetikhin-Turaev-from-CFT)}) of the $SU(n)$ WZW modular functor with a certain modular functor 
constructed by Blanchet~\cite{Blanchet(Hecke-algebras-modular-categories-and-3-manifolds-quantum-invariants)} using skein theory.
\end{remark}

Dealing with projective spaces tends to be notationally cumbersome.
We therefore prefer to state the above result by saying that for every topological surface $\Sigma$, we have a finite dimensional Hilbert space $V(\Sigma)$ that is ``well defined up to a phase''
(see Section \ref{sec: ``up to non-canonical isomorphism''} for a detailed discussion of this notion),
and that the maps $f_*:V(\Sigma_1)\to V(\Sigma_2)$ are well defined up to phase.

\subsection*{Conformal nets}
\addtocontents{toc}{\SkipTocEntry}
In our setup \cite{BDH(nets)}, a conformal net is a functor $\cala$ from a certain category of intervals (that is, 1-manifolds diffeomorphic to $[0,1]$) to the category of von Neumann algebras.
It sends orientation preserving embeddings to homomorphisms and orientation reversing embeddings to antihomomorphisms. 
We show in Theorem \ref{thm: extend A to hatA} that every conformal net admits a canonical extension to the larger category of all compact 1-manifolds.
In particular, given a closed 1-manifold $S$ (a disjoint union of circles), there is an associated von Neumann algebra $\cala(S)$.
If the conformal net has finite index, then $\cala(S)$ is a finite direct sum of type $I$ factors.
Moreover, if $S^1$ is the standard circle then, by Theorem \ref{thm:compute-bfB(net)}, there is a canonical isomorphism
\begin{equation}\label{eq: algebra A(S)}
\cala(S^1)\cong \bigoplus_\lambda \bfB(H_\lambda),
\end{equation}
where $H_\lambda$ are the irreducible representations (also called sectors) of the conformal net.  (Note that this isomorphism is similar to the decomposition of the reduced universal $C^\ast$-algebra in Carpi--Conti--Heillier--Weiner~\cite{cchw}.) 
For a disjoint union of circles $S=S_1\sqcup \ldots \sqcup S_n$, we have $\cala(S)\cong \cala(S_1)\,\bar\otimes\, \ldots \,\bar\otimes\,\cala(S_n)$.
The irreducible summands of $\cala(S)$ are indexed by the set of labelings of $\pi_0(S)$ by isomorphism classes of irreducible representations of $\cala$.

\subsection*{Topological surfaces with smooth boundary}
\addtocontents{toc}{\SkipTocEntry}
If $\Sigma$ is a compact oriented topological surface with boundary, and $\partial\Sigma$ is equipped with a smooth structure,
then our construction produces an infinite dimensional Hilbert-space-well-defined-up-to-phase $V(\Sigma)$, equipped with an action of the von Neumann algebra $\cala(\partial \Sigma)$.
The relationship between $V(\Sigma)$ and the finite dimensional vector spaces \eqref{eq: VOA conf blocks} is as follows, at least conjecturally.
Pick a complex structure on $\Sigma$ and a parametrization of each boundary component by the standard circle.
By gluing a copy of the standard disk on each boundary component of $\Sigma$, we obtain a closed Riemann surface denoted $\widehat \Sigma$.
Let $p_i$ be the centers of the discs.
Then we expect a unitary isomorphism 
\begin{equation}\label{V(Sigma) = VOA--conf--blocks}
V(\Sigma)\,\, \cong\!
\raisebox{0cm}{$\displaystyle\bigoplus_{\substack{\text{labelings $\lambda_{1},\ldots,\lambda_m$}\\\text{of $\pi_0(\partial \Sigma)$ by irreps of $\cala$}}}
\!V\big(\widehat\Sigma,p_1,\ldots,p_n;\lambda_1,\ldots,\lambda_n\big)\otimes H_{\lambda_1}\otimes \cdots \otimes H_{\lambda_n}
$},
\end{equation}
canonical up to phase. 
Note that the above Hilbert space $V(\Sigma)$ is quite natural from the point of view of conformal field theory.
For the chiral WZW models with target $G$, 
the right hand side of \eqref{V(Sigma) = VOA--conf--blocks} is the geometric quantization of the moduli space of holomorphic $G_\IC$-bundles over $\Sigma$ trivialized over $\partial \Sigma$.
That statement was proved by Posthuma \cite[eq. just below (3.2)]{Posthuma(The-Heisenberg-group-and-conformal-field-theory)} in the case $G$ is a torus,
and its algebro-geometric analog was shown by Teleman \cite[Thm.~4]{Teleman(Borel-Weil-Bott)} for $G$ semi-simple.

\begin{remark}
If one interprets $V\big(\widehat\Sigma,p_1,\ldots,p_n;\lambda_1,\ldots,\lambda_n\big)$ to be the vertex-algebra-theoretic 
spaces of conformal blocks, then the above equation is of course conjectural,
as the correspondence between conformal nets and vertex algebras has not yet been worked out to any satisfactory level of detail.
On the other hand, if one defines $V\big(\widehat\Sigma,p_1,\ldots,p_n;\lambda_1,\ldots,\lambda_n\big)$ via conformal nets, as in \eqref{eq: cfblk as multiplicity space} below,
then equation \eqref{V(Sigma) = VOA--conf--blocks} becomes trivially true.
\end{remark}

We can now state our main result for the case of surfaces with boundary.
The construction, which is the same as the one for surfaces without boundary, is outlined in Section \ref{sec: The Hilbert space associated to a surface}:

\begin{theorem2*}Let $\cala$ be a conformal net with finite index.
If $\Sigma$ is a topological surface whose boundary is equipped with a smooth structure,
then there is an associated Hilbert-space-well-defined-up-to-phase $V(\Sigma)$ equipped with an action of $\cala(\partial \Sigma)$.
Each irreducible representation of $\cala(\partial \Sigma)$ appears with finite multiplicity in $V(\Sigma)$.

A homeomorphism $f:\Sigma_1\to \Sigma_2$ that is smooth on the boundary induces a map, well defined up to phase $f_*:V(\Sigma_1)\to V(\Sigma_2)$
that is unitary if $f$ is orientation preserving and antiunitary otherwise.\footnote{If $f$ is neither orientation preserving nor orientation reversing (which can happen if the surfaces are disconnected), then $V(f)$ is not defined.}
These maps are compatible with the algebra actions, and satisfy $(f\circ g)_*=f_*\circ g_*$ up to phase.

Furthermore, if two maps from $\Sigma_1$ to $\Sigma_2$ are isotopic by an isotopy that is constant on $\partial \Sigma$,
then the induced maps from $V(\Sigma_1)$ to $V(\Sigma_2)$ are equal up to phase.

The spaces $V(\Sigma)$ that we assign to a surface $\Sigma$ are constrained by the existence of gluing isomorphisms, discussed below in Theorem 2.

\end{theorem2*}

\begin{example*} The Hilbert space associated to a pair-of-pants is given by
\[
V\Big(\,\tikzmath[scale=.4]{
\fill[gray!50] (.9,1.2) circle (.6 and .18);
\filldraw[fill=gray!20]
(-.6,0) arc (-180:0:.6 and .18) arc (180:0:.3 and .18) arc (-180:0:.6 and .18)
[rounded corners=3]-- ++(0,.3) -- ++(-.9,.6) [sharp corners]-- ++(0,.3) 
arc (0:-180:.6 and .18)
[rounded corners=3]-- ++(0,-.3) -- ++(-.9,-.6) [sharp corners]-- cycle;
\draw[dashed] (0,0) +(0:.6 and .18) arc (0:180:.6 and .18);
\draw[dashed] (1.8,0) +(0:.6 and .18) arc (0:180:.6 and .18);
\draw (.9,1.2) +(0:.6 and .18) arc (0:180:.6 and .18);
}\,\Big)
 = \bigoplus_{\lambda,\mu,\nu} N_0^{\lambda\mu\nu} H_\lambda\otimes H_\mu\otimes H_\nu
\]
as a representation of $\cala(\partial \tikzmath[scale=.25]{
\useasboundingbox (-.7,-.2) rectangle (2.5,1.1);
\fill[gray!50] (.9,1.2) circle (.6 and .18);
\filldraw[fill=gray!20]
(-.6,0) arc (-180:0:.6 and .18) arc (180:0:.3 and .18) arc (-180:0:.6 and .18)
[rounded corners=1.7]-- ++(0,.3) -- ++(-.9,.6) [sharp corners]-- ++(0,.3) 
arc (0:-180:.6 and .18)
[rounded corners=1.7]-- ++(0,-.3) -- ++(-.9,-.6) [sharp corners]-- cycle;
\draw[dashed] (0,0) +(0:.6 and .18) arc (0:180:.6 and .18);
\draw[dashed] (1.8,0) +(0:.6 and .18) arc (0:180:.6 and .18);
\draw (.9,1.2) +(0:.6 and .18) arc (0:180:.6 and .18);
}\;\!)=\cala(S^1)\,\bar\otimes\,\cala(S^1)\,\bar\otimes\,\cala(S^1)$,
see Proposition~\ref{prop: Conf Blocks 1}. 
Here, the direct sum runs over all triples of irreducible representations,
and $N_0^{\lambda\mu\nu}$ denotes the multiplicity of the unit object inside the fusion product of $H_\lambda$, $H_\mu$, and $H_\nu$.
\end{example*}

As a consequence of the functoriality of the construction $\Sigma\mapsto V(\Sigma)$, for every surface $\Sigma$ and every conformal net $\cala$ with finite index, there is a projective (anti)unitary representations of the following infinite dimensional topological group:
\[
G(\Sigma):=\big\{f\in\mathrm{Homeo}_+(\Sigma)\cup \mathrm{Homeo}_-(\Sigma): f|_{\partial\Sigma}\text{ is smooth}\big\}\big/ \,\text{isotopy rel }\partial\Sigma.
\]
If $\Sigma$ is connected, then the group $G(\Sigma)$ fits into a short exact sequence
\[
1\,\to\, \Gamma(\Sigma)\,\to\, G(\Sigma)\,\to\, D\,\to\, 1,
\]
where $D=\Diff_+(\partial \Sigma)\cup \Diff_-(\partial \Sigma)$, and $\Gamma(\Sigma)$ is the mapping class group of $\Sigma$ relative to its boundary.
Here, the subscripts \raisebox{1pt}{$\scriptstyle +/-$} refer to orientation preserving/reversing maps, respectively.
The above action of $G(\Sigma)$ on $V(\Sigma)$ has already been pointed out by Posthuma in the case of the chiral CFTs associated to lattices \cite[Thm.~2.11]{Posthuma(The-Heisenberg-group-and-conformal-field-theory)}.

Given the above theorem, we can turn \eqref{V(Sigma) = VOA--conf--blocks} around and define
the spaces of conformal blocks $V\big(\Sigma;\lambda_1,\ldots,\lambda_n\big)$ to be 
the multiplicity space of $H_{\lambda_1}\otimes \cdots \otimes H_{\lambda_n}$ in the Hilbert space $V(\Sigma)$, that is
\begin{equation}\label{eq: cfblk as multiplicity space}
V(\Sigma;\lambda_1,\ldots,\lambda_n):=\hom_{\cala(\partial \Sigma)} \big(H_{\lambda_1}\otimes \ldots \otimes H_{\lambda_n},V(\Sigma)\big).
\end{equation}
Note that for this definition, $\Sigma$ must have parametrized boundary, since otherwise it is not clear how to let $\cala(\partial \Sigma)$ act on $H_{\lambda_1}\otimes \ldots \otimes H_{\lambda_n}$.

\subsection*{Factorization along circles and along intervals}
\addtocontents{toc}{\SkipTocEntry}
The most notable property of the spaces of conformal blocks is factorization \eqref{eq:fact}:

\begin{theorem3*}
Let $\cala$ be a conformal net with finite index.
Let $\Sigma_1$ and $\Sigma_2$ be topological surfaces with smooth boundary, and
let $S$ be a closed $1$-manifold (a disjoint union of circles) equipped with an orientation reversing embedding $S\hookrightarrow \partial \Sigma_1$
and an orientation preserving embedding $S\hookrightarrow \partial \Sigma_2$.
Then there is a unitary isomorphism
\[
V(\Sigma_1\cup_S \Sigma_2) \,\,\cong\,\, V(\Sigma_1)\otimes_{\cala(S)} V(\Sigma_2)
\]
well defined up to phase and compatible with the actions of $\cala(\partial\Sigma_i\!\setminus\! S)$.

Moreover, the above isomorphisms satisfy an obvious associativity diagram.
\end{theorem3*}

\noindent
Here is an example illustrating the above result:\medskip
\[
V\bigg(\tikzmath[scale = .5]{
\useasboundingbox (.25,-1) rectangle (5.75,1);\draw[gray, rounded corners=5, fill=gray!20](.3,.4) -- (1,1) -- (2,1) --  (3,.75) -- (4,1) -- (5,1) -- (5.7,.4) -- (5.7,-.4) -- (5,-1) -- (4,-1) -- (3,-.75) -- (2,-1) -- (1,-1) -- (.3,-.4) -- cycle;\fill[white](1.19,-.01) .. controls (1.5,.25) and (2,.25) .. (2.31,-.01)(1.19,-.01) .. controls (1.5,-.21) and (2,-.21) .. (2.31,-.01)(3.69,-.01) .. controls (4,.25) and (4.5,.25) .. (4.81,-.01)(3.69,-.01) .. controls (4,-.21) and (4.5,-.21) .. (4.81,-.01);\draw[gray] (1,.1) .. controls (1.5,-.25) and (2,-.25) .. (2.5,.1)(1.2,0) .. controls (1.5,.25) and (2,.25) .. (2.3,0)(3.5,.1) .. controls (4,-.25) and (4.5,-.25) .. (5,.1)(3.7,0) .. controls (4,.25) and (4.5,.25) .. (4.8,0);} 
\bigg)\,\cong\,V\bigg(\tikzmath[scale = .5]{\useasboundingbox (.2,-1) rectangle (2,1);\draw[gray, rounded corners=5, fill=gray!20](.3,.4) -- (1,1) -- (2,1) --  (3,.75) -- (4,1) -- (5,1) -- (5.7,.4) -- (5.7,-.4) -- (5,-1) -- (4,-1) -- (3,-.75) -- (2,-1) -- (1,-1) -- (.3,-.4) -- cycle;\fill[white](1.19,-.01) .. controls (1.5,.25) and (2,.25) .. (2.31,-.01)(1.19,-.01) .. controls (1.5,-.21) and (2,-.21) .. (2.31,-.01)
(3.69,-.01) .. controls (4,.25) and (4.5,.25) .. (4.81,-.01)(3.69,-.01) .. controls (4,-.21) and (4.5,-.21) .. (4.81,-.01);\draw[gray] (1,.1) .. controls (1.5,-.25) and (2,-.25) .. (2.5,.1)(1.2,0) .. controls (1.5,.25) and (2,.25) .. (2.3,0)(3.5,.1) .. controls (4,-.25) and (4.5,-.25) .. (5,.1)(3.7,0) .. controls (4,.25) and (4.5,.25) .. (4.8,0);\fill[white] (1.7,-1.1) rectangle (5.8,1.1);\filldraw[fill = gray!50] (1.7,.592) ellipse(.09 and .408);\filldraw[fill = gray!50] (1.7,-.58) ellipse(.09 and .42);} 
\bigg)\otimes_{\cala\big(\,\,\tikzmath[scale=.3]{\draw (4.2,.59) ellipse(.1  and .41);\draw (4.2,-.58) ellipse(.09  and .41);} 
\,\,\big)}V\bigg(\tikzmath[scale = .5]{\useasboundingbox (1.3,-1) rectangle (5.8,1);\clip (1.58,-1.1) rectangle (5.8,1.1);\draw[gray, rounded corners=5, fill=gray!20](.3,.4) -- (1,1) -- (2,1) --  (3,.75) -- (4,1) -- (5,1) -- (5.7,.4) -- (5.7,-.4) -- (5,-1) -- (4,-1) -- (3,-.75) -- (2,-1) -- (1,-1) -- (.3,-.4) -- cycle;\fill[white](1.19,-.01) .. controls (1.5,.25) and (2,.25) .. (2.31,-.01)(1.19,-.01) .. controls (1.5,-.21) and (2,-.21) .. (2.31,-.01)(3.69,-.01) .. controls (4,.25) and (4.5,.25) .. (4.81,-.01)(3.69,-.01) .. controls (4,-.21) and (4.5,-.21) .. (4.81,-.01);\draw[gray] (1,.1) .. controls (1.5,-.25) and (2,-.25) .. (2.5,.1)(1.2,0) .. controls (1.5,.25) and (2,.25) .. (2.3,0)(3.5,.1) .. controls (4,-.25) and (4.5,-.25) .. (5,.1)(3.7,0) .. controls (4,.25) and (4.5,.25) .. (4.8,0);\fill[white] (.5,-1.1) rectangle (1.7,1.1);\filldraw[fill = gray!50] (1.7,.592) ellipse(.09 and .408);\filldraw[fill = gray!50] (1.7,-.58) ellipse(.09 and .42);} 
\bigg).\medskip
\]

Given the above result, the factorization formula
\[
V\big(\Sigma_1\cup_S\Sigma_2;\mu_1,\ldots\mu_m,\nu_1,\ldots\nu_n\big) \cong {\bigoplus}_\lambda V\big(\Sigma_1;\mu_1,\ldots\mu_m,\lambda\big)\otimes
V\big(\Sigma_2;\bar\lambda,\nu_1,\ldots\nu_n\big)
\]
follows easily from the definition of the spaces of conformal blocks (\ref{eq: cfblk as multiplicity space}) and from the computation \eqref{eq: algebra A(S)} of the algebra $\cala(S)$. 

The most novel aspect of our work, impossible to even formulate in the vertex algebraic setup, is a variant of factorization where circles are replaced by intervals.
Namely, in the above theorem, we can relax the condition that $S$ be a closed manifold, and allow it to be a manifold with boundary.
The algebras associated to 1-manifolds with boundary are no longer type $I$ von Neumann algebras, and so
the formulation of our result requires the use of a more elaborate notion of tensor product: the so-called \emph{relative tensor product} or \emph{Connes fusion} (\cite{BDH(dualizability)}, \cite{Bisch(Bimodules)}, \cite{Connes(Non-commutative-geometry)}, \cite{Sauvageot(Sur-le-produit-tensoriel-relatif)}, \cite{Wassermann(Operator-algebras-and-conformal-field-theory)}), denoted by the symbol~$\boxtimes$.

\begin{theorem4*}
Let $\cala$ be a conformal net with finite index.
Let $\Sigma_1$ and $\Sigma_2$ be topological surfaces with smooth boundary.
Let $M$ be a compact $1$-manifold (possibly disconnected) with boundary, equipped with two
embeddings $M\hookrightarrow \partial \Sigma_1$ and $M\hookrightarrow \partial \Sigma_2$,
the first one orientation reversing and the second orientation preserving.
Equip the boundary of $\Sigma_3:=\Sigma_1\cup_M \Sigma_2$ with a smooth structure such that
the smooth structures on $\partial\Sigma_i$, $i\in\{1,2,3\}$ assemble to a ``smooth structure'' on $\partial\Sigma_1\cup_M \partial\Sigma_2$
in the sense of Definition \ref{def: smooth structure on trivalent graph}.

Then there is a unitary isomorphism
\begin{equation}\label{eq: gluing iso intro}
V(\Sigma_1\cup_M \Sigma_2) \,\,\cong\,\, V(\Sigma_1)\boxtimes_{\cala(M)} V(\Sigma_2),
\end{equation}
canonical up to phase
and compatible\vspace{-.03cm} with actions of $\cala(\partial\Sigma_1\!\setminus\! \put(3.9,7.5){$\scriptscriptstyle \circ$} M)$
and $\cala(\partial\Sigma_2\!\setminus\! \put(3.9,7.5){$\scriptscriptstyle \circ$} M)$, where $\put(3.9,7.5){$\scriptscriptstyle \circ$} M$ denotes the interior of $M$.
Moreover, the above isomorphisms satisfy an obvious notion of associativity.

Finally, the assignment $\Sigma\mapsto V(\Sigma)$ from Theorem 1 is determined by the existence of the gluing law \eqref{eq: gluing iso intro}, and by the requirement that the Hilbert space associated to a disc should be the vacuum sector of $\cala$.
\end{theorem4*}

\noindent
For example, the following is an instance of the above isomorphism:\smallskip
\[
\def\x{.1}V\bigg(\tikzmath[scale = .5]{\useasboundingbox (.25-\x,-1) rectangle (5.75+\x,1);\draw[gray, rounded corners=5, fill=gray!20](.3-\x,.4) -- (1-\x,1) -- (2-\x,1) --  (3,.75) -- (4+\x,1) -- (5+\x,1) -- (5.7+\x,.4) -- (5.7+\x,-.4) -- (5+\x,-1) -- (4+\x,-1) -- (3,-.75) -- (2-\x,-1) -- (1-\x,-1) -- (.3-\x,-.4) -- cycle;\fill[white](1.19-\x,-.01) .. controls (1.5-\x,.25) and (2-\x,.25) .. (2.31-\x,-.01)(1.19-\x,-.01) .. controls (1.5-\x,-.21) and (2-\x,-.21) .. (2.31-\x,-.01)(3.69+\x,-.01) .. controls (4+\x,.25) and (4.5+\x,.25) .. (4.81+\x,-.01)(3.69+\x,-.01) .. controls (4+\x,-.21) and (4.5+\x,-.21) .. (4.81+\x,-.01);
\draw[gray] (1-\x,.1) .. controls (1.5-\x,-.25) and (2-\x,-.25) .. (2.5-\x,.1)(1.2-\x,0) .. controls (1.5-\x,.25) and (2-\x,.25) .. (2.3-\x,0)(3.5+\x,.1) .. controls (4+\x,-.25) and (4.5+\x,-.25) .. (5+\x,.1)(3.7+\x,0) .. controls (4+\x,.25) and (4.5+\x,.25) .. (4.8+\x,0);\fill[white] (2.75,.3) rectangle (3.25,1.1);\draw[fill=gray!50](2.6,0.25) to[rounded corners=2] (2.9,-.31) to[rounded corners=0] (3,-.45) to[rounded corners=2] (3.1,-.31) to[rounded corners=5] (3.4,0.25) to[rounded corners=5] (3.25,1.007) to[rounded corners=1] (3.06,.45)  to[rounded corners=0] (3,.35)
to[rounded corners=1] (2.94,.45) to[rounded corners=5] (2.75,1.007) --  cycle;} 
\bigg)\,\cong\,V\bigg(\tikzmath[scale = .5]{\useasboundingbox (.25-\x,-1) rectangle (3.2,1);\draw[gray, rounded corners=5, fill=gray!20](.3-\x,.4) -- (1-\x,1) -- (2-\x,1) --  (3,.75) -- (4+\x,1) -- (5+\x,1) -- (5.7+\x,.4) -- (5.7+\x,-.4) -- (5+\x,-1) -- (4+\x,-1) -- (3,-.75) -- (2-\x,-1) -- (1-\x,-1) -- (.3-\x,-.4) -- cycle;\fill[white](1.19-\x,-.01) .. controls (1.5-\x,.25) and (2-\x,.25) .. (2.31-\x,-.01)(1.19-\x,-.01) .. controls (1.5-\x,-.21) and (2-\x,-.21) .. (2.31-\x,-.01)(3.69+\x,-.01) .. controls (4+\x,.25) and (4.5+\x,.25) .. (4.81+\x,-.01)(3.69+\x,-.01) .. controls (4+\x,-.21) and (4.5+\x,-.21) .. (4.81+\x,-.01);\draw[gray] (1-\x,.1) .. controls (1.5-\x,-.25) and (2-\x,-.25) .. (2.5-\x,.1)(1.2-\x,0) .. controls (1.5-\x,.25) and (2-\x,.25) .. (2.3-\x,0)(3.5+\x,.1) .. controls (4+\x,-.25) and (4.5+\x,-.25) .. (5+\x,.1)(3.7+\x,0) .. controls (4+\x,.25) and (4.5+\x,.25) .. (4.8+\x,0);\fill[white] (2.75,.3) rectangle (6,1.1) (2.959,.-1.1) rectangle (6,.4);\draw[fill=gray!50](3.04,.37) to[rounded corners=1] (2.98,.45) to[rounded corners=5] (2.75,1.005) --  (2.6,0.3) to[rounded corners=2] (2.8,-.24) to[sharp corners] (2.9,-.4)arc (-142.7:9.8:.079 and 1);} 
\bigg)\,\boxtimes_{\cala\big(\hspace{.09cm}\tikzmath[scale=.3]{\useasboundingbox (0,0) -- +(0,0.9);\draw (.08,.2) arc (180+142.7:180-9.8:.09 and 1);} 
\,\,\big)}V\bigg(\tikzmath[scale = .5]{\useasboundingbox (2.8,-1) rectangle (5.75+\x,1);\clip (2.9,-1.1) rectangle (5.8+\x,1.1);\draw[gray, rounded corners=5, fill=gray!20](.3-\x,.4) -- (1-\x,1) -- (2-\x,1) --  (3,.75) -- (4+\x,1) -- (5+\x,1) -- (5.7+\x,.4) -- (5.7+\x,-.4) -- (5+\x,-1) -- (4+\x,-1) -- (3,-.75) -- (2-\x,-1) -- (1-\x,-1) -- (.3-\x,-.4) -- cycle;\fill[white](1.19-\x,-.01) .. controls (1.5-\x,.25) and (2-\x,.25) .. (2.31-\x,-.01)(1.19-\x,-.01) .. controls (1.5-\x,-.21) and (2-\x,-.21) .. (2.31-\x,-.01)(3.69+\x,-.01) .. controls (4+\x,.25) and (4.5+\x,.25) .. (4.81+\x,-.01)(3.69+\x,-.01) .. controls (4+\x,-.21) and (4.5+\x,-.21) .. (4.81+\x,-.01);\draw[gray] (1-\x,.1) .. controls (1.5-\x,-.25) and (2-\x,-.25) .. (2.5-\x,.1)(1.2-\x,0) .. controls (1.5-\x,.25) and (2-\x,.25) .. (2.3-\x,0)(3.5+\x,.1) .. controls (4+\x,-.25) and (4.5+\x,-.25) .. (5+\x,.1)(3.7+\x,0) .. controls (4+\x,.25) and (4.5+\x,.25) .. (4.8+\x,0);\fill[white] (3.25,.3) rectangle (2.75,1.1) (3.041,.-1.1) rectangle (2.75,.4);\draw[fill=gray!50](2.96,.37) to[rounded corners=1] (3.02,.45) to[rounded corners=5] (3.25,1.005) --  (3.4,0.3) to[rounded corners=2] (3.2,-.24) to[sharp corners] (3.1,-.4)arc (180+142.7:180-9.8:.079 and 1);} 
\bigg).\medskip
\]

Note that for von Neumann algebras that are direct sums of type $I$ factors, Connes fusion agrees with (the Hilbert space completion of) the usual algebraic tensor product.
The previous theorem, about gluing along closed 1-manifolds, is therefore a special case of this last result.

\subsection*{Modularity}
\addtocontents{toc}{\SkipTocEntry}
In the last section of our paper, we use our technology to revisit some aspects of the representation theory of conformal nets.
We reinterpret the monoidal structure on $\mathrm{Rep}(\cala)$ as the operation of tensoring 
over $\cala(\,\tikzmath[scale=.3]{\useasboundingbox (-.6,-.18) rectangle (2.4,.6);\draw (0,0) circle (.6 and .18);\draw (1.8,0) circle (.6 and .18);}\,)$
with the Hilbert space
$V\big(\,\tikzmath[scale=.3]{\fill[gray!50] (.9,1.2) circle (.6 and .18);\filldraw[fill=gray!20](-.6,0) arc (-180:0:.6 and .18) arc (180:0:.3 and .18) arc (-180:0:.6 and .18)[rounded corners=2]-- ++(0,.3) -- ++(-.9,.6) [sharp corners]-- ++(0,.3) arc (0:-180:.6 and .18)[rounded corners=2]-- ++(0,-.3) -- ++(-.9,-.6) [sharp corners]-- cycle;
\draw[dashed] (0,0) +(0:.6 and .18) arc (0:180:.6 and .18);\draw[dashed] (1.8,0) +(0:.6 and .18) arc (0:180:.6 and .18);\draw (.9,1.2) +(0:.6 and .18) arc (0:180:.6 and .18);}\,\big)$;
we also define the braiding of two representations via the action of the homeomorphism $\beta:
\tikzmath[scale=.25]{\fill[gray!50] (.9,1.2) circle (.6 and .18);\filldraw[fill=gray!20](-.6,0) arc (-180:0:.6 and .18) arc (180:0:.3 and .18) arc (-180:0:.6 and .18)[rounded corners=1.7]-- ++(0,.3) -- ++(-.9,.6) [sharp corners]-- ++(0,.3) arc (0:-180:.6 and .18)[rounded corners=1.7]-- ++(0,-.3) -- ++(-.9,-.6) [sharp corners]-- cycle;
\draw[dashed] (0,0) +(0:.6 and .18) arc (0:180:.6 and .18);\draw[dashed] (1.8,0) +(0:.6 and .18) arc (0:180:.6 and .18);\draw (.9,1.2) +(0:.6 and .18) arc (0:180:.6 and .18);}
\to
\tikzmath[scale=.25]{\fill[gray!50] (.9,1.2) circle (.6 and .18);\filldraw[fill=gray!20](-.6,0) arc (-180:0:.6 and .18) arc (180:0:.3 and .18) arc (-180:0:.6 and .18)[rounded corners=1.7]-- ++(0,.3) -- ++(-.9,.6) [sharp corners]-- ++(0,.3) arc (0:-180:.6 and .18)[rounded corners=1.7]-- ++(0,-.3) -- ++(-.9,-.6) [sharp corners]-- cycle;
\draw[dashed] (0,0) +(0:.6 and .18) arc (0:180:.6 and .18);\draw[dashed] (1.8,0) +(0:.6 and .18) arc (0:180:.6 and .18);\draw (.9,1.2) +(0:.6 and .18) arc (0:180:.6 and .18);}
$
that exchanges the two legs of the pair of pants:
the braiding of $H$ and $K$ is given by\smallskip
\[
\big(H\otimes K\big)\otimes_{\cala(\,\tikzmath[scale=.3]{\useasboundingbox (-.6,-.18) rectangle (2.4,.6);\draw (0,0) circle (.6 and .18);
\draw (1.8,0) circle (.6 and .18);
}\,)}
V\Big(\,\tikzmath[scale=.4]{
\fill[gray!50] (.9,1.2) circle (.6 and .18);
\filldraw[fill=gray!20]
(-.6,0) arc (-180:0:.6 and .18) arc (180:0:.3 and .18) arc (-180:0:.6 and .18)
[rounded corners=3]-- ++(0,.3) -- ++(-.9,.6) [sharp corners]-- ++(0,.3) 
arc (0:-180:.6 and .18)
[rounded corners=3]-- ++(0,-.3) -- ++(-.9,-.6) [sharp corners]-- cycle;
\draw[dashed] (0,0) +(0:.6 and .18) arc (0:180:.6 and .18);
\draw[dashed] (1.8,0) +(0:.6 and .18) arc (0:180:.6 and .18);
\draw (.9,1.2) +(0:.6 and .18) arc (0:180:.6 and .18);
}\,\Big)
\xrightarrow{\,\tau\,\otimes\,\beta_*\,}
\big(K\otimes H\big)\otimes_{\cala(\,\tikzmath[scale=.3]{\useasboundingbox (-.6,-.18) rectangle (2.4,.6);\draw (0,0) circle (.6 and .18);
\draw (1.8,0) circle (.6 and .18);
}\,)}V\Big(\,\tikzmath[scale=.4]{
\fill[gray!50] (.9,1.2) circle (.6 and .18);
\filldraw[fill=gray!20]
(-.6,0) arc (-180:0:.6 and .18) arc (180:0:.3 and .18) arc (-180:0:.6 and .18)
[rounded corners=3]-- ++(0,.3) -- ++(-.9,.6) [sharp corners]-- ++(0,.3) 
arc (0:-180:.6 and .18)
[rounded corners=3]-- ++(0,-.3) -- ++(-.9,-.6) [sharp corners]-- cycle;
\draw[dashed] (0,0) +(0:.6 and .18) arc (0:180:.6 and .18);
\draw[dashed] (1.8,0) +(0:.6 and .18) arc (0:180:.6 and .18);
\draw (.9,1.2) +(0:.6 and .18) arc (0:180:.6 and .18);
}\,\Big),
\smallskip
\]
where $\tau$ is the permutation operator.

Finally, we provide an alternative proof of a famous result of Kawahigashi--Longo--M\"uger~\cite{Kawahigashi-Longo-Mueger(2001multi-interval)} about the modularity of the representation category of conformal nets:

\begin{theorem*}
If $\cala$ is a conformal net with finite index, then its category of representations is a modular tensor category.
\end{theorem*}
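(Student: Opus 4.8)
The plan is to isolate the one property that distinguishes a modular tensor category from a mere braided fusion category, namely non-degeneracy of the braiding, and to deduce it from the functoriality of $V$ on the torus. All the other ingredients are already in place: $\Rep(\cala)$ is a unitary braided tensor category by the geometric description of the fusion product (tensoring over $\cala(S^1\sqcup S^1)$ with the pair-of-pants space) and of the braiding (via the leg-exchanging homeomorphism $\beta$) recalled above; it has finitely many simple objects $\{H_\lambda\}$ and each is dualizable by \thmKLallirreduciblesectorsarefinite\ and \lemdualofHlambda; and it carries a ribbon twist $\theta_\lambda$ coming from the rotation action on sectors (\thmVaccumSector). Thus it remains only to show that the categorical $S$-matrix $S_{\lambda\mu}=\tr\big(c_{\mu,\lambda}\,c_{\lambda,\mu}\big)$ is invertible.

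First I would compute the space of conformal blocks of the torus. Writing $T^2$ as the self-gluing of an annulus $A$ and using the circle gluing law together with the identification \eqref{eq: algebra A(S)} of $\cala(S^1)$, the annulus contributes the identity bimodule $V(A)\cong\bigoplus_\lambda H_\lambda\ox H_{\bar\lambda}$, and closing it up yields $V(T^2)\cong\bigoplus_\lambda\hom_{\cala(S^1)}(H_\lambda,H_\lambda)$. Hence $V(T^2)$ is a Hilbert space whose dimension is the number of simple objects, equipped with a distinguished orthonormal basis $\{e_\lambda\}$ indexed by the simples.

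Next, the mapping class group of $T^2$ is $SL(2,\IZ)$, sitting inside the orientation-including automorphism group of Theorem 1; it is generated by the Dehn twist and the order-four rotation $\sigma$ exchanging the two generating cycles. By Theorem 1 these act on $V(T^2)$ by projective unitaries, so in particular $\sigma_*$ is invertible. The crux of the argument is to identify the matrix of $\sigma_*$ in the basis $\{e_\lambda\}$ with the categorical $S$-matrix $S_{\lambda\mu}$, up to an overall scalar. To do this I would fix a decomposition of $T^2$ into two pairs of pants glued along two circles and track $\sigma$ through the gluing isomorphisms of Theorem 2 and its associativity; the leg-exchange $\beta$ that defines the braiding is exactly the local model for how $\sigma$ reorders the two handles, so the double braiding $c_{\mu,\lambda}\,c_{\lambda,\mu}$ appears as the composite of the two pair-of-pants braidings across the gluing circle, and the categorical trace is produced by closing the annulus. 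Granting this identification, invertibility of $\sigma_*$ forces invertibility of $[S_{\lambda\mu}]$, the braiding is non-degenerate, and $\Rep(\cala)$ is modular.

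The hard part is this last identification of the topological $\sigma$-action with the algebraic double-braiding trace; unlike the dimension count it is not formal. It requires matching the factorization isomorphisms of Theorem 2 with the associativity and braiding coherence data of $\Rep(\cala)$, and checking that the assembled leg-exchanges realize $\sigma$ with the correct framing. Because $V$ and all its structure maps are only defined up to phase, the cleanest way to fix the residual normalizations---and thereby to be sure that $\sigma_*$ is the $S$-matrix and not merely conjugate to something of the same rank---is to verify the relations on the two generators of $SL(2,\IZ)$: the Dehn twist must act by the diagonal matrix of ribbon twists $\theta_\lambda$, and this, combined with the $SL(2,\IZ)$ relations $\sigma^4=1$ and $(\sigma\,T)^3=\sigma^2$, pins down $\sigma_*$ up to the expected global scalar. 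Interval factorization \eqref{eq: gluing iso intro} enters precisely here: it is what allows the braidings to be composed locally over $\cala(S^1)$ while keeping track of the bimodule actions, so that the gluing computation can be carried out at the level of Connes fusion rather than only for the type~$I$ algebras.
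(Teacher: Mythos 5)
Your route is genuinely different from the paper's, and it is worth being clear about what the paper actually does: it never touches the torus or the $S$-matrix. Instead it uses the Rehren--M\"uger criterion that a braided fusion category is modular iff its only transparent objects are multiples of the unit, and then kills transparent objects directly with interval factorization: for an irreducible transparent $T$, the equality $1_T\boxtimes\beta=1_T\boxtimes\beta^{-1}$ lets one factor the resulting map $B$ through the two half-pair-of-pants surfaces (the maps $\gamma_T$, $\delta_T$), and restricting $B$ to the intersection of the corresponding subspaces yields an isomorphism $T\otimes H_0\cong H_0\otimes T$ of \emph{irreducible} representations of a fixed algebra, whence $T\cong H_0$. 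Your route --- nondegeneracy of the $S$-matrix via the $SL(2,\IZ)$-action on $V(T^2)$ --- is precisely the alternative the authors themselves flag in Remark \ref{rmk: modular functor+}, and they flag it as \emph{conditional}: it would follow from Bakalov--Kirillov once one knows the relevant central extension of the surface bordism $2$-category is universal, which is left open.

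The gap in your proposal sits exactly where you say ``granting this identification.'' Everything surrounding it is formal: the computation $\dim V(T^2)=|\Delta|$ is fine (the self-gluing of the annulus is legitimate cyclic fusion over a single algebra only because $\cala(S^1)$ is a sum of type $I$ factors), and invertibility of $\sigma_*$ is immediate from Theorem \ref{Mthm: conformal blocks}. But the identification of $\sigma_*$ with the categorical $S$-matrix (even up to the harmless $D_1SD_2$ phase ambiguity, which indeed does not threaten invertibility) is the entire mathematical content of the theorem on this route, and it is a Moore--Seiberg-type comparison of the factorization data along the two generating cycles of $T^2$ with the algebraic braiding and fusion data. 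The only result of this geometric-to-algebraic matching kind available in the paper is Proposition \ref{prop: al = al} ($\alpha_{geo}=\alpha$ for the associator), whose proof is already a substantial open-soccer-ball diagram chase; no analogue for the $\sigma$-versus-double-braiding comparison is supplied, and your sketch does not supply one. Worse, the auxiliary inputs you invoke to pin down the normalization are themselves unavailable: the paper constructs no ribbon twist ($\thmVaccumSector$ concerns covariance of the vacuum sector, not a twist $\theta_\lambda$ on arbitrary sectors), the claim that the Dehn twist acts diagonally by $\theta_\lambda$ would require its own unproven matching argument, and the $SL(2,\IZ)$ relations hold here only projectively, so they constrain $\sigma_*$ only up to exactly the ambiguities you are trying to resolve. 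So as written this is a plausible program, not a proof; the paper's transparency argument exists precisely to bypass this open step.
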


Our proof is based on the property of factorization along intervals.
Here, the definition of modularity that we use to verify the above statement is that there are no transparent objects in $\mathrm{Rep}(\cala)$ aside from the unit object and its multiples 
(that is, no objects $T$ such that $\raisebox{5pt}{
$\!\!\tikzmath[scale=.5]{\draw (1,0) -- (0,1)node[above]{$\scriptstyle X$} (0,0) -- (.4,.4)(.6,.6)--(1,1)node[above]{$\scriptstyle T$};}$}
\!=\!
\raisebox{5pt}{$\tikzmath[scale=.5]{\draw (1,0) --(.6,.4)(.4,.6) -- (0,1)node[above]{$\scriptstyle X$} (0,0) --(1,1)node[above]{$\scriptstyle T$};}$}
\!$ \smallskip for every $X$ in the category).

\subsection*{Acknowledgements} 
\addtocontents{toc}{\SkipTocEntry}
We thank Stefan Stolz and Peter Teichner for their continuous support throughout this project.
We would also like to thank Yi-Zhi Huang, J\o rgen E. Andersen, and Michael M\"uger for their help with references.
The last author thanks Roberto Longo and Sebastiano Carpi for a pleasant stay in Rome, during which he was able to present this work.

\section{Conformal nets} \label{sec:nets}

In this paper, all $1$-manifolds are compact, smooth, and oriented.
The standard circle $S^1:=\{z\in \IC:|z|=1\}$ is the set of complex numbers of modulus one, equipped with the counter-clockwise orientation.
By a \emph{circle}, we shall mean a manifold that is diffeomorphic to the standard circle.
Similarly, by an \emph{interval}, we shall mean a manifold that is diffeomorphic to the standard interval $[0,1]$. 
For a 1-manifold $I$, we denote by $\bar I$ the same manifold equipped with the opposite orientation.
We denote by $\Diff(I)$ the group of diffeomorphisms of $I$, and by $\Diff_+(I)$ the subgroup of orientation preserving diffeomorphisms.
Given an interval $I$, we also let $\Diff_0(I)$ be the group of diffeomorphisms that restrict to the identity near the boundary of $I$.
We let $\INT$ denote the category whose objects are intervals and whose morphisms are embeddings, not necessarily orientation-preserving.

Let $\VN$ be the category whose objects are von Neumann algebras with separable preduals, and whose morphisms are $\IC$-linear $\ast$-homomorphisms, and $\IC$-linear $\ast$-antihomomorphisms.
A \emph{net} is a covariant functor $\cala \colon \INT \to \VN$ taking 
orientation-preserving embeddings to injective homomorphisms and orientation-reversing embeddings to injective antihomomorphisms.
It is said to be \emph{continuous} if the natural map $\mathrm{Hom}_{\INT}(I,J)\to \mathrm{Hom}_{\VN}(\cala(I),\cala(J))$ is continuous for the $\mathcal C^\infty$ topology on the source and Haagerup's $u$-topology on the target.
Given a subinterval $I \subseteq K$, we will often not distinguish between $\cala(I)$ and its image in $\cala(K)$.

\begin{definition}
\label{def:conformal-net}
A \emph{conformal net} is a continuous net subject to the following conditions.
Here, $I$ and $J$ are subintervals of some interval $K$. 
\begin{enumerate}
\item \emph{Locality:} If $I,J\subset K$ have disjoint interiors, then $\cala(I)$ and $\cala(J)$ are commuting subalgebras of $\cala(K)$.
\item \emph{Strong additivity:} If $K = I \cup J$, then $\cala(K) = \cala(I) \vee \cala(J)$.
\item \emph{Split property:} If $I,J\subset K$ are disjoint, then the map from the algebraic tensor product $\cala(I) \ox_{\alg} \cala(J) \to \cala(K)$ extends to the spatial tensor product
\[
\cala(I) \, \bar{\ox} \, \cala(J) \to \cala(K).
\]
\item \emph{Inner covariance:} If $\varphi\in\Diff_0(I)$, then $\cala(\varphi)$ is an inner automorphism.
\item \label{def:conformal-net:vacuum} 
\emph{Vacuum sector:} 
Suppose that $J \subsetneq I$ contains the boundary point $p \in \dd I$.
The algebra $\cala(J)$ acts on $L^2(\cala(I))$ via the left action of $\cala(I)$, and $\cala(\bar{J})\cong \cala(J)^\op$ acts on $L^2(\cala(I))$ via the right action of $\cala(I)$.
In that case, we require that the action of $\cala(J) \ox_{\alg} \cala( \bar{J} )$ on $L^2(\cala(I))$ extends to $\cala(J \cup_p \bar{J})$,
\begin{equation}\label{eq: Vaccum sector axiom for nets}
\qquad\tikzmath{
\matrix [matrix of math nodes,column sep=1cm,row sep=5mm]
{ 
|(a)| \cala(J) \ox_{\alg} \cala( \bar{J} ) \pgfmatrixnextcell |(b)| \bfB(L^2\cala(I))\\ 
|(c)| \cala(J \cup_p \bar{J}) \\ 
}; 
\draw[->] (a) -- (b);
\draw[->] (a) -- (c);
\draw[->,dashed] (c) -- (b);
}
\end{equation}
where $J \cup_p \bar{J}$ is equipped with any smooth structure that is compatible with the one on $J$,
and for which the involution that exchanges $J$ and $\bar J$ is smooth.
Here, one should picture the interval $J \cup_p \bar{J}$ as a submanifold of the double $I\cup_{\partial I}\bar I$ of $I$:
$\,\,
\tikzmath[scale=.06]
{\useasboundingbox (-20,-18) rectangle (20,18);
\draw[->] (-.7,15) -- (-.8,15);\draw[->] (.7,-15) -- (.8,-15);\draw (0,0) circle (15)(-17,0) -- (-13,0) (13.5,0) -- (17.5,0)
(16,0) arc (0:55:16) (16,0) arc (0:-55:16)(-6,9) node {$I$} (19.5,0) node {$p$} 
(17,9) node {$J$} (17,-9) node {$\bar{J}$};}$
\end{enumerate}
\end{definition}

\noindent Throughout this paper, we shall assume that our conformal nets are \emph{irreducible}, i.e., that the algebras $\cala(I)$ are factors.

Recall from \cite[Section 1.B]{BDH(nets)} that if $S$ is a circle, a Hilbert space is called an \emph{$S$-sector} of $\cala$
if it has compatible actions of $\cala(I)$ for every interval $I\subset S$.
The category of $S$-sectors is denoted $\Rep_S(\cala)$.
It is equipped with a distinguished object $H_0(S)=H_0(S,\cala)$, well defined up to non-canonical isomorphism, called the \emph{vacuum sector} of $\cala$ on $S$.
The vacuum sector is defined as follows.
If $j\in\Diff_-(S)$ is an orientation reversing involution that fixes the boundary of some interval $I\subset S$, 
then we set $H_0(S):=L^2(\cala(I))$.
It is equipped with:
\begin{list}{$\bullet$}{\leftmargin=0ex \rightmargin=0ex \labelsep=1ex \labelwidth=-1ex \itemsep=.1ex \topsep=.7ex}
\item for all $J\subset I$, an action $\cala(J)\hookrightarrow \cala(I)\xrightarrow{\text{left action}} \bfB(L^2(\cala(I)))$,
\item for all $J\subset j(I)$, an action $\cala(J)\hookrightarrow \cala(j(I))\xrightarrow{\!\cala(j)\!}\cala(I)^\op \xrightarrow{\!\text{right action}\!} \bfB(L^2(\cala(I)))$,
\end{list}\vspace{.7ex}
and those extend uniquely to the structure of an $S$-sector.
When we need to be specific, we will refer to the above construction of $H_0(S)$ as
the \emph{vacuum sector of $\cala$ associated to $S$, $I$, and $j$}.

More generally \cite[Section 3.C]{BDH(nets)}, given a closed 1-manifold $M$, we call a Hilbert space an $M$-sector of $\cala$
if it comes with compatible actions of $\cala(I)$ for every interval $I\subset M$.
The category of $M$-sectors is denoted $\Rep_M(\cala)$.

\subsection{Extending conformal nets to arbitrary 1-manifolds}\label{sec: Extend to 1-manifolds}
Let $\mathsf{1MAN}$ be the category whose objects are compact oriented 1-manifolds, possibly disconnected,
and whose morphisms are embeddings that are either orientation preserving, or orientation reversing.
The goal of this section is to show that every conformal net has a canonical extension to the larger category $\mathsf{1MAN}$.

\begin{theorem}\label{thm: extend A to hatA}
Every conformal net $\cala:\INT\to \VN$ has a canonical extension $\hat\cala:\mathsf{1MAN}\to \VN$.
That extension is symmetric monoidal\,\footnote{\label{footnote: BZ/2}Note that the symmetric monoidal structure on $\mathsf{1MAN}$ is only partially defined as one cannot take the disjoint union of an orientation preserving map with an orientation reversing one.
One way of making precise the sense in which $\hat \cala$ is symmetric monoidal is to 
let $B\IZ/2$ be the category with one object and $\IZ/2$ as automorphisms, then
note that both $\mathsf{1MAN}$ and $\VN$ are equipped with a functor to $B\IZ/2$ and that they are symmetric monoidal as categories over $B\IZ/2$.
The functor $\hat \cala$ is then symmetric monoidal over $B\IZ/2$.
} in the sense that it takes disjoint unions of 1-manifolds to spatial tensor products of von Neumann algebras.
\end{theorem}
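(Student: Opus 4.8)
The plan is to build $\hat\cala$ one layer of complexity at a time: first fix the monoidal behaviour on objects, then construct the genuinely new algebra attached to a circle, and finally extend to all morphisms and check the coherences. On objects I would simply \emph{define} $\hat\cala(M):=\barox_{c\in\pi_0(M)}\hat\cala(c)$, the spatial tensor product over the connected components. Since every connected compact $1$-manifold is an interval or a circle, and since $\hat\cala$ must restrict to $\cala$ on intervals, the only substantive construction is that of $\hat\cala(S)$ for a circle $S$.

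For a circle $S$, I would define $\cala(S)$ to be the colimit in $\VN$ of the diagram $I\mapsto\cala(I)$ indexed by the poset of subintervals $I\subsetneq S$ and their (orientation-preserving) inclusions. The point of this definition is its universal property: a normal homomorphism $\cala(S)\to\bfB(H)$ is the same datum as a compatible family of actions $\cala(I)\to\bfB(H)$, that is, exactly an $S$-sector structure on $H$, so that the category of normal representations of $\cala(S)$ is equivalent to $\Rep_S(\cala)$ by design. The canonicity of $\hat\cala(S)$ is then automatic, being the canonicity of a colimit; what is \emph{not} automatic is that this colimit exists in $\VN$, and this I expect to be the main obstacle, since $\VN$ does not admit all colimits. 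My approach would be to exhibit it concretely: choose a ``universal'' $S$-sector $H_{\mathrm{univ}}$ (a sufficiently large direct sum of sectors, whose existence is controlled by separability of the preduals), set $\cala(S):=\bigvee_{I\subsetneq S}\cala(I)\subset\bfB(H_{\mathrm{univ}})$, and then verify that (a) every $S$-sector is contained in a multiple of $H_{\mathrm{univ}}$, so that the generated algebra has the correct representation category, (b) the generated algebra is independent of the choice of $H_{\mathrm{univ}}$, and (c) it satisfies the colimit universal property above. The vacuum sector axiom~(v) and the sector theory of \cite{BDH(nets)} are what make such a universal sector available.

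To define $\hat\cala$ on a morphism, i.e.\ an embedding $f\colon M\hookrightarrow N$ that is everywhere orientation preserving or everywhere orientation reversing, I would proceed componentwise. Each component of $M$ maps into a single component of $N$; a circle can only embed into a circle, as a diffeomorphism onto a component, and this induces an isomorphism of the two colimit diagrams and hence of the circle algebras, while an interval embeds either into an interval (handled by $\cala$) or into a circle (giving a structure map into the colimit). The one point that genuinely uses the axioms is when several components of $M$ land in a common component $N_j$ with disjoint images: there the split property~(iii) together with locality~(i) supplies the required homomorphism $\barox_i\cala(M_i)\to\cala(N_j)$. Assembling these over the components of $N$ and tensoring yields $\hat\cala(f)$, a homomorphism or antihomomorphism according to the orientation of $f$. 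Functoriality $\hat\cala(f\circ g)=\hat\cala(f)\circ\hat\cala(g)$ then reduces to functoriality of $\cala$ on $\INT$, functoriality of the colimit, and the associativity and unit coherences of $\barox$, all of which are routine once the building blocks are in place.

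Finally I would record the symmetric monoidal structure. By construction $\hat\cala$ carries disjoint unions to spatial tensor products, and the associator, unitor, and symmetry are inherited from those of $\barox$, with the orientation bookkeeping organised exactly as in footnote~\ref{footnote: BZ/2} by working over $B\IZ/2$. For the canonicity clause I would argue uniqueness: any symmetric monoidal extension restricting to $\cala$ on intervals must send disjoint unions to spatial tensor products, and so must agree with $\hat\cala$ on objects and on all morphisms not involving circles; on a circle, strong additivity~(ii) forces the interval images to generate, and the vacuum sector axiom lets one realise every $S$-sector, so the cocone $\{\cala(I)\to\hat\cala(S)\}$ is forced to be a colimit and the extension is pinned down up to unique isomorphism. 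As indicated above, the crux of the whole argument is the construction of the circle algebra: proving that the colimit of interval algebras exists in $\VN$ and has the $S$-sectors as its representation category, independently of the auxiliary universal sector used to build it.
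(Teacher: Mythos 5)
Your plan stakes everything on defining the circle algebra as $\mathrm{colim}_{I\subset S}\,\cala(I)$ in $\VN$, and that is precisely where it breaks. First, the paper's $\VN$ consists of von Neumann algebras with \emph{separable preduals}, and the colimit recipe you propose (generate $\bigvee_{I}\cala(I)$ inside $\bfB(H_{\mathrm{univ}})$ for a universal $S$-sector) is exactly the general construction of colimits in the category of all von Neumann algebras given before Proposition~\ref{prop: colim over S}; for a net with uncountably many irreducible sectors (e.g.\ the $U(1)$ current net, which satisfies all the axioms but has sectors parametrized by $\IR$), the resulting algebra has non-separable predual and so does not lie in $\VN$ at all. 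Second, even when it exists, your colimit is by design the algebra whose normal representations are \emph{all} $S$-sectors, whereas the paper explicitly warns that every $\cala(S)$-module is an $S$-sector but ``the converse does not always hold''; the identification of $\cala(S)$ with the colimit is the content of Proposition~\ref{prop: colim over S} and requires the finite index hypothesis, while Theorem~\ref{thm: extend A to hatA} is stated for arbitrary conformal nets. Consequently your closing uniqueness argument is unsound: nothing in strong additivity plus the vacuum axiom forces the cocone $\{\cala(I)\to\hat\cala(S)\}$ of a symmetric monoidal extension to be colimiting, and the paper's own extension is (in general) a counterexample to that claim.

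What the paper does instead avoids colimits as the definition. Given a $c$-cover $\cali=\{I_1,\dots,I_n\}$ of $M$, it forms the Hilbert space $H_\Sigma$ of \eqref{eq: def H-sigma} by fusing the vacuum sectors $H_0(\partial(I_i\times[0,1]))$ along the algebras $\cala(\{p_i\}\times[0,1])$ at the cut points (graph/cyclic fusion), and defines $\hat\cala(M):=\bigvee_{I\subset M}\cala(I\times\{0\})\subset\bfB(H_\Sigma)$ as in \eqref{eq: hata(M)}; this stays inside $\VN$ for every conformal net. ``Canonical'' then means independence of the auxiliary cover and local coordinates, proved in Lemmas~\ref{lem: H_Sigma^(cali)= H_Sigma^(calj)} and \ref{lem: two covers that agree at dM} using the fusion isomorphism $H_0(S_1')\boxtimes_{A_0}H_0(S_1'')\cong H_0(S_1)$ of \eqref{eq: [paper I, Corollary 1.33]}; the $*$-algebra colimit $A_0=\mathrm{colim}_{\text{$*$-alg}}\cala(I)$ appears only as an auxiliary device to produce the unique comparison isomorphism between the algebras generated on the two Hilbert spaces, not as the definition of $\hat\cala(S)$. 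Your componentwise treatment of objects and morphisms (split property for several components landing in one target component, circle-to-circle embeddings being diffeomorphisms) matches the paper's and is fine; it is the construction of the circle algebra itself, which you correctly identified as the crux, that your route cannot deliver at the stated level of generality.
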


\noindent The proof of this result will occupy this section.
We first present some useful technical definitions.

By a \emph{$Y$-graph} we mean any topological space that is homeomorphic to $\{z\in\IC:z^3\in [0,1]\}$,
and by a \emph{trivalent graph} any compact Hausdorff space that is locally homeomorphic to
a $Y$-graph.

\begin{definition}\label{def: smooth structure on trivalent graph}
A \emph{smooth structure} on a trivalent graph $\Gamma$ is the data of a smooth structure on every interval $I\subset \Gamma$, subject to the conditions:
\\({\it i}\hspace{.4mm}) Whenever $I_1\subset I_2\subset \Gamma$, the smooth structure on $I_1$ is the one inherited from~$I_2$.
\\({\it ii}\hspace{.4mm}) For every $Y$-graph $Y\subset \Gamma$, there exists a faithful action of the symmetric group $\mathfrak S_3\to\mathrm{Homeo}(Y)$,
such that for all $g\in \mathfrak S_3$ and $I\subset Y$, the corresponding map $I\to gI$ is smooth.
\end{definition}

Locally around a trivalent point, a smooth structure on
\(\tikzmath[scale=.3]{\useasboundingbox (-1,-.8) -- (1,.8);\draw (0,0) -- (-30:1)(0,0) -- (90:.95)(0,0) -- (210:1);;}\)\vspace{-.05cm}
is given by compatible smooth structures on
\(\tikzmath[scale=.3]{\useasboundingbox (-1,-.8) -- (1,.8);\draw (0,0) -- (90:.95)(0,0) -- (210:1);;}\),
\(\tikzmath[scale=.3]{\useasboundingbox (-1,-.8) -- (1.1,.8);\draw (0,0) -- (-30:1)(0,0) -- (210:1);}\), and
\(\tikzmath[scale=.3]{\useasboundingbox (-1,-.8) -- (1.1,.8);\draw (0,0) -- (-30:1)(0,0) -- (90:.95);;}\).
The compatibility condition says that there should exist a faithful action of $\mathfrak S_3$
that induces smooth maps when restricted to each one of the above pieces.\vspace{-.15cm}

Recall that given a smooth trivalent graph like this:
\tikz[scale=.3]{
\useasboundingbox (-1.15,-.6) rectangle (2.25,1.2); \draw (60:1) arc (60:300:1); \draw (1,0) +(120:1) arc (120:-120:1); \draw (0,0) +(60:1) -- +(300:1);},
with circle subgraphs
\[
\tikzmath[scale=.033]{ \useasboundingbox (-30,-18) rectangle (28,20); \draw[line width=.7] (60:14) arc (60:300:14); \draw[densely dotted] (14,0) +(120:14) arc (120:-120:14);
\draw[line width=.7] (0,0) +(60:14) -- +(300:14); \node at (-27,-1) {$S_1$}; 
\draw[->] (90:14) ++ (-.5,0) -- +(-.1,0); }
\,\,,\qquad 
\tikzmath[scale=.033]{ \useasboundingbox (-30,-18) rectangle (28,20); \draw[densely dotted] (60:14) arc (60:300:14); \draw[line width=.7] (14,0) +(120:14) arc (120:-120:14);
\draw[line width=.7] (0,0) +(60:14) -- +(300:14); \node at (-27,-1) {$S_2$}; 
\draw[->] (14,14) ++ (-.5,0) -- +(-.1,0); }
\,\,,\qquad 
\tikzmath[scale=.033]{ \useasboundingbox (-30,-18) rectangle (28,20); \draw[line width=.7] (60:14) arc (60:300:14); \draw[line width=.7] (14,0) +(120:14) arc (120:-120:14);
\draw[densely dotted] (0,0) +(60:14) -- +(300:14); \node at (-27,-1) {$S_3$}; 
\draw[->] (14,14) ++ (-.5,0) -- +(-.1,0); }
\]
there is a corresponding functor \cite[Section 1.C]{BDH(nets)}
\[
\boxtimes_I:\Rep_{S_1}(\cala)\times \Rep_{S_2}(\cala)\to \Rep_{S_3}(\cala)
\]
given by fusing along $\cala(I)$, where $I=S_1\cap S_2$.
Moreover, there is a non-canonical isomorphism of $S_3$-sectors
\begin{equation}\label{eq: [paper I, Corollary 1.33]}
H_0(S_1) \boxtimes_I H_0(S_2) \,\cong\, H_0(S_3).
\end{equation}

Given a 1-manifold $M$ and a conformal net $\cala$, we now describe the algebra $\hat\cala(M)$.
Pick a cover $\cali=\{I_1,\ldots,I_n\}$ of $M$ consisting of intervals with disjoint interiors.
This induces a cover of the 2-manifold $\Sigma:=M\times [0,1]$ by rectangles $I_i\times [0,1]$.
Orient the circles $S_i:=\partial(I_i\times [0,1])$ so that their orientation agrees with that of $M\times\{0\}$ on $I_i\times\{0\}$.
Set $\Gamma:=\bigcup S_i$.\bigskip
\[
\def\coords{\coordinate (a) at (.4,0.1);\coordinate (ab) at (1.3,0.03);\coordinate (b) at (2,0);\coordinate (bc) at (3.05,.8);\coordinate (c) at (2.5,1.5);\coordinate (cd) at (1,1.45);\coordinate (d) at (-0.1,1.2);\coordinate (da) at (-.5,.5);\coordinate (e) at (4.3,0);\coordinate (ef) at (5.3,.6);\coordinate (f) at (4.8,1.3);\coordinate (fe) at (3.7,.5);\coordinate (g) at (5.6,0);\coordinate (gh) at (5.9,0.01);\coordinate (h) at (6.2,.05);\coordinate (hi) at (6.6,0.3);\coordinate (i) at (7,.5);}
\def\AB{  \draw (a) to[out = 10, in = 170, looseness=1.1] (ab);\draw (ab) to[out = -10, in = 185, looseness=1.1] (b); }
\def\BC{  \draw (b) to[out = 5, in = -105, looseness=1] (bc);\draw (bc) to[out = 75, in = -10, looseness=1.2] (c);}
\def\CD{  \draw (c) to[out = 170, in = 15, looseness=1] (cd); \draw (cd) to[out = 195, in = 10, looseness=1] (d);}
\def\DA{  \draw (d) to[out = 190, in = 110, looseness=1.2] (da);  \draw (da) to[out = -70, in = 190, looseness=1.1] (a);}
\def\EF{  \draw (e) to[out = 0, in = -110, looseness=1] (ef); \draw (ef) to[out = 70, in = 10, looseness=1.5] (f);}
\def\FE{  \draw (f) to[out = 190, in = 75, looseness=.9] (fe); \draw (fe) to[out = -105, in = 180, looseness=1.1] (e);}
\def\GH{\draw (g) to[out = 0, in = 185, looseness=1] (gh); \draw (gh) to[out = 5, in = 190, looseness=1] (h);}
\def\HI {\draw (h) to[out = 10, in = -130, looseness=.7] (hi); \draw (hi) to[out = 50, in = 180, looseness=1] (i);}
M:\tikzmath[scale=.35]{\useasboundingbox (-.65,0.1) rectangle (7.05,1.2); \coords \AB\BC\CD\DA
\draw ($(fe)+(-.01,-.1)$) to[out = -90, in = 180, looseness=1.05] (e) to[out = 0, in = -90, looseness=1.05] ($(ef)+(.07,.32)$)
($(ef)+(.07,.32)$) to[out = 90, in = 10, looseness=1.2] (f) to[out = 190, in = 90, looseness=.8] ($(fe)+(-.01,-.1)$); \GH\HI} 
\qquad\Sigma:\,\tikzmath[scale=.35]
{\useasboundingbox (-.65,0.1) rectangle (7.05,2);
\coords 
\fill[gray!50, even odd rule] (a) to[out = 10, in = 170, looseness=1.1] (ab) to[out = -10, in = 185, looseness=1.1] (b) to[out = 5, in = -105, looseness=1] 
(bc) to[out = 75, in = -10, looseness=1.2] (c) to[out = 170, in = 15, looseness=1] (cd) to[out = 195, in = 10, looseness=1] (d) to[out = 190, in = 110, looseness=1.2] (da) to[out = -70, in = 190, looseness=1.1] (a) -- cycle ($(a)+(0,1)$) to[out = 10, in = 170, looseness=1.1] ($(ab)+(0,1)$) to[out = -10, in = 185, looseness=1.1] ($(b)+(0,1)$) to[out = 5, in = -105, looseness=1] 
($(bc)+(0,1)$) to[out = 75, in = -10, looseness=1.2] ($(c)+(0,1)$) to[out = 170, in = 15, looseness=1] ($(cd)+(0,1)$) to[out = 195, in = 10, looseness=1] ($(d)+(0,1)$) to[out = 190, in = 110, looseness=1.2] ($(da)+(0,1)$) to[out = -70, in = 190, looseness=1.1] ($(a)+(0,1)$) -- cycle;
\draw ($(bc)+(-.15,.53)$) to[out = 145, in = -10, looseness=1] (c) to[out = 170, in = 15, looseness=1] (cd) to[out = 195, in = 20, looseness=1] ($(d)+(-.3,-.1)$);
\fill[gray!30] ($(da)+(-.05,.25)$) to[out = -90, in = 190, looseness=1.2] (a) to[out = 10, in = 170, looseness=1.1] (ab) to[out = -10, in = 185, looseness=1.1] (b) to[out = 5, in = -90, looseness=.99] 
($(bc)+(.03,.2)$) -- ($(bc)+(.03,1.2)$) to[in = 5,out = -90, looseness=.99] ($(b)+(0,1)$) to[in = -10, out = 185, looseness=1.1] ($(ab)+(0,1)$) to[in = 10, out = 170, looseness=1.1] ($(a)+(0,1)$)
to[in = -90, out = 190, looseness=1.2] ($(da)+(-.05,1.25)$); \draw ($(da)+(-.05,1.25)$) -- ($(da)+(-.05,.25)$) to[out = -90, in = 190, looseness=1.2] (a) to[out = 10, in = 170, looseness=1.1] (ab) to[out = -10, in = 185, looseness=1.1] (b) to[out = 5, in = -90, looseness=.99] ($(bc)+(.03,.2)$) -- ($(bc)+(.03,1.2)$);  
\fill[gray!50]($(ef)+(.07,.32)$) to[out = 90, in = 10, looseness=1.2] (f) to[out = 190, in = 90, looseness=.8] ($(fe)+(-.01,-.1)$) -- ($(fe)+(-.01,.9)$) to[in = 190, out = 90, looseness=.8] ($(f)+(0,1)$) to[in = 90, out = 10, looseness=1.2]  ($(ef)+(.07,1.32)$); \draw($(ef)+(.07,.32)$) to[out = 90, in = 10, looseness=1.2] (f) to[out = 190, in = 90, looseness=.8] ($(fe)+(-.01,-.1)$);
\draw($(ef)+(.07,1.32)$) to[out = 90, in = 10, looseness=1.2] ($(f)+(0,1)$) to[out = 190, in = 90, looseness=.8] ($(fe)+(-.01,.9)$);
\filldraw[fill=gray!30] ($(fe)+(-.01,-.1)$) to[out = -90, in = 180, looseness=1.05] (e) to[out = 0, in = -90, looseness=1.05] ($(ef)+(.07,.32)$) -- ($(ef)+(.07,1.32)$) to[in = 0, out = -90, looseness=1.05] ($(e)+(0,1)$) to[in = -90, out = 180, looseness=1.05] ($(fe)+(-.01,.9)$) -- cycle;
\draw[fill=gray!30](g) to[out = 0, in = 185, looseness=1] (gh) to[out = 5, in = 190, looseness=1] (h) to[out = 10, in = -130, looseness=.7] (hi) to[out = 50, in = 180, looseness=1] (i) -- ($(i)+(0,1)$) to[in = 50, out = 180, looseness=1] ($(hi)+(0,1)$) to[in = 10, out = -130, looseness=.7] ($(h)+(0,1)$) to[in = 5, out = 190, looseness=1] ($(gh)+(0,1)$) to[in = 0, out = 185, looseness=1] ($(g)+(0,1)$) -- cycle;
\draw ($(a)+(0,1)$) to[out = 10, in = 170, looseness=1.1] ($(ab)+(0,1)$) to[out = -10, in = 185, looseness=1.1] ($(b)+(0,1)$) to[out = 5, in = -105, looseness=1] ($(bc)+(0,1)$) to[out = 75, in = -10, looseness=1.2] ($(c)+(0,1)$) to[out = 170, in = 15, looseness=1] ($(cd)+(0,1)$) to[out = 195, in = 10, looseness=1] ($(d)+(0,1)$) to[out = 190, in = 110, looseness=1.2] ($(da)+(0,1)$) to[out = -70, in = 190, looseness=1.1] ($(a)+(0,1)$) -- cycle;} 
\qquad\Gamma:\,\tikzmath[scale=.35]
{\useasboundingbox (-.65,0.1) rectangle (7.05,2); \coords \AB\BC\CD\DA\EF\FE\GH\HI \pgftransformyshift{28} \coords  \draw[white, ultra thick]
(da) to[out = -70, in = 190, looseness=1.1] (a) (b) to[out = 5, in = -105, looseness=1] (bc) (e) to[out = 0, in = -110, looseness=1] (ef) (fe) to[out = -105, in = 180, looseness=1.1] (e); 
\AB\BC\CD\DA\EF\FE\GH\HI \foreach \x in {a,b,c,d,e,f,g,h,i} {\draw (\x) -- +(0,-1);}  } 
\medskip
\]
Let us pick local coordinates on $M$ around every point of the finite set $\bigcup_j\partial I_j$.
Using those local coordinates (and the standard coordinates on the copies of $[0,1]$), the trivalent graph $\Gamma$ can then be endowed with a smooth structure in the sense of Definition \ref{def: smooth structure on trivalent graph}.
In particular, the circles $S_i$ have smooth structures and we can talk about their vacuum sectors~$H_0(S_i)$.

Let $p_1,\ldots,p_m\in M$ be the points where two of the intervals $I_i$ touch each other.
If $p_i\in I_j\cap I_k$, then the algebra $A_i:=\cala(\{p_i\}\times [0,1])$ acts on the left on $H_0(S_j)$ and on the right on $H_0(S_k)$
or the other way around, depending on orientations.
We can therefore form the fusion 
\begin{equation}\label{eq: def H-sigma}
\qquad H_\Sigma\,:=\,\bigboxtimes_{\{A_i\}}\big\{H_0(S_j)\big\}_{\scriptscriptstyle 1\le i\le m,\, 1\le j\le n}
\end{equation}
of all the sectors $H_0(S_j)$ along all the algebras $A_i$, as explained in Definition~\ref{def: graph fusion} in the appendix; this Hilbert space $H_\Sigma$ can in fact be well defined up to canonical unitary isomorphism, as established in Theorem~\ref{thm: any cover of the circle}, also in the appendix.
The result is a $\partial \Sigma$-sector 
by~\cite[\coritsapartialSigmasector]{BDH(nets)}.
We then define
\begin{equation}\label{eq: hata(M)}
\hat\cala(M)\,:=\, \bigvee_{\substack{\text{intervals} \\ {}^{\phantom{+}}I\subset M{}^{\phantom{+}}}} \cala(I\times \{0\})
\end{equation}
as a subalgebra of $\bfB(H_\Sigma)$.

In the next two lemmas, we will see that the Hilbert space \eqref{eq: def H-sigma} and the algebra \eqref{eq: hata(M)} are independent of the cover $\cali$,
and of the choice of local coordinates.
For the moment, we shall write $H_\Sigma^{(\cali)}$ and $\hat\cala(M)^{(\cali)}$ to stress the dependence on the cover 
(the dependence on the local coordinates is also there, but left implicit in the notation).
Let us first introduce some terminology to refer to the above structures:

\begin{definition}\label{def: c-cover}
A $c$-cover of a compact 1-manifold $M$ consists of a cover $\cali=\{I_1,\ldots,I_n\}$ by intervals with disjoint interiors,
along with germs of local coordinates at every point of the set $\bigcup_j\partial I_j$.
\end{definition}

Given two $c$-covers $\cali=\{I_1,\ldots,I_n\}$ and $\calj=\{J_1,\ldots,J_m\}$ of $M$,
we shall say that $\calj$ refines $\cali$, and write it $\calj\prec\cali$ if the underlying cover $\{J_1,\ldots,J_m\}$ refines $\{I_1,\ldots,I_n\}$ and the local coordinates of $\cali$ agree 
with those of $\calj$
at every point of $\bigcup_j\partial I_j$.

\begin{lemma}\label{lem: H_Sigma^(cali)= H_Sigma^(calj)}
Let $M$ be a compact $1$-manifold equipped with germs of local coordinates at each point of $\partial M$, and
let $\cali$ and $\calj$ be $c$-covers of $M$ that are compatible with the given local coordinates.
Then there is a non-canonical isomorphism of $\partial \Sigma$-sectors
\[
H_\Sigma^{(\cali)}\to H_\Sigma^{(\calj)}.
\]
\end{lemma}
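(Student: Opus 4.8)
The plan is to connect the two $c$-covers by a finite chain of two elementary moves, to check that each move induces an isomorphism of $\partial\Sigma$-sectors, and then to compose. The first move, \emph{recoordinatization}, keeps the underlying interval cover fixed and changes the germ of local coordinate at a single interior point of $\bigcup_j\partial I_j$. The second move, \emph{subdivision}, replaces one interval $I_i$ by two intervals $I_i'\cup_q I_i''$ meeting at a new interior point $q$, the coordinate germ at $q$ being the one induced by the smooth structure of $I_i$. To see that these moves suffice, write $P,Q\subset M$ for the sets of interior points of $\bigcup_j\partial I_j$ and $\bigcup_k\partial J_k$; the boundary endpoints are common to $\cali$ and $\calj$ since both are compatible with the given germs on $\partial M$. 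First apply recoordinatizations to $\calj$ so that its germs agree with those of $\cali$ at every point of $P\cap Q$. The $c$-cover $\calk$ whose intervals are cut at $P\cup Q$, with germs inherited from $\cali$ on $P$ and from $\calj$ on $Q\setminus P$, then satisfies $\calk\prec\cali$ and $\calk\prec\calj$, and is reached from each of $\cali,\calj$ by a sequence of subdivisions (followed, where the induced germ is not the desired one, by a recoordinatization). Thus it is enough to treat each move in isolation.

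For the subdivision move, cutting $I_i$ at $q$ replaces the circle $S_i=\partial(I_i\times[0,1])$ by two circles $S_i',S_i''$ sharing the vertical arc $\{q\}\times[0,1]$, and introduces a fusion along $A_q:=\cala(\{q\}\times[0,1])$. The isomorphism \eqref{eq: [paper I, Corollary 1.33]} supplies a non-canonical isomorphism $H_0(S_i')\boxtimes_{A_q}H_0(S_i'')\cong H_0(S_i)$ of $S_i$-sectors, where the right-hand side carries exactly the smooth structure coming from $\cali$ because the germ at $q$ was chosen to be the induced one. Feeding this isomorphism into the graph fusion of Definition~\ref{def: graph fusion} and invoking the associativity and coherence of graph fusion from Theorem~\ref{thm: any cover of the circle} produces an isomorphism $H_\Sigma^{(\calj)}\cong H_\Sigma^{(\cali)}$; since the other circles and the actions of $\cala(I)$ for intervals $I\subset\partial\Sigma$ are untouched, this is an isomorphism of $\partial\Sigma$-sectors.

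For the recoordinatization move, a change of germ at an interior point $p$ is realized by a germ of diffeomorphism of $M$ fixing $p$, which I extend to a diffeomorphism $g$ of $M$ supported in a neighborhood of $p$ containing no other point of $\bigcup_j\partial I_j$. Then $g\times\id_{[0,1]}$ restricts to a homeomorphism of $\Gamma$ that is smooth for the two relevant smooth structures and is the identity away from $\{p\}\times[0,1]$. By the covariance of the vacuum sector (\cite[\thmVaccumSector]{BDH(nets)}) it induces isomorphisms of the vacuum sectors $H_0(S_j)$ of the one or two circles whose boundary meets the arc $\{p\}\times[0,1]$, intertwining the old and new actions of $A_p$; inserting these into the graph fusion again yields the desired isomorphism of $\partial\Sigma$-sectors.

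The main obstacle is the bookkeeping: one must verify that the local isomorphisms produced by the two moves, which are each supported near a single vertex of $\Gamma$, assemble through the iterated graph fusion into a map compatible with \emph{all} the algebra actions that define the $\partial\Sigma$-sector structure, rather than a mere isomorphism of Hilbert spaces. This is precisely where the formalism of Definition~\ref{def: graph fusion} and the canonicity statement of Theorem~\ref{thm: any cover of the circle} carry the weight, and where one must confirm that a move localized at one vertex genuinely leaves the remainder of the fusion diagram, and in particular the boundary actions, unchanged.
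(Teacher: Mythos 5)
Your strategy is sound and genuinely different from the paper's. The paper gets by with a \emph{single} move---subdivision---because its refinement relation $\calj\prec\cali$ only requires the germs to agree at the points of $\bigcup_j\partial I_j$ of the \emph{coarser} cover, leaving the germs at newly created cut points arbitrary; each such step is handled by exactly your subdivision argument, i.e.\ the isomorphism \eqref{eq: [paper I, Corollary 1.33]} fed into the graph fusion as in \eqref{eq: from cali to calj}. Germ clashes at shared interior points are then absorbed by a purely combinatorial zigzag: for any $\cali$, $\cali'$ there exist $c$-covers with $\cali\succ \calj\prec \cali''\succ \calj'\prec \cali'$, where $\cali''$ can be taken with interior cut points disjoint from those of $\cali$ and $\cali'$. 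Your recoordinatization move buys the same thing, but at the cost of invoking diffeomorphism covariance of the vacuum sector, an input the paper's route avoids entirely. Two small repairs to your write-up: the coherence you need is the associativity \eqref{eq: ass for graph fusion} of graph fusion, not Theorem~\ref{thm: any cover of the circle} (which concerns the canonicity of $H_\Sigma$ for a circle---more than is needed, and not available at this point); and ``the germ induced by the smooth structure of $I_i$'' is not well defined, since a smooth structure does not single out a coordinate germ---though this is harmless, as you immediately allow a corrective recoordinatization.

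There is one genuine flaw in the recoordinatization move as stated. Since $g$ is supported in a neighborhood of the interior point $p$, the homeomorphism $g\times\id_{[0,1]}$ moves points of $M\times\{0,1\}\subset\partial\Sigma$; hence for an interval $I\subset\partial\Sigma$ meeting $\supp(g)\times\{0,1\}$, the induced map intertwines the action of $a\in\cala(I)$ with that of $\cala\big((g\times\id)|_I\big)(a)$. This is a $g$-twisted intertwiner, not the strict morphism of $\partial\Sigma$-sectors that your own closing paragraph rightly demands---so ``inserting these into the graph fusion'' does \emph{not} yet yield an isomorphism of $\partial\Sigma$-sectors. The defect is repairable: choose an interval $J$ in the interior of $M$ containing $\supp(g)$ in its interior and no other cut point; then $g|_J\in\Diff_0(J)$, so by inner covariance there are unitaries $u_\epsilon\in\cala(J\times\{\epsilon\})$, $\epsilon\in\{0,1\}$, with $\cala(g|_{J\times\{\epsilon\}})=\ad(u_\epsilon)$, and (using locality and strong additivity) $\ad(u_\epsilon)$ agrees with $\cala(g\times\id)$ on $\cala(I)$ for every $I\subset M\times\{\epsilon\}$. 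Composing your map with $u_0^*u_1^*$, acting through the $\partial\Sigma$-sector structure, gives an honest isomorphism of $\partial\Sigma$-sectors; no correction is needed along $\{p\}\times[0,1]$, where $g\times\id$ is the identity, so the intertwining of the vertical algebra $\cala(\{p\}\times[0,1])$ survives. (The subdivision move has no such defect, since \eqref{eq: [paper I, Corollary 1.33]} is an isomorphism of $S_i$-sectors and hence strictly intertwines all boundary actions.) With this correction your chain of moves composes to the asserted non-canonical isomorphism; alternatively, the paper's zigzag lets you drop the recoordinatization move, and its diffeomorphism-covariance input, altogether.
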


\begin{proof}
We first treat the case when $\calj$ refines $\cali$.
The $c$-cover $\calj$ can be obtained from $\cali$ by successively subdividing intervals,
and so we may as well assume that $\cali=\{I_1,I_2,\ldots,I_n\}$ and $\calj=\{I_1',I_1'',I_2,\ldots,I_n\}$ with $I_1'\cup I_1''=I_1$.
Let\smallskip
\begin{gather*}
S_1':=\partial(I_1'\times [0,1])\,\,\,\,\quad S_1'':=\partial(I_1''\times [0,1])\,\,\,\,\quad S_j:=\partial(I_j\times [0,1])\\
p_0:=I_1'\cap I_1''\,\,\,\,\quad A_0:=\cala(\{p_0\}\times [0,1]).
\smallskip\end{gather*}
By \eqref{eq: [paper I, Corollary 1.33]}, there is an isomorphism of $S_1$-sectors 
\begin{equation}\label{hoAho=Ho}
H_0(S_1')\boxtimes_{A_0}H_0(S_1'')\cong H_0(S_1).
\end{equation}
It follows that 
\begin{equation}\label{eq: from cali to calj}
\begin{split}
H_\Sigma^{(\calj)}\,=\,\,\,&\bigboxtimes_{\{A_i\}}\Big\{H_0(S_1'),\,H_0(S_1''),\,H_0(S_2),\ldots, H_0(S_n)\Big\}_{0\le i\le m}\\
\cong\,\,\,&
\bigboxtimes_{\{A_i\}}\Big\{H_0(S_1')\boxtimes_{A_0} H_0(S_1''),\,H_0(S_2),\ldots, H_0(S_n)\Big\}_{1\le i\le m}\\
\cong\,\,\,&
\bigboxtimes_{\{A_i\}}\Big\{H_0(S_1),\,H_0(S_2),\ldots, H_0(S_n)\Big\}_{1\le i\le m}=\,\,H_\Sigma^{(\cali)}.
\end{split}
\end{equation}

The general case follows from the above special case by the following observation.
For any two $c$-covers $\cali$, $\cali'$ of $M$, there exist $c$-covers $\calj$, $\cali''$, $\calj'$ of $M$ such $\cali\succ \calj\prec \cali''\succ \calj'\prec \cali'$.
\end{proof}

We then have the following result.

\begin{lemma}\label{lem: two covers that agree at dM}
Given two $c$-covers $\cali$, $\calj$ of $M$ that induce the same local coordinates at the points of $\partial M$, there is a canonical algebra isomorphism
\[
\hat \cala(M)^{(\cali)}\to\hat \cala(M)^{(\calj)}.
\]
\end{lemma}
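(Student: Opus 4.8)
The plan is to build the desired isomorphism by conjugating with the (non-canonical) unitary supplied by the previous lemma, and then to show that the resulting map, once restricted to the distinguished subalgebra, does not depend on that choice. Concretely, I would first apply Lemma~\ref{lem: H_Sigma^(cali)= H_Sigma^(calj)}, taking as the ``given germs of local coordinates at $\partial M$'' the ones on which $\cali$ and $\calj$ agree by hypothesis. Both covers are then compatible with these coordinates, so the lemma furnishes an isomorphism of $\partial\Sigma$-sectors $u\colon H_\Sigma^{(\cali)}\to H_\Sigma^{(\calj)}$, and conjugation $\mathrm{Ad}(u)\colon x\mapsto uxu^*$ is an isomorphism $\bfB(H_\Sigma^{(\cali)})\to\bfB(H_\Sigma^{(\calj)})$.

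Next I would check that $\mathrm{Ad}(u)$ carries $\hat\cala(M)^{(\cali)}$ onto $\hat\cala(M)^{(\calj)}$. By definition \eqref{eq: hata(M)}, each of these algebras is generated by the images of $\cala(I\times\{0\})$ for intervals $I\subset M$, and since $I\times\{0\}$ is an interval contained in the boundary component $M\times\{0\}$ of $\partial\Sigma$, these images are part of the respective $\partial\Sigma$-sector actions. Because $u$ intertwines the two $\partial\Sigma$-sector structures, $\mathrm{Ad}(u)$ sends the action of $a\in\cala(I\times\{0\})$ on $H_\Sigma^{(\cali)}$ to the action of the \emph{same} element $a$ on $H_\Sigma^{(\calj)}$. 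As these actions generate the two algebras, $\mathrm{Ad}(u)$ restricts to an isomorphism $\hat\cala(M)^{(\cali)}\to\hat\cala(M)^{(\calj)}$ which on generators is simply $a\mapsto a$.

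The only real content beyond unwinding definitions is canonicity: the map must not depend on the choice of $u$. Here I would observe that any two intertwiners $u,u'$ differ by a unitary $w:=u'^*u$ lying in the commutant of the $\partial\Sigma$-sector action on $H_\Sigma^{(\cali)}$. Since $\hat\cala(M)^{(\cali)}$ is contained in the von Neumann algebra generated by that action, every element $x\in\hat\cala(M)^{(\cali)}$ commutes with $w$; hence $uxu^*=u'xu'^*$, so $\mathrm{Ad}(u)$ and $\mathrm{Ad}(u')$ agree on $\hat\cala(M)^{(\cali)}$. This shows the restricted isomorphism is independent of all auxiliary choices, hence canonical. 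I expect precisely this last step---verifying that the non-canonicity of $u$ lands entirely in the commutant and therefore disappears upon restriction to $\hat\cala(M)^{(\cali)}$---to be the main (and essentially only) subtlety; everything else is a direct application of the preceding lemma together with the definitions of $\hat\cala(M)^{(\cali)}$ and of a morphism of sectors.
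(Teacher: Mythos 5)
Your proposal is correct and is essentially the paper's own argument: the paper forms the $*$-algebra colimit $A=\mathrm{colim}_{I\subset M}\,\cala(I)$, notes that Lemma~\ref{lem: H_Sigma^(cali)= H_Sigma^(calj)} makes $H_\Sigma^{(\cali)}$ and $H_\Sigma^{(\calj)}$ isomorphic as $A$-representations, and defines the isomorphism as the unique arrow under $A$ making the evident triangle commute---which is exactly your $\mathrm{Ad}(u)$ with $a\mapsto a$ on generators. Your explicit commutant computation showing independence of the choice of intertwiner $u$ is just an unpacked form of the paper's uniqueness claim (a normal homomorphism is determined on the ultraweakly dense image of $A$), so the two proofs coincide in substance.
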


\begin{proof}
Consider the colimit $A := {\mathrm{colim}}_{I\subset M}\cala(I)$
in the category of $*$-algebras, indexed by the poset of subintervals of $M$.
By definition, $\cala(M)^{(\cali)}$ is the von Neumann algebra generated by (the image of) $A$ in $\bfB\big(H_\Sigma^{(\cali)}\big)$
and, similarly, $\cala(M)^{(\calj)}$ is generated by $A$ in $\bfB\big(H_\Sigma^{(\calj)}\big)$.
By Lemma \ref{lem: H_Sigma^(cali)= H_Sigma^(calj)}, the Hilbert spaces $H_\Sigma^{(\cali)}$ and $H_\Sigma^{(\calj)}$ are isomorphic as representations of $A$.
The two von Neumann algebras are therefore isomorphic, and there is a unique arrow that makes this diagram commute:
\[
\tikzmath{
\node (a) at (0,.8) {$A$};
\node (b) at (-1.5,0) {$\hat \cala(M)^{(\cali)}$};
\node (c) at (1.5,0) {$\hat \cala(M)^{(\calj)}$};
\draw[->] (a) -- (b);
\draw[->] (a) -- (c);
\draw[->,dashed]
(b) -- (c);
}
\qedhere
\]
\end{proof}

As a corollary, we have:

\begin{corollary}
Given two $c$-covers $\cali$ and $\calj$ of $M$ (without any conditions on the local coordinates at the endpoints of $M$), there is a canonical algebra isomorphism
\[
\hat \cala(M)^{(\cali)}\to\hat \cala(M)^{(\calj)}.
\]
\end{corollary}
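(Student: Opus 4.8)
The plan is to remove, from Lemma~\ref{lem: two covers that agree at dM}, the hypothesis that the two $c$-covers induce the same local coordinates at the points of $\partial M$. The key claim I would isolate is that $\hat\cala(M)^{(\cali)}$ depends, canonically, only on the underlying cover of $M$ by intervals, and not on the germs of local coordinates chosen at $\partial M$. Granting this claim the corollary is immediate: given arbitrary $c$-covers $\cali$ and $\calj$, let $\calj'$ be the $c$-cover with the same underlying cover as $\calj$ but carrying the local coordinates of $\cali$ at the points of $\partial M$. The claim supplies a canonical isomorphism $\hat\cala(M)^{(\calj)}\cong\hat\cala(M)^{(\calj')}$, and, since $\cali$ and $\calj'$ now agree at $\partial M$, Lemma~\ref{lem: two covers that agree at dM} supplies a canonical isomorphism $\hat\cala(M)^{(\cali)}\cong\hat\cala(M)^{(\calj')}$; composing the two gives the desired canonical isomorphism.

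To prove the claim, fix the underlying cover and two choices of boundary coordinates, yielding $c$-covers $\cali$ and $\cali'$. Exactly as in Lemma~\ref{lem: two covers that agree at dM}, both $\hat\cala(M)^{(\cali)}$ and $\hat\cala(M)^{(\cali')}$ are the von Neumann algebra generated by the image of $A:=\colim_{I\subset M}\cala(I)$ acting on $H_\Sigma^{(\cali)}$, respectively $H_\Sigma^{(\cali')}$, and any unitary intertwining these two $A$-representations conjugates one generated algebra onto the other by a map independent of the chosen unitary. It therefore suffices to produce an isomorphism $H_\Sigma^{(\cali)}\cong H_\Sigma^{(\cali')}$ of representations of $A$.

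I would build this isomorphism one boundary point at a time. Changing the germ of local coordinate at a single point $q\in\partial M$ alters only the smooth structure of the one circle $S_j=\partial(I_j\times[0,1])$ with $q\in\partial I_j$, and only near its two corners lying over $q$; every other circle $S_i$, and all of the junction algebras $A_i=\cala(\{p_i\}\times[0,1])$ along which the fusion \eqref{eq: def H-sigma} is formed, are untouched, because $q$ is not one of the junction points $p_i$. The two smooth structures on $S_j$ agree along the bottom edge $I_j\times\{0\}$, where both restrict to the smooth structure of $M$, and along the junction edges over the interior endpoint(s) of $I_j$; they differ only in the rounding of the corners over $q$, each of which joins the bottom (or top) edge to the free side edge $\{q\}\times[0,1]$. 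I would choose a diffeomorphism of $S_j$ carrying one smooth structure to the other that is the identity on the bottom edge and on all junction edges, absorbing the entire discrepancy into the free edges near $q$ (the side edge over $q$ and the top edge $I_j\times\{1\}$). By covariance of the vacuum sector under diffeomorphisms (see \cite[\thmVaccumSector]{BDH(nets)}), such a diffeomorphism induces a unitary $H_0(S_j)\to H_0(S_j)$ compatible with the sector structure via its restrictions to subintervals; since it restricts to the identity on the bottom and junction edges, this unitary intertwines both the relevant $\cala(I\times\{0\})$-actions and the actions of the adjacent junction algebras $A_i$. Fusing it with the identity maps on the remaining sectors $H_0(S_i)$ then yields the required $A$-equivariant unitary $H_\Sigma^{(\cali)}\to H_\Sigma^{(\cali')}$.

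The main obstacle is the geometric input of the previous paragraph: that two roundings of the corners over $q$ are related by a diffeomorphism of $S_j$ fixing the bottom and junction edges. This is exactly where the location of $q$ on $\partial M$ is used, since the side edge over $q$ is then a free boundary edge of $\Sigma$ that plays no role in the fusion, leaving room to reconcile the two smooth structures without perturbing any data defining either the $A$-action or the fusion product. Conceptually, the difference from Lemma~\ref{lem: H_Sigma^(cali)= H_Sigma^(calj)} is that there one produces an isomorphism of $\partial\Sigma$-sectors, which presupposes a fixed smooth structure on $\partial\Sigma$ and hence fixed boundary coordinates, whereas for the algebra $\hat\cala(M)$ it is enough to have an isomorphism of the weaker structure of $A$-representations, which is insensitive to the smooth structure along the free edges.
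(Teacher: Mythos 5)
Your proof is correct, but it takes a genuinely different route from the paper's. The paper first splits $M$ into connected components: for a circle component $\partial M=\emptyset$, so the coordinate hypothesis of Lemma~\ref{lem: two covers that agree at dM} is vacuous and that lemma applies verbatim; for an interval component it coarsens each $c$-cover to the trivial cover $\cali_0=\{M\}$ (retaining the respective boundary coordinates) and simply observes that $\hat\cala(M)^{(\cali)}\cong\hat\cala(M)^{(\cali_0)}\cong\cala(M)\cong\hat\cala(M)^{(\calj_0)}\cong\hat\cala(M)^{(\calj)}$ --- since the algebra $\cala(M)$ is manifestly independent of any choice of coordinates, no comparison of smooth structures is ever made. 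You instead prove the stronger intermediate statement that the representation $H_\Sigma^{(\cali)}$ of $A=\colim_{I\subset M}\cala(I)$ is insensitive to the germs at $\partial M$, via an explicit corner-rounding homeomorphism of the one affected circle $S_j$ that is the identity on the bottom and junction edges, together with diffeomorphism covariance of the vacuum sector. This does work: the transition between the two charts at a corner over $q$ restricts to a regular (nonvanishing one-sided derivative) reparametrization of each adjacent edge, so a Whitney-type jet-matching interpolation on the free side edge yields the required structure-$a$-to-structure-$b$ diffeomorphism --- a small point you gloss over but which is standard --- and functoriality of Connes fusion for intertwiners then gives the $A$-equivariant unitary. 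Moreover, since every isomorphism in sight (yours, the paper's, and Lemma~\ref{lem: two covers that agree at dM}'s) is the unique arrow under $A$, your composite is independent of the auxiliary $c$-cover $\calj'$ and agrees with the paper's map, with coherence for free. What the paper's route buys is brevity and complete avoidance of differential topology; what yours buys is the sharper fact that even the Hilbert space $H_\Sigma$, as an $A$-representation, does not see the boundary coordinates, which extends Lemma~\ref{lem: H_Sigma^(cali)= H_Sigma^(calj)} precisely in the direction the corollary needs.
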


\begin{proof}
Write $M$ as a disjoint union of connected component $M=M_1\sqcup\ldots\sqcup M_k$.
We then have canonical isomorphisms 
$\hat\cala(M)^{(\cali)}\cong \hat\cala(M_1)^{(\cali_1)}\,\bar\otimes\,\ldots\,\bar\otimes\,\hat\cala(M_k)^{(\cali_k)}$
and
$\hat\cala(M)^{(\calj)}\cong \hat\cala(M_1)^{(\calj_1)}\,\bar\otimes\,\ldots\,\bar\otimes\,\hat\cala(M_k)^{(\calj_k)}$,
where $\cali_i$ and $\calj_i$ are the restrictions of the $c$-covers $\cali$ and $\calj$ to the components $M_i$ of $M$.
It is therefore enough to treat the case when $M$ is connected.
If $M$ is a circle, then Lemma \ref{lem: two covers that agree at dM} yields the result.
Let us therefore assume that $M$ is an interval.

Let $\cali_0=\{M\}$ be the $c$-cover consisting of just $M$, along with the local coordinates at $\partial M$ induced by $\cali$.
Define $\calj_0$ similarly.
Then we have canonical isomorphisms $\hat\cala(M)^{(\cali)}\cong \hat\cala(M)^{(\cali_0)}\cong \cala(M) \cong \hat\cala(M)^{(\calj_0)}\cong \hat\cala(M)^{(\calj)}$.
\end{proof}

By the above results, we can see that $\hat \cala$ defines a functor $\mathsf{1MAN}\to \VN$,
that its restriction to the subcategory $\INT$ recovers $\cala$, and that
it is symmetric monoidal in the sense that $\hat \cala(M\sqcup N) \cong \hat \cala(M)\,\bar\otimes\,\hat\cala(N)$.
This completes the proof of Theorem~\ref{thm: extend A to hatA}.

\begin{remark} 
Given a natural transformation $\tau:\cala\to \calb$, it is natural to ask whether it extends to a natural transformation between functors $\mathsf{1MAN}\to \VN$.
We only know how to prove that it does if the conformal net $\cala$ has finite index (Definition \ref{def:finite-nets}), using Proposition \ref{prop: colim over S}.
\end{remark}

From now on, since there is no risk of confusion, we will drop the hat notation and denote the extension 
of a conformal net $\cala$ to $\mathsf{1MAN}$ simply by $\cala$:
\[
\tikzmath{ \matrix [matrix of math nodes,column sep=1.2cm,row sep=5mm, inner sep=5]
{ |(a)| \INT \pgfmatrixnextcell |(b)| \VN\\  |(c)| \mathsf{1MAN}\\ }; 
\draw[->] (a) --node[above, pos=.48, scale=.9]{$\cala$} (b);
\draw[->] ($(a.south) + (.1,-.05)$) arc (0:180:.05) -- (c.north);
\draw[->,dashed] (c) --node[below,xshift=1, scale=.9]{$\cala$} (b);}
\]

\subsection{The algebra associated to a circle}
Given a circle $S$, note that an $S$-sector of $\cala$ is 
\emph{not} the same thing as an $\cala(S)$-module.
Every $\cala(S)$-module admits the structure of an $S$-sector of $\cala$, but the converse does not always hold.
However, we will see later that for conformal nets with finite index those two notions do agree.
The same will also hold for any closed 1-manifold $M$.

\begin{definition}\label{def:finite-nets}
Let $S$ be a circle, and let $I_1, I_2, I_3, I_4\subset S$ be intervals that are arranged so that each $I_i\cap I_{i+1}$ (cyclic numbering) is a single point.
A conformal net \emph{has finite index} if the bimodule
\[
{}_{\cala(I_1\cup I_3)}H_0(S)_{\cala(I_2\cup I_4)^\op}
\]
is dualizable.
\end{definition}

\noindent For the definition of dualizability of bimodules 
see~\cite[\secDualizability]{BDH(Dualizability+Index-of-subfactors)}.

\begin{lemma}\cite[\lemHS]{BDH(nets)} \label{lem: H_0(-S)}
Let $S$, $I_1$, $I_2,\ldots,I_4$ be as above, and
let $\bar S$, $\bar I_1,\ldots,\bar I_4$ be the same manifolds with the reverse orientation.
Let $\cala$ be a conformal net with finite index.
Then the dual of the bimodule ${}_{\cala(I_1\cup I_3)}H_0(S)_{\cala(I_2\cup I_4)^\op}$
is
\[
{}_{\cala(\bar I_2\cup \bar I_4)}H_0(\bar S)_{\cala(\bar I_1\cup \bar I_3)^\op}.
\]
Here, we have used the canonical identifications
$\cala(\bar I_1\cup \bar I_3)\cong \cala(I_1\cup I_3)^\op$ and
$\cala(\bar I_2\cup \bar I_4)\cong \cala(I_2\cup I_4)^\op$.
\end{lemma}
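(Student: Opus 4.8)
The plan is to reduce the lemma to the general description of duals of dualizable bimodules, and then to identify the relevant conjugate Hilbert space with the vacuum sector on $\bar S$. Write $A:=\cala(I_1\cup I_3)$ and $B:=\cala(I_2\cup I_4)$, so that $H_0(S)$ is the $A$-$B^\op$-bimodule ${}_A H_0(S)_{B^\op}$. The finite index hypothesis says precisely that this bimodule is dualizable, and by the theory of duals in the bicategory of von Neumann algebras and bimodules \cite[\secDualizability]{BDH(Dualizability+Index-of-subfactors)} its dual is the \emph{conjugate bimodule} ${}_{B^\op}\overline{H_0(S)}_A$, the $B^\op$- and $A$-actions on the conjugate Hilbert space being the transposes of the original actions. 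Since $\cala$ turns orientation-reversing identity embeddings into anti-isomorphisms, we have the canonical identifications $\cala(\bar I_2\cup\bar I_4)\cong B^\op$ and $\cala(\bar I_1\cup\bar I_3)^\op\cong A$ recorded in the statement. It therefore suffices to produce an isomorphism of bimodules ${}_{B^\op}\overline{H_0(S)}_A\cong {}_{\cala(\bar I_2\cup\bar I_4)}H_0(\bar S)_{\cala(\bar I_1\cup\bar I_3)^\op}$.

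I would build this isomorphism from the $L^2$-model of the vacuum sector. Fix an interval $I\subset S$ and an orientation-reversing involution $j\in\Diff_-(S)$ fixing $\partial I$, so that $H_0(S)=L^2(\cala(I))$. The same $I$ and $j$ present $H_0(\bar S)=L^2(\cala(\bar I))=L^2(\cala(I)^\op)$, since $j$ is orientation reversing for $\bar S$ as well. As $\cala(\bar I)\cong\cala(I)^\op$, the standard form of $\cala(I)$ provides a canonical unitary $\overline{L^2(\cala(I))}\to L^2(\cala(I)^\op)$, induced by the modular conjugation $J$; equivalently, there is a canonical antiunitary $u\colon H_0(S)\to H_0(\bar S)$. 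The former map is the underlying map of the bimodule isomorphism we seek, and $u$ is exactly the antiunitary induced on vacuum sectors by the orientation-reversing identity map $S\to\bar S$.

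The heart of the matter is to verify that $u$ intertwines the bimodule structures, i.e.\ that for every subinterval $K\subset S$ the map $u$ intertwines the $\cala(K)$-action on $H_0(S)$ with the $\cala(\bar K)$-action on $H_0(\bar S)$, compatibly with the canonical identification $\cala(\bar K)\cong\cala(K)^\op$. Granting this, the left $\cala(\bar I_2\cup\bar I_4)$-action (obtained from $K\subset I_2\cup I_4$) and the right $\cala(\bar I_1\cup\bar I_3)^\op$-action (from $K\subset I_1\cup I_3$) on $H_0(\bar S)$ match the transposed $B^\op$- and $A$-actions on $\overline{H_0(S)}$, which is the required bimodule isomorphism. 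Since an $S$-sector is uniquely determined by the actions of $\cala(K)$ for $K$ running over subintervals, and these extend uniquely from the two localizing intervals $I$ and $j(I)$, it is enough to check the intertwining relation for $K=I$ and $K=j(I)$. There it is explicit: modular conjugation interchanges the left and right actions of $\cala(I)$ on $L^2(\cala(I))$, and under the orientation reversal $S\to\bar S$ this interchange is precisely the passage between the localizing data for $H_0(S)$ and for $H_0(\bar S)$.

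I expect the main obstacle to be this last verification, and specifically the bookkeeping that connects the single-interval $L^2$-presentation to the four-interval bimodule structure: the algebras $\cala(I_1\cup I_3)$ and $\cala(I_2\cup I_4)$ are generated by the $\cala(K)$ with $K\subset S$ but are not contained in either localizing interval, so one must pass through the uniqueness of the $S$-sector extension to move from the explicit statement about $\cala(I)$ and $\cala(j(I))$ to the actions of $A$ and $B$. Keeping the orientation conventions consistent between $S$ and $\bar S$ throughout this reduction is the delicate point.
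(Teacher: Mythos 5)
Your proof is correct, and it follows essentially the same route as the actual proof of this lemma, which this paper does not reprove but imports from the companion paper (cited as \cite[\lemHS]{BDH(nets)}): there too the dual of the dualizable bimodule is identified with the contragredient bimodule $\overline{H_0(S)}$, which is then recognized as $H_0(\bar S)$ via the modular conjugation on $L^2(\cala(I))$, with the intertwining checked on the two localizing intervals $I$ and $j(I)$ and propagated to all intervals (hence to $\cala(I_1\cup I_3)$ and $\cala(I_2\cup I_4)$, which are generated by interval algebras) by strong additivity and the uniqueness of the extension to an $S$-sector.
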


Let us call a tensor category a \emph{fusion category} if its underlying linear category is equivalent to $\mathsf{Hilb}^n$
(the category whose objects are $n$-tuples of Hilbert spaces) for some finite $n$, and if all its irreducible objects are dualizable.
We have seen in~\cite[\thmKLallirreduciblesectorsarefinite]{BDH(nets)} that if $\cala$ is a conformal net with finite index, then $\Rep_S(\cala)$ is a fusion category.
Let $\Delta$ be the set of isomorphism classes of simple sectors,
and let $\Delta\to \Delta:\lambda\mapsto \bar \lambda$ be the involution that corresponds to taking the dual sector.
Note that the finite set $\Delta$ is independent, up to canonical isomorphism, of the choice of circle $S$.

\begin{lemma}\cite[\lemdualofHlambda]{BDH(nets)}\label{lem: dual of H_lambda}
Let $S$ be a circle, decomposed into two subintervals $I_1$ and~$I_2$.
Then the dual of the bimodule ${}_{\cala(I_1)}H_\lambda(S)_{\cala(I_2)^\op}$ is
\[
{}_{\cala(\bar I_2)}H_{\bar \lambda}(\bar S)_{\cala(\bar I_1)^\op}
\]
under the canonical identifications 
$\cala(\bar I_1)\cong \cala(I_1)^\op$ and $\cala(\bar I_2)\cong \cala(I_2)^\op$.
\end{lemma}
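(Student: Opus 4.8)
The plan is to verify that the proposed bimodule is \emph{a} dual and then to invoke uniqueness of duals. Since $\cala$ has finite index, \cite[\thmKLallirreduciblesectorsarefinite]{BDH(nets)} already tells us that $\Rep_S(\cala)$ is a fusion category, so the bimodule ${}_{\cala(I_1)}H_\lambda(S)_{\cala(I_2)^\op}$ is known to be dualizable; only the concrete \emph{identification} of its dual is at issue. Write $S=I_1\cup I_2$ with $I_1\cap I_2=\{a,b\}$, and observe that passing to $\bar S$ leaves the underlying point set fixed, so $\bar S=\bar I_1\cup \bar I_2$ with the same $a,b$. Under the canonical antilinear identifications $\cala(\bar I_1)\cong\cala(I_1)^\op$ and $\cala(\bar I_2)\cong\cala(I_2)^\op$, the object ${}_{\cala(\bar I_2)}H_{\bar\lambda}(\bar S)_{\cala(\bar I_1)^\op}$ carries a left $\cala(I_2)^\op$-action and a right $\cala(I_1)$-action, which are precisely the handedness and algebras required of a dual of ${}_{\cala(I_1)}H_\lambda(S)_{\cala(I_2)^\op}$. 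Thus the statement is at least type-correct, and it suffices to produce evaluation and coevaluation maps satisfying the zig-zag identities.

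The evaluation and coevaluation come from the gluing formalism of \cite[Section 1.C]{BDH(nets)}. Fusing the $S$-sector $H_\lambda$ with the $\bar S$-sector $H_{\bar\lambda}$ over $\cala(I_2)$ is, by construction, the theta-graph fusion $\boxtimes_{I_2}$ of the pair $\big(H_\lambda(S),H_{\bar\lambda}(\bar S)\big)$; it yields a sector on the third circle $S':=I_1\cup_{\{a,b\}}\bar I_1$, which decomposes as $\bigoplus_\mu N^\mu_{\lambda\bar\lambda}\,H_\mu(S')$. Because $\bar\lambda$ is by definition the conjugate (categorical dual) of $\lambda$, the vacuum $H_0(S')$ occurs in this decomposition with multiplicity one, so there is a canonical (up to scalar) coevaluation given by the inclusion of that summand: under the identification $H_0(S')\cong L^2(\cala(I_1))$ (the vacuum of $S'$ with distinguished half $I_1$) it is a bimodule map
\[
L^2(\cala(I_1))\;\cong\;H_0(S')\;\hookrightarrow\;H_\lambda(S)\boxtimes_{\cala(I_2)}H_{\bar\lambda}(\bar S).
\]
The evaluation is the corresponding projection onto the vacuum summand of the mirror fusion $H_{\bar\lambda}(\bar S)\boxtimes_{\cala(I_1)}H_\lambda(S)\to L^2(\cala(I_2))$. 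For $\lambda=0$ these maps are governed by the vacuum fusion isomorphism \eqref{eq: [paper I, Corollary 1.33]}, and the dualizability needed to make the fusion category structure available for general $\lambda$ is exactly the finite-index input of Definition~\ref{def:finite-nets} together with the computation of the vacuum dual in Lemma~\ref{lem: H_0(-S)}. The zig-zag identities then reduce to the duality between $\lambda$ and $\bar\lambda$ in the fusion category, combined with the associativity and unit coherence of the fusion isomorphisms \eqref{eq: [paper I, Corollary 1.33]}.

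The main obstacle is twofold. First, one must reconcile the geometric operation of orientation reversal $S\mapsto\bar S$ with the purely algebraic conjugation $\lambda\mapsto\bar\lambda$: the dual of a bimodule over von Neumann algebras is built from the complex-conjugate Hilbert space with conjugated actions, and one has to check that placing the conjugate label $\bar\lambda$ on the reversed circle $\bar S$ implements exactly this complex conjugation, rather than conjugation composed with some nontrivial twist. This is precisely where the antiunitary witnessing $\cala(\bar I)\cong\cala(I)^\op$ must be tracked, since any stray phase or unaccounted modular operator would spoil the identification of the two actions. Second, verifying the zig-zag identities \emph{on the nose}, and not merely up to the scalar ambiguity inherent to simple objects, requires choosing the two gluing isomorphisms (along $I_1$ and along $I_2$) compatibly; pinning down this compatibility, i.e.\ the coherence of orientation reversal with the fusion isomorphisms \eqref{eq: [paper I, Corollary 1.33]}, is the technical heart of the argument.
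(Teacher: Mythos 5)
First, a point of fact: this paper never proves the statement you were asked about --- it is imported wholesale from the companion paper as \cite[\lemdualofHlambda]{BDH(nets)} --- so your proposal must stand entirely on its own, and as written it does not: it is circular at its crucial step. Your coevaluation $L^2\cala(I_1)\cong H_0(S')\hookrightarrow H_\lambda(S)\boxtimes_{\cala(I_2)}H_{\bar\lambda}(\bar S)$ rests on the claim that the vacuum occurs in this fusion with multiplicity one ``because $\bar\lambda$ is by definition the conjugate (categorical dual) of $\lambda$.'' But that definition lives inside the fusion category of sectors on a single \emph{oriented} circle, where the monoidal product fuses two $S$-sectors along a half-circle. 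In your fusion the second factor is a $\bar S$-sector, glued through the identification $\cala(\bar I_2)\cong\cala(I_2)^\op$; to know that this mixed-orientation fusion computes $\lambda\boxtimes\bar\lambda$ --- and hence contains $H_0(S')$ exactly once --- you must already know that transporting the conjugate label to the orientation-reversed circle implements the duality, which is essentially the content of the lemma itself. Your closing paragraph concedes exactly this: the reconciliation of $S\mapsto\bar S$ with $\lambda\mapsto\bar\lambda$, and the on-the-nose zig-zag identities, are named as ``the technical heart'' and then left unargued. What you have is a correct handedness type-check plus a plan, not a proof.

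The non-circular ordering runs the other way, via the fact you mention in passing and then abandon. Finite index (Definition~\ref{def:finite-nets}, via \cite[\thmKLallirreduciblesectorsarefinite]{BDH(nets)}) gives dualizability of ${}_{\cala(I_1)}H_\lambda(S)_{\cala(I_2)^\op}$, and for a dualizable bimodule between von Neumann algebras the dual is canonically the conjugate Hilbert space $\overline{H_\lambda(S)}$ with conjugated actions. Conjugation converts the action of $\cala(J)$, $J\subset S$, into an action of $\cala(J)^\op\cong\cala(\bar J)$, so $\overline{H_\lambda(S)}$ is naturally an irreducible $\bar S$-sector, of some class $\mu\in\Delta$; one then identifies $\mu=\bar\lambda$ by Frobenius reciprocity inside the fixed-circle fusion category together with uniqueness of duals --- the multiplicity-one statement becoming a \emph{consequence} rather than an input, and no fresh zig-zag verification being needed. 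One further slip worth correcting, precisely because you rightly stress tracking the antilinear data: in this paper's conventions the identification $\cala(\bar I)\cong\cala(I)^\op$ is $\IC$-\emph{linear} (orientation-reversing embeddings go to $\IC$-linear antihomomorphisms); the antilinearity resides in the passage to the conjugate Hilbert space, not in the algebra identification, and misplacing it is exactly the kind of error that would produce the ``stray phase or unaccounted modular operator'' you warn against.
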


For each $\lambda\in\Delta$ and circle $S$, pick a representative $H_\lambda(S)\in \Rep_S(\cala)$ of the isomorphism class.
Let $\Sigma:=S\times[0,1]$, and let $H_\Sigma$ be as in \eqref{eq: def H-sigma}.
Writing $\partial\Sigma$ as $S\sqcup \bar S$, then (provided the net $\cala$ is finite index) there is a 
non-canonical unitary isomorphism of 
$S\sqcup \bar S$-sectors~\cite[\thmKLM]{BDH(nets)}
\begin{equation} \label{eq:   KLM  }
H_\Sigma \,\cong\,\, \bigoplus_{\lambda\in\Delta} H_\lambda\big(S\big) \otimes H_{\bar \lambda}\big(\bar S\big).
\end{equation}
In fact, this isomorphism can be chosen canonically, as established in Theorem~\ref{thm: H_Sigma == L^2 cala(S)} in the appendix.  In light of the next theorem, that isomorphism can be reexpressed as
\begin{equation} \label{eq:HSigmaL^2}
H_\Sigma \,\cong\, L^2 \cala(S).
\end{equation}

\begin{theorem} \label{thm:compute-bfB(net)}
Let $\cala$ be a finite irreducible conformal net, and let $S$ be a circle.
Then there is a canonical isomorphism
\begin{equation}\label{eq: hatcalaS}
\cala(S) \,\cong\, \bigoplus_{\lambda\in\Delta} \bfB(H_\lambda(S)).
\end{equation}
\end{theorem}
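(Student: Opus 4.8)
The plan is to read off the decomposition of $\cala(S)$ directly from the sector $H_\Sigma$ on which, by construction, $\cala(S)$ tautologically acts. Recall from \eqref{eq: hata(M)} that for $M=S$ the algebra $\cala(S)$ is \emph{defined} as the von Neumann algebra generated inside $\bfB(H_\Sigma)$ by the interval algebras $\cala(I)$, $I\subset S$, acting through the $S$-sector structure of $H_\Sigma$; hence there is nothing to say about faithfulness or normality, and it suffices to compute the von Neumann algebra generated. By the canonical isomorphism \eqref{eq:   KLM  } of $S\sqcup\bar S$-sectors (made canonical in Theorem~\ref{thm: H_Sigma == L^2 cala(S)}), I may replace $H_\Sigma$ by $\bigoplus_{\lambda\in\Delta}H_\lambda(S)\otimes H_{\bar\lambda}(\bar S)$. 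Under this identification each $\cala(I)$ with $I\subset S$ acts as $\bigoplus_\lambda\pi_\lambda(\cala(I))\otimes 1$, where $\pi_\lambda$ is the action of $\cala(S)=\bigvee_I\cala(I)$ on the simple sector $H_\lambda(S)$, and the action on the $H_{\bar\lambda}(\bar S)$ tensor factors is trivial. So the task is to identify the von Neumann algebra generated by $\{\bigoplus_\lambda\pi_\lambda(a)\otimes 1: a\in\cala(S)\}$.

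The computation rests on Schur's lemma in the fusion category $\Rep_S(\cala)$. A bounded operator on a single sector commutes with $\pi_\lambda(\cala(S))=\pi_\lambda(\bigvee_I\cala(I))$ exactly when it commutes with every $\cala(I)$-action, i.e.\ exactly when it is an endomorphism of $S$-sectors; thus $\pi_\lambda(\cala(S))'=\hom_{\Rep_S(\cala)}(H_\lambda(S),H_\lambda(S))$, and for $\lambda\neq\mu$ the space of intertwiners of $\pi_\lambda$ and $\pi_\mu$ is $\hom_{\Rep_S(\cala)}(H_\lambda(S),H_\mu(S))$. Since $\cala$ has finite index, $\Rep_S(\cala)$ is a fusion category by \cite[\thmKLallirreduciblesectorsarefinite]{BDH(nets)}, so its underlying linear category is equivalent to $\mathsf{Hilb}^{|\Delta|}$ and the distinct simple objects obey $\hom_{\Rep_S(\cala)}(H_\lambda(S),H_\mu(S))=\delta_{\lambda\mu}\,\IC$. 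Consequently $\pi_\lambda(\cala(S))'=\IC$, whence $\pi_\lambda(\cala(S))''=\bfB(H_\lambda(S))$, and the $\pi_\lambda$ are pairwise disjoint.

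From here the identification is bookkeeping with tensor-product commutants. First I would observe that any operator on $\bigoplus_\lambda H_\lambda(S)\otimes H_{\bar\lambda}(\bar S)$ commuting with $\bigoplus_\lambda\pi_\lambda\otimes 1$ must be block-diagonal in $\lambda$: an off-diagonal block would be a nonzero intertwiner of $\pi_\lambda\otimes 1$ and $\pi_\mu\otimes 1$, which are disjoint because amplifications of disjoint representations are disjoint. On the $\lambda$-th block, $\pi_\lambda(\cala(S))\otimes 1$ has the same weak closure as $\bfB(H_\lambda(S))\otimes 1$, whose commutant is $1\otimes\bfB(H_{\bar\lambda}(\bar S))$. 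Hence the commutant of $\cala(S)$ in $\bfB(H_\Sigma)$ is $\bigoplus_\lambda 1\otimes\bfB(H_{\bar\lambda}(\bar S))$. Since the central block projections lie in this algebra, a second application of the commutant computation gives
\[
\cala(S)=\cala(S)''=\bigoplus_{\lambda\in\Delta}\bfB(H_\lambda(S))\otimes 1\;\cong\;\bigoplus_{\lambda\in\Delta}\bfB(H_\lambda(S)),
\]
which is the desired isomorphism.

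The conceptual content is therefore just Schur's lemma combined with the canonicity of \eqref{eq:   KLM  }; the rest is routine. The step I expect to require the most care is \emph{canonicity} of the resulting isomorphism. A priori the map depends on the chosen representatives $H_\lambda(S)$, but the $\lambda$-th component of $\cala(S)\to\bigoplus_\lambda\bfB(H_\lambda(S))$ is simply the restriction of an operator to the $\lambda$-isotypic summand of the $\cala(S)$-module $H_\Sigma$; since the isotypic decomposition of a representation is canonical and \eqref{eq:   KLM  } is canonical by Theorem~\ref{thm: H_Sigma == L^2 cala(S)}, this map is canonical. I would close by recording the compatibility of this identification with the analogous one for $\bar S$, so that the commutant relation $\cala(\bar S)=\cala(S)'$ on $H_\Sigma$ reproduces the reexpression $H_\Sigma\cong L^2\cala(S)$ of \eqref{eq:HSigmaL^2}.
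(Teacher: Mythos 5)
Your computation is, in substance, the paper's own proof: both arguments identify $H_\Sigma$ with $\bigoplus_{\lambda\in\Delta} H_\lambda(S)\otimes H_{\bar\lambda}(\bar S)$ via \eqref{eq:   KLM  }, compute the commutant of the interval algebras $\cala(I)$, $I\subset S$, using semisimplicity of $\Rep_S(\cala)$ (Schur's lemma, so that the commutant is $\bigoplus_\lambda \IC\otimes\bfB(H_{\bar\lambda}(\bar S))$), and then conclude by taking the double commutant. Your intermediate bookkeeping — block-diagonality of the commutant from disjointness of the $\pi_\lambda$, and $(\bfB(H_\lambda(S))\otimes 1)'=1\otimes\bfB(H_{\bar\lambda}(\bar S))$ on each block — is correct.

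There is, however, one genuine flaw, located in your treatment of canonicity. You invoke Theorem~\ref{thm: H_Sigma == L^2 cala(S)} (twice) to assert that \eqref{eq:   KLM  } is canonical. This is circular: the proof of Theorem~\ref{thm: H_Sigma == L^2 cala(S)} in the appendix begins by citing Theorem~\ref{thm:compute-bfB(net)} — the canonical isomorphism \eqref{eq: hatcalaS} is what produces $L^2\cala(S)\cong\bigoplus_\lambda H_\lambda(S)\otimes\overline{H_\lambda(S)}$ and the normalizations $\varpi_\lambda$ there — so it cannot be used as an input to the present theorem. Fortunately, canonicity of \eqref{eq:   KLM  } is not actually needed anywhere in your argument. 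Any two isomorphisms of $S\sqcup\bar S$-sectors $H_\Sigma\cong\bigoplus_\lambda H_\lambda(S)\otimes H_{\bar\lambda}(\bar S)$ differ by a sector automorphism of the right-hand side; since the summands are pairwise non-isomorphic irreducibles, Schur's lemma forces this automorphism to be a phase on each summand, and conjugation by phases is the identity. Hence the induced identification of subalgebras of $\bfB(H_\Sigma)$, i.e.\ the isomorphism \eqref{eq: hatcalaS}, is independent of the chosen unitary — which is exactly how the paper handles the point. Alternatively, your own observation that the $\lambda$-component of the map is restriction to the $\lambda$-isotypic summand of the module $H_\Sigma$ already yields canonicity with no reference to the appendix. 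With that citation deleted or replaced by the phase argument, your proof is correct.
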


\begin{proof}
The von Neumann algebra $\cala(S)$ is the completion in $\bfB(H_\Sigma)$
of the algebra generated by all the $\cala(I)$ for $I\subset S$.
By \eqref{eq:   KLM  }, an operator commutes with all those algebras if and only if it is contained in 
\[
\bigoplus_{\lambda\in\Delta}  \IC\otimes \bfB(H_\lambda(\bar S))\,\subset\, \bfB(H_{\Sigma}).
\]
Taking the commutant, it follows that 
\begin{equation}\label{eq: canonical? -- 2}
\cala(S)\,=\bigvee_{I\subset S}\cala(I) \,=\, \bigoplus_{\lambda\in\Delta} \bfB(H_\lambda(S)) \otimes \IC\,\subset\, \bfB(H_{\Sigma}).
\end{equation}

The isomorphism \eqref{eq:   KLM  } is a priori non-canonical: it is only well defined up to a phase factor on each direct summand.
Those phase factors, however, do not affect the isomorphism \eqref{eq: hatcalaS}, and the latter is therefore canonical.
\end{proof}

\begin{corollary}\label{cor: S-sector == A(S)-module}
If $\cala$ is a conformal net with finite index, then the forgetful functor from $\cala(S)$-modules to $S$-sectors of $\cala$
is an equivalence of categories.
\end{corollary}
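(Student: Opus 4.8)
The plan is to combine the explicit computation of $\cala(S)$ in Theorem~\ref{thm:compute-bfB(net)} with the semisimplicity of $\Rep_S(\cala)$, and then to verify that the forgetful functor $F\colon \{\cala(S)\text{-modules}\}\to \Rep_S(\cala)$ (which is well defined, since restricting a normal $\cala(S)$-action to each subalgebra $\cala(I)\subset\cala(S)$, $I\subset S$, produces compatible actions, hence an $S$-sector) is both fully faithful and essentially surjective. Since $\cala(S)\cong \bigoplus_{\lambda\in\Delta}\bfB(H_\lambda(S))$ is a finite direct sum of type~$I$ factors, its category of normal modules is equivalent to $\mathsf{Hilb}^{|\Delta|}$: every module is of the form $\bigoplus_\lambda H_\lambda(S)\otimes K_\lambda$ for multiplicity Hilbert spaces $K_\lambda$, and the simple modules are precisely the $H_\lambda(S)$. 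By \cite[\thmKLallirreduciblesectorsarefinite]{BDH(nets)}, $\Rep_S(\cala)$ is likewise equivalent to $\mathsf{Hilb}^{|\Delta|}$ with simple objects the $H_\lambda(S)$, so both categories have ``the same'' underlying linear category; what remains is to see that $F$ implements this.

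For full faithfulness I would use that, by construction, $\cala(S)=\bigvee_{I\subset S}\cala(I)$. A bounded operator between two $S$-sectors intertwines the actions of the abstract algebra $\cala(S)$ if and only if it intertwines the $\cala(I)$ for every $I\subset S$: indeed, intertwining a $*$-closed generating family of a von Neumann algebra is equivalent, by normality of the $S$-sector actions and the bicommutant theorem, to intertwining the generated algebra (apply this to the family $\{\cala(I)\}$ acting on the direct sum of the source and target sectors). Hence $\hom_{\cala(S)}(H,H')=\hom_{\Rep_S(\cala)}(FH,FH')$, so $F$ is fully faithful.

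For essential surjectivity I would take an arbitrary $S$-sector $H$ and use semisimplicity to write $H\cong \bigoplus_\lambda H_\lambda(S)\otimes K_\lambda$ as $S$-sectors. It then suffices to realize the right-hand side as $F$ of the evident $\cala(S)$-module $\bigoplus_\lambda H_\lambda(S)\otimes K_\lambda$, with $\cala(S)=\bigoplus_\mu\bfB(H_\mu(S))$ acting factorwise. The one point requiring care --- and the crux of the argument --- is to check that the $S$-sector structure that $F$ produces on this module agrees with the given one; equivalently, that under the identification \eqref{eq: canonical? -- 2} of $\cala(S)$ with $\bigoplus_\mu\bfB(H_\mu(S))\otimes\IC\subset\bfB(H_\Sigma)$, each $\cala(I)$ with $I\subset S$ acts on the summand $H_\lambda(S)$ by the original $S$-sector action. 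This is exactly the content of the fact that \eqref{eq:   KLM  } is an isomorphism of $S\sqcup\bar S$-sectors: on the summand $H_\lambda(S)\otimes H_{\bar\lambda}(\bar S)$ of $H_\Sigma$, each such $\cala(I)$ acts as (its $S$-sector action on $H_\lambda(S)$)$\,\otimes\,\id$, which is precisely how $\cala(S)=\bigoplus_\lambda\bfB(H_\lambda(S))\otimes\IC$ sees it. Thus $F$ of the above module is isomorphic to $H$, establishing essential surjectivity.

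A fully faithful, essentially surjective functor is an equivalence, which completes the proof. The essential content is carried entirely by Theorem~\ref{thm:compute-bfB(net)}; the only genuine (and mild) subtlety I anticipate is the bookkeeping of sector structures in the essential-surjectivity step, namely confirming that the intrinsic $S$-sector structure on each $H_\lambda(S)$ matches the one induced from the $\cala(S)$-module structure.
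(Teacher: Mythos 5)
Your proposal is correct and takes essentially the same route as the paper's own proof: full faithfulness from the fact that $\cala(S)=\bigvee_{I\subset S}\cala(I)$ is generated by the interval algebras, and essential surjectivity by decomposing an arbitrary $S$-sector via semisimplicity as $\bigoplus_{\lambda}H_\lambda(S)\otimes K_\lambda$ and invoking the identification $\cala(S)\cong\bigoplus_{\lambda}\bfB(H_\lambda(S))$ of Theorem~\ref{thm:compute-bfB(net)}. The only difference is one of detail: you make explicit the intertwiner argument (which the paper dismisses as ``clearly fully faithful'') and the matching of the induced sector structure with the intrinsic one on each $H_\lambda(S)$, a point the paper leaves implicit in the canonicity of the isomorphism $H_\Sigma\cong\bigoplus_\lambda H_\lambda(S)\otimes H_{\bar\lambda}(\bar S)$.
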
 

\begin{proof}
The functor is clearly fully faithful.
To show essential surjectivity, we need to argue
that every $S$-sector of $\cala$ induces an $\cala(S)$-module structure on the same Hilbert space.

Since $\cala$ has finite index, the category of $S$-sectors is 
semisimple~\cite[\thmKLallirreduciblesectorsarefinite]{BDH(nets)},
and any sector $H$ can be decomposed as
\begin{equation}\label{eq: sum decomposition of arbitrary sector*}
H \,\,\cong\,\, \bigoplus_{\lambda\in \Delta}\, H_\lambda(S)\otimes M_\lambda,
\end{equation}
where the multiplicity spaces $M_\lambda=\hom_{\Rep_S(\cala)}(H_\lambda(S),H)$ are Hilbert spaces, and the tensor product is the completed tensor product of Hilbert spaces.
It then follows from \eqref{eq: hatcalaS} that $H$ is a module for $\cala(S)$.
\end{proof}

The category of von Neumann algebras is cocomplete.
Given a diagram $\{A_i\}$ of von Neumann algebras, its colimit is computed as follows.
Let $A_0:=\mathrm{colim}_{\text{$*$-alg}}A_i$ be the colimit in the category of $*$-algebras,
and let $\IU$ be a Grothendieck universe that contains $A_0$.
Then $\mathrm{colim}_{\text{vN-alg}}A_i$ is the ultraweak closure of $A_0$ inside $\bfB(\bigoplus_{H\in \mathrm{Rep}(A_0)\cap \IU} H)$, where
the direct sum is taken over all representations of $A_0$ that are elements of~$\IU$.

\begin{proposition}\label{prop: colim over S}
Let $\cala$ be a conformal net with finite index and let $S$ be a circle.
Then
\[
\cala(S) = \underset{I\subset S}{\mathrm{colim}}\,\,\cala(I),
\]
where the colimit is taken in the category of von Neumann algebras, and is indexed by the poset of subintervals of $S$.
\end{proposition}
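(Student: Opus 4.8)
The plan is to obtain the comparison map from the universal property of the colimit and then to check that it is both surjective and injective, with finite index entering only in the injectivity step. Write $C:=\mathrm{colim}_{I\subset S}\cala(I)$ for the colimit in von Neumann algebras, with structure maps $\iota_I\colon\cala(I)\to C$.

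First I would build the map $\Phi\colon C\to\cala(S)$. The inclusions $\cala(I)\hookrightarrow\cala(S)$ for $I\subset S$ form a cocone under the diagram $\{\cala(I)\}_{I\subset S}$, so the universal property furnishes a canonical normal $*$-homomorphism $\Phi$ with $\Phi\circ\iota_I=(\cala(I)\hookrightarrow\cala(S))$. Since $\Phi$ is a normal homomorphism of von Neumann algebras, its image is ultraweakly closed; as that image contains every $\cala(I)$, it contains $\bigvee_{I\subset S}\cala(I)$, which by \eqref{eq: canonical? -- 2} is all of $\cala(S)$. Hence $\Phi$ is surjective, and it remains to prove injectivity.

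For injectivity I would exploit the faithful normal representation of $C$ built into the construction of the colimit: $C$ sits inside $\bfB(\hat H)$ for $\hat H=\bigoplus_{H}H$ and acts there faithfully and normally; call this inclusion $\pi$. Restricting $\pi$ along the structure maps $\iota_I$ equips $\hat H$ with compatible normal actions of all the $\cala(I)$, $I\subset S$, which is precisely the data of an object of $\Rep_S(\cala)$. This is the step I expect to require the most care: one must verify that the actions assembled from the source of $\Phi$ are genuinely normal and compatible, so that $\hat H$ is a bona fide $S$-sector. Now finite index enters through Corollary \ref{cor: S-sector == A(S)-module}: the forgetful functor from $\cala(S)$-modules to $S$-sectors is an equivalence, so the $S$-sector $\hat H$ underlies a $\cala(S)$-module, i.e.\ there is a normal representation $\rho\colon\cala(S)\to\bfB(\hat H)$ whose restriction to each $\cala(I)$ agrees with the action of $\cala(I)$ on $\hat H$. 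Then $\pi$ and $\rho\circ\Phi$ are normal representations of $C$ that agree on every $\iota_I(\cala(I))$; since these subalgebras generate $C$ (the $*$-algebra colimit $A_0$ is generated by the images of the $\cala(I)$, and $C$ is its ultraweak closure), and both maps are normal, they coincide. As $\pi$ is faithful and factors as $\rho\circ\Phi$, the map $\Phi$ is injective, hence an isomorphism.

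The main obstacle is the normality and compatibility bookkeeping that promotes the colimit's defining representation to an honest $S$-sector; once that is in place, the finite-index equivalence Corollary \ref{cor: S-sector == A(S)-module} does the real work. Without it, $C$ could a priori carry representations of the diagram $\{\cala(I)\}_{I\subset S}$ that do not descend along $\Phi$ to $\cala(S)$, so finite index is genuinely essential to the statement.
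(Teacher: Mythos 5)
Your proposal is correct and follows essentially the paper's own route: the paper likewise takes the colimit's defining Hilbert space $\bigoplus_{H\in\mathrm{Rep}(A_0)\cap\IU}H$, observes that it is an $S$-sector of $\cala$, and uses the finite-index decomposition \eqref{eq: sum decomposition of arbitrary sector*} to identify the algebra it generates with $\cala(S)$. The only difference is packaging: where you invoke Corollary~\ref{cor: S-sector == A(S)-module} to produce the action $\rho$ of $\cala(S)$ and then check surjectivity (via \eqref{eq: canonical? -- 2}) and injectivity of $\Phi$ separately, the paper simply re-runs the commutant computation of Theorem~\ref{thm:compute-bfB(net)} on this sector --- the same mechanism, since that corollary is itself a consequence of that computation.
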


\begin{proof}
Let $A_0:=\mathrm{colim}_{\text{$*$-alg}}\cala(I)$, and let $\IU$ be a Grothendieck universe that contains~$A_0$.
The Hilbert space $\bigoplus_{H\in \mathrm{Rep}(A_0)\cap \IU} H$ is an $S$-sector of $\cala$ and is therefore of the form \eqref{eq: sum decomposition of arbitrary sector*}.
The same argument as in the proof of Theorem \ref{thm:compute-bfB(net)} then applies,
from which it follows that $A:=\mathrm{colim}_{\text{vN-alg}}\cala(I)\cong \bigoplus_{\lambda\in\Delta} \bfB(H_\lambda(S))$,
and that the natural map $A\to\cala(S)$ is an isomorphism.
\end{proof}

The following result generalizes Corollary \ref{cor: S-sector == A(S)-module}.

\begin{proposition}\label{prop: M-sector == A(M)-module}
If $\cala$ is a conformal net with finite index, then
the forgetful functor from $\cala(M)$-modules to $M$-sectors of $\cala$
is an equivalence of categories.
\end{proposition}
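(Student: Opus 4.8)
The plan is to reduce the statement to the case of a single circle, which is Corollary \ref{cor: S-sector == A(S)-module}, by exploiting the monoidal structure of $\cala$ established in Theorem \ref{thm: extend A to hatA}. Write $M=S_1\sqcup\cdots\sqcup S_k$ as its decomposition into connected components; each $S_i$ is a circle, and $\cala(M)=\cala(S_1)\,\bar\otimes\,\cdots\,\bar\otimes\,\cala(S_k)$. Since $\cala(M)=\bigvee_{I\subset M}\cala(I)$ by \eqref{eq: hata(M)}, full faithfulness is immediate: a bounded operator intertwining all the actions of $\cala(I)$, $I\subset M$, is the same thing as a morphism of $\cala(M)$-modules, because the set of elements it intertwines is an ultraweakly closed subalgebra of $\cala(M)$ containing every $\cala(I)$, hence containing the von Neumann algebra they generate.

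For essential surjectivity, let $H$ be an $M$-sector. Every interval $I\subset M$ is connected and therefore lies in a unique component $S_i$. Restricting the actions of the $\cala(I)$ with $I\subset S_i$ makes $H$ into an $S_i$-sector, so by Corollary \ref{cor: S-sector == A(S)-module} there is an induced normal action $\rho_i\colon\cala(S_i)\to\bfB(H)$ restricting to the given actions of the $\cala(I)$, $I\subset S_i$. For $i\neq j$ the components $S_i$ and $S_j$ are disjoint, so any interval in $S_i$ is disjoint from any interval in $S_j$; the compatibility built into the notion of $M$-sector then forces $\cala(I)$ and $\cala(J)$ to commute in $\bfB(H)$, and since each $\cala(I)$, $I\subset S_i$, lies in the (weakly closed) commutant of $\rho_j(\cala(S_j))$, passing to generated algebras shows that the images of $\rho_i$ and $\rho_j$ commute.

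It remains to assemble the commuting representations $\rho_1,\ldots,\rho_k$ into a single normal representation of the spatial tensor product $\cala(S_1)\,\bar\otimes\,\cdots\,\bar\otimes\,\cala(S_k)=\cala(M)$. This is where the finite-index hypothesis enters in an essential way: by Theorem \ref{thm:compute-bfB(net)} each $\cala(S_i)$ is a finite direct sum of type~$I$ factors, $\cala(S_i)\cong\bigoplus_{\lambda\in\Delta}\bfB(H_\lambda(S_i))$. The central projections of the commuting $\rho_i$ commute, so one may decompose $H$ simultaneously and reduce to the case of commuting normal representations of type~$I$ factors $\bfB(K_1),\ldots,\bfB(K_k)$; there $H$ splits as $K_1\otimes\cdots\otimes K_k\otimes L$ with $\bfB(K_i)$ acting on the $i$-th tensor factor, which is precisely a normal representation of $\bfB(K_1)\,\bar\otimes\,\cdots\,\bar\otimes\,\bfB(K_k)=\bfB(K_1\otimes\cdots\otimes K_k)$. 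The resulting action of $\cala(M)$ on $H$ restricts, by construction, to the original $M$-sector structure, so the forgetful functor is essentially surjective.

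The main obstacle is this last step: commuting normal representations of arbitrary von Neumann algebras need not combine into a representation of the spatial tensor product, and it is exactly the finite-index hypothesis---forcing the $\cala(S_i)$ to be direct sums of type~$I$ factors---that removes this difficulty. Everything else is a formal consequence of the monoidal property of $\cala$ and of the already-established circle case.
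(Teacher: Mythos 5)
Your proof is correct, but it is organized along a genuinely different route than the paper's. The paper does not split the problem into ``circle case plus assembly'': it takes an $M$-sector $H$ and iterates the semisimple decomposition \eqref{eq: sum decomposition of arbitrary sector*} one circle at a time, writing $H\cong\bigoplus_\lambda H_\lambda(S_1)\otimes K_\lambda$, noting that the multiplicity spaces $K_\lambda$ inherit residual actions making them $(S_2\sqcup\ldots\sqcup S_n)$-sectors, and repeating until the multiplicities are bare Hilbert spaces, at which point $H\cong\bigoplus_{\lambda_1,\ldots,\lambda_n}H_{\lambda_1}(S_1)\otimes\cdots\otimes H_{\lambda_n}(S_n)\otimes K_{\lambda_1,\ldots,\lambda_n}$ is \emph{visibly} an $\cala(M)$-module by Theorem \ref{thm:compute-bfB(net)}. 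You instead black-box the circle case (Corollary \ref{cor: S-sector == A(S)-module}) to produce commuting normal representations $\rho_i$ of the $\cala(S_i)$, and then invoke the standard operator-algebraic fact that commuting normal unital representations of finite direct sums of type~$I$ factors integrate to a representation of the spatial tensor product. The mechanism is ultimately the same---your tensor-splitting lemma for $\bfB(K)$ is exactly what produces the paper's multiplicity-space decomposition, and both arguments hinge on Theorem \ref{thm:compute-bfB(net)}---but your packaging makes explicit two points the paper leaves silent: full faithfulness (which the paper dismisses as clear; note that your intertwiner set is only an ultraweakly closed algebra, not a $*$-algebra, so you should observe that it contains the $*$-algebra generated by the $\cala(I)$ before passing to the ultraweak closure), and the source of the commutation of the component actions, namely the compatibility clause in the definition of an $M$-sector together with normality of $\rho_j$ and $\cala(S_j)=\bigvee_{J\subset S_j}\cala(J)$ (Proposition \ref{prop: colim over S}); the paper uses this same commutation tacitly when it lets the residual algebras act on the multiplicity spaces. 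What the paper's version buys is a concrete normal form for $H$ that it reuses elsewhere (e.g.\ in Lemma \ref{lem: (0000)} and Proposition \ref{prop: Conf Blocks 1}); what yours buys is modularity, and a sharp isolation of where finite index enters---precisely to force the type~$I$ structure without which commuting representations need not extend to the spatial tensor product.
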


\begin{proof}
Let us write $M=S_1\sqcup\ldots\sqcup S_n$ as a disjoint union of circles, and let $H$ be an $M$-sector.
By applying \eqref{eq: sum decomposition of arbitrary sector*} to $H$ viewed as an $S_1$-sector, we get a decomposition
\[
H \,\,\cong\,\, \bigoplus_{\lambda\in \Delta}\, H_\lambda(S_1)\otimes K_\lambda.
\]
The multiplicity spaces $K_\lambda$ are equipped with residual actions of the algebras $\cala(I)$ for $I\subset S_2\sqcup\ldots\sqcup S_n$, 
and are therefore $S_2\sqcup\ldots\sqcup S_n$-sectors.
Applying \eqref{eq: sum decomposition of arbitrary sector*} to the $K_\lambda$ viewed as $S_2$-sectors, we get a further decomposition
\(
H \cong \bigoplus_{\lambda_1,\lambda_2\in \Delta} H_{\lambda_1}(S_1)\otimes H_{\lambda_2}(S_2)\otimes K_{\lambda_1,\lambda_2},
\)
where $K_{\lambda_1,\lambda_2}$ are now $S_3\sqcup\ldots\sqcup S_n$-sectors.
After $n$ steps, this yields a decomposition
\[
H \,\,\cong \bigoplus_{\lambda_1,\lambda_2,\ldots,\lambda_n\in \Delta} H_{\lambda_1}(S_1)\otimes H_{\lambda_2}(S_2)\otimes\cdots\otimes H_{\lambda_n}(S_n)\,\otimes\, K_{\lambda_1,\lambda_2,\ldots,\lambda_n},
\]
where the multiplicity spaces $K_{\lambda_1,\lambda_2,\ldots,\lambda_n}$ are mere Hilbert spaces.
The space $H$ is now visibly an $\cala(S_1)\,\bar\otimes\,\ldots\,\bar\otimes\,\cala(S_n)$ module by Theorem \ref{thm:compute-bfB(net)},
that is, an $\cala(M)$ module.
\end{proof}

\section{The Hilbert space associated to a surface}

\subsection{Canonical up to phase}\label{sec: ``up to non-canonical isomorphism''}
The reader is probably familiar with what it means for a Hilbert space to be well defined up to canonical unitary isomorphism,
and what it means for one to be well defined up to non-canonical unitary isomorphism.
In this paper, we shall also need an intermediate notion: the notion of a Hilbert space that is well defined up to canonical-up-to-phase unitary isomorphism.
An equivalent way of saying that a Hilbert space $H$ is well defined up to canonical-up-to-phase unitary isomorphism,
is to say that the associated projective space $\IP H$ is well defined up to canonical projective unitary isomorphism.
Given that the difference between all those notions is rather subtle, we take the time to spell them out in detail.

One says that a Hilbert space $H$ is well defined up to canonical unitary isomorphism if, even though its construction depends on some choices,
there is a given coherent way of identifying all the Hilbert spaces constructed.
More precisely, letting $X$ be the set that parametrizes the choices,
then for every $x\in X$ there should be a Hilbert space $H_x$, for every $x,y\in X$, there should be a given unitary isomorphism $u_{xy}:H_x\to H_y$,
and for every $x,y,z\in X$, the diagram
\[
\tikzmath{\node (1) at (0,0) {$H_x$};\node (2) at (1.4,.7) {$H_y$};\node (3) at (2.8,0) {$H_z$};\draw[->] (1) tonode[above, yshift=-1, xshift=-4]{$\scriptstyle u_{xy}$} (2);\draw[->] (2) tonode[above, yshift=-1, xshift=3]{$\scriptstyle u_{yz}$} (3);\draw[->] (1) tonode[above, yshift=-1]{$\scriptstyle u_{xz}$} (3);}
\]
should commute.

By contrast, a Hilbert space $H$ is said to be \emph{well defined up to non-canonical (unitary) isomorphism} if its construction depends on some choices,
any two choices yield isomorphic results, but there are no preferred isomorphisms.
\begin{definition}\label{def: well defined up to canonical-up-to-phase unitary isomorphism}
We shall say that a Hilbert space $H$ is \emph{well defined up to canonical-up-to-phase unitary isomorphism} if:
\begin{list}{$\bullet$}{\leftmargin=.7cm \parsep=1pt \labelsep=.2cm \listparindent=0pt \labelwidth=.3cm}
\item The construction of $H$ depends on some set of choices; say $X$ is the set that parametrizes those choices.
For every $x\in X$, we therefore have a Hilbert space $H_x$.
\item For any two choices $x,x'\in X$, there exists a unitary isomorphism $\Phi :H_x\to H_{x'}$.
Once again, the construction of $\Phi $ depends on some set of choices,
let us call $Y(x,x')$ the set that parametrizes those choices:
given $y\in Y(x,x')$, we have a given isomorphism $\Phi_y:H_x\to H_{x'}$.
\item Given two choices $y,y'\in Y(x,x')$, the isomorphisms $\Phi_y$ and $\Phi_{y'}$ are equal up to a scalar.
\item Finally, for any $y\in Y(x,x')$, $y'\in Y(x',x'')$, $y''\in Y(x,x'')$, there exists a scalar $\lambda$ such that $\Phi_{y'}\circ \Phi_{y}=\lambda\, \Phi_{y''}$.
\end{list}
\end{definition}

\noindent Note that the above concepts apply not only to Hilbert spaces, but also to modules over von Neumann algebras, bimodules, sectors, etc.
They also apply to functors with values in a category such as $\mathsf{Hilb}$, $\modules{A}$, etc.

When the mathematical object under consideration is a number, a linear map, or a natural transformation between functors,
then the story is somewhat simpler: such an object is either canonical, canonical up to phase, or not canonical
(here, ``canonical'' is a synonym of ``well defined'').

For example, if $\cala$ is an irreducible conformal net, 
then by Schur's lemma, the vacuum sector $H_0(S)\in\Rep_S(\cala)$ is well defined up to canonical-up-to-phase unitary isomorphism.
By the same reasoning, the isomorphism $H_0(S_1) \boxtimes_{\cala(I)} H_0(S_2)\cong H_0(S_3)$ in \eqref{eq: [paper I, Corollary 1.33]} is canonical up to phase.

In practice, a convenient way of showing that a Hilbert space $H$ is well defined up to canonical-up-to-phase unitary isomorphism
is to exhibit a simply connected $2$-dimensional CW-compex $\IX$, and do the following:
\begin{list}{$\bullet$}{\leftmargin=.7cm \parsep=1pt \labelsep=.2cm \listparindent=0pt \labelwidth=.3cm}
\item For every vertex $x$ of $\IX$, construct a Hilbert space $H_x$.
\item For every edge $y$ of $\IX$ between vertices $x,x'\in \IX$, construct a unitary isomorphism $\Phi_y:H_x\to H_{x'}$.
\item For every 2-cell of $\IX$ with boundary $y_1y_2\ldots y_n$, check that the \hyphenation{auto-mor-phism}automorphism $\Phi_{y_1}\circ\Phi_{y_2}\circ\ldots\circ \Phi_{y_n}$ is a scalar.
\end{list}
We call such a CW-complex a \emph{definition complex} for $H$.

The relation to Definition~\ref{def: well defined up to canonical-up-to-phase unitary isomorphism} is as follows.
The set $X$ is the set of vertices of $\IX$.
Given $x,x'\in X$, the set $Y(x,x')$ that parametrizes isomorphisms between $H_x$ and $H_{x'}$ is the set of all sequences of edges in $\IX$ that altogether go from $x'$ to $x$. The isomorphism $\Phi_p:H_x\to H_{x'}$ that corresponds to a path $p=y_1\cdots y_n$ is the composite $\Phi_{y_1}\circ\ldots\circ\Phi_{y_n}$.
In order to check the two conditions in Definition \ref{def: well defined up to canonical-up-to-phase unitary isomorphism},
we need to show that, given a loop $\gamma$ based at a vertex $x\in \IX$, the corresponding automorphism $\Phi_\gamma\in\U(H_x)$ is a scalar.
Because the definition complex is simply connected, one can write $\gamma$ as a product $\prod p_i w_i p_i^{-1}$
where the $p_i$ are paths, and the $w_i$ are loops along which $2$-cells are attached.
Since $\Phi_{w_i}\in \U(1)$ is central, we conclude that $\Phi_\gamma=\prod\Phi_{p_i} \Phi_{w_i} \Phi_{p_i}^{-1}$ is also a scalar, as desired.

\subsection{The construction}\label{sec: The Hilbert space associated to a surface}

In this section, we give a construction of a Hilbert space $V(\Sigma)\in\Rep_{\partial\Sigma}(\cala)$
associated to a topological surface $\Sigma$ with smooth (possibly empty) boundary.
The construction depends on the auxiliary choice of a certain kind of cell decomposition of $\Sigma$.
We will show later, in Section \ref{sec: Independence of the cell decomposition}, that it is actually independent of that choice.

Let us call a cell decomposition \emph{regular} if for every 2-cell $\ID$ the corresponding map $S^1\cong\partial \ID\to \Sigma$ is injective,
and \emph{trivalent} if each vertex is incident to exactly three edges.
We call a cell decomposition $\Sigma=\ID_1\cup\ldots\cup\ID_n$ \emph{ordered} if the set $\{\ID_1,\ldots,\ID_n\}$ of 2-cells is ordered.
Finally, a cell decomposition is \emph{smooth} if the 1-skeleton is equipped with a smooth structure in the sense of Definition \ref{def: smooth structure on trivalent graph},
and this smooth structure restricts to the given smooth structure on $\partial \Sigma$.
Our construction of $V(\Sigma)$ will depend, a priori, on an ordered regular trivalent smooth cell decomposition of $\Sigma$.

The idea of the construction is to associate to each 2-cell $\ID\subset \Sigma$ the vacuum sector $H_0(\partial \ID)$, and to then glue them using Connes fusion.

Let $\Sigma_i$ be the 2-manifold $\ID_1\cup\ldots\cup\ID_i$, including the case $\Sigma_0=\emptyset$.
Let us give the 1-manifolds $M_i:=\Sigma_i\cap\ID_{i+1}$ the orientations coming from $\partial \ID_{i+1}$
(the orientation of $\partial\ID_{i+1}$ is itself induced from that of $\ID_{i+1}$).
We define $V(\Sigma_i)\in\Rep_{\partial \Sigma_i}(\cala)$ inductively by
\begin{equation}\label{eq: V(D_i) -- BIS}
\begin{cases} V(\emptyset):=\IC\\
V(\Sigma_{i+1}):= V(\Sigma_i)\boxtimes_{\cala(M_i)}H_0(\partial \ID_{i+1}).
\end{cases}
\end{equation}
To give a meaning to the above construction,
we need to explain how the algebra $\cala(M_i)$ acts on the Hilbert spaces $V(\Sigma_i)$ and $H_0(\partial \ID_{i+1})$.
The left action on $H_0(\partial \ID_{i+1})$ is obvious when $M_i\subsetneq \partial \ID_{i+1}$, and follows from Theorem \ref{thm:compute-bfB(net)} for $M_i= \partial \ID_{i+1}$.
Assuming by induction that $V(\Sigma_i)$ is a $\partial \Sigma_i$-sector,
the right action of $\cala(M_i)$ on $V(\Sigma_i)$ follows from Proposition \ref{prop: M-sector == A(M)-module}.
To finish the inductive definition, we still need to show that $V(\Sigma_{i+1})$ is a $\partial \Sigma_{i+1}$-sector.
This is obvious for $M_i= \partial \ID_{i+1}$, and follows 
from~\cite[\lemNNN]{BDH(nets)} for $M_i\subsetneq \partial \ID_{i+1}$.

\subsection{The Hilbert space associated to a disk}
The goal of this section is to show that if $\Sigma$ is a disk, then
regardless of the choice of cell decomposition of $\Sigma$, the resulting sector $V(\Sigma)\in\Rep_{\partial\Sigma}(\cala)$ is the vacuum $H_0(\partial \Sigma)$.

In order to compute the sector associated to a disc, we will need to address the seemingly more difficult problem of computing the isomorphism 
type of the sector associated to a sphere with $n$ holes.
Given $H_{\lambda_1},\ldots, H_{\lambda_n}\in\Rep(\cala)$,
let $N_0^{\lambda_1,\ldots,\lambda_n}$ denote the multiplicity of $H_0$ inside $H_{\lambda_1}\boxtimes\ldots\boxtimes H_{\lambda_n}$.

Recall from~\cite[\subsecsectorsfornets]{BDH(nets)} 
that given a sector $H_{\lambda}\in\Rep(\cala)$ and given a circle $S$, 
we write $H_\lambda(S)\in \Rep_S(\cala)$ for the associated $S$-sector of $\cala$.
It is given by $\varphi^* H_\lambda$ for some $\varphi\in\Diff_+(S, S^1)$, and is only well defined up to non-canonical isomorphism.

\begin{lemma}\label{lem: computation of V((000))}
Let $\cala$ be a conformal net with finite index and let $\Sigma$ be a genus zero surface (a sphere with holes).
Equip $\Sigma$ with an ordered regular trivalent smooth cell decomposition $\Sigma=\ID_1\cup\ldots\cup \ID_n$ and let $V(\Sigma)$ be the corresponding Hilbert space.

Decompose the boundary of $\Sigma$ as a union of circles $\partial \Sigma=S_1\cup\ldots\cup S_m$.
Then for every choice of $\lambda_1,\ldots, \lambda_m \in \Delta$, the vector space
\[
\hom_{\Rep_{\partial\Sigma}(\cala)}\big(H_{\lambda_1}(S_1)\otimes\ldots\otimes H_{\lambda_m}(S_m),V(\Sigma)\big)
\]
has dimension $N_0^{\lambda_1,\ldots,\lambda_m}$.
\end{lemma}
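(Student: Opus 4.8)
The plan is to reformulate the statement in terms of multiplicity spaces and prove it by induction on the number $n$ of $2$-cells, matching the recursion satisfied by $V$ under the gluing \eqref{eq: V(D_i) -- BIS} against the standard fusion identities obeyed by the numbers $N_0^{\lambda_1,\ldots,\lambda_m}$. First I would use Proposition~\ref{prop: M-sector == A(M)-module} and the decomposition \eqref{eq: sum decomposition of arbitrary sector*} to write, for $\partial\Sigma=S_1\sqcup\ldots\sqcup S_m$,
\[
V(\Sigma)\,\cong\,\bigoplus_{\lambda_1,\ldots,\lambda_m\in\Delta} H_{\lambda_1}(S_1)\otimes\cdots\otimes H_{\lambda_m}(S_m)\otimes M_{\lambda_1,\ldots,\lambda_m},
\]
so that the lemma becomes the assertion $\dim M_{\lambda_1,\ldots,\lambda_m}=N_0^{\lambda_1,\ldots,\lambda_m}$. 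Because the intermediate surfaces $\Sigma_i$ in \eqref{eq: V(D_i) -- BIS} may be disconnected, I would carry out the induction with the more general claim that for a (possibly disconnected) planar surface the relevant dimension is the product, over connected components, of the corresponding $N_0$-coefficients. Writing $N_{\mu\nu}^{\lambda}:=\dim\hom(H_\lambda,H_\mu\boxtimes H_\nu)$ for the fusion multiplicities in the fusion category $\Rep_S(\cala)$ (a fusion category by \cite[\thmKLallirreduciblesectorsarefinite]{BDH(nets)}), the identities I aim to match are the elementary consequences of semisimplicity and duality: $N_0^{\lambda}=\delta_{\lambda,0}$; the resolution-of-identity relation $N_0^{\vec\gamma,\nu_1,\nu_2}=\sum_\mu N_{\nu_1\nu_2}^{\mu}\,N_0^{\vec\gamma,\mu}$; and the duality relation $N_0^{\vec\alpha,\vec\beta,\nu}=\sum_{\mu_a,\mu_b}N_{\mu_a\mu_b}^{\nu}\,N_0^{\vec\alpha,\mu_a}\,N_0^{\vec\beta,\mu_b}$.

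The base case $n=1$ is a disk, where \eqref{eq: V(D_i) -- BIS} gives $V(\ID_1)=H_0(\partial\ID_1)$; since $H_0$ is the simple unit object, $\dim M_{\lambda_1}=\delta_{\lambda_1,0}=N_0^{\lambda_1}$. The structural fact that drives the induction is that every $\Sigma_i=\ID_1\cup\ldots\cup\ID_i$ is a subsurface of the planar surface $\Sigma$, and is therefore itself planar, i.e.\ a disjoint union of spheres with holes. This guarantees that attaching a $2$-cell never creates a handle, which is exactly what keeps the bookkeeping confined to the genus-zero identities above.

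For the inductive step I would analyze the passage from $V(\Sigma_{n-1})$ to $V(\Sigma_n)=V(\Sigma_{n-1})\boxtimes_{\cala(M_{n-1})}H_0(\partial\ID_n)$, where $M_{n-1}$ is either a full circle or a disjoint union of $k$ arcs. Using associativity of Connes fusion I would factor the fusion over $\cala(M_{n-1})$ into a sequence of fusions along single intervals, and classify each such step into elementary moves: (a) fusing a vacuum factor along one interval, which by \eqref{eq: [paper I, Corollary 1.33]} merely reshapes a boundary circle and leaves all labels and multiplicities unchanged; (b) fusing two distinct circles carrying sectors $H_{\mu_a},H_{\mu_b}$, which produces $\bigoplus_\nu N_{\mu_a\mu_b}^{\nu}H_\nu$ and merges two components; and (c) a self-fusion of a single circle, which splits it into two and, via Theorem~\ref{thm:compute-bfB(net)} and the canonical decomposition \eqref{eq:   KLM  }, contributes a sum $\bigoplus_{\nu_1,\nu_2}N_{\nu_1\nu_2}^{\mu}$. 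An Euler characteristic count together with planarity shows these are the only possibilities (in particular the handle-creating self-fusion, which would merge two circles on a single component, cannot occur). In each case the induced transformation of the multiplicity spaces is precisely the corresponding relation among the $N_0^{\vec\lambda}$ recorded above, while the capping case $M_{n-1}=\partial\ID_n$ simply forces the capped label to be $0$; assembling these completes the induction.

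The main obstacle I anticipate is the combinatorial bookkeeping in the inductive step: making the reduction of a multi-arc gluing to single-interval fusions well defined and order independent, verifying that planarity genuinely excludes the handle-creating move at every intermediate stage, and---most delicately---handling the self-fusion/split move (c), whose $\bigoplus_\lambda$ contribution requires the canonical circle decomposition \eqref{eq:   KLM  }--\eqref{eq:HSigmaL^2} rather than merely the interval gluing \eqref{eq: [paper I, Corollary 1.33]}. The fusion-coefficient identities themselves are routine, so once the geometry of each elementary move is pinned down the matching with the recursion for $M_{\vec\lambda}$ is immediate.
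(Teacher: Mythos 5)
There is a genuine gap, and it sits exactly where you predicted: elementary move (c). Your plan factors the gluing $V(\Sigma_{n-1})\boxtimes_{\cala(M_{n-1})}H_0(\partial\ID_n)$ into single-interval fusions, but after the first interval has been fused, $H_0(\partial\ID_n)$ is no longer a separate tensor factor: each remaining interval $I$ of $M_{n-1}$ has both of its sides on the boundary of \emph{one} intermediate Hilbert space $K$, so the next step is a self-fusion of $K$ over $\cala(I)$, i.e.\ $K\boxtimes_{\cala(I)\,\bar\otimes\,\cala(I)^{\op}}L^2\cala(I)$. Since $\cala(I)$ is a type III factor, $\cala(I)\,\bar\otimes\,\cala(I)^{\op}$ does not act on $L^2\cala(I)$, and this operation is ill-defined --- this is precisely the content of Warning~\ref{warn: cyc fus with one}. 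Appealing to \eqref{eq:   KLM  }/\eqref{eq:HSigmaL^2} does not repair it: those statements concern cyclic/graph fusions with at least two Hilbert spaces at the vertices (Definition~\ref{def: graph fusion} explicitly excludes loops), whereas move (c) is the one-vertex loop. Note also that the problematic configuration is not exotic: it already occurs for a disk glued along two intervals to a single boundary circle of a connected $\Sigma'$, the case \eqref{eq: case l=1, b_1=2}, so you cannot arrange the order of the single-interval fusions to avoid it, and your planarity/Euler-characteristic argument (which is correct, and does exclude the handle-creating move) does not exclude it.

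The natural repair is to never separate the two feet of a band: when a new cell meets the existing surface in two intervals $M_i\cup I_i$, fuse the \emph{fresh} vacuum sector $H_0(\partial\tilde\ID_{n+i})$ along both at once, over $\cala(M_i\cup I_i)=\cala(M_i)\,\bar\otimes\,\cala(I_i)$ --- a fusion of two distinct Hilbert spaces, hence always defined. This is exactly what the paper does: it refines the decomposition so that $\ID_n$ is subdivided into cells $\tilde\ID_n,\ldots,\tilde\ID_s$ each meeting the previously built surface in exactly two intervals, reduces to the two model situations \eqref{eq: case l=1, b_1=2} and \eqref{eq: case l=2, b_1=b_2=1}, and computes the resulting multiplicities not by your ``split/merge'' moves but via the chain-decomposition Lemma~\ref{lem: (0000)}, whose proof carries the actual fusion-coefficient identity $\sum_{\nu}N_0^{\lambda_1,\ldots,\lambda_{n-2},\nu}N_0^{\bar\nu,\lambda_{n-1},\lambda_n}=N_0^{\lambda_1,\ldots,\lambda_n}$ using Frobenius reciprocity together with the duality Lemmas~\ref{lem: H_0(-S)} and~\ref{lem: dual of H_lambda}. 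Your overall skeleton --- induction on cells via \eqref{eq: V(D_i) -- BIS}, the multiplicity-space reformulation through Proposition~\ref{prop: M-sector == A(M)-module} and \eqref{eq: sum decomposition of arbitrary sector*}, the product-over-components convention, and the $N_0$ identities --- all matches the paper; only the well-definedness of the one-interval-at-a-time factorization fails, and once you fix it by pairing intervals with fresh vacuum disks you essentially recover the paper's argument.
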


\begin{proof}
Letting $\Sigma':=\ID_1\cup\ldots\cup \ID_{n-1}$ and $M:=\partial\ID_n\cap \Sigma'$, we have by definition
\[
V(\Sigma)=V(\Sigma')\boxtimes_{\cala(M)}H_0(\partial\ID_n).
\]
The $1$-manifold $M$ is either a circle or a union of intervals.
If $M$ is a circle, then $\Sigma'$ is connected and we can write its boundary $\partial\Sigma'$ as $S_0\cup S_1\cup\ldots\cup S_m$ with $M=\bar S_0$.
Using that for any right $\cala(M)$-module $K$, there is a canonical isomorphism
\[
K\boxtimes_{\cala(M)}H_0(M) \,\cong\, \hom_{\cala(S_0)}(H_0(S_0),K),
\]
we see by induction on $n$ that
\[
\begin{split}
&\,\textstyle\hom_{\Rep_{\partial\Sigma}(\cala)}\big(H_{\lambda_1}(S_1)\otimes\ldots\otimes H_{\lambda_m}(S_m),V(\Sigma)\big)\\
\cong\,\,&\,\textstyle\hom_{\Rep_{\partial\Sigma}(\cala)}\big(H_{\lambda_1}(S_1)\otimes\ldots\otimes H_{\lambda_m}(S_m),V(\Sigma')\boxtimes_{\cala(M)}H_0(M)\big)\\
\cong\,\,&\textstyle\hom_{\Rep_{\partial\Sigma'}(\cala)}\big(H_0(S_0)\otimes H_{\lambda_1}(S_1)\otimes\ldots\otimes H_{\lambda_m}(S_m),V(\Sigma')\big)\\
\end{split}
\]
has dimension $N_0^{0,\lambda_1,\ldots,\lambda_m}=N_0^{\lambda_1,\ldots,\lambda_m}$, as desired.

We now assume that $M$ is a union of intervals.
Decompose $\Sigma'$ as a disjoint union of connected surfaces $\Sigma'=\Sigma'_1\cup\ldots\cup \Sigma'_l$,
and write $\partial\Sigma'_i=S'_{i1}\cup S'_{i2}\cup\ldots\cup S'_{ik_i}$.
Up to homeomorphism, the surface $\Sigma$ appears as follows:
\begin{equation}\label{eq: big octopus}
\tikzmath[scale = .5]{
\def \rcA{2.8}\def \rcB{10}\def \rcC{13}
\fill[gray!40] (.8,1)-- +(.4,0) [rounded corners=\rcA]-- +(.4,1) -- +(.8,1) [sharp corners]-- +(.8,0) -- +(1.2,0) [rounded corners=\rcA]-- +(1.2,1) -- +(2,1) [sharp corners]-- +(2,0) -- +(2.4,0) [rounded corners=\rcA]-- +(2.4,1) -- (7.3,2) [sharp corners]-- (7.3,1)  -- +(.4,0) [rounded corners=\rcA]-- +(.4,1) -- +(.8,1) [sharp corners]-- +(.8,0) -- +(1.2,0) [rounded corners=\rcA]-- +(1.2,1) -- +(2,1) [sharp corners]-- +(2,0) -- +(2.4,0) [rounded corners=\rcA]-- +(2.4,1) -- (15.3,2) [sharp corners]-- (15.3,1) -- +(.4,0) [rounded corners=\rcA]-- +(.4,1) -- +(.8,1) [sharp corners]-- +(.8,0) -- +(1.2,0) [rounded corners=\rcA]-- +(1.2,1) -- +(2,1) [sharp corners]-- +(2,0) -- +(2.4,0) [rounded corners=\rcB]-- +(2.4,2) -- (.8,3) [sharp corners]-- cycle;
\node[scale=.5] at (2.4,1.5) {$\ldots$};\node[scale=.5] at (8.9,1.5) {$\ldots$};\node[scale=.5] at (16.9,1.5) {$\ldots$};
\filldraw[fill=gray!40, rounded corners=\rcC] (-1,-1) rectangle node[yshift=-25]{$\Sigma'_1$} (5,1) (0,0) circle (.5) (1.5,0) circle (.5) (4,0) circle (.5);\node[scale=.9] at (2.75,0) {$\ldots$};
\filldraw[fill=gray!40, rounded corners=\rcC] (5.5,-1) rectangle node[yshift=-25]{$\Sigma'_2$} (11.5,1) (6.5,0) circle (.5) (8,0) circle (.5) (10.5,0) circle (.5);\node[scale=.9] at (9.25,0) {$\ldots$};\node[yshift=-25] at (12.5,0) {$\ldots$};
\filldraw[fill=gray!40, rounded corners=\rcC] (13.5,-1) rectangle node[yshift=-25]{$\Sigma'_l$} (19.5,1) (14.5,0) circle (.5) (16,0) circle (.5) (18.5,0) circle (.5);\node[scale=.9] at (17.25,0) {$\ldots$};
\draw (.8,1) +(.4,0) [rounded corners=\rcA]-- +(.4,1) -- +(.8,1) -- +(.8,0) +(1.2,0) -- +(1.2,1) -- +(1.4,1) +(1.8,1) -- +(2,1) -- +(2,0) +(2.4,0) -- +(2.4,1) -- (7.3,2) -- (7.3,1) 
+(.4,0) -- +(.4,1) -- +(.8,1) -- +(.8,0) +(1.2,0) -- +(1.2,1) -- +(1.4,1) +(1.8,1) -- +(2,1) -- +(2,0) +(2.4,0) -- +(2.4,1) -- +(4.2,1) (13.5,2) -- (15.3,2) -- (15.3,1) 
+(.4,0) -- +(.4,1) -- +(.8,1) -- +(.8,0) +(1.2,0) -- +(1.2,1) -- +(1.4,1) +(1.8,1) -- +(2,1) -- +(2,0) +(2.4,0) -- +(2.4,1) [rounded corners=\rcB]-- +(2.4,2) --node[below, scale=.8]{$\ID_n$} (.8,3) -- (.8,1);
\node at (-2.3,0) {$\Sigma'\,\,\bigg\{$};
\node at (19.7,1) {$\left.\phantom{|^{|^{|^{|^{|^{|^{|^{|^{|^{|^{|^{|^|}}}}}}}}}}}}\right\}\,\,\Sigma$};
}
\end{equation}
Indeed, since $\Sigma$ has genus zero, the manifold $M$ can only intersect one boundary component of each $\Sigma'_i$.
By renumbering the components of $\partial\Sigma'_i$, we may assume that $M\cap S'_{i1}\not = \emptyset$ and $M\cap S'_{ij} = \emptyset$ for $j\ge 2$.

Recall that we are trying to show that
\begin{equation}\label{eq: dim = = prod N}
\dim\,\hom_{\Rep_{\partial\Sigma}(\cala)}\big(H_{\lambda_1}(S_1)\otimes\ldots\otimes H_{\lambda_m}(S_m),V(\Sigma)\big)\,=\,\,N_0^{\lambda_{1},\ldots,\lambda_{m}}.\,\,\,
\end{equation}
By induction on $n$, we know that for every $i$
\[
\,\,\,\dim\,\hom_{\Rep_{\partial\Sigma'_i}(\cala)}\big(H_{\lambda_{i1}}(S'_{i1})\otimes\ldots\otimes H_{\lambda_{ik_i}}(S'_{ik_i}),V(\Sigma'_i)\big)\,=\,\,N_0^{\lambda_{i1},\ldots,\lambda_{ik_i}}.
\]
Let $b_i$ be the number of connected components of $\partial \ID_n\cap \Sigma'_i$.
Before treating the general case, we prove it for certain small values of $l$ and $b_1,\ldots,b_l$.

If $l=1$ and $b_1=2$ then, up to homeomorphism, $\Sigma=\Sigma'\cup\ID_n$ appears as follows:
\begin{equation}\label{eq: case l=1, b_1=2}
\tikzmath[scale = .5]{
\def \rcA{2.8}\def \rcB{10}\def \rcC{13}
\filldraw[fill=gray!40, rounded corners=\rcB]
($(4,0)+(75:1.02)$) -- ++(0,1.2) --node[below, scale=.8, xshift=2, yshift=-1]{$\ID_n$} ++(2,0) [rounded corners= 7.5]-- ++(0,-1.5) [sharp corners]-- ($(4,0)+(20:1)$) -- ($(4,0)+(38.5:1)$)
[rounded corners= 2]-- ($(4,0)+(35:1.6)$) -- ++(0,.35) [rounded corners= \rcA] -- ++(-.5,0) [sharp corners]-- ($(4,0)+(57:1.04)$) -- cycle;
\filldraw[fill=gray!40, rounded corners=\rcC] (-1,-1) rectangle (5,1) (0,0) circle (.5) (1.5,0) circle (.5) (4,0) circle (.5);\node[scale=.9] at (2.75,0) {$\ldots$};
\node at (5.5,-.7) {$\Sigma'$};
}
\end{equation}
By Lemma \ref{lem: (0000)} below and 
induction on $n$, $V(\Sigma')$ is isomorphic to the $\partial\Sigma'$-sector associated to this cellular decomposition:
$\tikzmath[scale = .3]{
\filldraw[fill=gray!40, rounded corners=8] (-1,-1) rectangle (5,1) (0,0) circle (.5) (1.5,0) circle (.5) (4,0) circle (.5);\node[scale=.9] at (2.75,0) {$\scriptstyle \ldots$};
\draw (0,0) +(90:.5) -- +(90:1)+(-90:.5) -- +(-90:1) (1.5,0) +(90:.5) -- +(90:1)+(-90:.5) -- +(-90:1) (4,0) +(90:.5) -- +(90:1)+(-90:.5) -- +(-90:1);
}$\,.
It follows that $V(\Sigma)$ is isomorphic to the $\partial\Sigma$-sector associated to the following decomposition of $\Sigma$:
\[
\tikzmath[scale = .5]{
\def \rcA{2.8}\def \rcB{10}\def \rcC{13}
\filldraw[fill=gray!40, rounded corners=\rcB]
($(4,0)+(75:1.02)$) -- ++(0,1.2) -- ++(2,0) [rounded corners= 7.5]-- ++(0,-1.5) [sharp corners]-- ($(4,0)+(20:1)$) -- ($(4,0)+(38.5:1)$)
[rounded corners= 2]-- ($(4,0)+(35:1.6)$) -- ++(0,.35) [rounded corners= \rcA] -- ++(-.5,0) [sharp corners]-- ($(4,0)+(57:1.04)$) -- cycle;
\filldraw[fill=gray!40, rounded corners=\rcC] (-1,-1) rectangle (5,1) (0,0) circle (.5) (1.5,0) circle (.5) (4,0) circle (.5);\node[scale=.9] at (2.75,0) {$\ldots$};
\draw (0,0) +(90:.5) -- +(90:1)+(-90:.5) -- +(-90:1) (1.5,0) +(90:.5) -- +(90:1)+(-90:.5) -- +(-90:1) (4,0) +(90:.5) -- +(90:1)+(-90:.5) -- +(-90:1);
}
\]
That decomposition is of the form covered by Lemma \ref{lem: (0000)}, and so \eqref{eq: dim = = prod N} follows.

We now treat the case $l=2$, $b_1=b_2=1$, where $\Sigma$ appears as follows:
\begin{equation}\label{eq: case l=2, b_1=b_2=1}
\tikzmath[scale = .5]{
\def \rcA{2.8}\def \rcB{10}\def \rcC{13}
\filldraw[fill=gray!40, rounded corners=\rcB] ($(4,0)+(75:1.02)$) -- ($(4,0)+(75:1.02)+(0,1.2)$) --node[below, scale=.8, xshift=2]{$\ID_n$} ($(7,0)+(105:1.02)+(0,1.2)$) [sharp corners]-- ($(7,0)+(105:1.02)$) -- ($(7,0)+(130:1.04)$) to[bend right=90] ($(4,0)+(50:1.04)$) -- cycle;
\filldraw[fill=gray!40, rounded corners=\rcC] (-1,-1) rectangle node[yshift=-25]{$\Sigma'_1$} (5,1) (0,0) circle (.5) (1.5,0) circle (.5) (4,0) circle (.5);\node[scale=.9] at (2.75,0) {$\ldots$};
\filldraw[fill=gray!40, rounded corners=\rcC] (6,-1) rectangle node[yshift=-25]{$\Sigma'_2$} (12,1) (7,0) circle (.5) (8.5,0) circle (.5) (11,0) circle (.5);\node[scale=.9] at (9.75,0) {$\ldots$};
}\;\raisebox{-1.5mm}{.}
\end{equation}
Let $M_1=\partial \ID_n\cap \Sigma'_1$ and $M_2=\partial \ID_n\cap \Sigma'_2$.
Once again, by Lemma \ref{lem: (0000)} and induction on $n$,
we know that $V(\Sigma'_i)$ is isomorphic to the $\partial\Sigma'_i$-sector constructed from the cellular decomposition 
$\tikzmath[scale = .3]{
\filldraw[fill=gray!40, rounded corners=8] (-1,-1) rectangle (5,1) (0,0) circle (.5) (1.5,0) circle (.5) (4,0) circle (.5);\node[scale=.9] at (2.75,0) {$\scriptstyle \ldots$};
\draw (0,0) +(90:.5) -- +(90:1)+(-90:.5) -- +(-90:1) (1.5,0) +(90:.5) -- +(90:1)+(-90:.5) -- +(-90:1) (4,0) +(90:.5) -- +(90:1)+(-90:.5) -- +(-90:1);
}$\,.
It follows that
\[
V(\Sigma)=\big(V(\Sigma'_1)\otimes V(\Sigma'_2)\big)\!\underset{\cala(M)}\boxtimes\! H_0(\partial\ID_n)
\,\cong\,V(\Sigma'_1)\!\underset{\cala(\bar M_1)}\boxtimes\! H_0(\partial\ID_n) \!\underset{\cala(M_2)}\boxtimes\! V(\Sigma'_2)
\]
is isomorphic to the $\partial\Sigma$-sector associated to the following decomposition of $\Sigma$:
\[
\tikzmath[scale = .5]{
\def \rcA{2.8}\def \rcB{10}\def \rcC{13}
\filldraw[fill=gray!40, rounded corners=\rcB] ($(4,0)+(75:1.02)$) -- ($(4,0)+(75:1.02)+(0,1.2)$) -- ($(7,0)+(105:1.02)+(0,1.2)$) [sharp corners]-- ($(7,0)+(105:1.02)$) -- ($(7,0)+(130:1.04)$) to[bend right=90] ($(4,0)+(50:1.04)$) -- cycle;
\filldraw[fill=gray!40, rounded corners=\rcC] (-1,-1) rectangle (5,1) (0,0) circle (.5) (1.5,0) circle (.5) (4,0) circle (.5);\node[scale=.9] at (2.75,0) {$\ldots$};
\filldraw[fill=gray!40, rounded corners=\rcC] (6,-1) rectangle (12,1) (7,0) circle (.5) (8.5,0) circle (.5) (11,0) circle (.5);\node[scale=.9] at (9.75,0) {$\ldots$};
\draw (0,0) +(90:.5) -- +(90:1)+(-90:.5) -- +(-90:1) (1.5,0) +(90:.5) -- +(90:1)+(-90:.5) -- +(-90:1) (4,0) +(90:.5) -- +(90:1)+(-90:.5) -- +(-90:1) (7,0) +(90:.5) -- +(90:1)+(-90:.5) -- +(-90:1) (8.5,0) +(90:.5) -- +(90:1)+(-90:.5) -- +(-90:1) (11,0) +(90:.5) -- +(90:1)+(-90:.5) -- +(-90:1);
\node[scale=.8] at (-.45,-1.4) {$\tilde\ID_1$};
\node[scale=.8] at (.8,-1.4) {$\tilde\ID_2$};
\node[scale=.6] at (2.625,-1.4) {$\ldots$};
\node[scale=.8] at (4.45,-1.4) {$\tilde\ID_k$};
\node[scale=.8] at (5.55,1.7) {$\tilde\ID_{k+1}$};
\node[scale=.8] at (6.5,-1.4) {$\tilde\ID_{k+2}$};
\node[scale=.8] at (7.95,-1.4) {$\tilde\ID_{k+3}$};
\node[scale=.6] at (9.8,-1.4) {$\ldots$};
\node[scale=.8] at (11.6,-1.4) {$\tilde\ID_r$};
}\,\raisebox{-1.5mm}{.}
\]
As $H_0(\partial\tilde \ID_k)\boxtimes_{\cala(\bar M_1)} H_0(\partial\tilde\ID_{k+1})\boxtimes_{\cala(M_2)}H_0(\partial\tilde \ID_k)\cong
H_0(\partial(\ID_k\cup \ID_{k+1}\cup \ID_{k+2}))$, the sector $V(\Sigma)$ is also isomorphic to the one associated to the following decomposition:
\[
\tikzmath[scale = .5]{
\def \rcA{2.8}\def \rcB{10}\def \rcC{13}
\filldraw[fill=gray!40, rounded corners=\rcC] (-1,-1) rectangle (5,1) (0,0) circle (.5) (1.5,0) circle (.5) (4,0) circle (.5);\node[scale=.9] at (2.75,0) {$\ldots$};
\filldraw[fill=gray!40, rounded corners=\rcC] (6,-1) rectangle (12,1) (7,0) circle (.5) (8.5,0) circle (.5) (11,0) circle (.5);\node[scale=.9] at (9.75,0) {$\ldots$};
\draw (0,0) +(90:.5) -- +(90:1)+(-90:.5) -- +(-90:1) (1.5,0) +(90:.5) -- +(90:1)+(-90:.5) -- +(-90:1) (4,0) +(90:.5) -- +(90:1)+(-90:.5) -- +(-90:1) (7,0) +(90:.5) -- +(90:1)+(-90:.5) -- +(-90:1) (8.5,0) +(90:.5) -- +(90:1)+(-90:.5) -- +(-90:1) (11,0) +(90:.5) -- +(90:1)+(-90:.5) -- +(-90:1);
\fill[fill=gray!40, rounded corners=\rcB] ($(4,0)+(75:1.02)+(0,-.35)$) -- ($(4,0)+(75:1.02)+(0,1.2)$) -- ($(7,0)+(105:1.02)+(0,1.2)$) [sharp corners]-- ($(7,0)+(105:1.02)+(0,-.35)$) -- ($(7,0)+(130.3:1.04)$) to[bend right=90] ($(4,0)+(49.7:1.04)$) -- cycle;
\draw[rounded corners=\rcB] ($(4,0)+(75:1.005)$) -- ($(4,0)+(75:1.02)+(0,1.2)$) -- ($(7,0)+(105:1.02)+(0,1.2)$) [sharp corners]-- ($(7,0)+(105:1.005)$) ($(7,0)+(130.3:1.035)$) to[bend right=90] ($(4,0)+(49.7:1.035)$);
}\,\,\raisebox{-3.5mm}{.}
\]
That decomposition is of the form covered by Lemma \ref{lem: (0000)}, and \eqref{eq: dim = = prod N} follows.

We now treat the general case.
We assume by induction that $V(\Sigma')$ satisfies \eqref{eq: dim = = prod N}.
Consider the refinement $\Sigma=\tilde\ID_1\cup\ldots\cup\tilde\ID_s$ of \eqref{eq: big octopus}
given by $\tilde\ID_i:=\ID_i$ for $i<n$, and where $\ID_n$ is subdivided into cells $\tilde\ID_n$, $\tilde\ID_{n+1},\ldots, \tilde\ID_s$
as follows:
\[
\tikzmath[scale = .5]{
\def \rcA{2.8}\def \rcB{10}\def \rcC{13}
\fill[gray!40] (.8,1)-- +(.4,0) [rounded corners=\rcA]-- +(.4,1) -- +(.8,1) [sharp corners]-- +(.8,0) -- +(1.2,0) [rounded corners=\rcA]-- +(1.2,1) -- +(2,1) [sharp corners]-- +(2,0) -- +(2.4,0) [rounded corners=\rcA]-- +(2.4,1) -- (7.3,2) [sharp corners]-- (7.3,1)  -- +(.4,0) [rounded corners=\rcA]-- +(.4,1) -- +(.8,1) [sharp corners]-- +(.8,0) -- +(1.2,0) [rounded corners=\rcA]-- +(1.2,1) -- +(2,1) [sharp corners]-- +(2,0) -- +(2.4,0) [rounded corners=\rcA]-- +(2.4,1) -- (15.3,2) [sharp corners]-- (15.3,1) -- +(.4,0) [rounded corners=\rcA]-- +(.4,1) -- +(.8,1) [sharp corners]-- +(.8,0) -- +(1.2,0) [rounded corners=\rcA]-- +(1.2,1) -- +(2,1) [sharp corners]-- +(2,0) -- +(2.4,0) [rounded corners=\rcB]-- +(2.4,2) -- (.8,3) [sharp corners]-- cycle;
\node[scale=.5] at (2.4,1.5) {$\ldots$};\node[scale=.5] at (8.9,1.5) {$\ldots$};\node[scale=.5] at (16.9,1.5) {$\ldots$};
\filldraw[fill=gray!40, rounded corners=\rcC] (-1,-1) rectangle (5,1) (0,0) circle (.5) (1.5,0) circle (.5) (4,0) circle (.5);\node[scale=.9] at (2.75,0) {$\ldots$};
\filldraw[fill=gray!40, rounded corners=\rcC] (5.5,-1) rectangle (11.5,1) (6.5,0) circle (.5) (8,0) circle (.5) (10.5,0) circle (.5);\node[scale=.9] at (9.25,0) {$\ldots$};\node at (12.5,0) {$\ldots$};
\filldraw[fill=gray!40, rounded corners=\rcC] (13.5,-1) rectangle (19.5,1) (14.5,0) circle (.5) (16,0) circle (.5) (18.5,0) circle (.5);\node[scale=.9] at (17.25,0) {$\ldots$};
\draw (.8,1) +(.4,0) [rounded corners=\rcA]-- +(.4,1) -- +(.8,1) -- +(.8,0) +(1.2,0) -- +(1.2,1) -- +(1.4,1) +(1.8,1) -- +(2,1) -- +(2,0) +(2.4,0) -- +(2.4,1) -- (7.3,2) -- (7.3,1) 
+(.4,0) -- +(.4,1) -- +(.8,1) -- +(.8,0) +(1.2,0) -- +(1.2,1) -- +(1.4,1) +(1.8,1) -- +(2,1) -- +(2,0) +(2.4,0) -- +(2.4,1) -- +(4.2,1) (13.5,2) -- (15.3,2) -- (15.3,1) 
+(.4,0) -- +(.4,1) -- +(.8,1) -- +(.8,0) +(1.2,0) -- +(1.2,1) -- +(1.4,1) +(1.8,1) -- +(2,1) -- +(2,0) +(2.4,0) -- +(2.4,1) [rounded corners=\rcB]-- +(2.4,2) --node[below, scale=.8]{$\tilde\ID_n$} (.8,3) -- (.8,1);
\draw (0.8,1.6) -- ++(.4,0) ++(.4,0) -- ++(.4,0) ++(.8,0) -- ++(.4,0) (7.3,1.6) -- ++(.4,0) ++(.4,0) -- ++(.4,0) ++(.8,0) -- ++(.4,0) (15.3,1.6) -- ++(.4,0) ++(.4,0) -- ++(.4,0) ++(.8,0) -- ++(.4,0);
\def \xshif {2.8,5.3}
\node[scale=.9, inner sep=1] (a) at ($(\xshif)+(2,-1.6)$) {$\,\,\tilde\ID_{n+1}$,};\draw[-stealth] (a) -- (.98,1.3);
\node[scale=.9, inner sep=1] (b) at ($(\xshif)+(4.3,-1.6)$) {$\,\,\tilde\ID_{n+2},$};\draw[-stealth] (b) -- (1.77,1.3);
\node[scale=.9, inner sep=1] at ($(\xshif)+(6.2,-1.785)$) {$\ldots\,,$};
\node[scale=.9, inner sep=1] (c) at ($(\xshif)+(7.6,-1.6)$) {$\,\,\tilde\ID_s$};\draw[-stealth] (c) -- (17.55,1.3);
}
\]
Recall that by definition, the sector $V(\Sigma)$ is associated to the original cell decomposition $\Sigma=\ID_1\cup\ldots\cup\ID_n$.
Letting $I_i:= \partial\tilde\ID_{n+i}\cap \tilde\ID_n$ and $M_i:=\partial\tilde\ID_{n+i}\cap \Sigma'$, we see that
\[
\begin{split}
V(\Sigma)&=V(\Sigma')\boxtimes_{\cala(M)}H_0(\partial\ID_n)\\
&=V(\Sigma')\boxtimes_{\cala(M_1\cup\ldots\cup M_s)}H_0(\partial\ID_n)\\
&\cong V(\Sigma')\boxtimes_{\cala(M_1\cup\ldots\cup M_s)}\Big(H_0(\partial\tilde\ID_n)\boxtimes_{\cala(I_1\cup\ldots\cup I_s)} \bigotimes_{i\ge 1}H_0(\partial\tilde\ID_{n+i})\Big)\\
&\cong \Big(\Big(V(\Sigma')\otimes H_0(\partial \tilde \ID_n)\Big)\boxtimes_{\cala(M_1\cup I_1)}H_0(\partial\tilde\ID_{n+1})\Big)\ldots 
\boxtimes_{\cala(M_s\cup I_s)}H_0(\partial\tilde\ID_{n+s})
\end{split}
\]
is isomorphic to the sector associated to the refined cell decomposition $\tilde\ID_1\cup\ldots\cup\tilde\ID_s$ of $\Sigma$.

For each $i\le s$, let $\tilde\Sigma_i:=\tilde\ID_1\cup\ldots\cup\tilde\ID_i$.
Recall that by assumption $V(\tilde\Sigma_{n-1})=V(\Sigma')$ satisfies the equation \eqref{eq: dim = = prod N}.
Clearly, $V(\tilde\Sigma_{n})\cong V(\tilde\Sigma_{n-1})\otimes H_0(\partial\tilde\ID_n)$ then also satisfies \eqref{eq: dim = = prod N}.
We now proceed by induction on $i$, and assume that $V(\tilde\Sigma_{i-1})$ satisfies \eqref{eq: dim = = prod N} for some $i\le s$.
By construction, the intersection $\partial \tilde\ID_i\cap \tilde\Sigma_{i-1}=M_i\cup I_i$ is the union of exactly two intervals.
If these two intervals belong to the same connected component of $\tilde\Sigma_{i-1}$, then the situation is as in \eqref{eq: case l=1, b_1=2}; 
if they belong to different connected component of $\tilde\Sigma_{i-1}$, then the situation is as in \eqref{eq: case l=2, b_1=b_2=1}.
In both cases, we can apply our intermediate results and conclude that $V(\tilde\Sigma_i)$ satisfies \eqref{eq: dim = = prod N}.
\end{proof}

The following lemma was used in the previous proof:

\begin{lemma}\label{lem: (0000)}
Let $\Sigma$ be a sphere with $n$ holes.
Equip its boundary components $S_1,S_2,\ldots, S_n$ with the orientation induced from $\Sigma$.
Let $\cala$ be a conformal net with finite index, and let $V(\Sigma)$ be the Hilbert space constructed from the following cellular decomposition of $\Sigma$:
\[
\raisebox{-6pt}{$\Sigma\,:$}\,\,\,\,\,\,\tikzmath[scale=.7]{
\filldraw[fill=gray!40, rounded corners=20] (-.1,-1) rectangle (8.1,1);
\draw (1,0) +(0,1)--+(0,-1)+(-.5,1.4)node{$\ID_1$} (2.5,0) +(0,1)--+(0,-1)+(-.75,1.4)node{$\ID_2$} (4,0) +(0,1)--+(0,-1)+(-.75,1.4)node{$\ID_3$} (5.5,0) +(0,1)--+(0,-1)+(-.15,1.4)node{$\ldots$} (7,0) +(0,1)--+(0,-1)+(.5,1.4)node{$\ID_n$};
\filldraw[fill=white] (1,0) circle (.5) (2.5,0) circle (.5) (4,0) circle (.5) (5.5,0) circle (.5) (7,0) circle (.5);
\useasboundingbox (current bounding box.north west) rectangle ($(current bounding box.south east)-(0,.2)$);}\,\,\,\raisebox{-6pt}{.}
\]
Then the vector space
\[
\hom_{\Rep_{\partial\Sigma}(\cala)}\big(H_{\lambda_1}(S_1)\otimes\ldots\otimes H_{\lambda_n}(S_n),V(\Sigma)\big)
\]
has dimension $N_0^{\lambda_1,\ldots,\lambda_n}$.
\end{lemma}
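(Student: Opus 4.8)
The plan is to prove, by induction on $n$, the explicit decomposition
\[
V(\Sigma)\,\cong\,\bigoplus_{\lambda_1,\ldots,\lambda_n\in\Delta} N_0^{\lambda_1,\ldots,\lambda_n}\, H_{\lambda_1}(S_1)\otimes\cdots\otimes H_{\lambda_n}(S_n)
\]
as a $\partial\Sigma$-sector, from which the statement follows at once: by semisimplicity of $\Rep_{\partial\Sigma}(\cala)$ (the decomposition \eqref{eq: sum decomposition of arbitrary sector*} applied to each $S_i$, via Proposition \ref{prop: M-sector == A(M)-module}), the objects $H_{\mu_1}(S_1)\otimes\cdots\otimes H_{\mu_n}(S_n)$ are exactly the simple $\partial\Sigma$-sectors, so $\hom$ out of them reads off the multiplicity $N_0^{\mu_1,\ldots,\mu_n}$. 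The base case $n=1$ is the disk, where \eqref{eq: V(D_i) -- BIS} gives $V(\Sigma)=\IC\boxtimes H_0(S_1)=H_0(S_1)$, matching $N_0^{\lambda_1}=\delta_{\lambda_1,0}$.

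In the standard decomposition the cells are strips glued in a chain, so that $\Sigma':=\ID_1\cup\cdots\cup\ID_{n-1}$ is again a sphere with $n-1$ holes, with boundary circles $S_1,\ldots,S_{n-2}$ and one further ``outer'' circle $T$; the last cell $\ID_n$ is a disk glued to $\Sigma'$ along $M:=\partial\ID_n\cap\Sigma'$, a union of two intervals sitting inside $T$, while the arcs of $T$ and of $\partial\ID_n$ left free close up into the two new boundary circles $S_{n-1}$ and $S_n$. By \eqref{eq: V(D_i) -- BIS} we have $V(\Sigma)=V(\Sigma')\boxtimes_{\cala(M)}H_0(\partial\ID_n)$. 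Since $M\subset T$, the fusion leaves the spectator circles $S_1,\ldots,S_{n-2}$ untouched, so inserting the inductive decomposition of $V(\Sigma')$ reduces everything to the single ``decorated pair-of-pants'' computation
\[
H_\tau(T)\boxtimes_{\cala(M)}H_0(\partial\ID_n)\,\cong\,\bigoplus_{\mu,\nu\in\Delta}\dim\hom\!\big(H_\tau,\,H_\mu\boxtimes H_\nu\big)\,H_\mu(S_{n-1})\otimes H_\nu(S_n).
\]
Granting this, the coefficient of $H_{\lambda_1}(S_1)\otimes\cdots\otimes H_{\lambda_{n-2}}(S_{n-2})\otimes H_\mu(S_{n-1})\otimes H_\nu(S_n)$ in $V(\Sigma)$ is $\sum_\tau N_0^{\lambda_1,\ldots,\lambda_{n-2},\tau}\dim\hom(H_\tau,H_\mu\boxtimes H_\nu)$, which equals $N_0^{\lambda_1,\ldots,\lambda_{n-2},\mu,\nu}$ by writing $H_\mu\boxtimes H_\nu=\bigoplus_\tau\dim\hom(H_\tau,H_\mu\boxtimes H_\nu)H_\tau$ and expanding the definition of $N_0$; thus the induction closes.

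The main obstacle is the decorated fusion identity displayed above, and within it the orientation bookkeeping. When $\tau=0$ it caps $T$ with the vacuum and the pair of pants degenerates to the cylinder $S_{n-1}\times[0,1]$, so the identity becomes exactly the isomorphism \eqref{eq:   KLM  } together with $\dim\hom(H_0,H_\mu\boxtimes H_\nu)=\delta_{\nu,\bar\mu}$. For general $\tau$ I would deduce it from \eqref{eq:   KLM  }, the vacuum-fusion isomorphism \eqref{eq: [paper I, Corollary 1.33]}, and the identification of the $\cala(M)$-action via Theorem \ref{thm:compute-bfB(net)}, using associativity of Connes fusion to slide the decoration $H_\tau$ across the gluing intervals. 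The one genuinely delicate point is matching orientations: the two intervals of $M$ carry the orientation induced from $\partial\ID_n$, which is opposite to the one induced from $T$, so one must track whether the decorating label is $\tau$ or $\bar\tau$; this is controlled by the duality Lemma \ref{lem: dual of H_lambda}. Everything else---the reduction to the standard chain, the factorization of the spectator circles, and the extraction of dimensions---is formal.
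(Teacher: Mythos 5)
Your proposal is correct and takes essentially the same route as the paper: the same induction along the chain with $\Sigma'=\ID_1\cup\ldots\cup\ID_{n-1}$, the same semisimple decomposition of $V(\Sigma')$ with respect to the separating circle (your $T$, the paper's $S'$), and the same key multiplicity $\dim\hom(H_\tau,H_\mu\boxtimes H_\nu)=N_0^{\bar\tau,\mu,\nu}$ for the last-cell fusion, closed by the same sum $\sum_\tau N_0^{\lambda_1,\ldots,\lambda_{n-2},\tau}N_0^{\bar\tau,\mu,\nu}=N_0^{\lambda_1,\ldots,\lambda_{n-2},\mu,\nu}$. The only real divergence is in the decorated pair-of-pants step, where the paper argues directly by Frobenius reciprocity for the dualizable bimodules (using Lemmas \ref{lem: H_0(-S)} and \ref{lem: dual of H_lambda} to identify the duals of $H_0(\partial\ID_n)$ and the $H_\lambda$) instead of your KLM-plus-vacuum-unit ``sliding,'' and works with dimensions of hom-spaces rather than the full sector decomposition---equivalent formulations given semisimplicity.
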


\begin{proof}
Let us number the boundary components of $\Sigma$ as follows:
\[
\,\,\tikzmath[scale=.6]{
\draw[rounded corners=15] (-.1,-1) rectangle (8.1,1);
\draw (1,0)(2.5,0)(4,0)(5.5,0)(7,0);
\draw (1,0) circle (.5) +(-135:.7)node[yshift=-1]{$\scriptstyle S_1$} (2.5,0) circle (.5) +(-135:.7)node[yshift=-1]{$\scriptstyle S_2$} (4,0) circle (.5) +(-135:.7)node[yshift=-1]{$\scriptstyle S_3$} (5.5,0) circle (.5) +(-135:.7)node[yshift=-3]{$\scriptstyle \cdots$} (7,0) circle (.5) +(-135:.7)node[yshift=-2]{$\scriptstyle S_{n-1}$};
\draw[->] (8.1,.04)node[right,yshift=-1]{$\scriptstyle S_n$};\draw[->] (.5,.04);\draw[->] (2,.04);\draw[->] (3.5,.04);\draw[->] (5,.04);\draw[->] (6.5,.04);
\useasboundingbox ($(current bounding box.north west)+(0,.1)$) rectangle ($(current bounding box.south east)-(0,.1)$);}.
\]
Let $\Sigma':=\ID_1\cup\ldots\cup\ID_{n-1}$, and let $S'$ be the component of $\partial\Sigma'$ that touches $\ID_n$.
Let also $J_1:=S'\cap S_{n-1}$ and $J_2:=S'\cap S_n$:
\[
\quad\,\,\tikzmath[scale=.6]{
\draw[rounded corners=15] (7,-1) -- (-.1,-1) -- (-.1,1) [sharp corners]-- (7,1) -- (7,.5) arc (90:270:.5) -- cycle;
\draw (1,0)(2.5,0)(4,0)(5.5,0)(7,0);
\draw (1,0) circle (.5) +(-135:.7)node[yshift=-1]{$\scriptstyle S_1$} (2.5,0) circle (.5) +(-135:.7)node[yshift=-1]{$\scriptstyle S_2$} (4,0) circle (.5) +(-135:.7)node[yshift=-3]{$\scriptstyle \cdots$} (5.5,0) circle (.5) +(-135:.7)node[yshift=-2]{$\scriptstyle S_{n-2}$};
\draw[->] (.5,.04);\draw[->] (2,.04);\draw[->] (3.5,.04);\draw[->] (5,.04);\draw[->] (6.5,.04)node[right,yshift=-10,xshift=8]{$\scriptstyle S'$};
\useasboundingbox ($(current bounding box.north west)+(0,.1)$) rectangle ($(current bounding box.south east)-(0,.1)$);}
\qquad
\,\,\tikzmath[scale=.6]{
\draw[rounded corners=15] (-.1,-1) rectangle (8.1,1);
\draw (1,0)(2.5,0)(4,0)(5.5,0)(7,0);
\draw (7,0) circle (.5) (7,0) +(0,.5) -- +(0,1) +(0,-.5) -- +(0,-1);
\draw[->] (6.5,.04)node[left,yshift=-1, xshift=3]{$\scriptstyle J_1$};
\draw[->] (-.1,-.03) -- (-.1,-.04)node[left,yshift=.4]{$\scriptstyle J_2$};
\useasboundingbox ($(current bounding box.north west)+(0,.1)$) rectangle ($(current bounding box.south east)-(0,.1)$);}\,.
\]
By induction we may assume that
\[
\dim\hom\big(H_{\lambda_1}(S_1)\otimes\ldots\otimes H_{\lambda_{n-2}}(S_{n-2})\otimes H_{\nu}(S'),V(\Sigma')\big)  = N_0^{\lambda_1,\ldots,\lambda_{n-2},\nu}.
\]
Applying~\eqref{eq: sum decomposition of arbitrary sector*} with
respect to the circle $S'$ we obtain a decomposition
\begin{equation*}
  V(\Sigma') \cong \bigoplus_{\nu \in \Delta} M_\nu  \otimes H_\nu(S') 
\end{equation*}
where the $M_\nu$ are $S_1 \sqcup \ldots \sqcup S_{n-2}$-sectors.
Thus
\begin{equation*}
\begin{split}
  \hom( H_{\lambda_1}(S_1) \ox  \dots \ox  H_{\lambda_{n-2}} & (S_{n-2}),  M_\nu) \\
   &  \cong \hom( H_{\lambda_1}(S_1) \ox \dots \ox H_{\lambda_{n-2}}(S_{n-2})
     \ox H_\nu(S' ) , V(\Sigma') ) 
\end{split}
\end{equation*} 
is of dimension $N_0^{\lambda_1,\ldots,\lambda_{n-2},\nu}$.
Let $M:=\ID_{n-1}\cap \ID_n$, so that $V(\Sigma)=V(\Sigma')\boxtimes_{\cala(M)}H_0(\partial\ID_n)$.
We then have
\[
\begin{split}
&\hom\big(H_{\lambda_1}(S_1)\otimes\ldots\otimes H_{\lambda_n}(S_n),V(\Sigma)\big)\\
\cong\,\, &
\hom\big(H_{\lambda_1}(S_1)\otimes\ldots\otimes H_{\lambda_n}(S_n),V(\Sigma')\boxtimes_{\cala(M)}H_0(\partial\ID_n)\big)\\
\cong\,\, &
\hom\big(H_{\lambda_1}(S_1) \otimes\ldots\otimes H_{\lambda_n}(S_n),
           \bigoplus_{\nu \in \Delta} M_\nu  \otimes H_\nu(S') 
                     \boxtimes_{\cala(M)}H_0(\partial\ID_n)\big)\\
\cong\,\, &\bigoplus_{\nu\in\Delta}N_0^{\lambda_1,\ldots,\lambda_{n-2},\nu}\hom\Big(H_{\lambda_{n-1}}(S_{n-1})\otimes H_{\lambda_n}(S_n),H_\nu(S')\boxtimes_{\cala(M)}H_0(\partial\ID_n)\Big).
\end{split}
\]
By Frobenius reciprocity,
\[
\begin{split}
&\hom\big(H_{\lambda_{n-1}}(S_{n-1})\otimes H_{\lambda_n}(S_n),H_\nu(S')\boxtimes_{\cala(M)}H_0(\partial\ID_n)\big)\\
\cong\,\,&\hom\Big(\,\overline{H_0(\partial\ID_n)}\,,\,\big(\,\overline{H_{\lambda_{n-1}}(S_{n-1})}\otimes \overline{H_{\lambda_n}(S_n)}\,\big)\boxtimes_{\cala(J_1)\,\bar\otimes\,\cala(J_2)}H_\nu(S')\Big)\\
\cong\,\,&
\hom\Big(\,\overline{H_0(\partial\ID_n)}\,,\,\overline{H_{\lambda_{n-1}}(S_{n-1})}\boxtimes_{\cala(J_1)}H_\nu(S')\boxtimes_{\cala(\bar J_2)} \overline{H_{\lambda_n}(S_n)}\,\Big)\\
\cong\,\,&
\hom\Big(\,H_0(\bar{\partial\ID}_n)\,,\,H_{\bar\lambda_{n-1}}(\bar S_{n-1})\boxtimes_{\cala(J_1)}H_\nu(S')\boxtimes_{\cala(\bar J_2)} H_{\bar\lambda_n}(\bar S_n)\Big)\\
\end{split}
\]
is of dimension $N_0^{\bar\lambda_{n-1},\nu,\bar\lambda_n}=N_0^{\bar\nu,\lambda_{n-1},\lambda_n}$, where we have used Lemmas \ref{lem: H_0(-S)} and \ref{lem: dual of H_lambda} in order to identify the duals of $H_0(\partial \ID_n)$, $H_{\lambda_{n-1}}(S_{n-1})$, and $H_{\lambda_n}(S_n)$.
It follows that $\hom(H_{\lambda_1}(S_1)\otimes\ldots\otimes H_{\lambda_n}(S_n),V(\Sigma))$ has dimension
\[\textstyle
\sum_{\nu\in\Delta}\,N_0^{\lambda_1,\ldots,\lambda_{n-2},\nu}\,N_0^{\bar\nu,\lambda_{n-1},\lambda_n}\,=\,N_0^{\lambda_1,\ldots,\lambda_{n-1},\lambda_n}.
\qedhere
\]
\end{proof}

Note that while Lemma \ref{lem: computation of V((000))} pins down the isomorphism type of $V(\Sigma)$, it does not say anything
about how well defined the Hilbert space is.
If the surface $\Sigma$ is a disk, then we can get a little bit more:

\begin{corollary}\label{cor: V(disc) independent of cell decomposition}
Let $\cala$ be a conformal net with finite index and let $\ID$ be a disc.
Then the Hilbert space $V(\ID)$ associated to an ordered regular trivalent smooth cell decomposition of $\ID$ is given by the vacuum sector $H_0(\partial \ID)$.
It is independent of the cell decomposition, and well defined up to canonical-up-to-phase unitary isomorphism.
\end{corollary}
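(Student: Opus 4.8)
The plan is to first pin down the isomorphism type of $V(\ID)$, and then to observe that, because this isomorphism type is that of an \emph{irreducible} sector, the well-definedness follows for free from Schur's lemma, with no definition complex needed.

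First I would specialize Lemma \ref{lem: computation of V((000))} to the case $\Sigma=\ID$, so that the boundary $\partial\ID=S$ is a single circle. For each $\lambda\in\Delta$ the lemma gives
\[
\dim\,\hom_{\Rep_{S}(\cala)}\big(H_\lambda(S),V(\ID)\big)\,=\,N_0^{\lambda}.
\]
Since $H_0$ and $H_\lambda$ are both simple, $N_0^{\lambda}=\dim\hom(H_0,H_\lambda)=\delta_{\lambda,0}$. Feeding these multiplicities into the decomposition \eqref{eq: sum decomposition of arbitrary sector*} of the $S$-sector $V(\ID)$, every summand with $\lambda\neq 0$ drops out and the $\lambda=0$ summand is one-dimensional, so $V(\ID)\cong H_0(S)=H_0(\partial\ID)$. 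In particular $V(\ID)$ is a \emph{simple} $\partial\ID$-sector.

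Next I would address the well-definedness. Write $V(\ID)^{(\calc)}$ for the sector built from an ordered regular trivalent smooth cell decomposition $\calc$; by construction each is a $\partial\ID$-sector, and by the previous paragraph each is isomorphic to $H_0(\partial\ID)$. In Definition \ref{def: well defined up to canonical-up-to-phase unitary isomorphism} take the choice-set $X$ to be the set of such decompositions, and for a pair $\calc,\calc'$ let $Y(\calc,\calc')$ be the set of unitary isomorphisms of $\partial\ID$-sectors $V(\ID)^{(\calc)}\to V(\ID)^{(\calc')}$; this set is nonempty because both sides are isomorphic to $H_0(\partial\ID)$. Because $V(\ID)^{(\calc)}$ is simple, Schur's lemma gives $\hom_{\Rep_{\partial\ID}(\cala)}\big(V(\ID)^{(\calc)},V(\ID)^{(\calc)}\big)=\IC$, so any two elements of $Y(\calc,\calc')$ differ by a phase, and the composite of two such isomorphisms agrees with any third up to a phase. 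This verifies all four conditions of Definition \ref{def: well defined up to canonical-up-to-phase unitary isomorphism}, in exactly the same way that the vacuum sector $H_0(\partial\ID)$ itself is seen to be canonical up to phase; in particular $V(\ID)$ is independent of the cell decomposition.

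The main point to stress is that all the genuine content lives inside Lemma \ref{lem: computation of V((000))}, whose dimension count is what determines the isomorphism type. Once that type is known to be the irreducible vacuum, the rigidity supplied by Schur's lemma makes the coherence of the identifications automatic, so that---in contrast with the higher-genus surfaces treated later, where a definition complex must be built by hand---no further bookkeeping is required here.
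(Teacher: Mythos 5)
Your proposal is correct and takes essentially the same route as the paper: the paper's own proof likewise reads off $V(\ID)\cong H_0(\partial\ID)$ directly from Lemma \ref{lem: computation of V((000))} and then invokes Schur's lemma on the irreducible sector $V(\ID)$ to obtain canonicity up to phase, with no definition complex. The only difference is that you spell out the details the paper leaves implicit, namely the computation $N_0^{\lambda}=\delta_{\lambda,0}$ via \eqref{eq: sum decomposition of arbitrary sector*} and the explicit verification of the conditions of Definition \ref{def: well defined up to canonical-up-to-phase unitary isomorphism}.
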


\begin{proof}
The isomorphism $V(\ID)\cong H_0(\partial \ID)$ is immediate from Lemma \ref{lem: computation of V((000))}.
Moreover, by Schur's lemma, since $V(\ID)$ is irreducible, it is well defined up to canonical-up-to-phase unitary isomorphism.
\end{proof}

\subsection{Independence of the cell decomposition}\label{sec: Independence of the cell decomposition}

We have seen in \eqref{eq: V(D_i) -- BIS} that it is possible to define a sector $V(\Sigma)\in\Rep_{\partial \Sigma}(\cala)$ for every ordered regular trivalent smooth cell decomposition of $\Sigma$.
For technical reasons, it is convenient to replace the ordering of the set of 2-cells by a slightly different structure:

\begin{definition}
Let $\Sigma$ be a topological surface with smooth boundary.
A \emph{soccer ball decomposition} is a regular trivalent smooth cell decomposition (see Section \ref{sec: The Hilbert space associated to a surface}), equipped
with a partition of the set of $2$-cells into subsets $X_{\mathrm{white}}$ and $X_{\mathrm{black}}$, such that 
every interior vertex is adjacent to two white cells and one back cell, and every boundary vertex is adjacent to two white cells.
\end{definition}

\[
\parbox{5cm}{We illustrate the above\\ definition with an example:
}\qquad
\tikzmath[scale = .8]{\useasboundingbox (1.3,-1.1) rectangle (5.85,1.1);
\filldraw[fill = white, line width=.5] (1.7,.59) ellipse(.1  and .41);\filldraw[fill = white, line width=.5] (1.7,-.58) ellipse(.1  and .42);\clip (1.7,-1.1) rectangle (5.8,1.1);
\draw[gray, line width=.6, rounded corners=8.5, fill=gray!20](.3,.4) -- (1,1) -- (2,1) --  (3,.75) -- (4,1) -- (5,1) -- (5.7,.4) -- (5.7,-.4) -- (5,-1) -- (4,-1) -- (3,-.75) -- (2,-1) -- (1,-1) -- (.3,-.4) -- cycle;
\fill[white](1.19,-.01) .. controls (1.5,.25) and (2,.25) .. (2.31,-.01)(1.19,-.01) .. controls (1.5,-.21) and (2,-.21) .. (2.31,-.01)
(3.69,-.01) .. controls (4,.25) and (4.5,.25) .. (4.81,-.01)(3.69,-.01) .. controls (4,-.21) and (4.5,-.21) .. (4.81,-.01);
\draw[gray, line width=.6] (1,.1) .. controls (1.5,-.25) and (2,-.25) .. (2.5,.1)(1.2,0) .. controls (1.5,.25) and (2,.25) .. (2.3,0)
(3.5,.1) .. controls (4,-.25) and (4.5,-.25) .. (5,.1)(3.7,0) .. controls (4,.25) and (4.5,.25) .. (4.8,0);
\filldraw[fill = white, line width=.5] (1.7,.59) ellipse(.1 and .41);\filldraw[fill = white, line width=.5] (1.7,-.58) ellipse(.1 and .42);
\path[rotate=15](1.8,-1.3)coordinate(a)(3.8,-.5)coordinate(b)(4.3,-1.9)coordinate(c)(4.6,-1.48)coordinate(d);
\path[rotate=-15](1.95,.85)coordinate(e)(2.5,1.3)coordinate(f)(4.7,1.9)coordinate(g)(3.96,.71)coordinate(h);\path[rotate=-8](3.58,-.2)coordinate(i);
\path(2.4,-.35)coordinate(j)(3.2,.2)coordinate(k)(3.05,-.3)coordinate(l)(4.4,.75)coordinate(m)(5.4,-.1)coordinate(n)(3.5,-.875)coordinate(ii);
\fill[fill=gray, line width=.3]  (m) to[bend right=5] (g) -- +(.18,.19) to[bend right=18] ($(m)+(-.02,.25)$) -- cycle ;
\fill[fill=gray, line width=.3] ($(h)+(.035,.2)$) to[bend right =17] ($(d)+(-.11,.18)$) -- (d) to[bend left=10] (c) to[bend left=5] (h) -- cycle;
\fill[gray](ii) -- (i) -- (l) -- (j) to[bend left=10] (a) -- ++(.03,-.17) to[bend right=2] ($(a)+(.03,-.17)+(.6,.135)$) -- ++(.2,.02) -- ++(.2,0) -- ++(.15,-.015) -- ++(.15,-.03) -- cycle;
\draw[line width=.3](a) -- +(-.285,.015)(j) to[bend left=10] (a) -- +(.03,-.17)(l) -- (j) -- +(-.26,.26)(i) -- (l) (h) -- (i) -- (ii)(d) to[bend left=10] (c) -- +(.045,-.275) (c) to[bend left=5] (h) -- +(.035,.2) (b) -- (m) to[bend right=5] (g) to[bend left=10] (n) to[bend left=5] (d) -- +(-.11,.18)(g) -- +(.18,.19)(m) -- +(-.02,.25);
\fill[gray, line width=.1]  (2.4,.9) to[bend right =8] (3.05,.805) to[bend right =3] (3.3,.37) -- (3.05,.05) to[bend right =5] (2.3,.475) -- cycle;
\draw[line width=.3]  (3.05,.805) to[bend right =3] (3.3,.37) -- (3.05,.05) to[bend right =5] (2.3,.475) -- (2.4,.9);
\draw[line width=.3] (2.3,.475) -- (1.785,.43) (3.3,.37) -- (3.66,.44)(3.05,.05) -- (l) (3.75,.7) -- +(-.04,.223) (3.93,.3) -- +(.07,-.145);
\filldraw[fill=gray, line width=.3] (3.66,.44) -- (3.75,.7) -- (4,.585) -- (3.93,.3) -- cycle;
\fill[gray] (5.63,-.33) to[bend right=18] (5.66,.25) -- (5.312,.15) -- (5.28,-.14) -- cycle;
\draw[line width=.3]  (5.66,.25) -- (5.312,.15) -- (5.28,-.14) -- (5.63,-.33);
} 
\medskip \]
Such decompositions exist in great abundance:
starting with any regular cell decomposition, we can get a soccer ball decomposition by replacing each internal vertex by a little black disc.
Here is an illustration of the above procedure:

\[
\tikzmath[scale = .1]{\foreach \b/\x/\y in {a/.8/18,b/1.7/13.5,c/-.5/9.5,d/2/5,e/.5/1.5,f/5/20,g/5.4/11.8,h/5.8/7,i/5/0,j/9/19,k/9.8/14,l/9.7/5,m/9/1.5,n/13.2/20,o/13.7/11.3,p/13.5/6.9,q/13/0,r/16.5/18,s/17.5/14,t/16.5/2.5,u/19.5/20,v/23/20,w/25/18,x/22.3/13.8,y/25/10.2,z/24.7/5.4,aa/21.6/2.2,bb/29/19,cc/30.5/15,dd/29/11.5,ee/30.3/8.8,eee/30.2/6.2,ff/28.7/4.1,gg/29.8/.6,hh/25.3/0,A/6.5/17,B/3.9/14.8,C/2.9/9.2,D/5.5/3,E/11.5/17.5,F/13.8/14.8,G/9.5/9.5,H/12.7/3.1,I/20.5/17,J/18/10,K/22/9,L/19.5/5,M/25.5/14,N/28.1/14.5,O/27.5/7.3,P/26/2.3,A'/-2.5/3.5,B'/-2/14,B''/-2/17.3,C'/0/21,D'/7.5/22,D''/10.6/22,E'/16.2/20.5,F'/22/23,G'/27.2/20.7,H'/31.5/19.5,I'/31.2/12.2,J'/33/7.8,K'/31.8/2.4,L'/29/-1.5,M'/18.5/-1.3,M''/16/0,N'/9/-2,O'/-.5/-2.5} {\coordinate (\b) at (\x,\y);}
\clip[rounded corners = 12] (.7,.5) rectangle (30,19.3);
\fill[gray!20](.7,.5) rectangle (30,19.3);
\draw[line join=bevel] (A)--(G)--(J)--(I)--(F)--(J)--(G)--(F)--(E)--(A)--(B)--(C)--(D)--(G)--(H)--(D) (B)--(G)--(C) (J)--(L)--(H)--(J)--(K)--(L)--(P)--(O)--(K)--(M)--(O)--(N)--(M)--(I) 
(D)--(A')--(C)--(B')(B'')--(B)(C')--(A)--(D')(D'')--(E)--(E')--(F)(E')--(I)--(F')(I)--(G')--(M)(H')--(N)--(I')--(O)--(J')(O)--(K')--(P)--(L')(P)--(M')--(L)(M'')--(H)--(N')--(D)--(O');
} 
\qquad\rightsquigarrow\qquad\tikzmath[scale = .1]{\foreach \b/\x/\y in {a/.8/18,b/1.7/13.5,c/-.5/9.5,d/2/5,e/.5/1.5,f/5/20,g/5.4/11.8,h/5.8/7,i/5/0,j/9/19,k/9.8/14,l/9.7/5,m/9/1.5,n/13.2/20,o/13.7/11.3,p/13.5/6.9,q/13/0,r/16.5/18,s/17.5/14,t/16.5/2.5,u/19.5/20,v/23/20,w/25/18,x/22.3/13.8,y/25/10.2,z/24.7/5.4,aa/21.6/2.2,bb/29/19,cc/30.5/15,dd/29/11.5,ee/30.3/8.8,eee/30.2/6.2,ff/28.7/4.1,gg/29.8/.6,hh/25.3/0,A/6.5/17,B/3.9/14.8,C/2.9/9.2,D/5.5/3,E/11.5/17.5,F/13.8/14.8,G/9.5/9.5,H/12.7/3.1,I/20.5/17,J/18/10,K/22/9,L/19.5/5,M/25.5/14,N/28.1/14.5,O/27.5/7.3,P/26/2.3,A'/-2.5/3.5,B'/-2/14,B''/-2/17.3,C'/0/21,D'/7.5/22,D''/10.6/22,E'/16.2/20.5,F'/22/23,G'/27.2/20.7,H'/31.5/19.5,I'/31.2/12.2,J'/33/7.8,K'/31.8/2.4,L'/29/-1.5,M'/18.5/-1.3,M''/16/0,N'/9/-2,O'/-.5/-2.5} {\coordinate (\b) at (\x,\y);}
\clip[rounded corners = 12](.7,.5) rectangle (30,19.3);
\fill[gray!20](.7,.5) rectangle (30,19.3);
\draw[line join=bevel] (A)--(G)--(J)--(I)--(F)--(J)--(G)--(F)--(E)--(A)--(B)--(C)--(D)--(G)--(H)--(D) (B)--(G)--(C) (J)--(L)--(H)--(J)--(K)--(L)--(P)--(O)--(K)--(M)--(O)--(N)--(M)--(I) 
(D)--(A')--(C)--(B')(B'')--(B)(C')--(A)--(D')(D'')--(E)--(E')--(F)(E')--(I)--(F')(I)--(G')--(M)(H')--(N)--(I')--(O)--(J')(O)--(K')--(P)--(L')(P)--(M')--(L)(M'')--(H)--(N')--(D)--(O');
\filldraw[fill=gray]
($(A)!.31!(B)$) -- ($(A)!.12!(G)$) -- ($(A)!.19!(E)$) -- ($(A)!.32!($(f)!.5!(j)$)$) -- ($(A)!.24!($(f)!.45!(a)$)$) -- cycle
($(B)!.3!(A)$) -- ($(B)!.16!(G)$) -- ($(B)!.19!(C)$) -- ($(B)!.3!($(a)!.45!(b)$)$) -- cycle
($(C)!.2!(B)$) -- ($(C)!.16!(G)$) -- ($(C)!.19!(D)$) -- ($(C)!.32!($(c)!.5!(d)$)$) -- ($(C)!.34!($(c)!.5!(b)$)$) -- cycle
($(D)!.18!(C)$) -- ($(D)!.15!(G)$) -- ($(D)!.17!(H)$) -- ($(D)!.36!($(i)!.5!(m)$)$) -- ($(D)!.33!($(e)!.55!(i)$)$) -- ($(D)!.31!($(e)!.5!(d)$)$) -- cycle
($(E)!.2!(A)$) -- ($(E)!.25!(F)$) -- ($(E)!.3!($(n)!.33!(r)$)$) -- ($(E)!.45!($(j)!.5!(n)$)$) -- cycle
($(F)!.24!(E)$) -- ($(F)!.12!(G)$) -- ($(F)!.17!(J)$) -- ($(F)!.2!(I)$) -- ($(F)!.3!($(n)!.67!(r)$)$) -- cycle
($(G)!.2!(A)$) -- ($(G)!.2!(B)$) -- ($(G)!.2!(C)$) -- ($(G)!.17!(D)$) -- ($(G)!.17!(H)$) -- ($(G)!.14!(J)$) -- ($(G)!.19!(F)$) -- cycle
($(H)!.18!(D)$) -- ($(H)!.18!(G)$) -- ($(H)!.15!(J)$) -- ($(H)!.18!(L)$) -- ($(H)!.3!($(t)!.5!(q)$)$) -- ($(H)!.3!($(m)!.5!(q)$)$) -- cycle
($(I)!.18!(F)$) -- ($(I)!.15!(J)$) -- ($(I)!.18!(M)$) -- ($(I)!.3!($(v)!.5!(w)$)$) -- ($(I)!.3!($(v)!.5!(u)$)$) -- ($(I)!.3!($(u)!.5!(r)$)$) -- cycle
($(J)!.18!(F)$) -- ($(J)!.17!(G)$) -- ($(J)!.16!(H)$) -- ($(J)!.24!(L)$) -- ($(J)!.25!(K)$) -- ($(J)!.13!(I)$) -- cycle
($(K)!.265!(J)$) -- ($(K)!.18!(M)$) -- ($(K)!.2!(O)$) -- ($(K)!.23!(L)$) -- cycle
($(L)!.16!(H)$) -- ($(L)!.19!(J)$) -- ($(L)!.21!(K)$) -- ($(L)!.16!(P)$) -- ($(L)!.45!($(aa)!.5!(t)$)$) -- cycle
($(M)!.18!(I)$) -- ($(M)!.2!(K)$) -- ($(M)!.17!(O)$) -- ($(M)!.4!(N)$) -- ($(M)!.3!($(w)!.4!(bb)$)$) -- cycle
($(N)!.2!(M)$) -- ($(N)!.18!(O)$) -- ($(N)!.5!($(cc)!.5!(dd)$)$) -- ($(N)!.2!($(cc)!.5!(bb)$)$) -- cycle
($(O)!.2!(N)$) -- ($(O)!.2!(M)$) -- ($(O)!.16!(K)$) -- ($(O)!.18!(P)$) -- ($(O)!.25!($(ff)!.5!(eee)$)$) -- ($(O)!.25!($(eee)!.5!(ee)$)$) -- ($(O)!.3!($(ee)!.5!(dd)$)$) -- cycle
($(P)!.17!(O)$) -- ($(P)!.18!(L)$) -- ($(P)!.36!($(aa)!.5!(hh)$)$) -- ($(P)!.35!($(hh)!.5!(gg)$)$) -- ($(P)!.36!($(gg)!.5!(ff)$)$) -- cycle;
} 
\]

Given a soccer ball decomposition $X_{\mathrm{white}}=\{\ID_1,\ldots,\ID_n\}$, $X_{\mathrm{black}}=\{B_1,\ldots,B_m\}$ of our surface $\Sigma$, 
we can pick an order on those two sets,
and apply the construction~\eqref{eq: V(D_i) -- BIS} to get
\[
\begin{split}
V(\Sigma)=\,&\bigg(\hspace{-.5mm}\bigg(\hspace{-.5mm}\bigg(\cdots\bigg(\\
&\,H_0(\partial\ID_1)\underset{\cala(\partial \ID_2\cap\ID_1)}\boxtimes H_0(\partial\ID_2)\bigg)\underset{\cala(\partial\ID_3\cap(\ID_1\cup\ID_2))}\boxtimes
\,\cdots\bigg)\underset{\cala(\partial\ID_n\cap\,\ldots)}\boxtimes H_0(\partial\ID_n)\bigg)\\
&\qquad\qquad\,\,\,\,\underset{\cala(\partial B_1)}\boxtimes H_0(\partial B_1)\bigg)\underset{\cala(\partial B_2)}\boxtimes
H_0(\partial B_2)\,\,\cdots\bigg)\underset{\cala(\partial B_m)}\boxtimes H_0(\partial B_m).\phantom{\Bigg|}
\end{split}
\]
The advantage of soccer ball decompositions is that, given choices of vacuum sectors $H_0(\partial \ID_i)$ and $H_0(\partial B_j)$,
the sector $V(\Sigma)\in\Rep_{\partial\Sigma}(\cala)$ is visibly invariant (up to canonical unitary isomorphism) under permutations of the sets $\{\ID_1,\ldots,\ID_n\}$ and $\{B_1,\ldots, B_m\}$---this is seen as follows.
The iterated fusion
\begin{equation}\label{eq: V(white Sigma)}
\,\Big(\!\Big(\raisebox{.3mm}{$\scriptstyle\cdots$}\hspace{-.2mm}\Big(H_0(\partial\ID_1)\boxtimes_{\cala(\partial \ID_2\cap\ID_1)} H_0(\partial\ID_2)\Big)\boxtimes_{\cala(\partial\ID_3\cap\,\ldots)}\cdots\Big)\boxtimes_{\cala(\partial\ID_n\cap\,\ldots)} H_0(\partial\ID_n)\Big)
\end{equation}
can be identified with the graph fusion $\boxtimes_{\{\cala(\ID_i\cap \ID_j)\}} \{H_0(\partial\ID_i)\}$
of the Hilbert spaces $H_0(\partial \ID_i)$ along the algebras $\cala(\ID_i\cap \ID_j)$ (see Appendix~\ref{app:Cyclic fusion} regarding graph fusion).
Here, the manifolds $\ID_i\cap \ID_j$ are equipped with the orientation induced from $\partial\ID_j$ for $j>i$,
the graph that indexes the graph fusion is the adjacency graph $\Gamma$ of the white cells of $X$,
and the orientations on $\Gamma$ are induced by the ordering of its set of vertices.
Renumbering the cells $\ID_1,\ldots,\ID_n$ affects both the orientations of the $\ID_i\cap \ID_j$, and the orientations of the edges of $\Gamma$.
It follows from \eqref{eq: orientations in graph fusion} that those permutations leave \eqref{eq: V(white Sigma)} invariant.
Altogether, $V(\Sigma)$ is canonically isomorphic to
\begin{equation}\label{eq: graph + fill holes}
\Big(\bigboxtimes_{\{\cala(\ID_i\cap \ID_j)\}} \big\{H_0(\partial\ID_i)\big\}\Big)\,\boxtimes_{\cala(\partial B_1\cup\ldots\cup\,\partial B_m)}\big(H_0(\partial B_1)\otimes\ldots\otimes H_0(\partial B_m)\big),
\end{equation}
and is therefore also invariant under the permutations of the cells $B_1,\ldots, B_m$.

By the above discussion, given a topological surface $\Sigma$ with smooth boundary, and a soccer ball decomposition $X=(X_{\mathrm{white}}, X_{\mathrm{black}})$ of that surface, one can associate to it, canonically up to canonical-up-to-phase unitary isomorphism, a $\partial\Sigma$-sector $V(\Sigma)$.
We denote the result 
\begin{equation}\label{eq: V(Sigma;X)}
V(\Sigma;X)=V(\Sigma;X_{\mathrm{white}}, X_{\mathrm{black}})\,\in\,\Rep_{\partial\Sigma}(\cala)
\end{equation}
to emphasize the dependence on the soccer ball decomposition.
The Hilbert space $V(\Sigma;X)$ is only well defined up to canonical-up-to-phase unitary,
because so were the vacuum sectors $H_0(\ID_i)$ and $H_0(B_j)$ that entered its definition.

Our next goal is to show that $V(\Sigma;X)$ does not depend on the soccer ball decomposition $X$.

\begin{definition}
Let $\Sigma$ be a surface with smooth boundary, and let $X$ and $Y$ be soccer ball decompositions.
We write $X\tikz{\useasboundingbox(-.24,-.1)rectangle(.24,.3);\node at (0,.23) {$\scriptstyle \ID$};\node at (0,0) {$\prec$};} Y$
with $\ID\in Y_{\mathrm{white}}$ if $\ID$ is a union of cells of $X$, the decompositions $X$ and $Y$ agree outside of $\ID$, and the inclusion of the $1$-skeleton
of $Y$ into the $1$-skeleton of $X$ is smooth.
If there \vspace{.05cm} exists another soccer ball decomposition $Z$ and a cell $\ID\in Z_{\mathrm{white}}$ such that 
$X\tikz{\useasboundingbox(-.24,-.1)rectangle(.24,.3);\node at (0,.23) {$\scriptstyle \ID$};\node at (0,0) {$\prec$};} Z
\tikz{\useasboundingbox(-.24,-.1)rectangle(.24,.3);\node at (0,.23) {$\scriptstyle \ID$};\node at (0,0) {$\succ$};}  Y$, then we write $X\overset\ID\sim Y$.
For example, we have
\[\tikzmath[scale = .088]{\pgftransformxscale{.95}\useasboundingbox (.6,-10.5) rectangle (28.5,11);\clip[rounded corners = 12] (.6,-9.5) rectangle (28.5,9.5);
\fill[gray!20] (0,-10) rectangle (30,10);\fill[gray] (5.3,1.1)--(3.5,3.5)--(5.2,5.6)--(7.5,4.5)--(7.5, 1.8)--cycle;\fill[gray] (10,6)--(10,8.3)--(12.5,10)--(15,8)--(14,6)--(12.2,5.2)-- cycle;\fill[gray] (18.2,5.4)-- (17,8)--(19.5,9.8)--(22,8)--(21,5.2)--cycle;\fill[gray] (26,5)--(24.6,7.8)-- (26.3,10)--(28.8,9.8)--(30,7.5)--(28,5)-- cycle;\fill[gray] (20.5,0)--(20.8,1.1)--(22,2.5)--(24.5,2.5)--(25.8,0)--(24.5,-2.5)--(22,-2.5)--(20.8,-1.1)--cycle;\fill[gray] (13.5,.7)--(11,1.8)--(13,3.7)--(16.8,3.8)--(17.2,1.6)--cycle;\draw (0,-1.3)--(0,1.3)(5.3,-1.1)--(5.3,1.1)(25.8,0)--(28,0)(13.5,.7)--(13.5,-.7);\draw (0,1.3)--(1.5,3.5)--(0,6)--(.7,9.1)--(3,10)--(5,8)--(7.5,10)--(10,8.3)--(12.5,10)--(15,8)--(17,8)--(19.5,9.8)--(22,8)--(24.6,7.8)--(26.3,10)--(28.8,9.8)--(30,7.5)--(28,5)--(29.6,2.4)--(28,0)(1.5,3.5)--(3.5,3.5)--(5.2,5.6)--(5,8)(5.2,5.6)--(7.5,4.5)--(10,6)--(10,8.3)(10,6)--(12.2,5.2)--(14,6)--(15,8) (17,8)--(18.2,5.4)--(21,5.2)(3.5,3.5)--(5.3,1.1)--(7.6,1.9)--(7.5,4.5)(24.6,7.8)--(26,5)--(28,5) (26,5)--(24.5,2.5)--(25.8,0)  (22,2.5)--(20.8,1.1)--(20.5,0)(22,8)--(21,5.2)--(22,2.5)--(24.5,2.5);\draw (7.6,1.9)--(11,1.8)--(13.5,.7)--(17.2,1.6)--(20.8,1.1)(11,1.8)--(13,3.7)--(12.2,5.2)(13,3.7)--(16.8,3.8)--(18.2,5.4)(16.8,3.8)--(17.2,1.6);\pgftransformyscale{-1}\fill[gray] (5.3,1.1)--(3.5,3.5)--(5.2,5.6)--(7.5,4.5)--(7.5, 1.8)--cycle;\fill[gray] (10,6)--(10,8.3)--(12.5,10)--(15,8)--(14,6)--(12.2,5.2)-- cycle;\fill[gray] (18.2,5.4)-- (17,8)--(19.5,9.8)--(22,8)--(21,5.2)--cycle;\fill[gray] (26,5)--(24.6,7.8)-- (26.3,10)--(28.8,9.8)--(30,7.5)--(28,5)-- cycle;\fill[gray] (13.5,.7)--(11,1.8)--(13,3.7)--(16.8,3.8)--(17.2,1.6)--cycle;\fill[gray](3,10) -- (5,8) -- (7.5,10) -- cycle (3,-10) -- (5,-8) -- (7.5,-10) -- cycle (28,0) -- (29.6,2.4) -- (29.6,-2.4) -- cycle (0,1.3)--(1.5,3.5)--(0,6)--cycle (0,-1.3)--(1.5,-3.5)--(0,-6)--cycle;\draw (0,1.3)--(1.5,3.5)--(0,6)--(.7,9.1)--(3,10)--(5,8)--(7.5,10)--(10,8.3)--(12.5,10)--(15,8)--(17,8)--(19.5,9.8)--(22,8)--(24.6,7.8)--(26.3,10)--(28.8,9.8)--(30,7.5)--(28,5)--(29.6,2.4)--(28,0)(1.5,3.5)--(3.5,3.5)--(5.2,5.6)--(5,8)(5.2,5.6)--(7.5,4.5)--(10,6)--(10,8.3)(10,6)--(12.2,5.2)--(14,6)--(15,8) (17,8)--(18.2,5.4)--(21,5.2)(3.5,3.5)--(5.3,1.1)--(7.6,1.9)--(7.5,4.5)(24.6,7.8)--(26,5)--(28,5) (26,5)--(24.5,2.5)--(25.8,0)  (22,2.5)--(20.8,1.1)--(20.5,0)(22,8)--(21,5.2)--(22,2.5)--(24.5,2.5);\draw (7.6,1.9)--(11,1.8)--(13.5,.7)--(17.2,1.6)--(20.8,1.1)(11,1.8)--(13,3.7)--(12.2,5.2)(13,3.7)--(16.8,3.8)--(18.2,5.4)(16.8,3.8)--(17.2,1.6);} 
\,\,\tikz{\useasboundingbox(-.24,-.1)rectangle(.24,.3);\node at (0,.23) {$\scriptstyle \ID$};\node at (0,0) {$\prec$};}\,\,\,
\tikzmath[scale = .088]{\pgftransformxscale{.95}\useasboundingbox (.6,-10.5) rectangle (28.5,11);\clip[rounded corners = 12] (.6,-9.5) rectangle (28.5,9.5);\fill[gray!20] (0,-10) rectangle (30,10);\fill[gray] (5.3,1.1)--(3.5,3.5)--(5.2,5.6)--(7.5,4.5)--(7.5, 1.8)--cycle;\fill[gray] (10,6)--(10,8.3)--(12.5,10)--(15,8)--(14,6)--(12.2,5.2)-- cycle;\fill[gray] (18.2,5.4)-- (17,8)--(19.5,9.8)--(22,8)--(21,5.2)--cycle;\fill[gray] (26,5)--(24.6,7.8)-- (26.3,10)--(28.8,9.8)--(30,7.5)--(28,5)-- cycle;\fill[gray] (20.5,0)--(20.8,1.1)--(22,2.5)--(24.5,2.5)--(25.8,0)--(24.5,-2.5)--(22,-2.5)--(20.8,-1.1)--cycle;\fill[gray](3,10) -- (5,8) -- (7.5,10) -- cycle (3,-10) -- (5,-8) -- (7.5,-10) -- cycle (28,0) -- (29.6,2.4) -- (29.6,-2.4) -- cycle (0,1.3)--(1.5,3.5)--(0,6)--cycle (0,-1.3)--(1.5,-3.5)--(0,-6)--cycle;\draw (0,-1.3)--(0,1.3)(5.3,-1.1)--(5.3,1.1)(25.8,0)--(28,0);\draw (0,1.3)--(1.5,3.5)--(0,6)--(.7,9.1)--(3,10)--(5,8)--(7.5,10)--(10,8.3)--(12.5,10)--(15,8)--(17,8)--(19.5,9.8)--(22,8)--(24.6,7.8)--(26.3,10)--(28.8,9.8)--(30,7.5)--(28,5)--(29.6,2.4)--(28,0)(1.5,3.5)--(3.5,3.5)--(5.2,5.6)--(5,8)(5.2,5.6)--(7.5,4.5)--(10,6)--(10,8.3)(10,6)--(12.2,5.2)--(14,6)--(15,8) (17,8)--(18.2,5.4)--(21,5.2)(3.5,3.5)--(5.3,1.1)--(7.6,1.9)--(7.5,4.5)(24.6,7.8)--(26,5)--(28,5) (26,5)--(24.5,2.5)--(25.8,0)  (22,2.5)--(20.8,1.1)--(20.5,0)(22,8)--(21,5.2)--(22,2.5)--(24.5,2.5);\pgftransformyscale{-1}\fill[gray] (5.3,1.1)--(3.5,3.5)--(5.2,5.6)--(7.5,4.5)--(7.5, 1.8)--cycle;\fill[gray] (10,6)--(10,8.3)--(12.5,10)--(15,8)--(14,6)--(12.2,5.2)-- cycle;\fill[gray] (18.2,5.4)-- (17,8)--(19.5,9.8)--(22,8)--(21,5.2)--cycle;\fill[gray] (26,5)--(24.6,7.8)-- (26.3,10)--(28.8,9.8)--(30,7.5)--(28,5)-- cycle;\draw (0,1.3)--(1.5,3.5)--(0,6)--(.7,9.1)--(3,10)--(5,8)--(7.5,10)--(10,8.3)--(12.5,10)--(15,8)--(17,8)--(19.5,9.8)--(22,8)--(24.6,7.8)--(26.3,10)--(28.8,9.8)--(30,7.5)--(28,5)--(29.6,2.4)--(28,0)(1.5,3.5)--(3.5,3.5)--(5.2,5.6)--(5,8)(5.2,5.6)--(7.5,4.5)--(10,6)--(10,8.3)(10,6)--(12.2,5.2)--(14,6)--(15,8) (17,8)--(18.2,5.4)--(21,5.2)(3.5,3.5)--(5.3,1.1)--(7.6,1.9)--(7.5,4.5)(24.6,7.8)--(26,5)--(28,5) (26,5)--(24.5,2.5)--(25.8,0)  (22,2.5)--(20.8,1.1)--(20.5,0)(22,8)--(21,5.2)--(22,2.5)--(24.5,2.5);} 
\,\,\,\,\,\,\,\,\,\text{and}\,\,\,\,\,\,\,\,\,
\tikzmath[scale = .088]{\pgftransformxscale{.95}\useasboundingbox (.6,-10.5) rectangle (28.5,11);\clip[rounded corners = 12] (.6,-9.5) rectangle (28.5,9.5);\fill[gray!20] (0,-10) rectangle (30,10);
\useasboundingbox ($(current bounding box.north west)+(0,1)$) rectangle ($(current bounding box.south east)-(0,1)$);
\fill[gray] (5.3,1.1)--(3.5,3.5)--(5.2,5.6)--(7.5,4.5)--(7.5, 1.8)--cycle;\fill[gray] (10,6)--(10,8.3)--(12.5,10)--(15,8)--(14,6)--(12.2,5.2)-- cycle;\fill[gray] (18.2,5.4)-- (17,8)--(19.5,9.8)--(22,8)--(21,5.2)--cycle;\fill[gray] (26,5)--(24.6,7.8)-- (26.3,10)--(28.8,9.8)--(30,7.5)--(28,5)-- cycle;\fill[gray] (20.5,0)--(20.8,1.1)--(22,2.5)--(24.5,2.5)--(25.8,0)--(24.5,-2.5)--(22,-2.5)--(20.8,-1.1)--cycle;\fill[gray] (9.3,1.2)--(11.8,2.3)--(13.2,0)--(11.8,-2.3)--(9.3,-1.2)--cycle;\fill[gray] (14.9,0)--(15.5,2.8)--(17.5,2.8)--(18.5,0) --(17.5,-2.8)--(15.5,-2.8)--cycle;\draw (0,-1.3)--(0,1.3)(5.3,-1.1)--(5.3,1.1)(25.8,0)--(28,0);\fill[gray](3,10) -- (5,8) -- (7.5,10) -- cycle (3,-10) -- (5,-8) -- (7.5,-10) -- cycle (28,0) -- (29.6,2.4) -- (29.6,-2.4) -- cycle (0,1.3)--(1.5,3.5)--(0,6)--cycle (0,-1.3)--(1.5,-3.5)--(0,-6)--cycle;\draw (0,1.3)--(1.5,3.5)--(0,6)--(.7,9.1)--(3,10)--(5,8)--(7.5,10)--(10,8.3)--(12.5,10)--(15,8)--(17,8)--(19.5,9.8)--(22,8)--(24.6,7.8)--(26.3,10)--(28.8,9.8)--(30,7.5)--(28,5)--(29.6,2.4)--(28,0)(1.5,3.5)--(3.5,3.5)--(5.2,5.6)--(5,8)(5.2,5.6)--(7.5,4.5)--(10,6)--(10,8.3)(10,6)--(12.2,5.2)--(14,6)--(15,8) (17,8)--(18.2,5.4)--(21,5.2)(3.5,3.5)--(5.3,1.1)--(7.6,1.9)--(7.5,4.5)(24.6,7.8)--(26,5)--(28,5) (26,5)--(24.5,2.5)--(25.8,0)  (22,2.5)--(20.8,1.1)--(20.5,0)(22,8)--(21,5.2)--(22,2.5)--(24.5,2.5);\draw (7.6,1.9)--(9.3,1.2)--(9.3,0) (9.3,1.2)--(11.8,2.3)--(12.2,5.2)(11.8,2.3)--(13.2,0)--(14.9,0)--(15.5,2.8)--(17.5,2.8)--(18.5,0)--(20.5,0)(15.5,2.8)--(14,6) (17.5,2.8)--(18.2,5.4);\pgftransformyscale{-1}\fill[gray] (5.3,1.1)--(3.5,3.5)--(5.2,5.6)--(7.5,4.5)--(7.5, 1.8)--cycle;\fill[gray] (10,6)--(10,8.3)--(12.5,10)--(15,8)--(14,6)--(12.2,5.2)-- cycle;\fill[gray] (18.2,5.4)-- (17,8)--(19.5,9.8)--(22,8)--(21,5.2)--cycle;\fill[gray] (26,5)--(24.6,7.8)-- (26.3,10)--(28.8,9.8)--(30,7.5)--(28,5)-- cycle;\draw (0,1.3)--(1.5,3.5)--(0,6)--(.7,9.1)--(3,10)--(5,8)--(7.5,10)--(10,8.3)--(12.5,10)--(15,8)--(17,8)--(19.5,9.8)--(22,8)--(24.6,7.8)--(26.3,10)--(28.8,9.8)--(30,7.5)--(28,5)--(29.6,2.4)--(28,0)(1.5,3.5)--(3.5,3.5)--(5.2,5.6)--(5,8)(5.2,5.6)--(7.5,4.5)--(10,6)--(10,8.3)(10,6)--(12.2,5.2)--(14,6)--(15,8) (17,8)--(18.2,5.4)--(21,5.2)(3.5,3.5)--(5.3,1.1)--(7.6,1.9)--(7.5,4.5)(24.6,7.8)--(26,5)--(28,5) (26,5)--(24.5,2.5)--(25.8,0)  (22,2.5)--(20.8,1.1)--(20.5,0)(22,8)--(21,5.2)--(22,2.5)--(24.5,2.5);\draw (7.6,1.9)--(9.3,1.2)--(9.3,0) (9.3,1.2)--(11.8,2.3)--(12.2,5.2)(11.8,2.3)--(13.2,0)  (14.9,0)--(15.5,2.8)--(17.5,2.8)--(18.5,0)  (20.5,0)(15.5,2.8)--(14,6) (17.5,2.8)--(18.2,5.4);} 
\,\,\tikz{\useasboundingbox(-.24,-.1)rectangle(.24,.3);\node at (0,.23) {$\scriptstyle \ID$};\node at (0,0) {$\prec$};}\,\,\,
\tikzmath[scale = .088]{\pgftransformxscale{.95}\useasboundingbox (.6,-10.5) rectangle (28.5,11);\clip[rounded corners = 12] (.6,-9.5) rectangle (28.5,9.5);\fill[gray!20] (0,-10) rectangle (30,10);\fill[gray] (5.3,1.1)--(3.5,3.5)--(5.2,5.6)--(7.5,4.5)--(7.5, 1.8)--cycle;\fill[gray] (10,6)--(10,8.3)--(12.5,10)--(15,8)--(14,6)--(12.2,5.2)-- cycle;\fill[gray] (18.2,5.4)-- (17,8)--(19.5,9.8)--(22,8)--(21,5.2)--cycle;\fill[gray] (26,5)--(24.6,7.8)-- (26.3,10)--(28.8,9.8)--(30,7.5)--(28,5)-- cycle;\fill[gray] (20.5,0)--(20.8,1.1)--(22,2.5)--(24.5,2.5)--(25.8,0)--(24.5,-2.5)--(22,-2.5)--(20.8,-1.1)--cycle;\fill[gray](3,10) -- (5,8) -- (7.5,10) -- cycle (3,-10) -- (5,-8) -- (7.5,-10) -- cycle (28,0) -- (29.6,2.4) -- (29.6,-2.4) -- cycle (0,1.3)--(1.5,3.5)--(0,6)--cycle (0,-1.3)--(1.5,-3.5)--(0,-6)--cycle;\draw (0,-1.3)--(0,1.3)(5.3,-1.1)--(5.3,1.1)(25.8,0)--(28,0);\draw (0,1.3)--(1.5,3.5)--(0,6)--(.7,9.1)--(3,10)--(5,8)--(7.5,10)--(10,8.3)--(12.5,10)--(15,8)--(17,8)--(19.5,9.8)--(22,8)--(24.6,7.8)--(26.3,10)--(28.8,9.8)--(30,7.5)--(28,5)--(29.6,2.4)--(28,0)(1.5,3.5)--(3.5,3.5)--(5.2,5.6)--(5,8)(5.2,5.6)--(7.5,4.5)--(10,6)--(10,8.3)(10,6)--(12.2,5.2)--(14,6)--(15,8) (17,8)--(18.2,5.4)--(21,5.2)(3.5,3.5)--(5.3,1.1)--(7.6,1.9)--(7.5,4.5)(24.6,7.8)--(26,5)--(28,5) (26,5)--(24.5,2.5)--(25.8,0)  (22,2.5)--(20.8,1.1)--(20.5,0)(22,8)--(21,5.2)--(22,2.5)--(24.5,2.5);\pgftransformyscale{-1}\fill[gray] (5.3,1.1)--(3.5,3.5)--(5.2,5.6)--(7.5,4.5)--(7.5, 1.8)--cycle;\fill[gray] (10,6)--(10,8.3)--(12.5,10)--(15,8)--(14,6)--(12.2,5.2)-- cycle;\fill[gray] (18.2,5.4)-- (17,8)--(19.5,9.8)--(22,8)--(21,5.2)--cycle;\fill[gray] (26,5)--(24.6,7.8)-- (26.3,10)--(28.8,9.8)--(30,7.5)--(28,5)-- cycle;\draw (0,1.3)--(1.5,3.5)--(0,6)--(.7,9.1)--(3,10)--(5,8)--(7.5,10)--(10,8.3)--(12.5,10)--(15,8)--(17,8)--(19.5,9.8)--(22,8)--(24.6,7.8)--(26.3,10)--(28.8,9.8)--(30,7.5)--(28,5)--(29.6,2.4)--(28,0)(1.5,3.5)--(3.5,3.5)--(5.2,5.6)--(5,8)(5.2,5.6)--(7.5,4.5)--(10,6)--(10,8.3)(10,6)--(12.2,5.2)--(14,6)--(15,8) (17,8)--(18.2,5.4)--(21,5.2)(3.5,3.5)--(5.3,1.1)--(7.6,1.9)--(7.5,4.5)(24.6,7.8)--(26,5)--(28,5) (26,5)--(24.5,2.5)--(25.8,0)  (22,2.5)--(20.8,1.1)--(20.5,0)(22,8)--(21,5.2)--(22,2.5)--(24.5,2.5);} 
\]
for the same disk $\ID$, from which it follows that
\[
\tikzmath[scale = .12]{\pgftransformxscale{.95}\useasboundingbox (.6,-10.5) rectangle (28.5,11);
\clip[rounded corners = 12] (.6,-9.5) rectangle (28.5,9.5);\fill[gray!20] (0,-10) rectangle (30,10);
\fill[gray] (5.3,1.1)--(3.5,3.5)--(5.2,5.6)--(7.5,4.5)--(7.5, 1.8)--cycle (10,6)--(10,8.3)--(12.5,10)--(15,8)--(14,6)--(12.2,5.2)-- cycle (18.2,5.4)-- (17,8)--(19.5,9.8)--(22,8)--(21,5.2)--cycle (26,5)--(24.6,7.8)-- (26.3,10)--(28.8,9.8)--(30,7.5)--(28,5)-- cycle (20.5,0)--(20.8,1.1)--(22,2.5)--(24.5,2.5)--(25.8,0)--(24.5,-2.5)--(22,-2.5)--(20.8,-1.1)--cycle (13.5,.7)--(11,1.8)--(13,3.7)--(16.8,3.8)--(17.2,1.6)--cycle;\fill[gray](3,10) -- (5,8) -- (7.5,10) -- cycle (3,-10) -- (5,-8) -- (7.5,-10) -- cycle (28,0) -- (29.6,2.4) -- (29.6,-2.4) -- cycle (0,1.3)--(1.5,3.5)--(0,6)--cycle (0,-1.3)--(1.5,-3.5)--(0,-6)--cycle;
\draw (0,-1.3)--(0,1.3)(5.3,-1.1)--(5.3,1.1)(25.8,0)--(28,0)(13.5,.7)--(13.5,-.7);
\draw (0,1.3)--(1.5,3.5)--(0,6)--(.7,9.1)--(3,10)--(5,8)--(7.5,10)--(10,8.3)--(12.5,10)--(15,8)--(17,8)--(19.5,9.8)--(22,8)--(24.6,7.8)--(26.3,10)--(28.8,9.8)--(30,7.5)--(28,5)--(29.6,2.4)--(28,0)(1.5,3.5)--(3.5,3.5)--(5.2,5.6)--(5,8)(5.2,5.6)--(7.5,4.5)--(10,6)--(10,8.3)(10,6)--(12.2,5.2)--(14,6)--(15,8) (17,8)--(18.2,5.4)--(21,5.2)(3.5,3.5)--(5.3,1.1)--(7.6,1.9)--(7.5,4.5)(24.6,7.8)--(26,5)--(28,5) (26,5)--(24.5,2.5)--(25.8,0)  (22,2.5)--(20.8,1.1)--(20.5,0)(22,8)--(21,5.2)--(22,2.5)--(24.5,2.5);\draw (7.6,1.9)--(11,1.8)--(13.5,.7)--(17.2,1.6)--(20.8,1.1)(11,1.8)--(13,3.7)--(12.2,5.2)(13,3.7)--(16.8,3.8)--(18.2,5.4)(16.8,3.8)--(17.2,1.6);
\pgftransformyscale{-1}\fill[gray] (5.3,1.1)--(3.5,3.5)--(5.2,5.6)--(7.5,4.5)--(7.5, 1.8)--cycle;\fill[gray] (10,6)--(10,8.3)--(12.5,10)--(15,8)--(14,6)--(12.2,5.2)-- cycle;\fill[gray] (18.2,5.4)-- (17,8)--(19.5,9.8)--(22,8)--(21,5.2)--cycle;\fill[gray] (26,5)--(24.6,7.8)-- (26.3,10)--(28.8,9.8)--(30,7.5)--(28,5)-- cycle;\fill[gray] (13.5,.7)--(11,1.8)--(13,3.7)--(16.8,3.8)--(17.2,1.6)--cycle;\draw (0,1.3)--(1.5,3.5)--(0,6)--(.7,9.1)--(3,10)--(5,8)--(7.5,10)--(10,8.3)--(12.5,10)--(15,8)--(17,8)--(19.5,9.8)--(22,8)--(24.6,7.8)--(26.3,10)--(28.8,9.8)--(30,7.5)--(28,5)--(29.6,2.4)--(28,0)(1.5,3.5)--(3.5,3.5)--(5.2,5.6)--(5,8)(5.2,5.6)--(7.5,4.5)--(10,6)--(10,8.3)(10,6)--(12.2,5.2)--(14,6)--(15,8) (17,8)--(18.2,5.4)--(21,5.2)(3.5,3.5)--(5.3,1.1)--(7.6,1.9)--(7.5,4.5)(24.6,7.8)--(26,5)--(28,5) (26,5)--(24.5,2.5)--(25.8,0)  (22,2.5)--(20.8,1.1)--(20.5,0)(22,8)--(21,5.2)--(22,2.5)--(24.5,2.5);\draw (7.6,1.9)--(11,1.8)--(13.5,.7)--(17.2,1.6)--(20.8,1.1)(11,1.8)--(13,3.7)--(12.2,5.2)(13,3.7)--(16.8,3.8)--(18.2,5.4)(16.8,3.8)--(17.2,1.6);} 
\tikzmath{\useasboundingbox (-.6,-.7) rectangle (.6,.5); \node[scale=1.2]{$\overset\ID\sim$};} 
\tikzmath[scale = .12]{\pgftransformxscale{.95}\useasboundingbox (.6,-10.5) rectangle (28.5,11);
\clip[rounded corners = 12] (.6,-9.5) rectangle (28.5,9.5);
\fill[gray!20] (0,-10) rectangle (30,10);
\fill[gray]
(5.3,1.1)--(3.5,3.5)--(5.2,5.6)--(7.5,4.5)--(7.6, 1.9)--cycle (10,6)--(10,8.3)--(12.5,10)--(15,8)--(14,6)--(12.2,5.2)-- cycle (18.2,5.4)-- (17,8)--(19.5,9.8)--(22,8)--(21,5.2)--cycle (26,5)--(24.6,7.8)-- (26.3,10)--(28.8,9.8)--(30,7.5)--(28,5)-- cycle (20.5,0)--(20.8,1.1)--(22,2.5)--(24.5,2.5)--(25.8,0)--(24.5,-2.5)--(22,-2.5)--(20.8,-1.1)--cycle (9.3,1.2)--(11.8,2.3)--(13.2,0)--(11.8,-2.3)--(9.3,-1.2)--cycle (14.9,0)--(15.5,2.8)--(17.5,2.8)--(18.5,0) --(17.5,-2.8)--(15.5,-2.8)--cycle
(3,10) -- (5,8) -- (7.5,10) -- cycle (3,-10) -- (5,-8) -- (7.5,-10) -- cycle (28,0) -- (29.6,2.4) -- (29.6,-2.4) -- cycle (0,1.3)--(1.5,3.5)--(0,6)--cycle (0,-1.3)--(1.5,-3.5)--(0,-6)--cycle;
;
\draw (0,-1.3)--(0,1.3)(5.3,-1.1)--(5.3,1.1)(25.8,0)--(28,0);\draw (0,1.3)--(1.5,3.5)--(0,6)--(.7,9.1)--(3,10)--(5,8)--(7.5,10)--(10,8.3)--(12.5,10)--(15,8)--(17,8)--(19.5,9.8)--(22,8)--(24.6,7.8)--(26.3,10)--(28.8,9.8)--(30,7.5)--(28,5)--(29.6,2.4)--(28,0)(1.5,3.5)--(3.5,3.5)--(5.2,5.6)--(5,8)(5.2,5.6)--(7.5,4.5)--(10,6)--(10,8.3)(10,6)--(12.2,5.2)--(14,6)--(15,8) (17,8)--(18.2,5.4)--(21,5.2)(3.5,3.5)--(5.3,1.1)--(7.6,1.9)--(7.5,4.5)(24.6,7.8)--(26,5)--(28,5) (26,5)--(24.5,2.5)--(25.8,0)  (22,2.5)--(20.8,1.1)--(20.5,0)(22,8)--(21,5.2)--(22,2.5)--(24.5,2.5);\draw (7.6,1.9)--(9.3,1.2)--(9.3,0) (9.3,1.2)--(11.8,2.3)--(12.2,5.2)(11.8,2.3)--(13.2,0)--(14.9,0)--(15.5,2.8)--(17.5,2.8)--(18.5,0)--(20.5,0)(15.5,2.8)--(14,6) (17.5,2.8)--(18.2,5.4);
\pgftransformyscale{-1}\fill[gray] (5.3,1.1)--(3.5,3.5)--(5.2,5.6)--(7.5,4.5)--(7.6, 1.9)--cycle;\fill[gray] (10,6)--(10,8.3)--(12.5,10)--(15,8)--(14,6)--(12.2,5.2)-- cycle;\fill[gray] (18.2,5.4)-- (17,8)--(19.5,9.8)--(22,8)--(21,5.2)--cycle;\fill[gray] (26,5)--(24.6,7.8)-- (26.3,10)--(28.8,9.8)--(30,7.5)--(28,5)-- cycle;\draw (0,1.3)--(1.5,3.5)--(0,6)--(.7,9.1)--(3,10)--(5,8)--(7.5,10)--(10,8.3)--(12.5,10)--(15,8)--(17,8)--(19.5,9.8)--(22,8)--(24.6,7.8)--(26.3,10)--(28.8,9.8)--(30,7.5)--(28,5)--(29.6,2.4)--(28,0)(1.5,3.5)--(3.5,3.5)--(5.2,5.6)--(5,8)(5.2,5.6)--(7.5,4.5)--(10,6)--(10,8.3)(10,6)--(12.2,5.2)--(14,6)--(15,8) (17,8)--(18.2,5.4)--(21,5.2)(3.5,3.5)--(5.3,1.1)--(7.6,1.9)--(7.5,4.5)(24.6,7.8)--(26,5)--(28,5) (26,5)--(24.5,2.5)--(25.8,0)  (22,2.5)--(20.8,1.1)--(20.5,0)(22,8)--(21,5.2)--(22,2.5)--(24.5,2.5);\draw (7.6,1.9)--(9.3,1.2)--(9.3,0) (9.3,1.2)--(11.8,2.3)--(12.2,5.2)(11.8,2.3)--(13.2,0)  (14.9,0)--(15.5,2.8)--(17.5,2.8)--(18.5,0)  (20.5,0)(15.5,2.8)--(14,6) (17.5,2.8)--(18.2,5.4);} 
\,\,\,\,\,\,\,\,\,\,\text{with}\,\,\,\,\,\,\, \ID\,\,=\,\,\, \tikzmath[scale = .1]{\pgftransformxscale{.95}\filldraw[fill=gray!20] (5.3,1.1)--(7.6,1.9)--(7.5,4.5)--(10,6)--(12.2,5.2)--(14,6)--(15,8)--(17,8)--(18.2,5.4)--(21,5.2)--(22,2.5)--(20.8,1.1)--(20.5,0)--(20.8,-1.1)--(22,-2.5)--(21,-5.2)--(18.2,-5.4)--(17,-8)--(15,-8)--(14,-6)--(12.2,-5.2)--(10,-6)--(7.5,-4.5)--(7.6,-1.9)--(5.3,-1.1) --cycle;} 
\,\,\,.\vspace{.05cm}
\]

Given a set $S=\{\ID_0,\ldots,\ID_n\}$ of discs with disjoint interiors, we also write $X\overset S \sim Y$, or simply $X \sim Y$, if 
there exist soccer ball decompositions $Z_1,\ldots,Z_n$ such that
\begin{equation}\label{eq: XZZZZZZZY}
X\,\overset{\ID_0}\sim\, Z_1\,\overset{\ID_1}\sim\,\ldots\overset{\ID_{n-1}}\sim Z_n\,\overset{\ID_n}\sim\, Y.
\end{equation}
Finally, we write $X\tikz{\useasboundingbox(-.28,-.1)rectangle(.28,.3);\node at (0,.22) {$\scriptstyle S$};\node at (0,0) {$\prec$};} Y$, or simply $X\prec Y$, 
if $S=\{\ID_0,\ldots,\ID_n\}$ is a collection of white cells of $Y$ and
$X \tikz{\useasboundingbox(-.28,-.1)rectangle(.28,.3);\node at (0,.23) {$\scriptstyle \ID_0$};\node at (0,0) {$\prec$};} Z_1
\tikz{\useasboundingbox(-.28,-.1)rectangle(.28,.3);\node at (0,.23) {$\scriptstyle \ID_1$};\node at (0,0) {$\prec$};} \ldots
\tikz{\useasboundingbox(-.3,-.1)rectangle(.3,.3);\node at (0,.23) {$\scriptstyle \ID_{n-1}$};\node at (0,0) {$\prec$};} Z_n
\tikz{\useasboundingbox(-.28,-.1)rectangle(.28,.3);\node at (0,.23) {$\scriptstyle \ID_n$};\node at (0,0) {$\prec$};} Y$
for some choice $Z_1,\ldots,Z_n$ of soccer ball decompositions.
Note that $\prec$ is a transitive relation.
\end{definition}

If $X\tikz{\useasboundingbox(-.24,-.1)rectangle(.24,.3);\node at (0,.23) {$\scriptstyle \ID$};\node at (0,0) {$\prec$};} Y$, we can construct a unitary isomorphism
\begin{equation}\label{eq: def: little phi}
\phi_{XY}:V(\Sigma;X)\to V(\Sigma;Y),
\end{equation}
well defined up to phase.
Let us number the set $X_{\mathrm{white}}=\{\ID_1,\ldots,\ID_n\}$ of white cells of $X$ in such a way that $\ID_i\in Y_{\mathrm{white}}$ for $i\le p$ and $\ID_i\subset \ID$ for $i>p$.
Similarly, we number the set $X_{\mathrm{black}}=\{B_1,\ldots,B_m\}$ so as to have $B_j\in Y_{\mathrm{black}}$ for $j\le q$ and $B_j\subset \ID$ for $j>q$.
Letting $M:=\partial\ID\cap(\ID_1\cup\ldots\cup\ID_p)$,
the map $\phi_{XY}$ is then defined as the composite
\[
\begin{split}
&\phi_{XY}\,:\,\,\,V(\Sigma;X)=\bigg(\bigboxtimes_{\{\cala(\ID_i\cap \ID_j)\}} \big\{H_0(\partial\ID_i)\big\}_{i,j\le n}\bigg)\boxtimes_{\cala(\bigcup_{k\le m}\partial B_k)}\Big(\textstyle\underset{\scriptscriptstyle k\le m}\bigotimes H_0(\partial B_k)\Big)\\
&\!\stackrel\cong\longrightarrow\,\bigg(\Big(\bigboxtimes_{\{\cala(\ID_i\cap \ID_j)\}} \big\{H_0(\partial\ID_i)\big\}_{i,j\le p}
\Big)\underset{\cala(M)}\boxtimes\Big(\bigboxtimes_{\{\cala(\ID_i\cap \ID_j)\}} \big\{H_0(\partial\ID_i)\big\}_{p<i,j\le n}\Big)\\
&\hspace{2.5cm}\boxtimes_{\cala(\bigcup_{q<k\le m}\partial B_k)}\Big(\textstyle\underset{\scriptscriptstyle q<k\le m}\bigotimes H_0(\partial B_k)\Big)\bigg)
\,\boxtimes_{\cala(\bigcup_{k\le q}\partial B_k)}\Big(\textstyle\underset{\scriptscriptstyle k\le q}\bigotimes H_0(\partial B_k)\Big)\\
&\!\stackrel\cong\longrightarrow\,\bigg(\Big(\bigboxtimes_{\{\cala(\ID_i\cap \ID_j)\}} \big\{H_0(\partial\ID_i)\big\}_{i,j\le p}\Big)
\underset{\cala(M)}\boxtimes H_0(\partial\ID)\bigg)
\,\boxtimes_{\cala(\bigcup_{k\le q}\partial B_k)}\Big(\textstyle\underset{\scriptscriptstyle k\le q}\bigotimes H_0(\partial B_k)\Big)\\
&\hspace{10.6cm}=V(\Sigma;Y),
\end{split}
\]
where the first isomorphism is an instance of \eqref{eq: ass for graph fusion}, and the second one follows from Corollary \ref{cor: V(disc) independent of cell decomposition}.
If $X$, $Y$, $Z$ are soccer ball decompositions such that 
$X\tikz{\useasboundingbox(-.24,-.1)rectangle(.24,.3);\node at (0,.23) {$\scriptstyle \ID$};\node at (0,0) {$\prec$};} Y
\tikz{\useasboundingbox(-.24,-.1)rectangle(.24,.3);\node at (0.01,.23) {$\scriptstyle \ID'$};\node at (0,0) {$\prec$};} Z$ and
$X\tikz{\useasboundingbox(-.24,-.1)rectangle(.24,.3);\node at (0.01,.23) {$\scriptstyle \ID'$};\node at (0,0) {$\prec$};} Z$,
then it follows from Corollary \ref{cor: V(disc) independent of cell decomposition} that
\[
\phi_{YZ}\circ\phi_{XY}=\phi_{XZ}
\]
up to phase.
If two soccer ball decompositions are related by $X\prec Y$ then, by composing suitable instances of \eqref{eq: def: little phi}, we obtain an induced unitary isomorphism
$\phi_{XY}:V(\Sigma;X)\to V(\Sigma;Y)$, well defined up to phase.
As before, whenever $X\prec Y\prec Z$, those isomorphisms satisfy $\phi_{YZ}\circ\phi_{XY}=\phi_{XZ}$ up to phase.

If $X$ and $Y$ are soccer ball decompositions related by $X\overset\ID\sim Y$, then we also define
\begin{equation}\label{eq: PHI_D}
\Phi_\ID\,:\,V(\Sigma;X)\to V(\Sigma;Y)
\end{equation}
to be the composite $\phi_{YZ}^{-1}\circ \phi_{XZ}$,
where $Z$ is such that $X\tikz{\useasboundingbox(-.24,-.1)rectangle(.24,.3);\node at (0,.23) {$\scriptstyle \ID$};\node at (0,0) {$\prec$};} Z
\tikz{\useasboundingbox(-.24,-.1)rectangle(.24,.3);\node at (0,.23) {$\scriptstyle \ID$};\node at (0,0) {$\succ$};}  Y$.
The map $\Phi_\ID$ is well defined up to phase but, a priori, it depends on the choice of disk $\ID$.

\begin{maintheorem}\label{Mthm: conformal blocks}
Given a conformal net $\cala$ with finite index, there exists a construction
\begin{equation}\label{eq: main thm eq1}
\qquad\Sigma\,\,\mapsto\, V(\Sigma)\in\Rep_{\partial\Sigma}(\cala),
\end{equation}
that assigns, to every compact oriented topological surface $\Sigma$ with smooth boundary, a sector $V(\Sigma)=V(\Sigma;\cala)$,
well defined up to canonical-up-to-phase unitary isomorphism.
Moreover, if the surface $\Sigma$ is a disc, there exists a unitary isomorphism, canonical up to phase, between $V(\Sigma)$ and the vacuum sector $H_0(\partial\Sigma)$.

Given surfaces $\Sigma_1$, $\Sigma_2$ as above, and a homeomorphism $f:\Sigma_1\to \Sigma_2$ that is 
either orientation preserving or orientation reversing (which is automatic if $\Sigma_1$ is connected), and smooth on the boundary,
there is an induced map
\[
V(f):V(\Sigma_1)\to V(\Sigma_2),
\]
well defined up to phase.
The map $V(f)=V(f,\cala)$ is complex linear if $f$ is orientation preserving and complex antilinear if $f$ is orientation reversing.
The above construction satisfies $V(f\circ g)=V(f)\circ V(g)$ up to phase for composable $f$ and $g$.
Moreover, if $f_1,f_2:\Sigma_1\to \Sigma_2$ are isotopic relative to $\partial \Sigma_1$,
then $V(f_1)=V(f_2)$, up to phase.

Given an interval $I\subset \partial \Sigma_1$ and an element $a\in\cala(I)$,
the map $V(f)$ satisfies $V(f)a=\cala(f|_I)(a)V(f)$ if $f$ is orientation preserving and $V(f)a=\cala(f|_I)(a^*)V(f)$ if $f$ is orientation reversing.

Finally, the map $f\mapsto V(f)$ is continuous for the natural topology on the set of homeomorphisms $f:\Sigma_1\to \Sigma_2$ that are smooth on the boundary modulo isotopy rel boundary\,\footnote{The topology here is the quotient topology on
$\mathrm{Pullback}\,(\mathrm{Homeo}(\Sigma_1,\Sigma_2)\to\mathrm{Homeo}(\partial\Sigma_1,\partial\Sigma_2)\leftarrow\Diff(\partial\Sigma_1,\partial\Sigma_2))/\sim$
where $\mathrm{Homeo}$ is equipped with the $\mathcal C^0$ topology, and $\Diff$ is equipped with the $\mathcal C^\infty$ topology}, and the strong topology on $\PU_\pm(V(\Sigma))$.
\end{maintheorem}

Part of the content of the above theorem can be understood as follows.
Let $\mathsf{2MAN}$ be the following groupoid:
\begin{equation}\label{eq: def of 2MAN}
\mathsf{2MAN}:=\begin{cases}
\,\,\,\text{Objects:}\,\,\,\,\,\,\parbox{9cm}{Compact oriented topological surfaces equipped with a smooth structure on their boundary.}
\\\\
\text{Morphisms:}\,\,\,\parbox{9cm}{Isotopy classes of homeomorphisms that are either orientation preserving or orientation reversing, and that are smooth on the boundary, where the isotopies are taken relative to the boundary.}
\end{cases}
\end{equation}
and let $\mathsf{pHILB}$ be the groupoid whose objects are complex Hilbert space, and whose morphisms are the projective unitary and the projective antiunitary maps.
Then there is a symmetric monoidal continuous functor\footnote{The comments in Footnote \ref{footnote: BZ/2} also apply to the functor $V$.}
\begin{equation}\label{eq: 2MAN --> pHILB}
V:\mathsf{2MAN}\to \mathsf{pHILB}.
\end{equation}
It is interesting to note that in the groupoid $\mathsf{2MAN}$, the automorphism group of the unit disk is $\Diff(S^1)$, and so $V$ recovers 
the projective action of $\Diff(S^1)$ on the vacuum sector of $\cala$.
At another extreme, the automorphism group of a closed surface $\Sigma$ is its mapping class group.
We will see in Section \ref{sec: Conformal blocks} that in the case of a closed surface the Hilbert space $V(\Sigma)$ is finite dimensional.
We conjecture that it agrees with the spaces of conformal blocks, as defined in conformal field theory.
By applying the functor $V$, one therefore recovers the projective action of the mapping class group on the spaces of conformal blocks.
For a general surface, the automorphism group $G(\Sigma):=\aut_\mathsf{2MAN}(\Sigma)$ fits into a short exact sequence
\begin{equation}\label{def: G(Sigma)}
1\,\to\, \Gamma(\Sigma)\,\to\, G(\Sigma)\,\to\, D\,\to\, 1
\end{equation}
of topological groups, where $D$ is the finite index subgroup of diffeomorphisms of $\partial \Sigma$ that admit an extension to all of $\Sigma$, and $\Gamma(\Sigma)$ is the mapping class group of $\Sigma$ relative to its boundary, equipped with the discrete topology.
This group $G(\Sigma)$ was already considered in \cite[Thm.~2.11]{Posthuma(The-Heisenberg-group-and-conformal-field-theory)};
we call it the \emph{extended mapping class group} of $\Sigma$.
As a corollary of the above theorem,
the extended mapping class group $G(\Sigma)$ admits a continuous projective (anti)unitary action on the Hilbert space $V(\Sigma)$.

\begin{proof}[Proof of Theorem \ref{Mthm: conformal blocks}]
We apply the strategy outlined in Section \ref{sec: ``up to non-canonical isomorphism''}, and first construct the definition complex $\IX_\Sigma$ associated to $V(\Sigma)$.
The vertices of $\IX_\Sigma$ are the soccer ball decompositions of $\Sigma$.
The edges between two vertices $X$ and $Y$ are the discs $\ID\subset \Sigma$ such that $X\stackrel \ID\sim Y$.
Finally, there is a 2-cell of $\IX_\Sigma$ attached to the loop
\begin{equation}\label{eq: 2-cell in X_S}
\tikzmath[scale=.8]{\useasboundingbox (-1,-.9) rectangle (8.5,.5);
\node (a) at (0,0) {$X_1$};\node (b) at (2,0) {$X_2$};\node (c) at (4,0) {$X_3$};\node (d) at (7,0) {$X_n$};\node[yshift=-.2] at ($(c)!.5!(d)$) {$\scriptstyle \cdots$};
\draw (a) --node[above]{$\scriptstyle \ID_1$} (b);\draw (b) --node[above]{$\scriptstyle \ID_2$} (c);\draw (c) -- +(1.2,0)(d) -- +(-1.2,0);
\draw[rounded corners=6.8] (d) --node[above]{$\scriptstyle \ID_n$} (8.5,0) -- (8.5,-.6) -- (-1,-.6) -- (-1,0) -- (a);}
\end{equation}
if there exists a soccer ball decomposition $Z$ such that $X_i\prec Z$ for all $i$,
and such that every $\ID_i$ is contained in some white cell of $Z$.

We have seen in \eqref{eq: V(Sigma;X)} how to construct a sector $V(\Sigma;X)\in\Rep_{\partial\Sigma}(\cala)$ for every vertex $X$ of $\IX_\Sigma$,
and we have seen in \eqref{eq: PHI_D} how to construct a unitary isomorphism $\Phi_\ID:V(\Sigma;X)\to V(\Sigma;Y)$ for every edge $\ID$ of $\IX_\Sigma$.
Given a 2-cell \eqref{eq: 2-cell in X_S}, let $Y_i$ and $Z$ be soccer ball decompositions such that 
$X_i\tikz{\useasboundingbox(-.28,-.1)rectangle(.28,.3);\node at (0,.23) {$\scriptstyle \ID_i$};\node at (0,0) {$\prec$};} Y_i
\tikz{\useasboundingbox(-.28,-.1)rectangle(.28,.3);\node at (0,.23) {$\scriptstyle \ID_i$};\node at (0,0) {$\succ$};}  X_{i+1}$
and $Y_i\prec Z$.
The above relations induce diagrams
\[
\tikzmath[scale=1.2]{\node (a) at (-1.5,1) {$V(\Sigma;X_i)$};\node (b) at (1.5,1) {$V(\Sigma;Y_i)$};\node (c) at (0,0) {$V(\Sigma;Z)$};
\draw[->] (a) --node[above]{$\scriptstyle\phi_{X_i,Y_i}$} (b);\draw[->] (a) --node[below, pos=.3, xshift=-6]{$\scriptstyle\phi_{X_i,Z}$} (c);\draw[->] (b) --node[below, pos=.3, xshift=8]{$\scriptstyle\phi_{Y_i,Z}$} (c);
}\quad\text{and}\quad
\tikzmath[scale=1.2]{\node (a) at (-1.5,1) {$V(\Sigma;Y_i)$};\node (b) at (1.5,1) {$V(\Sigma;X_{i+1})$};\node (c) at (0,0) {$V(\Sigma;Z)$};
\draw[<-] (a) --node[above]{$\scriptstyle\phi_{Y_i,X_{i+1}}$} (b);\draw[->] (a) --node[below, pos=.3, xshift=-6]{$\scriptstyle\phi_{Y_i,Z}$} (c);\draw[->] (b) --node[below, pos=.3, xshift=12]{$\scriptstyle\phi_{X_{i+1},Z}$} (c);
}
\smallskip\]
that commute up to phase.
When assembled together, they show that the composite
\[
V(\Sigma;X_1)\xrightarrow{\,\,\,\Phi_{\ID_1}\,\,\,} V(\Sigma;X_2)\xrightarrow{\,\,\,\Phi_{\ID_2}\,\,\,} \,\,\,\cdots\,\,\,
\xrightarrow{\!\Phi_{\ID_{n-1}}\!\!}\, V(\Sigma;X_n)\xrightarrow{\,\,\Phi_{\ID_n}\,\,}V(\Sigma;X_1)
\]
is a scalar multiple of the identity, as desired.

To finish showing that $V(\Sigma)$ is well defined up to canonical-up-to-phase unitary isomorphism, we need to prove that $\IX_\Sigma$ is simply connected:
this is the content of Lemma~\ref{lem: definition complex simply connected}.
Finally, if $\Sigma$ is a disc, there is a unitary isomorphism canonical up to phase $V(\Sigma)\cong H_0(\partial\Sigma)$ by Corollary \ref{cor: V(disc) independent of cell decomposition}.

Given a homeomorphism $f:\Sigma_1\to \Sigma_2$ that is smooth on the boundary,
a soccer ball decomposition $X$ of $\Sigma_1$ induces a corresponding soccer ball decomposition $f_*X$ of $\Sigma_2$.
If $\ID$ is a cell of $X$, then the identity map from $H_0(\partial\ID)\in\Rep_{\partial\ID}(\cala)$ to $(f^{-1})^*(H_0(\partial\ID))\in\Rep_{f(\partial\ID)}(\cala)$
is complex linear if $f$ is orientation preserving, and complex antilinear otherwise.
The sector $(f^{-1})^*(H_0(\partial\ID))$ is isomorphic to $H_0(f(\partial\ID))$, and so we get a map
$H_0(f):H_0(\partial\ID)\to H_0(f(\partial\ID))$, canonical up to phase.
By construction, these maps satisfy
\begin{equation}\label{eq: H_0(f)a=}
\begin{cases}
H_0(f)a=
\cala(f|_I)(a)H_0(f)&\text{ if $f$ is orientation preserving }\\
H_0(f)a=
\cala(f|_I)(a^*)H_0(f)&\text{ if $f$ is orientation reversing }
\end{cases}
\end{equation}
for every interval $I\subset \partial\ID$ and $a\in \cala(I)$.

Applying the maps $H_0(f)$ to every occurrence of a vacuum sector in \eqref{eq: graph + fill holes}, we get an induced map $V(f;X):V(\Sigma_1;X)\to V(\Sigma_2;f_*X)$
which satisfies the same equivariance properties \eqref{eq: H_0(f)a=} as the maps $H_0(f)$ used in its construction, but now for $I\subset \partial \Sigma$.
For every edge $X\overset\ID\sim Y$ of the definition complex $\IX_{\Sigma_1}$,
there is a corresponding edge $f_*X\!\overset{f(\ID)}\sim\! f_*Y$ of $\IX_{\Sigma_2}$.
The diagrams
\[
\tikzmath{ \matrix [matrix of math nodes,column sep=1.6cm,row sep=.7cm]
{ |(a)| V(\Sigma_1;X) \pgfmatrixnextcell |(c)| V(\Sigma_2;f_*X)\\
|(b)|  V(\Sigma_1;Y) \pgfmatrixnextcell |(d)| V(\Sigma_2;f_*Y)\\ }; 
\draw[->] (a) -- node [left]	{$\scriptstyle \Phi_\ID$} (b); \draw[->] (c) -- node [right]	{$\scriptstyle \Phi_{f(\ID)}$} (d);
\draw[->] (a) --node[above]{$\scriptstyle V(f;X)$} (c); \draw[->] (b) --node[above]{$\scriptstyle V(f;Y)$} (d); }
\]
are commutative, and so the maps $V(f;X)$ induce a map $V(f):V(\Sigma_1)\to V(\Sigma_2)$, well defined up to phase.
The relation $V(f\circ g)=V(f)\circ V(g)$ is immediate from the definition.

Next, we need to show that if $f$ is isotopic to the identity, then $V(f)=\mathrm{Id}_{V(\Sigma)}$.
Write $f$ as a composition of homeomorphisms $f_1\circ\ldots\circ f_n$ such that for each $i$ there is a disk $D_i\subset \Sigma$ that contains the support of $f_i$.
Since $V(f)=V(f_1)\circ\ldots\circ V(f_n)$,
it is enough to show that each $V(f_i)$ is the identity.
Pick a soccer ball decomposition $X_i$ whose 1-skeleton does not intersect $D_i$.
The definition of $V(\Sigma;X_i)$ only uses the 1-skeleton of $X_i$, and so, clearly, $V(f_i;X_i):V(\Sigma;X_i)\to V(\Sigma;X_i)$ is the identity.
If follows that $V(f_i):V(\Sigma)\to V(\Sigma)$ is the identity, and therefore so is $V(f)$.

Finally, we show that the assignment $f\mapsto V(f)$ is continuous, which is
equivalent to showing that for every surface $\Sigma$, the homomorphism
\[
G(\Sigma)\to\PU_\pm(V(\Sigma))
\]
is continuous, where $G(\Sigma)$ is the group defined in \eqref{def: G(Sigma)}.
Let $S_1,\ldots,S_n$ be the boundary components of $\Sigma$, and let $\widetilde\Diff_+(S_i)$ denote the universal cover of $\Diff_+(S_i)$.
Then $\prod \widetilde\Diff_+(S_i)$ admits a natural homomorphism to $G(\Sigma)$, which is a homeomorphism on a neighborhood of the identity.
It is therefore enough to check the continuity of the map $\prod \widetilde\Diff_+(S_i)\to \PU_\pm(V(\Sigma))$.
Equivalently, we have to show that each map $\widetilde\Diff_+(S_i)\to \PU_\pm(V(\Sigma))$ is continuous.
For every choice of intervals $K\subset I\subset S_i$, let us write $\Diff_{0,K}(I)$ for the group of diffeomorphisms of $I$ that fix the complement of $K$ pointwise.
The topology on $\widetilde\Diff_+(S_i)$ is the finest one for which all the inclusions $\Diff_{0,K}(I)\hookrightarrow\Diff(S_i)$ are continuous.
It is therefore sufficient to show that the homomorphisms $\Diff_{0,K}(I)\to \PU(V(\Sigma))$ are continuous.
To finish the argument, we note that the above homomorphisms can be factored as
\[
\Diff_{0,K}(I)\to \PU(\cala(I))\to\PU(V(\Sigma)),
\]
where the first map is continuous by \cite[Lemma 2.11]{BDH(nets)}, and the second one is continuous because $V(\Sigma)$ is a $\partial\Sigma$-sector.
\end{proof}

\begin{lemma}\label{lem: definition complex simply connected}
The cell complex $\IX_\Sigma$ is connected and simply connected.
\end{lemma}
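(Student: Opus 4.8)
The plan is to exploit two structural facts and then squeeze every loop into the situation that a single $2$-cell fills directly. First, a one-cell refinement is already an edge: if $X\overset{\ID}{\prec}Y$, then taking $Z=Y$ together with the tautological trivial refinement $Y\overset{\ID}{\prec}Y$ exhibits $X\overset{\ID}{\sim}Y$; hence every refinement relation $X\prec Y$ is realized by a path of refinement edges, and the isomorphisms $\phi_{XY}$ along such paths are compatible by Corollary~\ref{cor: V(disc) independent of cell decomposition}. Second, and decisively, the attaching rule for $2$-cells \eqref{eq: 2-cell in X_S} gives a strong filling principle: \emph{any} loop $X_1\overset{\ID_1}{\sim}\cdots\overset{\ID_n}{\sim}X_1$ all of whose vertices refine one common soccer ball decomposition $Z$, with each disc $\ID_i$ contained in a white cell of $Z$, bounds a single $2$-cell. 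Note also that refinement preserves the white/black structure (only white cells are subdivided), so if $A\prec B$ then every white cell of $A$ sits inside a white cell of $B$; this nesting is what will let us verify the ``discs lie in white cells of $Z$'' hypothesis automatically.

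The geometric heart of the matter is a \emph{common refinement lemma}: any finite family of soccer ball decompositions of $\Sigma$ admits (possibly after replacing each member by an edge-equivalent one that normalizes the positions of its black cells) a soccer ball decomposition $W$ with $W\prec X^{(s)}$ for every $s$. Granting this, connectedness is immediate: given $X,X'$, choose $W\prec X$ and $W\prec X'$, and concatenate the refinement edge-paths $W\rightsquigarrow X$ and $W\rightsquigarrow X'$.

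For simple connectivity I would proceed in two reductions. \emph{Reduction to refinement-only loops.} Given a loop $\gamma=\big(X_0\overset{\ID_1}{\sim}X_1\overset{\ID_2}{\sim}\cdots\overset{\ID_k}{\sim}X_0\big)$, each edge carries a witness $Z_i$ with $X_{i-1}\overset{\ID_i}{\prec}Z_i\overset{\ID_i}{\succ}X_i$. The triangle $X_{i-1}\to Z_i\to X_i$ has all three vertices refining $Z_i$ and all three discs equal to $\ID_i\in (Z_i)_{\mathrm{white}}$, so it is a $2$-cell; replacing each edge by the two refinement edges through $Z_i$ shows $\gamma$ is homotopic to a loop $\gamma''$ built entirely from refinement edges. \emph{Contraction of $\gamma''$.} Apply the common refinement lemma to the (finitely many) vertices of $\gamma''$ to obtain $W$ refining all of them, and fix refinement paths $\rho_V\colon W\rightsquigarrow V$. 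For a refinement edge $V\to V'$ of $\gamma''$, let $V'$ denote the coarser endpoint; the loop $W\to V\to V'\to W$ formed by $\rho_V$, the edge, and $\rho_{V'}^{-1}$ has every vertex refining $V'$ (all path vertices lie below $V$ or $V'$, hence below $V'$), and by the nesting remark every move-disc lies in a white cell of $V'$. Thus it is a $2$-cell. Telescoping these triangles along $\gamma''$ (after conjugating to base it at $W$) shows $\gamma''$, and therefore $\gamma$, is null-homotopic.

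The main obstacle is precisely the common refinement lemma, since the combinatorial contraction above is then formal. The difficulty is twofold and genuinely geometric. Because refinement never subdivides black cells, a naive overlay of the two $1$-skeleta fails: the black cells of $X$ and of $X'$ must \emph{both} survive as cells of $W$, which is impossible when they cross. One must therefore first move the black cells into compatible general position by edge-moves (blowing black discs down and back up inside suitable charts), keeping track of these as homotopies of the loop. Second, one must superimpose the resulting $1$-skeleta, perturb them to be transverse, and then equip the overlay with a smooth structure in the sense of Definition~\ref{def: smooth structure on trivalent graph} that restricts to the given one on $\partial\Sigma$ and through which the inclusions of the coarser $1$-skeleta are smooth---so that the resulting regular trivalent smooth soccer ball decomposition $W$ really satisfies $W\prec X$ and $W\prec X'$ in the sense of the definition. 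Reconciling the black-cell bookkeeping with the required smoothness of these nested $1$-skeleta is where the real work lies.
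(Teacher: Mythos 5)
Your combinatorial scaffolding is sound: realizing a refinement $X\overset{\ID}{\prec}Y$ as an edge via the trivial coarsening $Y\overset{\ID}{\prec}Y$, filling the triangle $X_{i-1}\to Z_i\to X_i$ through the witness $Z_i$ (all three vertices refine $Z_i$ and all discs equal $\ID_i\in (Z_i)_{\mathrm{white}}$), and the telescoping contraction of a refinement-only loop below a common refinement $W$ are all legitimate uses of the $2$-cell rule, and your nesting remark (white cells of $A$ sit in white cells of $B$ when $A\prec B$) is correct. But the proof stands or falls with the common refinement lemma, which you do not prove and which, as stated, is false. The relation $W\prec X$ forces the $1$-skeleton of $X$ to be literally contained (smoothly) in the $1$-skeleton of $W$: each single move $W_{j-1}\overset{\ID}{\prec}W_j$ only deletes skeleton inside $\ID$, so skeleta are nested along any chain. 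Hence $W\prec X$ and $W\prec X'$ force $\mathrm{skel}(X)\cup\mathrm{skel}(X')\subseteq\mathrm{skel}(W)$, and since $W$ is trivalent, the two skeleta may not cross transversally (a crossing point would be a vertex of $W$ of valence $\geq 4$). This is an obstruction on \emph{white} edges, not on black cells: on any surface of genus $\geq 1$ both skeleta surject onto $H_1(\Sigma)$ (their complements are discs), so by intersection-number considerations they must meet, and to meet without crossing they would have to be moved into a shared-arc position by a global isotopy of an entire $1$-skeleton --- a drastic repositioning that your ``normalize the black cells by edge-moves'' caveat does not cover. Even on $S^2$, disjoint skeleta leave annular complementary regions, so disjoint overlay is not a decomposition either.

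There is also a circularity in how the normalization is used. Replacing each vertex $X_i$ of the loop by a normalized $\tilde X_i$ joined by a path of edge-moves conjugates $\gamma$ to a new loop only up to the squares $X_{i-1}\to X_i\to\tilde X_i\to\tilde X_{i-1}$, and filling those squares requires exactly the kind of common coarsenings whose existence is the content of the lemma --- so the ``homotopies of the loop'' you defer are the whole problem, not bookkeeping. The paper's proof avoids common refinements entirely and goes \emph{up} rather than \emph{down}: it builds a hub decomposition $Y$ transverse to all the $X_i$ (take an auxiliary fine decomposition $Y_1$ transverse to every $X_i$ with connected trace in each $2$-cell, double its edges, and blow up its vertices to black discs), connects each $X$ to $Y$ through intermediate decompositions $X\sim X'\sim X''\sim Y$ obtained by overlaying parts of the two skeletons, and fills the three resulting triangles with explicitly constructed coarsenings $Z$, $Z'$, $Z''$ whose existence rests on the transversality and connectivity built into $Y_1$. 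Since none of the relations in that chain are refinements, the skeleton-containment obstruction above never arises; to repair your argument you would essentially have to rediscover this transverse-hub construction, at which point the common refinement lemma is no longer needed.
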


\begin{proof}
We will show that given a finite graph $\Gamma$ in the 1-skeleton of $\IX_\Sigma$, the inclusion map $\iota:\Gamma\hookrightarrow \IX_\Sigma$ extends to the cone on $\Gamma$:
\begin{equation}\label{eq: extension problem for definition complex}
\tikzmath{ \matrix [matrix of math nodes,column sep=1.2cm,row sep=5mm, inner sep=5]
{ |(a)| \Gamma \pgfmatrixnextcell |(b)| \IX_\Sigma\\  |(c)| \mathit{Cone}(\Gamma)\\ }; 
\draw[->] (a) --node[above, pos=.48]{$\scriptstyle\iota$} (b);\draw[->] (a) -- (c);\draw[->,dashed] (c) --node[below,xshift=3]{$\scriptstyle\tilde\iota$} (b);}\!.
\end{equation}
The case when $\Gamma$ consists of just two points shows that $\IX_\Sigma$ is connected, and the case when $\Gamma$ is a loop shows that $\IX_\Sigma$ is simply connected.

In order to construct the extension $\tilde\iota$, we need to: (1) pick a vertex $Y$ of $\IX_\Sigma$,
(2) provide a path $\gamma_X$ from $X$ to $Y$ for every vertex $X\in\Gamma$, and
(3) for every edge $X_1\stackrel \ID\sim X_2$ of $\Gamma$, construct a map $\delta_\ID:D^2\to\IX_\Sigma$ that bounds the triangle
\begin{equation}\label{eq: triangle to bound}
\tikzmath{
\node (a) at (0,0) {$X_1$};\node (b) at (0,-1) {$X_2$};\node (c) at (1.8,-.5) {$Y.$};
\draw (a) --node[left]{$\scriptstyle \ID$} (b);
\draw (a) to[bend right=-5]node[above]{$\scriptstyle \gamma_{X_1}$} (c);
\draw (b) to[bend right=5]node[below, yshift=-1]{$\scriptstyle \gamma_{X_2}$} (c);
}
\end{equation}

(1)
We first construct a soccer ball decomposition $Y$ as follows.
Start by picking a cell decomposition $Y_1$ (no soccer ball structure) 
that is transverse to all the decompositions $X_i \in \Gamma$.\footnote{See http://mathoverflow.net/questions/176227/topological-transversality for a proof that this always exists.}
It is taken fine enough so that it contains vertices in each face of each $X_i$, and so that it has at least one edge intersecting every edge of every $X_i$.
Add extra edges and vertices if necessary so that the intersection of the 1-skeleton of $Y_1$ with every 2-cell of every $X_i$ remains connected.\footnote{{\it Caveat:} Even though we believe that this is possible, we don't actually know how to construct $Y_1$ with this last property. What we can achieve instead by adding edges to $Y_1$ is that for each $2$-cell of each $X_i$, the intersection of the $2$-cell with the $1$-skeleton of $Y_1$ has one `big' connected component, that touches every edge of the cell.  In order not to overcomplicate our argument, we will proceed assuming the intersection of the 1-skeleton of $Y_1$ with every 2-cell of every $X_i$ is actually connected.  At the end of the proof, in footnote~\ref{cavaetresolved}, we address the issue by indicating the necessary modification to the argument in case those intersections are not connected.
}
Now double every edge and call the result $Y_2$. The cell decomposition $Y_2$ has the same vertices as $Y_1$, but it has a new bigon face in the place of every edge of $Y_1$.
Finally, replace the vertices by little black disks, and equip the result with a smooth structure, so as to get a soccer ball decomposition.
The black disks are chosen small enough, so as to not intersect the edges of the~$X_i$.

We give an illustration of the above process.
The soccer ball decomposition denoted $X$ below represents one of the vertices of $\Gamma$, and the cell decomposition $Y_1$ is the beginning point of our construction:
\[
X\,:\,\,\,\,
\tikzmath[scale = .12]{\foreach \b/\x/\y in {a/.8/18,b/1.7/13.5,c/-.5/9.5,d/2/5,e/.5/1.5,f/5/20,g/5.4/11.8,h/5.8/7,i/5/0,j/9/19,k/9.8/14,l/9.7/5,m/9/1.5,n/13.2/20,o/13.7/11.3,p/13.5/6.9,q/13/0,r/16.5/18,s/17.5/14,t/16.5/2.5,u/19.5/20,v/23/20,w/25/18,x/22.3/13.8,y/25/10.2,z/24.7/5.4,aa/21.6/2.2,bb/29/19,cc/30.5/15,dd/29/11.5,ee/30.3/8.8,eee/30.2/6.2,ff/28.7/4.1,gg/29.8/.6,hh/25.3/0,A/6.5/17,B/3.9/14.8,C/2.9/9.2,D/5.5/3,E/11.5/17.5,F/13.8/14.8,G/9.5/9.5,H/12.7/3.1,I/20.5/17,J/18/10,K/22/9,L/19.5/5,M/25.5/14,N/28.1/14.5,O/27.5/7.3,P/26/2.3,} {\coordinate (\b) at (\x,\y);}
\fill[gray!20](j)--(f)--(a)--(b)--(c)--(d)--(e)--(i)--(m)--(q)--(t)--(aa)--(hh)--(gg)--(ff)--(eee)--(ee)--(dd)--(cc)--(bb)--(w)--(v)--(u)--(r)--(n)--cycle;
\fill[gray] (b)--(c)--(d)--(h)--(g)--cycle (j)--(k)--(o)--(s)--(r)--(n)--cycle (w)--(x)--(y)--(dd)--(cc)--(bb)--cycle (l)--(m)--(q)--(t)--(p)--cycle (ff)--(gg)--(hh)--(aa)--(z)--cycle;
\draw (j)--(f)--(a)--(b)--(c)--(d)--(e)--(i)--(m)--(l)--(h)--(g)--(k)--(j)--(n)--(r)--(s)--(o)--(p)--(t)--(aa)--(z)--(y)--(x)--(w)--(bb)--(cc)--(dd)--(ee)--(eee)--(ff)--(gg) --(hh)--(aa) (b)--(g) (d)--(h) (l)--(p) (m)--(q)--(t) (k)--(o) (r)--(u)--(v)--(w) (s)--(x) (y)--(dd) (z)--(ff); 
}
\qquad\quad
Y_1\,:\,\,\,\,
\tikzmath[scale = .12]{\foreach \b/\x/\y in {a/.8/18,b/1.7/13.5,c/-.5/9.5,d/2/5,e/.5/1.5,f/5/20,g/5.4/11.8,h/5.8/7,i/5/0,j/9/19,k/9.8/14,l/9.7/5,m/9/1.5,n/13.2/20,o/13.7/11.3,p/13.5/6.9,q/13/0,r/16.5/18,s/17.5/14,t/16.5/2.5,u/19.5/20,v/23/20,w/25/18,x/22.3/13.8,y/25/10.2,z/24.7/5.4,aa/21.6/2.2,bb/29/19,cc/30.5/15,dd/29/11.5,ee/30.3/8.8,eee/30.2/6.2,ff/28.7/4.1,gg/29.8/.6,hh/25.3/0,A/6.5/17,B/3.9/14.8,C/2.9/9.2,D/5.5/3,E/11.5/17.5,F/13.8/14.8,G/9.5/9.5,H/12.7/3.1,I/20.5/17,J/18/10,K/22/9,L/19.5/5,M/25.5/14,N/28.1/14.5,O/27.5/7.3,P/26/2.3,} {\coordinate (\b) at (\x,\y);}
\fill[gray!20](j)--(f)--(a)--(b)--(c)--(d)--(e)--(i)--(m)--(q)--(t)--(aa)--(hh)--(gg)--(ff)--(eee)--(ee)--(dd)--(cc)--(bb)--(w)--(v)--(u)--(r)--(n)--cycle;
\draw[line join=bevel] (A)--(G)--(J)--(I)--(F)--(J)--(G)--(F)--(E)--(A)--(B)--(C)--(D)--(G)--(H)--(D) (B)--(G)--(C) (J)--(L)--(H)--(J)--(K)--(L)--(P)--(O)--(K)--(M)--(O)--(N)--(M)--(I) ($(a)!.45!(b)$)--(B) ($(c)!.5!(b)$)--(C)--($(c)!.5!(d)$) ($(f)!.45!(a)$)--(A)--($(f)!.5!(j)$) ($(d)!.5!(e)$)--(D)--($(e)!.33!(i)$) (D)--($(e)!.8!(i)$) (D)--($(m)!.5!(i)$) ($(m)!.5!(q)$)--(H)--($(q)!.5!(t)$) ($(t)!.5!(aa)$)--(L) ($(aa)!.5!(hh)$)--(P)--($(hh)!.5!(gg)$) (P)--($(ff)!.5!(gg)$) ($(ff)!.5!(eee)$)--(O)--($(ee)!.5!(dd)$) (O)--($(ee)!.5!(eee)$) ($(r)!.5!(u)$)--(I)--($(u)!.5!(v)$) (I)--($(w)!.5!(v)$) ($(dd)!.5!(cc)$)--(N)--($(cc)!.5!(bb)$) (M)--($(w)!.4!(bb)$) ($(j)!.5!(n)$)--(E)--($(n)!.33!(r)$) (F)--($(n)!.67!(r)$);
}\,\,. 
\]
It is convenient to draw $Y_1$ on top of $X$ in order to check that they satisfy the desired properties:
\[
\tikzmath[scale = .16]{\foreach \b/\x/\y in {a/.8/18,b/1.7/13.5,c/-.5/9.5,d/2/5,e/.5/1.5,f/5/20,g/5.4/11.8,h/5.8/7,i/5/0,j/9/19,k/9.8/14,l/9.7/5,m/9/1.5,n/13.2/20,o/13.7/11.3,p/13.5/6.9,q/13/0,r/16.5/18,s/17.5/14,t/16.5/2.5,u/19.5/20,v/23/20,w/25/18,x/22.3/13.8,y/25/10.2,z/24.7/5.4,aa/21.6/2.2,bb/29/19,cc/30.5/15,dd/29/11.5,ee/30.3/8.8,eee/30.2/6.2,ff/28.7/4.1,gg/29.8/.6,hh/25.3/0,A/6.5/17,B/3.9/14.8,C/2.9/9.2,D/5.5/3,E/11.5/17.5,F/13.8/14.8,G/9.5/9.5,H/12.7/3.1,I/20.5/17,J/18/10,K/22/9,L/19.5/5,M/25.5/14,N/28.1/14.5,O/27.5/7.3,P/26/2.3,} {\coordinate (\b) at (\x,\y);}
\fill[gray!20](j)--(f)--(a)--(b)--(c)--(d)--(e)--(i)--(m)--(q)--(t)--(aa)--(hh)--(gg)--(ff)--(eee)--(ee)--(dd)--(cc)--(bb)--(w)--(v)--(u)--(r)--(n)--cycle;
\fill[gray] (b)--(c)--(d)--(h)--(g)--cycle (j)--(k)--(o)--(s)--(r)--(n)--cycle (w)--(x)--(y)--(dd)--(cc)--(bb)--cycle (l)--(m)--(q)--(t)--(p)--cycle (ff)--(gg)--(hh)--(aa)--(z)--cycle;
\draw (j)--(f)--(a)--(b)--(c)--(d)--(e)--(i)--(m)--(l)--(h)--(g)--(k)--(j)--(n)--(r)--(s)--(o)--(p)--(t)--(aa)--(z)--(y)--(x)--(w)--(bb)--(cc)--(dd)--(ee)--(eee)--(ff)--(gg) --(hh)--(aa) (b)--(g) (d)--(h) (l)--(p) (m)--(q)--(t) (k)--(o) (r)--(u)--(v)--(w) (s)--(x) (y)--(dd) (z)--(ff); 
\draw[line join=bevel] (A)--(G)--(J)--(I)--(F)--(J)--(G)--(F)--(E)--(A)--(B)--(C)--(D)--(G)--(H)--(D) (B)--(G)--(C) (J)--(L)--(H)--(J)--(K)--(L)--(P)--(O)--(K)--(M)--(O)--(N)--(M)--(I) ($(a)!.45!(b)$)--(B) ($(c)!.5!(b)$)--(C)--($(c)!.5!(d)$) ($(f)!.45!(a)$)--(A)--($(f)!.5!(j)$) ($(d)!.5!(e)$)--(D)--($(e)!.33!(i)$) (D)--($(e)!.8!(i)$) (D)--($(m)!.5!(i)$) ($(m)!.5!(q)$)--(H)--($(q)!.5!(t)$) ($(t)!.5!(aa)$)--(L) ($(aa)!.5!(hh)$)--(P)--($(hh)!.5!(gg)$) (P)--($(ff)!.5!(gg)$) ($(ff)!.5!(eee)$)--(O)--($(ee)!.5!(dd)$) (O)--($(ee)!.5!(eee)$) ($(r)!.5!(u)$)--(I)--($(u)!.5!(v)$) (I)--($(w)!.5!(v)$) ($(dd)!.5!(cc)$)--(N)--($(cc)!.5!(bb)$) (M)--($(w)!.4!(bb)$) ($(j)!.5!(n)$)--(E)--($(n)!.33!(r)$) (F)--($(n)!.67!(r)$);
}
\]
The cell decomposition $Y_1$ is transverse to $X$, it intersects every edge of $X$, and its 1-skeleton remains connected when intersected with every 2-cell of $X$.
It is therefore a valid beginning point for our construction (as far as $X$ is concerned).
The next steps are then as follows:
\[
\tikzmath[scale = .115]{\foreach \b/\x/\y in {a/.8/18,b/1.7/13.5,c/-.5/9.5,d/2/5,e/.5/1.5,f/5/20,g/5.4/11.8,h/5.8/7,i/5/0,j/9/19,k/9.8/14,l/9.7/5,m/9/1.5,n/13.2/20,o/13.7/11.3,p/13.5/6.9,q/13/0,r/16.5/18,s/17.5/14,t/16.5/2.5,u/19.5/20,v/23/20,w/25/18,x/22.3/13.8,y/25/10.2,z/24.7/5.4,aa/21.6/2.2,bb/29/19,cc/30.5/15,dd/29/11.5,ee/30.3/8.8,eee/30.2/6.2,ff/28.7/4.1,gg/29.8/.6,hh/25.3/0,A/6.5/17,B/3.9/14.8,C/2.9/9.2,D/5.5/3,E/11.5/17.5,F/13.8/14.8,G/9.5/9.5,H/12.7/3.1,I/20.5/17,J/18/10,K/22/9,L/19.5/5,M/25.5/14,N/28.1/14.5,O/27.5/7.3,P/26/2.3,} {\coordinate (\b) at (\x,\y);}
\fill[gray!20](j)--(f)--(a)--(b)--(c)--(d)--(e)--(i)--(m)--(q)--(t)--(aa)--(hh)--(gg)--(ff)--(eee)--(ee)--(dd)--(cc)--(bb)--(w)--(v)--(u)--(r)--(n)--cycle;
\draw[line join=bevel] (A)--(G)--(J)--(I)--(F)--(J)--(G)--(F)--(E)--(A)--(B)--(C)--(D)--(G)--(H)--(D) (B)--(G)--(C) (J)--(L)--(H)--(J)--(K)--(L)--(P)--(O)--(K)--(M)--(O)--(N)--(M)--(I) ($(a)!.45!(b)$)--(B) ($(c)!.5!(b)$)--(C)--($(c)!.5!(d)$) ($(f)!.45!(a)$)--(A)--($(f)!.5!(j)$) ($(d)!.5!(e)$)--(D)--($(e)!.33!(i)$) (D)--($(e)!.8!(i)$) (D)--($(m)!.5!(i)$) ($(m)!.5!(q)$)--(H)--($(q)!.5!(t)$) ($(t)!.5!(aa)$)--(L) ($(aa)!.5!(hh)$)--(P)--($(hh)!.5!(gg)$) (P)--($(ff)!.5!(gg)$) ($(ff)!.5!(eee)$)--(O)--($(ee)!.5!(dd)$) (O)--($(ee)!.5!(eee)$) ($(r)!.5!(u)$)--(I)--($(u)!.5!(v)$) (I)--($(w)!.5!(v)$) ($(dd)!.5!(cc)$)--(N)--($(cc)!.5!(bb)$) (M)--($(w)!.4!(bb)$) ($(j)!.5!(n)$)--(E)--($(n)!.33!(r)$) (F)--($(n)!.67!(r)$);
\node at (15,-4) {$Y_1$};} 
\,\,\,\,\raisebox{.26cm}{$\rightsquigarrow$}\,\,\,\,\tikzmath[scale = .115]{\foreach \b/\x/\y in {a/.8/18,b/1.7/13.5,c/-.5/9.5,d/2/5,e/.5/1.5,f/5/20,g/5.4/11.8,h/5.8/7,i/5/0,j/9/19,k/9.8/14,l/9.7/5,m/9/1.5,n/13.2/20,o/13.7/11.3,p/13.5/6.9,q/13/0,r/16.5/18,s/17.5/14,t/16.5/2.5,u/19.5/20,v/23/20,w/25/18,x/22.3/13.8,y/25/10.2,z/24.7/5.4,aa/21.6/2.2,bb/29/19,cc/30.5/15,dd/29/11.5,ee/30.3/8.8,eee/30.2/6.2,ff/28.7/4.1,gg/29.8/.6,hh/25.3/0,A/6.5/17,B/3.9/14.8,C/2.9/9.2,D/5.5/3,E/11.5/17.5,F/13.8/14.8,G/9.5/9.5,H/12.7/3.1,I/20.5/17,J/18/10,K/22/9,L/19.5/5,M/25.5/14,N/28.1/14.5,O/27.5/7.3,P/26/2.3,A'/-2.5/3.5,B'/-2/13.5,B''/-2/16.5,C'/0/21,D'/7.5/22,D''/10.5/22,E'/16.5/21,F'/22/23,G'/28.5/22,H'/31.5/19.5,I'/31.5/12,J'/33/8,K'/31.5/2.5,L'/29/-2,M'/19/-.5,M''/16/0,N'/9/-1.5,O'/-.5/-2.5} {\coordinate (\b) at (\x,\y);}
\clip(j)--(f)--(a)--(b)--(c)--(d)--(e)--(i)--(m)--(q)--(t)--(aa)--(hh)--(gg)--(ff)--(eee)--(ee)--(dd)--(cc)--(bb)--(w)--(v)--(u)--(r)--(n)--cycle (12,-2.1) rectangle (18,-6); \fill[gray!20] (-2,-2) rectangle (32,22); 
\draw[line join=bevel](A') to[bend right=15] (D)(A') to[bend right=-15] (D)(A') to[bend right=15] (C)(A') to[bend right=-15] (C)(B') to[bend right=15] (C)(B') to[bend right=-15] (C)(B'') to[bend right=15] (B)(B'') to[bend right=-15] (B)(C') to[bend right=15] (A)(C') to[bend right=-15] (A)(D') to[bend right=15] (A)(D') to[bend right=-15] (A)(D'') to[bend right=18] (E)(D'') to[bend right=-18] (E)(E') to[bend right=15] (E)(E') to[bend right=-15] (E)(E') to[bend right=15] (F)(E') to[bend right=-15] (F)(E') to[bend right=15] (I)(E') to[bend right=-15] (I)(F') to[bend right=15] (I)(F') to[bend right=-15] (I)(G') to[bend right=12] (I)(G') to[bend right=-12] (I)(G') to[bend right=13] (M)(G') to[bend right=-13] (M)(H') to[bend right=15] (N)(H') to[bend right=-15] (N)(I') to[bend right=17] (N)(I') to[bend right=-17] (N)(I') to[bend right=13] (O)(I') to[bend right=-13] (O)(J') to[bend right=15] (O)(J') to[bend right=-15] (O)(K') to[bend right=15] (O)(K') to[bend right=-15] (O)(K') to[bend right=15] (P)(K') to[bend right=-15] (P)(L') to[bend right=15] (P)(L') to[bend right=-15] (P)(M') to[bend right=15] (P)(M') to[bend right=-15] (P)(M') to[bend right=15] (L)(M') to[bend right=-15] (L)(M'') to[bend right=15] (H)(M'') to[bend right=-15] (H)(N') to[bend right=15] (H)(N') to[bend right=-15] (H)(N') to[bend right=17] (D)(N') to[bend right=-15] (D)(O') to[bend right=15] (D)(O') to[bend right=-15] (D);\draw[line join=bevel](A) to[bend left=20] (G)(A) to[bend left=-4] (G)(J) to[bend left=15] (I)(J) to[bend left=-15] (I)(I) to[bend left=15] (F)(I) to[bend left=-15] (F)(F) to[bend left=15] (J)(F) to[bend left=-15] (J)(J) to[bend left=15] (G)(J) to[bend left=-13] (G)(G) to[bend left=15] (F)(G) to[bend left=-15] (F)(F) to[bend left=21] (E)(F) to[bend left=-19] (E)(E) to[bend left=15] (A)(E) to[bend left=-15] (A)(A) to[bend left=17] (B)(A) to[bend left=-17] (B)(B) to[bend left=15] (C)(B) to[bend left=-15] (C)(C) to[bend left=15] (D)(C) to[bend left=-15] (D)(D) to[bend left=15] (G)(D) to[bend left=-15] (G)(G) to[bend left=15] (H)(G) to[bend left=-15] (H)(H) to[bend left=15] (D)(H) to[bend left=-15] (D)(B) to[bend left=11] (G)(B) to[bend left=-13] (G)(G) to[bend left=15] (C)(G) to[bend left=-15] (C)(J) to[bend left=15] (L)(J) to[bend left=-15] (L)(L) to[bend left=18] (H)(L) to[bend left=-10] (H)(H) to[bend left=12] (J)(H) to[bend left=-12] (J)(J) to[bend left=20] (K)(J) to[bend left=-14] (K)(K) to[bend left=19] (L)(K) to[bend left=-15] (L)(L) to[bend left=15] (P)(L) to[bend left=-15] (P)(P) to[bend left=15] (O)(P) to[bend left=-15] (O)(O) to[bend left=15] (K)(O) to[bend left=-15] (K)(K) to[bend left=18] (M)(K) to[bend left=-12] (M)(M) to[bend left=6] (O)(M) to[bend left=-19] (O)(O) to[bend left=7] (N)(O) to[bend left=-16] (N)(N) to[bend left=21] (M)(N) to[bend left=-21] (M)(M) to[bend left=15] (I)(M) to[bend left=-15] (I);
\node at (15,-4) {$Y_2$};} 
\,\,\,\,\raisebox{.26cm}{$\rightsquigarrow$}\,\,\,\,\tikzmath[scale = .115]{\foreach \b/\x/\y in {a/.8/18,b/1.7/13.5,c/-.5/9.5,d/2/5,e/.5/1.5,f/5/20,g/5.4/11.8,h/5.8/7,i/5/0,j/9/19,k/9.8/14,l/9.7/5,m/9/1.5,n/13.2/20,o/13.7/11.3,p/13.5/6.9,q/13/0,r/16.5/18,s/17.5/14,t/16.5/2.5,u/19.5/20,v/23/20,w/25/18,x/22.3/13.8,y/25/10.2,z/24.7/5.4,aa/21.6/2.2,bb/29/19,cc/30.5/15,dd/29/11.5,ee/30.3/8.8,eee/30.2/6.2,ff/28.7/4.1,gg/29.8/.6,hh/25.3/0,A/6.5/17,B/3.9/14.8,C/2.9/9.2,D/5.5/3,E/11.5/17.5,F/13.8/14.8,G/9.5/9.5,H/12.7/3.1,I/20.5/17,J/18/10,K/22/9,L/19.5/5,M/25.5/14,N/28.1/14.5,O/27.5/7.3,P/26/2.3,A'/-2.5/3.5,B'/-2/13.5,B''/-2/16.5,C'/0/21,D'/7.5/22,D''/10.5/22,E'/16.5/21,F'/22/23,G'/28.5/22,H'/31.5/19.5,I'/31.5/12,J'/33/8,K'/31.5/2.5,L'/29/-2,M'/19/-.5,M''/16/0,N'/9/-1.5,O'/-.5/-2.5} {\coordinate (\b) at (\x,\y);}
\clip(j)--(f)--(a)--(b)--(c)--(d)--(e)--(i)--(m)--(q)--(t)--(aa)--(hh)--(gg)--(ff)--(eee)--(ee)--(dd)--(cc)--(bb)--(w)--(v)--(u)--(r)--(n)--cycle (12,-2.1) rectangle (18,-6); \fill[gray!20] (-2,-2) rectangle (32,22); 
\draw[line join=bevel](A') to[bend right=15] (D)(A') to[bend right=-15] (D)(A') to[bend right=15] (C)(A') to[bend right=-15] (C)(B') to[bend right=15] (C)(B') to[bend right=-15] (C)(B'') to[bend right=15] (B)(B'') to[bend right=-15] (B)(C') to[bend right=15] (A)(C') to[bend right=-15] (A)(D') to[bend right=15] (A)(D') to[bend right=-15] (A)(D'') to[bend right=18] (E)(D'') to[bend right=-18] (E)(E') to[bend right=15] (E)(E') to[bend right=-15] (E)(E') to[bend right=15] (F)(E') to[bend right=-15] (F)(E') to[bend right=15] (I)(E') to[bend right=-15] (I)(F') to[bend right=15] (I)(F') to[bend right=-15] (I)(G') to[bend right=12] (I)(G') to[bend right=-12] (I)(G') to[bend right=13] (M)(G') to[bend right=-13] (M)(H') to[bend right=15] (N)(H') to[bend right=-15] (N)(I') to[bend right=17] (N)(I') to[bend right=-17] (N)(I') to[bend right=13] (O)(I') to[bend right=-13] (O)(J') to[bend right=15] (O)(J') to[bend right=-15] (O)(K') to[bend right=15] (O)(K') to[bend right=-15] (O)(K') to[bend right=15] (P)(K') to[bend right=-15] (P)(L') to[bend right=15] (P)(L') to[bend right=-15] (P)(M') to[bend right=15] (P)(M') to[bend right=-15] (P)(M') to[bend right=15] (L)(M') to[bend right=-15] (L)(M'') to[bend right=15] (H)(M'') to[bend right=-15] (H)(N') to[bend right=15] (H)(N') to[bend right=-15] (H)(N') to[bend right=17] (D)(N') to[bend right=-15] (D)(O') to[bend right=15] (D)(O') to[bend right=-15] (D);\draw[line join=bevel](A) to[bend left=20] (G)(A) to[bend left=-4] (G)(J) to[bend left=15] (I)(J) to[bend left=-15] (I)(I) to[bend left=15] (F)(I) to[bend left=-15] (F)(F) to[bend left=15] (J)(F) to[bend left=-15] (J)(J) to[bend left=15] (G)(J) to[bend left=-13] (G)(G) to[bend left=15] (F)(G) to[bend left=-15] (F)(F) to[bend left=21] (E)(F) to[bend left=-19] (E)(E) to[bend left=15] (A)(E) to[bend left=-15] (A)(A) to[bend left=17] (B)(A) to[bend left=-17] (B)(B) to[bend left=15] (C)(B) to[bend left=-15] (C)(C) to[bend left=15] (D)(C) to[bend left=-15] (D)(D) to[bend left=15] (G)(D) to[bend left=-15] (G)(G) to[bend left=15] (H)(G) to[bend left=-15] (H)(H) to[bend left=15] (D)(H) to[bend left=-15] (D)(B) to[bend left=11] (G)(B) to[bend left=-13] (G)(G) to[bend left=15] (C)(G) to[bend left=-15] (C)(J) to[bend left=15] (L)(J) to[bend left=-15] (L)(L) to[bend left=18] (H)(L) to[bend left=-10] (H)(H) to[bend left=12] (J)(H) to[bend left=-12] (J)(J) to[bend left=20] (K)(J) to[bend left=-14] (K)(K) to[bend left=19] (L)(K) to[bend left=-15] (L)(L) to[bend left=15] (P)(L) to[bend left=-15] (P)(P) to[bend left=15] (O)(P) to[bend left=-15] (O)(O) to[bend left=15] (K)(O) to[bend left=-15] (K)(K) to[bend left=18] (M)(K) to[bend left=-12] (M)(M) to[bend left=6] (O)(M) to[bend left=-19] (O)(O) to[bend left=7] (N)(O) to[bend left=-16] (N)(N) to[bend left=21] (M)(N) to[bend left=-21] (M)(M) to[bend left=15] (I)(M) to[bend left=-15] (I);
\filldraw[fill=gray]
($($(A)!.3!(B)$)+(-45:-.25)$)--($($(A)!.3!(B)$)+(-45:.25)$)--($($(A)!.125!(G)$)+(22:-.08)$)--($($(A)!.125!(G)$)+(22:.35)$)--($($(A)!.19!(E)$)+(100:-.22)$)--($($(A)!.19!(E)$)+(100:.22)$)--($($(A)!.32!($(f)!.5!(j)$)$)+(-10:.22)$)--($($(A)!.32!($(f)!.5!(j)$)$)+(-10:-.2)$)--($($(A)!.24!($(f)!.45!(a)$)$)+(50:.22)$)--($($(A)!.24!($(f)!.45!(a)$)$)+(50:-.25)$)--cycle ($($(B)!.3!(A)$)+(-40:-.24)$)--($($(B)!.3!(A)$)+(-40:.25)$)--($($(B)!.151!(G)$)+(45:.22)$)--($($(B)!.151!(G)$)+(45:-.25)$)--($($(B)!.19!(C)$)+(-16:.25)$)--($($(B)!.19!(C)$)+(-16:-.25)$)--($($(B)!.3!($(a)!.45!(b)$)$)+(70:-.34)$)--($($(B)!.3!($(a)!.45!(b)$)$)+(70:.09)$)--cycle ($($(C)!.2!(B)$)+(-20:-.27)$)--($($(C)!.2!(B)$)+(-20:.25)$)--($($(C)!.16!(G)$)+(95:.25)$)--($($(C)!.16!(G)$)+(95:-.25)$)--($($(C)!.18!(D)$)+(20:.29)$)--($($(C)!.18!(D)$)+(20:-.29)$)--($($(C)!.32!($(c)!.51!(d)$)$)+(-50:.28)$)--($($(C)!.32!($(c)!.52!(d)$)$)+(-50:-.19)$)--($($(C)!.34!($(c)!.5!(b)$)$)+(60:-.33)$)--($($(C)!.34!($(c)!.5!(b)$)$)+(60:.22)$)--cycle ($($(D)!.18!(C)$)+(20:-.27)$)--($($(D)!.18!(C)$)+(20:.28)$)--($($(D)!.155!(G)$)+(-30:-.27)$)--($($(D)!.15!(G)$)+(-30:.28)$)--($($(D)!.155!(H)$)+(90:.26)$)--($($(D)!.155!(H)$)+(90:-.26)$)--($($(D)!.36!($(i)!.5!(m)$)$)+(30:.34)$)--($($(D)!.36!($(i)!.5!(m)$)$)+(30:-.18)$)--($($(D)!.34!($(e)!.5!(i)$)$)+(-35:.35)$)--($($(D)!.36!($(e)!.5!(i)$)$)+(-35:-.22)$)--($($(D)!.31!($(e)!.5!(d)$)$)+(83:-.28)$)--($($(D)!.31!($(e)!.5!(d)$)$)+(83:.28)$)--cycle ($($(E)!.19!(A)$)+(-80:-.23)$)--($($(E)!.19!(A)$)+(-80:.23)$)--($($(E)!.25!(F)$)+(35:-.26)$)--($($(E)!.25!(F)$)+(35:.26)$)--($($(E)!.3!($(n)!.33!(r)$)$)+(-50:.22)$)--($($(E)!.3!($(n)!.33!(r)$)$)+(-50:-.25)$)--($($(E)!.45!($(j)!.5!(n)$)$)+(15:.24)$)--($($(E)!.45!($(j)!.5!(n)$)$)+(15:-.26)$)--cycle ($($(F)!.24!(E)$)+(50:.24)$)--($($(F)!.24!(E)$)+(50:-.28)$)--($($(F)!.12!(G)$)+(-40:-.20)$)--($($(F)!.12!(G)$)+(-40:.20)$)--($($(F)!.17!(J)$)+(30:-.25)$)--($($(F)!.17!(J)$)+(30:.28)$)--($($(F)!.19!(I)$)+(-70:.30)$)--($($(F)!.19!(I)$)+(-70:-.32)$)--($($(F)!.28!($(n)!.67!(r)$)$)+(-10:.32)$)--($($(F)!.28!($(n)!.67!(r)$)$)+(-10:-.26)$)--cycle ($($(G)!.175!(A)$)+(20:.45)$)--($($(G)!.19!(A)$)+(20:-.08)$)--($($(G)!.2!(B)$)+(45:.26)$)--($($(G)!.195!(B)$)+(45:-.28)$)--($($(G)!.215!(C)$)+(95:.30)$)--($($(G)!.213!(C)$)+(95:-.32)$)--($($(G)!.17!(D)$)+(-25:-.3)$)--($($(G)!.17!(D)$)+(-25:.29)$)--($($(G)!.17!(H)$)+(30:-.28)$)--($($(G)!.17!(H)$)+(30:.28)$)--($($(G)!.14!(J)$)+(90:-.29)$)--($($(G)!.14!(J)$)+(90:.26)$)--($($(G)!.19!(F)$)+(-45:.30)$)--($($(G)!.19!(F)$)+(-45:-.30)$)--cycle ($($(H)!.175!(D)$)+(95:-.30)$)--($($(H)!.175!(D)$)+(95:.29)$)--($($(H)!.18!(G)$)+(25:-.3)$)--($($(H)!.18!(G)$)+(25:.28)$)--($($(H)!.15!(J)$)+(-35:-.26)$)--($($(H)!.15!(J)$)+(-35:.25)$)--($($(H)!.18!(L)$)+(-80:-.22)$)--($($(H)!.17!(L)$)+(-80:.32)$)--($($(H)!.32!($(t)!.5!(q)$)$)+(30:.23)$)--($($(H)!.32!($(t)!.5!(q)$)$)+(30:-.23)$)--($($(H)!.31!($(m)!.5!(q)$)$)+(-15:.22)$)--($($(H)!.31!($(m)!.5!(q)$)$)+(-15:-.3)$)--cycle ($($(I)!.175!(F)$)+(-75:-.29)$)--($($(I)!.18!(F)$)+(-75:.29)$)--($($(I)!.15!(J)$)+(-10:-.26)$)--($($(I)!.15!(J)$)+(-10:.26)$)--($($(I)!.18!(M)$)+(55:-.25)$)--($($(I)!.18!(M)$)+(55:.27)$)--($($(I)!.3!($(v)!.5!(w)$)$)+(-55:.22)$)--($($(I)!.29!($(v)!.5!(w)$)$)+(-55:-.3)$)--($($(I)!.31!($(v)!.5!(u)$)$)+(0:.22)$)--($($(I)!.31!($(v)!.5!(u)$)$)+(0:-.22)$)--($($(I)!.3!($(u)!.5!(r)$)$)+(35:.32)$)--($($(I)!.3!($(u)!.5!(r)$)$)+(35:-.12)$)--cycle ($($(J)!.18!(F)$)+(30:.29)$)--($($(J)!.18!(F)$)+(30:-.29)$)--($($(J)!.165!(G)$)+(90:.26)$)--($($(J)!.165!(G)$)+(90:-.3)$)--($($(J)!.16!(H)$)+(-35:-.26)$)--($($(J)!.155!(H)$)+(-35:.27)$)--($($(J)!.24!(L)$)+(20:-.25)$)--($($(J)!.24!(L)$)+(20:.28)$)--($($(J)!.25!(K)$)+(85:-.22)$)--($($(J)!.25!(K)$)+(85:.3)$)--($($(J)!.13!(I)$)+(-20:.25)$)--($($(J)!.13!(I)$)+(-20:-.23)$)--cycle ($($(K)!.265!(J)$)+(70:-.22)$)--($($(K)!.265!(J)$)+(70:.3)$)--($($(K)!.18!(M)$)+(-25:-.3)$)--($($(K)!.18!(M)$)+(-25:.22)$)--($($(K)!.2!(O)$)+(70:.28)$)--($($(K)!.195!(O)$)+(70:-.25)$)--($($(K)!.23!(L)$)+(-20:.3)$)--($($(K)!.225!(L)$)+(-20:-.27)$)--cycle ($($(L)!.158!(H)$)+(-80:.32)$)--($($(L)!.16!(H)$)+(-80:-.17)$)--($($(L)!.19!(J)$)+(20:-.24)$)--($($(L)!.19!(J)$)+(20:.24)$)--($($(L)!.21!(K)$)+(-35:-.21)$)--($($(L)!.21!(K)$)+(-35:.29)$)--($($(L)!.16!(P)$)+(70:.26)$)--($($(L)!.16!(P)$)+(70:-.28)$)--($($(L)!.45!($(aa)!.5!(t)$)$)+(-5:.35)$)--($($(L)!.45!($(aa)!.5!(t)$)$)+(-5:-.18)$)--cycle ($($(M)!.18!(I)$)+(65:.26)$)--($($(M)!.18!(I)$)+(65:-.26)$)--($($(M)!.195!(K)$)+(-35:-.34)$)--($($(M)!.2!(K)$)+(-35:.24)$)--($($(M)!.17!(O)$)+(20:-.34)$)--($($(M)!.17!(O)$)+(20:.12)$)--($($(M)!.38!(N)$)+(95:-.26)$)--($($(M)!.41!(N)$)+(95:.26)$)--($($(M)!.295!($(w)!.4!(bb)$)$)+(-15:.44)$)--($($(M)!.3!($(w)!.4!(bb)$)$)+(-15:-.14)$)--cycle ($($(N)!.22!(M)$)+(80:.17)$)--($($(N)!.22!(M)$)+(80:-.22)$)--($($(N)!.18!(O)$)+(-10:-.16)$)--($($(N)!.18!(O)$)+(-10:.34)$)--($($(N)!.5!($(cc)!.5!(dd)$)$)+(70:-.25)$)--($($(N)!.5!($(cc)!.5!(dd)$)$)+(70:.26)$)--($($(N)!.22!($(cc)!.5!(bb)$)$)+(-25:.18)$)--($($(N)!.2!($(cc)!.5!(bb)$)$)+(-25:-.14)$)--cycle ($($(O)!.19!(N)$)+(-10:.34)$)--($($(O)!.2!(N)$)+(-10:-.16)$)--($($(O)!.2!(M)$)+(30:.16)$)--($($(O)!.18!(M)$)+(30:-.38)$)--($($(O)!.16!(K)$)+(85:.22)$)--($($(O)!.155!(K)$)+(85:-.22)$)--($($(O)!.16!(P)$)+(-13:-.22)$)--($($(O)!.16!(P)$)+(-13:.22)$)--($($(O)!.25!($(ff)!.5!(eee)$)$)+(45:-.22)$)--($($(O)!.25!($(ff)!.5!(eee)$)$)+(45:.16)$)--($($(O)!.27!($(eee)!.5!(ee)$)$)+(-90:.14)$)--($($(O)!.275!($(eee)!.5!(ee)$)$)+(-90:-.18)$)--($($(O)!.3!($(ee)!.5!(dd)$)$)+(-70:.28)$)--($($(O)!.31!($(ee)!.5!(dd)$)$)+(-70:-.18)$)--cycle ($($(P)!.17!(O)$)+(-13:.22)$)--($($(P)!.17!(O)$)+(-13:-.22)$)--($($(P)!.18!(L)$)+(65:.29)$)--($($(P)!.18!(L)$)+(65:-.3)$)--($($(P)!.36!($(aa)!.5!(hh)$)$)+(-45:-.34)$)--($($(P)!.36!($(aa)!.5!(hh)$)$)+(-45:.18)$)--($($(P)!.35!($(hh)!.5!(gg)$)$)+(20:-.25)$)--($($(P)!.36!($(hh)!.5!(gg)$)$)+(20:.2)$)--($($(P)!.36!($(gg)!.5!(ff)$)$)+(90:-.24)$)--($($(P)!.36!($(gg)!.5!(ff)$)$)+(90:.3)$)--cycle;
\node at (15,-4) {$Y$};} 
\]
By superimposing the soccer ball decompositions $X$ and $Y$ we obtain:\medskip
\begin{equation}\label{eq: picture of a transverse refinement}
\tikzmath[scale = .24]{\foreach \b/\x/\y in {a/.8/18,b/1.7/13.5,c/-.5/9.5,d/2/5,e/.5/1.5,f/5/20,g/5.4/11.8,h/5.8/7,i/5/0,j/9/19,k/9.8/14,l/9.7/5,m/9/1.5,n/13.2/20,o/13.7/11.3,p/13.5/6.9,q/13/0,r/16.5/18,s/17.5/14,t/16.5/2.5,u/19.5/20,v/23/20,w/25/18,x/22.3/13.8,y/25/10.2,z/24.7/5.4,aa/21.6/2.2,bb/29/19,cc/30.5/15,dd/29/11.5,ee/30.3/8.8,eee/30.2/6.2,ff/28.7/4.1,gg/29.8/.6,hh/25.3/0,A/6.5/17,B/3.9/14.8,C/2.9/9.2,D/5.5/3,E/11.5/17.5,F/13.8/14.8,G/9.5/9.5,H/12.7/3.1,I/20.5/17,J/18/10,K/22/9,L/19.5/5,M/25.5/14,N/28.1/14.5,O/27.5/7.3,P/26/2.3,A'/-2.5/3.5,B'/-2/13.5,B''/-2/16.5,C'/0/21,D'/7.5/22,D''/10.5/22,E'/16.5/21,F'/22/23,G'/28.5/22,H'/31.5/19.5,I'/31.5/12,J'/33/8,K'/31.5/2.5,L'/29/-2,M'/19/-.5,M''/16/0,N'/9/-1.5,O'/-.5/-2.5} {\coordinate (\b) at (\x,\y);}
\clip(j)--(f)--(a)--(b)--(c)--(d)--(e)--(i)--(m)--(q)--(t)--(aa)--(hh)--(gg)--(ff)--(eee)--(ee)--(dd)--(cc)--(bb)--(w)--(v)--(u)--(r)--(n)--cycle; \fill[gray!20] (-2,-2) rectangle (32,22); 
\fill[gray] (b)--(c)--(d)--(h)--(g)--cycle (j)--(k)--(o)--(s)--(r)--(n)--cycle (w)--(x)--(y)--(dd)--(cc)--(bb)--cycle (l)--(m)--(q)--(t)--(p)--cycle (ff)--(gg)--(hh)--(aa)--(z)--cycle;
\draw[line join=bevel](A') to[bend right=15] (D)(A') to[bend right=-15] (D)(A') to[bend right=15] (C)(A') to[bend right=-15] (C)(B') to[bend right=15] (C)(B') to[bend right=-15] (C)(B'') to[bend right=15] (B)(B'') to[bend right=-15] (B)(C') to[bend right=15] (A)(C') to[bend right=-15] (A)(D') to[bend right=15] (A)(D') to[bend right=-15] (A)(D'') to[bend right=18] (E)(D'') to[bend right=-18] (E)(E') to[bend right=15] (E)(E') to[bend right=-15] (E)(E') to[bend right=15] (F)(E') to[bend right=-15] (F)(E') to[bend right=15] (I)(E') to[bend right=-15] (I)(F') to[bend right=15] (I)(F') to[bend right=-15] (I)(G') to[bend right=12] (I)(G') to[bend right=-12] (I)(G') to[bend right=13] (M)(G') to[bend right=-13] (M)(H') to[bend right=15] (N)(H') to[bend right=-15] (N)(I') to[bend right=17] (N)(I') to[bend right=-17] (N)(I') to[bend right=13] (O)(I') to[bend right=-13] (O)(J') to[bend right=15] (O)(J') to[bend right=-15] (O)(K') to[bend right=15] (O)(K') to[bend right=-15] (O)(K') to[bend right=15] (P)(K') to[bend right=-15] (P)(L') to[bend right=15] (P)(L') to[bend right=-15] (P)(M') to[bend right=15] (P)(M') to[bend right=-15] (P)(M') to[bend right=15] (L)(M') to[bend right=-15] (L)(M'') to[bend right=15] (H)(M'') to[bend right=-15] (H)(N') to[bend right=15] (H)(N') to[bend right=-15] (H)(N') to[bend right=17] (D)(N') to[bend right=-15] (D)(O') to[bend right=15] (D)(O') to[bend right=-15] (D);\draw[line join=bevel](A) to[bend left=20] (G)(A) to[bend left=-4] (G)(J) to[bend left=15] (I)(J) to[bend left=-15] (I)(I) to[bend left=15] (F)(I) to[bend left=-15] (F)(F) to[bend left=15] (J)(F) to[bend left=-15] (J)(J) to[bend left=15] (G)(J) to[bend left=-13] (G)(G) to[bend left=15] (F)(G) to[bend left=-15] (F)(F) to[bend left=21] (E)(F) to[bend left=-19] (E)(E) to[bend left=15] (A)(E) to[bend left=-15] (A)(A) to[bend left=17] (B)(A) to[bend left=-17] (B)(B) to[bend left=15] (C)(B) to[bend left=-15] (C)(C) to[bend left=15] (D)(C) to[bend left=-15] (D)(D) to[bend left=15] (G)(D) to[bend left=-15] (G)(G) to[bend left=15] (H)(G) to[bend left=-15] (H)(H) to[bend left=15] (D)(H) to[bend left=-15] (D)(B) to[bend left=11] (G)(B) to[bend left=-13] (G)(G) to[bend left=15] (C)(G) to[bend left=-15] (C)(J) to[bend left=15] (L)(J) to[bend left=-15] (L)(L) to[bend left=18] (H)(L) to[bend left=-10] (H)(H) to[bend left=12] (J)(H) to[bend left=-12] (J)(J) to[bend left=20] (K)(J) to[bend left=-14] (K)(K) to[bend left=19] (L)(K) to[bend left=-15] (L)(L) to[bend left=15] (P)(L) to[bend left=-15] (P)(P) to[bend left=15] (O)(P) to[bend left=-15] (O)(O) to[bend left=15] (K)(O) to[bend left=-15] (K)(K) to[bend left=18] (M)(K) to[bend left=-12] (M)(M) to[bend left=6] (O)(M) to[bend left=-19] (O)(O) to[bend left=7] (N)(O) to[bend left=-16] (N)(N) to[bend left=21] (M)(N) to[bend left=-21] (M)(M) to[bend left=15] (I)(M) to[bend left=-15] (I);
\filldraw[fill=black!60]
($($(A)!.3!(B)$)+(-45:-.25)$)--($($(A)!.3!(B)$)+(-45:.25)$)--($($(A)!.125!(G)$)+(22:-.08)$)--($($(A)!.125!(G)$)+(22:.35)$)--($($(A)!.19!(E)$)+(100:-.22)$)--($($(A)!.19!(E)$)+(100:.22)$)--($($(A)!.32!($(f)!.5!(j)$)$)+(-10:.22)$)--($($(A)!.32!($(f)!.5!(j)$)$)+(-10:-.2)$)--($($(A)!.24!($(f)!.45!(a)$)$)+(50:.22)$)--($($(A)!.24!($(f)!.45!(a)$)$)+(50:-.25)$)--cycle ($($(B)!.3!(A)$)+(-40:-.24)$)--($($(B)!.3!(A)$)+(-40:.25)$)--($($(B)!.151!(G)$)+(45:.22)$)--($($(B)!.151!(G)$)+(45:-.25)$)--($($(B)!.19!(C)$)+(-16:.25)$)--($($(B)!.19!(C)$)+(-16:-.25)$)--($($(B)!.3!($(a)!.45!(b)$)$)+(70:-.34)$)--($($(B)!.3!($(a)!.45!(b)$)$)+(70:.09)$)--cycle ($($(C)!.2!(B)$)+(-20:-.27)$)--($($(C)!.2!(B)$)+(-20:.25)$)--($($(C)!.16!(G)$)+(95:.25)$)--($($(C)!.16!(G)$)+(95:-.25)$)--($($(C)!.18!(D)$)+(20:.29)$)--($($(C)!.18!(D)$)+(20:-.29)$)--($($(C)!.32!($(c)!.51!(d)$)$)+(-50:.28)$)--($($(C)!.32!($(c)!.52!(d)$)$)+(-50:-.19)$)--($($(C)!.34!($(c)!.5!(b)$)$)+(60:-.33)$)--($($(C)!.34!($(c)!.5!(b)$)$)+(60:.22)$)--cycle ($($(D)!.18!(C)$)+(20:-.27)$)--($($(D)!.18!(C)$)+(20:.28)$)--($($(D)!.155!(G)$)+(-30:-.27)$)--($($(D)!.15!(G)$)+(-30:.28)$)--($($(D)!.155!(H)$)+(90:.26)$)--($($(D)!.155!(H)$)+(90:-.26)$)--($($(D)!.36!($(i)!.5!(m)$)$)+(30:.34)$)--($($(D)!.36!($(i)!.5!(m)$)$)+(30:-.18)$)--($($(D)!.34!($(e)!.5!(i)$)$)+(-35:.35)$)--($($(D)!.36!($(e)!.5!(i)$)$)+(-35:-.22)$)--($($(D)!.31!($(e)!.5!(d)$)$)+(83:-.28)$)--($($(D)!.31!($(e)!.5!(d)$)$)+(83:.28)$)--cycle ($($(E)!.19!(A)$)+(-80:-.23)$)--($($(E)!.19!(A)$)+(-80:.23)$)--($($(E)!.25!(F)$)+(35:-.26)$)--($($(E)!.25!(F)$)+(35:.26)$)--($($(E)!.3!($(n)!.33!(r)$)$)+(-50:.22)$)--($($(E)!.3!($(n)!.33!(r)$)$)+(-50:-.25)$)--($($(E)!.45!($(j)!.5!(n)$)$)+(15:.24)$)--($($(E)!.45!($(j)!.5!(n)$)$)+(15:-.26)$)--cycle ($($(F)!.24!(E)$)+(50:.24)$)--($($(F)!.24!(E)$)+(50:-.28)$)--($($(F)!.12!(G)$)+(-40:-.20)$)--($($(F)!.12!(G)$)+(-40:.20)$)--($($(F)!.17!(J)$)+(30:-.25)$)--($($(F)!.17!(J)$)+(30:.28)$)--($($(F)!.19!(I)$)+(-70:.30)$)--($($(F)!.19!(I)$)+(-70:-.32)$)--($($(F)!.28!($(n)!.67!(r)$)$)+(-10:.32)$)--($($(F)!.28!($(n)!.67!(r)$)$)+(-10:-.26)$)--cycle ($($(G)!.175!(A)$)+(20:.45)$)--($($(G)!.19!(A)$)+(20:-.08)$)--($($(G)!.2!(B)$)+(45:.26)$)--($($(G)!.195!(B)$)+(45:-.28)$)--($($(G)!.215!(C)$)+(95:.30)$)--($($(G)!.213!(C)$)+(95:-.32)$)--($($(G)!.17!(D)$)+(-25:-.3)$)--($($(G)!.17!(D)$)+(-25:.29)$)--($($(G)!.17!(H)$)+(30:-.28)$)--($($(G)!.17!(H)$)+(30:.28)$)--($($(G)!.14!(J)$)+(90:-.29)$)--($($(G)!.14!(J)$)+(90:.26)$)--($($(G)!.19!(F)$)+(-45:.30)$)--($($(G)!.19!(F)$)+(-45:-.30)$)--cycle ($($(H)!.175!(D)$)+(95:-.30)$)--($($(H)!.175!(D)$)+(95:.29)$)--($($(H)!.18!(G)$)+(25:-.3)$)--($($(H)!.18!(G)$)+(25:.28)$)--($($(H)!.15!(J)$)+(-35:-.26)$)--($($(H)!.15!(J)$)+(-35:.25)$)--($($(H)!.18!(L)$)+(-80:-.22)$)--($($(H)!.17!(L)$)+(-80:.32)$)--($($(H)!.32!($(t)!.5!(q)$)$)+(30:.23)$)--($($(H)!.32!($(t)!.5!(q)$)$)+(30:-.23)$)--($($(H)!.31!($(m)!.5!(q)$)$)+(-15:.22)$)--($($(H)!.31!($(m)!.5!(q)$)$)+(-15:-.3)$)--cycle ($($(I)!.175!(F)$)+(-75:-.29)$)--($($(I)!.18!(F)$)+(-75:.29)$)--($($(I)!.15!(J)$)+(-10:-.26)$)--($($(I)!.15!(J)$)+(-10:.26)$)--($($(I)!.18!(M)$)+(55:-.25)$)--($($(I)!.18!(M)$)+(55:.27)$)--($($(I)!.3!($(v)!.5!(w)$)$)+(-55:.22)$)--($($(I)!.29!($(v)!.5!(w)$)$)+(-55:-.3)$)--($($(I)!.31!($(v)!.5!(u)$)$)+(0:.22)$)--($($(I)!.31!($(v)!.5!(u)$)$)+(0:-.22)$)--($($(I)!.3!($(u)!.5!(r)$)$)+(35:.32)$)--($($(I)!.3!($(u)!.5!(r)$)$)+(35:-.12)$)--cycle ($($(J)!.18!(F)$)+(30:.29)$)--($($(J)!.18!(F)$)+(30:-.29)$)--($($(J)!.165!(G)$)+(90:.26)$)--($($(J)!.165!(G)$)+(90:-.3)$)--($($(J)!.16!(H)$)+(-35:-.26)$)--($($(J)!.155!(H)$)+(-35:.27)$)--($($(J)!.24!(L)$)+(20:-.25)$)--($($(J)!.24!(L)$)+(20:.28)$)--($($(J)!.25!(K)$)+(85:-.22)$)--($($(J)!.25!(K)$)+(85:.3)$)--($($(J)!.13!(I)$)+(-20:.25)$)--($($(J)!.13!(I)$)+(-20:-.23)$)--cycle ($($(K)!.265!(J)$)+(70:-.22)$)--($($(K)!.265!(J)$)+(70:.3)$)--($($(K)!.18!(M)$)+(-25:-.3)$)--($($(K)!.18!(M)$)+(-25:.22)$)--($($(K)!.2!(O)$)+(70:.28)$)--($($(K)!.195!(O)$)+(70:-.25)$)--($($(K)!.23!(L)$)+(-20:.3)$)--($($(K)!.225!(L)$)+(-20:-.27)$)--cycle ($($(L)!.158!(H)$)+(-80:.32)$)--($($(L)!.16!(H)$)+(-80:-.17)$)--($($(L)!.19!(J)$)+(20:-.24)$)--($($(L)!.19!(J)$)+(20:.24)$)--($($(L)!.21!(K)$)+(-35:-.21)$)--($($(L)!.21!(K)$)+(-35:.29)$)--($($(L)!.16!(P)$)+(70:.26)$)--($($(L)!.16!(P)$)+(70:-.28)$)--($($(L)!.45!($(aa)!.5!(t)$)$)+(-5:.35)$)--($($(L)!.45!($(aa)!.5!(t)$)$)+(-5:-.18)$)--cycle ($($(M)!.18!(I)$)+(65:.26)$)--($($(M)!.18!(I)$)+(65:-.26)$)--($($(M)!.195!(K)$)+(-35:-.34)$)--($($(M)!.2!(K)$)+(-35:.24)$)--($($(M)!.17!(O)$)+(20:-.34)$)--($($(M)!.17!(O)$)+(20:.12)$)--($($(M)!.38!(N)$)+(95:-.26)$)--($($(M)!.41!(N)$)+(95:.26)$)--($($(M)!.295!($(w)!.4!(bb)$)$)+(-15:.44)$)--($($(M)!.3!($(w)!.4!(bb)$)$)+(-15:-.14)$)--cycle ($($(N)!.22!(M)$)+(80:.17)$)--($($(N)!.22!(M)$)+(80:-.22)$)--($($(N)!.18!(O)$)+(-10:-.16)$)--($($(N)!.18!(O)$)+(-10:.34)$)--($($(N)!.5!($(cc)!.5!(dd)$)$)+(70:-.25)$)--($($(N)!.5!($(cc)!.5!(dd)$)$)+(70:.26)$)--($($(N)!.22!($(cc)!.5!(bb)$)$)+(-25:.18)$)--($($(N)!.2!($(cc)!.5!(bb)$)$)+(-25:-.14)$)--cycle ($($(O)!.19!(N)$)+(-10:.34)$)--($($(O)!.2!(N)$)+(-10:-.16)$)--($($(O)!.2!(M)$)+(30:.16)$)--($($(O)!.18!(M)$)+(30:-.38)$)--($($(O)!.16!(K)$)+(85:.22)$)--($($(O)!.155!(K)$)+(85:-.22)$)--($($(O)!.16!(P)$)+(-13:-.22)$)--($($(O)!.16!(P)$)+(-13:.22)$)--($($(O)!.25!($(ff)!.5!(eee)$)$)+(45:-.22)$)--($($(O)!.25!($(ff)!.5!(eee)$)$)+(45:.16)$)--($($(O)!.27!($(eee)!.5!(ee)$)$)+(-90:.14)$)--($($(O)!.275!($(eee)!.5!(ee)$)$)+(-90:-.18)$)--($($(O)!.3!($(ee)!.5!(dd)$)$)+(-70:.28)$)--($($(O)!.31!($(ee)!.5!(dd)$)$)+(-70:-.18)$)--cycle ($($(P)!.17!(O)$)+(-13:.22)$)--($($(P)!.17!(O)$)+(-13:-.22)$)--($($(P)!.18!(L)$)+(65:.29)$)--($($(P)!.18!(L)$)+(65:-.3)$)--($($(P)!.36!($(aa)!.5!(hh)$)$)+(-45:-.34)$)--($($(P)!.36!($(aa)!.5!(hh)$)$)+(-45:.18)$)--($($(P)!.35!($(hh)!.5!(gg)$)$)+(20:-.25)$)--($($(P)!.36!($(hh)!.5!(gg)$)$)+(20:.2)$)--($($(P)!.36!($(gg)!.5!(ff)$)$)+(90:-.24)$)--($($(P)!.36!($(gg)!.5!(ff)$)$)+(90:.3)$)--cycle;
\draw[thick](j)--(f)--(a)--(b)--(c)--(d)--(e)--(i)--(m)--(q)--(t)--(aa)--(hh)--(gg)--(ff)--(eee)--(ee)--(dd)--(cc)--(bb)--(w)--(v)--(u)--(r)--(n)--cycle;
\draw[line cap = round] (b) -- (g) -- (h) -- (d) (h) -- (l) -- (m) (l) -- (p) -- (t) (g) -- (k) -- (o) -- (p) (j) -- (k) (r) --(s) -- (o) (w) -- (x) -- (s) (dd) -- (y) -- (x) (y) -- (z) -- (aa) (z) -- (ff);
} 
\medskip
\end{equation}

(2)
For every $X\in \Gamma$, we now describe two related soccer ball decompositions $X'$ and $X''$ that satisfy
\[
X\sim X'\sim X'' \sim Y.
\]
Each one of the three $\sim$'s above corresponds, as in \eqref{eq: XZZZZZZZY}, to a path in $\IX_\Sigma$ and we call their concatenation $\gamma_X$.

Let $A$ be the set of edges of $X$ that separate two white cells,
and let $B$ be the set of edges of $Y$ that intersect edges in $A$.
The first soccer ball decomposition $X'$ is defined as follows.
Its 1-skeleton is the union of the 1-skeleton of $X$ and of the edges $e\cap \ID$, where $\ID$ is a white cell of $X$, and $e$ is an edge of $Y$, $e\not\in B$.
The soccer ball structure on $X'$ is inherited from that of $Y$, and the smooth structure around the new trivalent vertices (at the intersection of the $1$-skeletons of $X$ and of $Y$) is chosen arbitrarily.
The second soccer ball decomposition $X''$ is then obtained from $X'$ by removing all the edges in $A$, and adding all those in $B$.

In the example when $X$ and $Y$ are as in \eqref{eq: picture of a transverse refinement}, then the intermediate soccer ball decompositions $X'$ and $X''$ appear as follows:
\[
\tikzmath[scale = .085]{\foreach \b/\x/\y in {a/.8/18,b/1.7/13.5,c/-.5/9.5,d/2/5,e/.5/1.5,f/5/20,g/5.4/11.8,h/5.8/7,i/5/0,j/9/19,k/9.8/14,l/9.7/5,m/9/1.5,n/13.2/20,o/13.7/11.3,p/13.5/6.9,q/13/0,r/16.5/18,s/17.5/14,t/16.5/2.5,u/19.5/20,v/23/20,w/25/18,x/22.3/13.8,y/25/10.2,z/24.7/5.4,aa/21.6/2.2,bb/29/19,cc/30.5/15,dd/29/11.5,ee/30.3/8.8,eee/30.2/6.2,ff/28.7/4.1,gg/29.8/.6,hh/25.3/0,A/6.5/17,B/3.9/14.8,C/2.9/9.2,D/5.5/3,E/11.5/17.5,F/13.8/14.8,G/9.5/9.5,H/12.7/3.1,I/20.5/17,J/18/10,K/22/9,L/19.5/5,M/25.5/14,N/28.1/14.5,O/27.5/7.3,P/26/2.3,} {\coordinate (\b) at (\x,\y);}
\fill[gray!20](j)--(f)--(a)--(b)--(c)--(d)--(e)--(i)--(m)--(q)--(t)--(aa)--(hh)--(gg)--(ff)--(eee)--(ee)--(dd)--(cc)--(bb)--(w)--(v)--(u)--(r)--(n)--cycle;
\fill[gray] (b)--(c)--(d)--(h)--(g)--cycle (j)--(k)--(o)--(s)--(r)--(n)--cycle (w)--(x)--(y)--(dd)--(cc)--(bb)--cycle (l)--(m)--(q)--(t)--(p)--cycle (ff)--(gg)--(hh)--(aa)--(z)--cycle;
\draw (j)--(f)--(a)--(b)--(c)--(d)--(e)--(i)--(m)--(l)--(h)--(g)--(k)--(j)--(n)--(r)--(s)--(o)--(p)--(t)--(aa)--(z)--(y)--(x)--(w)--(bb)--(cc)--(dd)--(ee)--(eee)--(ff)--(gg) --(hh)--(aa) (b)--(g) (d)--(h) (l)--(p) (m)--(q)--(t) (k)--(o) (r)--(u)--(v)--(w) (s)--(x) (y)--(dd) (z)--(ff); 
\node at (15,-4) {$X$};} 
\,\,\,\raisebox{.24cm}{$\sim$}\,\,\,\tikzmath[scale = .085]{\foreach \b/\x/\y in {a/.8/18,b/1.7/13.5,c/-.5/9.5,d/2/5,e/.5/1.5,f/5/20,g/5.4/11.8,h/5.8/7,i/5/0,j/9/19,k/9.8/14,l/9.7/5,m/9/1.5,n/13.2/20,o/13.7/11.3,p/13.5/6.9,q/13/0,r/16.5/18,s/17.5/14,t/16.5/2.5,u/19.5/20,v/23/20,w/25/18,x/22.3/13.8,y/25/10.2,z/24.7/5.4,aa/21.6/2.2,bb/29/19,cc/30.5/15,dd/29/11.5,ee/30.3/8.8,eee/30.2/6.2,ff/28.7/4.1,gg/29.8/.6,hh/25.3/0,A/6.5/17,B/3.9/14.8,C/2.9/9.2,D/5.5/3,E/11.5/17.5,F/13.8/14.8,G/9.5/9.5,H/12.7/3.1,I/20.5/17,J/18/10,K/22/9,L/19.5/5,M/25.5/14,N/28.1/14.5,O/27.5/7.3,P/26/2.3,A'/-2.5/3.5,B'/-2/13.5,B''/-2/16.5,C'/0/21,D'/7.5/22,D''/10.5/22,E'/16.5/21,F'/22/23,G'/28.5/22,H'/31.5/19.5,I'/31.5/12,J'/33/8,K'/31.5/2.5,L'/29/-2,M'/19/-.5,M''/16/0,N'/9/-1.5,O'/-.5/-2.5} {\coordinate (\b) at (\x,\y);}
\clip(j)--(f)--(a)--(b)--(c)--(d)--(e)--(i)--(m)--(q)--(t)--(aa)--(hh)--(gg)--(ff)--(eee)--(ee)--(dd)--(cc)--(bb)--(w)--(v)--(u)--(r)--(n)--cycle (12,-2.1) rectangle (18,-6); \fill[gray!20] (-2,-2) rectangle (32,22); 
\draw[line join=bevel](C') to[bend right=15] (A)(C') to[bend right=-15] (A)(D'') to[bend right=18] (E)(D'') to[bend right=-18] (E)(F') to[bend right=15] (I)(F') to[bend right=-15] (I)(J') to[bend right=15] (O)(J') to[bend right=-15] (O)(O') to[bend right=15] (D)(O') to[bend right=-15] (D)(I) to[bend left=15] (F)(I) to[bend left=-15] (F)(F) to[bend left=15] (J)(F) to[bend left=-15] (J)(G) to[bend left=15] (F)(G) to[bend left=-15] (F)(E) to[bend left=15] (A)(E) to[bend left=-15] (A)(A) to[bend left=17] (B)(A) to[bend left=-17] (B)(B) to[bend left=15] (C)(B) to[bend left=-15] (C)(C) to[bend left=15] (D)(C) to[bend left=-15] (D)(G) to[bend left=15] (H)(G) to[bend left=-15] (H)(H) to[bend left=15] (D)(H) to[bend left=-15] (D)(G) to[bend left=15] (C)(G) to[bend left=-15] (C)(J) to[bend left=15] (L)(J) to[bend left=-15] (L)(L) to[bend left=18] (H)(L) to[bend left=-10] (H)(H) to[bend left=12] (J)(H) to[bend left=-12] (J)(J) to[bend left=20] (K)(J) to[bend left=-14] (K)(K) to[bend left=19] (L)(K) to[bend left=-15] (L)(L) to[bend left=15] (P)(L) to[bend left=-15] (P)(P) to[bend left=15] (O)(P) to[bend left=-15] (O)(K) to[bend left=18] (M)(K) to[bend left=-12] (M)(M) to[bend left=6] (O)(M) to[bend left=-19] (O)(O) to[bend left=7] (N)(O) to[bend left=-16] (N)(M) to[bend left=15] (I)(M) to[bend left=-15] (I);
\filldraw[fill=gray]
($($(A)!.3!(B)$)+(-45:-.25)$)--($($(A)!.3!(B)$)+(-45:.25)$)--($($(A)!.125!(G)$)+(22:-.08)$)--($($(A)!.125!(G)$)+(22:.35)$)--($($(A)!.19!(E)$)+(100:-.22)$)--($($(A)!.19!(E)$)+(100:.22)$)--($($(A)!.32!($(f)!.5!(j)$)$)+(-10:.22)$)--($($(A)!.32!($(f)!.5!(j)$)$)+(-10:-.2)$)--($($(A)!.24!($(f)!.45!(a)$)$)+(50:.22)$)--($($(A)!.24!($(f)!.45!(a)$)$)+(50:-.25)$)--cycle ($($(B)!.3!(A)$)+(-40:-.24)$)--($($(B)!.3!(A)$)+(-40:.25)$)--($($(B)!.151!(G)$)+(45:.22)$)--($($(B)!.151!(G)$)+(45:-.25)$)--($($(B)!.19!(C)$)+(-16:.25)$)--($($(B)!.19!(C)$)+(-16:-.25)$)--($($(B)!.3!($(a)!.45!(b)$)$)+(70:-.34)$)--($($(B)!.3!($(a)!.45!(b)$)$)+(70:.09)$)--cycle ($($(C)!.2!(B)$)+(-20:-.27)$)--($($(C)!.2!(B)$)+(-20:.25)$)--($($(C)!.16!(G)$)+(95:.25)$)--($($(C)!.16!(G)$)+(95:-.25)$)--($($(C)!.18!(D)$)+(20:.29)$)--($($(C)!.18!(D)$)+(20:-.29)$)--($($(C)!.32!($(c)!.51!(d)$)$)+(-50:.28)$)--($($(C)!.32!($(c)!.52!(d)$)$)+(-50:-.19)$)--($($(C)!.34!($(c)!.5!(b)$)$)+(60:-.33)$)--($($(C)!.34!($(c)!.5!(b)$)$)+(60:.22)$)--cycle ($($(D)!.18!(C)$)+(20:-.27)$)--($($(D)!.18!(C)$)+(20:.28)$)--($($(D)!.155!(G)$)+(-30:-.27)$)--($($(D)!.15!(G)$)+(-30:.28)$)--($($(D)!.155!(H)$)+(90:.26)$)--($($(D)!.155!(H)$)+(90:-.26)$)--($($(D)!.36!($(i)!.5!(m)$)$)+(30:.34)$)--($($(D)!.36!($(i)!.5!(m)$)$)+(30:-.18)$)--($($(D)!.34!($(e)!.5!(i)$)$)+(-35:.35)$)--($($(D)!.36!($(e)!.5!(i)$)$)+(-35:-.22)$)--($($(D)!.31!($(e)!.5!(d)$)$)+(83:-.28)$)--($($(D)!.31!($(e)!.5!(d)$)$)+(83:.28)$)--cycle ($($(E)!.19!(A)$)+(-80:-.23)$)--($($(E)!.19!(A)$)+(-80:.23)$)--($($(E)!.25!(F)$)+(35:-.26)$)--($($(E)!.25!(F)$)+(35:.26)$)--($($(E)!.3!($(n)!.33!(r)$)$)+(-50:.22)$)--($($(E)!.3!($(n)!.33!(r)$)$)+(-50:-.25)$)--($($(E)!.45!($(j)!.5!(n)$)$)+(15:.24)$)--($($(E)!.45!($(j)!.5!(n)$)$)+(15:-.26)$)--cycle ($($(F)!.24!(E)$)+(50:.24)$)--($($(F)!.24!(E)$)+(50:-.28)$)--($($(F)!.12!(G)$)+(-40:-.20)$)--($($(F)!.12!(G)$)+(-40:.20)$)--($($(F)!.17!(J)$)+(30:-.25)$)--($($(F)!.17!(J)$)+(30:.28)$)--($($(F)!.19!(I)$)+(-70:.30)$)--($($(F)!.19!(I)$)+(-70:-.32)$)--($($(F)!.28!($(n)!.67!(r)$)$)+(-10:.32)$)--($($(F)!.28!($(n)!.67!(r)$)$)+(-10:-.26)$)--cycle ($($(G)!.175!(A)$)+(20:.45)$)--($($(G)!.19!(A)$)+(20:-.08)$)--($($(G)!.2!(B)$)+(45:.26)$)--($($(G)!.195!(B)$)+(45:-.28)$)--($($(G)!.215!(C)$)+(95:.30)$)--($($(G)!.213!(C)$)+(95:-.32)$)--($($(G)!.17!(D)$)+(-25:-.3)$)--($($(G)!.17!(D)$)+(-25:.29)$)--($($(G)!.17!(H)$)+(30:-.28)$)--($($(G)!.17!(H)$)+(30:.28)$)--($($(G)!.14!(J)$)+(90:-.29)$)--($($(G)!.14!(J)$)+(90:.26)$)--($($(G)!.19!(F)$)+(-45:.30)$)--($($(G)!.19!(F)$)+(-45:-.30)$)--cycle ($($(H)!.175!(D)$)+(95:-.30)$)--($($(H)!.175!(D)$)+(95:.29)$)--($($(H)!.18!(G)$)+(25:-.3)$)--($($(H)!.18!(G)$)+(25:.28)$)--($($(H)!.15!(J)$)+(-35:-.26)$)--($($(H)!.15!(J)$)+(-35:.25)$)--($($(H)!.18!(L)$)+(-80:-.22)$)--($($(H)!.17!(L)$)+(-80:.32)$)--($($(H)!.32!($(t)!.5!(q)$)$)+(30:.23)$)--($($(H)!.32!($(t)!.5!(q)$)$)+(30:-.23)$)--($($(H)!.31!($(m)!.5!(q)$)$)+(-15:.22)$)--($($(H)!.31!($(m)!.5!(q)$)$)+(-15:-.3)$)--cycle ($($(I)!.175!(F)$)+(-75:-.29)$)--($($(I)!.18!(F)$)+(-75:.29)$)--($($(I)!.15!(J)$)+(-10:-.26)$)--($($(I)!.15!(J)$)+(-10:.26)$)--($($(I)!.18!(M)$)+(55:-.25)$)--($($(I)!.18!(M)$)+(55:.27)$)--($($(I)!.3!($(v)!.5!(w)$)$)+(-55:.22)$)--($($(I)!.29!($(v)!.5!(w)$)$)+(-55:-.3)$)--($($(I)!.31!($(v)!.5!(u)$)$)+(0:.22)$)--($($(I)!.31!($(v)!.5!(u)$)$)+(0:-.22)$)--($($(I)!.3!($(u)!.5!(r)$)$)+(35:.32)$)--($($(I)!.3!($(u)!.5!(r)$)$)+(35:-.12)$)--cycle ($($(J)!.18!(F)$)+(30:.29)$)--($($(J)!.18!(F)$)+(30:-.29)$)--($($(J)!.165!(G)$)+(90:.26)$)--($($(J)!.165!(G)$)+(90:-.3)$)--($($(J)!.16!(H)$)+(-35:-.26)$)--($($(J)!.155!(H)$)+(-35:.27)$)--($($(J)!.24!(L)$)+(20:-.25)$)--($($(J)!.24!(L)$)+(20:.28)$)--($($(J)!.25!(K)$)+(85:-.22)$)--($($(J)!.25!(K)$)+(85:.3)$)--($($(J)!.13!(I)$)+(-20:.25)$)--($($(J)!.13!(I)$)+(-20:-.23)$)--cycle ($($(K)!.265!(J)$)+(70:-.22)$)--($($(K)!.265!(J)$)+(70:.3)$)--($($(K)!.18!(M)$)+(-25:-.3)$)--($($(K)!.18!(M)$)+(-25:.22)$)--($($(K)!.2!(O)$)+(70:.28)$)--($($(K)!.195!(O)$)+(70:-.25)$)--($($(K)!.23!(L)$)+(-20:.3)$)--($($(K)!.225!(L)$)+(-20:-.27)$)--cycle ($($(L)!.158!(H)$)+(-80:.32)$)--($($(L)!.16!(H)$)+(-80:-.17)$)--($($(L)!.19!(J)$)+(20:-.24)$)--($($(L)!.19!(J)$)+(20:.24)$)--($($(L)!.21!(K)$)+(-35:-.21)$)--($($(L)!.21!(K)$)+(-35:.29)$)--($($(L)!.16!(P)$)+(70:.26)$)--($($(L)!.16!(P)$)+(70:-.28)$)--($($(L)!.45!($(aa)!.5!(t)$)$)+(-5:.35)$)--($($(L)!.45!($(aa)!.5!(t)$)$)+(-5:-.18)$)--cycle ($($(M)!.18!(I)$)+(65:.26)$)--($($(M)!.18!(I)$)+(65:-.26)$)--($($(M)!.195!(K)$)+(-35:-.34)$)--($($(M)!.2!(K)$)+(-35:.24)$)--($($(M)!.17!(O)$)+(20:-.34)$)--($($(M)!.17!(O)$)+(20:.12)$)--($($(M)!.38!(N)$)+(95:-.26)$)--($($(M)!.41!(N)$)+(95:.26)$)--($($(M)!.295!($(w)!.4!(bb)$)$)+(-15:.44)$)--($($(M)!.3!($(w)!.4!(bb)$)$)+(-15:-.14)$)--cycle ($($(N)!.22!(M)$)+(80:.17)$)--($($(N)!.22!(M)$)+(80:-.22)$)--($($(N)!.18!(O)$)+(-10:-.16)$)--($($(N)!.18!(O)$)+(-10:.34)$)--($($(N)!.5!($(cc)!.5!(dd)$)$)+(70:-.25)$)--($($(N)!.5!($(cc)!.5!(dd)$)$)+(70:.26)$)--($($(N)!.22!($(cc)!.5!(bb)$)$)+(-25:.18)$)--($($(N)!.2!($(cc)!.5!(bb)$)$)+(-25:-.14)$)--cycle ($($(O)!.19!(N)$)+(-10:.34)$)--($($(O)!.2!(N)$)+(-10:-.16)$)--($($(O)!.2!(M)$)+(30:.16)$)--($($(O)!.18!(M)$)+(30:-.38)$)--($($(O)!.16!(K)$)+(85:.22)$)--($($(O)!.155!(K)$)+(85:-.22)$)--($($(O)!.16!(P)$)+(-13:-.22)$)--($($(O)!.16!(P)$)+(-13:.22)$)--($($(O)!.25!($(ff)!.5!(eee)$)$)+(45:-.22)$)--($($(O)!.25!($(ff)!.5!(eee)$)$)+(45:.16)$)--($($(O)!.27!($(eee)!.5!(ee)$)$)+(-90:.14)$)--($($(O)!.275!($(eee)!.5!(ee)$)$)+(-90:-.18)$)--($($(O)!.3!($(ee)!.5!(dd)$)$)+(-70:.28)$)--($($(O)!.31!($(ee)!.5!(dd)$)$)+(-70:-.18)$)--cycle ($($(P)!.17!(O)$)+(-13:.22)$)--($($(P)!.17!(O)$)+(-13:-.22)$)--($($(P)!.18!(L)$)+(65:.29)$)--($($(P)!.18!(L)$)+(65:-.3)$)--($($(P)!.36!($(aa)!.5!(hh)$)$)+(-45:-.34)$)--($($(P)!.36!($(aa)!.5!(hh)$)$)+(-45:.18)$)--($($(P)!.35!($(hh)!.5!(gg)$)$)+(20:-.25)$)--($($(P)!.36!($(hh)!.5!(gg)$)$)+(20:.2)$)--($($(P)!.36!($(gg)!.5!(ff)$)$)+(90:-.24)$)--($($(P)!.36!($(gg)!.5!(ff)$)$)+(90:.3)$)--cycle;
\fill[gray] (b)--(c)--(d)--(h)--(g)--cycle (j)--(k)--(o)--(s)--(r)--(n)--cycle (w)--(x)--(y)--(dd)--(cc)--(bb)--cycle (l)--(m)--(q)--(t)--(p)--cycle (ff)--(gg)--(hh)--(aa)--(z)--cycle;
\draw[thick](j)--(f)--(a)--(b)--(c)--(d)--(e)--(i)--(m)--(q)--(t)--(aa)--(hh)--(gg)--(ff)--(eee)--(ee)--(dd)--(cc)--(bb)--(w)--(v)--(u)--(r)--(n)--cycle;
\draw[line cap = round] (b) -- (g) -- (h) -- (d) (h) -- (l) -- (m) (l) -- (p) -- (t) (g) -- (k) -- (o) -- (p) (j) -- (k) (r) --(s) -- (o) (w) -- (x) -- (s) (dd) -- (y) -- (x) (y) -- (z) -- (aa) (z) -- (ff);
\node at (15,-4) {$X'$};} 
\,\,\,\raisebox{.24cm}{$\sim$}\,\,\,\tikzmath[scale = .085]{\foreach \b/\x/\y in {a/.8/18,b/1.7/13.5,c/-.5/9.5,d/2/5,e/.5/1.5,f/5/20,g/5.4/11.8,h/5.8/7,i/5/0,j/9/19,k/9.8/14,l/9.7/5,m/9/1.5,n/13.2/20,o/13.7/11.3,p/13.5/6.9,q/13/0,r/16.5/18,s/17.5/14,t/16.5/2.5,u/19.5/20,v/23/20,w/25/18,x/22.3/13.8,y/25/10.2,z/24.7/5.4,aa/21.6/2.2,bb/29/19,cc/30.5/15,dd/29/11.5,ee/30.3/8.8,eee/30.2/6.2,ff/28.7/4.1,gg/29.8/.6,hh/25.3/0,A/6.5/17,B/3.9/14.8,C/2.9/9.2,D/5.5/3,E/11.5/17.5,F/13.8/14.8,G/9.5/9.5,H/12.7/3.1,I/20.5/17,J/18/10,K/22/9,L/19.5/5,M/25.5/14,N/28.1/14.5,O/27.5/7.3,P/26/2.3,A'/-2.5/3.5,B'/-2/13.5,B''/-2/16.5,C'/0/21,D'/7.5/22,D''/10.5/22,E'/16.5/21,F'/22/23,G'/28.5/22,H'/31.5/19.5,I'/31.5/12,J'/33/8,K'/31.5/2.5,L'/29/-2,M'/19/-.5,M''/16/0,N'/9/-1.5,O'/-.5/-2.5} {\coordinate (\b) at (\x,\y);}
\clip(j)--(f)--(a)--(b)--(c)--(d)--(e)--(i)--(m)--(q)--(t)--(aa)--(hh)--(gg)--(ff)--(eee)--(ee)--(dd)--(cc)--(bb)--(w)--(v)--(u)--(r)--(n)--cycle (12,-2.1) rectangle (18,-6); \fill[gray!20] (-2,-2) rectangle (32,22); 
\draw[line join=bevel](A') to[bend right=15] (D)(A') to[bend right=-15] (D)(A') to[bend right=15] (C)(A') to[bend right=-15] (C)(B') to[bend right=15] (C)(B') to[bend right=-15] (C)(B'') to[bend right=15] (B)(B'') to[bend right=-15] (B)(C') to[bend right=15] (A)(C') to[bend right=-15] (A)(D') to[bend right=15] (A)(D') to[bend right=-15] (A)(D'') to[bend right=18] (E)(D'') to[bend right=-18] (E)(E') to[bend right=15] (E)(E') to[bend right=-15] (E)(E') to[bend right=15] (F)(E') to[bend right=-15] (F)(E') to[bend right=15] (I)(E') to[bend right=-15] (I)(F') to[bend right=15] (I)(F') to[bend right=-15] (I)(G') to[bend right=12] (I)(G') to[bend right=-12] (I)(G') to[bend right=13] (M)(G') to[bend right=-13] (M)(H') to[bend right=15] (N)(H') to[bend right=-15] (N)(I') to[bend right=17] (N)(I') to[bend right=-17] (N)(I') to[bend right=13] (O)(I') to[bend right=-13] (O)(J') to[bend right=15] (O)(J') to[bend right=-15] (O)(K') to[bend right=15] (O)(K') to[bend right=-15] (O)(K') to[bend right=15] (P)(K') to[bend right=-15] (P)(L') to[bend right=15] (P)(L') to[bend right=-15] (P)(M') to[bend right=15] (P)(M') to[bend right=-15] (P)(M') to[bend right=15] (L)(M') to[bend right=-15] (L)(M'') to[bend right=15] (H)(M'') to[bend right=-15] (H)(N') to[bend right=15] (H)(N') to[bend right=-15] (H)(N') to[bend right=17] (D)(N') to[bend right=-15] (D)(O') to[bend right=15] (D)(O') to[bend right=-15] (D);\draw[line join=bevel](A) to[bend left=20] (G)(A) to[bend left=-4] (G)(J) to[bend left=15] (I)(J) to[bend left=-15] (I)(I) to[bend left=15] (F)(I) to[bend left=-15] (F)(F) to[bend left=15] (J)(F) to[bend left=-15] (J)(J) to[bend left=15] (G)(J) to[bend left=-13] (G)(G) to[bend left=15] (F)(G) to[bend left=-15] (F)(F) to[bend left=21] (E)(F) to[bend left=-19] (E)(E) to[bend left=15] (A)(E) to[bend left=-15] (A)(A) to[bend left=17] (B)(A) to[bend left=-17] (B)(B) to[bend left=15] (C)(B) to[bend left=-15] (C)(C) to[bend left=15] (D)(C) to[bend left=-15] (D)(D) to[bend left=15] (G)(D) to[bend left=-15] (G)(G) to[bend left=15] (H)(G) to[bend left=-15] (H)(H) to[bend left=15] (D)(H) to[bend left=-15] (D)(B) to[bend left=11] (G)(B) to[bend left=-13] (G)(G) to[bend left=15] (C)(G) to[bend left=-15] (C)(J) to[bend left=15] (L)(J) to[bend left=-15] (L)(L) to[bend left=18] (H)(L) to[bend left=-10] (H)(H) to[bend left=12] (J)(H) to[bend left=-12] (J)(J) to[bend left=20] (K)(J) to[bend left=-14] (K)(K) to[bend left=19] (L)(K) to[bend left=-15] (L)(L) to[bend left=15] (P)(L) to[bend left=-15] (P)(P) to[bend left=15] (O)(P) to[bend left=-15] (O)(O) to[bend left=15] (K)(O) to[bend left=-15] (K)(K) to[bend left=18] (M)(K) to[bend left=-12] (M)(M) to[bend left=6] (O)(M) to[bend left=-19] (O)(O) to[bend left=7] (N)(O) to[bend left=-16] (N)(N) to[bend left=21] (M)(N) to[bend left=-21] (M)(M) to[bend left=15] (I)(M) to[bend left=-15] (I);
\filldraw[fill=gray]
($($(A)!.3!(B)$)+(-45:-.25)$)--($($(A)!.3!(B)$)+(-45:.25)$)--($($(A)!.125!(G)$)+(22:-.08)$)--($($(A)!.125!(G)$)+(22:.35)$)--($($(A)!.19!(E)$)+(100:-.22)$)--($($(A)!.19!(E)$)+(100:.22)$)--($($(A)!.32!($(f)!.5!(j)$)$)+(-10:.22)$)--($($(A)!.32!($(f)!.5!(j)$)$)+(-10:-.2)$)--($($(A)!.24!($(f)!.45!(a)$)$)+(50:.22)$)--($($(A)!.24!($(f)!.45!(a)$)$)+(50:-.25)$)--cycle ($($(B)!.3!(A)$)+(-40:-.24)$)--($($(B)!.3!(A)$)+(-40:.25)$)--($($(B)!.151!(G)$)+(45:.22)$)--($($(B)!.151!(G)$)+(45:-.25)$)--($($(B)!.19!(C)$)+(-16:.25)$)--($($(B)!.19!(C)$)+(-16:-.25)$)--($($(B)!.3!($(a)!.45!(b)$)$)+(70:-.34)$)--($($(B)!.3!($(a)!.45!(b)$)$)+(70:.09)$)--cycle ($($(C)!.2!(B)$)+(-20:-.27)$)--($($(C)!.2!(B)$)+(-20:.25)$)--($($(C)!.16!(G)$)+(95:.25)$)--($($(C)!.16!(G)$)+(95:-.25)$)--($($(C)!.18!(D)$)+(20:.29)$)--($($(C)!.18!(D)$)+(20:-.29)$)--($($(C)!.32!($(c)!.51!(d)$)$)+(-50:.28)$)--($($(C)!.32!($(c)!.52!(d)$)$)+(-50:-.19)$)--($($(C)!.34!($(c)!.5!(b)$)$)+(60:-.33)$)--($($(C)!.34!($(c)!.5!(b)$)$)+(60:.22)$)--cycle ($($(D)!.18!(C)$)+(20:-.27)$)--($($(D)!.18!(C)$)+(20:.28)$)--($($(D)!.155!(G)$)+(-30:-.27)$)--($($(D)!.15!(G)$)+(-30:.28)$)--($($(D)!.155!(H)$)+(90:.26)$)--($($(D)!.155!(H)$)+(90:-.26)$)--($($(D)!.36!($(i)!.5!(m)$)$)+(30:.34)$)--($($(D)!.36!($(i)!.5!(m)$)$)+(30:-.18)$)--($($(D)!.34!($(e)!.5!(i)$)$)+(-35:.35)$)--($($(D)!.36!($(e)!.5!(i)$)$)+(-35:-.22)$)--($($(D)!.31!($(e)!.5!(d)$)$)+(83:-.28)$)--($($(D)!.31!($(e)!.5!(d)$)$)+(83:.28)$)--cycle ($($(E)!.19!(A)$)+(-80:-.23)$)--($($(E)!.19!(A)$)+(-80:.23)$)--($($(E)!.25!(F)$)+(35:-.26)$)--($($(E)!.25!(F)$)+(35:.26)$)--($($(E)!.3!($(n)!.33!(r)$)$)+(-50:.22)$)--($($(E)!.3!($(n)!.33!(r)$)$)+(-50:-.25)$)--($($(E)!.45!($(j)!.5!(n)$)$)+(15:.24)$)--($($(E)!.45!($(j)!.5!(n)$)$)+(15:-.26)$)--cycle ($($(F)!.24!(E)$)+(50:.24)$)--($($(F)!.24!(E)$)+(50:-.28)$)--($($(F)!.12!(G)$)+(-40:-.20)$)--($($(F)!.12!(G)$)+(-40:.20)$)--($($(F)!.17!(J)$)+(30:-.25)$)--($($(F)!.17!(J)$)+(30:.28)$)--($($(F)!.19!(I)$)+(-70:.30)$)--($($(F)!.19!(I)$)+(-70:-.32)$)--($($(F)!.28!($(n)!.67!(r)$)$)+(-10:.32)$)--($($(F)!.28!($(n)!.67!(r)$)$)+(-10:-.26)$)--cycle ($($(G)!.175!(A)$)+(20:.45)$)--($($(G)!.19!(A)$)+(20:-.08)$)--($($(G)!.2!(B)$)+(45:.26)$)--($($(G)!.195!(B)$)+(45:-.28)$)--($($(G)!.215!(C)$)+(95:.30)$)--($($(G)!.213!(C)$)+(95:-.32)$)--($($(G)!.17!(D)$)+(-25:-.3)$)--($($(G)!.17!(D)$)+(-25:.29)$)--($($(G)!.17!(H)$)+(30:-.28)$)--($($(G)!.17!(H)$)+(30:.28)$)--($($(G)!.14!(J)$)+(90:-.29)$)--($($(G)!.14!(J)$)+(90:.26)$)--($($(G)!.19!(F)$)+(-45:.30)$)--($($(G)!.19!(F)$)+(-45:-.30)$)--cycle ($($(H)!.175!(D)$)+(95:-.30)$)--($($(H)!.175!(D)$)+(95:.29)$)--($($(H)!.18!(G)$)+(25:-.3)$)--($($(H)!.18!(G)$)+(25:.28)$)--($($(H)!.15!(J)$)+(-35:-.26)$)--($($(H)!.15!(J)$)+(-35:.25)$)--($($(H)!.18!(L)$)+(-80:-.22)$)--($($(H)!.17!(L)$)+(-80:.32)$)--($($(H)!.32!($(t)!.5!(q)$)$)+(30:.23)$)--($($(H)!.32!($(t)!.5!(q)$)$)+(30:-.23)$)--($($(H)!.31!($(m)!.5!(q)$)$)+(-15:.22)$)--($($(H)!.31!($(m)!.5!(q)$)$)+(-15:-.3)$)--cycle ($($(I)!.175!(F)$)+(-75:-.29)$)--($($(I)!.18!(F)$)+(-75:.29)$)--($($(I)!.15!(J)$)+(-10:-.26)$)--($($(I)!.15!(J)$)+(-10:.26)$)--($($(I)!.18!(M)$)+(55:-.25)$)--($($(I)!.18!(M)$)+(55:.27)$)--($($(I)!.3!($(v)!.5!(w)$)$)+(-55:.22)$)--($($(I)!.29!($(v)!.5!(w)$)$)+(-55:-.3)$)--($($(I)!.31!($(v)!.5!(u)$)$)+(0:.22)$)--($($(I)!.31!($(v)!.5!(u)$)$)+(0:-.22)$)--($($(I)!.3!($(u)!.5!(r)$)$)+(35:.32)$)--($($(I)!.3!($(u)!.5!(r)$)$)+(35:-.12)$)--cycle ($($(J)!.18!(F)$)+(30:.29)$)--($($(J)!.18!(F)$)+(30:-.29)$)--($($(J)!.165!(G)$)+(90:.26)$)--($($(J)!.165!(G)$)+(90:-.3)$)--($($(J)!.16!(H)$)+(-35:-.26)$)--($($(J)!.155!(H)$)+(-35:.27)$)--($($(J)!.24!(L)$)+(20:-.25)$)--($($(J)!.24!(L)$)+(20:.28)$)--($($(J)!.25!(K)$)+(85:-.22)$)--($($(J)!.25!(K)$)+(85:.3)$)--($($(J)!.13!(I)$)+(-20:.25)$)--($($(J)!.13!(I)$)+(-20:-.23)$)--cycle ($($(K)!.265!(J)$)+(70:-.22)$)--($($(K)!.265!(J)$)+(70:.3)$)--($($(K)!.18!(M)$)+(-25:-.3)$)--($($(K)!.18!(M)$)+(-25:.22)$)--($($(K)!.2!(O)$)+(70:.28)$)--($($(K)!.195!(O)$)+(70:-.25)$)--($($(K)!.23!(L)$)+(-20:.3)$)--($($(K)!.225!(L)$)+(-20:-.27)$)--cycle ($($(L)!.158!(H)$)+(-80:.32)$)--($($(L)!.16!(H)$)+(-80:-.17)$)--($($(L)!.19!(J)$)+(20:-.24)$)--($($(L)!.19!(J)$)+(20:.24)$)--($($(L)!.21!(K)$)+(-35:-.21)$)--($($(L)!.21!(K)$)+(-35:.29)$)--($($(L)!.16!(P)$)+(70:.26)$)--($($(L)!.16!(P)$)+(70:-.28)$)--($($(L)!.45!($(aa)!.5!(t)$)$)+(-5:.35)$)--($($(L)!.45!($(aa)!.5!(t)$)$)+(-5:-.18)$)--cycle ($($(M)!.18!(I)$)+(65:.26)$)--($($(M)!.18!(I)$)+(65:-.26)$)--($($(M)!.195!(K)$)+(-35:-.34)$)--($($(M)!.2!(K)$)+(-35:.24)$)--($($(M)!.17!(O)$)+(20:-.34)$)--($($(M)!.17!(O)$)+(20:.12)$)--($($(M)!.38!(N)$)+(95:-.26)$)--($($(M)!.41!(N)$)+(95:.26)$)--($($(M)!.295!($(w)!.4!(bb)$)$)+(-15:.44)$)--($($(M)!.3!($(w)!.4!(bb)$)$)+(-15:-.14)$)--cycle ($($(N)!.22!(M)$)+(80:.17)$)--($($(N)!.22!(M)$)+(80:-.22)$)--($($(N)!.18!(O)$)+(-10:-.16)$)--($($(N)!.18!(O)$)+(-10:.34)$)--($($(N)!.5!($(cc)!.5!(dd)$)$)+(70:-.25)$)--($($(N)!.5!($(cc)!.5!(dd)$)$)+(70:.26)$)--($($(N)!.22!($(cc)!.5!(bb)$)$)+(-25:.18)$)--($($(N)!.2!($(cc)!.5!(bb)$)$)+(-25:-.14)$)--cycle ($($(O)!.19!(N)$)+(-10:.34)$)--($($(O)!.2!(N)$)+(-10:-.16)$)--($($(O)!.2!(M)$)+(30:.16)$)--($($(O)!.18!(M)$)+(30:-.38)$)--($($(O)!.16!(K)$)+(85:.22)$)--($($(O)!.155!(K)$)+(85:-.22)$)--($($(O)!.16!(P)$)+(-13:-.22)$)--($($(O)!.16!(P)$)+(-13:.22)$)--($($(O)!.25!($(ff)!.5!(eee)$)$)+(45:-.22)$)--($($(O)!.25!($(ff)!.5!(eee)$)$)+(45:.16)$)--($($(O)!.27!($(eee)!.5!(ee)$)$)+(-90:.14)$)--($($(O)!.275!($(eee)!.5!(ee)$)$)+(-90:-.18)$)--($($(O)!.3!($(ee)!.5!(dd)$)$)+(-70:.28)$)--($($(O)!.31!($(ee)!.5!(dd)$)$)+(-70:-.18)$)--cycle ($($(P)!.17!(O)$)+(-13:.22)$)--($($(P)!.17!(O)$)+(-13:-.22)$)--($($(P)!.18!(L)$)+(65:.29)$)--($($(P)!.18!(L)$)+(65:-.3)$)--($($(P)!.36!($(aa)!.5!(hh)$)$)+(-45:-.34)$)--($($(P)!.36!($(aa)!.5!(hh)$)$)+(-45:.18)$)--($($(P)!.35!($(hh)!.5!(gg)$)$)+(20:-.25)$)--($($(P)!.36!($(hh)!.5!(gg)$)$)+(20:.2)$)--($($(P)!.36!($(gg)!.5!(ff)$)$)+(90:-.24)$)--($($(P)!.36!($(gg)!.5!(ff)$)$)+(90:.3)$)--cycle;
\fill[gray] (b)--(c)--(d)--(h)--(g)--cycle (j)--(k)--(o)--(s)--(r)--(n)--cycle (w)--(x)--(y)--(dd)--(cc)--(bb)--cycle (l)--(m)--(q)--(t)--(p)--cycle (ff)--(gg)--(hh)--(aa)--(z)--cycle;
\draw[thick](b)--(c)--(d)(m)--(q)--(t)(aa)--(hh)--(gg)--(ff)(dd)--(cc)--(bb)--(w)(r)--(n)--(j);
\draw (b) -- (g) -- (h) -- (d) (m) -- (l) -- (p) -- (t) (j) -- (k) -- (o) -- (s) -- ($(r)+(-.1,0)$) (w)+(.1,0) -- (x) -- (y) -- ($(dd)+(0,.1)$) (ff)+(0,-.1) -- (z) -- (aa);
\node at (15,-4) {$X''$};} 
\,\,\,\raisebox{.24cm}{$\sim$}\,\,\,
\tikzmath[scale = .085]{\foreach \b/\x/\y in {a/.8/18,b/1.7/13.5,c/-.5/9.5,d/2/5,e/.5/1.5,f/5/20,g/5.4/11.8,h/5.8/7,i/5/0,j/9/19,k/9.8/14,l/9.7/5,m/9/1.5,n/13.2/20,o/13.7/11.3,p/13.5/6.9,q/13/0,r/16.5/18,s/17.5/14,t/16.5/2.5,u/19.5/20,v/23/20,w/25/18,x/22.3/13.8,y/25/10.2,z/24.7/5.4,aa/21.6/2.2,bb/29/19,cc/30.5/15,dd/29/11.5,ee/30.3/8.8,eee/30.2/6.2,ff/28.7/4.1,gg/29.8/.6,hh/25.3/0,A/6.5/17,B/3.9/14.8,C/2.9/9.2,D/5.5/3,E/11.5/17.5,F/13.8/14.8,G/9.5/9.5,H/12.7/3.1,I/20.5/17,J/18/10,K/22/9,L/19.5/5,M/25.5/14,N/28.1/14.5,O/27.5/7.3,P/26/2.3,A'/-2.5/3.5,B'/-2/13.5,B''/-2/16.5,C'/0/21,D'/7.5/22,D''/10.5/22,E'/16.5/21,F'/22/23,G'/28.5/22,H'/31.5/19.5,I'/31.5/12,J'/33/8,K'/31.5/2.5,L'/29/-2,M'/19/-.5,M''/16/0,N'/9/-1.5,O'/-.5/-2.5} {\coordinate (\b) at (\x,\y);}
\clip(j)--(f)--(a)--(b)--(c)--(d)--(e)--(i)--(m)--(q)--(t)--(aa)--(hh)--(gg)--(ff)--(eee)--(ee)--(dd)--(cc)--(bb)--(w)--(v)--(u)--(r)--(n)--cycle (12,-2.1) rectangle (18,-6);
\fill[gray!20] (-2,-2) rectangle (32,22);
\draw[line join=bevel](A') to[bend right=15] (D)(A') to[bend right=-15] (D)(A') to[bend right=15] (C)(A') to[bend right=-15] (C)(B') to[bend right=15] (C)(B') to[bend right=-15] (C)(B'') to[bend right=15] (B)(B'') to[bend right=-15] (B)(C') to[bend right=15] (A)(C') to[bend right=-15] (A)(D') to[bend right=15] (A)(D') to[bend right=-15] (A)(D'') to[bend right=18] (E)(D'') to[bend right=-18] (E)(E') to[bend right=15] (E)(E') to[bend right=-15] (E)(E') to[bend right=15] (F)(E') to[bend right=-15] (F)(E') to[bend right=15] (I)(E') to[bend right=-15] (I)(F') to[bend right=15] (I)(F') to[bend right=-15] (I)(G') to[bend right=12] (I)(G') to[bend right=-12] (I)(G') to[bend right=13] (M)(G') to[bend right=-13] (M)(H') to[bend right=15] (N)(H') to[bend right=-15] (N)(I') to[bend right=17] (N)(I') to[bend right=-17] (N)(I') to[bend right=13] (O)(I') to[bend right=-13] (O)(J') to[bend right=15] (O)(J') to[bend right=-15] (O)(K') to[bend right=15] (O)(K') to[bend right=-15] (O)(K') to[bend right=15] (P)(K') to[bend right=-15] (P)(L') to[bend right=15] (P)(L') to[bend right=-15] (P)(M') to[bend right=15] (P)(M') to[bend right=-15] (P)(M') to[bend right=15] (L)(M') to[bend right=-15] (L)(M'') to[bend right=15] (H)(M'') to[bend right=-15] (H)(N') to[bend right=15] (H)(N') to[bend right=-15] (H)(N') to[bend right=17] (D)(N') to[bend right=-15] (D)(O') to[bend right=15] (D)(O') to[bend right=-15] (D);\draw[line join=bevel](A) to[bend left=20] (G)(A) to[bend left=-4] (G)(J) to[bend left=15] (I)(J) to[bend left=-15] (I)(I) to[bend left=15] (F)(I) to[bend left=-15] (F)(F) to[bend left=15] (J)(F) to[bend left=-15] (J)(J) to[bend left=15] (G)(J) to[bend left=-13] (G)(G) to[bend left=15] (F)(G) to[bend left=-15] (F)(F) to[bend left=21] (E)(F) to[bend left=-19] (E)(E) to[bend left=15] (A)(E) to[bend left=-15] (A)(A) to[bend left=17] (B)(A) to[bend left=-17] (B)(B) to[bend left=15] (C)(B) to[bend left=-15] (C)(C) to[bend left=15] (D)(C) to[bend left=-15] (D)(D) to[bend left=15] (G)(D) to[bend left=-15] (G)(G) to[bend left=15] (H)(G) to[bend left=-15] (H)(H) to[bend left=15] (D)(H) to[bend left=-15] (D)(B) to[bend left=11] (G)(B) to[bend left=-13] (G)(G) to[bend left=15] (C)(G) to[bend left=-15] (C)(J) to[bend left=15] (L)(J) to[bend left=-15] (L)(L) to[bend left=18] (H)(L) to[bend left=-10] (H)(H) to[bend left=12] (J)(H) to[bend left=-12] (J)(J) to[bend left=20] (K)(J) to[bend left=-14] (K)(K) to[bend left=19] (L)(K) to[bend left=-15] (L)(L) to[bend left=15] (P)(L) to[bend left=-15] (P)(P) to[bend left=15] (O)(P) to[bend left=-15] (O)(O) to[bend left=15] (K)(O) to[bend left=-15] (K)(K) to[bend left=18] (M)(K) to[bend left=-12] (M)(M) to[bend left=6] (O)(M) to[bend left=-19] (O)(O) to[bend left=7] (N)(O) to[bend left=-16] (N)(N) to[bend left=21] (M)(N) to[bend left=-21] (M)(M) to[bend left=15] (I)(M) to[bend left=-15] (I);
\filldraw[fill=gray]
($($(A)!.3!(B)$)+(-45:-.25)$)--($($(A)!.3!(B)$)+(-45:.25)$)--($($(A)!.125!(G)$)+(22:-.08)$)--($($(A)!.125!(G)$)+(22:.35)$)--($($(A)!.19!(E)$)+(100:-.22)$)--($($(A)!.19!(E)$)+(100:.22)$)--($($(A)!.32!($(f)!.5!(j)$)$)+(-10:.22)$)--($($(A)!.32!($(f)!.5!(j)$)$)+(-10:-.2)$)--($($(A)!.24!($(f)!.45!(a)$)$)+(50:.22)$)--($($(A)!.24!($(f)!.45!(a)$)$)+(50:-.25)$)--cycle ($($(B)!.3!(A)$)+(-40:-.24)$)--($($(B)!.3!(A)$)+(-40:.25)$)--($($(B)!.151!(G)$)+(45:.22)$)--($($(B)!.151!(G)$)+(45:-.25)$)--($($(B)!.19!(C)$)+(-16:.25)$)--($($(B)!.19!(C)$)+(-16:-.25)$)--($($(B)!.3!($(a)!.45!(b)$)$)+(70:-.34)$)--($($(B)!.3!($(a)!.45!(b)$)$)+(70:.09)$)--cycle ($($(C)!.2!(B)$)+(-20:-.27)$)--($($(C)!.2!(B)$)+(-20:.25)$)--($($(C)!.16!(G)$)+(95:.25)$)--($($(C)!.16!(G)$)+(95:-.25)$)--($($(C)!.18!(D)$)+(20:.29)$)--($($(C)!.18!(D)$)+(20:-.29)$)--($($(C)!.32!($(c)!.51!(d)$)$)+(-50:.28)$)--($($(C)!.32!($(c)!.52!(d)$)$)+(-50:-.19)$)--($($(C)!.34!($(c)!.5!(b)$)$)+(60:-.33)$)--($($(C)!.34!($(c)!.5!(b)$)$)+(60:.22)$)--cycle ($($(D)!.18!(C)$)+(20:-.27)$)--($($(D)!.18!(C)$)+(20:.28)$)--($($(D)!.155!(G)$)+(-30:-.27)$)--($($(D)!.15!(G)$)+(-30:.28)$)--($($(D)!.155!(H)$)+(90:.26)$)--($($(D)!.155!(H)$)+(90:-.26)$)--($($(D)!.36!($(i)!.5!(m)$)$)+(30:.34)$)--($($(D)!.36!($(i)!.5!(m)$)$)+(30:-.18)$)--($($(D)!.34!($(e)!.5!(i)$)$)+(-35:.35)$)--($($(D)!.36!($(e)!.5!(i)$)$)+(-35:-.22)$)--($($(D)!.31!($(e)!.5!(d)$)$)+(83:-.28)$)--($($(D)!.31!($(e)!.5!(d)$)$)+(83:.28)$)--cycle ($($(E)!.19!(A)$)+(-80:-.23)$)--($($(E)!.19!(A)$)+(-80:.23)$)--($($(E)!.25!(F)$)+(35:-.26)$)--($($(E)!.25!(F)$)+(35:.26)$)--($($(E)!.3!($(n)!.33!(r)$)$)+(-50:.22)$)--($($(E)!.3!($(n)!.33!(r)$)$)+(-50:-.25)$)--($($(E)!.45!($(j)!.5!(n)$)$)+(15:.24)$)--($($(E)!.45!($(j)!.5!(n)$)$)+(15:-.26)$)--cycle ($($(F)!.24!(E)$)+(50:.24)$)--($($(F)!.24!(E)$)+(50:-.28)$)--($($(F)!.12!(G)$)+(-40:-.20)$)--($($(F)!.12!(G)$)+(-40:.20)$)--($($(F)!.17!(J)$)+(30:-.25)$)--($($(F)!.17!(J)$)+(30:.28)$)--($($(F)!.19!(I)$)+(-70:.30)$)--($($(F)!.19!(I)$)+(-70:-.32)$)--($($(F)!.28!($(n)!.67!(r)$)$)+(-10:.32)$)--($($(F)!.28!($(n)!.67!(r)$)$)+(-10:-.26)$)--cycle ($($(G)!.175!(A)$)+(20:.45)$)--($($(G)!.19!(A)$)+(20:-.08)$)--($($(G)!.2!(B)$)+(45:.26)$)--($($(G)!.195!(B)$)+(45:-.28)$)--($($(G)!.215!(C)$)+(95:.30)$)--($($(G)!.213!(C)$)+(95:-.32)$)--($($(G)!.17!(D)$)+(-25:-.3)$)--($($(G)!.17!(D)$)+(-25:.29)$)--($($(G)!.17!(H)$)+(30:-.28)$)--($($(G)!.17!(H)$)+(30:.28)$)--($($(G)!.14!(J)$)+(90:-.29)$)--($($(G)!.14!(J)$)+(90:.26)$)--($($(G)!.19!(F)$)+(-45:.30)$)--($($(G)!.19!(F)$)+(-45:-.30)$)--cycle ($($(H)!.175!(D)$)+(95:-.30)$)--($($(H)!.175!(D)$)+(95:.29)$)--($($(H)!.18!(G)$)+(25:-.3)$)--($($(H)!.18!(G)$)+(25:.28)$)--($($(H)!.15!(J)$)+(-35:-.26)$)--($($(H)!.15!(J)$)+(-35:.25)$)--($($(H)!.18!(L)$)+(-80:-.22)$)--($($(H)!.17!(L)$)+(-80:.32)$)--($($(H)!.32!($(t)!.5!(q)$)$)+(30:.23)$)--($($(H)!.32!($(t)!.5!(q)$)$)+(30:-.23)$)--($($(H)!.31!($(m)!.5!(q)$)$)+(-15:.22)$)--($($(H)!.31!($(m)!.5!(q)$)$)+(-15:-.3)$)--cycle ($($(I)!.175!(F)$)+(-75:-.29)$)--($($(I)!.18!(F)$)+(-75:.29)$)--($($(I)!.15!(J)$)+(-10:-.26)$)--($($(I)!.15!(J)$)+(-10:.26)$)--($($(I)!.18!(M)$)+(55:-.25)$)--($($(I)!.18!(M)$)+(55:.27)$)--($($(I)!.3!($(v)!.5!(w)$)$)+(-55:.22)$)--($($(I)!.29!($(v)!.5!(w)$)$)+(-55:-.3)$)--($($(I)!.31!($(v)!.5!(u)$)$)+(0:.22)$)--($($(I)!.31!($(v)!.5!(u)$)$)+(0:-.22)$)--($($(I)!.3!($(u)!.5!(r)$)$)+(35:.32)$)--($($(I)!.3!($(u)!.5!(r)$)$)+(35:-.12)$)--cycle ($($(J)!.18!(F)$)+(30:.29)$)--($($(J)!.18!(F)$)+(30:-.29)$)--($($(J)!.165!(G)$)+(90:.26)$)--($($(J)!.165!(G)$)+(90:-.3)$)--($($(J)!.16!(H)$)+(-35:-.26)$)--($($(J)!.155!(H)$)+(-35:.27)$)--($($(J)!.24!(L)$)+(20:-.25)$)--($($(J)!.24!(L)$)+(20:.28)$)--($($(J)!.25!(K)$)+(85:-.22)$)--($($(J)!.25!(K)$)+(85:.3)$)--($($(J)!.13!(I)$)+(-20:.25)$)--($($(J)!.13!(I)$)+(-20:-.23)$)--cycle ($($(K)!.265!(J)$)+(70:-.22)$)--($($(K)!.265!(J)$)+(70:.3)$)--($($(K)!.18!(M)$)+(-25:-.3)$)--($($(K)!.18!(M)$)+(-25:.22)$)--($($(K)!.2!(O)$)+(70:.28)$)--($($(K)!.195!(O)$)+(70:-.25)$)--($($(K)!.23!(L)$)+(-20:.3)$)--($($(K)!.225!(L)$)+(-20:-.27)$)--cycle ($($(L)!.158!(H)$)+(-80:.32)$)--($($(L)!.16!(H)$)+(-80:-.17)$)--($($(L)!.19!(J)$)+(20:-.24)$)--($($(L)!.19!(J)$)+(20:.24)$)--($($(L)!.21!(K)$)+(-35:-.21)$)--($($(L)!.21!(K)$)+(-35:.29)$)--($($(L)!.16!(P)$)+(70:.26)$)--($($(L)!.16!(P)$)+(70:-.28)$)--($($(L)!.45!($(aa)!.5!(t)$)$)+(-5:.35)$)--($($(L)!.45!($(aa)!.5!(t)$)$)+(-5:-.18)$)--cycle ($($(M)!.18!(I)$)+(65:.26)$)--($($(M)!.18!(I)$)+(65:-.26)$)--($($(M)!.195!(K)$)+(-35:-.34)$)--($($(M)!.2!(K)$)+(-35:.24)$)--($($(M)!.17!(O)$)+(20:-.34)$)--($($(M)!.17!(O)$)+(20:.12)$)--($($(M)!.38!(N)$)+(95:-.26)$)--($($(M)!.41!(N)$)+(95:.26)$)--($($(M)!.295!($(w)!.4!(bb)$)$)+(-15:.44)$)--($($(M)!.3!($(w)!.4!(bb)$)$)+(-15:-.14)$)--cycle ($($(N)!.22!(M)$)+(80:.17)$)--($($(N)!.22!(M)$)+(80:-.22)$)--($($(N)!.18!(O)$)+(-10:-.16)$)--($($(N)!.18!(O)$)+(-10:.34)$)--($($(N)!.5!($(cc)!.5!(dd)$)$)+(70:-.25)$)--($($(N)!.5!($(cc)!.5!(dd)$)$)+(70:.26)$)--($($(N)!.22!($(cc)!.5!(bb)$)$)+(-25:.18)$)--($($(N)!.2!($(cc)!.5!(bb)$)$)+(-25:-.14)$)--cycle ($($(O)!.19!(N)$)+(-10:.34)$)--($($(O)!.2!(N)$)+(-10:-.16)$)--($($(O)!.2!(M)$)+(30:.16)$)--($($(O)!.18!(M)$)+(30:-.38)$)--($($(O)!.16!(K)$)+(85:.22)$)--($($(O)!.155!(K)$)+(85:-.22)$)--($($(O)!.16!(P)$)+(-13:-.22)$)--($($(O)!.16!(P)$)+(-13:.22)$)--($($(O)!.25!($(ff)!.5!(eee)$)$)+(45:-.22)$)--($($(O)!.25!($(ff)!.5!(eee)$)$)+(45:.16)$)--($($(O)!.27!($(eee)!.5!(ee)$)$)+(-90:.14)$)--($($(O)!.275!($(eee)!.5!(ee)$)$)+(-90:-.18)$)--($($(O)!.3!($(ee)!.5!(dd)$)$)+(-70:.28)$)--($($(O)!.31!($(ee)!.5!(dd)$)$)+(-70:-.18)$)--cycle ($($(P)!.17!(O)$)+(-13:.22)$)--($($(P)!.17!(O)$)+(-13:-.22)$)--($($(P)!.18!(L)$)+(65:.29)$)--($($(P)!.18!(L)$)+(65:-.3)$)--($($(P)!.36!($(aa)!.5!(hh)$)$)+(-45:-.34)$)--($($(P)!.36!($(aa)!.5!(hh)$)$)+(-45:.18)$)--($($(P)!.35!($(hh)!.5!(gg)$)$)+(20:-.25)$)--($($(P)!.36!($(hh)!.5!(gg)$)$)+(20:.2)$)--($($(P)!.36!($(gg)!.5!(ff)$)$)+(90:-.24)$)--($($(P)!.36!($(gg)!.5!(ff)$)$)+(90:.3)$)--cycle;
\node at (15,-4) {$Y$};
} 
\]

(3) Finally, given an edge $X_1\stackrel \ID\sim X_2$ of $\Gamma$, we construct the map $\delta_\ID:D^2\to\IX_\Sigma$ that bounds the triangle \eqref{eq: triangle to bound}.
That triangle can be decomposed as follows
\[
\quad\tikzmath{
\node (a) at (0,0) {$X_1$};\node (a') at (2,0) {$X'_1$};\node (a'') at (4,0) {$X''_1$};
\node (b) at (0,-1) {$X_2$};\node (b') at (2,-1) {$X'_2$};\node (b'') at (4,-1) {$X''_2$};
\node (c) at (6.1,-.5) {$Y$};
\draw (a) --node[left]{$\scriptstyle \ID$} (b);
\draw (a') --node[left]{$\scriptstyle \ID$} (b');
\draw (a'') --node[left]{$\scriptstyle \tilde\ID$} (b'');
\draw (a) -- (a') -- (a'') to[bend left = 10] (c);
\draw (b) -- (b') -- (b'') to[bend right = 10] (c);
\node[draw, circle, scale=.8, inner sep=2] at (0.95,-.5){1};
\node[draw, circle, scale=.8, inner sep=2] at (2.95,-.5){2};
\node[draw, circle, scale=.8, inner sep=2] at (4.8,-.5){3};
}
\]
and we claim that each one of the above cycles
\hspace{.1mm}$\tikzmath{\node[scale=.8, draw, circle, inner sep=1.5]{1};}$\hspace{.2mm},
\hspace{.1mm}$\tikzmath{\node[scale=.8, draw, circle, inner sep=1.5]{2};}$\hspace{.2mm},
\hspace{.1mm}$\tikzmath{\node[scale=.8, draw, circle, inner sep=1.5]{3};}$
bounds a 2-cell of $\IX_\Sigma$.
The union of those three 2-cells provides the desired filler $\delta_\ID$.

By the definition of $X_1\stackrel \ID\sim X_2$, there is a soccer ball decomposition $Z$ such that
$X_1\tikz{\useasboundingbox(-.24,-.1)rectangle(.24,.3);\node at (0,.23) {$\scriptstyle \ID$};\node at (0,0) {$\prec$};} Z\tikz{\useasboundingbox(-.24,-.1)rectangle(.24,.3);\node at (0,.23) {$\scriptstyle \ID$};\node at (0,0) {$\succ$};}  X_2$.
The first loop \hspace{.1mm}$\tikzmath{\node[scale=.8, draw, circle, inner sep=1.5]{1};}$\hspace{.1mm} bounds a 2-cell because 
\[
X_1\prec Z,\qquad X_2\prec Z,\qquad X'_1\prec Z,\qquad \text{and}\qquad X'_2\prec Z.
\]
Let $\tilde\ID\subset\Sigma$ be the union of all the cells of $X_1''$ (equivalently $X_2''$) whose interior intersects $\ID$.
It follows from the construction of $Y$ that $\tilde\ID$ is a disc.
Let $Z'$ be the soccer ball decomposition obtained from $X'_1$ (equivalently $X_2'$)
by removing all the edges $e\in A$, and all the edges and vertices
that are in the interior of $\ID$.
Note that $\tilde\ID$ is one of the cells of $Z'$.
We then have
\[
X'_1\prec Z',\qquad X'_2\prec Z',\qquad X''_1\prec Z',\qquad \text{and}\qquad X''_2\prec Z',
\]
and so the second loop \hspace{.1mm}$\tikzmath{\node[scale=.8, draw, circle, inner sep=1.5]{2};}$\hspace{.1mm} also bounds a 2-cell.
To finish the argument, we need to construct a soccer ball decomposition $Z''$ such that
\[
X''_1\prec Z'',\qquad X''_2\prec Z'',\qquad \text{and}\qquad Y\prec Z''.
\]
This will then show that the third loop \hspace{.1mm}$\tikzmath{\node[scale=.8, draw, circle, inner sep=1.5]{3};}$\hspace{.1mm} bounds a 2-cell.

Let $B_1,\ldots,B_n$ be the set of black cells of $X_1''$ (equivalently $X_2''$) that are not contained in $\ID$.
For each one of them, let $\tilde B_i$ be the union of all the cells of $Y$ that intersect $B_i$.
It follows from the construction of $Y$  that $\tilde B_i$ is a disc.
Let $U\subset \ID$ be the union of all the cells of $Y$ that intersect some black cell $B\subset \ID$ of $X_1$ or $X_2$ and let $U_1,\ldots,U_k$ be the connected components of $U$.
Finally, let $D_j$ be the smallest disk that is contained in $\ID$ and that contains $U_j$.
It follows from the construction of $Y$ that the set $S:=\{\tilde B_1,\ldots,\tilde B_n, D_1,\ldots, D_k\}$ consists of pairwise disjoint discs.
Moreover, it defines a unique soccer ball decomposition $Z''$ such that 
$X_1''\tikz{\useasboundingbox(-.28,-.1)rectangle(.28,.3);\node at (0,.22) {$\scriptstyle S$};\node at (0,0) {$\prec$};} Z''$,
$X_2''\tikz{\useasboundingbox(-.28,-.1)rectangle(.28,.3);\node at (0,.22) {$\scriptstyle S$};\node at (0,0) {$\prec$};} Z''$,
and
$Y\tikz{\useasboundingbox(-.28,-.1)rectangle(.28,.3);\node at (0,.22) {$\scriptstyle S$};\node at (0,0) {$\prec$};} Z''$.
This finishes the proof that $\IX_\Sigma$ is simply connected.\footnote{\label{cavaetresolved}{\it Caveat resolution:} If one follows the definition of $X'$ presented above, using a cell decomposition $Y_1$ whose 1-skeleton has a disconnected intersection with some 2-cell of some $X_i$, then one may encounter the following problem.
Let $\ID$ be a white cell of $X$, and let $e$ be an edge that separates it from some other white cell of $X$.
If there  exists a small component of the intersection of the 1-skeleton of $Y_1$ with $\ID$ that only touches $e$, then the $1$-skeleton of $X'$ will be disconnected and $X'$ won't be a soccer ball decomposition.  However, we may simply redefine $X'$ by removing those small components; this yields a new soccer ball decomposition which one uses in place of $X'$.
The rest of the argument is unaffected by this modification.}
\end{proof}

\subsection{Factorization along circles and along intervals}

From now on, all surfaces will be assumed to be compact oriented topological surfaces with smooth boundary.

Let $\Sigma_1$ and $\Sigma_2$ be two surfaces, pick an orientation-reversing diffeomorphism $\varphi:M_1\to M$ between submanifolds $M_1\subset \partial \Sigma_1$ and $M\subset \partial \Sigma_2$, 
and equip the trivalent graph $\Gamma:=\partial \Sigma_1\cup_M \partial \Sigma_2$ with a smooth structure (Definition \ref{def: smooth structure on trivalent graph}), compatibly with the existing smooth structures on $\partial \Sigma_1$ and $\partial \Sigma_2$.
The smooth structure on $\Gamma$ restricts to a smooth structure on the boundary of $\Sigma_1\cup_M \Sigma_2$,
and so we can define $V(\Sigma_1\cup_M \Sigma_2)$ by Theorem \ref{Mthm: conformal blocks}.

\begin{maintheorem}\label{maintheorem gluing}
Let $\Sigma_1$, $\Sigma_2$, $M_1$, $M$, $\varphi$ be as above.
Then there is a unitary isomorphism
\begin{equation}\label{eq: main thm eq2}
g:\,V(\Sigma_1\cup_M \Sigma_2)\to V(\Sigma_1)\boxtimes_{\cala(M)} V(\Sigma_2),
\end{equation}
well defined up to phase.
\end{maintheorem}

Before embarking on the proof of the theorem, we will need the following variation of the notion of soccer ball decomposition:

\begin{definition}
Let $\Sigma$ be surface, and let $S\subset \partial \Sigma$ 
be a submanifold of its boundary diffeomorphic to a disjoint union of circles.
A regular trivalent smooth cell decomposition $X$ of $\Sigma$ 
(see Section \ref{sec: The Hilbert space associated to a surface}), equipped with a partition of the set 
of $2$-cells into subsets $X_{\mathrm{white}}$ and $X_{\mathrm{black}}$ is called an 
\emph{$S$-open soccer ball decomposition of $\Sigma$}
if every interior vertex is adjacent to two white cells and one back cell, 
every boundary vertex not in $S$ is adjacent to two white cells, and 
every boundary vertex  in $S$ is adjacent to a white and a black cell.
\end{definition}

We shall sometimes abbreviate the above terminology by ``open soccer ball decomposition'' when the manifold $S$ is obvious from the context.

\[
\parbox{5cm}{An example of an open\\ soccer ball decomposition:
}\qquad
\tikzmath[scale = .8]{\useasboundingbox (1.3,-1.1) rectangle (5.85,1.1);
\draw[gray, line width=.6, rounded corners=8.5, fill=gray!20](.3,.4) -- (1,1) -- (2,1) --  (3,.75) -- (4,1) -- (5,1) -- (5.7,.4) -- (5.7,-.4) -- (5,-1) -- (4,-1) -- (3,-.75) -- (2,-1) -- (1,-1) -- (.3,-.4) -- cycle;
\path[rotate=15](1.8,-1.3)coordinate(a)+(-.012,-.175)coordinate(aa)(4.3,-1.9)coordinate(c)(4.6,-1.48)coordinate(d);\path[rotate=-15](4.7,1.9)coordinate(g)(3.96,.71)coordinate(h);\path[rotate=-8](3.58,-.2)coordinate(i);\path(2.4,-.35)coordinate(j)+(-.26,.26)coordinate(jj)(3.05,-.3)coordinate(l)(4.4,.75)coordinate(m)+(-.02,.25)coordinate(mm)(a)+(-.285,.015)coordinate(b)(c)+(.045,-.275)coordinate(cc)(g)+(.18,.19)coordinate(gg)(d)+(-.11,.18)coordinate(dd)(h)+(.035,.2)coordinate(hh)(3.5,-.875)coordinate(ii);
\path(3.05,.805)coordinate(o) (3.3,.37)coordinate(p)(3.05,.05)coordinate(q)(2.3,.475)coordinate(r)(2.4,.9)coordinate(s)(5.66,.25)coordinate(e)(5.312,.15)coordinate(f)(5.28,-.14)coordinate(k)(5.63,-.33)coordinate(t)(3.66,.44)coordinate(u)(3.75,.7)coordinate(v)+(-.04,.223)coordinate(vv)(4,.585)coordinate(w)(3.93,.3)coordinate(x)+(.07,-.145)coordinate(xx)(1.785,.43)coordinate(rr);
\fill[gray] (hh) -- (h) -- (i) -- (l) -- (q) -- (p) -- (u) -- (x) -- (xx);\fill[gray] (rr) -- (r) -- (s) to[bend right=4] ($(s)+(-.7,.105)$);\fill[gray] (b) -- (a) to[bend right=10] (j) -- (jj) to[bend right=-10] (1.7,-.16);\fill[gray] (v) -- (w) -- (m) -- (mm) to[bend right=4] (vv);\fill[gray] (g) to[bend left=6] (f) -- (e) to[bend right=18] (gg);\fill[gray] (c) to[bend right=10] (d) to[bend right=3] (k) -- (t) to[bend left=30, looseness=.85] (cc);
\fill[white](1.19,-.01) .. controls (1.5,.25) and (2,.25) .. (2.31,-.01)(1.19,-.01) .. controls (1.5,-.21) and (2,-.21) .. (2.31,-.01)
(3.69,-.01) .. controls (4,.25) and (4.5,.25) .. (4.81,-.01)(3.69,-.01) .. controls (4,-.21) and (4.5,-.21) .. (4.81,-.01);
\draw[gray, line width=.6] (1,.1) .. controls (1.5,-.25) and (2,-.25) .. (2.5,.1)(1.2,0) .. controls (1.5,.25) and (2,.25) .. (2.3,0)
(3.5,.1) .. controls (4,-.25) and (4.5,-.25) .. (5,.1)(3.7,0) .. controls (4,.25) and (4.5,.25) .. (4.8,0);
\draw[line width=.3]  (e) -- (f) -- (5.28,-.14) -- (5.63,-.33);\draw[line width=.3] (u) -- (v) -- (w) -- (x) -- cycle;\draw[line width=.3]  (o) to[bend right =3] (p) -- (q) to[bend right =5] (r) -- (s);\draw[line width=.3] (r) -- (rr) (p) -- (u)(q) -- (l) (v) -- (vv) (x) -- (xx);\draw[line width=.3](a) -- (b)(j) to[bend left=10] (a) -- (aa)(l) -- (j) -- (jj)(i) -- (l) (h) -- (i) -- (ii)(d) to[bend left=10] (c) -- (cc) (c) to[bend left=5] (h) -- (hh) (w) -- (m) to[bend right=5] (g) to[bend left=6] (f)(k) to[bend left=3] (d) -- (dd)(g) -- (gg)(m) -- (mm);
\fill[white](1.7,-1.05) rectangle (.25,1.05);\filldraw[fill = white, line width=.5] (1.7,.59) ellipse(.1 and .41);\filldraw[fill = white, line width=.5] (1.7,-.58) ellipse(.1 and .42);} 
\medskip
\]
Note that open soccer ball decompositions are not soccer ball decompositions, 
unless $S=\emptyset$.
Indeed, in a soccer ball decomposition, 
the cells along the boundary are all white, 
whereas is an open soccer ball decomposition, the cells along the submanifold $S$ 
alternate between black and white.

\begin{remark}
Let $\Sigma$ be a surface, and let $S\subset \partial \Sigma$ be a disjoint union of circles.
Let $\Delta=D_1\cup\ldots\cup D_k$ be a union of discs,
and let $\Sigma^+:=\Sigma\cup_\varphi \Delta$ for some diffeomorphism $\varphi:S \to \partial\Delta$.
Then a regular trivalent smooth cell decomposition $X$ of $\Sigma$, 
with a partition of the set of $2$-cells into $X_{\mathrm{white}}$ and $X_{\mathrm{black}}$ is an $S$-open soccer ball decomposition 
if and only if $(X_{\mathrm{white}}\cup \{D_1,\ldots,D_k\},X_{\mathrm{black}})$ 
is a soccer ball decomposition of $\Sigma^+$.
\end{remark}

Let $\Sigma$ and $\Sigma'$ be surfaces,
let $S$ be a closed $1$-manifold equipped with orientation preserving embeddings $\bar S\to\partial \Sigma$ and $S\to \partial \Sigma'$, where $\bar S$ denotes $S$ with the opposite orientation,
and let $\Sigma^+=\Sigma\cup_S\Sigma'$ be the result of gluing $\Sigma$ and $\Sigma'$ along $S$.
Given a soccer ball decomposition $X$ of $\Sigma$ and an $S$-open soccer ball decomposition $Y$ of $\Sigma'$,
we say that $X$ and $Y$ are \emph{compatible} if $X\cup Y$ is a soccer ball decomposition of $\Sigma^+$.
If we write the boundary of $\Sigma'$ as $\partial \Sigma'=S\sqcup T$, and we denote the elements of our open soccer ball decomposition by 
$Y_{\mathrm{white}}=\{\ID_1,\ldots,\ID_n\}$ and $Y_{\mathrm{black}}=\{B_1,\ldots,B_m\}$, then we may consider the functor
\[
\begin{split}
\Rep_{\bar S}(\cala)\to&\,\Rep_T(\cala)\\
H\,\,\,\mapsto&\,\,\,\, H\triangleright Y
\end{split}
\]
given by
\[
\begin{split}
H\triangleright Y\,\,:=\Big(H\,\boxtimes_{\cala(S\cap (\ID_1\cup\ldots\cup\ID_n))} \Big(\bigboxtimes_{\{\cala(\ID_i\cap \ID_j)\}} \big\{H_0(\partial\ID_i)\big\}\Big)\!\Big)&\\
\boxtimes_{\cala(\partial B_1\cup\ldots\cup\,\partial B_m)}\big(H_0(\partial B_1)\otimes\ldots\otimes&\, H_0(\partial B_m)\big).
\end{split}
\]
Note that if $X$ is a soccer ball decomposition that is compatible with $Y$, then we have a canonical unitary isomorphism
\begin{equation}\label{eq:V(S;XuY)=Xtri V(S_2;Y)}
V(\Sigma^+;X\cup Y) \,\cong\, V(\Sigma;X) \triangleright Y.
\end{equation}

Similarly, for $\Sigma^+=\Sigma\cup_S\Sigma'$ as above,
if $X$ is an $S$-open soccer ball decomposition of $\Sigma$ and $Y$ a compatible soccer ball decomposition of $\Sigma'$,
then
\[
V(\Sigma^+;X\cup Y) \,\cong\, X\triangleleft V(\Sigma';Y)
\]
for an analogously defined functor $X\triangleleft -$.

\begin{lemma}
Let $\Sigma^+=\Sigma\cup_S\Sigma'$ be as above.
Then for any $S$-open soccer ball decomposition $Y$ of $\Sigma'$,
the maps \eqref{eq:V(S;XuY)=Xtri V(S_2;Y)} induce unitary isomorphisms
\begin{equation}\label{eq:V(S;XuY)=Xtri V(S_2;Y)++}
\lambda_Y:\,V(\Sigma^+) \,\to\, V(\Sigma)\hspace{.3mm}\triangleright Y,
\end{equation}
canonical up to phase.
\end{lemma}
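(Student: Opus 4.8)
The plan is to manufacture $\lambda_Y$ by exactly the mechanism that defines $V(\Sigma^+)$ itself, namely through the definition complexes of Section~\ref{sec: ``up to non-canonical isomorphism''}, and then to show that the canonical isomorphisms \eqref{eq:V(S;XuY)=Xtri V(S_2;Y)} intertwine the two transition systems. Write $u_X$ for the canonical unitary isomorphism \eqref{eq:V(S;XuY)=Xtri V(S_2;Y)} attached to a soccer ball decomposition $X$ of $\Sigma$ compatible with $Y$; such an $X$ exists, since one may start from any soccer ball decomposition of $\Sigma$ and subdivide it in a collar of $S$ so that its $1$-skeleton matches, along $S$, the one induced by $Y$. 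For such $X$ set
\[
\lambda_{Y,X}\,:\,V(\Sigma^+)\,\xrightarrow{\;\sim\;}\,V(\Sigma^+;X\cup Y)\,\xrightarrow{\;u_X\;}\,V(\Sigma;X)\triangleright Y\,\xrightarrow{\;\sim\;}\,V(\Sigma)\triangleright Y,
\]
where the outer arrows are the canonical-up-to-phase identifications attached to the vertices $X\cup Y\in\IX_{\Sigma^+}$ and $X\in\IX_\Sigma$, the latter pushed through $-\triangleright Y$. Since $-\triangleright Y$ is a composite of Connes fusions, hence $\IC$-linear and functorial in its argument, it sends canonical-up-to-phase isomorphisms to canonical-up-to-phase isomorphisms, so the third arrow is canonical up to phase. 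By Definition~\ref{def: well defined up to canonical-up-to-phase unitary isomorphism} it then suffices to prove that $\lambda_{Y,X}$ is independent of $X$ up to phase.

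For the independence I would reduce to the case of two compatible decompositions $X_1,X_2$ differing by a single elementary move $X_1\overset{\ID}{\sim}X_2$ supported in a disc $\ID\subset\Sigma$ disjoint from $S$. The reduction uses a relative version of Lemma~\ref{lem: definition complex simply connected}: any two decompositions of $\Sigma$ compatible with the fixed $Y$ share the same $1$-skeleton along $S$, and the transversality/superimposition construction of that lemma, carried out rel a collar of $S$, joins them by a path in $\IX_\Sigma$ all of whose moves are supported in discs avoiding $S$, with every intermediate decomposition still compatible with $Y$. For such a move the same disc $\ID$, viewed now inside $\Sigma^+$, realizes an elementary move $X_1\cup Y\overset{\ID}{\sim}X_2\cup Y$ of $\IX_{\Sigma^+}$, so that the transition on the $V(\Sigma^+)$ side is the edge map $\Phi_\ID$ of \eqref{eq: PHI_D}, while the transition on the $V(\Sigma)\triangleright Y$ side is $\Phi_\ID\triangleright Y$.

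The heart of the matter is then the commutativity up to phase of
\[
\begin{array}{ccc}
V(\Sigma^+;X_1\cup Y) & \xrightarrow{\;u_{X_1}\;} & V(\Sigma;X_1)\triangleright Y\\
\Phi_\ID\big\downarrow & & \big\downarrow\Phi_\ID\triangleright Y\\
V(\Sigma^+;X_2\cup Y) & \xrightarrow{\;u_{X_2}\;} & V(\Sigma;X_2)\triangleright Y.
\end{array}
\]
By \eqref{eq: PHI_D} both vertical maps are composites $\phi_{YZ}^{-1}\circ\phi_{XZ}$ of graph-fusion associativity isomorphisms \eqref{eq: ass for graph fusion} together with the disc isomorphism of Corollary~\ref{cor: V(disc) independent of cell decomposition}, all localized inside $\ID$. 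Because $\ID$ is disjoint from $S$ and from every cell of $Y$, the additional fusions constituting $-\triangleright Y$ (the fusion along $\cala(S)$ with the white cells of $Y$, followed by the fusion with the black cells of $Y$) commute with these local recombinations by the associativity of graph fusion. Unwinding both composites as a single iterated graph fusion over the adjacency graph of $X_i\cup Y$ and regrouping the factors, the two sides agree, so the square commutes up to phase and $\lambda_{Y,X}$ is independent of $X$.

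I expect this last commutation --- matching the edge maps $\Phi_\ID$ through $u_X$ --- to be the main obstacle, as it requires tracking the graph-fusion associativity isomorphisms carefully and checking that inserting the local recombination inside $\ID$ commutes with the global $\triangleright Y$ fusion. The relative connectivity statement invoked in the reduction is a secondary but genuinely necessary technical point, which I would establish by rerunning the argument of Lemma~\ref{lem: definition complex simply connected} relative to a collar of $S$.
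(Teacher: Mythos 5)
Your construction of $\lambda_Y$ is, in its architecture, exactly the paper's: the paper defines, for each soccer ball decomposition $X$ of $\Sigma$ compatible with $Y$, the composite $V(\Sigma^+)\to V(\Sigma^+;X\cup Y)\to V(\Sigma;X)\triangleright Y\to V(\Sigma)\triangleright Y$ (outer maps from the definition complexes of Theorem~\ref{Mthm: conformal blocks}, middle map \eqref{eq:V(S;XuY)=Xtri V(S_2;Y)}), proves independence of $X$ one elementary move at a time via precisely the square you display, and concludes from the connectedness of the subcomplex $\tilde\IX_\Sigma\subset\IX_\Sigma$ spanned by the decompositions compatible with $Y$. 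Your fleshing-out of the square via \eqref{eq: PHI_D}, graph-fusion associativity \eqref{eq: ass for graph fusion}, and Corollary~\ref{cor: V(disc) independent of cell decomposition} is a legitimate expansion of what the paper dismisses as ``clear by construction''.

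The genuine gap is in your reduction step. The claim that any two decompositions of $\Sigma$ compatible with the fixed $Y$ share the same $1$-skeleton along $S$ is false, and with it the claim that they can be joined by moves supported in discs avoiding $S$. What compatibility actually forces is this: trivalence of $X\cup Y$ at a point of $S$ shows that the boundary vertices of $X$ on $S$ and those of $Y$ on $S$ are disjoint sets, and the soccer-ball condition at a boundary vertex of $X$ (adjacent to two white cells of $X$) forces the unique adjacent cell of $Y$ there to be black; so the only constraint is that the boundary vertices of $X$ lie in the interiors of the $S$-arcs of the black cells of $Y$, while their number and position within each such arc are unconstrained. Two compatible decompositions can therefore genuinely differ along $S$, and no path of moves in discs disjoint from $S$ connects them, so your relative-connectivity reduction does not reach all of $\tilde\IX_\Sigma$. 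The repair is to allow edges $X_1\overset{\ID}{\sim}X_2$ of $\tilde\IX_\Sigma$ whose disc $\ID$ meets $S$, which is what the paper implicitly does by taking arbitrary edges between compatible vertices; note that your commutation argument survives this, since $\ID$ then also realizes an edge $X_1\cup Y\overset{\ID}{\sim}X_2\cup Y$ of $\IX_{\Sigma^+}$, and both composites are regroupings of a single global graph fusion over the adjacency graph of $X_i\cup Y$, so \eqref{eq: ass for graph fusion} applies without the hypothesis $\ID\cap S=\emptyset$. Your instinct that connectivity of the compatible subcomplex requires proof is sound (the paper asserts it without detail, and a relative rerun of Lemma~\ref{lem: definition complex simply connected} is the right tool), but the paths it produces must be permitted to include moves touching $S$, e.g.\ by first performing moves that match the skeleta along $S$ and only afterwards working rel a collar of $S$.
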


\begin{proof}
Recall the definition complex $\IX_{\Sigma}$ from the proof of Theorem~\ref{Mthm: conformal blocks}, and
let $\tilde\IX_{\Sigma}\subset \IX_{\Sigma}$ be the subcomplex 
whose vertices are the soccer ball decompositions of $\Sigma$ that are compatible with $Y$.
For such a soccer ball decomposition $X\in \tilde\IX_{\Sigma}$, we may consider the composite
\[
\lambda_{X,Y}:\,V(\Sigma^+)\to V(\Sigma^+;X\cup Y) \to V(\Sigma;X)\triangleright Y \to V(\Sigma)\triangleright Y,
\]
where the middle map is the isomorphism \eqref{eq:V(S;XuY)=Xtri V(S_2;Y)}, and the two outer ones are constructed in the proof of Theorem \ref{Mthm: conformal blocks}.

We now show that $\lambda_{X,Y}$ is independent of $X$, up to phase.
Recall the maps $\Phi_\ID$ from \eqref{eq: PHI_D}.
If ${X_1}\in \tilde\IX_{\Sigma}$ is another compatible soccer ball decomposition related by ${X_1}\stackrel \ID\sim X$, then it is clear by construction that the diagram
\[
\tikzmath{\node (a) at (-1.75,0) {$V(\Sigma^+)$};\node (b) at (1,.9) {$V(\Sigma^+;X\cup Y)$};\node (b') at (1,-.9) {$V(\Sigma^+;{X_1}\cup Y)$};\node (c) at (4,.9) {$V(\Sigma;X)\triangleright Y$};
\node (c') at (4,-.9) {$V(\Sigma;{X_1})\triangleright Y$};\node (d) at (7,0) {$V(\Sigma)\triangleright Y$};
\draw[->](a) -- (b);\draw[->](a) -- (b');\draw[->](b) --node[scale=1.1, right]{$\scriptstyle \Phi_\ID$} (b');
\draw[->](b) -- (c);\draw[->](b') -- (c');\draw[->](c) --node[scale=1.1, right]{$\scriptstyle \Phi_\ID\hspace{.3mm}\triangleright Y$} (c');\draw[->](c) -- (d);\draw[->](c') -- (d);}
\]
commutes up to phase, and so $\lambda_{X,Y}=\lambda_{{X_1},Y}$ up to phase.
The result follows because $\tilde\IX_{\Sigma}$ is connected.
\end{proof}

The isomorphisms \eqref{eq:V(S;XuY)=Xtri V(S_2;Y)++} satisfy the following version of associativity.
Given surfaces $\Sigma$, $\Sigma'$, $\Sigma''$ and closed 1-manifolds $S$ and $T$ along with orientation preserving embeddings 
$\bar S\to \partial \Sigma$, $S\sqcup \bar T\to \Sigma'$, and $T\to \partial \Sigma''$,
then for every open soccer ball decompositions $X$ of $\Sigma$ and $Y$ of $\Sigma''$, the diagram
\begin{equation}\label{eq: ass of lambda}
\tikzmath{
\matrix [matrix of math nodes,column sep=1.6cm,row sep=1cm]
{
|(a)| V\big(\Sigma\cup_S\Sigma'\cup_T\Sigma''\big) \pgfmatrixnextcell |(b)| X\triangleleft V(\Sigma'\cup_T\Sigma'') \\
|(c)| V(\Sigma\cup_S\Sigma')\triangleright Y \pgfmatrixnextcell |(d)| X\triangleleft V(\Sigma')\triangleright Y \\}; 
\draw[->] (a) -- node [above] {$\scriptstyle \lambda_X$} (b);
\draw[->] (a) -- node [left] {$\scriptstyle \lambda_Y$} (c);
\draw[->] (a) -- node [above, xshift=5, pos=.53] {$\scriptstyle \lambda_{X\sqcup Y}$} (d);
\draw[->] (b) -- node [right] {$\scriptstyle X\triangleleft \lambda_Y$} (d);
\draw[->] (c) -- node [above] {$\scriptstyle \lambda_X\triangleright Y$} (d);
}
\end{equation}
commutes up to phase.

\begin{proof}[Proof of Theorem \ref{maintheorem gluing}]
We first deal with two special cases:
the first when $M$ consists solely of intervals, 
and the second when $M$ consists solely of circles.

If $M$ is a union of intervals, consider the following subset $\Gamma\subset \IX_\Sigma$ of the $1$-skeleton of the definition complex of $\Sigma:=\Sigma_1\cup_M\Sigma_2$.
The vertices of $\Gamma$ are the soccer ball decompositions $X$ with the property that each connected component of $M\subset \Sigma$ is a single edge of $X$,
and an edge $X\overset\ID\sim Y$ is in $\Gamma$ if either $\ID\subset \Sigma_1$ or $\ID\subset \Sigma_2$.
Note that $\Gamma$ is a connected graph.
Note also that if a soccer ball decomposition $X$ belongs to $\Gamma$, 
then all the $2$-cells of $X$, that are adjacent to $M$, are white.
Given $X\in \Gamma$, let us write $X|_{\Sigma_1}$ and $X|_{\Sigma_2}$ for the restrictions of the soccer ball decomposition $X$ to $\Sigma_1$ and to $\Sigma_2$.
For every $X\in \Gamma$, there is then an obvious isomorphism
\[
g_X:V(\Sigma;X)\to V(\Sigma_1;X|_{\Sigma_1})\boxtimes_{\cala(M)} V(\Sigma_2;X|_{\Sigma_2}),
\]
well defined up to phase.
Moreover, for every edge $X\overset\ID\sim Y$ of $\Gamma$, the diagram
\begin{equation}\label{eq: V(S)=V(S_1)[x]V(S_2)+}
\tikzmath{ \matrix [matrix of math nodes,column sep=1.6cm,row sep=.7cm]
{ |(a)| V(\Sigma;X) \pgfmatrixnextcell |(c)| V(\Sigma_1;X|_{\Sigma_1})\boxtimes_{\cala(M)} V(\Sigma_2;X|_{\Sigma_2})\\
|(b)|  V(\Sigma;Y) \pgfmatrixnextcell |(d)| V(\Sigma_1;Y|_{\Sigma_1})\boxtimes_{\cala(M)} V(\Sigma_2;Y|_{\Sigma_2})\\ }; 
\draw[->] (a) --node[left]{$\scriptstyle \Phi_\ID$} (b); \draw[->] 
($(c.south)+(-.3,0)$) --node[right, xshift=4] {$\scriptstyle \Phi_\ID\boxtimes 1\,$ {\footnotesize or} $\,\scriptstyle 1\boxtimes\Phi_\ID$} 
($(d.north)+(-.3,0)$);
\draw[->] (a) --node[above]{$\scriptstyle g_X$} (c); \draw[->] (b) --node[above]{$\scriptstyle g_Y$} (d); }
\end{equation}
commutes up to phase, where $\Phi_\ID$ are as in \eqref{eq: PHI_D}.
Since $\Gamma$ is connected, it follows from \eqref{eq: V(S)=V(S_1)[x]V(S_2)+} that the maps $g_X$ for $X\in \Gamma$ descend to a unitary isomorphism
\[
g:V(\Sigma)\to V(\Sigma_1)\boxtimes_{\cala(M)} V(\Sigma_2),
\] 
well defined up to phase.

If $M$ is a union of circles, let us define the auxiliary surfaces
\[
\Sigma_1^+:=\Sigma_1 \cup_M (M\times [0,1]),\,\,\,
\Sigma_2^+:=(M\times [0,1]) \cup_M \Sigma_2,\,\,\,
\Sigma^+:=\Sigma_1 \cup_M (M\times [0,1]) \cup_M \Sigma_2.
\]
Recall the definition of $\mathsf{2MAN}$ from \eqref{eq: def of 2MAN},
and note that there are canonical isomorphisms
$\Sigma_1^+\cong \Sigma_1$, $\Sigma_2^+\cong \Sigma_2$, $\Sigma^+\cong \Sigma$ in that groupoid
(even though there are no canonical homeomorphisms).
By Theorem \ref{Mthm: conformal blocks}, we therefore have unitary isomorphisms
$V(\Sigma_1^+)\cong V(\Sigma_1)$, $V(\Sigma_2^+)\cong V(\Sigma_2)$, $V(\Sigma^+)\cong V(\Sigma)$,
well defined up to phase.
Therefore, instead of \eqref{eq: main thm eq2}, we may as well construct an isomorphism
\begin{equation}\label{eq: the map g}
g\,:\,\,V(\Sigma^+)\to V(\Sigma_1^+)\boxtimes_{\cala(M)} V(\Sigma_2^+).
\end{equation}

Given open soccer ball decompositions $X$ of $\Sigma_1$ and $Y$ of $\Sigma_2$, 
we let \eqref{eq: the map g} be the composite
\[
\begin{split}V(\Sigma^+)\,&\to\,X\triangleleft V(M\times [0,1])\triangleright Y\\&\to\,X\triangleleft L^2\cala(M)\triangleright Y\\&\to\,X\triangleleft L^2\cala(M)\boxtimes_{\cala(M)}L^2\cala(M)
\triangleright Y\\&\to\,X\triangleleft V(M\times [0,1])\boxtimes_{\cala(M)}V(M\times [0,1])\triangleright Y\\&\to\,V(\Sigma_1^+)\boxtimes_{\cala(M)} V(\Sigma_2^+)\end{split}
\]
where the first map is $\lambda_{X\sqcup Y}$ from \eqref{eq: ass of lambda}, the second and fourth maps are provided by \eqref{eq:HSigmaL^2}, established in Theorem~\ref{thm: H_Sigma == L^2 cala(S)},
and the last map is the inverse of $\lambda_X\boxtimes \lambda_Y$.
We need to show that the above map does not depend on $X$ and on $Y$.
The following diagram is easily seen to commutate up to phase:
\begin{equation}\label{eq: rightmost composition}
\tikzmath[scale=.95]{
\node[scale=.95] (a) at (0,5) {$V(\Sigma^+)$};
\node[scale=.95] (b) at (0,4) {$X\triangleleft V(M\times [0,1])\triangleright Y$};
\node[scale=.95] (c) at (-.45,3) {$X\triangleleft L^2\cala(M)\triangleright Y$};
\node[scale=.95] (d) at (-1.6,2) {$X\triangleleft L^2\cala(M)\boxtimes_{\cala(M)}L^2\cala(M)\triangleright Y$};
\node[scale=.95] (e) at (-1.5,1) {$X\triangleleft V(M\times [0,1])\boxtimes_{\cala(M)}V(M\times [0,1])\triangleright Y$};
\node[scale=.95] (f) at (-.5,0) {$V(\Sigma_1^+)\boxtimes_{\cala(M)} V(\Sigma_2^+)$};
\node[scale=.95] (A) at (4.3,4.1) {$X\triangleleft V(\Sigma_2^+)$};
\node[scale=.95] (B) at (4.3,2.7) {$X\triangleleft L^2\cala(M)\boxtimes_{\cala(M)}V(M\times [0,1])\triangleright Y$};
\node[scale=.95] (C') at (5.9,1.45) {$X\triangleleft L^2\cala(M)\boxtimes_{\cala(M)} V(\Sigma_2^+)$};
\node[scale=.95] (D) at (4.5,.3) {$X\triangleleft V(M\times [0,1])\boxtimes_{\cala(M)} V(\Sigma_2^+)$};
\draw[->] (a) -- (b.north-|a);
\draw[->] (b.south-|a) -- (c.north-|a);
\draw[->] (c.south-|a) -- (d.north-|a);
\draw[->] (d.south-|a) -- (e.north-|a);
\draw[->] (e.south-|a) -- (f.north-|a);
\draw[->] (A) -- (b);\draw[->] (a) -- (A);
\draw[->] (B.south)+(-1.7,0) -- ($(d.east)+(0,.1)$);
\draw[->] (B.south)+(-.5,0) -- ($(e.north east)+(-.3,.1)$);
\draw[->] (B.south)+(.5,0) -- ($(C'.north)+(0,0)$);
\draw[->] (C'.south)+(.3,0) -- ($(D.north)+(.02,0)$);
\draw[->] (e.south east)+(0,.25) -- ($(D.north)+(-1.5,0)$);
\draw[->] (b.south east) -- (B);
\draw[->] (D.west)+(0,-.17) -- (f.east);
\draw[->] (A.east)+(0,-.2) to[out=-15,in=38,looseness=1.2] ($(C'.north)+(1.8,0)$);
}
\end{equation}
The vertical map from $V(\Sigma^+)$ to $V(\Sigma_1^+)\boxtimes_{\cala(M)} V(\Sigma_2^+)$ is our definition of the map $g$ \eqref{eq: the map g}, and
it is equal to the rightmost composition, up to phase.
The map $g$ therefore does not depend on $Y$, up to phase.
Since $X$ and $Y$ enter symmetrically in the definition of $g$, the map is also independent of $X$.

We now treat the general case.
Write $M=M_1\sqcup M_2$ as a disjoint union, where $M_1$ consists only of intervals, and $M_2$ consists only of circles.
Since $\cala(M_2)$ is a direct sum of type $I$ factors, 
$L^2\cala(M_2)$ is an $\cala(M_2)\,\bar\otimes\,\cala(M_2)^\op$-module and we have canonical isomorphisms
\[
\begin{split}
\,\,V(\Sigma_1)\boxtimes_{\cala(M)} V(\Sigma_2)
\,&\cong\,
V(\Sigma_1)\boxtimes_{\cala(M_1)\,\bar\otimes\,\cala(M_2)} V(\Sigma_2)\phantom{._{\big)}}\\
&\cong
\Big(V(\Sigma_1)\boxtimes_{\cala(M_1)} V(\Sigma_2)\Big)
\boxtimes_{\cala(M_2)\,\bar\otimes\,\cala(M_2)^\op} L^2\cala(M_2)\\
&\cong
\Big(V(\Sigma_1)\boxtimes_{\cala(M_1)} V(\Sigma_2)\Big)
\boxtimes_{\cala(M_2)\,\bar\otimes\,\cala(M_2)^\op} V(M_2\times [0,1]),
\intertext{where the last equality follows from \eqref{eq:HSigmaL^2}, that is Theorem \ref{thm: H_Sigma == L^2 cala(S)}.\goodbreak
By the two special cases that we treated earlier, we have further isomorphisms}
&\cong
V\big(\Sigma_1\cup_{M_1} \Sigma_2\big)
\boxtimes_{\cala(M_2)\,\bar\otimes\,\cala(M_2)^\op} V(M_2\times [0,1])\\
&\cong
V\big((\Sigma_1\cup_{M_1} \Sigma_2)\cup_{M_2\sqcup \bar M_2} (M_2\times [0,1])\big).
\end{split}
\]
There is a canonical isomorphism in $\mathsf{2MAN}$ between
$
\Sigma^+:=(\Sigma_1\cup_{M_1} \Sigma_2)\cup_{M_2\sqcup \bar M_2} (M_2\times [0,1])
$
and $\Sigma:=\Sigma_1\cup_M \Sigma_2$, which by Theorem \ref{Mthm: conformal blocks} translates to a unitary isomorphism $V(\Sigma^+)\cong V(\Sigma)$, canonical up to phase.
Composing all the above maps, we obtain our desired unitary isomorphism $g:V(\Sigma)\to V(\Sigma_1)\boxtimes_{\cala(M)} V(\Sigma_2)$, canonically up to phase.
\end{proof}

The gluing isomorphisms \eqref{eq: main thm eq2} satisfy the following version of associativity.
Let $\Sigma_1$, $\Sigma_2$, $\Sigma_3$ be surfaces, $M_1\subset \partial \Sigma_1$, $M\subset \partial \Sigma_2$, $N_2\subset \partial \Sigma_2$, $N\subset \partial \Sigma_3$ submanifolds with $M$ and $N_2$ disjoint, and let $M_1\to M$ and $N_2\to N$ be orientation reversing diffeomorphisms.
Pick a smooth structure on the trivalent graph $\partial \Sigma_1\cup_M\partial \Sigma_2\cup_N\partial \Sigma_3$, compatibly with those of $\partial \Sigma_i$.
Then the following diagram commutes up to phase:
\begin{equation}\label{eq: 4 g's}
\hspace{-1cm}\tikzmath{
\matrix [matrix of math nodes,column sep=1.2cm,row sep=.8cm]
{ 
|(a)| V(\Sigma_1\cup_M\Sigma_2\cup_N\Sigma_3) \pgfmatrixnextcell |(b)|  V(\Sigma_1)\boxtimes_{\cala(M)} V(\Sigma_2\cup_N\Sigma_3)\\
|(c)| V(\Sigma_1\cup_M\Sigma_2)\boxtimes_{\cala(N)}V(\Sigma_3) \pgfmatrixnextcell |(d)| V(\Sigma_1)\boxtimes_{\cala(M)} V(\Sigma_2)\boxtimes_{\cala(N)}V(\Sigma_3).\\ 
}; 
\draw[->] (a) -- node [above]	{$\scriptstyle g$} (b);
\draw[->] (c) -- node [above]	{$\scriptstyle g\boxtimes 1$} (d);
\draw[->] (a) -- node [left]		{$\scriptstyle g$} (c);
\draw[->] (b) -- node [right]	{$\scriptstyle 1\boxtimes g$} (d);
}\hspace{-1cm}
\end{equation}
Indeed, our definition of the gluing isomorphism is local, and two gluings that happen far away do not interfere with each other.

\begin{remark}
As a consequence of Theorem \ref{maintheorem gluing}, 
the definition of $V(\Sigma)$ presented in Section~\ref{sec: The Hilbert space associated to a surface} agrees with the more canonical one
provided by Theorem \ref{Mthm: conformal blocks} (which is based on \eqref{eq: graph + fill holes} and \eqref{eq: PHI_D}).
\end{remark}

\subsection{Conformal blocks}\label{sec: Conformal blocks}
Recall that a ``pair-of-pants'' is just another name for a sphere with three holes, and that a pair-of-pants
decomposition of a surface $\Sigma$ consists in a collection of circles in its interior that decompose it into pairs-of-pants $P_1,\ldots,P_k$
(we allow the situation where two legs of the same pair-of-pants get glued to each other).

With Theorem \ref{maintheorem gluing} in hand, it is now easy to compute the value of $V(\Sigma)$ for any surface $\Sigma$.
Recall that we always assume our conformal net $\cala$ has finite index, and the set $\Delta$ of isomorphism classes of irreducible $\cala$-sectors is finite.
Recall also that for $\lambda,\mu,\nu\in \Delta$, the multiplicity of $H_0$ inside $H_\lambda\boxtimes H_\mu\boxtimes H_\nu$ is denoted $N_0^{\lambda\mu\nu}$.

\begin{proposition}\label{prop: Conf Blocks 1}
Let $\Sigma$ be a surface, with boundary components $S_1,\ldots,S_n$.
Let $S_{n+1},\ldots,S_m$ be oriented circles in the interior of $\Sigma$ that form a pairs-of-pants decomposition $P_1,\ldots,P_k$.
Then the multiplicity of $H_{\lambda_1}(S_1)\otimes \ldots \otimes H_{\lambda_n}(S_n)$ in $V(\Sigma)$ is finite, and is given by
\begin{equation}\label{eq: state sum + labelings}
\raisebox{.15cm}{$\displaystyle\sum_{\substack{\text{labelings $\lambda_{n+1},\ldots,\lambda_m$}\\\text{of $S_{n+1},\ldots,S_m$}\\\text{by elements of $\Delta$}}}
\,\,\prod_{\substack{\text{pairs-of-pants $P_i{}_{\phantom{(}}$}\\\text{in the decomposition}\\\text{of $\Sigma$}}}
\,\,N_0^{\mu_1^i\mu_2^i\mu_3^i},
$}\end{equation}
where $\mu_1^i,\mu_2^i,\mu_3^i\in\Delta$ are computed in the following way:
letting $S_{j_1}$, $S_{j_2}$, $S_{j_3}$ be the three boundary component of $P_i$, then we set
$\mu_a^i:=\lambda_{j_a}$ if the orientation $S_{j_a}$ is the one induced by $P_i$,
and $\mu_a^i:=\bar \lambda_{j_a}$ otherwise.
\end{proposition}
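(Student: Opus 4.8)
The plan is to realize $V(\Sigma)$ as a single Connes fusion of the pair-of-pants blocks $V(P_i)$ with a collection of cylinder blocks, and then to extract the multiplicity combinatorially from the type $I$ structure of the circle algebras. Inserting cylinders is the device that lets me treat on an equal footing both interior circles separating two distinct pairs-of-pants and interior circles that glue a single pair-of-pants to itself.

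First I would set $\Sigma_1 := P_1 \sqcup \ldots \sqcup P_k$ and $\Sigma_2 := \bigsqcup_{j=n+1}^m (S_j \times [0,1])$, and let $M := \bigsqcup_{j=n+1}^m (S_j \sqcup S_j)$, where the two copies of each interior $S_j$ are the two legs that $S_j$ glues together in $\Sigma$. Then $\Sigma \cong \Sigma_1 \cup_M \Sigma_2$ in $\mathsf{2MAN}$, so a single application of Theorem~\ref{maintheorem gluing} (gluing along the disjoint union of circles $M$), together with the symmetric monoidality of the functor $V$ (Theorem~\ref{Mthm: conformal blocks}), gives a unitary isomorphism, canonical up to phase,
\[
V(\Sigma) \;\cong\; \Big(\textstyle\bigotimes_{i} V(P_i)\Big) \boxtimes_{\cala(M)} \Big(\textstyle\bigotimes_{j>n} V(S_j \times [0,1])\Big).
\]
By \eqref{eq:HSigmaL^2} and \eqref{eq:   KLM  } each cylinder block is $V(S_j\times[0,1]) \cong L^2\cala(S_j) \cong \bigoplus_{\lambda_j\in\Delta} H_{\lambda_j}(S_j)\otimes H_{\bar\lambda_j}(\bar S_j)$, so the cylinders are precisely the devices that pair the two legs of $S_j$ with dual labels.

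Next I would decompose the pair-of-pants factors and carry out the contraction. By Lemma~\ref{lem: computation of V((000))} applied to the genus zero surface $P_i$, the multiplicity of $H_{\mu_1}(S_{j_1})\otimes H_{\mu_2}(S_{j_2})\otimes H_{\mu_3}(S_{j_3})$ in $V(P_i)$ is $N_0^{\mu_1\mu_2\mu_3}$, the labels taken with respect to the orientations induced by $P_i$; since $\Rep_{\partial P_i}(\cala)$ is semisimple this determines $V(P_i)$ up to isomorphism. Because $\cala$ has finite index, every $\cala(S_j)\cong \bigoplus_{\lambda}\bfB(H_\lambda(S_j))$ is a direct sum of type $I$ factors (Theorem~\ref{thm:compute-bfB(net)}), so the relative tensor product over $\cala(M)$ reduces to the ordinary Hilbert space tensor product of isotypic multiplicity spaces, contracting matched sectors and summing over one label $\lambda_j\in\Delta$ per interior circle. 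Taking the multiplicity of $H_{\lambda_1}(S_1)\otimes\ldots\otimes H_{\lambda_n}(S_n)$ then turns the displayed isomorphism into the state sum: the outer sum runs over labelings $\lambda_{n+1},\ldots,\lambda_m$ of the interior circles, and the summand factors as $\prod_i N_0^{\mu_1^i\mu_2^i\mu_3^i}$, one factor per pair-of-pants. The dual-label pairing built into the cylinder decomposition, combined with the duality $H_\lambda(S)\leftrightarrow H_{\bar\lambda}(\bar S)$ of Lemma~\ref{lem: dual of H_lambda} and the fact that $\lambda\mapsto\bar\lambda$ is an involution on $\Delta$, is exactly what produces the rule $\mu_a^i=\lambda_{j_a}$ or $\bar\lambda_{j_a}$ according to whether $P_i$ induces the chosen orientation of $S_{j_a}$. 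Finiteness of the multiplicity is then immediate, since $\Delta$ is finite and each $N_0^{\mu_1\mu_2\mu_3}$ is finite ($\Rep(\cala)$ being a fusion category).

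The step I expect to require the most care is the contraction: one must check that fusing the type $I$ modules over the $\cala(S_j)$ genuinely tensors multiplicity spaces and contracts exactly the matched isotypic sectors, and that the orientation and duality bookkeeping along each interior circle — including circles that glue a pair-of-pants to itself, which is where the cylinder insertion earns its keep — is consistent with the convention in the statement. The analytic content here is the fact, recorded after Theorem~4, that relative tensor product over a direct sum of type $I$ factors is the completed algebraic tensor product, made explicit by the decompositions \eqref{eq:   KLM  } and \eqref{eq: hatcalaS}.
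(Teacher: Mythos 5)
Your proposal is correct and takes essentially the same route as the paper's own proof: the paper likewise replaces $\Sigma$ by the canonically isomorphic surface $\Sigma^+$ obtained by gluing $A=P_1\sqcup\ldots\sqcup P_k$ to the cylinders $B=(S_{n+1}\sqcup\ldots\sqcup S_m)\times[0,1]$, decomposes each $V(P_i)$ via Lemma~\ref{lem: computation of V((000))}, identifies $V(B)\cong L^2\cala(S_{n+1})\otimes\cdots\otimes L^2\cala(S_m)$ via Theorem~\ref{thm: H_Sigma == L^2 cala(S)}, and lets the fusion $-\boxtimes_{\cala(\partial B)}V(B)$ over the type~$I$ algebra contract the matched labels, with the orientation/duality bookkeeping handled by Lemma~\ref{lem: dual of H_lambda} exactly as you describe. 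Your closing remark about self-gluings is also the paper's implicit reason for inserting the cylinders, so there is nothing missing.
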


\begin{proof}
For a manifold $M$, a Hilbert space $H$, and an isomorphism class of sector $\lambda\in \Delta$, let us write
\[
\begin{split}
\quad M^\epsilon&:=\begin{cases} M\text{ for }\epsilon=+1\\ \overline M\text{ for }\epsilon=-1\end{cases}\quad\text{where the bar denotes orientation reversal,}\\
H^\epsilon&:=\begin{cases} H\text{ for }\epsilon=+1\\ \overline H\text{ for }\epsilon=-1\end{cases}\,\quad\text{where the bar denotes complex conjugate,}\\
\lambda^\epsilon&:=\begin{cases} \lambda\text{ for }\epsilon=+1\\ \bar \lambda\text{ for }\epsilon=-1\end{cases}\hspace{.172cm}\quad\text{where the bar denotes dual sector.}
\end{split}
\]
Let $\Sigma^+$ be the surface that is obtained by gluing $A:=P_1\sqcup\ldots\sqcup P_k$ to $B:=(S_{n+1}\sqcup\ldots\sqcup S_m)\times [0,1]$ along $\partial B$ in the obvious manner.
There is a canonical isomorphism in $\mathsf{2MAN}$ between $\Sigma$ and $\Sigma^+$, and so by Theorem \ref{Mthm: conformal blocks}
the spaces $V(\Sigma)$ and $V(\Sigma^+)$ are isomorphic as $\partial\Sigma$-sectors.

For each pair-of-pants $P_i$ in the decomposition, let $j_{i1}, j_{i2}, j_{i3}\in\{1,\ldots,m\}$ and $\epsilon_{i1}, \epsilon_{i2}, \epsilon_{i3}\in\{\pm1\}$
be such that $\partial P_i=S_{j_{i1}}^{\epsilon_{i1}}\sqcup S_{j_{i2}}^{\epsilon_{i2}}\sqcup S_{j_{i3}}^{\epsilon_{i3}}$ as oriented manifolds.
Recall that by Lemma \ref{lem: computation of V((000))} the Hilbert space associated to a pair-of-pants with boundary components $S_1$, $S_2$, $S_3$ is given by
\[
V\bigg(\tikzmath[scale=.5]{\filldraw[fill=gray!30] (0,0) circle (1); \filldraw[fill=white] (-.4,0) circle (.25)(.4,0) circle (.25);}\bigg) = \bigoplus_{\lambda,\mu,\nu\in\Delta} N_0^{\lambda\mu\nu} H_\lambda(S_1)\otimes H_\mu(S_2)\otimes H_\nu(S_3).
\]
We therefore have
\[
\begin{split}
V(A)\,&\cong\,\bigotimes_{i=1}^k V(P_i)\cong\, \bigotimes_{i=1}^k \bigg(\bigoplus_{\lambda,\mu,\nu\in\Delta} N_0^{\lambda\mu\nu} H_\lambda(S_{j_{i1}}^{\epsilon_{i1}})\otimes H_\mu(S_{j_{i2}}^{\epsilon_{i2}})\otimes H_\nu(S_{j_{i3}}^{\epsilon_{i3}})\bigg)\\
&\cong\,
\bigotimes_{i=1}^k \bigg(\bigoplus_{\lambda,\mu,\nu\in\Delta}\! N_0^{\lambda\mu\nu} \big(H_{\lambda^{\epsilon_{i1}}}(S_{j_{i1}})\big)^{\epsilon_{i1}}\!\otimes \big(H_{\mu^{\epsilon_{i2}}}(S_{j_{i2}})\big)^{\epsilon_{i2}}\!\otimes \big(H_{\nu^{\epsilon_{i3}}}(S_{j_{i3}})\big)^{\epsilon_{i3}}\!\bigg)\\
&\cong\,
\bigotimes_{i=1}^k \bigg(\bigoplus_{\lambda,\mu,\nu\in\Delta}\! N_0^{\lambda^{\epsilon_{i1}}\mu^{\epsilon_{i2}}\nu^{\epsilon_{i3}}} \big(H_{\lambda}(S_{j_{i1}})\big)^{\epsilon_{i1}}\!\otimes \big(H_{\mu}(S_{j_{i2}})\big)^{\epsilon_{i2}}\!\otimes \big(H_{\nu}(S_{j_{i3}})\big)^{\epsilon_{i3}}\!\bigg),
\end{split}
\]
where the middle isomorphism follows from \cite[\lemdualofHlambda]{BDH(nets)}.
The multiplicity of
\[
\begin{split}
\Big(H_{\lambda_1}(S_1)\otimes\ldots\otimes H_{\lambda_n}(S_n)\Big)\otimes \Big(H_{\lambda_{n+1}}(&S_{n+1})\otimes\ldots\otimes H_{\lambda_m}(S_m)\Big)\\&\otimes \Big(\overline{H_{\lambda'_{n+1}}(S_{n+1})}\otimes\ldots\otimes \overline{H_{\lambda'_m}(S_m)}\Big)
\end{split}
\]
inside $V(A)$ is thus given by
\[
m_{\lambda_1,\ldots,\lambda_n,\lambda_{n+1},\ldots,\lambda_m,\lambda'_{n+1},\ldots,\lambda'_m}:=\prod_{i=1}^k\,N_0^{\nu_1^i\nu_2^i\nu_3^i}
\]
where $\nu_a^i$ is defined to be $\lambda_{j_{ia}}$ if the orientation of $S_{j_{ia}}$ is the one induced by $P_i$,
and $\bar \lambda'_{j_{ia}}$ otherwise.

We have $V(B)\cong L^2\cala (S_{n+1})\otimes\ldots\otimes L^2\cala (S_m)$ by \eqref{eq:HSigmaL^2}, that is Theorem \ref{thm: H_Sigma == L^2 cala(S)}, and so
the operation $-\boxtimes_{\cala(\partial B)} V(B)$ amounts to contracting the indices $\lambda_i$ with $\lambda'_i$ for $i\in\{n+1,\ldots,m\}$.
The multiplicity of $H_{\lambda_1}(S_1)\otimes \ldots \otimes H_{\lambda_n}(S_n)$ inside $V(\Sigma)\cong V(\Sigma^+)\cong V(A)\boxtimes_{\cala(\partial B)} V(B)$ is therefore given by
\[
\raisebox{.1cm}{$\displaystyle\sum_{\,\,\,\,(\lambda_{n+1},\ldots,\lambda_m)\in\Delta^{m-n}}\,
m_{\lambda_1,\ldots,\lambda_n,\lambda_{n+1},\ldots,\lambda_m,\lambda_{n+1},\ldots,\lambda_m}$}\,,
\]
which is precisely \eqref{eq: state sum + labelings}.
\end{proof}

Note that for $\partial\Sigma=\emptyset$, the above result implies that $V(\Sigma)$ is a finite dimensional:

\begin{corollary}
Let $\Sigma$ be a closed surface and
let $S_1,\ldots,S_m$ be oriented circles in $\Sigma$ that form a pair-of-pants decomposition $P_1,\ldots,P_k$.
Then $V(\Sigma)$ is finite dimensional, and its dimension is given by
\begin{equation}\label{eq: state sum + labelings-dim}
\raisebox{.15cm}{$\displaystyle\sum_{\substack{\text{labelings $\lambda_{1},\ldots,\lambda_m$}\\\text{of $S_1,\ldots,S_m$}\\\text{by elements of $\Delta$}}}
\,\,\prod_{\substack{\text{pairs-of-pants $P_i{}_{\phantom{(}}$}\\\text{in the decomposition}\\\text{of $\Sigma$}}}
\,\,N_0^{\mu_1^i\mu_2^i\mu_3^i}
$}\end{equation}
where $\mu_1^i,\mu_2^i,\mu_3^i$ are as defined in Proposition \ref{prop: Conf Blocks 1}.\hfill $\square$
\end{corollary}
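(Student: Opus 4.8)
The plan is to deduce the corollary directly from Proposition~\ref{prop: Conf Blocks 1} by specializing to the case in which $\Sigma$ has empty boundary. Concretely, I would apply that proposition with $n=0$, so that the circles $S_1,\dots,S_m$ of the pair-of-pants decomposition are all interior circles and there are no boundary components to label a priori.

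First I would observe that when $\partial\Sigma=\emptyset$ the category $\Rep_{\partial\Sigma}(\cala)=\Rep_\emptyset(\cala)$ is canonically the category $\mathsf{Hilb}$ of Hilbert spaces, and $V(\Sigma)$ is therefore an ordinary Hilbert space rather than a genuine sector. In this situation the left-hand tensor product $H_{\lambda_1}(S_1)\otimes\cdots\otimes H_{\lambda_n}(S_n)$ appearing in Proposition~\ref{prop: Conf Blocks 1} is the empty tensor product, namely the unit object $\IC$, and the multiplicity in question becomes $\dim\hom_{\mathsf{Hilb}}(\IC,V(\Sigma))=\dim V(\Sigma)$. Thus the statement of Proposition~\ref{prop: Conf Blocks 1} with $n=0$ computes exactly the dimension of $V(\Sigma)$.

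Next I would read off the formula: with $n=0$ the summation in \eqref{eq: state sum + labelings} ranges over all labelings $\lambda_1,\dots,\lambda_m$ of the full set of circles $S_1,\dots,S_m$, and the product runs over all pairs-of-pants $P_i$ of the decomposition, with the indices $\mu_a^i$ determined by orientations exactly as in Proposition~\ref{prop: Conf Blocks 1}. This is verbatim the expression \eqref{eq: state sum + labelings-dim}. Finiteness is then immediate: since $\cala$ has finite index the set $\Delta$ is finite, so there are only finitely many labelings, and each summand is a finite product of the nonnegative integers $N_0^{\mu_1^i\mu_2^i\mu_3^i}$; hence the total is a finite number and $V(\Sigma)$ is finite dimensional.

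There is essentially no obstacle here beyond correctly interpreting the $n=0$ case of Proposition~\ref{prop: Conf Blocks 1}; the only point requiring a word of care is the identification of the multiplicity of the empty tensor product with $\dim V(\Sigma)$, which rests on the fact (used implicitly already in Theorem~\ref{Mthm: conformal blocks}) that a $\partial\Sigma$-sector for $\partial\Sigma=\emptyset$ is nothing but a Hilbert space.
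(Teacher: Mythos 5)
Your proposal is correct and follows exactly the paper's route: the paper treats this corollary as an immediate specialization of Proposition~\ref{prop: Conf Blocks 1} to the case $\partial\Sigma=\emptyset$ (i.e.\ $n=0$), which is why it carries only the end-of-proof mark and no written argument. Your added remark that a $\partial\Sigma$-sector with $\partial\Sigma=\emptyset$ is just a Hilbert space, so the multiplicity of the empty tensor product $\IC$ equals $\dim V(\Sigma)$, is precisely the implicit identification the paper relies on.
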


Given the formula \eqref{eq: state sum + labelings-dim}, it now seems reasonable 
that for a closed surface $\Sigma$, our vector space $V(\Sigma)$ agrees with
the notion of ``space of conformal blocks'' in conformal field 
theory~\cite{Friedan-Shenker(The-analytic-geometry-of-two-dimensional-conformal-field-theory)}, \cite{Moore-Seiberg(classical+quantum-conf-field-theory)}.
Moreover, for a surface $\Sigma$ with boundary components $S_1,\ldots,S_n$, the vector space
\begin{equation}\label{eq: conf bl with points}
V(\Sigma;\lambda_1,\ldots,\lambda_n):=\hom \big(H_{\lambda_1}(S_1)\otimes \ldots \otimes H_{\lambda_n}(S_n),V(\Sigma)\big)
\end{equation}
corresponds to the notion of ``space of conformal blocks with field insertions''.
Note that those statements, relating our space of conformal blocks to the CFT notion of space of conformal blocks, are really only conjectures because, for example,
there is no theorem that identifies the representation theory of a conformal net
with that of the corresponding chiral conformal field theory.

For every choice of labels $\lambda_1,\ldots,\lambda_n\in\Delta$, there is a projective action of $\Gamma(\Sigma):=\ker(G(\Sigma)\to\Diff(\partial\Sigma))$ on $V(\Sigma;\lambda_1,\ldots,\lambda_n)$.
Note however that the projective action of $G(\Sigma)$ on $V(\Sigma)$ contains strictly more information than the collection of all the projective actions of $\Gamma(\Sigma)$ on the spaces $V(\Sigma;\lambda_1,\ldots,\lambda_n)$.

\begin{remark}\label{rmk: modular functor}
If one only considers surfaces whose boundary components
are \hyphenation{para-me-trized}parametrized by $S^1$, and if one restricts attention to those gluings that are along full boundary components and that are compatible with the parametrizations,
then Theorems \ref{Mthm: conformal blocks} and \ref{maintheorem gluing}
provide a projective version of a modular functor (as in \cite[Def.~5.7.10]{Bakalov-Kirillov(Lect-tens-cat+mod-func)}, but without any control over the central cocycle).
\end{remark}

\section{Modularity}
\label{sec:modularity}

In this section, we will use the constructions from the previous sections to analyze the structure of the category of sectors of a conformal net with finite index.

\subsection{Fusion of sectors}
Let us define the \emph{standard pair-of-pants} to be the surface
\[
P:=(S^1\times[0,1])\sqcup(S^1\times[0,1])/\sim\,,
\]
where the equivalence relation identifies the point $(z,1)$ in the first copy of $S^1\times[0,1]$
with the point $(-\bar z,1)$ in the second copy of $S^1\times[0,1]$, for every $z\in S^1_\vdash =\{z\in S^1| \Re\mathrm{e}(z)\le 0\}$.
As mentioned at \eqref{eq: def H-sigma} and \eqref{eq:HSigmaL^2}, and proven in Theorems \ref{thm: any cover of the circle} and \ref{thm: H_Sigma == L^2 cala(S)},
the Hilbert space $V(S^1\times[0,1]) \cong L^2\cala(S^1)$ associated to $S^1\times[0,1]$ is well defined up to canonical unitary isomorphism, as opposed to being merely 
well defined up to canonical-up-to-phase unitary isomorphism.
We may therefore declare
\begin{equation}\label{eq: def: H_P}
H_P\,:=\,L^2\cala(S^1)\boxtimes_{\cala(S^1_{\vdash})}L^2\cala(S^1)
\end{equation}
to be the Hilbert space associated to $P$.
This is a lift of $V(P)$ from a Hilbert space well defined up to canonical-up-to-phase isomorphism to an honest Hilbert space.

Let us call $S_1$, $S_2$, $S_3$ the three boundary components of $P$,
and orient them as indicated in the following picture:
\[
P\,:\,\,\,\,\tikzmath{
\filldraw[fill=gray!30] (-85:1 and .3) coordinate (a) -- ($(-1,-1)+(-10:.6 and .18)$) arc (-10:-190:.6 and .18) -- (170:1 and .3);
\filldraw[fill=gray!30] (a) -- ($(1,-1)+(-170:.6 and .18)$) arc (-170:10:.6 and .18) -- (10:1 and .3);
\draw[dashed] ($(-1,-1)+(-10:.6 and .18)$) arc (-10:170:.6 and .18)($(1,-1)+(-170:.6 and .18)$) arc (190:10:.6 and .18);
\fill[black!25] (95:1 and .3) arc (95:360-85:1 and .3) -- cycle;
\fill[black!40] (95:1 and .3) arc (95:244.3:1 and .3) -- cycle;
\fill[black!60] (95:1 and .3) arc (95:-85:1 and .3) -- cycle;
\fill[black!45] (95:1 and .3) arc (95:-77.7:1 and .3) -- cycle;
\draw (0,0) circle (1 and .3);
\draw (a) -- (95:1 and .3);
\draw[densely dotted] (95:1 and .3) -- +(-.244,-.4);
\draw[densely dotted] (95:1 and .3) -- +(.202,-.4);
\draw[<-] (-1,-1) +(240:.6 and .18) -- +(239.5:.6 and .18);
\draw[<-] (1,-1) +(240:.6 and .18) -- +(239.5:.6 and .18);
\draw[<-] (225:1 and .3) -- (224.5:1 and .3);
\node[scale=.9] at (-1,-1.45) {$S_1$};
\node[scale=.9] at (1,-1.45) {$S_2$};
\node[scale=.9] at (-.025,.6) {$S_3$};
}\,,
\]
so that $\partial P = S_1\sqcup S_2\sqcup \bar S_3$.
Below, we will use the picture 
$\tikzmath[scale=.25]{\fill[gray!60] (.9,1.2) circle (.6 and .18);\filldraw[fill=gray!30](-.6,0) arc (-180:0:.6 and .18) arc (180:0:.3 and .18) arc (-180:0:.6 and .18)[rounded corners=1.7]-- ++(0,.3) -- ++(-.9,.6) [sharp corners]-- ++(0,.3) arc (0:-180:.6 and .18)[rounded corners=1.7]-- ++(0,-.3) -- ++(-.9,-.6) [sharp corners]-- cycle;
\draw[dashed] (0,0) +(0:.6 and .18) arc (0:180:.6 and .18);\draw[dashed] (1.8,0) +(0:.6 and .18) arc (0:180:.6 and .18);\draw (.9,1.2) +(0:.6 and .18) arc (0:180:.6 and .18);}$
for that same manifold $P$.

Recall from \cite[\secconformalembeddings]{BDH(nets)} 
that there is a monoidal structure 
\[
H,K\,\,\mapsto\,\,\, H\,\boxtimes^\mathsf{h}K\,:=\,H\,\boxtimes_{\cala(S^1_\vdash)}K
\]
on the category $\Rep(\cala)$, called horizontal fusion.
The $S_1$-$S_2$-$\bar S_3$-sector $H_P$ is constructed in such a way that there are canonical unitary isomorphisms
\[
\begin{split}
H\,\boxtimes^\mathsf{h}K\,\,\,&=\,\, H\,\boxtimes_{\cala(S^1_\vdash)}K\\
&\cong\,\, \big(H\boxtimes_{\cala(S_1)}L^2\cala(S_1)\big)\,\boxtimes_{\cala(S^1_\vdash)} \big(K\boxtimes_{\cala(S_2)}L^2\cala(S_2)\big)\\
&\cong\,\, (H\otimes K)\boxtimes_{\cala(S_1\cup S_2)}\big(L^2\cala(S_1)\boxtimes_{\cala(S^1_{\vdash})}L^2\cala(S_2)\big)\\
&=\,\, (H\otimes K)\boxtimes_{\cala(S_1\cup S_2)}H_P.
\end{split}
\]
In other words, $H_P$ represents the operation of horizontal fusion.

We will be interested in various ways of fusing copies of $H_P$ to each other.
For $a,b\in\{1,2,3\}$, we shall write $H_P\,\mbox{${}_a\boxtimes_b$}\, H_P$ for the fusion $H_P\boxtimes_{\cala(S^1)}H_P$,
where the algebra $\cala(S^1)$ acts on first copy of $H_P$ via its isomorphism to $\cala(S_a)$ (or its opposite),
and on the second $H_P$ via its isomorphism to $\cala(S_b)$ (or its opposite).
For example, the Hilbert spaces $H_P\,\mbox{${}_3\boxtimes_1$}\, H_P$ and $H_P\,\mbox{${}_2\boxtimes_3$}\, H_P$ correspond to the following two surfaces:
\[
H_P\,\mbox{${}_3\boxtimes_1$}\, H_P
:
\tikzmath[scale=.5]{
\fill[gray!60] (.9,1.2) circle (.6 and .18);
\filldraw[fill=gray!30]
(-1.5,-1.2) arc (-180:0:.6 and .18) arc (180:0:.3 and .18) arc (-180:0:.6 and .18) 
[rounded corners=3]-- ++(0,.3) -- ++(-.9,.6) [sharp corners]-- ++(0,.3) 
arc (180:0:.3 and .18) arc (-180:0:.6 and .18)
[rounded corners=3]-- ++(0,.3) -- ++(-.9,.6) [sharp corners]-- ++(0,.3) 
arc (0:-180:.6 and .18)
[rounded corners=3]-- ++(0,-.3) -- ++(-.9,-.6) [sharp corners]-- ++(0,-.3)
[rounded corners=3]-- ++(0,-.3) -- ++(-.9,-.6) [sharp corners]-- cycle;
\draw[dashed] (-.9,-1.2) +(0:.6 and .18) arc (0:180:.6 and .18);
\draw[dashed] (.9,-1.2) +(0:.6 and .18) arc (0:180:.6 and .18);
\draw[dashed] (0,0) +(0:.6 and .18) arc (0:180:.6 and .18);
\draw (0,0) +(0:.6 and .18) arc (0:-180:.6 and .18);
\draw[dashed] (1.8,0) +(0:.6 and .18) arc (0:180:.6 and .18);
\draw (.9,1.2) +(0:.6 and .18) arc (0:180:.6 and .18);
}\,\qquad\quad
H_P\,\mbox{${}_2\boxtimes_3$}\, H_P
:\,\,
\tikzmath[scale=.5]{
\pgftransformxscale{-1}
\fill[gray!60] (.9,1.2) circle (.6 and .18);
\filldraw[fill=gray!30]
(-1.5,-1.2) arc (-180:0:.6 and .18) arc (180:0:.3 and .18) arc (-180:0:.6 and .18) 
[rounded corners=3]-- ++(0,.3) -- ++(-.9,.6) [sharp corners]-- ++(0,.3) 
arc (180:0:.3 and .18) arc (-180:0:.6 and .18)
[rounded corners=3]-- ++(0,.3) -- ++(-.9,.6) [sharp corners]-- ++(0,.3) 
arc (0:-180:.6 and .18)
[rounded corners=3]-- ++(0,-.3) -- ++(-.9,-.6) [sharp corners]-- ++(0,-.3)
[rounded corners=3]-- ++(0,-.3) -- ++(-.9,-.6) [sharp corners]-- cycle;
\draw[dashed] (-.9,-1.2) +(0:.6 and .18) arc (0:180:.6 and .18);
\draw[dashed] (.9,-1.2) +(0:.6 and .18) arc (0:180:.6 and .18);
\draw[dashed] (0,0) +(0:.6 and .18) arc (0:180:.6 and .18);
\draw (0,0) +(0:.6 and .18) arc (0:-180:.6 and .18);
\draw[dashed] (1.8,0) +(0:.6 and .18) arc (0:180:.6 and .18);
\draw (.9,1.2) +(0:.6 and .18) arc (0:180:.6 and .18);
}\,
\medskip\]
Note that the associator for $\boxtimes^\mathsf{h}$ corresponds to a particular unitary isomorphism
\[
\alpha:\,H_P\,\mbox{${}_3\boxtimes_1$}\, H_P\,\to\, H_P\,\mbox{${}_2\boxtimes_3$}\, H_P,
\]
namely the one given by
\[
\begin{split}H_P\,\mbox{${}_3\boxtimes_1$}\, H_P\,&=\,\Big(L^2\cala(S^1)\boxtimes_{\cala(S^1_{\vdash})}L^2\cala(S^1)\Big)
\,\mbox{${}_3\boxtimes_1$}\, \Big(L^2\cala(S^1)\boxtimes_{\cala(S^1_{\vdash})}L^2\cala(S^1)\Big)\\
&=\,\Big(L^2\cala(S^1)\boxtimes_{\cala(S^1_{\vdash})}L^2\cala(S^1)\Big)\boxtimes_{\cala(S^1)}L^2\cala(S^1)\boxtimes_{\cala(S^1_{\vdash})}L^2\cala(S^1)\\
&\to\, L^2\cala(S^1)\boxtimes_{\cala(S^1_{\vdash})}L^2\cala(S^1)\boxtimes_{\cala(S^1_{\vdash})}L^2\cala(S^1)\\
&\to\, L^2\cala(S^1)\boxtimes_{\cala(S^1_{\vdash})}L^2\cala(S^1)\boxtimes_{\cala(S^1)^\op}\Big(L^2\cala(S^1)\boxtimes_{\cala(S^1_{\vdash})}L^2\cala(S^1)\Big)\\
&=\,\Big(L^2\cala(S^1)\boxtimes_{\cala(S^1_{\vdash})}L^2\cala(S^1)\Big)\,\mbox{${}_2\boxtimes_3$}\, \Big(L^2\cala(S^1)\boxtimes_{\cala(S^1_{\vdash})}L^2\cala(S^1)\Big)\\
&=\,\,H_P\,\mbox{${}_2\boxtimes_3$}\, H_P.
\end{split}
\]

This definition immediately raises a question.
Since $H_P$ is a representative of the ``up-to-phase equivalence class'' $V(P)$,
how does the above map $\alpha$ compare to the geometrically defined associator
$\alpha_{geo}:V(P)\,\mbox{${}_3\boxtimes_1$}\, V(P)\,\to\, V(P)\,\mbox{${}_2\boxtimes_3$}\, V(P)$
\[
\alpha_{geo}:\,\,
\tikzmath{
\node at (0,0)
{$V\big(\tikzmath[scale=.25]{\fill[gray!60] (.9,1.2) circle (.6 and .18);\filldraw[fill=gray!30](-.6,0) arc (-180:0:.6 and .18) arc (180:0:.3 and .18) arc (-180:0:.6 and .18)[rounded corners=1.7]-- ++(0,.3) -- ++(-.9,.6) [sharp corners]-- ++(0,.3) arc (0:-180:.6 and .18)[rounded corners=1.7]-- ++(0,-.3) -- ++(-.9,-.6) [sharp corners]-- cycle;
\draw[dashed] (0,0) +(0:.6 and .18) arc (0:180:.6 and .18);\draw[dashed] (1.8,0) +(0:.6 and .18) arc (0:180:.6 and .18);\draw (.9,1.2) +(0:.6 and .18) arc (0:180:.6 and .18);}
\big)$};
\node at (-.225,-.7)
{$V\big(\tikzmath[scale=.25]{\fill[gray!60] (.9,1.2) circle (.6 and .18);\filldraw[fill=gray!30](-.6,0) arc (-180:0:.6 and .18) arc (180:0:.3 and .18) arc (-180:0:.6 and .18)[rounded corners=1.7]-- ++(0,.3) -- ++(-.9,.6) [sharp corners]-- ++(0,.3) arc (0:-180:.6 and .18)[rounded corners=1.7]-- ++(0,-.3) -- ++(-.9,-.6) [sharp corners]-- cycle;
\draw[dashed] (0,0) +(0:.6 and .18) arc (0:180:.6 and .18);\draw[dashed] (1.8,0) +(0:.6 and .18) arc (0:180:.6 and .18);\draw (.9,1.2) +(0:.6 and .18) arc (0:180:.6 and .18);}
\big)$};
\node at (-.272,-.35) {$\scriptscriptstyle\boxtimes\hspace{.5mm} \cala(\tikzmath[scale=.25]{\draw circle (.6 and .18);})\hspace{.5mm}$};
}
\!\xrightarrow{g^{-1}}\,
V\Big(\,
\tikzmath[scale=.25]{
\fill[gray!60] (.9,1.2) circle (.6 and .18);
\filldraw[fill=gray!30]
(-1.5,-1.2) arc (-180:0:.6 and .18) arc (180:0:.3 and .18) arc (-180:0:.6 and .18) 
[rounded corners=1.5]-- ++(0,.3) -- ++(-.9,.6) [sharp corners]-- ++(0,.3) 
arc (180:0:.3 and .18) arc (-180:0:.6 and .18)
[rounded corners=1.5]-- ++(0,.3) -- ++(-.9,.6) [sharp corners]-- ++(0,.3) 
arc (0:-180:.6 and .18)
[rounded corners=1.5]-- ++(0,-.3) -- ++(-.9,-.6) [sharp corners]-- ++(0,-.3)
[rounded corners=1.5]-- ++(0,-.3) -- ++(-.9,-.6) [sharp corners]-- cycle;
\draw[dashed] (-.9,-1.2) +(0:.6 and .18) arc (0:180:.6 and .18);
\draw[dashed] (.9,-1.2) +(0:.6 and .18) arc (0:180:.6 and .18);
\draw[dashed] (1.8,0) +(0:.6 and .18) arc (0:180:.6 and .18);
\draw (.9,1.2) +(0:.6 and .18) arc (0:180:.6 and .18);
}
\Big)
\,\,\xrightarrow{V(\underline{\alpha})}\,\,
V\Big(
\tikzmath[scale=.25]{
\pgftransformxscale{-1}
\fill[gray!60] (.9,1.2) circle (.6 and .18);
\filldraw[fill=gray!30]
(-1.5,-1.2) arc (-180:0:.6 and .18) arc (180:0:.3 and .18) arc (-180:0:.6 and .18) 
[rounded corners=1.5]-- ++(0,.3) -- ++(-.9,.6) [sharp corners]-- ++(0,.3) 
arc (180:0:.3 and .18) arc (-180:0:.6 and .18)
[rounded corners=1.5]-- ++(0,.3) -- ++(-.9,.6) [sharp corners]-- ++(0,.3) 
arc (0:-180:.6 and .18)
[rounded corners=1.5]-- ++(0,-.3) -- ++(-.9,-.6) [sharp corners]-- ++(0,-.3)
[rounded corners=1.5]-- ++(0,-.3) -- ++(-.9,-.6) [sharp corners]-- cycle;
\draw[dashed] (-.9,-1.2) +(0:.6 and .18) arc (0:180:.6 and .18);
\draw[dashed] (.9,-1.2) +(0:.6 and .18) arc (0:180:.6 and .18);
\draw[dashed] (1.8,0) +(0:.6 and .18) arc (0:180:.6 and .18);
\draw (.9,1.2) +(0:.6 and .18) arc (0:180:.6 and .18);
}
\,\Big)
\,\xrightarrow{\,\,g\,\,}\!
\tikzmath{
\node at (0,0)
{$V\big(\tikzmath[scale=.25]{\fill[gray!60] (.9,1.2) circle (.6 and .18);\filldraw[fill=gray!30](-.6,0) arc (-180:0:.6 and .18) arc (180:0:.3 and .18) arc (-180:0:.6 and .18)[rounded corners=1.7]-- ++(0,.3) -- ++(-.9,.6) [sharp corners]-- ++(0,.3) arc (0:-180:.6 and .18)[rounded corners=1.7]-- ++(0,-.3) -- ++(-.9,-.6) [sharp corners]-- cycle;
\draw[dashed] (0,0) +(0:.6 and .18) arc (0:180:.6 and .18);\draw[dashed] (1.8,0) +(0:.6 and .18) arc (0:180:.6 and .18);\draw (.9,1.2) +(0:.6 and .18) arc (0:180:.6 and .18);}
\big)$};
\node at (.225,-.7)
{$V\big(\tikzmath[scale=.25]{\fill[gray!60] (.9,1.2) circle (.6 and .18);\filldraw[fill=gray!30](-.6,0) arc (-180:0:.6 and .18) arc (180:0:.3 and .18) arc (-180:0:.6 and .18)[rounded corners=1.7]-- ++(0,.3) -- ++(-.9,.6) [sharp corners]-- ++(0,.3) arc (0:-180:.6 and .18)[rounded corners=1.7]-- ++(0,-.3) -- ++(-.9,-.6) [sharp corners]-- cycle;
\draw[dashed] (0,0) +(0:.6 and .18) arc (0:180:.6 and .18);\draw[dashed] (1.8,0) +(0:.6 and .18) arc (0:180:.6 and .18);\draw (.9,1.2) +(0:.6 and .18) arc (0:180:.6 and .18);}
\big)$};
\node at (.17,-.35) {$\scriptscriptstyle\boxtimes\hspace{1mm} \cala(\tikzmath[scale=.25]{\draw circle (.6 and .18);})\hspace{1mm}$};
}
\]
obtained by composing two gluing isomorphisms \eqref{eq: main thm eq2} and the image $V(\underline\alpha)$ of the obvious diffeomorphism
$
\underline\alpha:\tikzmath[scale=.20]{
\fill[gray!60] (.9,1.2) circle (.6 and .18);
\filldraw[fill=gray!30]
(-1.5,-1.2) arc (-180:0:.6 and .18) arc (180:0:.3 and .18) arc (-180:0:.6 and .18) 
[rounded corners=1.2]-- ++(0,.3) -- ++(-.9,.6) [sharp corners]-- ++(0,.3) 
arc (180:0:.3 and .18) arc (-180:0:.6 and .18)
[rounded corners=1.2]-- ++(0,.3) -- ++(-.9,.6) [sharp corners]-- ++(0,.3) 
arc (0:-180:.6 and .18)
[rounded corners=1.2]-- ++(0,-.3) -- ++(-.9,-.6) [sharp corners]-- ++(0,-.3)
[rounded corners=1.2]-- ++(0,-.3) -- ++(-.9,-.6) [sharp corners]-- cycle;
\draw[dashed] (-.9,-1.2) +(0:.6 and .18) arc (0:180:.6 and .18);
\draw[dashed] (.9,-1.2) +(0:.6 and .18) arc (0:180:.6 and .18);
\draw[dashed] (1.8,0) +(0:.6 and .18) arc (0:180:.6 and .18);
\draw (.9,1.2) +(0:.6 and .18) arc (0:180:.6 and .18);
}
\stackrel{\simeq}\longrightarrow
\tikzmath[scale=.20]{
\pgftransformxscale{-1}
\fill[gray!60] (.9,1.2) circle (.6 and .18);
\filldraw[fill=gray!30]
(-1.5,-1.2) arc (-180:0:.6 and .18) arc (180:0:.3 and .18) arc (-180:0:.6 and .18) 
[rounded corners=1.2]-- ++(0,.3) -- ++(-.9,.6) [sharp corners]-- ++(0,.3) 
arc (180:0:.3 and .18) arc (-180:0:.6 and .18)
[rounded corners=1.2]-- ++(0,.3) -- ++(-.9,.6) [sharp corners]-- ++(0,.3) 
arc (0:-180:.6 and .18)
[rounded corners=1.2]-- ++(0,-.3) -- ++(-.9,-.6) [sharp corners]-- ++(0,-.3)
[rounded corners=1.2]-- ++(0,-.3) -- ++(-.9,-.6) [sharp corners]-- cycle;
\draw[dashed] (-.9,-1.2) +(0:.6 and .18) arc (0:180:.6 and .18);
\draw[dashed] (.9,-1.2) +(0:.6 and .18) arc (0:180:.6 and .18);
\draw[dashed] (1.8,0) +(0:.6 and .18) arc (0:180:.6 and .18);
\draw (.9,1.2) +(0:.6 and .18) arc (0:180:.6 and .18);
}$\, under the functor $V$?
Since $\alpha_{geo}$ is only well defined up to phase, the best we can hope for is that the equation $\alpha_{geo}=\alpha$ holds up to phase:

\begin{proposition}\label{prop: al = al}
The two maps $\alpha,\alpha_{geo}:\,H_P\,\mbox{${}_3\boxtimes_1$}\, H_P\,\to\, H_P\,\mbox{${}_2\boxtimes_3$}\, H_P$ are equal up to a phase.
\end{proposition}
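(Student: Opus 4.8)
The plan is to express both maps as composites of the canonical gluing isomorphisms of Theorem~\ref{maintheorem gluing} and the canonical associativity isomorphism of the relative tensor product, and then to observe that these two composites coincide once the global phase ambiguities are accounted for. The point is that $\alpha_{geo}=g\circ V(\underline\alpha)\circ g^{-1}$ is built from the gluing isomorphism $g$ and the map $V(\underline\alpha)$, each well defined only up to a single phase, whereas $\alpha$ is assembled entirely from canonical isomorphisms of $L^2$-spaces. No abstract uniqueness argument is available: the source and target are the conformal block spaces of four-holed spheres, whose multiplicity spaces are typically higher-dimensional, so the space of unitary $\partial\Sigma$-sector isomorphisms between them is large. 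The content of the proposition is therefore that $\alpha$ and $\alpha_{geo}$ are literally the same composite of canonical maps, and they must be matched structurally rather than by a dimension count.

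First I would fix the geometry. Both $H_P\,\mbox{${}_3\boxtimes_1$}\, H_P$ and $H_P\,\mbox{${}_2\boxtimes_3$}\, H_P$ are, via Theorem~\ref{maintheorem gluing}, the spaces $V(\Sigma_{src})$ and $V(\Sigma_{tgt})$ of the two four-holed spheres depicted above, each obtained by gluing two copies of the standard pair-of-pants $P$ along a single boundary circle, and $\underline\alpha$ is the homeomorphism between them. Using the definition \eqref{eq: def: H_P} of $H_P$ as $L^2\cala(S^1)\boxtimes_{\cala(S^1_\vdash)}L^2\cala(S^1)$, together with the canonical (not merely up-to-phase) identification $V(S^1\times[0,1])\cong L^2\cala(S^1)$ of \eqref{eq:HSigmaL^2} and Theorem~\ref{thm: H_Sigma == L^2 cala(S)}, I would refine both decompositions down to cylinders: each copy of $P$ is two cylinders fused along $S^1_\vdash$, so each four-holed sphere is a fusion of four cylinders. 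By the associativity of gluing \eqref{eq: 4 g's}, the gluing isomorphism $g$ for either four-holed sphere factors as an iterated composite of elementary gluing isomorphisms, under which $V(\Sigma_{src})$ and $V(\Sigma_{tgt})$ become the two bracketings of a fourfold relative tensor product of copies of $L^2\cala(S^1)$.

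Next I would identify $V(\underline\alpha)$. The homeomorphism $\underline\alpha$ may be chosen, up to isotopy rel boundary, to be compatible with the chosen cylinder decompositions, so that it merely reshuffles the four cylinders and the circles along which they are fused. Naturality of $V$ (the functoriality in Theorem~\ref{Mthm: conformal blocks}), combined with the local character of the gluing isomorphism, then identifies $V(\underline\alpha)$, transported through the elementary gluing isomorphisms of the previous step, with the associativity isomorphism of the iterated relative tensor product that rebrackets $\mbox{${}_3\boxtimes_1$}$ into $\mbox{${}_2\boxtimes_3$}$. But this is precisely the chain of canonical isomorphisms used to define $\alpha$. Hence, for suitable choices of the phase-ambiguous maps, $g\circ V(\underline\alpha)\circ g^{-1}$ and $\alpha$ are the same composite of canonical isomorphisms.

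Finally, the phase bookkeeping. Each of the two instances of $g$ and the single instance of $V(\underline\alpha)$ is well defined only up to a single global scalar of modulus one, while every isomorphism entering $\alpha$ is genuinely canonical; hence any admissible representative of $\alpha_{geo}$ differs from $\alpha$ by the product of these three scalars, yielding $\alpha_{geo}=c\,\alpha$ with $|c|=1$. The main obstacle is the middle step: showing that $V(\underline\alpha)$, transported through the gluing isomorphisms, is \emph{genuinely} the associativity isomorphism of Connes fusion, and not merely some other unitary intertwiner with the same source and target. This distinction is invisible to any dimension count of the multiplicity spaces; it requires tracking $\underline\alpha$ through the iterated gluing isomorphisms, using the locality built into the definition of $g$ and the associativity diagram \eqref{eq: 4 g's}.
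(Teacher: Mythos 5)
Your overall strategy is the same as the paper's: reduce both $\alpha$ and $\alpha_{geo}$ to composites of the canonical isomorphisms $V(S^1\times[0,1])\cong L^2\cala(S^1)$ of Theorem~\ref{thm: H_Sigma == L^2 cala(S)}, with the three-cylinder picture $L^2\cala(S^1)\boxtimes_{\cala(S^1_\vdash)}L^2\cala(S^1)\boxtimes_{\cala(S^1_\vdash)}L^2\cala(S^1)$ serving as the common way-station, and you correctly observe that no Schur-type argument is available since the relevant multiplicity spaces exceed dimension one. But your central step --- that $V(\underline\alpha)$, transported through the gluing isomorphisms, \emph{is} the canonical rebracketing-and-cancellation map --- is asserted rather than proven, and the tools you invoke cannot deliver it. First, the hypotheses of \eqref{eq: 4 g's} fail for your cylinder refinement: that diagram requires the two gluing loci $M$ and $N_2$ to be disjoint submanifolds of the boundary of the middle piece, whereas here the circle along which the two pairs-of-pants are glued is not contained in the boundary of any single cylinder --- it straddles two cylinders, being assembled from the right halves $S^1_\dashv$ of their top circles, and it shares its endpoints $\pm i$ with the interval $S^1_\vdash$ along which those cylinders were fused. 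Second, ``naturality of $V$ plus locality of $g$'' underdetermines the circle-gluing isomorphism: these formal properties are exactly the ones that hold for \emph{any} choice of phase representatives, so they cannot distinguish the transported $V(\underline\alpha)$ from another unitary intertwiner with the same source and target --- the very distinction you correctly flag as the crux.

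What actually closes this gap in the paper is unwinding the \emph{definition} of $g$ for gluing along circles: $g$ is the five-step composite through $X\triangleleft V(M\times[0,1])\triangleright Y$ and $X\triangleleft L^2\cala(M)\boxtimes_{\cala(M)}L^2\cala(M)\triangleright Y$, built from the maps $\lambda_X$ associated to open soccer ball decompositions and the canonical isomorphism \eqref{eq:HSigmaL^2}. Concretely, the paper splits the comparison square into two mirror-image sub-diagrams along the three-cylinder object, reduces each to the seven-sided diagram \eqref{7gon of 3-pants}, and proves that one by applying the invertible functor $X\triangleleft-$ for a carefully chosen $(S^1\times\{0\})$-open soccer ball decomposition $X$ of the annulus (chosen so that $X$ together with a suitable decomposition of the annulus yields an open soccer ball decomposition $Y$ of the relevant glued surface); the commutativity of the central hexagon $(\star)$ then comes precisely from the identity of $g$ with the rightmost composition in \eqref{eq: rightmost composition}, precomposed with the obvious homeomorphism. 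In other words, the compatibility of circle gluing with the unit constraint $-\boxtimes_{\cala(S^1)}L^2\cala(S^1)\cong{-}$ is not a formal consequence of functoriality and associativity; it is extracted from the explicit construction of $g$, and your proposal would need to carry out this unwinding (or an equivalent of it) to be complete.
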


Before embarking on the proof, we introduce the following graphical notation (and, implicitly, obvious variations thereof):
\[
\begin{split}
\tikzmath[scale=.4]{\fill[gray!50] (0,1.2) circle (.6 and .18);
\filldraw[fill=gray!20](-.6,0) arc (-180:0:.6 and .18) -- +(0,1.2) arc (0:-180:.6 and .18) -- cycle;
\draw[dashed, ultra thin] (0,0) +(0:.6 and .18) arc (0:180:.6 and .18);
\draw (0,1.2) +(0:.6 and .18) arc (0:180:.6 and .18);
\node[scale=.7] at (.02,.53) {$L^2$};}\,
&\,:=L^2\cala(S^1)
\phantom{|_{|_{|_{|_{|_{|}}}}}}\\  
\,\,\qquad\tikzmath[scale=.4]{\fill[gray!50] (0,1.2) circle (.6 and .18);
\filldraw[fill=gray!20](-.6,0) arc (-180:0:.6 and .18) -- +(0,1.2) arc (0:-180:.6 and .18) -- cycle;
\draw[dashed, ultra thin] (0,0) +(0:.6 and .18) arc (0:180:.6 and .18);
\draw (0,1.2) +(0:.6 and .18) arc (0:180:.6 and .18);
\node at (.02,.5) {$\scriptstyle V$};}\,
&\,:=V\big(\,\tikzmath[scale=.25]{\fill[gray!50] (0,1.2) circle (.6 and .18);
\filldraw[fill=gray!20](-.6,0) arc (-180:0:.6 and .18) -- +(0,1.2) arc (0:-180:.6 and .18) -- cycle;
\draw[dashed, ultra thin] (0,0) +(0:.6 and .18) arc (0:180:.6 and .18);
\draw (0,1.2) +(0:.6 and .18) arc (0:180:.6 and .18);}\,
\big)=V(S^1\times [0,1])
\phantom{|_{|_{|_{|_{|_{|}}}}}}
\end{split}
\]
\[
\bigskip
\begin{split}
\tikzmath[scale=.4]{\fill[gray!50] (.9,1.2) circle (.6 and .18);\filldraw[fill=gray!20](-.6,0) arc (-180:0:.6 and .18) arc (180:0:.3 and .18) arc (-180:0:.6 and .18)[rounded corners=2.4]-- ++(0,.3) -- ++(-.9,.6) [sharp corners]-- ++(0,.3) arc (0:-180:.6 and .18)[rounded corners=2.4]-- ++(0,-.3) -- ++(-.9,-.6) [sharp corners]-- cycle;
\draw[dashed, ultra thin] (0,0) +(0:.6 and .18) arc (0:180:.6 and .18);\draw[dashed, ultra thin] (1.8,0) +(0:.6 and .18) arc (0:180:.6 and .18);\draw (.9,1.2) +(0:.6 and .18) arc (0:180:.6 and .18);
\draw(.9,.18) -- (.9,1.02);
\node[scale=.7] at (.45,.53) {$L^2$};
\node[scale=.7] at (1.38,.53) {$L^2$};}
&\,:=\,\tikzmath[scale=.35]{\fill[gray!50] (0,1.2) circle (.6 and .18);
\filldraw[fill=gray!20](-.6,0) arc (-180:0:.6 and .18) -- +(0,1.2) arc (0:-180:.6 and .18) -- cycle;
\draw[dashed, ultra thin] (0,0) +(0:.6 and .18) arc (0:180:.6 and .18);
\draw (0,1.2) +(0:.6 and .18) arc (0:180:.6 and .18);
\node[scale=.7] at (.02,.54) {$L^2$};}\,\,
\boxtimes_{\cala(S^1_{\vdash})}
\tikzmath[scale=.35]{\fill[gray!50] (0,1.2) circle (.6 and .18);
\filldraw[fill=gray!20](-.6,0) arc (-180:0:.6 and .18) -- +(0,1.2) arc (0:-180:.6 and .18) -- cycle;
\draw[dashed, ultra thin] (0,0) +(0:.6 and .18) arc (0:180:.6 and .18);
\draw (0,1.2) +(0:.6 and .18) arc (0:180:.6 and .18);
\node[scale=.7] at (.02,.54) {$L^2$};}\,
=H_P
\phantom{|_{|_{|_{|_{|_{|_{|_{|}}}}}}}}\\  
\tikzmath[scale=.4]{\fill[gray!50] (.9,1.2) circle (.6 and .18);\filldraw[fill=gray!20](-.6,0) arc (-180:0:.6 and .18) arc (180:0:.3 and .18) arc (-180:0:.6 and .18)[rounded corners=2.4]-- ++(0,.3) -- ++(-.9,.6) [sharp corners]-- ++(0,.3) arc (0:-180:.6 and .18)[rounded corners=2.4]-- ++(0,-.3) -- ++(-.9,-.6) [sharp corners]-- cycle;
\draw[dashed, ultra thin] (0,0) +(0:.6 and .18) arc (0:180:.6 and .18);\draw[dashed, ultra thin] (1.8,0) +(0:.6 and .18) arc (0:180:.6 and .18);\draw (.9,1.2) +(0:.6 and .18) arc (0:180:.6 and .18);
\draw(.9,.18) -- (.9,1.02);
\node at (.5,.53) {$\scriptstyle V$};
\node at (1.3,.53) {$\scriptstyle V$};}
&\,:=\tikzmath[scale=.35]{\fill[gray!50] (0,1.2) circle (.6 and .18);
\filldraw[fill=gray!20](-.6,0) arc (-180:0:.6 and .18) -- +(0,1.2) arc (0:-180:.6 and .18) -- cycle;
\draw[dashed, ultra thin] (0,0) +(0:.6 and .18) arc (0:180:.6 and .18);
\draw (0,1.2) +(0:.6 and .18) arc (0:180:.6 and .18);
\node[scale=.9] at (0,.53) {$\scriptstyle V$};}\,\,
\boxtimes_{\cala(S^1_{\vdash})}
\tikzmath[scale=.35]{\fill[gray!50] (0,1.2) circle (.6 and .18);
\filldraw[fill=gray!20](-.6,0) arc (-180:0:.6 and .18) -- +(0,1.2) arc (0:-180:.6 and .18) -- cycle;
\draw[dashed, ultra thin] (0,0) +(0:.6 and .18) arc (0:180:.6 and .18);
\draw (0,1.2) +(0:.6 and .18) arc (0:180:.6 and .18);
\node[scale=.9] at (0,.53) {$\scriptstyle V$};
}
\phantom{|_{|_{|_{|_{|_{|_{|_{|}}}}}}}}\\  
\tikzmath[scale=.4]{\fill[gray!50] (.9,1.2) circle (.6 and .18);\filldraw[fill=gray!20](-.6,0) arc (-180:0:.6 and .18) arc (180:0:.3 and .18) arc (-180:0:.6 and .18)[rounded corners=2.4]-- ++(0,.3) -- ++(-.9,.6) [sharp corners]-- ++(0,.3) arc (0:-180:.6 and .18)[rounded corners=2.4]-- ++(0,-.3) -- ++(-.9,-.6) [sharp corners]-- cycle;
\draw[dashed, ultra thin] (0,0) +(0:.6 and .18) arc (0:180:.6 and .18);\draw[dashed, ultra thin] (1.8,0) +(0:.6 and .18) arc (0:180:.6 and .18);\draw (.9,1.2) +(0:.6 and .18) arc (0:180:.6 and .18);
\node at (.9,.57) {$\scriptstyle V$};}
&\,:=V\big(\,\tikzmath[scale=.25]{\fill[gray!50] (0,1.2) circle (.6 and .18);
\filldraw[fill=gray!20](-.6,0) arc (-180:0:.6 and .18) -- +(0,1.2) arc (0:-180:.6 and .18) -- cycle;
\draw[dashed, ultra thin] (0,0) +(0:.6 and .18) arc (0:180:.6 and .18);
\draw (0,1.2) +(0:.6 and .18) arc (0:180:.6 and .18);}\,\cup_{S^1_{\vdash}} \,\tikzmath[scale=.25]{\fill[gray!50] (0,1.2) circle (.6 and .18);
\filldraw[fill=gray!20](-.6,0) arc (-180:0:.6 and .18) -- +(0,1.2) arc (0:-180:.6 and .18) -- cycle;
\draw[dashed, ultra thin] (0,0) +(0:.6 and .18) arc (0:180:.6 and .18);
\draw (0,1.2) +(0:.6 and .18) arc (0:180:.6 and .18);}\,\big)
=V(P)
\\  
\tikzmath[scale=.4]{
\fill[gray!50] (.9,1.2) circle (.6 and .18);
\filldraw[fill=gray!20]
(-1.5,-1.2) arc (-180:0:.6 and .18) arc (180:0:.3 and .18) arc (-180:0:.6 and .18) 
[rounded corners=2.4]-- ++(0,.3) -- ++(-.9,.6) [sharp corners]-- ++(0,.3) 
arc (180:0:.3 and .18) arc (-180:0:.6 and .18)
[rounded corners=2.4]-- ++(0,.3) -- ++(-.9,.6) [sharp corners]-- ++(0,.3) 
arc (0:-180:.6 and .18)
[rounded corners=2.4]-- ++(0,-.3) -- ++(-.9,-.6) [sharp corners]-- ++(0,-.3)
[rounded corners=2.4]-- ++(0,-.3) -- ++(-.9,-.6) [sharp corners]-- cycle;
\draw[dashed, ultra thin] (-.9,-1.2) +(0:.6 and .18) arc (0:180:.6 and .18);
\draw[dashed, ultra thin] (.9,-1.2) +(0:.6 and .18) arc (0:180:.6 and .18);
\draw[dashed, ultra thin] (0,0) +(0:.6 and .18) arc (0:180:.6 and .18);
\draw (0,0) +(0:.6 and .18) arc (0:-180:.6 and .18);
\draw[dashed, ultra thin] (1.8,0) +(0:.6 and .18) arc (0:180:.6 and .18);
\draw (.9,1.2) +(0:.6 and .18) arc (0:180:.6 and .18);
\draw(.9,.18) -- (.9,1.02)(0,.18-1.2) -- (0,1.02-1.2);
\node[scale=.7] at (.45,.53) {$L^2$};
\node[scale=.7] at (1.38,.53) {$L^2$};
\node[scale=.7] at (.45-.9,.53-1.2) {$L^2$};
\node[scale=.7] at (1.38-.9,.53-1.2) {$L^2$};}
\,&\,:=
\tikzmath[scale=1.1]{
\node[scale=1.1] at (0,0)
{$\tikzmath[scale=.25]{\fill[gray!50] (.9,1.2) circle (.6 and .18);\filldraw[fill=gray!20](-.6,0) arc (-180:0:.6 and .18) arc (180:0:.3 and .18) arc (-180:0:.6 and .18)[rounded corners=1.7]-- ++(0,.3) -- ++(-.9,.6) [sharp corners]-- ++(0,.3) arc (0:-180:.6 and .18)[rounded corners=1.7]-- ++(0,-.3) -- ++(-.9,-.6) [sharp corners]-- cycle;
\draw[dashed, ultra thin] (0,0) +(0:.6 and .18) arc (0:180:.6 and .18);\draw[dashed, ultra thin] (1.8,0) +(0:.6 and .18) arc (0:180:.6 and .18);\draw (.9,1.2) +(0:.6 and .18) arc (0:180:.6 and .18);
\draw(.9,.18) -- (.9,1.02);
\node[scale=.6] at (.445,.53) {$\scriptstyle L^2$};
\node[scale=.6] at (1.35,.53) {$\scriptstyle L^2$};
}$};
\node[scale=1.1] at (-.225,-.68)
{$\tikzmath[scale=.25]{\fill[gray!50] (.9,1.2) circle (.6 and .18);\filldraw[fill=gray!20](-.6,0) arc (-180:0:.6 and .18) arc (180:0:.3 and .18) arc (-180:0:.6 and .18)[rounded corners=1.7]-- ++(0,.3) -- ++(-.9,.6) [sharp corners]-- ++(0,.3) arc (0:-180:.6 and .18)[rounded corners=1.7]-- ++(0,-.3) -- ++(-.9,-.6) [sharp corners]-- cycle;
\draw[dashed, ultra thin] (0,0) +(0:.6 and .18) arc (0:180:.6 and .18);\draw[dashed, ultra thin] (1.8,0) +(0:.6 and .18) arc (0:180:.6 and .18);\draw (.9,1.2) +(0:.6 and .18) arc (0:180:.6 and .18);
\draw(.9,.18) -- (.9,1.02);
\node[scale=.6] at (.445,.53) {$\scriptstyle L^2$};
\node[scale=.6] at (1.35,.53) {$\scriptstyle L^2$};
}$};
\node[scale=1.1] at (-.42,-.34) {$\scriptscriptstyle\boxtimes\cala(\tikzmath[scale=.25]{\draw circle (.6 and .18);})$};
}
\\  
\tikzmath[scale=.4]{
\fill[gray!50] (.9,1.2) circle (.6 and .18);
\filldraw[fill=gray!20]
(-1.5,-1.2) arc (-180:0:.6 and .18) arc (180:0:.3 and .18) arc (-180:0:.6 and .18) 
[rounded corners=2.4]-- ++(0,.3) -- ++(-.9,.6) [sharp corners]-- ++(0,.3) 
arc (180:0:.3 and .18) arc (-180:0:.6 and .18)
[rounded corners=2.4]-- ++(0,.3) -- ++(-.9,.6) [sharp corners]-- ++(0,.3) 
arc (0:-180:.6 and .18)
[rounded corners=2.4]-- ++(0,-.3) -- ++(-.9,-.6) [sharp corners]-- ++(0,-.3)
[rounded corners=2.4]-- ++(0,-.3) -- ++(-.9,-.6) [sharp corners]-- cycle;
\draw[dashed, ultra thin] (-.9,-1.2) +(0:.6 and .18) arc (0:180:.6 and .18);
\draw[dashed, ultra thin] (.9,-1.2) +(0:.6 and .18) arc (0:180:.6 and .18);
\draw[dashed, ultra thin] (0,0) +(0:.6 and .18) arc (0:180:.6 and .18);
\draw (0,0) +(0:.6 and .18) arc (0:-180:.6 and .18);
\draw[dashed, ultra thin] (1.8,0) +(0:.6 and .18) arc (0:180:.6 and .18);
\draw (.9,1.2) +(0:.6 and .18) arc (0:180:.6 and .18);
\draw(.9,.18) -- (.9,1.02)(0,.18-1.2) -- (0,1.02-1.2);
\node at (.5,.53) {$\scriptstyle V$};
\node at (1.3,.53) {$\scriptstyle V$};
\node at (.5-.9,.53-1.2) {$\scriptstyle V$};
\node at (1.3-.9,.53-1.2) {$\scriptstyle V$};}
\,&\,:=
\tikzmath[scale=1.1]{
\node[scale=1.1] at (0,0)
{$\tikzmath[scale=.25]{\fill[gray!50] (.9,1.2) circle (.6 and .18);\filldraw[fill=gray!20](-.6,0) arc (-180:0:.6 and .18) arc (180:0:.3 and .18) arc (-180:0:.6 and .18)[rounded corners=1.7]-- ++(0,.3) -- ++(-.9,.6) [sharp corners]-- ++(0,.3) arc (0:-180:.6 and .18)[rounded corners=1.7]-- ++(0,-.3) -- ++(-.9,-.6) [sharp corners]-- cycle;
\draw[dashed, ultra thin] (0,0) +(0:.6 and .18) arc (0:180:.6 and .18);\draw[dashed, ultra thin] (1.8,0) +(0:.6 and .18) arc (0:180:.6 and .18);\draw (.9,1.2) +(0:.6 and .18) arc (0:180:.6 and .18);
\draw(.9,.18) -- (.9,1.02);
\node[scale=.9] at (.5,.53) {$\scriptscriptstyle V$};
\node[scale=.9] at (1.3,.53) {$\scriptscriptstyle V$};
}
$};
\node[scale=1.1] at (-.225,-.68)
{$\tikzmath[scale=.25]{\fill[gray!50] (.9,1.2) circle (.6 and .18);\filldraw[fill=gray!20](-.6,0) arc (-180:0:.6 and .18) arc (180:0:.3 and .18) arc (-180:0:.6 and .18)[rounded corners=1.7]-- ++(0,.3) -- ++(-.9,.6) [sharp corners]-- ++(0,.3) arc (0:-180:.6 and .18)[rounded corners=1.7]-- ++(0,-.3) -- ++(-.9,-.6) [sharp corners]-- cycle;
\draw[dashed, ultra thin] (0,0) +(0:.6 and .18) arc (0:180:.6 and .18);\draw[dashed, ultra thin] (1.8,0) +(0:.6 and .18) arc (0:180:.6 and .18);\draw (.9,1.2) +(0:.6 and .18) arc (0:180:.6 and .18);
\draw(.9,.18) -- (.9,1.02);
\node[scale=.9] at (.5,.53) {$\scriptscriptstyle V$};
\node[scale=.9] at (1.3,.53) {$\scriptscriptstyle V$};
}
$};
\node[scale=1.1] at (-.42,-.34) {$\scriptscriptstyle\boxtimes\cala(\tikzmath[scale=.25]{\draw circle (.6 and .18);})$};
}
\\  
\tikzmath[scale=.4]{
\fill[gray!50] (.9,1.2) circle (.6 and .18);
\filldraw[fill=gray!20]
(-1.5,-1.2) arc (-180:0:.6 and .18) arc (180:0:.3 and .18) arc (-180:0:.6 and .18) 
[rounded corners=2.4]-- ++(0,.3) -- ++(-.9,.6) [sharp corners]-- ++(0,.3) 
arc (180:0:.3 and .18) arc (-180:0:.6 and .18)
[rounded corners=2.4]-- ++(0,.3) -- ++(-.9,.6) [sharp corners]-- ++(0,.3) 
arc (0:-180:.6 and .18)
[rounded corners=2.4]-- ++(0,-.3) -- ++(-.9,-.6) [sharp corners]-- ++(0,-.3)
[rounded corners=2.4]-- ++(0,-.3) -- ++(-.9,-.6) [sharp corners]-- cycle;
\draw[dashed, ultra thin] (-.9,-1.2) +(0:.6 and .18) arc (0:180:.6 and .18);
\draw[dashed, ultra thin] (.9,-1.2) +(0:.6 and .18) arc (0:180:.6 and .18);
\draw[dashed, ultra thin] (0,0) +(0:.6 and .18) arc (0:180:.6 and .18);
\draw (0,0) +(0:.6 and .18) arc (0:-180:.6 and .18);
\draw[dashed, ultra thin] (1.8,0) +(0:.6 and .18) arc (0:180:.6 and .18);
\draw (.9,1.2) +(0:.6 and .18) arc (0:180:.6 and .18);
\node at (.9,.57) {$\scriptstyle V$};
\node at (0,.57-1.2) {$\scriptstyle V$};}
\,&\,:=
\tikzmath[scale=1.1]{
\node[scale=1.1] at (0,0)
{$\tikzmath[scale=.25]{\fill[gray!50] (.9,1.2) circle (.6 and .18);\filldraw[fill=gray!20](-.6,0) arc (-180:0:.6 and .18) arc (180:0:.3 and .18) arc (-180:0:.6 and .18)[rounded corners=1.7]-- ++(0,.3) -- ++(-.9,.6) [sharp corners]-- ++(0,.3) arc (0:-180:.6 and .18)[rounded corners=1.7]-- ++(0,-.3) -- ++(-.9,-.6) [sharp corners]-- cycle;
\draw[dashed, ultra thin] (0,0) +(0:.6 and .18) arc (0:180:.6 and .18);\draw[dashed, ultra thin] (1.8,0) +(0:.6 and .18) arc (0:180:.6 and .18);\draw (.9,1.2) +(0:.6 and .18) arc (0:180:.6 and .18);
\node[scale=.9] at (.9,.58) {$\scriptscriptstyle V$};
}$};
\node[scale=1.1] at (-.225,-.68)
{$\tikzmath[scale=.25]{\fill[gray!50] (.9,1.2) circle (.6 and .18);\filldraw[fill=gray!20](-.6,0) arc (-180:0:.6 and .18) arc (180:0:.3 and .18) arc (-180:0:.6 and .18)[rounded corners=1.7]-- ++(0,.3) -- ++(-.9,.6) [sharp corners]-- ++(0,.3) arc (0:-180:.6 and .18)[rounded corners=1.7]-- ++(0,-.3) -- ++(-.9,-.6) [sharp corners]-- cycle;
\draw[dashed, ultra thin] (0,0) +(0:.6 and .18) arc (0:180:.6 and .18);\draw[dashed, ultra thin] (1.8,0) +(0:.6 and .18) arc (0:180:.6 and .18);\draw (.9,1.2) +(0:.6 and .18) arc (0:180:.6 and .18);
\node[scale=.9] at (.9,.58) {$\scriptscriptstyle V$};
}$};
\node[scale=1.1] at (-.42,-.34) {$\scriptscriptstyle\boxtimes\cala(\tikzmath[scale=.25]{\draw circle (.6 and .18);})$};}
\\  
\tikzmath[scale=.4]{
\fill[gray!50] (.9,1.2) circle (.6 and .18);
\filldraw[fill=gray!20]
(-1.5,-1.2) arc (-180:0:.6 and .18) arc (180:0:.3 and .18) arc (-180:0:.6 and .18) 
[rounded corners=2.4]-- ++(0,.3) -- ++(-.9,.6) [sharp corners]-- ++(0,.3) 
arc (180:0:.3 and .18) arc (-180:0:.6 and .18)
[rounded corners=2.4]-- ++(0,.3) -- ++(-.9,.6) [sharp corners]-- ++(0,.3) 
arc (0:-180:.6 and .18)
[rounded corners=2.4]-- ++(0,-.3) -- ++(-.9,-.6) [sharp corners]-- ++(0,-.3)
[rounded corners=2.4]-- ++(0,-.3) -- ++(-.9,-.6) [sharp corners]-- cycle;
\draw[dashed, ultra thin] (-.9,-1.2) +(0:.6 and .18) arc (0:180:.6 and .18);
\draw[dashed, ultra thin] (.9,-1.2) +(0:.6 and .18) arc (0:180:.6 and .18);
\draw[dashed, ultra thin] (1.8,0) +(0:.6 and .18) arc (0:180:.6 and .18);
\draw (.9,1.2) +(0:.6 and .18) arc (0:180:.6 and .18);
\node at (.03,0) {$\scriptstyle V$};
}\,
&\,:=
V\Big(\,\tikzmath[scale=.25]{
\fill[gray!50] (.9,1.2) circle (.6 and .18);
\filldraw[fill=gray!20]
(-1.5,-1.2) arc (-180:0:.6 and .18) arc (180:0:.3 and .18) arc (-180:0:.6 and .18) 
[rounded corners=1.5]-- ++(0,.3) -- ++(-.9,.6) [sharp corners]-- ++(0,.3) 
arc (180:0:.3 and .18) arc (-180:0:.6 and .18)
[rounded corners=1.5]-- ++(0,.3) -- ++(-.9,.6) [sharp corners]-- ++(0,.3) 
arc (0:-180:.6 and .18)
[rounded corners=1.5]-- ++(0,-.3) -- ++(-.9,-.6) [sharp corners]-- ++(0,-.3)
[rounded corners=1.5]-- ++(0,-.3) -- ++(-.9,-.6) [sharp corners]-- cycle;
\draw[dashed, ultra thin] (-.9,-1.2) +(0:.6 and .18) arc (0:180:.6 and .18);
\draw[dashed, ultra thin] (.9,-1.2) +(0:.6 and .18) arc (0:180:.6 and .18);
\draw[dashed, ultra thin] (1.8,0) +(0:.6 and .18) arc (0:180:.6 and .18);
\draw (.9,1.2) +(0:.6 and .18) arc (0:180:.6 and .18);
}\Big)
=V\big(P\,\mbox{${}_3\cup_1$}\,P\big)
\\  
\tikzmath[scale=.4]{\fill[gray!50] (1.8,1.2) circle (.6 and .18);
\filldraw[fill=gray!20, line join=bevel](-.6,0) arc (-180:0:.6 and .18) arc (180:0:.3 and .18) arc (-180:0:.6 and .18) arc (180:0:.3 and .18) arc (-180:0:.6 and .18)[rounded corners=4.8]-- ++(-.1,.4) -- ++(-1.6,.4) [sharp corners]-- ++(-.1,.4) arc (0:-180:.6 and .18)[rounded corners=4.8]-- ++(-.1,-.4) -- ++(-1.6,-.4)  [sharp corners]-- cycle;
\draw[dashed, ultra thin] (0,0) +(0:.6 and .18) arc (0:180:.6 and .18);
\draw[dashed, ultra thin] (1.8,0) +(0:.6 and .18) arc (0:180:.6 and .18);
\draw[dashed, ultra thin] (3.6,0) +(0:.6 and .18) arc (0:180:.6 and .18);
\draw (1.8,1.2) +(0:.6 and .18) arc (0:180:.6 and .18);
\draw[line join=bevel] (.9,.18) -- (1.8,1.02) -- (2.7,.18);
\node[scale=.7] at (.4,.3) {$L^2$};
\node[scale=.7] at (1.8,.3) {$L^2$};
\node[scale=.7] at (3.35,.3) {$L^2$};}
&\,:=
\,\tikzmath[scale=.35]{\fill[gray!50] (0,1.2) circle (.6 and .18);
\filldraw[fill=gray!20](-.6,0) arc (-180:0:.6 and .18) -- +(0,1.2) arc (0:-180:.6 and .18) -- cycle;
\draw[dashed, ultra thin] (0,0) +(0:.6 and .18) arc (0:180:.6 and .18);
\draw (0,1.2) +(0:.6 and .18) arc (0:180:.6 and .18);
\node[scale=.7] at (.02,.54) {$L^2$};}\,\,
\boxtimes_{\cala(S^1_{\vdash})}
\tikzmath[scale=.35]{\fill[gray!50] (0,1.2) circle (.6 and .18);
\filldraw[fill=gray!20](-.6,0) arc (-180:0:.6 and .18) -- +(0,1.2) arc (0:-180:.6 and .18) -- cycle;
\draw[dashed, ultra thin] (0,0) +(0:.6 and .18) arc (0:180:.6 and .18);
\draw (0,1.2) +(0:.6 and .18) arc (0:180:.6 and .18);
\node[scale=.7] at (.02,.54) {$L^2$};}\,\,
\boxtimes_{\cala(S^1_{\vdash})}
\tikzmath[scale=.35]{\fill[gray!50] (0,1.2) circle (.6 and .18);
\filldraw[fill=gray!20](-.6,0) arc (-180:0:.6 and .18) -- +(0,1.2) arc (0:-180:.6 and .18) -- cycle;
\draw[dashed, ultra thin] (0,0) +(0:.6 and .18) arc (0:180:.6 and .18);
\draw (0,1.2) +(0:.6 and .18) arc (0:180:.6 and .18);
\node[scale=.7] at (.02,.54) {$L^2$};}
\phantom{|_{|_{|_{|_{|_{|_{|}}}}}}^{|_{|_{|_{|_{|}}}}}}\\  
\tikzmath[scale=.4]{\fill[gray!50] (1.8,1.2) circle (.6 and .18);
\filldraw[fill=gray!20, line join=bevel](-.6,0) arc (-180:0:.6 and .18) arc (180:0:.3 and .18) arc (-180:0:.6 and .18) arc (180:0:.3 and .18) arc (-180:0:.6 and .18)[rounded corners=4.8]-- ++(-.1,.4) -- ++(-1.6,.4) [sharp corners]-- ++(-.1,.4) arc (0:-180:.6 and .18)[rounded corners=4.8]-- ++(-.1,-.4) -- ++(-1.6,-.4)  [sharp corners]-- cycle;
\draw[dashed, ultra thin] (0,0) +(0:.6 and .18) arc (0:180:.6 and .18);
\draw[dashed, ultra thin] (1.8,0) +(0:.6 and .18) arc (0:180:.6 and .18);
\draw[dashed, ultra thin] (3.6,0) +(0:.6 and .18) arc (0:180:.6 and .18);
\draw (1.8,1.2) +(0:.6 and .18) arc (0:180:.6 and .18);
\draw[line join=bevel] (.9,.18) -- (1.8,1.02) -- (2.7,.18);
\node at (.4,.28) {$\scriptstyle V$};
\node at (1.8,.28) {$\scriptstyle V$};
\node at (3.35,.28) {$\scriptstyle V$};}
&\,:=
\tikzmath[scale=.35]{\fill[gray!50] (0,1.2) circle (.6 and .18);
\filldraw[fill=gray!20](-.6,0) arc (-180:0:.6 and .18) -- +(0,1.2) arc (0:-180:.6 and .18) -- cycle;
\draw[dashed, ultra thin] (0,0) +(0:.6 and .18) arc (0:180:.6 and .18);
\draw (0,1.2) +(0:.6 and .18) arc (0:180:.6 and .18);
\node[scale=.9] at (0,.53) {$\scriptstyle V$};
}\,\,
\boxtimes_{\cala(S^1_{\vdash})}
\tikzmath[scale=.35]{\fill[gray!50] (0,1.2) circle (.6 and .18);
\filldraw[fill=gray!20](-.6,0) arc (-180:0:.6 and .18) -- +(0,1.2) arc (0:-180:.6 and .18) -- cycle;
\draw[dashed, ultra thin] (0,0) +(0:.6 and .18) arc (0:180:.6 and .18);
\draw (0,1.2) +(0:.6 and .18) arc (0:180:.6 and .18);
\node[scale=.9] at (0,.53) {$\scriptstyle V$};
}\,\,
\boxtimes_{\cala(S^1_{\vdash})}
\tikzmath[scale=.35]{\fill[gray!50] (0,1.2) circle (.6 and .18);
\filldraw[fill=gray!20](-.6,0) arc (-180:0:.6 and .18) -- +(0,1.2) arc (0:-180:.6 and .18) -- cycle;
\draw[dashed, ultra thin] (0,0) +(0:.6 and .18) arc (0:180:.6 and .18);
\draw (0,1.2) +(0:.6 and .18) arc (0:180:.6 and .18);
\node[scale=.9] at (0,.53) {$\scriptstyle V$};
}
\phantom{|_{|_{|_{|_{|_{|_{|}}}}}}}\\  
\tikzmath[scale=.4]{\fill[gray!50] (1.8,1.2) circle (.6 and .18);
\filldraw[fill=gray!20, line join=bevel](-.6,0) arc (-180:0:.6 and .18) arc (180:0:.3 and .18) arc (-180:0:.6 and .18) arc (180:0:.3 and .18) arc (-180:0:.6 and .18)[rounded corners=4.8]-- ++(-.1,.4) -- ++(-1.6,.4) [sharp corners]-- ++(-.1,.4) arc (0:-180:.6 and .18)[rounded corners=4.8]-- ++(-.1,-.4) -- ++(-1.6,-.4)  [sharp corners]-- cycle;
\draw[dashed, ultra thin] (0,0) +(0:.6 and .18) arc (0:180:.6 and .18);
\draw[dashed, ultra thin] (1.8,0) +(0:.6 and .18) arc (0:180:.6 and .18);
\draw[dashed, ultra thin] (3.6,0) +(0:.6 and .18) arc (0:180:.6 and .18);
\draw (1.8,1.2) +(0:.6 and .18) arc (0:180:.6 and .18);
\node at (1.8,.57) {$\scriptstyle V$};}
&\,:=V\big(\,\tikzmath[scale=.25]{\fill[gray!50] (0,1.2) circle (.6 and .18);
\filldraw[fill=gray!20](-.6,0) arc (-180:0:.6 and .18) -- +(0,1.2) arc (0:-180:.6 and .18) -- cycle;
\draw[dashed, ultra thin] (0,0) +(0:.6 and .18) arc (0:180:.6 and .18);
\draw (0,1.2) +(0:.6 and .18) arc (0:180:.6 and .18);}\,\cup_{S^1_{\vdash}} \,\tikzmath[scale=.25]{\fill[gray!50] (0,1.2) circle (.6 and .18);
\filldraw[fill=gray!20](-.6,0) arc (-180:0:.6 and .18) -- +(0,1.2) arc (0:-180:.6 and .18) -- cycle;
\draw[dashed, ultra thin] (0,0) +(0:.6 and .18) arc (0:180:.6 and .18);
\draw (0,1.2) +(0:.6 and .18) arc (0:180:.6 and .18);}\,\cup_{S^1_{\vdash}} \,\tikzmath[scale=.25]{\fill[gray!50] (0,1.2) circle (.6 and .18);
\filldraw[fill=gray!20](-.6,0) arc (-180:0:.6 and .18) -- +(0,1.2) arc (0:-180:.6 and .18) -- cycle;
\draw[dashed, ultra thin] (0,0) +(0:.6 and .18) arc (0:180:.6 and .18);
\draw (0,1.2) +(0:.6 and .18) arc (0:180:.6 and .18);}\,\big)\,
\end{split}
\]

\begin{proof}[Proof of Proposition \ref{prop: al = al}]
Keeping in mind that the identification between $H_P$ and $V(P)$ is given by
\[
\tikzmath[scale=.4]{\fill[gray!50] (.9,1.2) circle (.6 and .18);\filldraw[fill=gray!20](-.6,0) arc (-180:0:.6 and .18) arc (180:0:.3 and .18) arc (-180:0:.6 and .18)[rounded corners=2.4]-- ++(0,.3) -- ++(-.9,.6) [sharp corners]-- ++(0,.3) arc (0:-180:.6 and .18)[rounded corners=2.4]-- ++(0,-.3) -- ++(-.9,-.6) [sharp corners]-- cycle;
\draw[dashed, ultra thin] (0,0) +(0:.6 and .18) arc (0:180:.6 and .18);\draw[dashed, ultra thin] (1.8,0) +(0:.6 and .18) arc (0:180:.6 and .18);\draw (.9,1.2) +(0:.6 and .18) arc (0:180:.6 and .18);
\draw(.9,.18) -- (.9,1.02);
\node[scale=.7] at (.45,.53) {$L^2$};
\node[scale=.7] at (1.38,.53) {$L^2$};}
\xrightarrow{(\text{\footnotesize \ref{H_Sigm=L^2hatA(S)+}})}
\tikzmath[scale=.4]{\fill[gray!50] (.9,1.2) circle (.6 and .18);\filldraw[fill=gray!20](-.6,0) arc (-180:0:.6 and .18) arc (180:0:.3 and .18) arc (-180:0:.6 and .18)[rounded corners=2.4]-- ++(0,.3) -- ++(-.9,.6) [sharp corners]-- ++(0,.3) arc (0:-180:.6 and .18)[rounded corners=2.4]-- ++(0,-.3) -- ++(-.9,-.6) [sharp corners]-- cycle;
\draw[dashed, ultra thin] (0,0) +(0:.6 and .18) arc (0:180:.6 and .18);\draw[dashed, ultra thin] (1.8,0) +(0:.6 and .18) arc (0:180:.6 and .18);\draw (.9,1.2) +(0:.6 and .18) arc (0:180:.6 and .18);
\draw(.9,.18) -- (.9,1.02);
\node at (.5,.53) {$\scriptstyle V$};
\node at (1.3,.53) {$\scriptstyle V$};}
\xrightarrow{\,g^{-1}}
\tikzmath[scale=.4]{\fill[gray!50] (.9,1.2) circle (.6 and .18);\filldraw[fill=gray!20](-.6,0) arc (-180:0:.6 and .18) arc (180:0:.3 and .18) arc (-180:0:.6 and .18)[rounded corners=2.4]-- ++(0,.3) -- ++(-.9,.6) [sharp corners]-- ++(0,.3) arc (0:-180:.6 and .18)[rounded corners=2.4]-- ++(0,-.3) -- ++(-.9,-.6) [sharp corners]-- cycle;
\draw[dashed, ultra thin] (0,0) +(0:.6 and .18) arc (0:180:.6 and .18);\draw[dashed, ultra thin] (1.8,0) +(0:.6 and .18) arc (0:180:.6 and .18);\draw (.9,1.2) +(0:.6 and .18) arc (0:180:.6 and .18);
\node at (.9,.57) {$\scriptstyle V$};}\,\,,
\]
our task is to prove the commutativity of this diagram:
\[
\tikzmath
{
\node (A) at (0,4) {$\tikzmath[scale=.4]{
\fill[gray!50] (.9,1.2) circle (.6 and .18);
\filldraw[fill=gray!20]
(-1.5,-1.2) arc (-180:0:.6 and .18) arc (180:0:.3 and .18) arc (-180:0:.6 and .18) 
[rounded corners=2.4]-- ++(0,.3) -- ++(-.9,.6) [sharp corners]-- ++(0,.3) 
arc (180:0:.3 and .18) arc (-180:0:.6 and .18)
[rounded corners=2.4]-- ++(0,.3) -- ++(-.9,.6) [sharp corners]-- ++(0,.3) 
arc (0:-180:.6 and .18)
[rounded corners=2.4]-- ++(0,-.3) -- ++(-.9,-.6) [sharp corners]-- ++(0,-.3)
[rounded corners=2.4]-- ++(0,-.3) -- ++(-.9,-.6) [sharp corners]-- cycle;
\draw[dashed, ultra thin] (-.9,-1.2) +(0:.6 and .18) arc (0:180:.6 and .18);
\draw[dashed, ultra thin] (.9,-1.2) +(0:.6 and .18) arc (0:180:.6 and .18);
\draw[dashed, ultra thin] (0,0) +(0:.6 and .18) arc (0:180:.6 and .18);
\draw (0,0) +(0:.6 and .18) arc (0:-180:.6 and .18);
\draw[dashed, ultra thin] (1.8,0) +(0:.6 and .18) arc (0:180:.6 and .18);
\draw (.9,1.2) +(0:.6 and .18) arc (0:180:.6 and .18);
\draw(.9,.18) -- (.9,1.02)(0,.18-1.2) -- (0,1.02-1.2);
\node[scale=.7] at (.45,.53) {$L^2$};
\node[scale=.7] at (1.38,.53) {$L^2$};
\node[scale=.7] at (.45-.9,.53-1.2) {$L^2$};
\node[scale=.7] at (1.38-.9,.53-1.2) {$L^2$};}
$};
\node (B) at (5,4) {$\tikzmath[scale=.4]{\fill[gray!50] (1.8,1.2) circle (.6 and .18);
\filldraw[fill=gray!20, line join=bevel](-.6,0) arc (-180:0:.6 and .18) arc (180:0:.3 and .18) arc (-180:0:.6 and .18) arc (180:0:.3 and .18) arc (-180:0:.6 and .18)[rounded corners=4.8]-- ++(-.1,.4) -- ++(-1.6,.4) [sharp corners]-- ++(-.1,.4) arc (0:-180:.6 and .18)[rounded corners=4.8]-- ++(-.1,-.4) -- ++(-1.6,-.4)  [sharp corners]-- cycle;
\draw[dashed, ultra thin] (0,0) +(0:.6 and .18) arc (0:180:.6 and .18);
\draw[dashed, ultra thin] (1.8,0) +(0:.6 and .18) arc (0:180:.6 and .18);
\draw[dashed, ultra thin] (3.6,0) +(0:.6 and .18) arc (0:180:.6 and .18);
\draw (1.8,1.2) +(0:.6 and .18) arc (0:180:.6 and .18);
\draw[line join=bevel] (.9,.18) -- (1.8,1.02) -- (2.7,.18);
\node[scale=.7] at (.4,.3) {$L^2$};
\node[scale=.7] at (1.8,.3) {$L^2$};
\node[scale=.7] at (3.35,.3) {$L^2$};}
$};
\node (C) at (10,4) {$\tikzmath[scale=.4]{\pgftransformxscale{-1}
\fill[gray!50] (.9,1.2) circle (.6 and .18);
\filldraw[fill=gray!20]
(-1.5,-1.2) arc (-180:0:.6 and .18) arc (180:0:.3 and .18) arc (-180:0:.6 and .18) 
[rounded corners=2.4]-- ++(0,.3) -- ++(-.9,.6) [sharp corners]-- ++(0,.3) 
arc (180:0:.3 and .18) arc (-180:0:.6 and .18)
[rounded corners=2.4]-- ++(0,.3) -- ++(-.9,.6) [sharp corners]-- ++(0,.3) 
arc (0:-180:.6 and .18)
[rounded corners=2.4]-- ++(0,-.3) -- ++(-.9,-.6) [sharp corners]-- ++(0,-.3)
[rounded corners=2.4]-- ++(0,-.3) -- ++(-.9,-.6) [sharp corners]-- cycle;
\draw[dashed, ultra thin] (-.9,-1.2) +(0:.6 and .18) arc (0:180:.6 and .18);
\draw[dashed, ultra thin] (.9,-1.2) +(0:.6 and .18) arc (0:180:.6 and .18);
\draw[dashed, ultra thin] (0,0) +(0:.6 and .18) arc (0:180:.6 and .18);
\draw (0,0) +(0:.6 and .18) arc (0:-180:.6 and .18);
\draw[dashed, ultra thin] (1.8,0) +(0:.6 and .18) arc (0:180:.6 and .18);
\draw (.9,1.2) +(0:.6 and .18) arc (0:180:.6 and .18);
\draw(.9,.18) -- (.9,1.02)(0,.18-1.2) -- (0,1.02-1.2);
\node[scale=.7] at (.42,.53) {$L^2$};
\node[scale=.7] at (1.35,.53) {$L^2$};
\node[scale=.7] at (.42-.9,.53-1.2) {$L^2$};
\node[scale=.7] at (1.35-.9,.53-1.2) {$L^2$};}
$};
\node (D) at (0,2) {$\tikzmath[scale=.4]{
\fill[gray!50] (.9,1.2) circle (.6 and .18);
\filldraw[fill=gray!20]
(-1.5,-1.2) arc (-180:0:.6 and .18) arc (180:0:.3 and .18) arc (-180:0:.6 and .18) 
[rounded corners=2.4]-- ++(0,.3) -- ++(-.9,.6) [sharp corners]-- ++(0,.3) 
arc (180:0:.3 and .18) arc (-180:0:.6 and .18)
[rounded corners=2.4]-- ++(0,.3) -- ++(-.9,.6) [sharp corners]-- ++(0,.3) 
arc (0:-180:.6 and .18)
[rounded corners=2.4]-- ++(0,-.3) -- ++(-.9,-.6) [sharp corners]-- ++(0,-.3)
[rounded corners=2.4]-- ++(0,-.3) -- ++(-.9,-.6) [sharp corners]-- cycle;
\draw[dashed, ultra thin] (-.9,-1.2) +(0:.6 and .18) arc (0:180:.6 and .18);
\draw[dashed, ultra thin] (.9,-1.2) +(0:.6 and .18) arc (0:180:.6 and .18);
\draw[dashed, ultra thin] (0,0) +(0:.6 and .18) arc (0:180:.6 and .18);
\draw (0,0) +(0:.6 and .18) arc (0:-180:.6 and .18);
\draw[dashed, ultra thin] (1.8,0) +(0:.6 and .18) arc (0:180:.6 and .18);
\draw (.9,1.2) +(0:.6 and .18) arc (0:180:.6 and .18);
\draw(.9,.18) -- (.9,1.02)(0,.18-1.2) -- (0,1.02-1.2);
\node at (.5,.53) {$\scriptstyle V$};
\node at (1.3,.53) {$\scriptstyle V$};
\node at (.5-.9,.53-1.2) {$\scriptstyle V$};
\node at (1.3-.9,.53-1.2) {$\scriptstyle V$};}
$};
\node (E) at (10,2) {$\tikzmath[scale=.4]{\pgftransformxscale{-1}
\fill[gray!50] (.9,1.2) circle (.6 and .18);
\filldraw[fill=gray!20]
(-1.5,-1.2) arc (-180:0:.6 and .18) arc (180:0:.3 and .18) arc (-180:0:.6 and .18) 
[rounded corners=2.4]-- ++(0,.3) -- ++(-.9,.6) [sharp corners]-- ++(0,.3) 
arc (180:0:.3 and .18) arc (-180:0:.6 and .18)
[rounded corners=2.4]-- ++(0,.3) -- ++(-.9,.6) [sharp corners]-- ++(0,.3) 
arc (0:-180:.6 and .18)
[rounded corners=2.4]-- ++(0,-.3) -- ++(-.9,-.6) [sharp corners]-- ++(0,-.3)
[rounded corners=2.4]-- ++(0,-.3) -- ++(-.9,-.6) [sharp corners]-- cycle;
\draw[dashed, ultra thin] (-.9,-1.2) +(0:.6 and .18) arc (0:180:.6 and .18);
\draw[dashed, ultra thin] (.9,-1.2) +(0:.6 and .18) arc (0:180:.6 and .18);
\draw[dashed, ultra thin] (0,0) +(0:.6 and .18) arc (0:180:.6 and .18);
\draw (0,0) +(0:.6 and .18) arc (0:-180:.6 and .18);
\draw[dashed, ultra thin] (1.8,0) +(0:.6 and .18) arc (0:180:.6 and .18);
\draw (.9,1.2) +(0:.6 and .18) arc (0:180:.6 and .18);
\draw(.9,.18) -- (.9,1.02)(0,.18-1.2) -- (0,1.02-1.2);
\node at (.5,.53) {$\scriptstyle V$};
\node at (1.3,.53) {$\scriptstyle V$};
\node at (.5-.9,.53-1.2) {$\scriptstyle V$};
\node at (1.3-.9,.53-1.2) {$\scriptstyle V$};}
$};
\node (F) at (0,0) {$\tikzmath[scale=.4]{
\fill[gray!50] (.9,1.2) circle (.6 and .18);
\filldraw[fill=gray!20]
(-1.5,-1.2) arc (-180:0:.6 and .18) arc (180:0:.3 and .18) arc (-180:0:.6 and .18) 
[rounded corners=2.4]-- ++(0,.3) -- ++(-.9,.6) [sharp corners]-- ++(0,.3) 
arc (180:0:.3 and .18) arc (-180:0:.6 and .18)
[rounded corners=2.4]-- ++(0,.3) -- ++(-.9,.6) [sharp corners]-- ++(0,.3) 
arc (0:-180:.6 and .18)
[rounded corners=2.4]-- ++(0,-.3) -- ++(-.9,-.6) [sharp corners]-- ++(0,-.3)
[rounded corners=2.4]-- ++(0,-.3) -- ++(-.9,-.6) [sharp corners]-- cycle;
\draw[dashed, ultra thin] (-.9,-1.2) +(0:.6 and .18) arc (0:180:.6 and .18);
\draw[dashed, ultra thin] (.9,-1.2) +(0:.6 and .18) arc (0:180:.6 and .18);
\draw[dashed, ultra thin] (0,0) +(0:.6 and .18) arc (0:180:.6 and .18);
\draw (0,0) +(0:.6 and .18) arc (0:-180:.6 and .18);
\draw[dashed, ultra thin] (1.8,0) +(0:.6 and .18) arc (0:180:.6 and .18);
\draw (.9,1.2) +(0:.6 and .18) arc (0:180:.6 and .18);
\node at (.9,.57) {$\scriptstyle V$};
\node at (0,.57-1.2) {$\scriptstyle V$};}
$};
\node (G) at (3.3333,0) {$\tikzmath[scale=.4]{
\fill[gray!50] (.9,1.2) circle (.6 and .18);
\filldraw[fill=gray!20]
(-1.5,-1.2) arc (-180:0:.6 and .18) arc (180:0:.3 and .18) arc (-180:0:.6 and .18) 
[rounded corners=2.4]-- ++(0,.3) -- ++(-.9,.6) [sharp corners]-- ++(0,.3) 
arc (180:0:.3 and .18) arc (-180:0:.6 and .18)
[rounded corners=2.4]-- ++(0,.3) -- ++(-.9,.6) [sharp corners]-- ++(0,.3) 
arc (0:-180:.6 and .18)
[rounded corners=2.4]-- ++(0,-.3) -- ++(-.9,-.6) [sharp corners]-- ++(0,-.3)
[rounded corners=2.4]-- ++(0,-.3) -- ++(-.9,-.6) [sharp corners]-- cycle;
\draw[dashed, ultra thin] (-.9,-1.2) +(0:.6 and .18) arc (0:180:.6 and .18);
\draw[dashed, ultra thin] (.9,-1.2) +(0:.6 and .18) arc (0:180:.6 and .18);
\draw[dashed, ultra thin] (1.8,0) +(0:.6 and .18) arc (0:180:.6 and .18);
\draw (.9,1.2) +(0:.6 and .18) arc (0:180:.6 and .18);
\node at (.03,0) {$\scriptstyle V$};
}$};
\node (H) at (6.6666,0) {$\tikzmath[scale=.4]{\pgftransformxscale{-1}
\fill[gray!50] (.9,1.2) circle (.6 and .18);
\filldraw[fill=gray!20]
(-1.5,-1.2) arc (-180:0:.6 and .18) arc (180:0:.3 and .18) arc (-180:0:.6 and .18) 
[rounded corners=2.4]-- ++(0,.3) -- ++(-.9,.6) [sharp corners]-- ++(0,.3) 
arc (180:0:.3 and .18) arc (-180:0:.6 and .18)
[rounded corners=2.4]-- ++(0,.3) -- ++(-.9,.6) [sharp corners]-- ++(0,.3) 
arc (0:-180:.6 and .18)
[rounded corners=2.4]-- ++(0,-.3) -- ++(-.9,-.6) [sharp corners]-- ++(0,-.3)
[rounded corners=2.4]-- ++(0,-.3) -- ++(-.9,-.6) [sharp corners]-- cycle;
\draw[dashed, ultra thin] (-.9,-1.2) +(0:.6 and .18) arc (0:180:.6 and .18);
\draw[dashed, ultra thin] (.9,-1.2) +(0:.6 and .18) arc (0:180:.6 and .18);
\draw[dashed, ultra thin] (1.8,0) +(0:.6 and .18) arc (0:180:.6 and .18);
\draw (.9,1.2) +(0:.6 and .18) arc (0:180:.6 and .18);
\node at (.03,0) {$\scriptstyle V$};
}$};
\node (I) at (10,0) {$\tikzmath[scale=.4]{\pgftransformxscale{-1}
\fill[gray!50] (.9,1.2) circle (.6 and .18);
\filldraw[fill=gray!20]
(-1.5,-1.2) arc (-180:0:.6 and .18) arc (180:0:.3 and .18) arc (-180:0:.6 and .18) 
[rounded corners=2.4]-- ++(0,.3) -- ++(-.9,.6) [sharp corners]-- ++(0,.3) 
arc (180:0:.3 and .18) arc (-180:0:.6 and .18)
[rounded corners=2.4]-- ++(0,.3) -- ++(-.9,.6) [sharp corners]-- ++(0,.3) 
arc (0:-180:.6 and .18)
[rounded corners=2.4]-- ++(0,-.3) -- ++(-.9,-.6) [sharp corners]-- ++(0,-.3)
[rounded corners=2.4]-- ++(0,-.3) -- ++(-.9,-.6) [sharp corners]-- cycle;
\draw[dashed, ultra thin] (-.9,-1.2) +(0:.6 and .18) arc (0:180:.6 and .18);
\draw[dashed, ultra thin] (.9,-1.2) +(0:.6 and .18) arc (0:180:.6 and .18);
\draw[dashed, ultra thin] (0,0) +(0:.6 and .18) arc (0:180:.6 and .18);
\draw (0,0) +(0:.6 and .18) arc (0:-180:.6 and .18);
\draw[dashed, ultra thin] (1.8,0) +(0:.6 and .18) arc (0:180:.6 and .18);
\draw (.9,1.2) +(0:.6 and .18) arc (0:180:.6 and .18);
\node at (.9,.57) {$\scriptstyle V$};
\node at (0,.57-1.2) {$\scriptstyle V$};}
$};
\draw[->] (A) --node[above, scale=.9]{\parbox{2.6cm}{\small \centerline{\it cancel} $-\,\boxtimes_{\cala(S^1)}L^2\cala(S^1)$}} (B);
\draw[->] (B) --node[above, scale=.9]{\parbox{2.6cm}{\small \centerline{\it un-cancel} $-\,\boxtimes_{\cala(S^1)}L^2\cala(S^1)$}} (C);
\draw[->] (A) --node[left]{$\scriptstyle (\text{\footnotesize \ref{H_Sigm=L^2hatA(S)+}})$} (D);
\draw[->] (C) --node[right]{$\scriptstyle (\text{\footnotesize \ref{H_Sigm=L^2hatA(S)+}})$} (E);
\draw[->] (D) --node[left]{$\scriptstyle g^{-1}$} (F);
\draw[->] (E) --node[right]{$\scriptstyle g^{-1}$} (I);
\draw[->] (F) --node[above]{$\scriptstyle g^{-1}$} (G);
\draw[->] (G) --node[above]{$\scriptstyle V(\underline\alpha)$} (H);
\draw[->] (H) --node[above]{$\scriptstyle g$} (I);
}
\medskip
\]
Note that the top composite is $\alpha$ and that the bottom composite is $\alpha_{geo}$.\goodbreak

We can split this square into two mirror-image sub-diagrams by inserting
\[
\tikzmath[scale=.4]{\fill[gray!50] (1.8,1.2) circle (.6 and .18);
\filldraw[fill=gray!20, line join=bevel](-.6,0) arc (-180:0:.6 and .18) arc (180:0:.3 and .18) arc (-180:0:.6 and .18) arc (180:0:.3 and .18) arc (-180:0:.6 and .18)[rounded corners=4.8]-- ++(-.1,.4) -- ++(-1.6,.4) [sharp corners]-- ++(-.1,.4) arc (0:-180:.6 and .18)[rounded corners=4.8]-- ++(-.1,-.4) -- ++(-1.6,-.4)  [sharp corners]-- cycle;
\draw[dashed, ultra thin] (0,0) +(0:.6 and .18) arc (0:180:.6 and .18);
\draw[dashed, ultra thin] (1.8,0) +(0:.6 and .18) arc (0:180:.6 and .18);
\draw[dashed, ultra thin] (3.6,0) +(0:.6 and .18) arc (0:180:.6 and .18);
\draw (1.8,1.2) +(0:.6 and .18) arc (0:180:.6 and .18);
\draw[line join=bevel] (.9,.18) -- (1.8,1.02) -- (2.7,.18);
\node[scale=.7] at (.4,.3) {$L^2$};
\node[scale=.7] at (1.8,.3) {$L^2$};
\node[scale=.7] at (3.35,.3) {$L^2$};}
\xrightarrow{(\text{\footnotesize \ref{H_Sigm=L^2hatA(S)+}})}
\tikzmath[scale=.4]{\fill[gray!50] (1.8,1.2) circle (.6 and .18);
\filldraw[fill=gray!20, line join=bevel](-.6,0) arc (-180:0:.6 and .18) arc (180:0:.3 and .18) arc (-180:0:.6 and .18) arc (180:0:.3 and .18) arc (-180:0:.6 and .18)[rounded corners=4.8]-- ++(-.1,.4) -- ++(-1.6,.4) [sharp corners]-- ++(-.1,.4) arc (0:-180:.6 and .18)[rounded corners=4.8]-- ++(-.1,-.4) -- ++(-1.6,-.4)  [sharp corners]-- cycle;
\draw[dashed, ultra thin] (0,0) +(0:.6 and .18) arc (0:180:.6 and .18);
\draw[dashed, ultra thin] (1.8,0) +(0:.6 and .18) arc (0:180:.6 and .18);
\draw[dashed, ultra thin] (3.6,0) +(0:.6 and .18) arc (0:180:.6 and .18);
\draw (1.8,1.2) +(0:.6 and .18) arc (0:180:.6 and .18);
\draw[line join=bevel] (.9,.18) -- (1.8,1.02) -- (2.7,.18);
\node at (.4,.28) {$\scriptstyle V$};
\node at (1.8,.28) {$\scriptstyle V$};
\node at (3.35,.28) {$\scriptstyle V$};}
\xrightarrow{\,g^{-1}}
\tikzmath[scale=.4]{\fill[gray!50] (1.8,1.2) circle (.6 and .18);
\filldraw[fill=gray!20, line join=bevel](-.6,0) arc (-180:0:.6 and .18) arc (180:0:.3 and .18) arc (-180:0:.6 and .18) arc (180:0:.3 and .18) arc (-180:0:.6 and .18)[rounded corners=4.8]-- ++(-.1,.4) -- ++(-1.6,.4) [sharp corners]-- ++(-.1,.4) arc (0:-180:.6 and .18)[rounded corners=4.8]-- ++(-.1,-.4) -- ++(-1.6,-.4)  [sharp corners]-- cycle;
\draw[dashed, ultra thin] (0,0) +(0:.6 and .18) arc (0:180:.6 and .18);
\draw[dashed, ultra thin] (1.8,0) +(0:.6 and .18) arc (0:180:.6 and .18);
\draw[dashed, ultra thin] (3.6,0) +(0:.6 and .18) arc (0:180:.6 and .18);
\draw (1.8,1.2) +(0:.6 and .18) arc (0:180:.6 and .18);
\node at (1.8,.57) {$\scriptstyle V$};}
\to
\tikzmath[scale=.4]{
\fill[gray!50] (.9,1.2) circle (.6 and .18);
\filldraw[fill=gray!20]
(-1.5,-1.2) arc (-180:0:.6 and .18) arc (180:0:.3 and .18) arc (-180:0:.6 and .18) 
[rounded corners=2.4]-- ++(0,.3) -- ++(-.9,.6) [sharp corners]-- ++(0,.3) 
arc (180:0:.3 and .18) arc (-180:0:.6 and .18)
[rounded corners=2.4]-- ++(0,.3) -- ++(-.9,.6) [sharp corners]-- ++(0,.3) 
arc (0:-180:.6 and .18)
[rounded corners=2.4]-- ++(0,-.3) -- ++(-.9,-.6) [sharp corners]-- ++(0,-.3)
[rounded corners=2.4]-- ++(0,-.3) -- ++(-.9,-.6) [sharp corners]-- cycle;
\draw[dashed, ultra thin] (-.9,-1.2) +(0:.6 and .18) arc (0:180:.6 and .18);
\draw[dashed, ultra thin] (.9,-1.2) +(0:.6 and .18) arc (0:180:.6 and .18);
\draw[dashed, ultra thin] (1.8,0) +(0:.6 and .18) arc (0:180:.6 and .18);
\draw (.9,1.2) +(0:.6 and .18) arc (0:180:.6 and .18);
\node at (.03,0) {$\scriptstyle V$};}
\medskip
\]
down the middle (where the last map is induced by the obvious diffeomorphism and there is an analogous map to the target of $V(\underline\alpha)$).
It is therefore enough to prove the commutativity of the following $7$-gon:
\begin{equation}\label{7gon of 3-pants}
\tikzmath
{\pgftransformxscale{1.5}
\node (A) at (0,0) {$\tikzmath[scale=.4]{
\fill[gray!50] (.9,1.2) circle (.6 and .18);
\filldraw[fill=gray!20]
(-1.5,-1.2) arc (-180:0:.6 and .18) arc (180:0:.3 and .18) arc (-180:0:.6 and .18) 
[rounded corners=2.4]-- ++(0,.3) -- ++(-.9,.6) [sharp corners]-- ++(0,.3) 
arc (180:0:.3 and .18) arc (-180:0:.6 and .18)
[rounded corners=2.4]-- ++(0,.3) -- ++(-.9,.6) [sharp corners]-- ++(0,.3) 
arc (0:-180:.6 and .18)
[rounded corners=2.4]-- ++(0,-.3) -- ++(-.9,-.6) [sharp corners]-- ++(0,-.3)
[rounded corners=2.4]-- ++(0,-.3) -- ++(-.9,-.6) [sharp corners]-- cycle;
\draw[dashed, ultra thin] (-.9,-1.2) +(0:.6 and .18) arc (0:180:.6 and .18);
\draw[dashed, ultra thin] (.9,-1.2) +(0:.6 and .18) arc (0:180:.6 and .18);
\draw[dashed, ultra thin] (0,0) +(0:.6 and .18) arc (0:180:.6 and .18);
\draw (0,0) +(0:.6 and .18) arc (0:-180:.6 and .18);
\draw[dashed, ultra thin] (1.8,0) +(0:.6 and .18) arc (0:180:.6 and .18);
\draw (.9,1.2) +(0:.6 and .18) arc (0:180:.6 and .18);
\draw(.9,.18) -- (.9,1.02)(0,.18-1.2) -- (0,1.02-1.2);
\node[scale=.7] at (.45,.53) {$L^2$};
\node[scale=.7] at (1.38,.53) {$L^2$};
\node[scale=.7] at (.45-.9,.53-1.2) {$L^2$};
\node[scale=.7] at (1.38-.9,.53-1.2) {$L^2$};}
$};
\node (B) at (0,2) {$\tikzmath[scale=.4]{\fill[gray!50] (1.8,1.2) circle (.6 and .18);
\filldraw[fill=gray!20, line join=bevel](-.6,0) arc (-180:0:.6 and .18) arc (180:0:.3 and .18) arc (-180:0:.6 and .18) arc (180:0:.3 and .18) arc (-180:0:.6 and .18)[rounded corners=4.8]-- ++(-.1,.4) -- ++(-1.6,.4) [sharp corners]-- ++(-.1,.4) arc (0:-180:.6 and .18)[rounded corners=4.8]-- ++(-.1,-.4) -- ++(-1.6,-.4)  [sharp corners]-- cycle;
\draw[dashed, ultra thin] (0,0) +(0:.6 and .18) arc (0:180:.6 and .18);
\draw[dashed, ultra thin] (1.8,0) +(0:.6 and .18) arc (0:180:.6 and .18);
\draw[dashed, ultra thin] (3.6,0) +(0:.6 and .18) arc (0:180:.6 and .18);
\draw (1.8,1.2) +(0:.6 and .18) arc (0:180:.6 and .18);
\draw[line join=bevel] (.9,.18) -- (1.8,1.02) -- (2.7,.18);
\node[scale=.7] at (.4,.3) {$L^2$};
\node[scale=.7] at (1.8,.3) {$L^2$};
\node[scale=.7] at (3.35,.3) {$L^2$};}
$};
\node (D) at (2,0) {$\tikzmath[scale=.4]{
\fill[gray!50] (.9,1.2) circle (.6 and .18);
\filldraw[fill=gray!20]
(-1.5,-1.2) arc (-180:0:.6 and .18) arc (180:0:.3 and .18) arc (-180:0:.6 and .18) 
[rounded corners=2.4]-- ++(0,.3) -- ++(-.9,.6) [sharp corners]-- ++(0,.3) 
arc (180:0:.3 and .18) arc (-180:0:.6 and .18)
[rounded corners=2.4]-- ++(0,.3) -- ++(-.9,.6) [sharp corners]-- ++(0,.3) 
arc (0:-180:.6 and .18)
[rounded corners=2.4]-- ++(0,-.3) -- ++(-.9,-.6) [sharp corners]-- ++(0,-.3)
[rounded corners=2.4]-- ++(0,-.3) -- ++(-.9,-.6) [sharp corners]-- cycle;
\draw[dashed, ultra thin] (-.9,-1.2) +(0:.6 and .18) arc (0:180:.6 and .18);
\draw[dashed, ultra thin] (.9,-1.2) +(0:.6 and .18) arc (0:180:.6 and .18);
\draw[dashed, ultra thin] (0,0) +(0:.6 and .18) arc (0:180:.6 and .18);
\draw (0,0) +(0:.6 and .18) arc (0:-180:.6 and .18);
\draw[dashed, ultra thin] (1.8,0) +(0:.6 and .18) arc (0:180:.6 and .18);
\draw (.9,1.2) +(0:.6 and .18) arc (0:180:.6 and .18);
\draw(.9,.18) -- (.9,1.02)(0,.18-1.2) -- (0,1.02-1.2);
\node at (.5,.53) {$\scriptstyle V$};
\node at (1.3,.53) {$\scriptstyle V$};
\node at (.5-.9,.53-1.2) {$\scriptstyle V$};
\node at (1.3-.9,.53-1.2) {$\scriptstyle V$};}
$};
\node (F) at (4,0) {$\tikzmath[scale=.4]{
\fill[gray!50] (.9,1.2) circle (.6 and .18);
\filldraw[fill=gray!20]
(-1.5,-1.2) arc (-180:0:.6 and .18) arc (180:0:.3 and .18) arc (-180:0:.6 and .18) 
[rounded corners=2.4]-- ++(0,.3) -- ++(-.9,.6) [sharp corners]-- ++(0,.3) 
arc (180:0:.3 and .18) arc (-180:0:.6 and .18)
[rounded corners=2.4]-- ++(0,.3) -- ++(-.9,.6) [sharp corners]-- ++(0,.3) 
arc (0:-180:.6 and .18)
[rounded corners=2.4]-- ++(0,-.3) -- ++(-.9,-.6) [sharp corners]-- ++(0,-.3)
[rounded corners=2.4]-- ++(0,-.3) -- ++(-.9,-.6) [sharp corners]-- cycle;
\draw[dashed, ultra thin] (-.9,-1.2) +(0:.6 and .18) arc (0:180:.6 and .18);
\draw[dashed, ultra thin] (.9,-1.2) +(0:.6 and .18) arc (0:180:.6 and .18);
\draw[dashed, ultra thin] (0,0) +(0:.6 and .18) arc (0:180:.6 and .18);
\draw (0,0) +(0:.6 and .18) arc (0:-180:.6 and .18);
\draw[dashed, ultra thin] (1.8,0) +(0:.6 and .18) arc (0:180:.6 and .18);
\draw (.9,1.2) +(0:.6 and .18) arc (0:180:.6 and .18);
\node at (.9,.57) {$\scriptstyle V$};
\node at (0,.57-1.2) {$\scriptstyle V$};}
$};
\node (G) at (6,0) {$\tikzmath[scale=.4]{
\fill[gray!50] (.9,1.2) circle (.6 and .18);
\filldraw[fill=gray!20]
(-1.5,-1.2) arc (-180:0:.6 and .18) arc (180:0:.3 and .18) arc (-180:0:.6 and .18) 
[rounded corners=2.4]-- ++(0,.3) -- ++(-.9,.6) [sharp corners]-- ++(0,.3) 
arc (180:0:.3 and .18) arc (-180:0:.6 and .18)
[rounded corners=2.4]-- ++(0,.3) -- ++(-.9,.6) [sharp corners]-- ++(0,.3) 
arc (0:-180:.6 and .18)
[rounded corners=2.4]-- ++(0,-.3) -- ++(-.9,-.6) [sharp corners]-- ++(0,-.3)
[rounded corners=2.4]-- ++(0,-.3) -- ++(-.9,-.6) [sharp corners]-- cycle;
\draw[dashed, ultra thin] (-.9,-1.2) +(0:.6 and .18) arc (0:180:.6 and .18);
\draw[dashed, ultra thin] (.9,-1.2) +(0:.6 and .18) arc (0:180:.6 and .18);
\draw[dashed, ultra thin] (1.8,0) +(0:.6 and .18) arc (0:180:.6 and .18);
\draw (.9,1.2) +(0:.6 and .18) arc (0:180:.6 and .18);
\node at (.03,0) {$\scriptstyle V$};
}$};
\node (H) at (6,2) {$\tikzmath[scale=.4]{\fill[gray!50] (1.8,1.2) circle (.6 and .18);
\filldraw[fill=gray!20, line join=bevel](-.6,0) arc (-180:0:.6 and .18) arc (180:0:.3 and .18) arc (-180:0:.6 and .18) arc (180:0:.3 and .18) arc (-180:0:.6 and .18)[rounded corners=4.8]-- ++(-.1,.4) -- ++(-1.6,.4) [sharp corners]-- ++(-.1,.4) arc (0:-180:.6 and .18)[rounded corners=4.8]-- ++(-.1,-.4) -- ++(-1.6,-.4)  [sharp corners]-- cycle;
\draw[dashed, ultra thin] (0,0) +(0:.6 and .18) arc (0:180:.6 and .18);
\draw[dashed, ultra thin] (1.8,0) +(0:.6 and .18) arc (0:180:.6 and .18);
\draw[dashed, ultra thin] (3.6,0) +(0:.6 and .18) arc (0:180:.6 and .18);
\draw (1.8,1.2) +(0:.6 and .18) arc (0:180:.6 and .18);
\node at (1.8,.57) {$\scriptstyle V$};}
$};
\node (X) at (3,2) {$\tikzmath[scale=.4]{\fill[gray!50] (1.8,1.2) circle (.6 and .18);
\filldraw[fill=gray!20, line join=bevel](-.6,0) arc (-180:0:.6 and .18) arc (180:0:.3 and .18) arc (-180:0:.6 and .18) arc (180:0:.3 and .18) arc (-180:0:.6 and .18)[rounded corners=4.8]-- ++(-.1,.4) -- ++(-1.6,.4) [sharp corners]-- ++(-.1,.4) arc (0:-180:.6 and .18)[rounded corners=4.8]-- ++(-.1,-.4) -- ++(-1.6,-.4)  [sharp corners]-- cycle;
\draw[dashed, ultra thin] (0,0) +(0:.6 and .18) arc (0:180:.6 and .18);
\draw[dashed, ultra thin] (1.8,0) +(0:.6 and .18) arc (0:180:.6 and .18);
\draw[dashed, ultra thin] (3.6,0) +(0:.6 and .18) arc (0:180:.6 and .18);
\draw (1.8,1.2) +(0:.6 and .18) arc (0:180:.6 and .18);
\draw[line join=bevel] (.9,.18) -- (1.8,1.02) -- (2.7,.18);
\node at (.4,.28) {$\scriptstyle V$};
\node at (1.8,.28) {$\scriptstyle V$};
\node at (3.35,.28) {$\scriptstyle V$};}
$};
\draw (A) -- (B);
\draw (A) -- (D);
\draw (D) -- (F);
\draw (F) -- (G);
\draw (B) -- (X);
\draw (G) -- (H);
\draw (X) -- (H);
}
\medskip
\end{equation}

Let $X$ be an $(S^1\times \{0\})$-open soccer ball decomposition of $S^1\times[0,1]$.
We will prove that \eqref{7gon of 3-pants} is commutative after applying the invertible functor $X\triangleleft -$.
Namely, we will prove the commutativity of this diagram:
\begin{equation}\label{7gon of 3-pants'}
\tikzmath
{\pgftransformxscale{1.5}
\node (A) at (0,0) {$\tikzmath[scale=.4]{
\fill[gray!50] (.9,2.4) circle (.6 and .18);
\filldraw[fill=gray!20]
(-1.5,-1.2) arc (-180:0:.6 and .18) arc (180:0:.3 and .18) arc (-180:0:.6 and .18) 
[rounded corners=2.4]-- ++(0,.3) -- ++(-.9,.6) [sharp corners]-- ++(0,.3) 
arc (180:0:.3 and .18) arc (-180:0:.6 and .18)
[rounded corners=2.4]-- ++(0,.3) -- ++(-.9,.6) [sharp corners]-- ++(0,.3) 
-- ++(0,1.2) arc (0:-180:.6 and .18) -- ++(0,-1.2)
[rounded corners=2.4]-- ++(0,-.3) -- ++(-.9,-.6) [sharp corners]-- ++(0,-.3)
[rounded corners=2.4]-- ++(0,-.3) -- ++(-.9,-.6) [sharp corners]-- cycle;
\draw[dashed, ultra thin] (-.9,-1.2) +(0:.6 and .18) arc (0:180:.6 and .18);
\draw[dashed, ultra thin] (.9,-1.2) +(0:.6 and .18) arc (0:180:.6 and .18);
\draw[dashed, ultra thin] (0,0) +(0:.6 and .18) arc (0:180:.6 and .18);
\draw (0,0) +(0:.6 and .18) arc (0:-180:.6 and .18);
\draw[dashed, ultra thin] (1.8,0) +(0:.6 and .18) arc (0:180:.6 and .18);
\draw[dashed, ultra thin] (.9,1.2) +(0:.6 and .18) arc (0:180:.6 and .18);
\draw (.9,1.2) +(0:.6 and .18) arc (0:-180:.6 and .18);
\node at (.9,1.2+.5) {$\scriptstyle X$};
\draw (.9,2.4) +(0:.6 and .18) arc (0:180:.6 and .18);
\draw(.9,.18) -- (.9,1.02)(0,.18-1.2) -- (0,1.02-1.2);
\node[scale=.7] at (.45,.53) {$L^2$};
\node[scale=.7] at (1.38,.53) {$L^2$};
\node[scale=.7] at (.45-.9,.53-1.2) {$L^2$};
\node[scale=.7] at (1.38-.9,.53-1.2) {$L^2$};}
$};
\node (B) at (0,2) {$\tikzmath[scale=.4]{\fill[gray!50] (1.8,2.4) circle (.6 and .18);
\filldraw[fill=gray!20, line join=bevel](-.6,0) arc (-180:0:.6 and .18) arc (180:0:.3 and .18) arc (-180:0:.6 and .18) arc (180:0:.3 and .18) arc (-180:0:.6 and .18)[rounded corners=4.8]-- ++(-.1,.4) -- ++(-1.6,.4) [sharp corners]-- ++(-.1,.4) -- ++(0,1.2) arc (0:-180:.6 and .18) -- ++(0,-1.2)[rounded corners=4.8]-- ++(-.1,-.4) -- ++(-1.6,-.4)  [sharp corners]-- cycle;
\draw[dashed, ultra thin] (0,0) +(0:.6 and .18) arc (0:180:.6 and .18);
\draw[dashed, ultra thin] (1.8,0) +(0:.6 and .18) arc (0:180:.6 and .18);
\draw[dashed, ultra thin] (3.6,0) +(0:.6 and .18) arc (0:180:.6 and .18);
\draw[dashed, ultra thin] (1.8,1.2) +(0:.6 and .18) arc (0:180:.6 and .18);
\draw (1.8,2.4) +(0:.6 and .18) arc (0:180:.6 and .18);
\draw (1.8,1.2) +(0:.6 and .18) arc (0:-180:.6 and .18);
\node at (1.8,1.2+.5) {$\scriptstyle X$};
\draw[line join=bevel] (.9,.18) -- (1.8,1.02) -- (2.7,.18);
\node[scale=.7] at (.4,.3) {$L^2$};
\node[scale=.7] at (1.8,.3) {$L^2$};
\node[scale=.7] at (3.35,.3) {$L^2$};}
$};
\node (D) at (2,0) {$\tikzmath[scale=.4]{
\fill[gray!50] (.9,2.4) circle (.6 and .18);
\filldraw[fill=gray!20]
(-1.5,-1.2) arc (-180:0:.6 and .18) arc (180:0:.3 and .18) arc (-180:0:.6 and .18) 
[rounded corners=2.4]-- ++(0,.3) -- ++(-.9,.6) [sharp corners]-- ++(0,.3) 
arc (180:0:.3 and .18) arc (-180:0:.6 and .18)
[rounded corners=2.4]-- ++(0,.3) -- ++(-.9,.6) [sharp corners]-- ++(0,.3) 
-- ++(0,1.2) arc (0:-180:.6 and .18) -- ++(0,-1.2)[rounded corners=2.4]-- ++(0,-.3) -- ++(-.9,-.6) [sharp corners]-- ++(0,-.3)
[rounded corners=2.4]-- ++(0,-.3) -- ++(-.9,-.6) [sharp corners]-- cycle;
\draw[dashed, ultra thin] (-.9,-1.2) +(0:.6 and .18) arc (0:180:.6 and .18);
\draw[dashed, ultra thin] (.9,-1.2) +(0:.6 and .18) arc (0:180:.6 and .18);
\draw[dashed, ultra thin] (0,0) +(0:.6 and .18) arc (0:180:.6 and .18);
\draw (0,0) +(0:.6 and .18) arc (0:-180:.6 and .18);
\draw[dashed, ultra thin] (1.8,0) +(0:.6 and .18) arc (0:180:.6 and .18);
\draw[dashed, ultra thin] (.9,1.2) +(0:.6 and .18) arc (0:180:.6 and .18);
\draw (.9,2.4) +(0:.6 and .18) arc (0:180:.6 and .18);
\draw (.9,1.2) +(0:.6 and .18) arc (0:-180:.6 and .18);
\node at (.9,1.2+.5) {$\scriptstyle X$};
\draw(.9,.18) -- (.9,1.02)(0,.18-1.2) -- (0,1.02-1.2);
\node at (.5,.53) {$\scriptstyle V$};
\node at (1.3,.53) {$\scriptstyle V$};
\node at (.5-.9,.53-1.2) {$\scriptstyle V$};
\node at (1.3-.9,.53-1.2) {$\scriptstyle V$};}
$};
\node (F) at (4,0) {$\tikzmath[scale=.4]{
\fill[gray!50] (.9,2.4) circle (.6 and .18);
\filldraw[fill=gray!20]
(-1.5,-1.2) arc (-180:0:.6 and .18) arc (180:0:.3 and .18) arc (-180:0:.6 and .18) 
[rounded corners=2.4]-- ++(0,.3) -- ++(-.9,.6) [sharp corners]-- ++(0,.3) 
arc (180:0:.3 and .18) arc (-180:0:.6 and .18)
[rounded corners=2.4]-- ++(0,.3) -- ++(-.9,.6) [sharp corners]-- ++(0,.3) 
-- ++(0,1.2) arc (0:-180:.6 and .18) -- ++(0,-1.2)[rounded corners=2.4]-- ++(0,-.3) -- ++(-.9,-.6) [sharp corners]-- ++(0,-.3)
[rounded corners=2.4]-- ++(0,-.3) -- ++(-.9,-.6) [sharp corners]-- cycle;
\draw[dashed, ultra thin] (-.9,-1.2) +(0:.6 and .18) arc (0:180:.6 and .18);
\draw[dashed, ultra thin] (.9,-1.2) +(0:.6 and .18) arc (0:180:.6 and .18);
\draw[dashed, ultra thin] (0,0) +(0:.6 and .18) arc (0:180:.6 and .18);
\draw (0,0) +(0:.6 and .18) arc (0:-180:.6 and .18);
\draw[dashed, ultra thin] (1.8,0) +(0:.6 and .18) arc (0:180:.6 and .18);
\draw[dashed, ultra thin] (.9,1.2) +(0:.6 and .18) arc (0:180:.6 and .18);
\draw (.9,2.4) +(0:.6 and .18) arc (0:180:.6 and .18);
\draw (.9,1.2) +(0:.6 and .18) arc (0:-180:.6 and .18);
\node at (.9,1.2+.5) {$\scriptstyle X$};
\node at (.9,.57) {$\scriptstyle V$};
\node at (0,.57-1.2) {$\scriptstyle V$};}
$};
\node (G) at (6,0) {$\tikzmath[scale=.4]{
\fill[gray!50] (.9,2.4) circle (.6 and .18);
\filldraw[fill=gray!20]
(-1.5,-1.2) arc (-180:0:.6 and .18) arc (180:0:.3 and .18) arc (-180:0:.6 and .18) 
[rounded corners=2.4]-- ++(0,.3) -- ++(-.9,.6) [sharp corners]-- ++(0,.3) 
arc (180:0:.3 and .18) arc (-180:0:.6 and .18)
[rounded corners=2.4]-- ++(0,.3) -- ++(-.9,.6) [sharp corners]-- ++(0,.3) 
-- ++(0,1.2) arc (0:-180:.6 and .18) -- ++(0,-1.2)[rounded corners=2.4]-- ++(0,-.3) -- ++(-.9,-.6) [sharp corners]-- ++(0,-.3)
[rounded corners=2.4]-- ++(0,-.3) -- ++(-.9,-.6) [sharp corners]-- cycle;
\draw[dashed, ultra thin] (-.9,-1.2) +(0:.6 and .18) arc (0:180:.6 and .18);
\draw[dashed, ultra thin] (.9,-1.2) +(0:.6 and .18) arc (0:180:.6 and .18);
\draw[dashed, ultra thin] (1.8,0) +(0:.6 and .18) arc (0:180:.6 and .18);
\draw[dashed, ultra thin] (.9,1.2) +(0:.6 and .18) arc (0:180:.6 and .18);
\draw (.9,2.4) +(0:.6 and .18) arc (0:180:.6 and .18);
\draw (.9,1.2) +(0:.6 and .18) arc (0:-180:.6 and .18);
\node at (.9,1.2+.5) {$\scriptstyle X$};
\node at (.03,0) {$\scriptstyle V$};
}$};
\node (H) at (6,2) {$\tikzmath[scale=.4]{\fill[gray!50] (1.8,2.4) circle (.6 and .18);
\filldraw[fill=gray!20, line join=bevel](-.6,0) arc (-180:0:.6 and .18) arc (180:0:.3 and .18) arc (-180:0:.6 and .18) arc (180:0:.3 and .18) arc (-180:0:.6 and .18)[rounded corners=4.8]-- ++(-.1,.4) -- ++(-1.6,.4) [sharp corners]-- ++(-.1,.4) -- ++(0,1.2) arc (0:-180:.6 and .18) -- ++(0,-1.2)[rounded corners=4.8]-- ++(-.1,-.4) -- ++(-1.6,-.4)  [sharp corners]-- cycle;
\draw[dashed, ultra thin] (0,0) +(0:.6 and .18) arc (0:180:.6 and .18);
\draw[dashed, ultra thin] (1.8,0) +(0:.6 and .18) arc (0:180:.6 and .18);
\draw[dashed, ultra thin] (3.6,0) +(0:.6 and .18) arc (0:180:.6 and .18);
\draw[dashed, ultra thin] (1.8,1.2) +(0:.6 and .18) arc (0:180:.6 and .18);
\draw (1.8,2.4) +(0:.6 and .18) arc (0:180:.6 and .18);
\draw (1.8,1.2) +(0:.6 and .18) arc (0:-180:.6 and .18);
\node at (1.8,1.2+.5) {$\scriptstyle X$};
\node at (1.8,.57) {$\scriptstyle V$};}
$};
\node (X) at (3,2) {$\tikzmath[scale=.4]{\fill[gray!50] (1.8,2.4) circle (.6 and .18);
\filldraw[fill=gray!20, line join=bevel](-.6,0) arc (-180:0:.6 and .18) arc (180:0:.3 and .18) arc (-180:0:.6 and .18) arc (180:0:.3 and .18) arc (-180:0:.6 and .18)[rounded corners=4.8]-- ++(-.1,.4) -- ++(-1.6,.4) [sharp corners]-- ++(-.1,.4) -- ++(0,1.2) arc (0:-180:.6 and .18) -- ++(0,-1.2)[rounded corners=4.8]-- ++(-.1,-.4) -- ++(-1.6,-.4)  [sharp corners]-- cycle;
\draw[dashed, ultra thin] (0,0) +(0:.6 and .18) arc (0:180:.6 and .18);
\draw[dashed, ultra thin] (1.8,0) +(0:.6 and .18) arc (0:180:.6 and .18);
\draw[dashed, ultra thin] (3.6,0) +(0:.6 and .18) arc (0:180:.6 and .18);
\draw[dashed, ultra thin] (1.8,1.2) +(0:.6 and .18) arc (0:180:.6 and .18);
\draw (1.8,2.4) +(0:.6 and .18) arc (0:180:.6 and .18);
\draw (1.8,1.2) +(0:.6 and .18) arc (0:-180:.6 and .18);
\node at (1.8,1.2+.5) {$\scriptstyle X$};
\draw[line join=bevel] (.9,.18) -- (1.8,1.02) -- (2.7,.18);
\node at (.4,.28) {$\scriptstyle V$};
\node at (1.8,.28) {$\scriptstyle V$};
\node at (3.35,.28) {$\scriptstyle V$};}
$};
\draw (A) -- (B);
\draw (A) -- (D);
\draw (D) -- (F);
\draw (F) -- (G);
\draw (B) -- (X);
\draw (G) -- (H);
\draw (X) -- (H);
}
\medskip
\end{equation}
Note that the above pictures are somewhat misleading: these constructions involve not just a single Connes fusion over the bottom circle of the annulus labeled $X$, but rather many fusions, each one of which uses at most an interval in this circle.

By choosing $X$ carefully,
we can arrange that the union of $X$ and of a soccer 
ball decomposition of $S^1\times [0,1]$ (which again needs to be picked carefully, namely so that the left half of the top $S^1$ is covered by a single edge) is an
open soccer ball decomposition, call it $Y$, of the manifold 
\[
\tikzmath[scale=.4]{
\fill[gray!50] (.9,2.4) circle (.6 and .18);
\fill[gray!50] (.955,1.02) arc (0:-180:.06 and .84);
\filldraw[fill=gray!20] (.9,.18) arc (90:0:.3 and .18) arc (-180:0:.6 and .18)
[rounded corners=2.4]-- ++(0,.3) -- ++(-.9,.6) [sharp corners]-- ++(0,.3) 
-- ++(0,1.2) arc (0:-180:.6 and .18) -- ++(0,-1.2) arc (-180:-85:.6 and .18) arc (0:-90:.06 and .84) coordinate (x);
\draw (x) arc (-90:-180:.06 and .84);
\draw[dashed, very thin] (1.8,0) +(0:.6 and .18) arc (0:180:.6 and .18);
\draw[dashed, very thin] (.9,1.2) +(180:.6 and .18) arc (180:95:.6 and .18) -- ++(0,-2*.18);
\draw (.9,2.4) +(0:.6 and .18) arc (0:180:.6 and .18);}
=
(S^1\times[0,1])\sqcup(S^1\times[0,1])/\sim,
\]
where the equivalence relation identifies $(z,0)$ in the first copy of $S^1\times[0,1]$
with $(z,1)$ in the second copy of $S^1\times[0,1]$, for every $z\in S^1_\dashv$.

With these preliminaries in place, the commutativity of \eqref{7gon of 3-pants'} follows from this commutative diagram:
\[
\hspace{-.85cm}
\tikzmath
{
\pgftransformscale{1.2}
\node (A) at (2,9.4) {$\tikzmath[scale=.4]{
\fill[gray!50] (.9,2.4) circle (.6 and .18);
\filldraw[fill=gray!20]
(-1.5,-1.2) arc (-180:0:.6 and .18) arc (180:0:.3 and .18) arc (-180:0:.6 and .18) 
[rounded corners=2.4]-- ++(0,.3) -- ++(-.9,.6) [sharp corners]-- ++(0,.3) 
arc (180:0:.3 and .18) arc (-180:0:.6 and .18)
[rounded corners=2.4]-- ++(0,.3) -- ++(-.9,.6) [sharp corners]-- ++(0,.3) 
-- ++(0,1.2) arc (0:-180:.6 and .18) -- ++(0,-1.2)
[rounded corners=2.4]-- ++(0,-.3) -- ++(-.9,-.6) [sharp corners]-- ++(0,-.3)
[rounded corners=2.4]-- ++(0,-.3) -- ++(-.9,-.6) [sharp corners]-- cycle;
\draw[dashed, ultra thin] (-.9,-1.2) +(0:.6 and .18) arc (0:180:.6 and .18);
\draw[dashed, ultra thin] (.9,-1.2) +(0:.6 and .18) arc (0:180:.6 and .18);
\draw[dashed, ultra thin] (0,0) +(0:.6 and .18) arc (0:180:.6 and .18);
\draw (0,0) +(0:.6 and .18) arc (0:-180:.6 and .18);
\draw[dashed, ultra thin] (1.8,0) +(0:.6 and .18) arc (0:180:.6 and .18);
\draw[dashed, ultra thin] (.9,1.2) +(0:.6 and .18) arc (0:180:.6 and .18);
\draw (.9,1.2) +(0:.6 and .18) arc (0:-180:.6 and .18);
\node at (.9,1.2+.5) {$\scriptstyle X$};
\draw (.9,2.4) +(0:.6 and .18) arc (0:180:.6 and .18);
\draw(.9,.18) -- (.9,1.02)(0,.18-1.2) -- (0,1.02-1.2);
\node[scale=.7] at (.45,.53) {$L^2$};
\node[scale=.7] at (1.38,.53) {$L^2$};
\node[scale=.7] at (.45-.9,.53-1.2) {$L^2$};
\node[scale=.7] at (1.38-.9,.53-1.2) {$L^2$};}
$};
\node (B) at (7,10) {$\tikzmath[scale=.4]{\fill[gray!50] (1.8,2.4) circle (.6 and .18);
\filldraw[fill=gray!20, line join=bevel](-.6,0) arc (-180:0:.6 and .18) arc (180:0:.3 and .18) arc (-180:0:.6 and .18) arc (180:0:.3 and .18) arc (-180:0:.6 and .18)[rounded corners=4.8]-- ++(-.1,.4) -- ++(-1.6,.4) [sharp corners]-- ++(-.1,.4) -- ++(0,1.2) arc (0:-180:.6 and .18) -- ++(0,-1.2)[rounded corners=4.8]-- ++(-.1,-.4) -- ++(-1.6,-.4)  [sharp corners]-- cycle;
\draw[dashed, ultra thin] (0,0) +(0:.6 and .18) arc (0:180:.6 and .18);
\draw[dashed, ultra thin] (1.8,0) +(0:.6 and .18) arc (0:180:.6 and .18);
\draw[dashed, ultra thin] (3.6,0) +(0:.6 and .18) arc (0:180:.6 and .18);
\draw[dashed, ultra thin] (1.8,1.2) +(0:.6 and .18) arc (0:180:.6 and .18);
\draw (1.8,2.4) +(0:.6 and .18) arc (0:180:.6 and .18);
\draw (1.8,1.2) +(0:.6 and .18) arc (0:-180:.6 and .18);
\node at (1.8,1.2+.5) {$\scriptstyle X$};
\draw[line join=bevel] (.9,.18) -- (1.8,1.02) -- (2.7,.18);
\node[scale=.7] at (.4,.3) {$L^2$};
\node[scale=.7] at (1.8,.3) {$L^2$};
\node[scale=.7] at (3.35,.3) {$L^2$};}
$};
\node (C) at (4,8.95) {$\tikzmath[scale=.4]{\useasboundingbox (-1.5,-1.4) rectangle (2.4,3);
\fill[gray!50] (.9,2.4) circle (.6 and .18);
\filldraw[fill=gray!20]
(-1.5,-1.2) arc (-180:0:.6 and .18) arc (180:0:.3 and .18) arc (-180:0:.6 and .18) 
[rounded corners=2.4]-- ++(0,.3) -- ++(-.9,.6) [sharp corners]-- ++(0,.3) 
arc (180:0:.3 and .18) arc (-180:0:.6 and .18)
[rounded corners=2.4]-- ++(0,.3) -- ++(-.9,.6) [sharp corners]-- ++(0,.3) 
-- ++(0,1.2) arc (0:-180:.6 and .18) -- ++(0,-1.2)[rounded corners=2.4]-- ++(0,-.3) -- ++(-.9,-.6) [sharp corners]-- ++(0,-.3)
[rounded corners=2.4]-- ++(0,-.3) -- ++(-.9,-.6) [sharp corners]-- cycle;
\draw[dashed, ultra thin] (-.9,-1.2) +(0:.6 and .18) arc (0:180:.6 and .18);
\draw[dashed, ultra thin] (.9,-1.2) +(0:.6 and .18) arc (0:180:.6 and .18);
\draw[dashed, ultra thin] (0,0) +(0:.6 and .18) arc (0:180:.6 and .18);
\draw (0,0) +(0:.6 and .18) arc (0:-180:.6 and .18);
\draw[dashed, ultra thin] (1.8,0) +(0:.6 and .18) arc (0:180:.6 and .18);
\draw[dashed, ultra thin] (.9,1.2) +(0:.6 and .18) arc (0:180:.6 and .18);
\draw (.9,2.4) +(0:.6 and .18) arc (0:180:.6 and .18);
\draw (.9,1.2) +(0:.6 and .18) arc (0:-180:.6 and .18);
\node at (.9,1.2+.5) {$\scriptstyle X$};
\draw(.9,.18) -- (.9,1.02)(0,.18-1.2) -- (0,1.02-1.2);
\node at (1.3,.53) {$\scriptstyle V$};
\node at (.5-.9,.53-1.2) {$\scriptstyle V$};
\node at (1.3-.9,.53-1.2) {$\scriptstyle V$};
\node[scale=.7] at (.45,.53) {$L^2$};}
$};
\node (D) at (0,5) {$\tikzmath[scale=.4]{
\fill[gray!50] (.9,2.4) circle (.6 and .18);
\filldraw[fill=gray!20]
(-1.5,-1.2) arc (-180:0:.6 and .18) arc (180:0:.3 and .18) arc (-180:0:.6 and .18) 
[rounded corners=2.4]-- ++(0,.3) -- ++(-.9,.6) [sharp corners]-- ++(0,.3) 
arc (180:0:.3 and .18) arc (-180:0:.6 and .18)
[rounded corners=2.4]-- ++(0,.3) -- ++(-.9,.6) [sharp corners]-- ++(0,.3) 
-- ++(0,1.2) arc (0:-180:.6 and .18) -- ++(0,-1.2)[rounded corners=2.4]-- ++(0,-.3) -- ++(-.9,-.6) [sharp corners]-- ++(0,-.3)
[rounded corners=2.4]-- ++(0,-.3) -- ++(-.9,-.6) [sharp corners]-- cycle;
\draw[dashed, ultra thin] (-.9,-1.2) +(0:.6 and .18) arc (0:180:.6 and .18);
\draw[dashed, ultra thin] (.9,-1.2) +(0:.6 and .18) arc (0:180:.6 and .18);
\draw[dashed, ultra thin] (0,0) +(0:.6 and .18) arc (0:180:.6 and .18);
\draw (0,0) +(0:.6 and .18) arc (0:-180:.6 and .18);
\draw[dashed, ultra thin] (1.8,0) +(0:.6 and .18) arc (0:180:.6 and .18);
\draw[dashed, ultra thin] (.9,1.2) +(0:.6 and .18) arc (0:180:.6 and .18);
\draw (.9,2.4) +(0:.6 and .18) arc (0:180:.6 and .18);
\draw (.9,1.2) +(0:.6 and .18) arc (0:-180:.6 and .18);
\node at (.9,1.2+.5) {$\scriptstyle X$};
\draw(.9,.18) -- (.9,1.02)(0,.18-1.2) -- (0,1.02-1.2);
\node at (.5,.53) {$\scriptstyle V$};
\node at (1.3,.53) {$\scriptstyle V$};
\node at (.5-.9,.53-1.2) {$\scriptstyle V$};
\node at (1.3-.9,.53-1.2) {$\scriptstyle V$};}
$};
\node (E) at (4.6,7.35) {$\tikzmath[scale=.4]{\useasboundingbox (-1.5,-1.4) rectangle (2.4,3);
\fill[gray!50] (.9,2.4) circle (.6 and .18);
\filldraw[fill=gray!20]
(-1.5,-1.2) arc (-180:0:.6 and .18) arc (180:0:.3 and .18) arc (-180:0:.6 and .18) 
[rounded corners=2.4]-- ++(0,.3) -- ++(-.9,.6) [sharp corners]-- ++(0,.3) 
arc (180:0:.3 and .18) arc (-180:0:.6 and .18)
[rounded corners=2.4]-- ++(0,.3) -- ++(-.9,.6) [sharp corners]-- ++(0,.3) 
-- ++(0,1.2) arc (0:-180:.6 and .18) -- ++(0,-1.2)[rounded corners=2.4]-- ++(0,-.3) -- ++(-.9,-.6) [sharp corners]-- ++(0,-.3)
[rounded corners=2.4]-- ++(0,-.3) -- ++(-.9,-.6) [sharp corners]-- cycle;
\draw[dashed, ultra thin] (-.9,-1.2) +(0:.6 and .18) arc (0:180:.6 and .18);
\draw[dashed, ultra thin] (.9,-1.2) +(0:.6 and .18) arc (0:180:.6 and .18);
\draw[dashed, ultra thin] (0,0) +(0:.6 and .18) arc (0:180:.6 and .18);
\draw (0,0) +(0:.6 and .18) arc (0:-180:.6 and .18);
\draw[dashed, ultra thin] (1.8,0) +(0:.6 and .18) arc (0:180:.6 and .18);
\draw[dashed, ultra thin] (.9,1.2) +(0:.6 and .18) arc (0:180:.6 and .18);
\draw (.9,2.4) +(0:.6 and .18) arc (0:180:.6 and .18);
\draw (.9,1.2) +(0:.6 and .18) arc (0:-180:.6 and .18);
\node at (.9,1.2+.5) {$\scriptstyle X$};
\draw(.9,.18) -- (.9,1.02);
\node[scale=.7] at (.45,.53) {$L^2$};
\node at (1.3,.53) {$\scriptstyle V$};
\node at (0,.57-1.2) {$\scriptstyle V$};}
$};
\node (F) at (10,7.5) {$\tikzmath[scale=.4]{\fill[gray!50] (1.8,2.4) circle (.6 and .18);
\filldraw[fill=gray!20, line join=bevel](-.6,0) arc (-180:0:.6 and .18) arc (180:0:.3 and .18) arc (-180:0:.6 and .18) arc (180:0:.3 and .18) arc (-180:0:.6 and .18)[rounded corners=4.8]-- ++(-.1,.4) -- ++(-1.6,.4) [sharp corners]-- ++(-.1,.4) -- ++(0,1.2) arc (0:-180:.6 and .18) -- ++(0,-1.2)[rounded corners=4.8]-- ++(-.1,-.4) -- ++(-1.6,-.4)  [sharp corners]-- cycle;
\draw[dashed, ultra thin] (0,0) +(0:.6 and .18) arc (0:180:.6 and .18);
\draw[dashed, ultra thin] (1.8,0) +(0:.6 and .18) arc (0:180:.6 and .18);
\draw[dashed, ultra thin] (3.6,0) +(0:.6 and .18) arc (0:180:.6 and .18);
\draw[dashed, ultra thin] (1.8,1.2) +(0:.6 and .18) arc (0:180:.6 and .18);
\draw (1.8,2.4) +(0:.6 and .18) arc (0:180:.6 and .18);
\draw (1.8,1.2) +(0:.6 and .18) arc (0:-180:.6 and .18);
\node at (1.8,1.2+.5) {$\scriptstyle X$};
\draw[line join=bevel] (.9,.18) -- (1.8,1.02) -- (2.7,.18);
\node at (.4,.28) {$\scriptstyle V$};
\node at (1.8,.28) {$\scriptstyle V$};
\node at (3.35,.28) {$\scriptstyle V$};}
$};
\node (G) at (2,4.6) {$\tikzmath[scale=.4]{
\fill[gray!50] (.9,2.4) circle (.6 and .18);
\filldraw[fill=gray!20]
(-1.5,-1.2) arc (-180:0:.6 and .18) arc (180:0:.3 and .18) arc (-180:0:.6 and .18) 
[rounded corners=2.4]-- ++(0,.3) -- ++(-.9,.6) [sharp corners]-- ++(0,.3) 
arc (180:0:.3 and .18) arc (-180:0:.6 and .18)
[rounded corners=2.4]-- ++(0,.3) -- ++(-.9,.6) [sharp corners]-- ++(0,.3) 
-- ++(0,1.2) arc (0:-180:.6 and .18) -- ++(0,-1.2)[rounded corners=2.4]-- ++(0,-.3) -- ++(-.9,-.6) [sharp corners]-- ++(0,-.3)
[rounded corners=2.4]-- ++(0,-.3) -- ++(-.9,-.6) [sharp corners]-- cycle;
\draw[dashed, ultra thin] (-.9,-1.2) +(0:.6 and .18) arc (0:180:.6 and .18);
\draw[dashed, ultra thin] (.9,-1.2) +(0:.6 and .18) arc (0:180:.6 and .18);
\draw[dashed, ultra thin] (0,0) +(0:.6 and .18) arc (0:180:.6 and .18);
\draw (0,0) +(0:.6 and .18) arc (0:-180:.6 and .18);
\draw[dashed, ultra thin] (1.8,0) +(0:.6 and .18) arc (0:180:.6 and .18);
\draw[dashed, ultra thin] (.9,1.2) +(0:.6 and .18) arc (0:180:.6 and .18);
\draw (.9,2.4) +(0:.6 and .18) arc (0:180:.6 and .18);
\draw (.9,1.2) +(0:.6 and .18) arc (0:-180:.6 and .18);
\node at (.9,1.2+.5) {$\scriptstyle X$};
\draw(.9,.18) -- (.9,1.02);
\node at (.5,.53) {$\scriptstyle V$};
\node at (1.3,.53) {$\scriptstyle V$};
\node at (0,.57-1.2) {$\scriptstyle V$};}
$};
\node (H) at (3.9,4.2) {$\tikzmath[scale=.4]{
\fill[gray!50] (.9,2.4) circle (.6 and .18);
\filldraw[fill=gray!20]
(-1.5,-1.2) arc (-180:0:.6 and .18) arc (180:0:.3 and .18) arc (-180:0:.6 and .18) 
[rounded corners=2.4]-- ++(0,.3) -- ++(-.9,.6) [sharp corners]-- ++(0,.3) 
arc (180:0:.3 and .18) arc (-180:0:.6 and .18)
[rounded corners=2.4]-- ++(0,.3) -- ++(-.9,.6) [sharp corners]-- ++(0,.3) 
-- ++(0,1.2) arc (0:-180:.6 and .18) -- ++(0,-1.2)[rounded corners=2.4]-- ++(0,-.3) -- ++(-.9,-.6) [sharp corners]-- ++(0,-.3)
[rounded corners=2.4]-- ++(0,-.3) -- ++(-.9,-.6) [sharp corners]-- cycle;
\draw[dashed, ultra thin] (-.9,-1.2) +(0:.6 and .18) arc (0:180:.6 and .18);
\draw[dashed, ultra thin] (.9,-1.2) +(0:.6 and .18) arc (0:180:.6 and .18);
\draw[dashed, ultra thin] (0,0) +(0:.6 and .18) arc (0:180:.6 and .18);
\draw (0,0) +(0:.6 and .18) arc (0:-180:.6 and .18);
\draw[dashed, ultra thin] (1.8,0) +(0:.6 and .18) arc (0:180:.6 and .18);
\draw (.9,2.4) +(0:.6 and .18) arc (0:180:.6 and .18);
\draw (.9,1.2) +(-180:.6 and .18) arc (-180:-90:.6 and .18);
\node at (1.05,1.5) {$\scriptstyle Y$};
\draw(.9,.18) -- (.9,1.02);
\node at (.5,.53) {$\scriptstyle V$};
\node at (0,.57-1.2) {$\scriptstyle V$};}
$};
\node (I) at (5.25,5.65) {$\tikzmath[scale=.4]{
\fill[gray!50] (.9,2.4) circle (.6 and .18);
\filldraw[fill=gray!20]
(-1.5,-1.2) arc (-180:0:.6 and .18) arc (180:0:.3 and .18) arc (-180:0:.6 and .18) 
[rounded corners=2.4]-- ++(0,.3) -- ++(-.9,.6) [sharp corners]-- ++(0,.3) 
arc (180:0:.3 and .18) arc (-180:0:.6 and .18)
[rounded corners=2.4]-- ++(0,.3) -- ++(-.9,.6) [sharp corners]-- ++(0,.3) 
-- ++(0,1.2) arc (0:-180:.6 and .18) -- ++(0,-1.2)[rounded corners=2.4]-- ++(0,-.3) -- ++(-.9,-.6) [sharp corners]-- ++(0,-.3)
[rounded corners=2.4]-- ++(0,-.3) -- ++(-.9,-.6) [sharp corners]-- cycle;
\draw[dashed, ultra thin] (-.9,-1.2) +(0:.6 and .18) arc (0:180:.6 and .18);
\draw[dashed, ultra thin] (.9,-1.2) +(0:.6 and .18) arc (0:180:.6 and .18);
\draw[dashed, ultra thin] (0,0) +(0:.6 and .18) arc (0:180:.6 and .18);
\draw (0,0) +(0:.6 and .18) arc (0:-180:.6 and .18);
\draw[dashed, ultra thin] (1.8,0) +(0:.6 and .18) arc (0:180:.6 and .18);
\draw (.9,2.4) +(0:.6 and .18) arc (0:180:.6 and .18);
\draw (.9,1.2) +(-180:.6 and .18) arc (-180:-90:.6 and .18);
\node at (1.05,1.5) {$\scriptstyle Y$};
\draw(.9,.18) -- (.9,1.02);
\node[scale=.7] at (.45,.53) {$L^2$};
\node at (0,.57-1.2) {$\scriptstyle V$};}
$};
\node (J) at (7.2,5.25) {$\tikzmath[scale=.4]{\fill[gray!50] (1.8,2.4) circle (.6 and .18);
\filldraw[fill=gray!20, line join=bevel](-.6,0) arc (-180:0:.6 and .18) arc (180:0:.3 and .18) arc (-180:0:.6 and .18) arc (180:0:.3 and .18) arc (-180:0:.6 and .18)[rounded corners=4.8]-- ++(-.1,.4) -- ++(-1.6,.4) [sharp corners]-- ++(-.1,.4) -- ++(0,1.2) arc (0:-180:.6 and .18) -- ++(0,-1.2)[rounded corners=4.8]-- ++(-.1,-.4) -- ++(-1.6,-.4)  [sharp corners]-- cycle;
\draw[dashed, ultra thin] (0,0) +(0:.6 and .18) arc (0:180:.6 and .18);
\draw[dashed, ultra thin] (1.8,0) +(0:.6 and .18) arc (0:180:.6 and .18);
\draw[dashed, ultra thin] (3.6,0) +(0:.6 and .18) arc (0:180:.6 and .18);
\draw (1.8,2.4) +(0:.6 and .18) arc (0:180:.6 and .18);
\draw (1.8,1.2) +(-180:.6 and .18) arc (-180:-90:.6 and .18);
\node at (1.95,1.5) {$\scriptstyle Y$};
\draw (1.8,1.02) -- (2.7,.18);
\node at (1.4,.5) {$\scriptstyle V$};}
$};
\node (K) at (8.54,6.46) {$\tikzmath[scale=.4]{\fill[gray!50] (1.8,2.4) circle (.6 and .18);
\filldraw[fill=gray!20, line join=bevel](-.6,0) arc (-180:0:.6 and .18) arc (180:0:.3 and .18) arc (-180:0:.6 and .18) arc (180:0:.3 and .18) arc (-180:0:.6 and .18)[rounded corners=4.8]-- ++(-.1,.4) -- ++(-1.6,.4) [sharp corners]-- ++(-.1,.4) -- ++(0,1.2) arc (0:-180:.6 and .18) -- ++(0,-1.2)[rounded corners=4.8]-- ++(-.1,-.4) -- ++(-1.6,-.4)  [sharp corners]-- cycle;
\draw[dashed, ultra thin] (0,0) +(0:.6 and .18) arc (0:180:.6 and .18);
\draw[dashed, ultra thin] (1.8,0) +(0:.6 and .18) arc (0:180:.6 and .18);
\draw[dashed, ultra thin] (3.6,0) +(0:.6 and .18) arc (0:180:.6 and .18);
\draw[dashed, ultra thin] (1.8,1.2) +(0:.6 and .18) arc (0:180:.6 and .18);
\draw (1.8,2.4) +(0:.6 and .18) arc (0:180:.6 and .18);
\draw (1.8,1.2) +(0:.6 and .18) arc (0:-180:.6 and .18);
\node at (1.8,1.2+.5) {$\scriptstyle X$};
\draw (1.8,1.02) -- (2.7,.18);
\node at (1.4,.5) {$\scriptstyle V$};
\node at (3.35,.25) {$\scriptstyle V$};}
$};
\node (L) at (2,.6) {$\tikzmath[scale=.4]{
\fill[gray!50] (.9,2.4) circle (.6 and .18);
\filldraw[fill=gray!20]
(-1.5,-1.2) arc (-180:0:.6 and .18) arc (180:0:.3 and .18) arc (-180:0:.6 and .18) 
[rounded corners=2.4]-- ++(0,.3) -- ++(-.9,.6) [sharp corners]-- ++(0,.3) 
arc (180:0:.3 and .18) arc (-180:0:.6 and .18)
[rounded corners=2.4]-- ++(0,.3) -- ++(-.9,.6) [sharp corners]-- ++(0,.3) 
-- ++(0,1.2) arc (0:-180:.6 and .18) -- ++(0,-1.2)[rounded corners=2.4]-- ++(0,-.3) -- ++(-.9,-.6) [sharp corners]-- ++(0,-.3)
[rounded corners=2.4]-- ++(0,-.3) -- ++(-.9,-.6) [sharp corners]-- cycle;
\draw[dashed, ultra thin] (-.9,-1.2) +(0:.6 and .18) arc (0:180:.6 and .18);
\draw[dashed, ultra thin] (.9,-1.2) +(0:.6 and .18) arc (0:180:.6 and .18);
\draw[dashed, ultra thin] (0,0) +(0:.6 and .18) arc (0:180:.6 and .18);
\draw (0,0) +(0:.6 and .18) arc (0:-180:.6 and .18);
\draw[dashed, ultra thin] (1.8,0) +(0:.6 and .18) arc (0:180:.6 and .18);
\draw[dashed, ultra thin] (.9,1.2) +(0:.6 and .18) arc (0:180:.6 and .18);
\draw (.9,2.4) +(0:.6 and .18) arc (0:180:.6 and .18);
\draw (.9,1.2) +(0:.6 and .18) arc (0:-180:.6 and .18);
\node at (.9,1.2+.5) {$\scriptstyle X$};
\node at (.9,.57) {$\scriptstyle V$};
\node at (0,.57-1.2) {$\scriptstyle V$};}
$};
\node[inner ysep=1] (M) at (4.4,2.37) {$\tikzmath[scale=.4]{
\fill[gray!50] (.9,2.4) circle (.6 and .18);
\filldraw[fill=gray!20]
(-1.5,-1.2) arc (-180:0:.6 and .18) arc (180:0:.3 and .18) arc (-180:0:.6 and .18) 
[rounded corners=2.4]-- ++(0,.3) -- ++(-.9,.6) [sharp corners]-- ++(0,.3) 
arc (180:0:.3 and .18) arc (-180:0:.6 and .18)
[rounded corners=2.4]-- ++(0,.3) -- ++(-.9,.6) [sharp corners]-- ++(0,.3) 
-- ++(0,1.2) arc (0:-180:.6 and .18) -- ++(0,-1.2)[rounded corners=2.4]-- ++(0,-.3) -- ++(-.9,-.6) [sharp corners]-- ++(0,-.3)
[rounded corners=2.4]-- ++(0,-.3) -- ++(-.9,-.6) [sharp corners]-- cycle;
\draw[dashed, ultra thin] (-.9,-1.2) +(0:.6 and .18) arc (0:180:.6 and .18);
\draw[dashed, ultra thin] (.9,-1.2) +(0:.6 and .18) arc (0:180:.6 and .18);
\draw[dashed, ultra thin] (0,0) +(0:.6 and .18) arc (0:180:.6 and .18);
\draw (0,0) +(0:.6 and .18) arc (0:-180:.6 and .18);
\draw[dashed, ultra thin] (1.8,0) +(0:.6 and .18) arc (0:180:.6 and .18);
\draw (.9,2.4) +(0:.6 and .18) arc (0:180:.6 and .18);
\node at (.9,1) {$\scriptstyle V$};
\node at (0,.57-1.2) {$\scriptstyle V$};}
$};
\node (N) at (6.45,2.05) {$\tikzmath[scale=.4]{
\fill[gray!50] (.9,2.4) circle (.6 and .18);
\filldraw[fill=gray!20]
(-1.5,-1.2) arc (-180:0:.6 and .18) arc (180:0:.3 and .18) arc (-180:0:.6 and .18) 
[rounded corners=2.4]-- ++(0,.3) -- ++(-.9,.6) [sharp corners]-- ++(0,.3) 
arc (180:0:.3 and .18) arc (-180:0:.6 and .18)
[rounded corners=2.4]-- ++(0,.3) -- ++(-.9,.6) [sharp corners]-- ++(0,.3) 
-- ++(0,1.2) arc (0:-180:.6 and .18) -- ++(0,-1.2)[rounded corners=2.4]-- ++(0,-.3) -- ++(-.9,-.6) [sharp corners]-- ++(0,-.3)
[rounded corners=2.4]-- ++(0,-.3) -- ++(-.9,-.6) [sharp corners]-- cycle;
\draw[dashed, ultra thin] (-.9,-1.2) +(0:.6 and .18) arc (0:180:.6 and .18);
\draw[dashed, ultra thin] (.9,-1.2) +(0:.6 and .18) arc (0:180:.6 and .18);
\draw[dashed, ultra thin] (1.8,0) +(0:.6 and .18) arc (0:180:.6 and .18);
\draw (.9,2.4) +(0:.6 and .18) arc (0:180:.6 and .18);
\node at (.23,.2) {$\scriptstyle V$};
}$};
\node (O) at (7,0) {$\tikzmath[scale=.4]{
\fill[gray!50] (.9,2.4) circle (.6 and .18);
\filldraw[fill=gray!20]
(-1.5,-1.2) arc (-180:0:.6 and .18) arc (180:0:.3 and .18) arc (-180:0:.6 and .18) 
[rounded corners=2.4]-- ++(0,.3) -- ++(-.9,.6) [sharp corners]-- ++(0,.3) 
arc (180:0:.3 and .18) arc (-180:0:.6 and .18)
[rounded corners=2.4]-- ++(0,.3) -- ++(-.9,.6) [sharp corners]-- ++(0,.3) 
-- ++(0,1.2) arc (0:-180:.6 and .18) -- ++(0,-1.2)[rounded corners=2.4]-- ++(0,-.3) -- ++(-.9,-.6) [sharp corners]-- ++(0,-.3)
[rounded corners=2.4]-- ++(0,-.3) -- ++(-.9,-.6) [sharp corners]-- cycle;
\draw[dashed, ultra thin] (-.9,-1.2) +(0:.6 and .18) arc (0:180:.6 and .18);
\draw[dashed, ultra thin] (.9,-1.2) +(0:.6 and .18) arc (0:180:.6 and .18);
\draw[dashed, ultra thin] (1.8,0) +(0:.6 and .18) arc (0:180:.6 and .18);
\draw[dashed, ultra thin] (.9,1.2) +(0:.6 and .18) arc (0:180:.6 and .18);
\draw (.9,2.4) +(0:.6 and .18) arc (0:180:.6 and .18);
\draw (.9,1.2) +(0:.6 and .18) arc (0:-180:.6 and .18);
\node at (.9,1.2+.5) {$\scriptstyle X$};
\node at (.03,0) {$\scriptstyle V$};
}$};
\node (P) at (7.8,3.55) {$\tikzmath[scale=.4]{\fill[gray!50] (1.8,2.4) circle (.6 and .18);
\filldraw[fill=gray!20, line join=bevel](-.6,0) arc (-180:0:.6 and .18) arc (180:0:.3 and .18) arc (-180:0:.6 and .18) arc (180:0:.3 and .18) arc (-180:0:.6 and .18)[rounded corners=4.8]-- ++(-.1,.4) -- ++(-1.6,.4) [sharp corners]-- ++(-.1,.4) -- ++(0,1.2) arc (0:-180:.6 and .18) -- ++(0,-1.2)[rounded corners=4.8]-- ++(-.1,-.4) -- ++(-1.6,-.4)  [sharp corners]-- cycle;
\draw[dashed, ultra thin] (0,0) +(0:.6 and .18) arc (0:180:.6 and .18);
\draw[dashed, ultra thin] (1.8,0) +(0:.6 and .18) arc (0:180:.6 and .18);
\draw[dashed, ultra thin] (3.6,0) +(0:.6 and .18) arc (0:180:.6 and .18);
\draw (1.8,2.4) +(0:.6 and .18) arc (0:180:.6 and .18);
\node at (1.8,1) {$\scriptstyle V$};}
$};
\node (Q) at (10,2.5) {$\tikzmath[scale=.4]{\fill[gray!50] (1.8,2.4) circle (.6 and .18);
\filldraw[fill=gray!20, line join=bevel](-.6,0) arc (-180:0:.6 and .18) arc (180:0:.3 and .18) arc (-180:0:.6 and .18) arc (180:0:.3 and .18) arc (-180:0:.6 and .18)[rounded corners=4.8]-- ++(-.1,.4) -- ++(-1.6,.4) [sharp corners]-- ++(-.1,.4) -- ++(0,1.2) arc (0:-180:.6 and .18) -- ++(0,-1.2)[rounded corners=4.8]-- ++(-.1,-.4) -- ++(-1.6,-.4)  [sharp corners]-- cycle;
\draw[dashed, ultra thin] (0,0) +(0:.6 and .18) arc (0:180:.6 and .18);
\draw[dashed, ultra thin] (1.8,0) +(0:.6 and .18) arc (0:180:.6 and .18);
\draw[dashed, ultra thin] (3.6,0) +(0:.6 and .18) arc (0:180:.6 and .18);
\draw[dashed, ultra thin] (1.8,1.2) +(0:.6 and .18) arc (0:180:.6 and .18);
\draw (1.8,2.4) +(0:.6 and .18) arc (0:180:.6 and .18);
\draw (1.8,1.2) +(0:.6 and .18) arc (0:-180:.6 and .18);
\node at (1.8,1.2+.5) {$\scriptstyle X$};
\node at (1.8,.57) {$\scriptstyle V$};}
$};
\node at (5.87,3.85) {$\scriptstyle (\star)$};
\draw (D) -- (A) -- (B) (D) -- (L) -- (G) (Q) -- (O) --node[above, scale=1.1]{$\scriptstyle g$} (L) (N) -- (O) (H) -- (M);
\draw[shorten >=-10, shorten <=2] (E) -- (C);
\draw[shorten >=-12] (A) -- (C);
\draw[shorten >=-8] (C.south west)+(0,.02) -- (D);
\draw[shorten >=-8] (E.south west)+(.02,-.01) -- (G);
\draw[shorten >=-12] (D) -- (G);
\draw[shorten <=2, shorten >=-12] (G) -- (H);
\draw[shorten <=8, shorten >=-10] (K) -- (F);
\draw[shorten >=-10] (E) -- (I);
\draw[shorten >=2, shorten <=-10] (F) -- (C);
\draw[shorten >=2, shorten <=-10] (F) -- (B);
\draw[shorten <=2, shorten >=-12] (E) -- (K);
\draw[shorten <=2, shorten >=-15] (P) -- (Q);
\draw[shorten >=-10] (M) --node[above, pos=.75, scale=1.1]{$\scriptstyle g$} (N);
\draw[shorten >=2] (J) -- (P);
\draw[shorten <=2, shorten >=-8] (P) -- (N);
\draw[shorten <=3, shorten >=-7] (H) -- (I);
\draw[shorten <=2, shorten >=-13] (I) -- (J);
\draw[shorten >=-8] (K) -- (J);
\draw[shorten >=2, shorten <=-8] (L) -- (M);
\draw[shorten >=3] (K) -- (Q);
\draw[shorten >=3] (F) -- (Q);
}
\]
Here, the central hexagon \raisebox{.5pt}{$\scriptstyle (\star)$} commutes because the gluing isomorphism $g$ is equal to the rightmost composition in \eqref{eq: rightmost composition}, precomposed by the diffeormorphism
$\tikzmath[scale=.2]{
\fill[gray!50] (.9,2.4) circle (.6 and .18);
\filldraw[fill=gray!20]
(-1.5,-1.2) arc (-180:0:.6 and .18) arc (180:0:.3 and .18) arc (-180:0:.6 and .18) 
[rounded corners=1.2]-- ++(0,.3) -- ++(-.9,.6) [sharp corners]-- ++(0,.3) 
arc (180:0:.3 and .18) arc (-180:0:.6 and .18)
[rounded corners=1.2]-- ++(0,.3) -- ++(-.9,.6) [sharp corners]-- ++(0,.3) 
-- ++(0,1.2) arc (0:-180:.6 and .18) -- ++(0,-1.2)[rounded corners=1.2]-- ++(0,-.3) -- ++(-.9,-.6) [sharp corners]-- ++(0,-.3)
[rounded corners=1.2]-- ++(0,-.3) -- ++(-.9,-.6) [sharp corners]-- cycle;
\draw[dashed, ultra thin] (-.9,-1.2) +(0:.6 and .18) arc (0:180:.6 and .18);
\draw[dashed, ultra thin] (.9,-1.2) +(0:.6 and .18) arc (0:180:.6 and .18);
\draw[dashed, ultra thin] (1.8,0) +(0:.6 and .18) arc (0:180:.6 and .18);
\draw (.9,2.4) +(0:.6 and .18) arc (0:180:.6 and .18);
\node[scale=.7] at (.23,.2) {$\scriptstyle V$};
}
\cong
\tikzmath[scale=.2]{\fill[gray!50] (1.8,2.4) circle (.6 and .18);
\filldraw[fill=gray!20, line join=bevel](-.6,0) arc (-180:0:.6 and .18) arc (180:0:.3 and .18) arc (-180:0:.6 and .18) arc (180:0:.3 and .18) arc (-180:0:.6 and .18)[rounded corners=2.4]-- ++(-.1,.4) -- ++(-1.6,.4) [sharp corners]-- ++(-.1,.4) -- ++(0,1.2) arc (0:-180:.6 and .18) -- ++(0,-1.2)[rounded corners=2.4]-- ++(-.1,-.4) -- ++(-1.6,-.4)  [sharp corners]-- cycle;
\draw[dashed, ultra thin] (0,0) +(0:.6 and .18) arc (0:180:.6 and .18);
\draw[dashed, ultra thin] (1.8,0) +(0:.6 and .18) arc (0:180:.6 and .18);
\draw[dashed, ultra thin] (3.6,0) +(0:.6 and .18) arc (0:180:.6 and .18);
\draw (1.8,2.4) +(0:.6 and .18) arc (0:180:.6 and .18);
\node[scale=.7] at (1.8,1) {$\scriptstyle V$};}
$.
\end{proof}

\subsection{The braiding}

\def\underbeta{
\tikzmath{\useasboundingbox (-.08,-.12) rectangle (.07,.12);
\node{$\beta$};\draw (-.06,-.13) -- (.07,-.13);}}
\def\undergamma{
\tikzmath{\node{$\gamma$};\draw (-.08,-.15) -- (.06,-.15);}}
\def\underdelta{
\tikzmath{\node{$\delta$};\draw (-.08,-.16) -- (.06,-.16);}}

The extended mapping class group of the standard pair-of-pants contains a special element $\underbeta:P\to P$, called the \emph{braiding}, that exchanges the
two boundary circles $S_1$ and $S_2$, and fixes $S_3$.
Using a flat depiction of $P$, we illustrate the homeomorphism $\underbeta$ by means of what it does to some internal lines:
\begin{equation}\label{eq: picture of beta}
P\,:\,\,\tikzmath[scale=.8]{\filldraw[fill=gray!30] (0,0) circle (1); 
\filldraw[fill=white] (-.4,0) circle (.25)(.4,0) circle (.25);
\node at (-.4,-.01) {$\scriptstyle S_1$};
\node at (.4,-.01) {$\scriptstyle S_2$};
\node at (.9,-.85) {$\scriptstyle S_3$};
}
\qquad\qquad
\underbeta:\,\,
\tikzmath[scale=.8]{\filldraw[fill=gray!30] (0,0) circle (1); 
\draw[ultra thin](-1,0) -- (1,0) (-.4,.92) -- (-.4,-.92)(.4,.92) -- (.4,-.92);
\filldraw[fill=white] (-.4,0) circle (.25)(.4,0) circle (.25);
}
\,\,\,\mapsto\,\,\,
\tikzmath[scale=.8]{\filldraw[fill=gray!30] (0,0) circle (1); 
\draw[ultra thin, rounded corners=1.6](-1,0) -- (-.9,0) [rounded corners=4]-- (-.9,.3) -- (-.6,.6) -- (-.2,.6) -- (.1,.3) [rounded corners=1.2]-- (.1,0) -- (.7,0) [rounded corners=2.4]-- (.7,-.2) [rounded corners=2.8]-- (.5,-.4) [rounded corners=3.2]-- (.2,-.4) -- (0,-.2) -- (0,0)
[rounded corners=1.6]
(-.4,.92) -- (-.4,.8) [rounded corners=4.8]-- (0,.79) -- (.4,.4) [rounded corners=1.2]-- (.4,-.3) [rounded corners=2.4]-- (.2,-.3) -- (.05,-.1) [rounded corners=3.2]-- (.05,.25) -- (-.2,.5) -- (-.55,.5) -- (-.8,.25) [rounded corners=4.8]-- (-.8,-.4) [rounded corners=2.4]-- (-.4,-.8) -- (-.4,-.92);
\pgftransformrotate {180}
\draw[ultra thin, rounded corners=1.6](-1,0) -- (-.9,0) [rounded corners=4]-- (-.9,.3) -- (-.6,.6) -- (-.2,.6) -- (.1,.3) [rounded corners=1.2]-- (.1,0) -- (.7,0) [rounded corners=2.4]-- (.7,-.2) [rounded corners=2.8]-- (.5,-.4) [rounded corners=3.2]-- (.2,-.4) -- (0,-.2) -- (0,0)
[rounded corners=1.6]
(-.4,.92) -- (-.4,.8) [rounded corners=4.8]-- (0,.79) -- (.4,.4) [rounded corners=1.2]-- (.4,-.3) [rounded corners=2.4]-- (.2,-.3) -- (.05,-.1) [rounded corners=3.2]-- (.05,.25) -- (-.2,.5) -- (-.55,.5) -- (-.8,.25) [rounded corners=4.8]-- (-.8,-.4) [rounded corners=2.4]-- (-.4,-.8) -- (-.4,-.92);
\pgftransformrotate {180}
\filldraw[fill=white] (-.4,0) circle (.25)(.4,0) circle (.25);
}
\end{equation}
By Theorem \ref{Mthm: conformal blocks}, there is an associated unitary
\[
\,\,\beta:=V(\,\underbeta\,):\,H_P\to H_P\,,
\]
well defined up to phase.
Moreover, letting $\tau:H_0\otimes H_0\to H_0\otimes H_0$ be the switch map, we can fix the phase of $\beta$ by requiring that the composite
\begin{equation}\label{eq: phase-fixing}
\begin{split}
H_0\,\cong\, H_0\,\boxtimes^\mathsf{h}H_0 \,&\cong \,(H_0\otimes H_0)\boxtimes_{\cala(S_1\cup S_2)} H_P\\
&\xrightarrow{\tau\boxtimes\beta}
(H_0\otimes H_0)\boxtimes_{\cala(S_1\cup S_2)}H_P \,\cong\, H_0\,\boxtimes^\mathsf{h}H_0\,\cong\, H_0
\end{split}
\end{equation}
be the identity on $H_0$.

\begin{theorem}
The transformation
\[
\begin{split}
\beta_{H,K}:\,H\,\boxtimes^\mathsf{h}K\,&\cong \,(H\otimes K)\boxtimes_{\cala(S_1\cup S_2)} H_P\\
&\xrightarrow{\tau\boxtimes\beta}
(K\otimes H)\boxtimes_{\cala(S_1\cup S_2)}H_P \,\cong\, K\,\boxtimes^\mathsf{h}H
\end{split}
\]
equips the category $(\Rep(\cala),\boxtimes^\mathsf{h})$ with the structure of a braided tensor category.
\end{theorem}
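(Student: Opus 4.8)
The plan is to verify the two defining properties of a braiding on the tensor category $(\Rep(\cala),\boxtimes^\mathsf{h})$, whose associator $\alpha$ is the honest (phase-free) associativity constraint for Connes fusion over $\cala(S^1_\vdash)$: that each $\beta_{H,K}$ is a natural unitary isomorphism $H\boxtimes^\mathsf{h}K\to K\boxtimes^\mathsf{h}H$, and that the two hexagon identities hold on the nose. Naturality is immediate: the structural isomorphisms $H\boxtimes^\mathsf{h}K\cong (H\otimes K)\boxtimes_{\cala(S_1\cup S_2)}H_P$ are natural in $H$ and $K$, the switch $\tau$ is natural, and $\beta$ is a fixed operator on $H_P$; since $\tau$, $\beta$, and these structural maps are all unitary, $\beta_{H,K}$ is a unitary, hence invertible, natural transformation.

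For the hexagons I would first separate the combinatorial and geometric contributions. The operator $\beta_{H,K}$ is the composite of the permutation $\tau$ of the input Hilbert spaces with the geometric operator $\beta$ on $H_P$. The $\tau$'s by themselves satisfy the hexagon identities trivially, being the symmetric braiding of $(\mathsf{Hilb},\otimes)$. Thus, writing $\beta_{H,K}$ through the structural isomorphisms and using the coherence of $\tau$ and the functoriality of fusion, each hexagon for $\beta_{H,K}$ (for all $H,K,L$) reduces to a single identity among unitaries built from the fixed operator $\beta$ and the associator $\alpha$ acting on the iterated horizontal fusions $H_P\,\mbox{${}_3\boxtimes_1$}\,H_P$ and $H_P\,\mbox{${}_2\boxtimes_3$}\,H_P$.

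To establish this operator-level hexagon I would realize both sides geometrically. By Proposition \ref{prop: al = al} the associator $\alpha$ agrees up to phase with the geometric associator $\alpha_{geo}=V(\underline\alpha)$, and by construction $\beta=V(\underbeta)$ is the image under $V$ of the braiding homeomorphism $\underbeta$ of $P$. Hence both sides of the hexagon are, up to phase, images under $V$ of homeomorphisms of the four-holed sphere $P\,\mbox{${}_3\cup_1$}\,P$: one side pulls the first boundary leg past the fused $(B\otimes C)$-leg in a single move, the other pulls it past $B$ and then past $C$ with a reassociation in between. These two homeomorphisms are isotopic rel boundary --- this is the evident surface-braid relation on the four-holed sphere --- so by the isotopy-invariance clause of Theorem \ref{Mthm: conformal blocks} their images under $V$ coincide up to a single phase $\lambda\in\U(1)$. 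This yields each hexagon up to the scalar $\lambda$.

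The main obstacle, and the final step, is to upgrade ``up to phase'' to exact equality, since a braided tensor category requires the hexagons as genuine identities of morphisms. Here I would use the normalization \eqref{eq: phase-fixing}, which pins down $\beta$ itself (not merely up to phase) by prescribing its effect after fusing with vacuum sectors. Fusing the operator hexagon of the previous step with copies of $H_0$ along all three input circles turns it into an identity of endomorphisms of $H_0\boxtimes^\mathsf{h}H_0\boxtimes^\mathsf{h}H_0\cong H_0$; evaluating each braiding that occurs via \eqref{eq: phase-fixing} and each associator via the unit constraints, and invoking the phase-free monoidal coherence of $\boxtimes^\mathsf{h}$, should show that both sides equal the identity of $H_0$. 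Since the two sides of the operator hexagon differ by the single scalar $\lambda$, and fusing with vacua composes both sides with the same unitaries, this detects $\lambda$ and forces $\lambda=1$. Carrying out this phase computation carefully, and checking that fusing with the vacua faithfully records the scalar rather than annihilating it, is the delicate point that I expect to demand the most attention.
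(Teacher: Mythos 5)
Your proposal is correct and follows essentially the same route as the paper's own proof: naturality is immediate; the hexagon is reduced to an operator identity on $H_P\,\mbox{${}_3\boxtimes_1$}\,H_P$ and $H_P\,\mbox{${}_2\boxtimes_3$}\,H_P$, which holds up to a single phase by realizing both sides as images under $V$ of isotopic homeomorphisms of the four-holed sphere (via Proposition \ref{prop: al = al} and the isotopy invariance in Theorem \ref{Mthm: conformal blocks}); and the phase is eliminated by fusing $H_0\otimes H_0\otimes H_0$ onto the diagram and invoking the normalization \eqref{eq: phase-fixing}. The delicate point you flag --- that fusing with vacua faithfully records the scalar discrepancy --- is precisely the (implicit) closing step of the paper's argument, and it holds because fusion is a linear functor carrying a unitary hexagon discrepancy $\lambda$ to the same scalar $\lambda$ on a nonzero Hilbert space.
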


\begin{proof}
We need to show that the isomorphisms $\beta_{H,K}$ are natural in $H$ and $K$, and that they satisfy the two hexagon axioms.
Naturality is obvious from the definition.
The two hexagon axioms are
\[
\tikzmath{
\node (1) at (-1.95,1.4) {$H\,\boxtimes^\mathsf{h} (K\,\boxtimes^\mathsf{h} L)$};
\node (2) at (1.95,1.4) {$(K\,\boxtimes^\mathsf{h} L)\,\boxtimes^\mathsf{h} H$};
\node (3) at (-3.7,0) {$(H\,\boxtimes^\mathsf{h} K)\,\boxtimes^\mathsf{h} L$};
\node (4) at (3.7,0) {$K\,\boxtimes^\mathsf{h} (L\,\boxtimes^\mathsf{h} H)$};
\node (5) at (-1.95,-1.4) {$(K\,\boxtimes^\mathsf{h} H)\,\boxtimes^\mathsf{h} L$};
\node (6) at (1.95,-1.4) {$K\,\boxtimes^\mathsf{h} (H\,\boxtimes^\mathsf{h} L)$};
\draw[->] (1)--node[above]{$\scriptstyle \beta_{H,K\boxtimes L}$}(2);\draw[->] (3)--node[left, pos=.6]{$\scriptstyle \alpha\,\,$}(1);\draw[->] (3)--node[left]{$\scriptstyle \beta_{H,K}\boxtimes L\,\,$}(5);\draw[->] (5)--node[above]{$\scriptstyle \alpha$}(6);\draw[->] (2)--node[right, pos=.4]{$\scriptstyle \,\,\alpha$}(4);\draw[->] (6)--node[right]{$\scriptstyle \,\,K\boxtimes\beta_{H,L}$}(4);
}
\]
and
\[
\tikzmath{
\node (1) at (-1.95,1.4) {$(H\,\boxtimes^\mathsf{h} K)\,\boxtimes^\mathsf{h} L$};
\node (2) at (1.95,1.4) {$L\,\boxtimes^\mathsf{h}(H\,\boxtimes^\mathsf{h} K)$};
\node (3) at (-3.7,0) {$H\,\boxtimes^\mathsf{h} (K\,\boxtimes^\mathsf{h} L)$};
\node (4) at (3.7,0) {$(L\,\boxtimes^\mathsf{h} H)\,\boxtimes^\mathsf{h} K$};
\node (5) at (-1.95,-1.4) {$H\,\boxtimes^\mathsf{h} (L\,\boxtimes^\mathsf{h} K)$};
\node (6) at (1.95,-1.4) {$(H\,\boxtimes^\mathsf{h} L)\,\boxtimes^\mathsf{h} K$};
\draw[->] (1)--node[above]{$\scriptstyle \beta_{H\boxtimes K, L}$}(2);\draw[->] (3)--node[left, pos=.6]{$\scriptstyle \alpha^{-1}\,$}(1);\draw[->] (3)--node[left]{$\scriptstyle H\boxtimes \beta_{K,L}\,$}(5);\draw[->] (5)--node[above]{$\scriptstyle \alpha^{-1}$}(6);\draw[->] (2)--node[right, pos=.4]{$\scriptstyle \,\,\alpha^{-1}$}(4);\draw[->] (6)--node[right]{$\scriptstyle \,\,\beta_{H,L}\boxtimes K$}(4);
}
\medskip\]
and we shall only treat the first one here.

The following diagram is commutative in the groupoid $\mathsf{2MAN}$
\[
\def\surfaceA{
\tikzmath[scale=.5]{
\fill[gray!60] (.9,1.2) circle (.6 and .18);
\filldraw[fill=gray!30]
(-1.5,-1.2) arc (-180:0:.6 and .18) arc (180:0:.3 and .18) arc (-180:0:.6 and .18) 
[rounded corners=3]-- ++(0,.3) -- ++(-.9,.6) [sharp corners]-- ++(0,.3) 
arc (180:0:.3 and .18) arc (-180:0:.6 and .18)
[rounded corners=3]-- ++(0,.3) -- ++(-.9,.6) [sharp corners]-- ++(0,.3) 
arc (0:-180:.6 and .18)
[rounded corners=3]-- ++(0,-.3) -- ++(-.9,-.6) [sharp corners]-- ++(0,-.3)
[rounded corners=3]-- ++(0,-.3) -- ++(-.9,-.6) [sharp corners]-- cycle;
\draw[dashed] (-.9,-1.2) +(0:.6 and .18) arc (0:180:.6 and .18);
\draw[dashed] (.9,-1.2) +(0:.6 and .18) arc (0:180:.6 and .18);
\draw[dashed] (0,0) +(0:.6 and .18) arc (0:180:.6 and .18);
\draw (0,0) +(0:.6 and .18) arc (0:-180:.6 and .18);
\draw[dashed] (1.8,0) +(0:.6 and .18) arc (0:180:.6 and .18);
\draw (.9,1.2) +(0:.6 and .18) arc (0:180:.6 and .18);
}}
\def\surfaceB{
\tikzmath[scale=.5]{
\pgftransformxscale{-1}
\fill[gray!60] (.9,1.2) circle (.6 and .18);
\filldraw[fill=gray!30]
(-1.5,-1.2) arc (-180:0:.6 and .18) arc (180:0:.3 and .18) arc (-180:0:.6 and .18) 
[rounded corners=3]-- ++(0,.3) -- ++(-.9,.6) [sharp corners]-- ++(0,.3) 
arc (180:0:.3 and .18) arc (-180:0:.6 and .18)
[rounded corners=3]-- ++(0,.3) -- ++(-.9,.6) [sharp corners]-- ++(0,.3) 
arc (0:-180:.6 and .18)
[rounded corners=3]-- ++(0,-.3) -- ++(-.9,-.6) [sharp corners]-- ++(0,-.3)
[rounded corners=3]-- ++(0,-.3) -- ++(-.9,-.6) [sharp corners]-- cycle;
\draw[dashed] (-.9,-1.2) +(0:.6 and .18) arc (0:180:.6 and .18);
\draw[dashed] (.9,-1.2) +(0:.6 and .18) arc (0:180:.6 and .18);
\draw[dashed] (0,0) +(0:.6 and .18) arc (0:180:.6 and .18);
\draw (0,0) +(0:.6 and .18) arc (0:-180:.6 and .18);
\draw[dashed] (1.8,0) +(0:.6 and .18) arc (0:180:.6 and .18);
\draw (.9,1.2) +(0:.6 and .18) arc (0:180:.6 and .18);
}}
\tikzmath{
\node (1) at (-1.7,1.4) {$\surfaceB$};
\node (2) at (1.7,1.4) {$\surfaceA$};
\node[inner sep=0] (3) at (-4,0) {$\surfaceA$};
\node[inner sep=0] (4) at (4,0) {$\surfaceB$};
\node (5) at (-1.7,-1.4) {$\surfaceA$};
\node (6) at (1.7,-1.4) {$\surfaceB$};
\draw[->] (1)--node[above]{$\scriptstyle \underset{\overset {\cup}{\scriptstyle \phantom{O^i}1\phantom{O^i}}}{\beta}$}(2);
\draw[->,shorten >=-3] (3)--node[left, pos=1.9]{$\scriptstyle \alpha\,\,\,\,$}(1);
\draw[->, shorten >=-7] (3)--node[left, pos=2.9, xshift=-2, yshift=-2]{$\scriptstyle \raisebox{-1mm}{$\scriptstyle\beta$}\,\cup \,\raisebox{.8mm}{$\scriptstyle 1$}\,\,\,\,\,$}(5);
\draw[->, shorten >=5, shorten <=5] (5)--node[above]{$\scriptstyle \alpha$}(6);
\draw[->,shorten <=-3] (2)--node[right, pos=-.9]{$\scriptstyle \,\,\,\,\alpha$}(4);
\draw[->,shorten <=-7] (6)--node[right, pos=-1.9, xshift=2, yshift=-2]{$\scriptstyle \,\,\,\,\,\raisebox{.8mm}{$\scriptstyle1$}\,\cup\, \raisebox{-1mm}{$\scriptstyle\beta$}$}(4);
\draw[line width=.25] (-.045,2) -- +(.1,0);
\draw[line width=.25] (3.32,-1.2) -- +(.1,0);
\draw[line width=.25] (-3.41,-1.2) -- +(.1,0);
}
\medskip
\] 
By Theorem \ref{Mthm: conformal blocks} and Proposition \ref{prop: al = al}, the corresponding diagram of Hilbert spaces 
\begin{equation}\label{eq: Hex of manifolds}
\tikzmath{
\node (1) at (-2.2,1.4) {$H_P\,\mbox{${}_2\boxtimes_3$}\, H_P$};
\node (2) at (1.6,1.5) {$H_P\,\mbox{${}_1\boxtimes_3$}\, H_P\stackrel\tau\cong H_P\,\mbox{${}_3\boxtimes_1$}\, H_P$};
\node (3) at (-3.7,0) {$H_P\,\mbox{${}_3\boxtimes_1$}\, H_P$};
\node (4) at (3.9,0) {$H_P\,\mbox{${}_2\boxtimes_3$}\, H_P$};
\node (5) at (-1.9,-1.4) {$H_P\,\mbox{${}_3\boxtimes_1$}\, H_P$};
\node (6) at (2.2,-1.4) {$H_P\,\mbox{${}_2\boxtimes_3$}\, H_P$};
\draw[->] (1)--node[above]{$\scriptstyle \beta\boxtimes 1$}(2.west|-1);
\draw[->] (3)--node[left, pos=.6]{$\scriptstyle \alpha\,\,\,\,$}(1);
\draw[->] (3)--node[left, pos=.6]{$\scriptstyle \beta\boxtimes 1\,\,\,\,\,$}(5);
\draw[->] (5)--node[above]{$\scriptstyle \alpha$}(6);
\draw[->] (2.south)+(1.2,0)--node[right, pos=.4]{$\scriptstyle \,\,\,\,\alpha$}(4);
\draw[->] (6)--node[right, pos=.4]{$\scriptstyle \,\,\,\,\,1\boxtimes \beta$}(4);
}
\end{equation}
therefore commutes up to phase.
If we fuse $H_0\otimes H_0\otimes H_0$ onto \eqref{eq: Hex of manifolds}, then the resulting diagram
\[
\tikzmath{
\node[scale=.8] (1) at (-2.95,1) {$(H_0\otimes H_0\otimes H_0)\boxtimes_{\cala(S^1\sqcup S^1\sqcup S^1)}(H_P\,\mbox{${}_2\boxtimes_3$}\, H_P)$};
\node[scale=.8] (2) at (2.95,1) {$(H_0\otimes H_0\otimes H_0)\boxtimes_{\cala(S^1\sqcup S^1\sqcup S^1)}(H_P\,\mbox{${}_3\boxtimes_1$}\, H_P)$};
\node[scale=.8] (3) at (-3.6,0) {$(H_0\otimes H_0\otimes H_0)\boxtimes_{\cala(S^1\sqcup S^1\sqcup S^1)}(H_P\,\mbox{${}_3\boxtimes_1$}\, H_P)$};
\node[scale=.8] (4) at (3.6,0) {$(H_0\otimes H_0\otimes H_0)\boxtimes_{\cala(S^1\sqcup S^1\sqcup S^1)}(H_P\,\mbox{${}_2\boxtimes_3$}\, H_P)$};
\node[scale=.8] (5) at (-2.95,-1) {$(H_0\otimes H_0\otimes H_0)\boxtimes_{\cala(S^1\sqcup S^1\sqcup S^1)}(H_P\,\mbox{${}_3\boxtimes_1$}\, H_P)$};
\node[scale=.8] (6) at (2.95,-1) {$(H_0\otimes H_0\otimes H_0)\boxtimes_{\cala(S^1\sqcup S^1\sqcup S^1)}(H_P\,\mbox{${}_2\boxtimes_3$}\, H_P)$};
\draw[->] (1)--(2);
\draw[->] (3.north)+(.1,0)--(1.south);
\draw[->] (3.south)+(.1,0)--(5.north);
\draw[->] (5)--(6);
\draw[<-] (4.north)+(-.1,0)--(2.south);
\draw[<-] (4.south)+(-.1,0)--(6.north);
}
\]
is simply
\[
\tikzmath{
\node (1) at (-2.05,1.4) {$H_0\,\boxtimes^\mathsf{h} (H_0\,\boxtimes^\mathsf{h} H_0)$};
\node (2) at (2.05,1.4) {$(H_0\,\boxtimes^\mathsf{h} H_0)\,\boxtimes^\mathsf{h} H_0$};
\node (3) at (-3.7,0) {$(H_0\,\boxtimes^\mathsf{h} H_0)\,\boxtimes^\mathsf{h} H_0$};
\node (4) at (3.7,0) {$H_0\,\boxtimes^\mathsf{h} (H_0\,\boxtimes^\mathsf{h} H_0)$};
\node (5) at (-2.05,-1.4) {$(H_0\,\boxtimes^\mathsf{h} H_0)\,\boxtimes^\mathsf{h} H_0$};
\node (6) at (2.05,-1.4) {$H_0\,\boxtimes^\mathsf{h} (H_0\,\boxtimes^\mathsf{h} H_0)$};
\draw[->] (1)--node[above]{$\scriptstyle \beta$}(2);\draw[->] (3)--node[left, pos=.6]{$\scriptstyle \alpha\,\,$}(1);\draw[->] (3)--node[left, pos=.6]{$\scriptstyle \beta\boxtimes 1\,\,$}(5);\draw[->] (5)--node[above]{$\scriptstyle \alpha$}(6);\draw[->] (2)--node[right, pos=.4]{$\scriptstyle \,\,\alpha$}(4);\draw[->] (6)--node[right, pos=.4]{$\scriptstyle \,\,1\boxtimes\beta$}(4);
}
\]
and it follows from \eqref{eq: phase-fixing} that it is commutative, not just up to phase.
The diagram \eqref{eq: Hex of manifolds} was therefore also commutative, not just up to phase.
\end{proof}

It would be interesting to compare the braiding $\beta$ defined above with the one introduced in \cite{Gabbiani-Froehlich(OperatorAlg-CFT)}.
We conjecture that those two braidings are equal.

\subsection{Modularity of the category of sectors}
Based on the technology that we developed in the previous sections, we will now present a new proof of a famous result of Kawahigashi--Longo--M\"uger 
\cite[Cor.~37]{Kawahigashi-Longo-Mueger(2001multi-interval)} about the modularity of the representation category of conformal nets.

Recall that an object $T$ in a braided tensor category is called \emph{transparent} if the equality
\[
\beta_{X,T}=\beta_{T,X}^{-1}:\,X\boxtimes T\to T\boxtimes X
\]
holds for every object $X$ in the category.
By M\"uger's theorem (first proven by Rehren in \cite{Rehren(Braid-group-statistics-and-their-superselection-rules)} in the language of low-dimensional algebraic quantum field theory
and then by M\"uger in an abstract categorical setup \cite[Cor.~7.11]{Mueger(Subfactors-to-categories-II)}---see also \cite{Beliakova-Blanchet(mod-cats-types-BCD)} for a proof that does not require unitarity),
one of the equivalent definitions of a \emph{modular tensor category}
is a braided fusion category in which the only transparent objects are the multiples of the identity.

\begin{theorem}
Let $\cala$ be a conformal net with finite index.
Then the braided tensor category $\Rep(\cala)$ is modular.
\end{theorem}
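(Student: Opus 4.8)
The plan is to verify the criterion stated above directly: I will show that every simple transparent object $T\in\Rep(\cala)$ is isomorphic to the unit $H_0$. Since the transparent objects form a full subcategory closed under subobjects and direct sums, and $\Rep(\cala)$ is semisimple, this suffices. The guiding principle is that in our framework every mapping class acts by a \emph{projective unitary} (Theorem \ref{Mthm: conformal blocks}); so any matrix extracted from a mapping-class action is automatically invertible, and modularity will emerge by comparing one such manifestly invertible matrix with the algebraic $S$-matrix.

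First I would carry out the genus-one computation. Presenting the closed torus $\IT$ as the cylinder $S^1\times[0,1]$ with its two ends glued, factorization along the separating circle (Theorem \ref{maintheorem gluing}), together with the identification $V(S^1\times[0,1])\cong L^2\cala(S^1)\cong\bigoplus_{\lambda\in\Delta}H_\lambda\otimes H_{\bar\lambda}$, exhibits $V(\IT)$ as a finite-dimensional Hilbert space with a distinguished basis $\{e_\lambda\}_{\lambda\in\Delta}$; here $e_\lambda$ is the block in which the meridian used for the cut is coloured by $H_\lambda$. The mapping class $\sigma$ that interchanges the meridian with a longitude is orientation preserving, so by Theorem \ref{Mthm: conformal blocks} it induces a \emph{unitary} operator $S:=V(\sigma)$ on $V(\IT)$; in particular $S$ is invertible.

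The technical heart is to identify the matrix of $S$ in the basis $\{e_\lambda\}$ with the categorical $S$-matrix. Applying $\sigma$ converts the $\lambda$-coloured longitude into a $\lambda$-coloured meridian that links each $\mu$-coloured curve once; using the geometric description of the braiding from the previous subsection and its compatibility with the algebraic braiding (Proposition \ref{prop: al = al}), together with the gluing isomorphisms, I would show
\[
S\,e_\lambda \;=\; \tfrac{1}{\mathcal D}\sum_{\mu\in\Delta} S_{\lambda\mu}\,e_\mu,
\]
where $S_{\lambda\mu}$ is the Hopf-link invariant (the trace of the double braiding of $H_\lambda$ and $H_\mu$) and $\mathcal D$ is a normalizing scalar. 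This is the step in which interval factorization genuinely intervenes: it is what renders the geometric braiding computable and matches it to the braided structure on $\Rep(\cala)$.

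With this identification the conclusion is immediate. Transparency of $T$ says precisely that the double braiding of $T$ with every $H_\mu$ is trivial, so the corresponding Hopf-link coefficients take their unlinked values $S_{T\mu}=d_T\,d_\mu=d_T\,S_{H_0\mu}$. Hence $S\,e_T=\tfrac{d_T}{\mathcal D}\sum_\mu d_\mu\,e_\mu=d_T\,S\,e_{H_0}$, i.e.\ the $T$-th and $H_0$-th rows of $S$ are proportional. Since $S$ is invertible this forces $e_T=d_T\,e_{H_0}$, which is impossible for two distinct basis vectors; therefore $T\cong H_0$ (and $d_T=1$), so the only transparent objects are multiples of the unit and $\Rep(\cala)$ is modular. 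I expect the proportionality-to-Hopf-link identification of $S$ to be the main obstacle, since it requires tracking the braiding homeomorphisms through the gluing isomorphisms and through Proposition \ref{prop: al = al}; once $S$ is pinned down, its unitarity supplies nondegeneracy with no further work.
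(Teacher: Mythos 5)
Your strategy---extract the genus-one $S$-operator from the mapping class group action on $V(\IT)$, observe that it is unitary by Theorem \ref{Mthm: conformal blocks}, identify its matrix with the Hopf-link invariants, and derive a contradiction between transparency and invertibility---is the classical TQFT route, and it is genuinely different from the proof in the paper. But it contains a real gap, and you have located it yourself without filling it: the step ``$S\,e_\lambda=\tfrac{1}{\mathcal D}\sum_\mu S_{\lambda\mu}e_\mu$'' is not bookkeeping in this framework. It is essentially the content of the Bakalov--Kirillov reconstruction of the $S$-matrix from a modular functor, and the paper explicitly flags (Remark \ref{rmk: modular functor+}) that its construction yields only a \emph{projective} modular functor, with no control over the central cocycle; the Bakalov--Kirillov machinery (and with it ``yet another proof of the modularity theorem'') is there stated to be conditional on an unproven universality property of the relevant central extension. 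Concretely, to compute $V(\sigma)$ you must express the $S$-move through elementary moves on a pants/annulus decomposition of the torus and track it through the gluing isomorphisms of Theorem \ref{maintheorem gluing}, every one of which is defined only up to phase; the coherence results actually proven (associativity of gluing up to phase, and Proposition \ref{prop: al = al}---which, note, concerns the \emph{associator} $\alpha$, not the braiding; the braiding here is \emph{defined} geometrically as $\beta=V(\underline{\beta})$, so there is no separate ``algebraic braiding'' to compare it with) fall well short of the full system of compatibilities among braiding, twist, duality and the $S$-move that the identification requires. Nothing in the paper computes the action of any mapping class mixing meridian and longitude, so your ``technical heart'' is not derivable from the established results ``with no further work''; it is a project comparable to, or harder than, the theorem itself.

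The paper's actual proof avoids genus one entirely and runs on the pair of pants $P$, using precisely the interval factorization that distinguishes this framework. Transparency of an irreducible $T$ says $\beta_T:=1_T\boxtimes\beta=1_T\boxtimes\beta^{-1}$, and since $\underline{\beta}$ and $\underline{\beta}^{-1}$ can be realized by homeomorphisms supported away from, respectively, the strips $F$ and $E$ of $P$, one obtains two factorizations $\beta_T=\gamma_T\boxtimes_{\cala(N)}V(F)=\delta_T\boxtimes_{\cala(M)}V(E)$. Fusing with $\overline{V(E)}\otimes\overline{V(F)}$ and using dualizability of the bimodules $V(E)$, $V(F)$ (this is where finite index enters, via inclusions such as $L^2\cala(M)\subset V(E)\boxtimes_{\cala(M')}\overline{V(E)}$), the resulting map $B$ is forced to carry the subspace corresponding to $T\otimes H_0$ onto the one corresponding to $H_0\otimes T$, equivariantly for the boundary algebra; irreducibility of these representations then gives $T\cong H_0$ directly. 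So to salvage your approach you would have to first prove the modular-functor axioms with a controlled cocycle, whereas the local pair-of-pants argument needs no genus-one computation at all.
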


\begin{remark}\label{rmk: modular functor+}
As explained in Remark \ref{rmk: modular functor}, the system of spaces of conformal blocks we have constructed together with the factorization property has as a special case the data of the projective version of a modular functor.
If we knew that the central extension of the bordism 2-category of surfaces with parametrized boundary (objects=unions of circles, morphisms=cobordisms, 2-morphisms=homeomorphisms modulo isotopy rel boundary) described in~\cite[Def.~5.7.6]{Bakalov-Kirillov(Lect-tens-cat+mod-func)} is a \emph{universal} central extension,
then we could deduce that we have a modular functor in the sense of \cite[Def.~5.7.10]{Bakalov-Kirillov(Lect-tens-cat+mod-func)}.
The results of Bakalov--Kirillov \cite[Thm.~5.7.11]{Bakalov-Kirillov(Lect-tens-cat+mod-func)} would then provide yet another proof of the above modularity theorem.
\end{remark}

\begin{proof}
Given an irreducible transparent object $T\in\Rep(\cala)$, we need to show that $T$ is isomorphic to $H_0$.

We shall draw the standard pair of pants $P$ as in \eqref{eq: picture of beta}.
Consider the following submanifolds of $P$
\[
A:\,\,\tikzmath[scale=.8]{\filldraw[fill=gray!30] (-.4,0)+(360-111:.25) arc (360-111:111:.25) -- (95:1) arc (95:360-95:1) -- cycle; \draw[densely dotted](95:1) arc (95:-95:1)(-.4,0)+(-111:.25) arc (-111:111:.25)(.4,0) circle (.25);\draw[dotted] (-.4,0)+(69:.25) -- (85:1)(-.4,0)+(-69:.25) -- (-85:1);}
\qquad B:\,\,\tikzmath[scale=.8]{\filldraw[fill=gray!30] (-.4,0)+(-69:.25) arc (-69:69:.25) -- (85:1) arc (85:-85:1) -- cycle;\filldraw[fill=white] (.4,0) circle (.25); \draw[densely dotted](85:1) arc (85:360-85:1)(-.4,0)+(-69:.25) arc (-69:69-360:.25);\draw[dotted](-.4,0)+(111:.25) -- (95:1)(-.4,0)+(-111:.25) -- (-95:1);}
\qquad C:\,\,\tikzmath[scale=.8]{\pgftransformxscale{-1}\filldraw[fill=gray!30] (-.4,0)+(360-111:.25) arc (360-111:111:.25) -- (95:1) arc (95:360-95:1) -- cycle; \draw[densely dotted](95:1) arc (95:-95:1)(-.4,0)+(-111:.25) arc (-111:111:.25)(.4,0) circle (.25);\draw[dotted](-.4,0)+(69:.25) -- (85:1)(-.4,0)+(-69:.25) -- (-85:1);}
\qquad D:\,\,\tikzmath[scale=.8]{\pgftransformxscale{-1}\filldraw[fill=gray!30] (-.4,0)+(-69:.25) arc (-69:69:.25) -- (85:1) arc (85:-85:1) -- cycle;\filldraw[fill=white] (.4,0) circle (.25); \draw[densely dotted](85:1) arc (85:360-85:1)(-.4,0)+(-69:.25) arc (-69:69-360:.25);\draw[dotted](-.4,0)+(111:.25) -- (95:1)(-.4,0)+(-111:.25) -- (-95:1);}
\smallskip
\]
and let $i_A:A\hookrightarrow P,\ldots,i_D:D\hookrightarrow P$ be the inclusion maps.

Let us also consider the manifolds
\[
E:\quad\tikzmath[scale=.8]{\useasboundingbox (-.1,1) rectangle (.1,-1);\filldraw[fill=gray!30] (-.1,1) rectangle (.1,.25);}
\qquad\qquad\qquad F:\quad\tikzmath[scale=.8]{\useasboundingbox (-.1,1) rectangle (.1,-1);\filldraw[fill=gray!30] (-.1,-1) rectangle (.1,-.25);}
\medskip\]
along with embeddings $f,g:E\to P$ and $h,k:F\to P$
\[
f(E)\!=\tikzmath[scale=.8]{\draw[densely dotted] (0,0) circle (1) (-.4,0) circle (.25)(.4,0) circle (.25);\draw[fill=gray!30] (-.4,0) +(111:.25) arc (111:69:.25) -- (85:1) arc (85:95:1) -- cycle;
\draw[dotted] (-.4,0)+(-111:.25) -- (-95:1)(-.4,0)+(-69:.25) -- (-85:1);}
\,\,\,\quad g(E)\!=\tikzmath[scale=.8]{\draw[densely dotted] (0,0) circle (1) (-.4,0) circle (.25)(.4,0) circle (.25);\pgftransformxscale{-1}\draw[fill=gray!30] (-.4,0) +(111:.25) arc (111:69:.25) -- (85:1) arc (85:95:1) -- cycle;\draw[dotted] (-.4,0)+(-111:.25) -- (-95:1)(-.4,0)+(-69:.25) -- (-85:1);}
\,\,\,\quad h(F)\!=\tikzmath[scale=.8]{\draw[densely dotted] (0,0) circle (1) (-.4,0) circle (.25)(.4,0) circle (.25);\pgftransformyscale{-1}\draw[fill=gray!30] (-.4,0) +(111:.25) arc (111:69:.25) -- (85:1) arc (85:95:1) -- cycle;\draw[dotted] (-.4,0)+(-111:.25) -- (-95:1)(-.4,0)+(-69:.25) -- (-85:1);}
\,\,\,\quad k(F)\!=\tikzmath[scale=.8]{\draw[densely dotted] (0,0) circle (1) (-.4,0) circle (.25)(.4,0) circle (.25);\pgftransformyscale{-1}\pgftransformxscale{-1}\draw[fill=gray!30] (-.4,0) +(111:.25) arc (111:69:.25) -- (85:1) arc (85:95:1) -- cycle;\draw[dotted] (-.4,0)+(-111:.25) -- (-95:1)(-.4,0)+(-69:.25) -- (-85:1);}
\]
whose images in $P$ are such that
\[
A\cup f(F)\cup B \cup h(E)=P\qquad \text{and}\qquad C\cup g(E)\cup D \cup k(F)=P.
\]
Recall from Theorem \ref{Mthm: conformal blocks} that $V(A),\ldots, V(F)$ are only well defined up to canonical-up-to-phase unitary isomorphism.
Let us fix, by means of some arbitrary choice, Hilbert spaces $V(A),\ldots,V(F)$ within their ``up-to-phase isomorphism class''.

Let us write $V\Big(
\tikzmath[scale=.35]{\filldraw[fill=gray!30] (0,0) circle (1); 
\filldraw[fill=white] (-.4,0) circle (.25)(.4,0) circle (.25);}\Big)$
for the Hilbert space $H_P$ defined in \eqref{eq: def: H_P}.
By Theorem~\ref{maintheorem gluing}, there are unitary isomorphisms
\[
\begin{split}
\kappa_l:\,\,\tikzmath{\node (a) at (0,0) {$V(A)\boxtimes V(F)\boxtimes V(B) \boxtimes V(E)\,\boxtimes$};\def\dd{.4}\def\ll{.25}\def\rr{.25}
\draw[dashed, rounded corners = 5] (a.east) -- ++(\rr,0) -- ++(0,-\dd) -- ($(a.west) + (-\ll,-\dd)$) -- +(0,\dd) -- (a.west);} 
\,\rightarrow\, V\bigg(\tikzmath[scale=.5]{\filldraw[fill=gray!30] (0,0) circle (1); \filldraw[fill=white] (-.4,0) circle (.25)(.4,0) circle (.25);}\bigg) 
\\\kappa_r:\,\,\tikzmath{\node (a) at (0,0) {$V(C)\boxtimes V(E)\boxtimes V(D) \boxtimes V(F)\,\boxtimes$};\def\dd{.4}\def\ll{.25}\def\rr{.25}
\draw[dashed, rounded corners = 5] (a.east) -- ++(\rr,0) -- ++(0,-\dd) -- ($(a.west) + (-\ll,-\dd)$) -- +(0,\dd) -- (a.west);} 
\,\rightarrow\, V\bigg(\tikzmath[scale=.5]{\filldraw[fill=gray!30] (0,0) circle (1); \filldraw[fill=white] (-.4,0) circle (.25)(.4,0) circle (.25);}\bigg)
\end{split}\medskip
\]
that correspond to the homeomorphisms
\[
\begin{split}   i_A\cup h\cup i_B \cup f\,:&\,\,\,\,\,A\cup F\cup B \cup E\,\to\, P\\
i_C\cup g\cup i_D \cup k\,:&\,\,\,\,\,C\cup E\cup D \cup F\,\to\, P,   \end{split}
\]
(the algebras over which the fusion product are taken are omitted from the notation), and that are well defined up to phase.
Fix choices of $\kappa_l$ and $\kappa_r$ within their equivalence classes.

Recall the definition of $\underbeta:P\to P$ from \eqref{eq: picture of beta}, and let
\[
\undergamma:\tikzmath[scale=.8]{\filldraw[fill=gray!30] (0,0) circle (1); 
\filldraw[fill=white] (-.4,0) circle (.25)(.4,0) circle (.25);
\pgftransformxscale{-1}\pgftransformyscale{-1}
\fill[white] (95:1) -- ($(-.4,0)+(111:.25)$) -- +(0,-.05) -- +(.175,-.05) -- ($(-.4,0)+(69:.25)$) -- (85:1) -- +(0,.05) -- +(-.175,.05);
\draw[line cap=round] (-.4,0) +(111:.25) -- (95:1) (-.4,0) +(69:.25) -- (85:1);}
\,\to\,%
\tikzmath[scale=.8]{\filldraw[fill=gray!30] (0,0) circle (1); 
\filldraw[fill=white] (-.4,0) circle (.25)(.4,0) circle (.25);
\pgftransformyscale{-1}
\fill[white] (95:1) -- ($(-.4,0)+(111:.25)$) -- +(0,-.05) -- +(.175,-.05) -- ($(-.4,0)+(69:.25)$) -- (85:1) -- +(0,.05) -- +(-.175,.05);
\draw[line cap=round] (-.4,0) +(111:.25) -- (95:1) (-.4,0) +(69:.25) -- (85:1);}
\quad\qquad%
\underdelta:
\tikzmath[scale=.8]{\filldraw[fill=gray!30] (0,0) circle (1); 
\filldraw[fill=white] (-.4,0) circle (.25)(.4,0) circle (.25);
\pgftransformxscale{-1}
\fill[white] (95:1) -- ($(-.4,0)+(111:.25)$) -- +(0,-.05) -- +(.175,-.05) -- ($(-.4,0)+(69:.25)$) -- (85:1) -- +(0,.05) -- +(-.175,.05);
\draw[line cap=round] (-.4,0) +(111:.25) -- (95:1) (-.4,0) +(69:.25) -- (85:1);}
\,\to\,%
\tikzmath[scale=.8]{\filldraw[fill=gray!30] (0,0) circle (1); 
\filldraw[fill=white] (-.4,0) circle (.25)(.4,0) circle (.25);
\fill[white] (95:1) -- ($(-.4,0)+(111:.25)$) -- +(0,-.05) -- +(.175,-.05) -- ($(-.4,0)+(69:.25)$) -- (85:1) -- +(0,.05) -- +(-.175,.05);
\draw[line cap=round] (-.4,0) +(111:.25) -- (95:1) (-.4,0) +(69:.25) -- (85:1);}
\medskip\]
be homeomorphisms such that
$\undergamma\!\cup (h\circ k^{-1}) = \,\underbeta\,$ and $\underdelta\!\cup (f\circ g^{-1}) = \,\underbeta\,{}^{-1}$ in the groupoid $\mathsf{2MAN}$ (assume that $f$, $g$, $h$, $k$ are chosen so that this is possible).
Let then
\begin{equation}\label{eq: gam en del}
\begin{split}
\gamma:V(C)\boxtimes V(E)\boxtimes V(D)&\,\to V(B)\boxtimes V(E)\boxtimes V(A)\\
\delta:V(D)\boxtimes V(F)\boxtimes V(C)&\,\to V(A)\boxtimes V(F)\boxtimes V(B)
\end{split}
\end{equation}
be the unique representatives $\gamma\in V(\!\undergamma\!)$, $\delta\in V(\!\underdelta\!)$ for which
\[
\tikzmath{
\node (a) at (0,0) {$\gamma \boxtimes 1_{V(F)}\,\boxtimes$};
\def\dd{.35}
\def\ll{.25}
\def\rr{.22}
\draw[dashed, rounded corners = 5] (a.east) -- ++(\rr,0) -- ++(0,-\dd) -- ($(a.west) + (-\ll,-\dd)$) -- +(0,\dd) -- (a.west);
}
\,=\kappa_l^{-1}\circ\beta\circ\kappa_r
\,\,\,\quad\text{and}\,\,\,\quad
\tikzmath{
\node (a) at (0,0) {$\delta \boxtimes 1_{V(E)}\,\boxtimes$};
\def\dd{.35}
\def\ll{.25}
\def\rr{.22}
\draw[dashed, rounded corners = 5] (a.east) -- ++(\rr,0) -- ++(0,-\dd) -- ($(a.west) + (-\ll,-\dd)$) -- +(0,\dd) -- (a.west);
}
\,=\kappa_l^{-1}\circ\beta^{-1}\circ\kappa_r.
\smallskip
\]
Let us agree that
$
V\Big(
\tikzmath[scale=.35]{\filldraw[fill=gray!30] (0,0) circle (1); 
\filldraw[fill=white] (-.4,0) circle (.25)(.4,0) circle (.25);
\pgftransformxscale{-1}\pgftransformyscale{-1}
\fill[white] (95:1) -- ($(-.4,0)+(111:.25)$) -- +(0,-.05) -- +(.175,-.05) -- ($(-.4,0)+(69:.25)$) -- (85:1) -- +(0,.05) -- +(-.175,.05);
\draw[line cap=round] (-.4,0) +(111:.25) -- (95:1) (-.4,0) +(69:.25) -- (85:1);}
\Big)
$, $
V\Big(
\tikzmath[scale=.35]{\filldraw[fill=gray!30] (0,0) circle (1); 
\filldraw[fill=white] (-.4,0) circle (.25)(.4,0) circle (.25);
\pgftransformyscale{-1}
\fill[white] (95:1) -- ($(-.4,0)+(111:.25)$) -- +(0,-.05) -- +(.175,-.05) -- ($(-.4,0)+(69:.25)$) -- (85:1) -- +(0,.05) -- +(-.175,.05);
\draw[line cap=round] (-.4,0) +(111:.25) -- (95:1) (-.4,0) +(69:.25) -- (85:1);}
\Big)
$, $
V\Big(
\tikzmath[scale=.35]{\filldraw[fill=gray!30] (0,0) circle (1); 
\filldraw[fill=white] (-.4,0) circle (.25)(.4,0) circle (.25);
\pgftransformxscale{-1}
\fill[white] (95:1) -- ($(-.4,0)+(111:.25)$) -- +(0,-.05) -- +(.175,-.05) -- ($(-.4,0)+(69:.25)$) -- (85:1) -- +(0,.05) -- +(-.175,.05);
\draw[line cap=round] (-.4,0) +(111:.25) -- (95:1) (-.4,0) +(69:.25) -- (85:1);}
\Big)
$, $
V\Big(
\tikzmath[scale=.35]{\filldraw[fill=gray!30] (0,0) circle (1); 
\filldraw[fill=white] (-.4,0) circle (.25)(.4,0) circle (.25);
\fill[white] (95:1) -- ($(-.4,0)+(111:.25)$) -- +(0,-.05) -- +(.175,-.05) -- ($(-.4,0)+(69:.25)$) -- (85:1) -- +(0,.05) -- +(-.175,.05);
\draw[line cap=round] (-.4,0) +(111:.25) -- (95:1) (-.4,0) +(69:.25) -- (85:1);}
\Big)
$
stand for the Hilbert spaces 
$V(C)\boxtimes V(E)\boxtimes V(D)$,
$V(B)\boxtimes V(E)\boxtimes V(A)$,
$V(D)\boxtimes V(F)\boxtimes V(C)$, and
$V(A)\boxtimes V(F)\boxtimes V(B)$, so we can rewrite \eqref{eq: gam en del} as
\[
\gamma:\,V\bigg(
\tikzmath[scale=.5]{\filldraw[fill=gray!30] (0,0) circle (1); 
\filldraw[fill=white] (-.4,0) circle (.25)(.4,0) circle (.25);
\pgftransformxscale{-1}\pgftransformyscale{-1}
\fill[white] (95:1) -- ($(-.4,0)+(111:.25)$) -- +(0,-.05) -- +(.175,-.05) -- ($(-.4,0)+(69:.25)$) -- (85:1) -- +(0,.05) -- +(-.175,.05);
\draw[line cap=round] (-.4,0) +(111:.25) -- (95:1) (-.4,0) +(69:.25) -- (85:1);}
\bigg)
\,\to\,
V\bigg(
\tikzmath[scale=.5]{\filldraw[fill=gray!30] (0,0) circle (1); 
\filldraw[fill=white] (-.4,0) circle (.25)(.4,0) circle (.25);
\pgftransformyscale{-1}
\fill[white] (95:1) -- ($(-.4,0)+(111:.25)$) -- +(0,-.05) -- +(.175,-.05) -- ($(-.4,0)+(69:.25)$) -- (85:1) -- +(0,.05) -- +(-.175,.05);
\draw[line cap=round] (-.4,0) +(111:.25) -- (95:1) (-.4,0) +(69:.25) -- (85:1);}
\bigg),\qquad
\delta:\,V\bigg(
\tikzmath[scale=.5]{\filldraw[fill=gray!30] (0,0) circle (1); 
\filldraw[fill=white] (-.4,0) circle (.25)(.4,0) circle (.25);
\pgftransformxscale{-1}
\fill[white] (95:1) -- ($(-.4,0)+(111:.25)$) -- +(0,-.05) -- +(.175,-.05) -- ($(-.4,0)+(69:.25)$) -- (85:1) -- +(0,.05) -- +(-.175,.05);
\draw[line cap=round] (-.4,0) +(111:.25) -- (95:1) (-.4,0) +(69:.25) -- (85:1);}
\bigg)
\,\to\,
V\bigg(
\tikzmath[scale=.5]{\filldraw[fill=gray!30] (0,0) circle (1); 
\filldraw[fill=white] (-.4,0) circle (.25)(.4,0) circle (.25);
\fill[white] (95:1) -- ($(-.4,0)+(111:.25)$) -- +(0,-.05) -- +(.175,-.05) -- ($(-.4,0)+(69:.25)$) -- (85:1) -- +(0,.05) -- +(-.175,.05);
\draw[line cap=round] (-.4,0) +(111:.25) -- (95:1) (-.4,0) +(69:.25) -- (85:1);}
\bigg).
\]

Let now $T\in\Rep(\cala)$ be irreducible and transparent.
By definition of transparent, the two natural transformations $\beta_{T,-}$ and $\beta_{-,T}^{-1}$ from $T\,\boxtimes^\mathsf{h}-$ to $-\,\boxtimes^\mathsf{h}T$
are equal to each other.
Equivalently, the two maps $1_T\boxtimes \beta$ and $1_T\boxtimes \beta^{-1}$ from $T\boxtimes_{\cala(S_1)} H_P$ to $T\boxtimes_{\cala(S_2)} H_P$ are equal to each other.
Let us define
\[
\begin{split}
\beta_T:=1_T\boxtimes \beta = 1_T\boxtimes \beta^{-1}:\,\,\,&\,\,\,
T\boxtimes_{\cala(S_1)} 
\!V\Big(\tikzmath[scale=.35]{\filldraw[fill=gray!30] (0,0) circle (1); 
\filldraw[fill=white] (-.4,0) circle (.25)(.4,0) circle (.25);}\Big)\to
T\boxtimes_{\cala(S_2)} 
\!V\Big(\tikzmath[scale=.35]{\filldraw[fill=gray!30] (0,0) circle (1); 
\filldraw[fill=white] (-.4,0) circle (.25)(.4,0) circle (.25);}\Big)
\\
\gamma_T:=1_T\boxtimes\gamma\quad:\quad&\,\,\,
T\boxtimes_{\cala(S_1)} 
\!V\Big(
\tikzmath[scale=.35]{\filldraw[fill=gray!30] (0,0) circle (1); 
\filldraw[fill=white] (-.4,0) circle (.25)(.4,0) circle (.25);
\pgftransformxscale{-1}\pgftransformyscale{-1}
\fill[white] (95:1) -- ($(-.4,0)+(111:.25)$) -- +(0,-.05) -- +(.175,-.05) -- ($(-.4,0)+(69:.25)$) -- (85:1) -- +(0,.05) -- +(-.175,.05);
\draw[line cap=round] (-.4,0) +(111:.25) -- (95:1) (-.4,0) +(69:.25) -- (85:1);}
\Big)
\to T\boxtimes_{\cala(S_2)}
\!V\Big(
\tikzmath[scale=.35]{\filldraw[fill=gray!30] (0,0) circle (1); 
\filldraw[fill=white] (-.4,0) circle (.25)(.4,0) circle (.25);
\pgftransformyscale{-1}
\fill[white] (95:1) -- ($(-.4,0)+(111:.25)$) -- +(0,-.05) -- +(.175,-.05) -- ($(-.4,0)+(69:.25)$) -- (85:1) -- +(0,.05) -- +(-.175,.05);
\draw[line cap=round] (-.4,0) +(111:.25) -- (95:1) (-.4,0) +(69:.25) -- (85:1);}
\Big)
\\
\delta_T:=1_T\boxtimes\delta\quad:\quad&\,\,\,
T\boxtimes_{\cala(S_1)}
\!V\Big(\tikzmath[scale=.35]{\filldraw[fill=gray!30] (0,0) circle (1); 
\filldraw[fill=white] (-.4,0) circle (.25)(.4,0) circle (.25);
\pgftransformxscale{-1}
\fill[white] (95:1) -- ($(-.4,0)+(111:.25)$) -- +(0,-.05) -- +(.175,-.05) -- ($(-.4,0)+(69:.25)$) -- (85:1) -- +(0,.05) -- +(-.175,.05);
\draw[line cap=round] (-.4,0) +(111:.25) -- (95:1) (-.4,0) +(69:.25) -- (85:1);}
\Big)
\to
T\boxtimes_{\cala(S_2)}
\!V\Big(\tikzmath[scale=.35]{\filldraw[fill=gray!30] (0,0) circle (1); 
\filldraw[fill=white] (-.4,0) circle (.25)(.4,0) circle (.25);
\fill[white] (95:1) -- ($(-.4,0)+(111:.25)$) -- +(0,-.05) -- +(.175,-.05) -- ($(-.4,0)+(69:.25)$) -- (85:1) -- +(0,.05) -- +(-.175,.05);
\draw[line cap=round] (-.4,0) +(111:.25) -- (95:1) (-.4,0) +(69:.25) -- (85:1);}
\Big).
\end{split}
\]
From now on, we will simplify our notation even further, and denote the maps $\beta_T$, $\gamma_T$, $\delta_T$ in the following way:\smallskip
\[
\beta_T\!:\,
\tikzmath[scale=.6]{\filldraw[fill=gray!30] (0,0) circle (1); 
\filldraw[fill=gray!40, dashed] (-.4,0) circle (.25);\filldraw[fill=white](.4,0) circle (.25);
\node[scale=.9] at (-.4,-.01) {$\scriptstyle T$};
\pgftransformxscale{-1}\pgftransformyscale{-1}
}\,\rightarrow\,
\tikzmath[scale=.6]{\filldraw[fill=gray!30] (0,0) circle (1); 
\filldraw[fill=white] (-.4,0) circle (.25);
\filldraw[fill=gray!40, dashed] (.4,0) circle (.25);
\node[scale=.9] at (.4,-.01) {$\scriptstyle T$};}\,,
\quad
\gamma_T\!:\,
\tikzmath[scale=.6]{\filldraw[fill=gray!30] (0,0) circle (1); 
\filldraw[fill=gray!40, dashed] (-.4,0) circle (.25);\filldraw[fill=white](.4,0) circle (.25);
\node[scale=.9] at (-.4,-.01) {$\scriptstyle T$};
\pgftransformxscale{-1}\pgftransformyscale{-1}
\fill[white] (95:1) -- ($(-.4,0)+(111:.25)$) -- +(0,-.05) -- +(.175,-.05) -- ($(-.4,0)+(69:.25)$) -- (85:1) -- +(0,.05) -- +(-.175,.05);
\draw[line cap=round] (-.4,0) +(111:.25) -- (95:1) (-.4,0) +(69:.25) -- (85:1);}
\,\rightarrow\,
\tikzmath[scale=.6]{\filldraw[fill=gray!30] (0,0) circle (1); 
\filldraw[fill=white] (-.4,0) circle (.25);
\filldraw[fill=gray!40, dashed] (.4,0) circle (.25);
\node[scale=.9] at (.4,-.01) {$\scriptstyle T$};
\pgftransformyscale{-1}
\fill[white] (95:1) -- ($(-.4,0)+(111:.25)$) -- +(0,-.05) -- +(.175,-.05) -- ($(-.4,0)+(69:.25)$) -- (85:1) -- +(0,.05) -- +(-.175,.05);
\draw[line cap=round] (-.4,0) +(111:.25) -- (95:1) (-.4,0) +(69:.25) -- (85:1);}\,,
\quad
\delta_T\!:\,
\tikzmath[scale=.6]{\filldraw[fill=gray!30] (0,0) circle (1); 
\filldraw[fill=gray!40, dashed] (-.4,0) circle (.25);\filldraw[fill=white](.4,0) circle (.25);
\node[scale=.9] at (-.4,-.01) {$\scriptstyle T$};
\pgftransformxscale{-1}
\fill[white] (95:1) -- ($(-.4,0)+(111:.25)$) -- +(0,-.05) -- +(.175,-.05) -- ($(-.4,0)+(69:.25)$) -- (85:1) -- +(0,.05) -- +(-.175,.05);
\draw[line cap=round] (-.4,0) +(111:.25) -- (95:1) (-.4,0) +(69:.25) -- (85:1);}
\,\rightarrow\,
\tikzmath[scale=.6]{\filldraw[fill=gray!30] (0,0) circle (1); 
\filldraw[fill=white] (-.4,0) circle (.25);
\filldraw[fill=gray!40, dashed] (.4,0) circle (.25);
\node[scale=.9] at (.4,-.01) {$\scriptstyle T$};
\fill[white] (95:1) -- ($(-.4,0)+(111:.25)$) -- +(0,-.05) -- +(.175,-.05) -- ($(-.4,0)+(69:.25)$) -- (85:1) -- +(0,.05) -- +(-.175,.05);
\draw[line cap=round] (-.4,0) +(111:.25) -- (95:1) (-.4,0) +(69:.25) -- (85:1);}\,.
\smallskip
\]

At this point, it becomes convenient to name some of the submanifolds of $\partial E$ and of $\partial F$:
\[
M:\quad\tikzmath[scale=.8]{
\useasboundingbox (-.1,1) rectangle (.1,-1);
\draw[line width=.7] (-.1,1) -- (-.1,.25)(.1,1) -- (.1,.25);
\draw[densely dotted] (-.1,1) -- (.1,1)(-.1,.25) -- (.1,.25);
\draw[->] (-.1,.58) -- (-.1,.57);\draw[->] (.1,.67) -- (.1,.68);}
\qquad
M':\quad\tikzmath[scale=.8]{
\useasboundingbox (-.1,1) rectangle (.1,-1);
\draw[densely dotted] (-.1,1) -- (-.1,.25)(.1,1) -- (.1,.25);
\draw[line width=.7] (-.1,1) -- (.1,1)(-.1,.25) -- (.1,.25);
\draw[->] (-.04,1) -- (-.05,1);\draw[->] (.04,.25) -- (.05,.25);}
\qquad\qquad
N:\quad\tikzmath[scale=.8]{
\useasboundingbox (-.1,1) rectangle (.1,-1);
\draw[line width=.7] (-.1,-1) -- (-.1,-.25)(.1,-1) -- (.1,-.25);
\draw[densely dotted] (-.1,-1) -- (.1,-1)(-.1,-.25) -- (.1,-.25);
\draw[->] (.1,-.58) -- (.1,-.57);\draw[->] (-.1,-.67) -- (-.1,-.68);}
\qquad
N':\quad\tikzmath[scale=.8]{
\useasboundingbox (-.1,1) rectangle (.1,-1);
\draw[densely dotted] (-.1,-1) -- (-.1,-.25)(.1,-1) -- (.1,-.25);
\draw[line width=.7] (-.1,-1) -- (.1,-1)(-.1,-.25) -- (.1,-.25);
\draw[->] (-.04,-.25) -- (-.05,-.25);\draw[->] (.04,-1) -- (.05,-1);}
\qquad
\medskip\]
The maps $\beta_T$, $\gamma_T$, $\delta_T$,  then satisfy the relations
\[
\gamma_T\boxtimes_{\cala(N)} {V(F)} \,=\, \delta_T\boxtimes_{\cala(M)} {V(E)} \,=\, \beta_T.
\]
Applying the functor $-\boxtimes_{\cala(M')\bar\otimes\cala(N')} (\overline{V(E)}\otimes \overline{V(F)})$ to the above equality,
we get an equation in the space of maps from\smallskip
\[
\begin{split}
\tikzmath[scale=.8]{\pgftransformyscale{.9}\pgftransformxscale{-1.8}
\useasboundingbox (-1,-1) rectangle (1,1.1);
\coordinate (a) at ($(-.4,0)+(111:.25)$);\coordinate (b) at ($(-.4,0)+(69:.25)$);\coordinate (c) at ($(-.4,0)+(-69:.25)$);\coordinate (d) at ($(-.4,0)+(-111:.25)$);
\coordinate (A) at (95:1);\coordinate (B) at (85:1);\coordinate (C) at (-85:1);\coordinate (D) at (-95:1);
\fill[fill=gray!30, even odd rule] (0,0) circle (1)(-.4,0) circle (.25);
\draw[line join=bevel] (-.4,0)+(360-111:.25) arc (360-111:111:.25) (95:1) arc (95:360-95:1);
\draw[line join=bevel] (-.4,0)+(-69:.25) arc (-69:69:.25) (85:1) arc (85:-85:1);
\draw[ultra thin] (-.4,0)+(111:.25) -- (95:1) (360-95:1) -- ($(-.4,0)+(-111:.25)$);
\draw[ultra thin] (-.4,0)+(69:.25) -- (85:1) (-85:1) -- ($(-.4,0)+(-69:.25)$);
\fill[fill=gray!40] (.4,0) circle (.25);\draw[dashed] (.4,0) circle (.25);\node at (.4,.01) {$\scriptstyle T$};
\draw[fill=gray!70] (B) to[in=0, out=90, looseness=.57] ($(b)+(.29,1.1)$) -- ($(a)+(.29,1.1)$) to[out=0, in=90, looseness=.57] (A) arc (95:85:1); 
\draw[fill=gray!30] (a) to[in=200, out=93, looseness=.47] ($(a)+(.29,1.1)$) -- ($(b)+(.29,1.1)$) to[out=200, in=93, looseness=.47] (b) arc (69:111:.25);
\draw[fill=gray!70] (d) to[in=180, out=90, looseness=.57] ($(D)+(-.29,1.1)$) -- ($(C)+(-.29,1.1)$) to[out=180, in=90, looseness=.57] (c) arc (-69:-111:.25); 
\draw[fill=gray!30] (C) to[in=-20, out=87, looseness=.47] ($(C)+(-.29,1.1)$) -- ($(D)+(-.29,1.1)$) to[out=-20, in=87, looseness=.47] (D) arc (-95:-85:1);
}
\,\,:=\,\,T\,\boxtimes_{\cala(S_1)}&
V\Big(\tikzmath[scale=.35]{\filldraw[fill=gray!30] (0,0) circle (1); 
\filldraw[fill=white] (-.4,0) circle (.25)(.4,0) circle (.25);}\Big)\boxtimes_{\cala(M')\bar\otimes\cala(N')} (\overline{V(E)}\otimes \overline{V(F)})\\
\cong\,\,\,
\tikzmath[scale=.6]{\filldraw[fill=gray!30] (0,0) circle (1); 
\filldraw[fill=gray!40, dashed] (-.4,0) circle (.25);\filldraw[fill=white](.4,0) circle (.25);
\node[scale=.9] at (-.4,-.01) {$\scriptstyle T$};
\pgftransformxscale{-1}
\fill[white] (95:1) -- ($(-.4,0)+(111:.25)$) -- +(0,-.05) -- +(.175,-.05) -- ($(-.4,0)+(69:.25)$) -- (85:1) -- +(0,.05) -- +(-.175,.05);
\draw[line cap=round] (-.4,0) +(111:.25) -- (95:1) (-.4,0) +(69:.25) -- (85:1);
\pgftransformyscale{-1}
\fill[white] (95:1) -- ($(-.4,0)+(111:.25)$) -- +(0,-.05) -- +(.175,-.05) -- ($(-.4,0)+(69:.25)$) -- (85:1) -- +(0,.05) -- +(-.175,.05);
\draw[line cap=round] (-.4,0) +(111:.25) -- (95:1) (-.4,0) +(69:.25) -- (85:1);}\,\,
\boxtimes_{\cala(M)} \Big(&{V(E)\boxtimes _{\cala(M')}\overline{V(E)}}\Big)
\boxtimes_{\cala(N)} \Big({V(F)\boxtimes _{\cala(N')}\overline{V(F)}}\Big)
\end{split}
\]
to\smallskip
\[
\begin{split}
\tikzmath[scale=.8]{\pgftransformyscale{.9}\pgftransformxscale{1.8}
\useasboundingbox (-1,-1) rectangle (1,1.1);
\coordinate (a) at ($(-.4,0)+(111:.25)$);\coordinate (b) at ($(-.4,0)+(69:.25)$);\coordinate (c) at ($(-.4,0)+(-69:.25)$);\coordinate (d) at ($(-.4,0)+(-111:.25)$);
\coordinate (A) at (95:1);\coordinate (B) at (85:1);\coordinate (C) at (-85:1);\coordinate (D) at (-95:1);
\fill[fill=gray!30, even odd rule] (0,0) circle (1)(-.4,0) circle (.25);
\draw[line join=bevel] (-.4,0)+(360-111:.25) arc (360-111:111:.25) (95:1) arc (95:360-95:1);
\draw[line join=bevel] (-.4,0)+(-69:.25) arc (-69:69:.25) (85:1) arc (85:-85:1);
\draw[ultra thin] (-.4,0)+(111:.25) -- (95:1) (360-95:1) -- ($(-.4,0)+(-111:.25)$);
\draw[ultra thin] (-.4,0)+(69:.25) -- (85:1) (-85:1) -- ($(-.4,0)+(-69:.25)$);
\fill[fill=gray!40] (.4,0) circle (.25);\draw[dashed] (.4,0) circle (.25);\node at (.4,.01) {$\scriptstyle T$};
\draw[fill=gray!70] (B) to[in=0, out=90, looseness=.57] ($(b)+(.29,1.1)$) -- ($(a)+(.29,1.1)$) to[out=0, in=90, looseness=.57] (A) arc (95:85:1); 
\draw[fill=gray!30] (a) to[in=200, out=93, looseness=.47] ($(a)+(.29,1.1)$) -- ($(b)+(.29,1.1)$) to[out=200, in=93, looseness=.47] (b) arc (69:111:.25);
\draw[fill=gray!70] (d) to[in=180, out=90, looseness=.57] ($(D)+(-.29,1.1)$) -- ($(C)+(-.29,1.1)$) to[out=180, in=90, looseness=.57] (c) arc (-69:-111:.25); 
\draw[fill=gray!30] (C) to[in=-20, out=87, looseness=.47] ($(C)+(-.29,1.1)$) -- ($(D)+(-.29,1.1)$) to[out=-20, in=87, looseness=.47] (D) arc (-95:-85:1);
}
\,\,:=\,\,T\boxtimes_{\cala(S_2)}&
V\Big(\tikzmath[scale=.35]{\filldraw[fill=gray!30] (0,0) circle (1); 
\filldraw[fill=white] (-.4,0) circle (.25)(.4,0) circle (.25);}\Big)\boxtimes_{\cala(M')\bar\otimes\cala(N')} (\overline{V(E)}\otimes \overline{V(F)})\\
\cong\,\,\,
\tikzmath[scale=.6]{\filldraw[fill=gray!30] (0,0) circle (1); 
\filldraw[fill=gray!40, dashed] (.4,0) circle (.25);\filldraw[fill=white](-.4,0) circle (.25);
\node[scale=.9] at (.4,-.01) {$\scriptstyle T$};
\fill[white] (95:1) -- ($(-.4,0)+(111:.25)$) -- +(0,-.05) -- +(.175,-.05) -- ($(-.4,0)+(69:.25)$) -- (85:1) -- +(0,.05) -- +(-.175,.05);
\draw[line cap=round] (-.4,0) +(111:.25) -- (95:1) (-.4,0) +(69:.25) -- (85:1);
\pgftransformyscale{-1}
\fill[white] (95:1) -- ($(-.4,0)+(111:.25)$) -- +(0,-.05) -- +(.175,-.05) -- ($(-.4,0)+(69:.25)$) -- (85:1) -- +(0,.05) -- +(-.175,.05);
\draw[line cap=round] (-.4,0) +(111:.25) -- (95:1) (-.4,0) +(69:.25) -- (85:1);}
\boxtimes_{\cala(M)} \Big(&{V(E)\boxtimes _{\cala(M')}\overline{V(E)}}\Big)
\boxtimes_{\cala(N)} \Big({V(F)\boxtimes _{\cala(N')}\overline{V(F)}}\Big).
\end{split}
\]
More specifically, we learn that the two maps
\[
\begin{split}
\Gamma:=\,\,&\Big(\gamma_T\boxtimes_{\cala(N)} {V(F)}\Big)\boxtimes _{\cala(M')\,\bar\otimes\,\cala(N')} {\overline{V(E)}\otimes\overline{V(F)}}
\\
&\,\,=\Big(\gamma_T\boxtimes_{\cala(M')}{\overline{V(E)}}\Big)\boxtimes_{\cala(N)} \Big({V(F)\boxtimes _{\cala(N')}\overline{V(F)}}\Big)
\end{split}
\]
and
\[
\begin{split}
\Delta:=\,\,&\Big(\delta_T\boxtimes_{\cala(M)} {V(E)}\Big)\boxtimes _{\cala(M')\,\bar\otimes\,\cala(N')} {\overline{V(E)}\otimes\overline{V(F)}}
\\
&\,\,=\Big(\delta_T\boxtimes_{\cala(N')}{\overline{V(F)}}\Big)\boxtimes_{\cala(N)} \Big({V(E)\boxtimes _{\cala(M')}\overline{V(E)}}\Big)
\end{split}
\]
are both equal to
\[
B:=\beta_T\boxtimes _{\cala(M')\,\bar\otimes\,\cala(N')} {\overline{V(E)}\otimes\overline{V(F)}}.
\medskip
\]

Summarizing, we have three equal maps (actually isomorphisms)\medskip
\begin{equation}\label{eq: X}
\Gamma=\Delta=B\,:\,\,\,\,\,
\tikzmath[scale=.8]{\pgftransformyscale{.9}\pgftransformxscale{-1.8}
\useasboundingbox (-1,-1) rectangle (1,1.1);
\coordinate (a) at ($(-.4,0)+(111:.25)$);\coordinate (b) at ($(-.4,0)+(69:.25)$);\coordinate (c) at ($(-.4,0)+(-69:.25)$);\coordinate (d) at ($(-.4,0)+(-111:.25)$);
\coordinate (A) at (95:1);\coordinate (B) at (85:1);\coordinate (C) at (-85:1);\coordinate (D) at (-95:1);
\fill[fill=gray!30, even odd rule] (0,0) circle (1)(-.4,0) circle (.25);
\draw[line join=bevel] (-.4,0)+(360-111:.25) arc (360-111:111:.25) (95:1) arc (95:360-95:1);
\draw[line join=bevel] (-.4,0)+(-69:.25) arc (-69:69:.25) (85:1) arc (85:-85:1);
\draw[ultra thin] (-.4,0)+(111:.25) -- (95:1) (360-95:1) -- ($(-.4,0)+(-111:.25)$);
\draw[ultra thin] (-.4,0)+(69:.25) -- (85:1) (-85:1) -- ($(-.4,0)+(-69:.25)$);
\fill[fill=gray!40] (.4,0) circle (.25);\draw[dashed] (.4,0) circle (.25);\node at (.4,.01) {$\scriptstyle T$};
\draw[fill=gray!70] (B) to[in=0, out=90, looseness=.57] ($(b)+(.29,1.1)$) -- ($(a)+(.29,1.1)$) to[out=0, in=90, looseness=.57] (A) arc (95:85:1); 
\draw[fill=gray!30] (a) to[in=200, out=93, looseness=.47] ($(a)+(.29,1.1)$) -- ($(b)+(.29,1.1)$) to[out=200, in=93, looseness=.47] (b) arc (69:111:.25);
\draw[fill=gray!70] (d) to[in=180, out=90, looseness=.57] ($(D)+(-.29,1.1)$) -- ($(C)+(-.29,1.1)$) to[out=180, in=90, looseness=.57] (c) arc (-69:-111:.25); 
\draw[fill=gray!30] (C) to[in=-20, out=87, looseness=.47] ($(C)+(-.29,1.1)$) -- ($(D)+(-.29,1.1)$) to[out=-20, in=87, looseness=.47] (D) arc (-95:-85:1);
}
\,\,\,\,\to\,\,\,\,
\tikzmath[scale=.8]{\pgftransformyscale{.9}\pgftransformxscale{1.8}
\useasboundingbox (-1,-1) rectangle (1,1.1);
\coordinate (a) at ($(-.4,0)+(111:.25)$);\coordinate (b) at ($(-.4,0)+(69:.25)$);\coordinate (c) at ($(-.4,0)+(-69:.25)$);\coordinate (d) at ($(-.4,0)+(-111:.25)$);
\coordinate (A) at (95:1);\coordinate (B) at (85:1);\coordinate (C) at (-85:1);\coordinate (D) at (-95:1);
\fill[fill=gray!30, even odd rule] (0,0) circle (1)(-.4,0) circle (.25);
\draw[line join=bevel] (-.4,0)+(360-111:.25) arc (360-111:111:.25) (95:1) arc (95:360-95:1);
\draw[line join=bevel] (-.4,0)+(-69:.25) arc (-69:69:.25) (85:1) arc (85:-85:1);
\draw[ultra thin] (-.4,0)+(111:.25) -- (95:1) (360-95:1) -- ($(-.4,0)+(-111:.25)$);
\draw[ultra thin] (-.4,0)+(69:.25) -- (85:1) (-85:1) -- ($(-.4,0)+(-69:.25)$);
\fill[fill=gray!40] (.4,0) circle (.25);\draw[dashed] (.4,0) circle (.25);\node at (.4,.01) {$\scriptstyle T$};
\draw[fill=gray!70] (B) to[in=0, out=90, looseness=.57] ($(b)+(.29,1.1)$) -- ($(a)+(.29,1.1)$) to[out=0, in=90, looseness=.57] (A) arc (95:85:1); 
\draw[fill=gray!30] (a) to[in=200, out=93, looseness=.47] ($(a)+(.29,1.1)$) -- ($(b)+(.29,1.1)$) to[out=200, in=93, looseness=.47] (b) arc (69:111:.25);
\draw[fill=gray!70] (d) to[in=180, out=90, looseness=.57] ($(D)+(-.29,1.1)$) -- ($(C)+(-.29,1.1)$) to[out=180, in=90, looseness=.57] (c) arc (-69:-111:.25); 
\draw[fill=gray!30] (C) to[in=-20, out=87, looseness=.47] ($(C)+(-.29,1.1)$) -- ($(D)+(-.29,1.1)$) to[out=-20, in=87, looseness=.47] (D) arc (-95:-85:1);
}\,\,,\smallskip
\end{equation}
equivariant for
$
\cala\big(\tikzmath[scale=.3]{\draw (0,0) circle (1); 
\filldraw[fill=white] (0,0) circle (.35);
\fill[white] (97:1) -- ($(0,0)+(111:.35)$) -- +(0,-.05) -- +(.25,-.05) -- ($(0,0)+(69:.35)$) -- (83:1) -- +(0,.05) -- +(-.25,.05);
\draw[line cap=round] (0,0) +(111:.35) -- (97:1) (0,0) +(69:.35) -- (83:1);
\pgftransformyscale{-1}
\fill[white] (97:1) -- ($(0,0)+(111:.35)$) -- +(0,-.05) -- +(.25,-.05) -- ($(0,0)+(69:.35)$) -- (83:1) -- +(0,.05) -- +(-.25,.05);
\draw[line cap=round] (0,0) +(111:.35) -- (97:1) (0,0) +(69:.35) -- (83:1);}
\big)
$
(where the actions of that algebra come from appropriate diffeomorphisms with the boundaries of the two $2$-manifolds in \eqref{eq: X}).

Since ${}_{\cala(M)}V(E)_{\cala(M')}$ and ${}_{\cala(N)}V(F)_{\cala(N')}$ are dualizable bimodules, we have inclusions
$L^2\cala(M)\subset
{V(E)\boxtimes _{\cala(M')}\overline{V(E)}}
$ and $L^2\cala(N)\subset
{V(F)\boxtimes _{\cala(N')}\overline{V(F)}}$.
It follows that there is a diagram of inclusions
\begin{equation}\label{eq: Y}
\tikzmath{
\node (Q) at (0,0) {$
\tikzmath[scale=.8]{\pgftransformyscale{.9}\pgftransformxscale{1.8}
\useasboundingbox (-1,-1) rectangle (1,1.1);
\coordinate (a) at ($(-.4,0)+(111:.25)$);\coordinate (b) at ($(-.4,0)+(69:.25)$);\coordinate (c) at ($(-.4,0)+(-69:.25)$);\coordinate (d) at ($(-.4,0)+(-111:.25)$);
\coordinate (A) at (95:1);\coordinate (B) at (85:1);\coordinate (C) at (-85:1);\coordinate (D) at (-95:1);
\fill[fill=gray!30, even odd rule] (0,0) circle (1)(-.4,0) circle (.25);
\draw[line join=bevel] (-.4,0)+(360-111:.25) arc (360-111:111:.25) (95:1) arc (95:360-95:1);
\draw[line join=bevel] (-.4,0)+(-69:.25) arc (-69:69:.25) (85:1) arc (85:-85:1);
\fill[white] (95:1) -- ($(-.4,0)+(111:.25)$) -- +(0,-.05) -- +(.175,-.05) -- ($(-.4,0)+(69:.25)$) -- (85:1) -- +(0,.05) -- +(-.175,.05);
\fill[white] (-95:1) -- ($(-.4,0)+(-111:.25)$) -- +(0,.05) -- +(.175,.05) -- ($(-.4,0)+(-69:.25)$) -- (-85:1) -- +(0,-.05) -- +(-.175,-.05);
\draw (-.4,0)+(111:.25) -- (95:1);
\draw (360-95:1) -- ($(-.4,0)+(-111:.25)$);
\draw (-.4,0)+(69:.25) -- (85:1);
\draw (-85:1) -- ($(-.4,0)+(-69:.25)$);
\fill[fill=gray!40] (.4,0) circle (.25);\draw[dashed] (.4,0) circle (.25);\node at (.4,.01) {$\scriptstyle T$};
}$};
\node (R) at (0,-3) {$
\tikzmath[scale=.8]{\pgftransformyscale{.9}\pgftransformxscale{1.8}
\useasboundingbox (-1,-1) rectangle (1,1.1);
\coordinate (a) at ($(-.4,0)+(111:.25)$);\coordinate (b) at ($(-.4,0)+(69:.25)$);\coordinate (c) at ($(-.4,0)+(-69:.25)$);\coordinate (d) at ($(-.4,0)+(-111:.25)$);
\coordinate (A) at (95:1);\coordinate (B) at (85:1);\coordinate (C) at (-85:1);\coordinate (D) at (-95:1);
\fill[fill=gray!30, even odd rule] (0,0) circle (1)(-.4,0) circle (.25);
\fill[white] (95:1) -- ($(-.4,0)+(111:.25)$) -- +(0,-.05) -- +(.175,-.05) -- ($(-.4,0)+(69:.25)$) -- (85:1) -- +(0,.05) -- +(-.175,.05);
\draw[line join=bevel] (-.4,0)+(360-111:.25) arc (360-111:111:.25) (95:1) arc (95:360-95:1);
\draw[line join=bevel] (-.4,0)+(-69:.25) arc (-69:69:.25) (85:1) arc (85:-85:1);
\draw (-.4,0)+(111:.25) -- (95:1);
\draw[ultra thin] (360-95:1) -- ($(-.4,0)+(-111:.25)$);
\draw (-.4,0)+(69:.25) -- (85:1);
\draw[ultra thin] (-85:1) -- ($(-.4,0)+(-69:.25)$);
\fill[fill=gray!40] (.4,0) circle (.25);\draw[dashed] (.4,0) circle (.25);\node at (.4,.01) {$\scriptstyle T$};
\draw[fill=gray!70] (d) to[in=180, out=90, looseness=.57] ($(D)+(-.29,1.1)$) -- ($(C)+(-.29,1.1)$) to[out=180, in=90, looseness=.57] (c) arc (-69:-111:.25); 
\draw[fill=gray!30] (C) to[in=-20, out=87, looseness=.47] ($(C)+(-.29,1.1)$) -- ($(D)+(-.29,1.1)$) to[out=-20, in=87, looseness=.47] (D) arc (-95:-85:1);
}$};
\node (S) at (5,0) {$
\tikzmath[scale=.8]{\pgftransformyscale{.9}\pgftransformxscale{1.8}
\useasboundingbox (-1,-1) rectangle (1,1.1);
\coordinate (a) at ($(-.4,0)+(111:.25)$);\coordinate (b) at ($(-.4,0)+(69:.25)$);\coordinate (c) at ($(-.4,0)+(-69:.25)$);\coordinate (d) at ($(-.4,0)+(-111:.25)$);
\coordinate (A) at (95:1);\coordinate (B) at (85:1);\coordinate (C) at (-85:1);\coordinate (D) at (-95:1);
\fill[fill=gray!30, even odd rule] (0,0) circle (1)(-.4,0) circle (.25);
\draw[line join=bevel] (-.4,0)+(360-111:.25) arc (360-111:111:.25) (95:1) arc (95:360-95:1);
\draw[line join=bevel] (-.4,0)+(-69:.25) arc (-69:69:.25) (85:1) arc (85:-85:1);
\fill[white] (-95:1) -- ($(-.4,0)+(-111:.25)$) -- +(0,.05) -- +(.175,.05) -- ($(-.4,0)+(-69:.25)$) -- (-85:1) -- +(0,-.05) -- +(-.175,-.05);
\draw[ultra thin] (-.4,0)+(111:.25) -- (95:1);
\draw (360-95:1) -- ($(-.4,0)+(-111:.25)$);
\draw[ultra thin] (-.4,0)+(69:.25) -- (85:1);
\draw (-85:1) -- ($(-.4,0)+(-69:.25)$);
\fill[fill=gray!40] (.4,0) circle (.25);\draw[dashed] (.4,0) circle (.25);\node at (.4,.01) {$\scriptstyle T$};
\draw[fill=gray!70] (B) to[in=0, out=90, looseness=.57] ($(b)+(.29,1.1)$) -- ($(a)+(.29,1.1)$) to[out=0, in=90, looseness=.57] (A) arc (95:85:1); 
\draw[fill=gray!30] (a) to[in=200, out=93, looseness=.47] ($(a)+(.29,1.1)$) -- ($(b)+(.29,1.1)$) to[out=200, in=93, looseness=.47] (b) arc (69:111:.25);
}$};
\node (T) at (5,-3) {$
\tikzmath[scale=.8]{\pgftransformyscale{.9}\pgftransformxscale{1.8}
\useasboundingbox (-1,-1) rectangle (1,1.1);
\coordinate (a) at ($(-.4,0)+(111:.25)$);\coordinate (b) at ($(-.4,0)+(69:.25)$);\coordinate (c) at ($(-.4,0)+(-69:.25)$);\coordinate (d) at ($(-.4,0)+(-111:.25)$);
\coordinate (A) at (95:1);\coordinate (B) at (85:1);\coordinate (C) at (-85:1);\coordinate (D) at (-95:1);
\fill[fill=gray!30, even odd rule] (0,0) circle (1)(-.4,0) circle (.25);
\draw[line join=bevel] (-.4,0)+(360-111:.25) arc (360-111:111:.25) (95:1) arc (95:360-95:1);
\draw[line join=bevel] (-.4,0)+(-69:.25) arc (-69:69:.25) (85:1) arc (85:-85:1);
\draw[ultra thin] (-.4,0)+(111:.25) -- (95:1) (360-95:1) -- ($(-.4,0)+(-111:.25)$);
\draw[ultra thin] (-.4,0)+(69:.25) -- (85:1) (-85:1) -- ($(-.4,0)+(-69:.25)$);
\fill[fill=gray!40] (.4,0) circle (.25);\draw[dashed] (.4,0) circle (.25);\node at (.4,.01) {$\scriptstyle T$};
\draw[fill=gray!70] (B) to[in=0, out=90, looseness=.57] ($(b)+(.29,1.1)$) -- ($(a)+(.29,1.1)$) to[out=0, in=90, looseness=.57] (A) arc (95:85:1); 
\draw[fill=gray!30] (a) to[in=200, out=93, looseness=.47] ($(a)+(.29,1.1)$) -- ($(b)+(.29,1.1)$) to[out=200, in=93, looseness=.47] (b) arc (69:111:.25);
\draw[fill=gray!70] (d) to[in=180, out=90, looseness=.57] ($(D)+(-.29,1.1)$) -- ($(C)+(-.29,1.1)$) to[out=180, in=90, looseness=.57] (c) arc (-69:-111:.25); 
\draw[fill=gray!30] (C) to[in=-20, out=87, looseness=.47] ($(C)+(-.29,1.1)$) -- ($(D)+(-.29,1.1)$) to[out=-20, in=87, looseness=.47] (D) arc (-95:-85:1);
}$};
\draw[<-, shorten <=4] (R) -- ($(Q.south)+(0,-.25)$) arc (0:180:.1);
\draw[<-, shorten <=4] (S) -- ($(Q.east)+(.25,0)$) arc (-90:-270:.1);
\draw[<-, shorten <=4] (T) -- ($(R.east)+(.25,0)$) arc (-90:-270:.1);
\draw[<-, shorten <=7] (T) -- ($(S.south)+(0,-.25)$) arc (0:180:.1);
}
\bigskip
\end{equation}
and that
\begin{equation}\label{eq: final3}
\tikzmath[scale=.6]{\pgftransformyscale{.9}\pgftransformxscale{1.8}
\useasboundingbox (-1,-1) rectangle (1,1.1);
\coordinate (a) at ($(-.4,0)+(111:.25)$);\coordinate (b) at ($(-.4,0)+(69:.25)$);\coordinate (c) at ($(-.4,0)+(-69:.25)$);\coordinate (d) at ($(-.4,0)+(-111:.25)$);
\coordinate (A) at (95:1);\coordinate (B) at (85:1);\coordinate (C) at (-85:1);\coordinate (D) at (-95:1);
\fill[fill=gray!30, even odd rule] (0,0) circle (1)(-.4,0) circle (.25);
\fill[white] (95:1) -- ($(-.4,0)+(111:.25)$) -- +(0,-.05) -- +(.175,-.05) -- ($(-.4,0)+(69:.25)$) -- (85:1) -- +(0,.05) -- +(-.175,.05);
\draw[line join=bevel] (-.4,0)+(360-111:.25) arc (360-111:111:.25) (95:1) arc (95:360-95:1);
\draw[line join=bevel] (-.4,0)+(-69:.25) arc (-69:69:.25) (85:1) arc (85:-85:1);
\draw (-.4,0)+(111:.25) -- (95:1);
\draw[ultra thin] (360-95:1) -- ($(-.4,0)+(-111:.25)$);
\draw (-.4,0)+(69:.25) -- (85:1);
\draw[ultra thin] (-85:1) -- ($(-.4,0)+(-69:.25)$);
\fill[fill=gray!40] (.4,0) circle (.25);\draw[dashed] (.4,0) circle (.25);\node at (.4,.01) {$\scriptstyle T$};
\draw[fill=gray!70] (d) to[in=180, out=90, looseness=.57] ($(D)+(-.29,1.1)$) -- ($(C)+(-.29,1.1)$) to[out=180, in=90, looseness=.57] (c) arc (-69:-111:.25); 
\draw[fill=gray!30] (C) to[in=-20, out=87, looseness=.47] ($(C)+(-.29,1.1)$) -- ($(D)+(-.29,1.1)$) to[out=-20, in=87, looseness=.47] (D) arc (-95:-85:1);
}
\,\,\cap\,\,
\tikzmath[scale=.6]{\pgftransformyscale{.9}\pgftransformxscale{1.8}
\useasboundingbox (-1,-1) rectangle (1,1.1);
\coordinate (a) at ($(-.4,0)+(111:.25)$);\coordinate (b) at ($(-.4,0)+(69:.25)$);\coordinate (c) at ($(-.4,0)+(-69:.25)$);\coordinate (d) at ($(-.4,0)+(-111:.25)$);
\coordinate (A) at (95:1);\coordinate (B) at (85:1);\coordinate (C) at (-85:1);\coordinate (D) at (-95:1);
\fill[fill=gray!30, even odd rule] (0,0) circle (1)(-.4,0) circle (.25);
\draw[line join=bevel] (-.4,0)+(360-111:.25) arc (360-111:111:.25) (95:1) arc (95:360-95:1);
\draw[line join=bevel] (-.4,0)+(-69:.25) arc (-69:69:.25) (85:1) arc (85:-85:1);
\fill[white] (-95:1) -- ($(-.4,0)+(-111:.25)$) -- +(0,.05) -- +(.175,.05) -- ($(-.4,0)+(-69:.25)$) -- (-85:1) -- +(0,-.05) -- +(-.175,-.05);
\draw[ultra thin] (-.4,0)+(111:.25) -- (95:1);
\draw (360-95:1) -- ($(-.4,0)+(-111:.25)$);
\draw[ultra thin] (-.4,0)+(69:.25) -- (85:1);
\draw (-85:1) -- ($(-.4,0)+(-69:.25)$);
\fill[fill=gray!40] (.4,0) circle (.25);\draw[dashed] (.4,0) circle (.25);\node at (.4,.01) {$\scriptstyle T$};
\draw[fill=gray!70] (B) to[in=0, out=90, looseness=.57] ($(b)+(.29,1.1)$) -- ($(a)+(.29,1.1)$) to[out=0, in=90, looseness=.57] (A) arc (95:85:1); 
\draw[fill=gray!30] (a) to[in=200, out=93, looseness=.47] ($(a)+(.29,1.1)$) -- ($(b)+(.29,1.1)$) to[out=200, in=93, looseness=.47] (b) arc (69:111:.25);
}
\,\,=\,\,
\tikzmath[scale=.6]{\pgftransformyscale{.9}\pgftransformxscale{1.8}
\useasboundingbox (-1,-1) rectangle (1,1.1);
\coordinate (a) at ($(-.4,0)+(111:.25)$);\coordinate (b) at ($(-.4,0)+(69:.25)$);\coordinate (c) at ($(-.4,0)+(-69:.25)$);\coordinate (d) at ($(-.4,0)+(-111:.25)$);
\coordinate (A) at (95:1);\coordinate (B) at (85:1);\coordinate (C) at (-85:1);\coordinate (D) at (-95:1);
\fill[fill=gray!30, even odd rule] (0,0) circle (1)(-.4,0) circle (.25);
\draw[line join=bevel] (-.4,0)+(360-111:.25) arc (360-111:111:.25) (95:1) arc (95:360-95:1);
\draw[line join=bevel] (-.4,0)+(-69:.25) arc (-69:69:.25) (85:1) arc (85:-85:1);
\fill[white] (95:1) -- ($(-.4,0)+(111:.25)$) -- +(0,-.05) -- +(.175,-.05) -- ($(-.4,0)+(69:.25)$) -- (85:1) -- +(0,.05) -- +(-.175,.05);
\fill[white] (-95:1) -- ($(-.4,0)+(-111:.25)$) -- +(0,.05) -- +(.175,.05) -- ($(-.4,0)+(-69:.25)$) -- (-85:1) -- +(0,-.05) -- +(-.175,-.05);
\draw (-.4,0)+(111:.25) -- (95:1);
\draw (360-95:1) -- ($(-.4,0)+(-111:.25)$);
\draw (-.4,0)+(69:.25) -- (85:1);
\draw (-85:1) -- ($(-.4,0)+(-69:.25)$);
\fill[fill=gray!40] (.4,0) circle (.25);\draw[dashed] (.4,0) circle (.25);\node at (.4,.01) {$\scriptstyle T$};
}
\medskip
\end{equation}
Note that the inclusions \eqref{eq: Y} are all compatible with the actions of
$
\cala\big(\tikzmath[scale=.3]{\draw (0,0) circle (1); 
\filldraw[fill=white] (0,0) circle (.35);
\fill[white] (97:1) -- ($(0,0)+(111:.35)$) -- +(0,-.05) -- +(.25,-.05) -- ($(0,0)+(69:.35)$) -- (83:1) -- +(0,.05) -- +(-.25,.05);
\draw[line cap=round] (0,0) +(111:.35) -- (97:1) (0,0) +(69:.35) -- (83:1);
\pgftransformyscale{-1}
\fill[white] (97:1) -- ($(0,0)+(111:.35)$) -- +(0,-.05) -- +(.25,-.05) -- ($(0,0)+(69:.35)$) -- (83:1) -- +(0,.05) -- +(-.25,.05);
\draw[line cap=round] (0,0) +(111:.35) -- (97:1) (0,0) +(69:.35) -- (83:1);}
\big)
$
(where again the actions come from appropriate diffeomorphisms with the boundaries of the above $2$-manifolds).

Let us now look at the image of\,
\(
\tikzmath[scale=.6]{\pgftransformyscale{.9}\pgftransformxscale{-1.8}
\useasboundingbox (-1,-1) rectangle (1,1.1);
\coordinate (a) at ($(-.4,0)+(111:.25)$);\coordinate (b) at ($(-.4,0)+(69:.25)$);\coordinate (c) at ($(-.4,0)+(-69:.25)$);\coordinate (d) at ($(-.4,0)+(-111:.25)$);
\coordinate (A) at (95:1);\coordinate (B) at (85:1);\coordinate (C) at (-85:1);\coordinate (D) at (-95:1);
\fill[fill=gray!30, even odd rule] (0,0) circle (1)(-.4,0) circle (.25);
\draw[line join=bevel] (-.4,0)+(360-111:.25) arc (360-111:111:.25) (95:1) arc (95:360-95:1);
\draw[line join=bevel] (-.4,0)+(-69:.25) arc (-69:69:.25) (85:1) arc (85:-85:1);
\fill[white] (95:1) -- ($(-.4,0)+(111:.25)$) -- +(0,-.05) -- +(.175,-.05) -- ($(-.4,0)+(69:.25)$) -- (85:1) -- +(0,.05) -- +(-.175,.05);
\fill[white] (-95:1) -- ($(-.4,0)+(-111:.25)$) -- +(0,.05) -- +(.175,.05) -- ($(-.4,0)+(-69:.25)$) -- (-85:1) -- +(0,-.05) -- +(-.175,-.05);
\draw (-.4,0)+(111:.25) -- (95:1);
\draw (360-95:1) -- ($(-.4,0)+(-111:.25)$);
\draw (-.4,0)+(69:.25) -- (85:1);
\draw (-85:1) -- ($(-.4,0)+(-69:.25)$);
\fill[fill=gray!40] (.4,0) circle (.25);\draw[dashed] (.4,0) circle (.25);\node at (.4,.01) {$\scriptstyle T$};
}
\,\subset\,
\tikzmath[scale=.6]{\pgftransformyscale{.9}\pgftransformxscale{-1.8}
\useasboundingbox (-1,-1) rectangle (1,1.1);
\coordinate (a) at ($(-.4,0)+(111:.25)$);\coordinate (b) at ($(-.4,0)+(69:.25)$);\coordinate (c) at ($(-.4,0)+(-69:.25)$);\coordinate (d) at ($(-.4,0)+(-111:.25)$);
\coordinate (A) at (95:1);\coordinate (B) at (85:1);\coordinate (C) at (-85:1);\coordinate (D) at (-95:1);
\fill[fill=gray!30, even odd rule] (0,0) circle (1)(-.4,0) circle (.25);
\draw[line join=bevel] (-.4,0)+(360-111:.25) arc (360-111:111:.25) (95:1) arc (95:360-95:1);
\draw[line join=bevel] (-.4,0)+(-69:.25) arc (-69:69:.25) (85:1) arc (85:-85:1);
\draw[ultra thin] (-.4,0)+(111:.25) -- (95:1) (360-95:1) -- ($(-.4,0)+(-111:.25)$);
\draw[ultra thin] (-.4,0)+(69:.25) -- (85:1) (-85:1) -- ($(-.4,0)+(-69:.25)$);
\fill[fill=gray!40] (.4,0) circle (.25);\draw[dashed] (.4,0) circle (.25);\node at (.4,.01) {$\scriptstyle T$};
\draw[fill=gray!70] (B) to[in=0, out=90, looseness=.57] ($(b)+(.29,1.1)$) -- ($(a)+(.29,1.1)$) to[out=0, in=90, looseness=.57] (A) arc (95:85:1); 
\draw[fill=gray!30] (a) to[in=200, out=93, looseness=.47] ($(a)+(.29,1.1)$) -- ($(b)+(.29,1.1)$) to[out=200, in=93, looseness=.47] (b) arc (69:111:.25);
\draw[fill=gray!70] (d) to[in=180, out=90, looseness=.57] ($(D)+(-.29,1.1)$) -- ($(C)+(-.29,1.1)$) to[out=180, in=90, looseness=.57] (c) arc (-69:-111:.25); 
\draw[fill=gray!30] (C) to[in=-20, out=87, looseness=.47] ($(C)+(-.29,1.1)$) -- ($(D)+(-.29,1.1)$) to[out=-20, in=87, looseness=.47] (D) arc (-95:-85:1);
}
\)
\,under the map~$B$.
Since $B=\Gamma$ is the result of applying $-\boxtimes_{\cala(N)} ({V(F)\boxtimes _{\cala(N')}\overline{V(F)}})$ to some map, it follows that
\begin{equation}\label{eq: final1}
B\bigg(
\tikzmath[scale=.6]{\pgftransformyscale{.9}\pgftransformxscale{-1.8}
\useasboundingbox (-1,-1) rectangle (1,1);
\coordinate (a) at ($(-.4,0)+(111:.25)$);\coordinate (b) at ($(-.4,0)+(69:.25)$);\coordinate (c) at ($(-.4,0)+(-69:.25)$);\coordinate (d) at ($(-.4,0)+(-111:.25)$);
\coordinate (A) at (95:1);\coordinate (B) at (85:1);\coordinate (C) at (-85:1);\coordinate (D) at (-95:1);
\fill[fill=gray!30, even odd rule] (0,0) circle (1)(-.4,0) circle (.25);
\draw[line join=bevel] (-.4,0)+(360-111:.25) arc (360-111:111:.25) (95:1) arc (95:360-95:1);
\draw[line join=bevel] (-.4,0)+(-69:.25) arc (-69:69:.25) (85:1) arc (85:-85:1);
\fill[white] (95:1) -- ($(-.4,0)+(111:.25)$) -- +(0,-.05) -- +(.175,-.05) -- ($(-.4,0)+(69:.25)$) -- (85:1) -- +(0,.05) -- +(-.175,.05);
\fill[white] (-95:1) -- ($(-.4,0)+(-111:.25)$) -- +(0,.05) -- +(.175,.05) -- ($(-.4,0)+(-69:.25)$) -- (-85:1) -- +(0,-.05) -- +(-.175,-.05);
\draw (-.4,0)+(111:.25) -- (95:1);
\draw (360-95:1) -- ($(-.4,0)+(-111:.25)$);
\draw (-.4,0)+(69:.25) -- (85:1);
\draw (-85:1) -- ($(-.4,0)+(-69:.25)$);
\fill[fill=gray!40] (.4,0) circle (.25);\draw[dashed] (.4,0) circle (.25);\node at (.4,.01) {$\scriptstyle T$};
}
\bigg)
\,\,\subset\,\,
B\bigg(\tikzmath[scale=.6]{\pgftransformyscale{.9}\pgftransformxscale{-1.8}
\useasboundingbox (-1,-1) rectangle (1,1.2);
\coordinate (a) at ($(-.4,0)+(111:.25)$);\coordinate (b) at ($(-.4,0)+(69:.25)$);\coordinate (c) at ($(-.4,0)+(-69:.25)$);\coordinate (d) at ($(-.4,0)+(-111:.25)$);
\coordinate (A) at (95:1);\coordinate (B) at (85:1);\coordinate (C) at (-85:1);\coordinate (D) at (-95:1);
\fill[fill=gray!30, even odd rule] (0,0) circle (1)(-.4,0) circle (.25);
\draw[line join=bevel] (-.4,0)+(360-111:.25) arc (360-111:111:.25) (95:1) arc (95:360-95:1);
\draw[line join=bevel] (-.4,0)+(-69:.25) arc (-69:69:.25) (85:1) arc (85:-85:1);
\fill[white] (-95:1) -- ($(-.4,0)+(-111:.25)$) -- +(0,.05) -- +(.175,.05) -- ($(-.4,0)+(-69:.25)$) -- (-85:1) -- +(0,-.05) -- +(-.175,-.05);
\draw[ultra thin] (-.4,0)+(111:.25) -- (95:1);
\draw (360-95:1) -- ($(-.4,0)+(-111:.25)$);
\draw[ultra thin] (-.4,0)+(69:.25) -- (85:1);
\draw (-85:1) -- ($(-.4,0)+(-69:.25)$);
\fill[fill=gray!40] (.4,0) circle (.25);\draw[dashed] (.4,0) circle (.25);\node at (.4,.01) {$\scriptstyle T$};
\draw[fill=gray!70] (B) to[in=0, out=90, looseness=.57] ($(b)+(.29,1.1)$) -- ($(a)+(.29,1.1)$) to[out=0, in=90, looseness=.57] (A) arc (95:85:1); 
\draw[fill=gray!30] (a) to[in=200, out=93, looseness=.47] ($(a)+(.29,1.1)$) -- ($(b)+(.29,1.1)$) to[out=200, in=93, looseness=.47] (b) arc (69:111:.25);
}
\bigg)
\,\,=\,\,
\tikzmath[scale=.6]{\pgftransformyscale{.9}\pgftransformxscale{1.8}
\useasboundingbox (-1,-1) rectangle (1,1.2);
\coordinate (a) at ($(-.4,0)+(111:.25)$);\coordinate (b) at ($(-.4,0)+(69:.25)$);\coordinate (c) at ($(-.4,0)+(-69:.25)$);\coordinate (d) at ($(-.4,0)+(-111:.25)$);
\coordinate (A) at (95:1);\coordinate (B) at (85:1);\coordinate (C) at (-85:1);\coordinate (D) at (-95:1);
\fill[fill=gray!30, even odd rule] (0,0) circle (1)(-.4,0) circle (.25);
\draw[line join=bevel] (-.4,0)+(360-111:.25) arc (360-111:111:.25) (95:1) arc (95:360-95:1);
\draw[line join=bevel] (-.4,0)+(-69:.25) arc (-69:69:.25) (85:1) arc (85:-85:1);
\fill[white] (-95:1) -- ($(-.4,0)+(-111:.25)$) -- +(0,.05) -- +(.175,.05) -- ($(-.4,0)+(-69:.25)$) -- (-85:1) -- +(0,-.05) -- +(-.175,-.05);
\draw[ultra thin] (-.4,0)+(111:.25) -- (95:1);
\draw (360-95:1) -- ($(-.4,0)+(-111:.25)$);
\draw[ultra thin] (-.4,0)+(69:.25) -- (85:1);
\draw (-85:1) -- ($(-.4,0)+(-69:.25)$);
\fill[fill=gray!40] (.4,0) circle (.25);\draw[dashed] (.4,0) circle (.25);\node at (.4,.01) {$\scriptstyle T$};
\draw[fill=gray!70] (B) to[in=0, out=90, looseness=.57] ($(b)+(.29,1.1)$) -- ($(a)+(.29,1.1)$) to[out=0, in=90, looseness=.57] (A) arc (95:85:1); 
\draw[fill=gray!30] (a) to[in=200, out=93, looseness=.47] ($(a)+(.29,1.1)$) -- ($(b)+(.29,1.1)$) to[out=200, in=93, looseness=.47] (b) arc (69:111:.25);
}
\end{equation}
Similarly, since $B=\Delta$ is the result of applying $-\boxtimes_{\cala(M)} ({V(E)\boxtimes _{\cala(M')}\overline{V(E)}})$ to some map, it follows that
\begin{equation}\label{eq: final2}
B\bigg(
\tikzmath[scale=.6]{\pgftransformyscale{.9}\pgftransformxscale{-1.8}
\useasboundingbox (-1,-1) rectangle (1,1);
\coordinate (a) at ($(-.4,0)+(111:.25)$);\coordinate (b) at ($(-.4,0)+(69:.25)$);\coordinate (c) at ($(-.4,0)+(-69:.25)$);\coordinate (d) at ($(-.4,0)+(-111:.25)$);
\coordinate (A) at (95:1);\coordinate (B) at (85:1);\coordinate (C) at (-85:1);\coordinate (D) at (-95:1);
\fill[fill=gray!30, even odd rule] (0,0) circle (1)(-.4,0) circle (.25);
\draw[line join=bevel] (-.4,0)+(360-111:.25) arc (360-111:111:.25) (95:1) arc (95:360-95:1);
\draw[line join=bevel] (-.4,0)+(-69:.25) arc (-69:69:.25) (85:1) arc (85:-85:1);
\fill[white] (95:1) -- ($(-.4,0)+(111:.25)$) -- +(0,-.05) -- +(.175,-.05) -- ($(-.4,0)+(69:.25)$) -- (85:1) -- +(0,.05) -- +(-.175,.05);
\fill[white] (-95:1) -- ($(-.4,0)+(-111:.25)$) -- +(0,.05) -- +(.175,.05) -- ($(-.4,0)+(-69:.25)$) -- (-85:1) -- +(0,-.05) -- +(-.175,-.05);
\draw (-.4,0)+(111:.25) -- (95:1);
\draw (360-95:1) -- ($(-.4,0)+(-111:.25)$);
\draw (-.4,0)+(69:.25) -- (85:1);
\draw (-85:1) -- ($(-.4,0)+(-69:.25)$);
\fill[fill=gray!40] (.4,0) circle (.25);\draw[dashed] (.4,0) circle (.25);\node at (.4,.01) {$\scriptstyle T$};
}
\bigg)
\,\,\subset\,\,
B\bigg(
\tikzmath[scale=.6]{\pgftransformyscale{.9}\pgftransformxscale{1.8}
\useasboundingbox (-1,-1) rectangle (1,1.1);
\coordinate (a) at ($(-.4,0)+(111:.25)$);\coordinate (b) at ($(-.4,0)+(69:.25)$);\coordinate (c) at ($(-.4,0)+(-69:.25)$);\coordinate (d) at ($(-.4,0)+(-111:.25)$);
\coordinate (A) at (95:1);\coordinate (B) at (85:1);\coordinate (C) at (-85:1);\coordinate (D) at (-95:1);
\fill[fill=gray!30, even odd rule] (0,0) circle (1)(-.4,0) circle (.25);
\fill[white] (95:1) -- ($(-.4,0)+(111:.25)$) -- +(0,-.05) -- +(.175,-.05) -- ($(-.4,0)+(69:.25)$) -- (85:1) -- +(0,.05) -- +(-.175,.05);
\draw[line join=bevel] (-.4,0)+(360-111:.25) arc (360-111:111:.25) (95:1) arc (95:360-95:1);
\draw[line join=bevel] (-.4,0)+(-69:.25) arc (-69:69:.25) (85:1) arc (85:-85:1);
\draw (-.4,0)+(111:.25) -- (95:1);
\draw[ultra thin] (360-95:1) -- ($(-.4,0)+(-111:.25)$);
\draw (-.4,0)+(69:.25) -- (85:1);
\draw[ultra thin] (-85:1) -- ($(-.4,0)+(-69:.25)$);
\fill[fill=gray!40] (.4,0) circle (.25);\draw[dashed] (.4,0) circle (.25);\node at (.4,.01) {$\scriptstyle T$};
\draw[fill=gray!70] (d) to[in=180, out=90, looseness=.57] ($(D)+(-.29,1.1)$) -- ($(C)+(-.29,1.1)$) to[out=180, in=90, looseness=.57] (c) arc (-69:-111:.25); 
\draw[fill=gray!30] (C) to[in=-20, out=87, looseness=.47] ($(C)+(-.29,1.1)$) -- ($(D)+(-.29,1.1)$) to[out=-20, in=87, looseness=.47] (D) arc (-95:-85:1);
}
\bigg)
\,\,=\,\,
\tikzmath[scale=.6]{\pgftransformyscale{.9}\pgftransformxscale{1.8}
\useasboundingbox (-1,-1) rectangle (1,1.1);
\coordinate (a) at ($(-.4,0)+(111:.25)$);\coordinate (b) at ($(-.4,0)+(69:.25)$);\coordinate (c) at ($(-.4,0)+(-69:.25)$);\coordinate (d) at ($(-.4,0)+(-111:.25)$);
\coordinate (A) at (95:1);\coordinate (B) at (85:1);\coordinate (C) at (-85:1);\coordinate (D) at (-95:1);
\fill[fill=gray!30, even odd rule] (0,0) circle (1)(-.4,0) circle (.25);
\fill[white] (95:1) -- ($(-.4,0)+(111:.25)$) -- +(0,-.05) -- +(.175,-.05) -- ($(-.4,0)+(69:.25)$) -- (85:1) -- +(0,.05) -- +(-.175,.05);
\draw[line join=bevel] (-.4,0)+(360-111:.25) arc (360-111:111:.25) (95:1) arc (95:360-95:1);
\draw[line join=bevel] (-.4,0)+(-69:.25) arc (-69:69:.25) (85:1) arc (85:-85:1);
\draw (-.4,0)+(111:.25) -- (95:1);
\draw[ultra thin] (360-95:1) -- ($(-.4,0)+(-111:.25)$);
\draw (-.4,0)+(69:.25) -- (85:1);
\draw[ultra thin] (-85:1) -- ($(-.4,0)+(-69:.25)$);
\fill[fill=gray!40] (.4,0) circle (.25);\draw[dashed] (.4,0) circle (.25);\node at (.4,.01) {$\scriptstyle T$};
\draw[fill=gray!70] (d) to[in=180, out=90, looseness=.57] ($(D)+(-.29,1.1)$) -- ($(C)+(-.29,1.1)$) to[out=180, in=90, looseness=.57] (c) arc (-69:-111:.25); 
\draw[fill=gray!30] (C) to[in=-20, out=87, looseness=.47] ($(C)+(-.29,1.1)$) -- ($(D)+(-.29,1.1)$) to[out=-20, in=87, looseness=.47] (D) arc (-95:-85:1);
}
\end{equation}
Combining \eqref{eq: final1}, \eqref{eq: final2} and \eqref{eq: final3}, 
we learn that the map $B$ sends\, 
$\tikzmath[scale=.5]{\filldraw[fill=gray!30] (0,0) circle (1); 
\filldraw[fill=gray!40, dashed] (-.4,0) circle (.25);\filldraw[fill=white](.4,0) circle (.25);
\node[scale=.8] at (-.4,-.01) {$\scriptstyle T$};
\pgftransformxscale{-1}
\fill[white] (95:1) -- ($(-.4,0)+(111:.25)$) -- +(0,-.05) -- +(.175,-.05) -- ($(-.4,0)+(69:.25)$) -- (85:1) -- +(0,.05) -- +(-.175,.05);
\draw[line cap=round] (-.4,0) +(111:.25) -- (95:1) (-.4,0) +(69:.25) -- (85:1);
\pgftransformyscale{-1}
\fill[white] (95:1) -- ($(-.4,0)+(111:.25)$) -- +(0,-.05) -- +(.175,-.05) -- ($(-.4,0)+(69:.25)$) -- (85:1) -- +(0,.05) -- +(-.175,.05);
\draw[line cap=round] (-.4,0) +(111:.25) -- (95:1) (-.4,0) +(69:.25) -- (85:1);}$ 
to
$\tikzmath[scale=.5]{\filldraw[fill=gray!30] (0,0) circle (1); 
\filldraw[fill=gray!40, dashed] (.4,0) circle (.25);\filldraw[fill=white](-.4,0) circle (.25);
\node[scale=.8] at (.4,-.01) {$\scriptstyle T$};
\fill[white] (95:1) -- ($(-.4,0)+(111:.25)$) -- +(0,-.05) -- +(.175,-.05) -- ($(-.4,0)+(69:.25)$) -- (85:1) -- +(0,.05) -- +(-.175,.05);
\draw[line cap=round] (-.4,0) +(111:.25) -- (95:1) (-.4,0) +(69:.25) -- (85:1);
\pgftransformyscale{-1}
\fill[white] (95:1) -- ($(-.4,0)+(111:.25)$) -- +(0,-.05) -- +(.175,-.05) -- ($(-.4,0)+(69:.25)$) -- (85:1) -- +(0,.05) -- +(-.175,.05);
\draw[line cap=round] (-.4,0) +(111:.25) -- (95:1) (-.4,0) +(69:.25) -- (85:1);}$\,.
Recall that $B$ intertwines the actions of 
$
\cala\big(\tikzmath[scale=.3]{\draw (0,0) circle (1); 
\filldraw[fill=white] (0,0) circle (.35);
\fill[white] (97:1) -- ($(0,0)+(111:.35)$) -- +(0,-.05) -- +(.25,-.05) -- ($(0,0)+(69:.35)$) -- (83:1) -- +(0,.05) -- +(-.25,.05);
\draw[line cap=round] (0,0) +(111:.35) -- (97:1) (0,0) +(69:.35) -- (83:1);
\pgftransformyscale{-1}
\fill[white] (97:1) -- ($(0,0)+(111:.35)$) -- +(0,-.05) -- +(.25,-.05) -- ($(0,0)+(69:.35)$) -- (83:1) -- +(0,.05) -- +(-.25,.05);
\draw[line cap=round] (0,0) +(111:.35) -- (97:1) (0,0) +(69:.35) -- (83:1);}
\big).
$
The representations $T\otimes H_0$ and $H_0\otimes T$ of that algebra are irreducible;
$B$ therefore induces an isomorphism $T\otimes H_0\cong H_0\otimes T$.
It follows that $T\cong H_0$ (and $H_0\cong T$).
\end{proof}


\appendix

\renewcommand{\thesubsection}{\Alph{section}.{\Roman{subsection}}}

\section{Cyclic fusion and fusion along graphs}
\label{app:Cyclic fusion}

Let $\Gamma$ be a finite oriented graph\footnote{Here, by a graph we mean {\it combinatorial graph}, unlike the prevalent usage of the word \emph{graph} in the body of the article, where it refers to a particular kind of topological space.} for which the two vertices of each edge are distinct.
Suppose that we are given, for every edge $e\in \Gamma_1$ a von Neumann algebra $A_e$, 
and for every vertex $v\in \Gamma_0$ a Hilbert space $H_v$.
Assume furthermore that each $H_v$ is equipped with a left action of $A_e$ for every incoming edge $e$,
and a right action of $A_e$ for every outgoing edge.
Moreover, the actions on $H_v$ are required to be split, meaning that they extend to an action of the spatial tensor product
\[
\Bigg(\raisebox{.1cm}{$\displaystyle\bar{\bigotimes_{\substack{\text{incoming}\\\text{edges } e}}}$} A_e\Bigg)
\,\,\bar\otimes\,\,
\Bigg(\raisebox{.1cm}{$\displaystyle\bar{\bigotimes_{\substack{\text{outgoing}\\\text{edges } e}}}$} A_e^\op\Bigg).
\]
The \emph{graph fusion}, denoted
\(
\bigboxtimes_{\{A_e\}}\{H_v\}_{e,v\in \Gamma}
\) (or simply 
\(
\bigboxtimes\,\{H_v\}_{v\in \Gamma}
\),
when the algebras $A_e$ are obvious from the context),
is the Connes fusion of all the Hilbert spaces $H_v$ along all the algebras $A_e$.

Let us write $e\triangleright v$ to denote that $e$ in an incoming edge for $v$,
and $e\triangleleft v$  to denote that $e$ is an outgoing edge for $v$.

\begin{definition}\label{def: graph fusion}
Let $\Gamma$, $\{H_v\}_{v\in\Gamma_0}$, $\{A_e\}_{e\in\Gamma_1}$ be as above.
The graph fusion
\[
\bigboxtimes_{\{A_e\}}\{H_v\}_{e,v\in \Gamma}
\]
of the Hilbert spaces $H_v$ over the algebras $A_e$ is the completion of the vector space
\[
\bigotimes_{v\in \Gamma_0}\,\,\,\hom_{\big(\underset{e\triangleright v}{\bar\bigotimes}\, A_e\,\bar\otimes\,\underset{e\triangleleft v}{\bar\bigotimes}\, A_e^\op\big)}
\left(
\,\underset{e\triangleright v}{\bar\bigotimes}\, L^2A_e
\,\,\bar\otimes\,\,
\,\underset{e\triangleleft v}{\bar\bigotimes}\, L^2A_e
\,,\,H_v\right)\otimes \bigotimes_{e\in\Gamma_1} L^2A_e
\]
with respect to the pre-inner product
\[
\left\langle\textstyle\underset{v\in\Gamma_0}\bigotimes\varphi_v\otimes\underset{e\in\Gamma_1}\bigotimes\xi_e\,,
\underset{v\in\Gamma_0}\bigotimes\psi_v\otimes\underset{e\in\Gamma_1}\bigotimes\eta_e
\!\right\rangle\,:=\,\left\langle\!\Big(\,\textstyle\underset{v\in\Gamma_0}\prod(\psi_v^*\varphi_v)\Big)\Big(\underset{e\in\Gamma_1}\bigotimes\xi_e\Big)\,,\underset{e\in\Gamma_1}\bigotimes\eta_e\!\right\rangle_{\!\bigotimes_{e\in\Gamma_1} \!L^2A_e}
\]
Note that it is important for $\Gamma$ not to have any loops since, otherwise, the action of the element 
$\psi_v^*\varphi_v \in{\bar\bigotimes_{e\triangleright v}}\, A_e\,\bar\otimes\,{\bar\bigotimes_{e\triangleleft v}}\, A_e^\op$
on the Hilbert space $\bigotimes_{e\in\Gamma_1}\! L^2A_e$ is ill-defined,
as the left and right actions of an algebra $A$ on its $L^2$-space are typically not split.
This is essentially the same issue as in Warning \ref{warn: cyc fus with one} below.
\end{definition}

Given a subset $S\subset \Gamma_1$ of edges, consider the graph $\tilde\Gamma$ obtained by reversing the orientations of all the edges in $S$.
Given algebras $A_e$ indexed by $e\in\Gamma_1$, we can form new ones $\tilde A_e$ by letting 
$\tilde A_e:=A_e^\op$ if $e\in S$ and $\tilde A_e:=A_e$ if $e\not\in S$.
It is then immediate from the definition that
\begin{equation}\label{eq: orientations in graph fusion}
\bigboxtimes_{\{\tilde A_e\}}\{H_v\}_{e, v\in \tilde\Gamma}\,\,=\,\,\bigboxtimes_{\{A_e\}}\{H_v\}_{e, v\in \Gamma}
\end{equation}

Now consider an equivalence relation $\calr$ on the set $\Gamma_0$ of vertices of $\Gamma$.
The quotient graph $\Gamma/\calr$ is defined as
\[
(\Gamma/\calr)_0:=\Gamma_0/\calr\qquad\text{and}\qquad (\Gamma/\calr)_1:=\Big\{e\in\Gamma_1\Big| \parbox{4.5cm}{\center$e$\, connects vertices from\\ different equivalence classes}\Big\}.
\]
For each equivalence class $V\in\Gamma_0/\calr$, let $\Gamma_V$ denote the corresponding full subgraph of $\Gamma$.
Given algebras $\{A_e\}_{e\in\Gamma_1}$ and Hilbert spaces $\{H_v\}_{v\in\Gamma_0}$, the graph fusion satisfies the following version of associativity:
\begin{equation}\label{eq: ass for graph fusion}
\bigboxtimes_{\{A_E\}}\Big\{\bigboxtimes_{\{A_e\}}\big\{H_v\big\}_{e,v\in\Gamma_V}\Big\}_{E, V\in \Gamma/\calr}
=\, \bigboxtimes_{\{A_e\}}\big\{H_v\big\}_{v\in \Gamma}
\end{equation}
The following result is the simplest special case of the above equation:

\begin{lemma}\label{lem: HKK}
Let $A_1$ and $A_2$ be von Neuman algebras. Let $K_i$ be left $A_i$-modules, and let $H$ be a right $A_1\bar\otimes A_2$-module.
Then there is a canonical isomorphism
\[
H\boxtimes_{A_1\bar\otimes A_2} (K_1\otimes K_2)\cong K_1\boxtimes_{A_1^\op}\! H \,\boxtimes_{A_2^{\phantom{\op}}}\!\! K_2.
\]
\end{lemma}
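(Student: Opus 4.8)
The plan is to recognize both sides of Lemma~\ref{lem: HKK} as the graph fusion of the three Hilbert spaces $K_1$, $H$, $K_2$, and then to match them using the associativity property \eqref{eq: ass for graph fusion}. Concretely, I would take $\Gamma$ to be the graph with vertices $u_1, w, u_2$ carrying $K_1, H, K_2$ respectively, and with two edges $e_1\colon w\to u_1$ and $e_2\colon w\to u_2$ carrying the algebras $A_1$ and $A_2$. With these orientations $H$ acquires the required right actions of $A_1$ and $A_2$ (both edges are outgoing at $w$), while $K_i$ acquires the left $A_i$-action ($e_i$ is incoming at $u_i$); the split hypothesis on the $A_1\bar\otimes A_2$-action on $H$ is exactly the splitting condition demanded of the vertex $w$ in Definition~\ref{def: graph fusion}.

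With this setup I would argue as follows. On the one hand, apply \eqref{eq: ass for graph fusion} to the equivalence relation $\calr$ that merges $u_1$ and $u_2$ into a single class $V$: the subgraph $\Gamma_V$ has no edges, so its graph fusion is simply $K_1\otimes K_2$, and the quotient graph $\Gamma/\calr$ consists of $w$ and $V$ joined by the two edges $e_1, e_2$, whose combined (split) algebra is $A_1\bar\otimes A_2$; hence the graph fusion over $\Gamma/\calr$ is $H\boxtimes_{A_1\bar\otimes A_2}(K_1\otimes K_2)$, the left-hand side. On the other hand, grouping $w$ with $u_1$ — and reversing the orientation of $e_1$ while replacing $A_1$ by $A_1^\op$ via \eqref{eq: orientations in graph fusion} — exhibits the graph fusion over $\Gamma$ as the iterated Connes fusion $K_1\boxtimes_{A_1^\op}H\boxtimes_{A_2}K_2$, the right-hand side. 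Thus both sides equal the graph fusion over $\Gamma$, and the lemma follows.

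Since \eqref{eq: ass for graph fusion} is itself only asserted, I would in fact prefer a self-contained proof by unwinding Definition~\ref{def: graph fusion} directly; this is the honest content of the ``simplest special case.'' Writing out the single-edge fusion for the left-hand side and the two-edge fusion over $\Gamma$ for the right-hand side, both reduce to an explicit tensor product of $\hom$-spaces with copies of $L^2$. The factors attached to $w$ agree on the nose once one uses the standard identification $L^2(A_1\bar\otimes A_2)\cong L^2 A_1 \otimes L^2 A_2$, which also identifies the edge-Hilbert-spaces $\bigotimes_e L^2 A_e$ on the two sides. The remaining comparison is the factorization
\[
\hom_{A_1\bar\otimes A_2}\!\big(L^2 A_1\otimes L^2 A_2,\,K_1\otimes K_2\big)\;\cong\;\hom_{A_1}\!\big(L^2 A_1, K_1\big)\otimes\hom_{A_2}\!\big(L^2 A_2, K_2\big),
\]
under which $\psi^*\varphi=(\psi_1\otimes\psi_2)^*(\varphi_1\otimes\varphi_2)=(\psi_1^*\varphi_1)(\psi_2^*\varphi_2)$ in $A_1\bar\otimes A_2$, so that the two pre-inner products of Definition~\ref{def: graph fusion} coincide; the resulting isometric isomorphism of dense subspaces then extends to the completions.

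The main obstacle is precisely this last factorization of intertwiner spaces. The map from right to left (sending $\varphi_1, \varphi_2$ to $\varphi_1\otimes\varphi_2$) is elementary and isometric, but its essential surjectivity onto a dense subspace is where one must genuinely invoke the split property together with the commutation theorem for spatial tensor products of von Neumann algebras. This is the step I would write out most carefully, taking care that all modules have separable preduals so that the relevant tensor-product and $L^2$ identifications are available.
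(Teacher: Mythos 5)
Your reduction to graph-fusion associativity is exactly the paper's own treatment: the paper gives no separate proof of Lemma \ref{lem: HKK}, presenting it precisely as the simplest special case of \eqref{eq: ass for graph fusion}, and your three-vertex graph with its two quotients (merging $u_1$ with $u_2$, resp.\ $w$ with $u_1$ after an orientation flip via \eqref{eq: orientations in graph fusion}) realizes the two sides just as intended. Your further self-contained unwinding of Definition~\ref{def: graph fusion} --- reducing everything to $L^2(A_1\bar\otimes A_2)\cong L^2A_1\otimes L^2A_2$ and the density (not literal equality) of the algebraic tensor product of intertwiner spaces, which you correctly flag as the one step needing the split property and the commutation theorem --- is sound and supplies detail the paper leaves implicit.
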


There is an important special case of graph fusion, called \emph{cyclic fusion}, which is the case when the graph $\Gamma$ is a cycle.
Let $n\ge 2$ be some number.
For each $i\in\{1,\ldots,n\}$, let $A_i$ be a von Neumann algebra, and let $H_i$ be an $A_i^{\phantom{\op}}\!\!\!\bar\otimes\, A_{i+1}^\op$-module (cyclic numbering).
Then for each $i,j\in\{1,\ldots,n\}$, we can form the fusion of
$H_i\boxtimes_{A_{i+1}}\ldots\boxtimes_{A_{j-1}}H_{j-1}$
(cyclic numbering) with
$H_j\boxtimes_{A_{j+1}}\ldots\boxtimes_{A_{i-1}}H_{i-1}$
over the algebra $A_i^\op\,\bar\otimes\, A_j^{\phantom{\op}}\!\!\!$.
Under the above conditions, the Hilbert space
\[
\big(H_i\boxtimes_{A_{i+1}}\ldots\boxtimes_{A_{j-1}}H_{j-1}\big)
\underset{A_i^\op\,\bar\otimes\, A_j^{\phantom{\op}}\!\!\!}\boxtimes
\big(H_j\boxtimes_{A_{j+1}}\ldots\boxtimes_{A_{i-1}}H_{i-1}\big)
\]
is isomorphic to the graph fusion of the $H_i$'s and therefore independent, up to canonical unitary isomorphism, of the choices of $i$ and $j$.
We call the above Hilbert space the cyclic fusion of the $H_i$'s, and denote it by
\begin{equation}\label{eq: cyc fus}
\tikzmath{
\node (a) at (0,0) {$H_1\,\boxtimes_{A_2}\cdots\,\boxtimes_{A_n}\!H_n\,\,\boxtimes_{A_1}$};
\def\dd{.4}
\def\ll{.35}
\def\rr{.25}
\draw[dashed, rounded corners = 5] (a.east) -- ++(\rr,0) -- ++(0,-\dd) -- ($(a.west) + (-\ll,-\dd)$) -- +(0,\dd) -- (a.west);
}
\end{equation}

\begin{warning}\label{warn: cyc fus with one}
For the space \eqref{eq: cyc fus} to be well defined, it is crucial to have $n\ge 2$.
In other words, given an $A\,\bar\otimes\,A^\op$ module $H$, the expression
$\tikzmath{\useasboundingbox (-.8,-.2) rectangle (.85,.2);
\node (a) at (0,0) {$H\,\boxtimes_{A}$};
\def\dd{.3}\def\ll{.2}\def\rr{.15}
\draw[dashed, rounded corners = 5] (a.east) -- ++(\rr,0) -- ++(0,-\dd) -- ($(a.west) + (-\ll,-\dd)$) -- +(0,\dd) -- (a.west);
}
$\smallskip is ill-defined.
This can be seen by analyzing the example $H:={}_AL^2A\otimes_{\IC} L^2A_A$.
In that case, one might expect to find $\tikzmath{\useasboundingbox (-.8,-.2) rectangle (.75,.2);
\node (a) at (0,0) {$H\,\boxtimes_{A}$};
\def\dd{.3}\def\ll{.2}\def\rr{.15}
\draw[dashed, rounded corners = 5] (a.east) -- ++(\rr,0) -- ++(0,-\dd) -- ($(a.west) + (-\ll,-\dd)$) -- +(0,\dd) -- (a.west);
}= L^2A
$\smallskip.
However, there is in general no meaningful way of letting
$\mathrm{End}_{A\,\bar\otimes\,A^\op}(H)$ act on that Hilbert space.

There is one exception to this warning: if $A$ is a direct sum of type $I$ factors, then
$\tikzmath{\useasboundingbox (-.8,-.3) rectangle (.75,.2);
\node (a) at (0,0) {$H\,\boxtimes_{A}$};
\def\dd{.3}\def\ll{.2}\def\rr{.15}
\draw[dashed, rounded corners = 5] (a.east) -- ++(\rr,0) -- ++(0,-\dd) -- ($(a.west) + (-\ll,-\dd)$) -- +(0,\dd) -- (a.west);
}
$ still makes sense.
It can be defined as $H\boxtimes_{A\,\bar\otimes\,A^\op} L^2(A)$,
which is now meaningful because $A\,\bar\otimes\,A^\op$ does act on $L^2(A)$.
\end{warning}

\section{The Hilbert space associated to an annulus}

Let $S$ be  circle, and let $\Sigma=S\times [0,1]$.
We have seen in \eqref{eq: def H-sigma} how to associate, non canonically, a Hilbert space $H_\Sigma\in\Rep_{S\sqcup \bar S}(\cala)$ to this annulus.
Later, in \eqref{eq:   KLM  }, we learned that there is a non-canonical unitary isomorphism of $S\sqcup \bar S$-sectors
\[
H_\Sigma \,\cong\,\, \bigoplus_{\lambda\in\Delta} H_\lambda\big(S\big) \otimes H_{\bar \lambda}\big(\bar S\big).
\]
Note that, by \eqref{eq: hatcalaS}, this can be reinterpreted as an isomorphism
\begin{equation}\label{H_Sigm=L^2hatA(S)}
H_\Sigma \,\cong\,\, L^2\cala(S).
\end{equation}
The goal of this section is to redefine $H_\Sigma$ in such a way that it becomes well defined up to \emph{canonical} unitary isomorphism,
and to upgrade \eqref{H_Sigm=L^2hatA(S)} to a \emph{canonical} unitary isomorphism.

Let $\cali=\{I_1,\ldots,I_n\}$ be a $c$-cover of $S$.
The intervals $I_j$ are arranged so that each intersection $p_i:=I_{i-1}\cap I_i$ (cyclic numbering) is a single point (or two points if $n=2$).
By definition, $H_\Sigma$ is then the fusion of all the vacuum sectors $H_i:=H_0(\partial I_i\times [0,1])$ along all the algebras $A_i:=\cala(\{p_i\}\times [0,1])$.

The first reason $H_\Sigma$ isn't canonically defined
is that the Hilbert spaces $H_i$ themselves are only well defined up to non-canonical unitary isomorphism.
One can fix that issue by being more specific:
from now on, $H_i$ will denote the vacuum sector of $\cala$ associated to the circle $S_i:=\partial I_i\times [0,1]$, its upper half 
$S_i^\top:=S_i\cap (I_i\times [1/2,1])$, and the involution $j:S_i\to S_i$, $j(x,t)=(x,1-t)$, as described near the beginning of Section \ref{sec:nets}.

Our next task is to show that $H_\Sigma$ is independent of the $c$-cover $\cali$.
In order to do so, it is useful to introduce a notation that stresses the dependence:
\begin{equation}\label{eq:cyc fus}
H_\Sigma^{(\cali)}:=\tikzmath{  
\node (a) at (0,0) {$H_1\,\boxtimes_{A_2}\cdots\,\boxtimes_{A_n}\!H_n\,\,\boxtimes_{A_1}$};
\def\dd{.4}
\def\ll{.35}
\def\rr{.25}
\draw[dashed, rounded corners = 5] (a.east) -- ++(\rr,0) -- ++(0,-\dd) -- ($(a.west) + (-\ll,-\dd)$) -- +(0,\dd) -- (a.west);
}
\end{equation}
The dashed line denotes the operation of \emph{cyclic fusion}, described in 
Appendix~\ref{app:Cyclic fusion}. 

Given two $c$-covers $\cali$ and $\calj$ of $S$,
we need to construct a unitary isomorphism between $H_\Sigma^{(\cali)}$ and $H_\Sigma^{(\calj)}$.
As in the proof of Lemma \ref{lem: H_Sigma^(cali)= H_Sigma^(calj)}, note that
one can go from any cover to any other one by subdividing and recombining intervals.
It is therefore enough to treat the case when $\cali=\{I_1,I_2,\ldots,I_n\}$ and $\calj=\{I_1',I_1'',I_2,\ldots,I_n\}$ with $I_1'\cup I_1''=I_1$.
Our first goal is therefore to upgrade the non-canonical isomorphism \eqref{hoAho=Ho} to a \emph{canonical} unitary isomorphism.
This will be the content of Lemma \ref{lem: u:HH->H}.

Let $I$ be an interval and let $\{I_1,I_2\}$ be a $c$-cover of $I$.
Let $\{p,q,r\}=\partial I_1\cup \partial I_2$, with $\partial I_1=\{p,q\}$, $\partial I_2=\{q,r\}$, $\partial I=\{p,r\}$, and
let $S_1=\partial (I_1\times [0,1])$, $S_2=\partial (I_2\times [0,1])$, $S_3=\partial (I\times [0,1])$.
Write $j$ for the involution $j(x,t)=(x,1-t)$ on $I\times [0,1]$.
Finally, let $S_a^\top=S_a\cap (I\times [\frac12,1])$ for $a\in\{1,2,3\}$,
and let $K={q}\times [0,1]$.
We orient $S_1$, $S_2$, $S_3$, $S_1^\top$, $S_2^\top$, $S_3^\top$, $K$ as follows:
\begin{equation}\label{eq: Theta-graph [box]}
\begin{split}
\tikzmath[scale=.05]{
\useasboundingbox (-14,-25) rectangle (28,27);
\coordinate (a1) at (-12,15);\coordinate (a2) at ($(a1)+(0,-30)$);\coordinate (a0) at ($(a1)+(0,-15)$);
\coordinate (b1) at (7,15);\coordinate (b2) at ($(b1)+(0,-30)$);\coordinate (b0) at ($(b1)+(0,-15)$);
\coordinate (c1) at (25,17);\coordinate (c2) at ($(c1)+(0,-30)$);\coordinate (c0) at ($(c1)+(0,-15)$);
\draw[line width=.7] (a1)node[left]{$I\,:\,\,\,$}node[above,yshift=1]{$\scriptstyle p$} to[bend left=25] (b1)node[above,yshift=1]{$\scriptstyle q$} to[bend right=25] (c1)node[above,yshift=-1]{$\scriptstyle r$};\fill (a1) circle (.6)(a0) circle (.6);\fill (b1) circle (.6)(b0) circle (.6)(b2) circle (.6);\fill (c1) circle (.6)(c2) circle (.6);
\draw[line width=.7] (a0)node[left]{$I_1:\,\,$} to[bend left=25] (b0);\draw[densely dotted] (b0) to[bend right=25] (c0);
\draw[densely dotted] (a2)node[left]{$I_2:\,\,$} to[bend left=25] (b2);\draw[line width=.7] (b2) to[bend right=25] (c2);
} 
\,\,\,\,\qquad \tikzmath[scale=.037]{ \useasboundingbox (-14,-28) rectangle (28,26); 
\coordinate (a1) at (-12,10);\coordinate (a2) at ($(a1)+(0,-20)$);\coordinate (a0) at ($(a1)+(0,-10)$);
\coordinate (b1) at (7,10);\coordinate (b2) at ($(b1)+(0,-20)$);\coordinate (b0) at ($(b1)+(0,-10)$);
\coordinate (c1) at (25,12);\coordinate (c2) at ($(c1)+(0,-20)$);\coordinate (c0) at ($(c1)+(0,-10)$);
\draw[line width=.7] (a1) to[bend left=25] (b1) -- (b2) to[bend right=25] (a2) -- cycle;
\draw[densely dotted] (b1) to[bend right=25] (c1) -- (c2) to[bend left=25] (b2);\node[scale=.9] at (7,-21) {$S_1$};\draw[->] (-3.8,12.35) -- +(-.1,0);
} 
\,\,\,\,\quad 
\tikzmath[scale=.037]{ \useasboundingbox (-14,-28) rectangle (28,26); 
\draw[densely dotted] (b2) to[bend right=25] (a2) -- (a1) to[bend left=25] (b1);
\draw[line width=.7] (b1) to[bend right=25] (c1) -- (c2) to[bend left=25] (b2) -- cycle;\node[scale=.9] at (7,-21) {$S_2$};\draw[->] (14.2,8.7) -- +(-.1,0);
} 
\,\,\,\,\quad 
\tikzmath[scale=.037]{ \useasboundingbox (-14,-28) rectangle (35,26); 
\draw[densely dotted] (b1) -- (b2);
\draw[line width=.7] (a1) to[bend left=25] (b1) to[bend right=25] (c1) -- (c2) to[bend left=25] (b2) to[bend right=25] (a2) -- cycle;\node[scale=.9] at (7,-21) {$S_3$};\draw[->] (-3.8,12.35) -- +(-.1,0);
\draw[<->] (38,-5) --node[right, scale=1.2]{$\scriptstyle j$} (38,7);
} 
\,\,\,\\
\,\,\,\,\qquad \tikzmath[scale=.037]{ \useasboundingbox (-14,-28) rectangle (28,23); 
\draw[line width=.7] (a0) -- (a1) to[bend left=25] (b1) -- (b0);
\draw[densely dotted] (b1) to[bend right=25] (c1) -- (c2) to[bend left=25] (b2) to[bend right=25] (a2) -- (a0)(b2) -- (b0);\node[scale=.9] at (7,-21) {$S_1^\top$};\draw[->] (-3.8,12.35) -- +(-.1,0);
} 
\,\,\,\,\quad 
\tikzmath[scale=.037]{ \useasboundingbox (-14,-28) rectangle (28,23); 
\draw[densely dotted] (b0) -- (b2)(c0) -- (c2) to[bend left=25] (b2) to[bend right=25] (a2) -- (a1) to[bend left=25] (b1);
\draw[line width=.7] (b0) -- (b1) to[bend right=25] (c1) -- (c0);\node[scale=.9] at (7,-21) {$S_2^\top$};\draw[->] (14.2,8.7) -- +(-.1,0);
} 
\,\,\,\,\quad 
\tikzmath[scale=.037]{ \useasboundingbox (-14,-28) rectangle (28,23); 
\draw[densely dotted] (b1) -- (b2)(c0) -- (c2) to[bend left=25] (b2) to[bend right=25] (a2) -- (a0);
\draw[line width=.7] (a0) -- (a1) to[bend left=25] (b1) to[bend right=25] (c1) -- (c0);\node[scale=.9] at (7,-21) {$S_3^\top$};\draw[->] (-3.8,12.35) -- +(-.1,0);
} 
\,\,\,\,\quad 
\tikzmath[scale=.037]{ \useasboundingbox (-14,-28) rectangle (28,23); 
\draw[line width=.7] (b1) -- (b2);
\draw[densely dotted] (a1) to[bend left=25] (b1) to[bend right=25] (c1) -- (c2) to[bend left=25] (b2) to[bend right=25] (a2) -- cycle;\node[scale=.9] at (7,-21) {$K$};\draw[->] (7,-.5) -- +(0,-.1);
} 
\hspace{1cm}\end{split}
\end{equation}

\begin{lemma}\label{lem: u:HH->H}
With $S_1$, $S_2$, $S_3$, $j$, $S_1^\top$, $S_2^\top$, $S_3^\top$, $K$ as above, let us define
\[
\qquad\qquad H_a:=L^2\cala(S_a^\top)\in\Rep_{S_a}(\cala),\qquad \tikzmath{\node[scale=.9]{$a\in\{1,2,3\}$};}
\]
to be the vacuum sector of $\cala$ associated to $S_a$, $S_a^\top$, and $j$.
Then there is a canonical unitary isomorphism of $S_3$-sectors
\begin{equation}\label{eq: u:HH->H}
u: H_1 \boxtimes_K H_2 \,\stackrel{\scriptscriptstyle\cong}{\longrightarrow}\, H_3.
\end{equation}
\end{lemma}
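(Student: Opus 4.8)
```latex
The plan is to identify both sides of \eqref{eq: u:HH->H} with $L^2$ of a suitable von Neumann algebra, and thereby reduce the construction of $u$ to a canonical identification of $L^2$-spaces. The key observation is that $H_a = L^2\cala(S_a^\top)$ is, by construction, the vacuum sector associated to the circle $S_a$, its upper half $S_a^\top$, and the reflection involution $j$. The three upper halves $S_1^\top$, $S_2^\top$, $S_3^\top$ are themselves intervals sitting inside the double of $I\times[0,1]$, and they are related by $S_3^\top = S_1^\top \cup_K S_2^\top$, where $K = \{q\}\times[0,1]$ is the interval along which they are glued (this is visible in the pictures \eqref{eq: Theta-graph [box]}). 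The algebra $\cala(K)$ therefore acts on each $H_a$: on $H_1 = L^2\cala(S_1^\top)$ and $H_2 = L^2\cala(S_2^\top)$ through the left/right regular actions coming from $K\subset S_1^\top$ and $K\subset S_2^\top$, with orientations as indicated.

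First I would recall the general principle, established in Theorem~\ref{thm:compute-bfB(net)} and the vacuum sector discussion near the beginning of Section~\ref{sec:nets}, that for an interval $J = J_1\cup_K J_2$ decomposed into two subintervals, there is a canonical isomorphism $L^2\cala(J_1)\boxtimes_{\cala(K)}L^2\cala(J_2)\cong L^2\cala(J)$. Concretely, the relative tensor product $L^2\cala(J_1)\boxtimes_{\cala(K)}L^2\cala(J_2)$ is canonically $L^2$ of the algebra $\cala(J_1)\vee\cala(J_2)=\cala(J)$ generated inside $\bfB$ of the fusion, where the last equality is strong additivity (Definition~\ref{def:conformal-net}(ii)). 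Applying this with $J_a = S_a^\top$ gives the canonical chain
\[
H_1\boxtimes_K H_2 = L^2\cala(S_1^\top)\boxtimes_{\cala(K)}L^2\cala(S_2^\top)\,\cong\, L^2\cala(S_1^\top\cup_K S_2^\top)= L^2\cala(S_3^\top) = H_3,
\]
which is the desired $u$. I would verify that the orientations of $K$ in $S_1^\top$ versus $S_2^\top$ are opposite (so that one acts on the left and the other on the right, as required for the relative tensor product to make sense), which is exactly what the arrows in \eqref{eq: Theta-graph [box]} encode.

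The remaining, and genuinely necessary, step is to check that this isomorphism of Hilbert spaces is in fact an isomorphism of \emph{$S_3$-sectors}, i.e. that it intertwines the actions of $\cala(I')$ for every interval $I'\subset S_3$. For intervals $I'$ contained in the lower half of $S_3$ (below the reflection line) the action factors through the right regular representation and is handled by the $j$-equivariance built into the vacuum sector construction; for intervals in the upper half $S_3^\top$ the claim is immediate from the definition of the $L^2$-actions. The main obstacle I expect is bookkeeping the interplay of the two half-circle actions across the reflection $j$: one must confirm that the split actions of $\cala(S_a^\top)\ox_{\alg}\cala(\overline{S_a^\top})$ on $L^2\cala(S_a^\top)$ prescribed by the vacuum sector axiom \eqref{eq: Vaccum sector axiom for nets} are compatible with the fusion over $K$, so that the glued action is precisely the one defining $H_3$. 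This is where one uses that $K$ crosses the reflection line transversally and that $j$ restricts to a smooth involution compatible with all the smooth structures, so that the uniqueness clause in the vacuum sector axiom forces the two descriptions of the $S_3$-action to agree. Canonicity then follows because every identification invoked is the canonical one, with no choices made.
```
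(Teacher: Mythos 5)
There is a genuine gap, and it sits exactly at the step you call a ``general principle.'' No canonical isomorphism $L^2\cala(J_1)\boxtimes_{\cala(K)}L^2\cala(J_2)\cong L^2\cala(J)$ is established anywhere in the paper, and your justification for it is circular: a Hilbert space carrying a faithful action of $\cala(J_1)\vee\cala(J_2)$ is not thereby canonically identified with the standard form $L^2$ of that algebra --- to get such an identification you must produce a canonical cyclic-and-separating vector, or equivalently a canonical unitary to the standard form, and producing that datum is precisely the content of the lemma. Indeed the paper's earlier result \eqref{eq: [paper I, Corollary 1.33]} asserts exactly the isomorphism $H_0(S_1)\boxtimes_I H_0(S_2)\cong H_0(S_3)$, but only \emph{non-canonically}; if your general principle held with the stated canonicity, that result and this entire appendix would be unnecessary. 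The actual proof supplies the missing canonical datum by transporting everything to the standard circle, where the canonical vacuum sector $H_0$ of~\cite[\thmVaccumSector]{BDH(nets)} comes with \emph{two} canonical trivializations $v_\top:H_0\to L^2\cala(S^1_\top)$ and $v_\vdash:H_0\to L^2\cala(S^1_\vdash)$; the composite $v_\vdash v_\top^{-1}$ is the nontrivial structure that converts the fusion into a single $L^2$-space. This forces choices of diffeomorphisms $f_1:S_1\to S^1$, $f_2:S_2\to S^1$, and the second half of the paper's proof --- independence of these choices via inner covariance and the identity $L^2(\ad(u))(\xi)=u\xi u^*$ --- is substantive work that your ``no choices are made'' conclusion skips entirely.

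There is also a geometric error that undermines the setup of your reduction. The interval $K=\{q\}\times[0,1]$ is \emph{not} contained in $S_a^\top$: it crosses the reflection line, so $\cala(K)$ does not act on $H_a=L^2\cala(S_a^\top)$ by left or right multiplication through an inclusion $\cala(K)\subset\cala(S_a^\top)$. Its action is the extension guaranteed by the vacuum sector axiom \eqref{eq: Vaccum sector axiom for nets}, mixing the left action of $\cala(\{q\}\times[\frac12,1])$ with the right ($j$-twisted) action of $\cala(\{q\}\times[0,\frac12])$. Correspondingly, $S_1^\top\cup S_2^\top$ is not $S_3^\top$: it equals $S_3^\top$ together with the segment $\{q\}\times[\frac12,1]$, so the union has a trivalent vertex at $(q,1)$ and is not even a $1$-manifold, and the identification $L^2\cala(S_1^\top\cup_K S_2^\top)=L^2\cala(S_3^\top)$ you write down does not parse. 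Strong additivity does give $\cala(S_3^\top)=\cala(S_1^\top\cap S_3^\top)\vee\cala(S_2^\top\cap S_3^\top)$ at the level of algebras, but that only tells you which algebra acts on the fusion; it does not identify the fusion with that algebra's standard form, which, again, is the whole point.
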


\begin{proof}
Let $S^1$ be the standard circle, let
\[
S^1_\top:=\{z\in S^1| \Im\mathrm{m}(z)\ge 0\},
\,\,S^1_\vdash:=\{z\in S^1| \Re\mathrm{e}(z)\le 0\},
\,\,S^1_\dashv:=\{z\in S^1| \Re\mathrm{e}(z)\ge 0\},
\]
and let $\psi_\vdash:[0,1]\to S^1_\vdash$ and $\psi_\dashv:[0,1]\to S^1_\dashv$ be the diffeomorphisms given by $\psi_\vdash(t)=-\sin(\pi t)-i\cos(\pi t)$ and
$\psi_\dashv(t)=\sin(\pi t)-i\cos(\pi t)$.

Let $p,q,r\in I$ be as above, and let us pick diffeomorphisms $f_1:S_1\to S^1$, $f_2:S_2\to S^1$ so that
\[
\begin{split}
\tikzmath{\node[scale=.9]{$\exists\, \varepsilon>0:\,\,\forall t\in \textstyle[\frac12-\varepsilon,\frac12+\varepsilon]:$};}&
\,\,\, f_1(p,t)=\psi_\vdash(t),
\,\,\, f_2(r,t)=\psi_\dashv(t),\\
\quad\tikzmath{\node[scale=.9]{$\forall t\in \textstyle[0,1]:$};}&
\,\,\, f_1(q,t)=\psi_\dashv(t),
\,\,\, f_2(q,t)=\psi_\vdash(t),
\end{split}
\]
and $f_a(j(x))=\overline{f_a(x)}$.
Finally, let $f_3:=f_1|_{S_1\cap S_3}\cup f_2|_{S_2\cap S_3}:S_3\to S^1$.

Recall from~\cite[\thmVaccumSector]{BDH(nets)} that there is an 
$\cala$-sector $H_0=H_0(S^1,\cala)$ that is canonically associated to 
the standard circle $S^1$.
It is equipped, among others, with 
isomorphisms of $S^1$-sectors $v_\top:H_0\to L^2(\cala(S^1_\top))$ and $v_\vdash:H_0\to L^2(\cala(S^1_\vdash))$.
The isomorphism \eqref{eq: u:HH->H} is then the composite
\begin{equation}\label{eq: u:HH->H: the def.}
\begin{split}
u\,:\,H_1 \boxtimes_K H_2\,=\,\,&L^2\cala(S_1^\top) \boxtimes_K L^2\cala(S_2^\top)
\xrightarrow{L^2\cala(f_1)\boxtimes L^2\cala(f_2)}\\
\to\,&L^2\cala(S^1_\top) \boxtimes_K L^2\cala(S^1_\top)
\xrightarrow{1\boxtimes v_\top^{-1}} L^2\cala(S^1_\top) \boxtimes_K H_0\\
=\,\,&L^2\cala(S^1_\top) \boxtimes_{\cala(S^1_\vdash)} H_0
\xrightarrow{1\boxtimes v_\vdash^{\phantom{-1}}\!\!} L^2\cala(S^1_\top) \boxtimes_{\cala(S^1_\vdash)}L^2\cala(S^1_\vdash)\\
\cong\,\,& L^2\cala(S^1_\top)\xrightarrow{L^2\cala(f_3)^{-1}}L^2\cala(S_3^\top)=H_3.
\end{split}
\end{equation}

We still need to show that $u$ is independent of the choices made.
For that, we pick new diffeomorphisms $\tilde f_a:S_a\to S^1$ with the same properties as the maps $f_a$,
and define
\[
\tilde u:=\Big(L^2\cala(\tilde f_3)^{-1}\Big)  \circ  \Big(1\boxtimes v_\vdash v_\top^{-1}\Big)  \circ  \Big(L^2\cala(\tilde f_1)\boxtimes L^2\cala(\tilde f_2)\Big)
\]
as in \eqref{eq: u:HH->H: the def.}.
Let $g_a:=(\tilde f_a \circ f_a^{-1})|_{S^1_\top}$, and note that $g_3=g_1\circ g_2$.
By the inner covariance axiom, there exist unitaries $w_a\in \cala(S^1_\top)$ so that $\cala(g_a)=\ad(w_a)$.
Moreover, we can chose $w_3=w_1 w_2$.
Let $j_0:S^1\to S^1$ denote complex conjugation,
and let $W_a$ be the operator given by multiplication by $w_1$ followed by multiplication by $\cala(j_0)(w_1^*)$.
To show that $\tilde u=u$, we need to argue that following diagrams are commutative:
\[
\tikzmath{ \matrix [matrix of math nodes,column sep=3.5cm,row sep=1cm]
{ |(a)| L^2\cala(S_1^\top) \boxtimes_{K} L^2\cala(S_2^\top) \pgfmatrixnextcell |(b)| L^2\cala(S^1_\top) \boxtimes_{\cala(S^1_\vdash)} L^2\cala(S^1_\top)\\
|(c)|  {L^2\cala(S_1^\top)} \boxtimes_{K} {L^2\cala(S_2^\top)} \pgfmatrixnextcell |(d)| L^2\cala(S^1_\top) \boxtimes_{\cala(S^1_\vdash)} L^2\cala(S^1_\top)\\ }; 
\draw[->] (a) --node[above]{$\scriptstyle L^2\cala(f_1)\,\boxtimes\, L^2\cala(f_2)$} (b);
\draw[->] (c) --node[above]{$\scriptstyle L^2\cala(\tilde f_1)\,\boxtimes\, L^2\cala(\tilde f_2)$} (d);
\draw[double, double distance=1pt] (a) -- (c);
\draw[->] (b) -- node [right] {$\scriptstyle W_1\,\boxtimes\, W_2$} (d);
\node[yshift=2] at ($(a)!.5!(d)$) {\squared1};
}
\]
\[
\tikzmath{ \matrix [matrix of math nodes,column sep=1.8cm,row sep=1cm]
{ |(a)| L^2\cala(S^1_\top) \boxtimes_{\cala(S^1_\vdash)} L^2\cala(S^1_\top) \pgfmatrixnextcell |(b)| L^2\cala(S^1_\top) \boxtimes_{\cala(S^1_\vdash)}L^2\cala(S^1_\vdash)\\
|(a')| L^2\cala(S^1_\top) \boxtimes_{\cala(S^1_\vdash)} L^2\cala(S^1_\top) \pgfmatrixnextcell |(b')| L^2\cala(S^1_\top) \boxtimes_{\cala(S^1_\vdash)}L^2\cala(S^1_\vdash)\\ };
\draw[->] (a) --node[above]{$\scriptstyle 1\boxtimes v_\vdash v_\top^{-1}$} (b);
\draw[->] (a') --node[above]{$\scriptstyle 1\boxtimes v_\vdash v_\top^{-1}$} (b');
\draw[->] (a) -- node [right] {$\scriptstyle W_1\,\boxtimes\, W_2$} (a');
\draw[->] (b) -- node [right] {$\scriptstyle W_1\,\boxtimes\, W_2$} (b');
\node[yshift=2] at ($(a)!.5!(b')$) {\squared2};
}
\]
\[
\tikzmath{ \matrix [matrix of math nodes,column sep=2cm,row sep=1cm]
{|(z)| L^2\cala(S^1_\top) \boxtimes_{\cala(S^1_\vdash)}L^2\cala(S^1_\vdash) \pgfmatrixnextcell[-1cm] |(a)| L^2\cala(S^1_\top)  \pgfmatrixnextcell  |(b)| L^2\cala(S_3^\top)\\ 
|(z')| L^2\cala(S^1_\top) \boxtimes_{\cala(S^1_\vdash)}L^2\cala(S^1_\vdash) \pgfmatrixnextcell |(a')| L^2\cala(S^1_\top)  \pgfmatrixnextcell  |(b')| L^2\cala(S_3^\top)\\ };
\draw[->] (z) -- node [right] {$\scriptstyle W_1\,\boxtimes\, W_2$} (z');
\draw[->] (a) --node[above]{$\scriptstyle L^2\cala(f_3)^{-1}$} (b);
\draw[->] (a') --node[above]{$\scriptstyle L^2\cala(\tilde f_3)^{-1}$} (b');
\draw[->] (a) -- node [right] {$\scriptstyle W_3$} (a');
\draw[double, double distance=1pt] (b) -- (b');
\draw[->] (z) --node[above]{$\scriptstyle \cong$} (a);
\draw[->] (z') --node[above]{$\scriptstyle \cong$} (a');
\node[yshift=1, xshift=5] at ($(z)!.5!(a')$) {\squared3};
\node[yshift=1] at ($(a)!.5!(b')$) {\squared4};
}
\]
The commutativity of \squared2 and \squared3 is obvious.
Finally, the commutativity of \squared1 and \squared4 follows from the following general fact:
given a von Neumann algebra $A$, a unitary $u\in \U(A)$ and a vector $\xi\in L^2(A)$, one always has $L^2(\ad(u))(\xi)=u\xi u^*$.
\end{proof}

\noindent Note that despite its apparent asymmetry, the definition \eqref{eq: u:HH->H: the def.} is left-right symmetric. 
Indeed, we could have used
\[
L^2\cala(S^1_\top) \boxtimes_{\cala(S^1_\vdash)} L^2\cala(S^1_\top) \xrightarrow{v_\vdash v_\top^{-1}\boxtimes 1}
L^2\cala(S^1_\vdash) \boxtimes_{\cala(S^1_\vdash)}L^2\cala(S^1_\top)\cong\, L^2\cala(S^1_\top)
\]
instead of
\[
L^2\cala(S^1_\top) \boxtimes_{\cala(S^1_\vdash)} L^2\cala(S^1_\top) \xrightarrow{1\boxtimes v_\vdash v_\top^{-1}}
L^2\cala(S^1_\top) \boxtimes_{\cala(S^1_\vdash)}L^2\cala(S^1_\vdash)\cong\, L^2\cala(S^1_\top)
\]
in the middle of \eqref{eq: u:HH->H: the def.} as both are equal to
\[
\begin{split}
L^2\cala(S^1_\top) \boxtimes_{\cala(S^1_\vdash)} L^2\cala(S^1_\top) \xrightarrow{v_\vdash v_\top^{-1}\boxtimes v_\vdash v_\top^{-1}}
&\,L^2\cala(S^1_\vdash) \boxtimes_{\cala(S^1_\vdash)}L^2\cala(S^1_\vdash)\cong\\
\cong\, L^2\cala(S^1_\vdash) \xrightarrow{v_\top v^{-1}_\vdash} &\,L^2\cala(S^1_\top).
\end{split}
\]

Let $S_1$, $S_2$, $S_3$, $H_1$, $H_2$, $H_3$ be as in Lemma \ref{lem: u:HH->H}.
Letting $J_a$ denote the modular conjugation on $H_a=L^2(\cala(S_a^\top))$, $a\in\{1,2,3\}$,
we expect the isomorphism \eqref{eq: u:HH->H} to satisfy
\[
u \circ (J_1\boxtimes_{\alpha} J_2) = J_3\circ u,
\]
where $\alpha:\cala(K)\to\cala(K)$ is the anti-linear homomorphism given by $\alpha(x)=\cala(j)(x^*)$.
We only know how to prove the above equation up to sign:

\begin{lemma}\label{lem: ubarJJ=Ju}
Let $u$, $J_a$, and $\alpha$ be as above.
Then we have
\begin{equation}\label{eq: ubarJJ=Ju}
u \circ (J_1\boxtimes_{\alpha} J_2) = \pm J_3\circ u.
\end{equation}
\end{lemma}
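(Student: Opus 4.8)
The plan is to show that conjugating the antiunitary $J_1\boxtimes_\alpha J_2$ by the canonical isomorphism $u$ of Lemma \ref{lem: u:HH->H} yields a \emph{scalar} multiple of $J_3$, and then that this scalar is real. First I would record the geometric meaning of the modular conjugations. By construction $H_a=L^2\cala(S_a^\top)$ is the vacuum sector of $\cala$ on the circle $S_a$, with $\cala(S_a^\top)$ acting by left multiplication and, via $\cala(j)$, with the reflected algebra acting by right multiplication; consequently its modular conjugation $J_a$ is exactly the antiunitary implementing the reflection $j\colon(x,t)\mapsto(x,1-t)$, intertwining the left action of $\cala(S_a^\top)$ with the right action built from $\cala(j)$. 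Since $\alpha(x)=\cala(j)(x^*)$ is precisely the antiautomorphism that $J_a$ realizes on $\cala(K)$, the balanced antiunitary $J_1\boxtimes_\alpha J_2$ is well defined on $H_1\boxtimes_K H_2$, and it implements $j$ on the glued configuration; the same reflection is implemented on $H_3$ by $J_3$.

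Next I would argue that $u$ intertwines these two reflection operators up to a phase. Both $\phi:=u\circ(J_1\boxtimes_\alpha J_2)\circ u^{-1}$ and $J_3$ are antiunitary involutions of $H_3=L^2\cala(S_3^\top)$ carrying the left $\cala(S_3^\top)$-action to the right $\cala(S_3^\top)$-action by the same antiautomorphism (the one coming from $j$). Hence the linear unitary $\phi^{-1}J_3$ commutes with both the left and the right action of $\cala(S_3^\top)$. Because $\cala$ is irreducible, $\cala(S_3^\top)$ is a factor and $H_3$ is its standard form, so this relative commutant is $\IC$; therefore $\phi=\lambda J_3$ for a unique $\lambda\in\U(1)$, i.e. $u\circ(J_1\boxtimes_\alpha J_2)=\lambda\,J_3\circ u$.

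It then remains to prove $\lambda\in\{\pm1\}$, which I would do by testing the relation on the distinguished cyclic vectors. Each $J_a$ fixes the canonical vector $\Omega_a\in L^2\cala(S_a^\top)$ (the image of $1$), $J_3$ fixes $\Omega_3$, and $J_1\boxtimes_\alpha J_2$ fixes the fusion vector built from $\Omega_1$ and $\Omega_2$; so the task is to compare $u$ applied to that fusion vector with $\Omega_3$. Unwinding the definition \eqref{eq: u:HH->H: the def.}, the transport maps $L^2\cala(f_a)$ and $L^2\cala(f_3)^{-1}$ send canonical vectors to canonical vectors and, being induced by algebra isomorphisms, intertwine the modular conjugations exactly. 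The only delicate ingredient is the collapse isomorphism $L^2\cala(S^1_\top)\boxtimes_{\cala(S^1_\vdash)}L^2\cala(S^1_\vdash)\cong L^2\cala(S^1_\top)$, together with the vacuum identifications $v_\top,v_\vdash$, which matches the balanced product of the standard vectors with $\Omega$ only after insertion of a fourth root of the modular operator (a spatial derivative) of $\cala(S^1_\vdash)$. Since modular conjugation flips $\Delta^{1/4}$ to $\Delta^{-1/4}$, this comparison forces $\lambda^2=1$.

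The main obstacle is exactly this last point: controlling the interaction of the modular conjugations with the spatial-derivative factor that enters the fusion of the standard vectors. The abstract reduction to a phase is soft, and I do not expect any formal manipulation (for instance taking antilinear adjoints of the relation) to constrain $\lambda$ further, since that merely reproduces the same equation. Thus the sign must be extracted from the concrete modular-theoretic identity describing how $v_\top$, $v_\vdash$ and the collapse map act on the positive cone. I anticipate that careful bookkeeping there yields $\lambda^2=1$ but genuinely leaves the sign undetermined, which is precisely why the statement is only established up to the factor $\pm$ in \eqref{eq: ubarJJ=Ju}.
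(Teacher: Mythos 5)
Your opening reduction is sound, and it is in fact also the paper's engine: by irreducibility the space of antilinear intertwiners is one-dimensional, so $u\circ(J_1\boxtimes_\alpha J_2)=\lambda\,J_3\circ u$ for some $\lambda\in\U(1)$. But the entire content of the lemma is the refinement $\lambda\in\{\pm1\}$, and your mechanism for that step does not go through. First, the ``fusion vector built from $\Omega_1$ and $\Omega_2$'' does not exist canonically: $\cala(K)$ is a type III factor, the fusion $H_1\boxtimes_{\cala(K)}H_2$ is assembled from bounded intertwiners in $\hom_{\cala(K)}(L^2\cala(K),H_2)$ rather than from vectors, and there is no conditional expectation or canonical map carrying $\Omega_1\otimes\Omega_2$ into the fusion; any choice of intertwiner reintroduces exactly the phase you are trying to control. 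Second, the concluding inference (``since modular conjugation flips $\Delta^{1/4}$ to $\Delta^{-1/4}$, this forces $\lambda^2=1$'') is an assertion, not a derivation; note moreover that involutivity, on which your sketch partly leans, constrains nothing, since if $T$ is an antiunitary involution then so is $\mu T$ for every $\mu\in\U(1)$, because $(\mu T)^2=\mu\bar\mu\,T^2=T^2$. Your closing sentence (``I anticipate that careful bookkeeping there yields $\lambda^2=1$'') concedes the point: the $\pm$ is precisely what remains unproved.

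The paper takes a different and more robust route that avoids test vectors altogether. It factors $u$ through its definition \eqref{eq: u:HH->H: the def.} and verifies a ladder of four squares; three of them commute \emph{exactly} (not merely up to phase), using the reflection-equivariance of the vacuum-sector construction and the functoriality of $L^2$ (for instance the identity $L^2(\ad(u))\xi=u\xi u^*$). This localizes all ambiguity into the single square \eqref{eq: remaining square}, which compares two \emph{canonically defined} antiunitary involutions on the irreducible vacuum sector $H_0$, both implementing complex conjugation $j_0$ of $S^1$: the operator $H_0(j_0)$ coming from the genuine (non-projective) representation of M\"obius transformations, and $L^2(\beta)=L^2(\cala(j_0))\circ J$. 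Schur's lemma applied to these two concrete operators then yields the sign. If you want to repair your argument, replace the vacuum-vector computation by this strategy: prove phase-free commutation of every constituent square of $u$ except one, so that the residual scalar is realized as the discrepancy between two canonical implementations of the reflection on $H_0$, rather than as a matrix coefficient on a fused vector that the type III setting does not provide.
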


\begin{proof}
Recall from~\cite[\thmVaccumSector]{BDH(nets)} that 
$\varphi\mapsto H_0(\varphi)$ is a representation of the group of 
M\"obius transformations of $S^1$ on the vacuum sector $H_0=H_0(S^1)$.
The operator $H_0(\varphi)$ is unitary for $\varphi$ orientation preserving, and antiunitary for $\varphi$ orientation reversing.
Let $j_0:S^1\to S^1$ denote complex conjugation, and let $\beta:\cala(S^1_\vdash)\to \cala(S^1_\vdash)$ be given by $\beta(x)=\cala(j_0)(x^*)$.
Recall the definition \eqref{eq: u:HH->H: the def.} of the isomorphism $u$.
In order to prove \eqref{eq: ubarJJ=Ju}, it is enough to show that the following squares labelled \squared1\hspace{.02cm}, \squared2\hspace{.02cm}, \squared3\hspace{.02cm}, \squared4 commute up to sign
\begin{gather*}
\!\!\tikzmath{ \matrix [matrix of math nodes,column sep=4.6cm,row sep=1cm]
{ |(a)| L^2(\cala(S_1^\top)) \boxtimes_{\cala(K)} L^2(\cala(S_2^\top)) \pgfmatrixnextcell |(c)| H_0 \boxtimes_{\cala(S^1_\vdash)} H_0\\
|(b)|  {L^2(\cala(S_1^\top))} \boxtimes_{{\cala(K)}} {L^2(\cala(S_2^\top))} \pgfmatrixnextcell |(d)| {H_0} \boxtimes_{{\cala(S^1_\vdash)}} {H_0}\\ }; 
\draw[->] (a) -- node [left]	{$\scriptstyle J_1\boxtimes_{\alpha} J_2$} (b); \draw[->] (c) -- node [fill=white, inner sep=2] {$\scriptstyle H_0(j_0)\boxtimes_{\beta}H_0(j_0)$} (d);
\draw[->] (a) --node[above]{$\scriptstyle v_\top^{-1} L^2(\cala(f_1))\,\boxtimes\, v_\top^{-1} L^2(\cala(f_2))$} (c); \draw[->] (b) --node[above]{$\scriptstyle {v_\top^{-1} L^2(\cala(f_1))}\,\boxtimes\, {v_\top^{-1} L^2(\cala(f_2))}$} (d);
\node[yshift=2] at ($(a)!.5!(d)$) {\squared1};
}
\\
\!\!\tikzmath{ \matrix [matrix of math nodes,column sep=1.8cm,row sep=1cm]
{ |(a)| H_0 \boxtimes_{\cala(S^1_\vdash)} H_0 \pgfmatrixnextcell[-.5cm] |(b)| H_0 \boxtimes_{\cala(S^1_\vdash)} L^2(\cala(S^1_\vdash)) \pgfmatrixnextcell[-1cm] |(c)| H_0\pgfmatrixnextcell[.3cm] |(cc)| L^2(\cala(S_3^\top))\\
|(d)| {H_0} \boxtimes_{{\cala(S^1_\vdash)}} {H_0} \pgfmatrixnextcell |(e)| {H_0} \boxtimes_{{\cala(S^1_\vdash)}} {L^2(\cala(S^1_\vdash))} \pgfmatrixnextcell |(f)| {H_0} \pgfmatrixnextcell |(ff)| {L^2(\cala(S_3^\top))}\\}; 
\draw[->] (a) -- node [above]	{$\scriptstyle 1\,\boxtimes\, v_\vdash$} (b); \draw[->] (b) -- node [above]{$\scriptstyle \cong$} (c);
\draw[->] (d) -- node [above]	{$\scriptstyle 1\,\boxtimes\, {v}_\vdash$} (e); \draw[->] (e) -- node [above]{$\scriptstyle \cong$} (f);
\draw[->] (a) --node [fill=white, inner sep=2] {$\scriptstyle H_0(j_0)\boxtimes_{\beta}H_0(j_0)$} (d);
\draw[->] (b) --node [xshift=-2, fill=white, inner sep=2] {$\scriptstyle H_0(j_0)\boxtimes_{\beta}L^2(\beta)$} (e);
\draw[->] (c) --node [fill=white, inner sep=2]{$\scriptstyle H_0(j_0)$} (f);
\draw[->] (cc) --node [right] {$\scriptstyle J_3$} (ff);
\draw[->] (c) -- node [above]{$\scriptstyle L^2(\cala(f_3))^{-1}v_\top$} (cc);
\draw[->] (f) -- node [above]{$\scriptstyle {L^2(\cala(f_3))}^{-1}v_\top$} (ff);
\node[yshift=2, xshift=-4] at ($(a)!.5!(e)$) {\squared2};
\node[yshift=2, xshift=7] at ($(b)!.5!(f)$) {\squared3};
\node[yshift=2] at ($(c)!.5!(ff)$) {\squared4};
}
\end{gather*}
where $f_1$, $f_2$, $f_3$ are as in the proof of Lemma \ref{lem: u:HH->H}.
The squares \squared1 and \squared4 commute 
by~\cite[\thmVaccumSector~\propvacuumsectorL]{BDH(nets)}, 
and \squared3 is easily seen to be commutative.
So the only square that remains is \squared2\hspace{.03cm}, 
which is equivalent to
\begin{equation}\label{eq: remaining square}
\tikzmath{ \matrix [matrix of math nodes,column sep=1.8cm,row sep=1cm]
{
|(a)| H_0 \pgfmatrixnextcell |(b)| L^2(\cala(S^1_\vdash)) \\
|(d)| {H_0} \pgfmatrixnextcell |(e)| {L^2(\cala(S^1_\vdash))} \\}; 
\draw[->] (a) -- node [above, pos=.6]	{$\scriptstyle v_\vdash$} (b);
\draw[->] (d) -- node [above, pos=.6]	{$\scriptstyle {v}_\vdash$} (e);
\draw[->] (a) --node [left] {$\scriptstyle H_0(j_0)$} (d);
\draw[->] (b) --node [right] {$\scriptstyle L^2(\beta)$} (e);
}\,\,\,
\end{equation}
Note that 
by~\cite[\thmVaccumSector~\propvacuumsectorcovariant]{BDH(nets)} 
(see also \cite[\propprojectiveimplementationdiffeo]{BDH(nets)}) and 
\cite[\lemLconformalimplementaionOP]{BDH(nets)} respectively,
both $H_0(j_0)$ and $L^2(\beta)=L^2(\cala(j_0))\circ J$ implement $j_0$ 
(\cite[\defimplements]{BDH(nets)}).
Since both of them are antilinear involutions that  implement $j_0$, by 
Schur's lemma, they are equal up to sign.
\end{proof}

The isomorphism $u: H_1 \boxtimes_K H_2 \,\cong\, H_3$ from \eqref{eq: u:HH->H} satisfies a certain version of associativity, which we now describe.
Let $\{I_1,I_2, I_3\}$ be a $c$-cover of $I$.
We call the boundary points $p$, $q$, $r$, $s$ and assume that they are arranged as follows:
\[
\tikzmath{
\coordinate (a) at (0,0);\coordinate (b) at (3,0);\coordinate (c) at (6,0);\coordinate (d) at (9,0);
\draw[line width=.7] (a)node[left, xshift=-15, yshift=2]{$I\,:$}
      node[above,yshift=1, scale=1.1]{$\scriptstyle p$} to[bend left=2]node[above, scale=1.3] {$\scriptstyle I_1$} 
(b)node[above,yshift=1, scale=1.1]{$\scriptstyle q$} to[bend right=2]node[above, yshift=1, scale=1.3] {$\scriptstyle I_2$} 
(c)node[above,yshift=1, scale=1.1] {$\scriptstyle r$} to[bend left=2]node[above, scale=1.3] {$\scriptstyle I_3$} (d)node[above,yshift=1] {$\scriptstyle s$};
\fill (a) circle (.05) (b) circle (.05) (c) circle (.05) (d) circle (.05);
}
\]
Consider the circles
\[\begin{split}
S_1:=\partial (I_1\times [0,1]),\quad S_2&:=\partial (I_2\times [0,1]),\quad S_3:=\partial (I_3\times [0,1]),\\
S_{12}:=\partial ((I_1\cup I_2)\times [0,1]),\quad S_{23}&:=\partial ((I_2\cup I_3)\times [0,1]),\quad S_{123}:=\partial (I\times [0,1]),
\end{split}\]
and let
\[\begin{split}
H_1:=H_0(S_1),\quad H_2&:=H_0(S_2),\quad\, H_3:=H_0(S_3),\\ 
H_{12}:=H_0(S_{12}),\quad H_{23}&:=H_0(S_{23}),\quad H_{123}:=H_0(S_{123})
\end{split}\]
be the vacuum sectors associated the upper halves and to the involutions $j:(x,t)\mapsto(x,1-t)$.
Let also $K:=\{q\}\times [0,1]$ and $L:=\{r\}\times [0,1]$.
We then have the following four instances of the isomorphism \eqref{eq: u:HH->H}:
\begin{equation*}
\begin{split}
u_{12}: H_1 \boxtimes_K H_2 \,\cong\, H_{12},&\qquad
u_{23}: H_2 \boxtimes_L H_3 \,\cong\, H_{23},\\
u_{12,3}: H_{12} \boxtimes_L H_3 \,\cong\, H_{123},&\qquad
u_{1,23}: H_1 \boxtimes_K H_{23} \,\cong\, H_{123}.
\end{split}
\end{equation*}

\begin{lemma}\label{lem: associativity of u}
The above maps fit into a commutative diagram
\[
\tikzmath{ \matrix [matrix of math nodes,column sep=2.6cm,row sep=1cm]
{|(a)| H_1 \boxtimes_K H_2 \boxtimes_L H_3 \pgfmatrixnextcell |(b)| H_{12} \boxtimes_L H_3\\
|(c)| H_1 \boxtimes_K H_{23}  \pgfmatrixnextcell |(d)| H_{123}\\ }; 
\draw[->] (a) -- node [left]	{$\scriptstyle 1\boxtimes u_{23}$} (c); \draw[->] (b) -- node [right] {$\scriptstyle u_{12,3}$} (d);
\draw[->] (a) --node[above]{$\scriptstyle u_{12}\boxtimes\, 1$} (b); \draw[->] (c) --node[above]{$\scriptstyle u_{1,23}$} (d);
}
\]
\end{lemma}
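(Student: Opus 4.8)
The plan is to exploit the canonicity of $u$ established in Lemma \ref{lem: u:HH->H}: since each of $u_{12}$, $u_{23}$, $u_{12,3}$, $u_{1,23}$ is a genuine (not merely up-to-phase) unitary that does not depend on the auxiliary diffeomorphisms to the standard circle, I am free to pick those diffeomorphisms simultaneously and compatibly for all six circles $S_1,S_2,S_3,S_{12},S_{23},S_{123}$, and then read off both composites as explicit sequences of canonical maps on the standard circle $S^1$. Both composites are unitary isomorphisms of $S_{123}$-sectors $H_1\boxtimes_K H_2\boxtimes_L H_3\to H_{123}$; since the net is irreducible, $H_{123}=H_0(S_{123})$ is irreducible and so, by Schur's lemma, the two composites already agree up to a phase. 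The remaining task is to pin that phase to $1$, and the strategy is to do so by making the two composites \emph{literally} equal after transport to $S^1$.

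First I would choose the diffeomorphisms. Using the freedom in Lemma \ref{lem: u:HH->H} (the defining conditions only constrain the diffeomorphisms near $t=\tfrac12$ on the non-shared ends), I can arrange that the map $f_{12}\colon S_{12}\to S^1$ used as the \emph{target} identification in $u_{12}$ coincides with the map used as the \emph{source} identification of $S_{12}$ in $u_{12,3}$, and likewise that a single map $f_{23}$ is shared between $u_{23}$ and $u_{1,23}$. Concretely, this requires choosing $f_1,f_2,f_3$ so that each shared edge ($K$ and $L$) is sent to the full arc $S^1_\dashv$ or $S^1_\vdash$ for \emph{all} $t$, not merely near $t=\tfrac12$; this is permissible because those stronger conditions are precisely the ones imposed on a shared edge in Lemma \ref{lem: u:HH->H}. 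With these choices, the factor $L^2\cala(f_{12})^{-1}$ that ends $u_{12}$ is immediately undone by the factor $L^2\cala(f_{12})$ that begins $u_{12,3}$, and similarly for $f_{23}$; after this telescoping cancellation, both composites are expressed entirely in terms of $L^2\cala(f_1),L^2\cala(f_2),L^2\cala(f_3)$ on the three outer tensor factors, a single $L^2\cala(f_{123})^{-1}$ at the end, and a collection of the canonical maps $v_\top,v_\vdash$ together with the standard fusion isomorphisms $L^2\cala(S^1_\top)\boxtimes_{\cala(S^1_\vdash)}L^2\cala(S^1_\vdash)\cong L^2\cala(S^1_\top)$ on the vacuum sector $H_0$ of the standard circle.

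The heart of the argument is then a coherence statement purely on $S^1$. After the cancellation, the composite $A=u_{12,3}\circ(u_{12}\boxtimes 1)$ performs the collapse of the edge $K$ before that of the edge $L$, while $B=u_{1,23}\circ(1\boxtimes u_{23})$ does them in the opposite order; in both cases the collapse at each edge is effected by the same local operations $1\boxtimes v_\top^{-1}$, $1\boxtimes v_\vdash$ and the standard fusion isomorphism. Since the operations associated to the two distinct edges $K$ and $L$ act on disjoint tensor factors, they commute, and the equality $A=B$ is exactly an instance of the associativity of Connes fusion in the form \eqref{eq: ass for graph fusion} (cf.\ also Lemma \ref{lem: HKK}), combined with the functoriality of $L^2$ and the fact that $v_\top,v_\vdash$ are morphisms of sectors, hence compatible with the algebra actions along which the fusions are taken. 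I would phrase this cleanly by introducing a single canonical ``triple gluing'' isomorphism $u_{123}\colon H_1\boxtimes_K H_2\boxtimes_L H_3\to H_{123}$, built from the three arcs at once, and checking that both $A$ and $B$ equal $u_{123}$; each such equality is a two-term telescoping of the kind just described.

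The main obstacle I anticipate is the bookkeeping of the arcs: one must verify that, with the compatible choices above, the left, right, and top arcs of $S^1$ match correctly across all three gluings, and in particular that the middle interval $I_2$ — whose edge $K$ is ``outgoing'' for one gluing and whose edge $L$ is ``incoming'' for the next — is treated consistently in both orders. Getting the orientations and the roles of $v_\top$ versus $v_\vdash$ right at each edge, so that the two composites become the \emph{same} sequence of canonical maps rather than merely the same up to a phase, is the delicate point; once the diffeomorphisms are fixed so as to make the telescoping exact, no phase ambiguity can enter, because every map in sight is a genuinely canonical unitary.
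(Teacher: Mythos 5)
Your overall strategy -- use the canonicity from Lemma \ref{lem: u:HH->H} to reduce both composites to a single simultaneous collapse, exploiting that the operations at the two edges $K$ and $L$ act on disjoint tensor factors -- is the paper's strategy. But the specific mechanism you propose for making the comparison exact contains a genuine geometric obstruction. You want one diffeomorphism $f_{12}\colon S_{12}\to S^1$ to serve simultaneously as the target identification ending $u_{12}$ and as the source identification of $S_{12}$ beginning $u_{12,3}$, so that $L^2\cala(f_{12})$ cancels $L^2\cala(f_{12})^{-1}$. These two roles are incompatible. The source role in $u_{12,3}$ forces $f_{12}$ to send the edge $L=\{r\}\times[0,1]$ onto the \emph{full} arc $S^1_\dashv$, for all $t$, not just near $t=\tfrac12$: the shared-edge condition in Lemma \ref{lem: u:HH->H} is an all-$t$ condition because the collapse $L^2\cala(S^1_\top)\boxtimes_{\cala(S^1_\vdash)}L^2\cala(S^1_\vdash)\cong L^2\cala(S^1_\top)$ takes place along the entire half-circle. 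The target role means $f_{12}=f_1|\cup f_2|$ where, by the definition of $u_{12}$, $f_2$ already sends the edge $K$ onto the full arc $S^1_\vdash$. A diffeomorphism $f_2\colon S_2\to S^1$ cannot send \emph{both} vertical edges of $S_2$ onto the two full half-circles, since then the two horizontal edges of $S_2$ would have nowhere injective to go; so the $L$-edge lands in a proper subarc of $S^1_\dashv$, and the exact telescoping you rely on cannot be arranged. A correction diffeomorphism $\tilde f_{12}\circ f_{12}^{-1}$ survives, the two composites are not literally identical strings of maps under your choices, and the phase-pinning argument as stated does not close.

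The repair is exactly what the paper does: never identify the intermediate circles $S_{12}$, $S_{23}$ with $S^1$ at all. Only the outer circles get diffeomorphisms $f\colon S_1\to S^1$, $g\colon S_3\to S^1$, with $f(q,t)=\psi_\dashv(t)$ and $g(r,t)=\psi_\vdash(t)$ for all $t$ (these all-$t$ conditions sit on \emph{distinct} circles, so no conflict arises). Using the left-right symmetry of definition \eqref{eq: u:HH->H: the def.} noted after Lemma \ref{lem: u:HH->H}, each gluing map is rewritten as a one-sided collapse of an outer factor followed by a relabeling diffeomorphism of upper halves: for instance $u_{12}$ becomes the collapse $v_\vdash v_\top^{-1}L^2\cala(f)\boxtimes 1$ of $H_1$ onto $H_2$, followed by $L^2\cala((f^{-1}\circ\psi_\vdash)\cup\mathrm{id})\colon H_2\to H_{12}$, and $u_{12,3}$ becomes a right-collapse of $H_3$ followed by $L^2\cala(\mathrm{id}\cup(g^{-1}\circ\psi_\dashv))$. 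With this rewriting, both composites are manifestly equal to the simultaneous two-sided collapse $v_\vdash v_\top^{-1}L^2\cala(f)\boxtimes 1\boxtimes v_\vdash v_\top^{-1}L^2\cala(g)$ followed by the single relabeling $L^2\cala((f^{-1}\circ\psi_\vdash)\cup\mathrm{id}\cup(g^{-1}\circ\psi_\dashv))$, and no phase ambiguity enters. If you replace your exact-matching step by this symmetric rewriting, the rest of your argument -- disjointness of the two edges, functoriality of $L^2$, canonicity of all maps -- goes through as you intended.
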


\begin{proof}
Letting $\psi_\vdash:[0,1]\to S^1_\vdash$ and $\psi_\dashv:[0,1]\to S^1_\dashv$ be as in the proof of Lemma~\ref{lem: u:HH->H},
and given diffeomorphisms $f:S_1\to S^1$, $g:S_3\to S^1$ subject to
\[
\begin{split}
\tikzmath{\node[scale=.9]{$\exists\, \varepsilon>0:\,\,\forall t\in \textstyle[\frac12-\varepsilon,\frac12+\varepsilon]:$};}&
\,\,\, f(p,t)=\psi_\vdash(t),
\,\,\, g(s,t)=\psi_\dashv(t),\\
\quad\tikzmath{\node[scale=.9]{$\forall t\in \textstyle[0,1]:$};}&
\,\,\, f(q,t)=\psi_\dashv(t),
\,\,\, g(r,t)=\psi_\vdash(t),\qquad
\end{split}
\]
one can write $u_{12,3}\circ (u_{12}\boxtimes 1)$ as 
\[
\begin{split}
H_1 \boxtimes_K H_2 \boxtimes_L H_3
\xrightarrow{v_\vdash v_\top^{-1}L^2\cala(f) \,\boxtimes\, 1\,\boxtimes\, 1}
L^2\cala(S^1_\vdash) \boxtimes_{\cala(S^1_\vdash)} H_2 \boxtimes_L H_3\\
\cong H_2 \boxtimes_L H_3
\xrightarrow{L^2\cala((f^{-1}\circ\, \psi_\vdash)\cup\, \mathrm{id}) \,\boxtimes\, 1} H_{12} \boxtimes_L H_3
\xrightarrow{1\,\boxtimes\, v_\vdash v_\top^{-1} L^2\cala(g)}
\\\to H_{12} \boxtimes_{\cala(S^1_\vdash)} L^2\cala(S^1_\vdash)
\cong H_{12}
\xrightarrow{L^2\cala(\mathrm{id}\,\cup (g^{-1}\circ\,\psi_\dashv))} H_{123}
\end{split}
\]
and $u_{1,23}\circ (1\boxtimes u_{23})$ as
\[
\begin{split}
\,\,H_1 \boxtimes_K H_2 \boxtimes_L H_3
\xrightarrow{1 \,\boxtimes\, 1\,\boxtimes\, v_\vdash v_\top^{-1} L^2\cala(g)}
H_1 \boxtimes_K H_2 \boxtimes_{\cala(S^1_\vdash)} L^2\cala(S^1_\vdash)\:\\
\cong H_1 \boxtimes_K H_2
\xrightarrow{1 \,\boxtimes\, L^2\cala(\mathrm{id}\,\cup (g^{-1}\circ\,\psi_\dashv))} H_1 \boxtimes_K H_{23}
\xrightarrow{v_\vdash v_\top^{-1} L^2\cala(f)\,\boxtimes\, 1} \:\\
\to L^2\cala(S^1_\vdash) \boxtimes_{\cala(S^1_\vdash)} H_{23}
\cong H_{23}
\xrightarrow{L^2\cala((f^{-1}\circ\, \psi_\vdash)\cup\, \mathrm{id})} H_{123}.
\end{split}
\]
Both composites are equal to
\begin{gather*}
H_1 \boxtimes_K H_2 \boxtimes_L H_3
\xrightarrow{v_\vdash v_\top^{-1}L^2\cala(f)\, \boxtimes\, 1\, \boxtimes\, v_\vdash v_\top^{-1}L^2\cala(g)}
\\L^2\cala(S^1_\vdash) \boxtimes_{\cala(S^1_\vdash)} H_2 \boxtimes_{\cala(S^1_\vdash)} L^2\cala(S^1_\vdash)
\cong H_2
\xrightarrow{L^2\cala((f^{-1}\circ\, \psi_\vdash)\cup\,\mathrm{id}\,\cup (g^{-1}\circ\,\psi_\dashv))}
H_{123}
\end{gather*}
\end{proof}

We can now prove that the Hilbert space associated to $\Sigma=S\times[0,1]$ is well defined up to unique unitary isomorphism.
Given a $c$-cover $\cali$ of $S$, let 
\[
\qquad\quad H_\Sigma^{(\cali)}:=\tikzmath{  
\node (a) at (0,0) {$H_1\,\boxtimes_{A_2}\cdots\,\boxtimes_{A_n}\!H_n\,\,\boxtimes_{A_1}$};
\def\dd{.4}
\def\ll{.35}
\def\rr{.25}
\draw[dashed, rounded corners = 5] (a.east) -- ++(\rr,0) -- ++(0,-\dd) -- ($(a.west) + (-\ll,-\dd)$) -- +(0,\dd) -- (a.west);
}
\,\in\,\Rep_{S\sqcup \bar S}(\cala)
\]
be the Hilbert space defined in \eqref{eq:cyc fus}.

\begin{proposition}\label{prop: H_sig is canonical}
Given two $c$-covers $\cali_1$ and $\cali_2$ of a circle $S$, there is a canonical unitary isomorphism of $S\sqcup \bar S$-sectors
\[
u^{(\cali_1,\cali_2)}:\,H_\Sigma^{(\cali_1)}\longrightarrow H_\Sigma^{(\cali_2)}.
\]
Moreover, given three $c$-covers $\cali_1$, $\cali_2$, $\cali_3$ of $S$, the following diagram commutes:
\begin{equation}\label{eq:: H_sig is canonical}
\tikzmath{\node (1) at (-.08,0) {$H_\Sigma^{(\cali_1)}$};\node (2) at (2.4,.8) {$H_\Sigma^{(\cali_2)}$};\node (3) at (4.9,0) {$H_\Sigma^{(\cali_3)}$.};\draw[->] (1) --node[above, xshift=-.5, yshift=1]{$\scriptstyle u^{(\cali_1,\cali_2)}$} (2);\draw[->] (2) --node[above, xshift=4, yshift=1]{$\scriptstyle u^{(\cali_2,\cali_3)}$} (3);\draw[->] (1) --node[below, yshift=-0]{$\scriptstyle u^{(\cali_1,\cali_3)}$} (3);}
\end{equation}
\end{proposition}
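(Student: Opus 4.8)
The plan is to reduce the entire statement to the canonical two-fold fusion isomorphism $u$ of Lemma~\ref{lem: u:HH->H}, using common refinements of $c$-covers. Since $S$ is a circle we have $\partial S=\emptyset$, so there are no constraints on local coordinates, and any two $c$-covers $\cali_1,\cali_2$ admit a common refinement $\calk$ (take the union of all subdivision points, equipping each interior point with the coordinate germ coming from whichever of $\cali_1,\cali_2$ it belongs to). Thus it suffices to construct, for every refinement $\calk\prec\cali$, a canonical recombination isomorphism $\rho^{(\calk,\cali)}:H_\Sigma^{(\calk)}\to H_\Sigma^{(\cali)}$, and then to set $u^{(\cali_1,\cali_2)}:=\rho^{(\calk,\cali_2)}\circ(\rho^{(\calk,\cali_1)})^{-1}$.

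To build $\rho^{(\calk,\cali)}$, note that each interval $I_j$ of $\cali$ is a union of consecutive intervals of $\calk$. Using the associativity of graph fusion \eqref{eq: ass for graph fusion}, I would first regroup the cyclic fusion $H_\Sigma^{(\calk)}$ so that the factors belonging to a single $I_j$ are fused together, and then apply the canonical isomorphism $u$ of Lemma~\ref{lem: u:HH->H} iteratively to identify each such group with the single vacuum sector $H_0(\partial I_j\times[0,1])$; the result is $H_\Sigma^{(\cali)}$. This is the canonical upgrade of the non-canonical \eqref{hoAho=Ho}. One must check that combining $k$ consecutive sectors is independent of the bracketing in which the binary fusions $u$ are performed: Lemma~\ref{lem: associativity of u} supplies this for a single three-fold recombination, and a pentagon-style induction promotes it to arbitrary bracketings. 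Since $\cali$ has at least two intervals, all intermediate cyclic fusions retain at least two factors, so Warning~\ref{warn: cyc fus with one} never applies, and each recombination stays within a proper sub-arc of $S$ that is a genuine interval, so that Lemmas~\ref{lem: u:HH->H} and \ref{lem: associativity of u} apply verbatim.

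Next I would verify transitivity $\rho^{(\calk,\cali)}\circ\rho^{(\call,\calk)}=\rho^{(\call,\cali)}$ for $\call\prec\calk\prec\cali$, once more from \eqref{eq: ass for graph fusion} and Lemma~\ref{lem: associativity of u}. Transitivity makes $u^{(\cali_1,\cali_2)}$ independent of the chosen common refinement: given two common refinements $\calk,\calk'$ of $\cali_1,\cali_2$, pass to a common refinement $\calk''$ of $\calk$ and $\calk'$; transitivity gives $\rho^{(\calk'',\cali_i)}=\rho^{(\calk,\cali_i)}\circ\rho^{(\calk'',\calk)}$, and the factors $\rho^{(\calk'',\calk)}$ cancel in the composite defining $u^{(\cali_1,\cali_2)}$, so the definition via $\calk$ agrees with that via $\calk''$, hence with that via $\calk'$. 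The cocycle identity \eqref{eq:: H_sig is canonical} then follows by choosing a single $\calk$ refining all of $\cali_1,\cali_2,\cali_3$: one computes $u^{(\cali_2,\cali_3)}\circ u^{(\cali_1,\cali_2)}=\rho^{(\calk,\cali_3)}\circ(\rho^{(\calk,\cali_2)})^{-1}\circ\rho^{(\calk,\cali_2)}\circ(\rho^{(\calk,\cali_1)})^{-1}=\rho^{(\calk,\cali_3)}\circ(\rho^{(\calk,\cali_1)})^{-1}=u^{(\cali_1,\cali_3)}$. Finally, each $\rho$ is an isomorphism of $S\sqcup\bar S$-sectors because $u$ is, so the same holds for $u^{(\cali_1,\cali_2)}$.

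The main obstacle I anticipate is precisely the bracketing-independence and transitivity of the iterated recombination. Lemma~\ref{lem: associativity of u} controls only the elementary associativity of three factors, so the induction must be organized so that every intermediate fusion is taken over the correct algebra $A_i$ and sits inside a sub-interval of $S$ that never wraps all the way around the circle; otherwise the interval-based Lemmas~\ref{lem: u:HH->H} and \ref{lem: associativity of u} do not apply to the cyclic fusion. Keeping track of this cyclic bookkeeping while regrouping via \eqref{eq: ass for graph fusion} is the delicate part of the argument.
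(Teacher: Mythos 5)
Your overall strategy --- canonical recombination isomorphisms for refinements built from Lemma \ref{lem: u:HH->H}, bracketing-independence via Lemma \ref{lem: associativity of u}, and a transitivity/cocycle argument on top --- is exactly the paper's, and that part of your write-up is sound. But there is a genuine gap at your very first step: the claim that any two $c$-covers of $S$ admit a common refinement is false. A $c$-cover carries germs of local coordinates at every boundary point of its intervals (Definition \ref{def: c-cover}), and the relation $\calk\prec\cali$ requires the germs of $\calk$ to \emph{agree} with those of $\cali$ at all boundary points of $\cali$. If $\cali_1$ and $\cali_2$ share a boundary point $p$ but carry different coordinate germs there --- for instance, if they have the same underlying intervals but different germs at a single point --- then a common refinement $\calk$ would have to agree with both germs at $p$ simultaneously, which is impossible. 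Your recipe ``equip each point with the germ coming from whichever of $\cali_1,\cali_2$ it belongs to'' is ambiguous precisely at such shared points, and in the conflicting case no choice works at all. This is not a degenerate case one may ignore: the germs genuinely enter the construction, since the smooth structures on the circles $S_i=\partial(I_i\times[0,1])$, and hence the vacuum sectors $H_0(S_i)$ and the Hilbert space $H_\Sigma^{(\cali)}$ itself, depend on them; so the proposition must in particular compare $H_\Sigma^{(\cali_1)}$ and $H_\Sigma^{(\cali_2)}$ when only the germs differ, which your construction cannot do.

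The paper repairs this with a zigzag rather than a single common refinement (compare the end of the proof of Lemma \ref{lem: H_Sigma^(cali)= H_Sigma^(calj)}): choose $c$-covers $\calj$, $\cali'$, $\calj'$ with $\cali_1\succ\calj\prec\cali'\succ\calj'\prec\cali_2$ --- such a zigzag always exists, since one can refine $\cali_1$ by adding points away from the boundary points of both covers, coarsen so as to forget the offending boundary points of $\cali_1$, and then refine again using the germs of $\cali_2$ --- and set $u^{(\cali_1,\cali_2)}:= v^{(\calj',\cali_2)}\,(v^{(\calj',\cali')})^{-1}\,v^{(\calj,\cali')}\,(v^{(\calj,\cali_1)})^{-1}$, where the $v^{(\calk,\cali)}$ are your recombination isomorphisms for refinements, obtained as in \eqref{eq: from cali to calj} by iterating \eqref{eq: u:HH->H}, with independence of the subdivision order supplied by Lemma \ref{lem: associativity of u}, exactly as you describe. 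One then verifies independence of the choice of $\calj,\cali',\calj'$ and the cocycle identity for these four-fold composites. In the special case where a common refinement $\calk$ does exist, the zigzag composite collapses to your formula $v^{(\calk,\cali_2)}(v^{(\calk,\cali_1)})^{-1}$, so your argument is correct whenever the coordinate germs are compatible; the zigzag is needed precisely to handle the remaining cases.
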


\begin{proof}
Let $\cali_1$ and $\cali_2$ be two $c$-covers of $S$.
If $\cali_1$ is a refinement of $\cali_2$, then we may pick a sequence of $c$-covers
$\cali_1=\calj_1\prec \calj_2\prec \ldots\prec \calj_n=\cali_2$
such that each $\calj_n$ is obtained from the next one $\calj_{n+1}$
by subdividing some interval in two.
Using \eqref{eq: u:HH->H} in \eqref{eq: from cali to calj} produces a canonical
isomorphisms of $S\sqcup \bar S$-sectors $v_i:H_\Sigma^{(\calj_i)}\to H_\Sigma^{(\calj_{i+1})}$.
By Lemma \ref{lem: associativity of u}, the composite
\[
v^{(\cali_1,\cali_2)}:=v_{n-1}\circ \ldots \circ v_2\circ v_1\,\,\,:\,\,\,\,H_\Sigma^{(\cali_1)}\longrightarrow H_\Sigma^{(\cali_2)}
\]
is independent of the choice of intermediate $c$-covers and it is straightforward to check that
the above isomorphisms satisfy $v^{(\cali_2,\cali_3)}\circ v^{(\cali_1,\cali_2)}=v^{(\cali_1,\cali_3)}$.

Now given two arbitrary $c$-covers $\cali_1$ and $\cali_2$, we proceed as in Lemma~\ref{lem: H_Sigma^(cali)= H_Sigma^(calj)}.
Chose $c$-covers $\calj$, $\cali'$, $\calj'$ so that $\cali_1\succ \calj\prec \cali'\succ \calj'\prec \cali_2$
and set
\begin{equation*}
u^{(\cali_1,\cali_2)}:=v^{(\calj',\cali_2)} (v^{(\calj',\cali')})^{-1} v^{(\calj,\cali')} (v^{(\calj,\cali_1)})^{-1}\,\,:\,\,\,H_\Sigma^{(\cali_1)}\longrightarrow H_\Sigma^{(\cali_2)}.
\end{equation*}
It is then fairly easy to verify that $u^{(\cali_1,\cali_2)}$ is independent of the choice of intermediate $c$-covers $\calj$, $\cali'$, $\calj'$,
and that it satisfies $u^{(\cali_2,\cali_3)}\circ u^{(\cali_1,\cali_2)}=u^{(\cali_1,\cali_3)}$.
\end{proof}

Given an interval $I$ with local coordinates around its endpoints, let us denote by $\partial(I\times [0,1])_\top:=(I\times [{\textstyle \frac12},1])\cap \partial (I\times [0,1])$, the upper half of $\partial(I\times [0,1])$.
As a corollary, we have the following result:

\begin{theorem}\label{thm: any cover of the circle}
Let $S$ be a circle. Then the Hilbert space
$H_\Sigma\in\Rep_{S\sqcup \bar S}(\cala)$
associated to the annulus $\Sigma=S\times[0,1]$ is well defined up to canonical unitary isomorphism.

For every $c$-cover $\cali=\{I_1,\ldots, I_n\}$ (Definition \ref{def: c-cover}) of the circle $S$,
there is a canonical unitary isomorphism of $S\sqcup \bar S$-sectors
\smallskip
\begin{equation}\label{eq:  u^(cali) }
u^{(\cali)}:\,H_\Sigma\longrightarrow H_\Sigma^{(\cali)}=\tikzmath{  
\node (a) at (0,0) {$H_1\,\boxtimes_{A_2}\cdots\,\boxtimes_{A_n}\!H_n\,\,\boxtimes_{A_1}$};
\def\dd{.4}
\def\ll{.35}
\def\rr{.25}
\draw[dashed, rounded corners = 5] (a.east) -- ++(\rr,0) -- ++(0,-\dd) -- ($(a.west) + (-\ll,-\dd)$) -- +(0,\dd) -- (a.west);
}
\smallskip
\end{equation}
where
\begin{gather*}
H_i=L^2\cala\big(\partial (I_i\times [0,1])_\top\big)\in\Rep_{\partial I_i\times [0,1]}(\cala)\\
\text{and}\qquad A_i=\cala((I_{i-1}\cap I_i)\times [0,1]) 
\end{gather*}
are as in \eqref{eq:cyc fus}.

Moreover, given two $c$-covers $\cali_1$, $\cali_2$, the composite $u^{(\cali_2)}\circ (u^{(\cali_1)})^{-1}$ is the map $u^{(\cali_1,\cali_2)}$ from Proposition \ref{prop: H_sig is canonical}.
\end{theorem}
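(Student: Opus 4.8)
The plan is to recognize that Theorem \ref{thm: any cover of the circle} is essentially a repackaging of Proposition \ref{prop: H_sig is canonical}, and that all the genuine content has already been established there. Indeed, Proposition \ref{prop: H_sig is canonical} supplies exactly the data demanded by the definition of ``well defined up to canonical unitary isomorphism'' from Section \ref{sec: ``up to non-canonical isomorphism''}: the parametrizing set is the set of $c$-covers of $S$ (Definition \ref{def: c-cover}), the Hilbert spaces are the cyclic fusions $H_\Sigma^{(\cali)}$ of \eqref{eq:cyc fus}, the transition isomorphisms are the $u^{(\cali_1,\cali_2)}$, and the coherence triangle that must commute is precisely \eqref{eq:: H_sig is canonical}. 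Thus the first assertion of the theorem is immediate.

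To produce an honest canonical representative together with the isomorphisms $u^{(\cali)}$, first I would form the limit of the system $(\{H_\Sigma^{(\cali)}\},\{u^{(\cali_1,\cali_2)}\})$. Concretely, set
\[
H_\Sigma := \Big\{(\xi_\cali)_\cali \in \textstyle\prod_\cali H_\Sigma^{(\cali)} \,:\, u^{(\cali_1,\cali_2)}\xi_{\cali_1} = \xi_{\cali_2}\ \text{for all}\ \cali_1,\cali_2\Big\},
\]
equipped with the inner product read off in any single coordinate (these agree because the $u^{(\cali_1,\cali_2)}$ are unitary). Since each $u^{(\cali_1,\cali_2)}$ is an isomorphism of $S\sqcup\bar S$-sectors, the coordinatewise actions of the algebras $\cala(J)$, $J\subset S\sqcup\bar S$, agree across coordinates and hence descend to $H_\Sigma$, furnishing it with a canonical $S\sqcup\bar S$-sector structure.

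Next I would define $u^{(\cali)}:H_\Sigma\to H_\Sigma^{(\cali)}$ to be the projection onto the $\cali$-th coordinate. Its inverse sends $\eta\in H_\Sigma^{(\cali)}$ to the family $(u^{(\cali,\cali')}\eta)_{\cali'}$, which is a coherent element of $H_\Sigma$ by the cocycle relation $u^{(\cali',\cali'')}\circ u^{(\cali,\cali')}=u^{(\cali,\cali'')}$; unitarity of $u^{(\cali)}$ is then immediate, as the norm on $H_\Sigma$ is computed in the $\cali$-th coordinate. By construction $u^{(\cali)}$ is an isomorphism of $S\sqcup\bar S$-sectors. The identification of the $H_i$ and $A_i$ appearing in the theorem statement with those in \eqref{eq:cyc fus} is just the defining description of the vacuum sector from the start of Section \ref{sec:nets}, namely $H_0(S_i)=L^2\cala(\partial(I_i\times[0,1])_\top)$.

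Finally, the compatibility assertion is a one-line check: for $\eta\in H_\Sigma^{(\cali_1)}$ one has $(u^{(\cali_1)})^{-1}\eta=(u^{(\cali_1,\cali')}\eta)_{\cali'}$, whence $u^{(\cali_2)}\circ(u^{(\cali_1)})^{-1}\eta = u^{(\cali_1,\cali_2)}\eta$, exactly the map of Proposition \ref{prop: H_sig is canonical}. I expect no genuine obstacle here: the entire analytic and geometric content---canonicality of the transition maps, their compatibility with the sector structures, and the cocycle condition---is furnished by Proposition \ref{prop: H_sig is canonical}, and what remains is the formal limit construction. The only points requiring a word of care are that $H_\Sigma$ is nonempty and complete (both inherited through any single projection, which is a bijective isometry onto $H_\Sigma^{(\cali)}$) and that the sector actions are coordinate-independent, which is guaranteed by the proposition's statement that the $u^{(\cali_1,\cali_2)}$ are maps of $S\sqcup\bar S$-sectors.
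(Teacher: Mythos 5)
Your proposal is correct and takes essentially the same approach as the paper: the paper's proof likewise defines $H_\Sigma$ as the space of coherent families $\{\xi^{(\cali)}\}_{\cali}$ satisfying $u^{(\cali_1,\cali_2)}(\xi^{(\cali_1)})=\xi^{(\cali_2)}$, with pointwise operations, inner product computed in any single coordinate (well defined by unitarity of the transition maps), $u^{(\calj)}$ given by projection onto the $\calj$-th coordinate, and inverse $\eta\mapsto\{u^{(\calj,\cali)}(\eta)\}_{\cali}$ coherent by the cocycle relation of Proposition \ref{prop: H_sig is canonical}. Your explicit remarks on completeness and on the coordinate-independence of the $S\sqcup\bar S$-sector actions are left implicit in the paper but are entirely in the spirit of its argument.
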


\begin{proof}
Let us define an element $\xi\in H_\Sigma$ to be a family of vectors $\{\xi^{(\cali)}\in H_\Sigma^{(\cali)}\}_\cali$ indexed by all $c$-covers $\cali$ of $S$,
subject to the condition
\begin{equation}\label{eq: condition u^(cali_1,cali_2)(xi^(cali_1))}
\qquad\quad u^{(\cali_1,\cali_2)}(\xi^{(\cali_1)})=\xi^{(\cali_2)}\qquad \forall\,\, \cali_1, \cali_2.
\end{equation}
Addition and scalar multiplication are defined pointwise.
The inner product is given by $\langle\{\xi^{(\cali)}\},\{\eta^{(\cali)}\}\rangle:=\langle\xi^{(\cali_0)},\eta^{(\cali_0)}\rangle$ for any $c$-cover $\cali_0$,
and is well defined because the maps $u^{(\cali_1,\cali_2)}$ are unitary.
Finally, the map $u^{(\calj)}$ sends a family $\{\xi^{(\cali)}\}_\cali$ to its $\calj$-th element~$\xi^{(\calj)}$, and the inverse map sends $\eta\in H_\Sigma^{(\calj)}$ to the family $\{u^{(\calj,\cali)}(\eta)\}_\cali$. The latter sign satisfies \eqref{eq: condition u^(cali_1,cali_2)(xi^(cali_1))} by~\eqref{eq:: H_sig is canonical}.
\end{proof}

As a consequence of Lemma \ref{lem: ubarJJ=Ju}, the Hilbert space $H_\Sigma$ is equipped with an antilinear involution
$J_\Sigma:H_\Sigma\to H_\Sigma$
given by
\begin{equation}\label{eq:J_Sigma}
\begin{split}
J_\Sigma\,:\,\,H_\Sigma\xrightarrow{u^{(\cali)}} \,\,&\tikzmath{  
\node (a) at (0,0) {$H_1\,\boxtimes_{A_2}\cdots\,\boxtimes_{A_n}\!H_n\,\,\boxtimes_{A_1}$};
\def\dd{.35}
\def\ll{.25}
\def\rr{.2}
\draw[dashed, rounded corners = 5] (a.east) -- ++(\rr,0) -- ++(0,-\dd) -- ($(a.west) + (-\ll,-\dd)$) -- +(0,\dd) -- (a.west);
}\\
\,\,&\xrightarrow{\epsilon^n\cdot J_1\boxtimes\ldots\boxtimes J_n}\,\,
\tikzmath{  
\node (a) at (0,0) {$H_1\,\boxtimes_{A_2}\cdots\,\boxtimes_{A_n}\!H_n\,\,\boxtimes_{A_1}$};
\def\dd{.35}
\def\ll{.25}
\def\rr{.2}
\draw[dashed, rounded corners = 5] (a.east) -- ++(\rr,0) -- ++(0,-\dd) -- ($(a.west) + (-\ll,-\dd)$) -- +(0,\dd) -- (a.west);
}
\xrightarrow{(u^{(\cali)})^*}H_\Sigma,
\end{split}
\end{equation}
where $J_i$ is the modular conjugation on $H_i=L^2(\cala(S_i^\top))$,
and $\epsilon\in\{\pm1\}$ is the sign that appears in Lemma \ref{lem: ubarJJ=Ju} (the latter only depends on the conformal net $\cala$, and is conjecturally equal to $1$).

Our next task is to construct a canonical isomorphism \eqref{H_Sigm=L^2hatA(S)} between $H_\Sigma$ and $L^2\cala(S)$.
This will occupy us for the remainder of this appendix.
Recall that $\Sigma=S\times [0,1]$, and that $H_\Sigma$ is the associated Hilbert space, constructed in \ref{thm: any cover of the circle}.

\begin{theorem}\label{thm: H_Sigma == L^2 cala(S)}
There is a canonical isomorphism of $S\sqcup \bar S$-sectors
\begin{equation}\label{H_Sigm=L^2hatA(S)+}
w: L^2\cala(S)\to H_\Sigma 
\end{equation}
that intertwines the modular conjugation on $L^2\cala(S)$ and the involution $J_\Sigma$ on $H_\Sigma$.
\end{theorem}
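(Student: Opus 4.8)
The plan is to recognize $H_\Sigma$ as a standard form of the algebra $\cala(S)$ and to obtain $w$ from the uniqueness of the standard form. First I would record the bimodule structure: since $H_\Sigma$ is an $S\sqcup \bar S$-sector, Proposition \ref{prop: M-sector == A(M)-module} together with the canonical identification $\cala(\bar S)\cong \cala(S)^\op$ makes $H_\Sigma$ into an $\cala(S)$-$\cala(S)$-bimodule. In fact the proof of Theorem \ref{thm:compute-bfB(net)} already exhibits $\cala(S)$ and $\cala(\bar S)$ as mutual commutants inside $\bfB(H_\Sigma)$, so $H_\Sigma$ is a faithful normal bimodule which, by \eqref{eq:   KLM  }, decomposes as $\bigoplus_{\lambda\in\Delta} H_\lambda(S)\otimes H_{\bar\lambda}(\bar S)$ with $\cala(S)=\bigoplus_\lambda \bfB(H_\lambda(S))$ acting on the left tensor factors. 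The goal is then to show that the triple $(\cala(S),H_\Sigma,J_\Sigma)$, completed by a suitable positive cone, is a standard form; Haagerup's uniqueness theorem for the standard form then yields a unique unitary $w\colon L^2\cala(S)\to H_\Sigma$ of $\cala(S)$-bimodules that carries the canonical cone of $L^2\cala(S)$ onto that of $H_\Sigma$ and intertwines the modular conjugations. This $w$ is canonical precisely because of the uniqueness clause, and it is automatically an isomorphism of $S\sqcup\bar S$-sectors, since the $\cala(I)$-actions for $I\subset S,\bar S$ are restrictions of the $\cala(S)$- and $\cala(\bar S)$-actions.

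The crucial step is to verify that $J_\Sigma$ is a genuine conjugation implementing the flip $\cala(S)\leftrightarrow\cala(\bar S)$. By construction \eqref{eq:J_Sigma}, $J_\Sigma$ is an antiunitary involution assembled from the modular conjugations $J_i$ of the summand vacuum sectors $H_i=L^2\cala(S_i^\top)$, with the sign $\epsilon^n$ inserted so that it is independent of the $c$-cover. I would then show that $J_\Sigma\,\cala(S)\,J_\Sigma=\cala(S)'=\cala(\bar S)$ and that $J_\Sigma$ sends the left action of $a\in\cala(S)$ to the right action of $a^*$. This is a local statement at each fusion step, and it is exactly the content of Lemma \ref{lem: ubarJJ=Ju}: the gluing map $u\colon H_1\boxtimes_K H_2\to H_3$ intertwines $J_1\boxtimes_\alpha J_2$ with $\pm J_3$, where $\alpha(x)=\cala(j)(x^*)$ is the conjugation of the gluing algebra exchanging the two halves. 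Iterating this identity around the cyclic fusion \eqref{eq:cyc fus}, and using that each $J_i$ is the modular conjugation of $\cala(I_i\times\{0\})\cong\cala(S_i^\top)$, shows that $J_\Sigma$ flips $\bigvee_{I\subset S}\cala(I)=\cala(S)$ onto $\bigvee_{I\subset \bar S}\cala(I)=\cala(\bar S)$, as required; the ambiguous signs from successive applications of Lemma \ref{lem: ubarJJ=Ju} are absorbed into the definition of $J_\Sigma$.

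It remains to produce the self-dual positive cone fixed by $J_\Sigma$ and to check the standard-form axioms. Because $\cala$ has finite index, $\cala(S)$ is a finite direct sum of type $I$ factors, so this can be done summand by summand. On the $\lambda$-summand $H_\lambda(S)\otimes H_{\bar\lambda}(\bar S)$ of \eqref{eq:   KLM  }, the duality of Lemma \ref{lem: dual of H_lambda} identifies $H_{\bar\lambda}(\bar S)$ with the conjugate Hilbert space $\overline{H_\lambda(S)}$, under which the summand becomes the Hilbert--Schmidt space $\mathrm{HS}(H_\lambda(S))=L^2(\bfB(H_\lambda(S)))$, the adjoint map corresponds to $J_\Sigma$ by the previous paragraph, and the closure of the positive Hilbert--Schmidt operators is the natural cone. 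In particular the multiplicity $\dim H_{\bar\lambda}(\bar S)$ equals $\dim H_\lambda(S)$, so the representation is standard. Assembling the summands equips $H_\Sigma$ with the standard-form structure, and the uniqueness theorem delivers the canonical $J$-intertwining isomorphism $w$. The main obstacle I anticipate is the second paragraph: tracking the local intertwining of Lemma \ref{lem: ubarJJ=Ju} through the entire cyclic fusion to conclude that $J_\Sigma$ is the global modular conjugation, while keeping the sign $\epsilon$ and the independence of the $c$-cover for both $H_\Sigma$ and $J_\Sigma$ under control.
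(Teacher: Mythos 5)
Your reduction to Haagerup's uniqueness theorem for standard forms has a genuine gap in the third paragraph: the positive cone you build on $H_\Sigma$ is not canonical, and this is precisely where the difficulty of the theorem lives. The decomposition \eqref{eq:   KLM  } is (as the paper notes in the proof of Theorem \ref{thm:compute-bfB(net)}) only well defined up to a phase on each summand, and likewise the identification $H_{\bar\lambda}(\bar S)\cong \overline{H_\lambda(S)}$ from Lemma \ref{lem: dual of H_lambda} is an isomorphism of irreducible sectors, hence canonical only up to phase by Schur's lemma. Composing, your identification of the $\lambda$-summand with $\mathrm{HS}(H_\lambda(S))$ is canonical up to phase; imposing compatibility with $J_\Sigma$ (which an intertwining antiunitary forces, since $J_{\mathrm{HS}}(e^{i\theta}u\,\xi)=e^{-i\theta}u\,J_\Sigma\xi$) cuts the ambiguity only down to a sign $\pm 1$. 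But the sign is fatal: $-\mathrm{HS}^+$ satisfies all the standard-form axioms just as well as $\mathrm{HS}^+$ does ($J$ fixes it, $aJaJ$ preserves it, and it is self-dual), so the triple $(\cala(S),H_\Sigma,J_\Sigma)$ does \emph{not} determine the cone, even summand-wise for type $I$ factors. Consequently Haagerup's theorem delivers a unitary $w$ that is well defined only up to a sign on each irreducible summand --- strictly weaker than the claimed canonical isomorphism, and no better than what \eqref{eq:   KLM  } already gives up to phase. (Your second paragraph, which you flagged as the main obstacle, is actually the unproblematic part: the sign $\epsilon^n$ built into the definition \eqref{eq:J_Sigma} of $J_\Sigma$ is exactly the device that absorbs the $\pm$ of Lemma \ref{lem: ubarJJ=Ju} across the cyclic fusion.)

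The paper resolves the normalization problem by a different mechanism that your proposal lacks: instead of a cone, it trivializes the one-dimensional intertwiner spaces $\hom\big(H_\lambda\otimes\overline{H_\lambda},H_\Sigma\big)$ by the explicit composite $\varpi_{\lambda,I}$ of \eqref{eq: defin. of w_I}, every constituent of which is genuinely canonical (not merely canonical up to phase): the identification $u^{(\cali)}$ of Theorem \ref{thm: any cover of the circle}, the concrete vacuum sectors $H_I=L^2\cala(\partial(I\times[0,1])_\top)$ with their modular conjugations, Lemma \ref{lem: HKK}, and duality counits available because $\cala$ has finite index; the map is normalized by sending the counit of the duality between $H_\lambda\boxtimes_{B_0}H_I^-$ and $H_I^+\boxtimes_{B_1}\overline{H_\lambda}$ to $1\in\IC$, and $w_\lambda$ is the normalized preimage of $1$. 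The $J$-intertwining then comes for free, since every map in \eqref{eq: defin. of w_I} respects the natural involutions and $1\in\IC$ is fixed by complex conjugation. The price is an independence-of-choices statement --- $\varpi_{I_1}=\varpi_{I_2}$ for different intervals and local coordinates --- whose proof, via a unitary $v$ implementing a diffeomorphism $\varphi$ and the diagram chase of Lemma \ref{lem: finish proof of annulus sector}, occupies most of the argument. Any repair of your approach (e.g.\ pinning the cone by some auxiliary canonical vector or a multiplicative structure on $H_\Sigma$) would have to confront the same independence problem; as written, your construction does not.
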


\begin{proof}
Recall from Theorem \ref{thm:compute-bfB(net)} that there is a canonical isomorphism
$\cala(S) \cong \bigoplus_{\lambda\in\Delta} \bfB(H_\lambda(S))$.
After applying the functor $L^2$, this becomes an isomorphism
\[
L^2\cala(S) \,\,\cong\,\, \bigoplus_{\lambda} \mathbf{HS}(H_\lambda(S))\,\cong\,\,\bigoplus_{\lambda} H_\lambda(S)\otimes \overline{H_\lambda(S)},
\]
where $\mathbf{HS}$ stands for the Hilbert space of Hilbert-Schmidt operators.
Note that, by Lemma \ref{lem: dual of H_lambda}, the right hand side is non-canonically isomorphic to $\bigoplus_{\lambda} H_\lambda(S)\otimes {H_{\bar \lambda}(\bar S)}$.

Let us simplify the notation and write $H_\lambda$ instead of $H_\lambda(S)$.
We therefore have a canonical isomorphism $L^2\cala(S)\cong \bigoplus_{\lambda}H_\lambda\otimes \overline{H_\lambda}$.
We know from \eqref{eq:   KLM  } that the $S\sqcup \bar S$-sectors $H_\Sigma$ and $L^2\cala(S)$ are isomorphic.
Therefore, in order to make the isomorphism canonical, it is enough to identify
$\hom(H_\lambda\otimes \overline{H_\lambda},H_\Sigma)$ and $\IC$ for every $\lambda\in\Delta$.
Once canonical isomorphisms 
\begin{equation}\label{eq: varpi}
\varpi_\lambda\,:\,\hom\big(H_\lambda\otimes \overline{H_\lambda},H_\Sigma\big) \,\rightarrow\, \IC
\end{equation}
are constructed, we may consider the isometries
\[
w_\lambda:={\textstyle\frac{\varpi_\lambda^{-1}(1)}{\|\varpi_\lambda^{-1}(1)\|}}:H_\lambda\otimes \overline{H_\lambda}\to H_\Sigma.
\]
Taking the sum over $\lambda$ will then give us the desired unitary isomorphism
\begin{equation}\label{eq: (+)H_l bar H_l --> H_Sig}
w:={\scriptstyle \bigoplus_\lambda} w_\lambda:L^2\cala(S) \,\cong\,\,\bigoplus_{\lambda} H_\lambda\otimes \overline{H_\lambda} \longrightarrow  H_\Sigma.
\end{equation}

As a first step towards \eqref{eq: varpi}, we construct an isomorphism
\begin{equation}\label{eq: map w_I}
\varpi_I=\varpi_{\lambda,I}\,:\,\hom\big(H_\lambda\otimes \overline{H_\lambda},H_\Sigma\big) \,\rightarrow\, \IC
\end{equation}
that depends on the choice of an interval $I\subset S$ and on the choice of local coordinates at the two endpoints of $I$.
We will show later that this map is in fact independent of the choice of interval and of local coordinates.
Let $I'$ be the closure of $S\setminus I$, and let
\[
S_I:=\partial(I\times [0,1])\quad
S_{I'}:=\partial(I'\times [0,1])\quad
S_I^+:=\partial(I\times [{\textstyle\frac12},1])\quad
S_I^-:=\partial(I\times [0,{\textstyle\frac12}])
\]
be as in the following picture:\medskip
\[
\tikzmath[scale=.6]{\useasboundingbox (-3.05,-2.4) rectangle (2,0); \draw[densely dotted, very thin](-52:2 and .35) arc (-52:128:2 and .35);\draw[line width=.7](0,-1.8)+(-52:2 and .35) arc (-52:128:2 and .35); \draw[densely dotted, very thin](0,-1.8)+(128:2 and .35) arc (128:360-52:2 and .35);\draw[line width=.7](128:2 and .35) arc (128:360-52:2 and .35); \draw[->,line width=.7] (74:2 and .35)+(0,-1.8) arc (74:75:2 and .35);\draw[->,line width=.7] (-107:2 and .35) arc (-107:-106:2 and .35); \node at (-2.7,.1) {$I:$};\node at (-2.7,-1.7) {$I':$};}\, 
\qquad\quad \tikzmath{\draw(-48:2 and .35) ++(0,-1.8) arc (-48:128:2 and .35) -- ++ (0,1.8) arc (128:-48:2 and .35)(-228:2 and .35) -- ++(0,-1.8) arc (-228:-52:2 and .35) (-224:2 and .35) ++(0,-1.7) -- ++(0,.75)(-224:2 and .35) ++(0,-.1) -- ++(0,-.75);\draw[double, draw=white, double=black, double distance = .4, line width = .9] (-224:2 and .35) ++(0,-1.7) ++(0,.08) -- ++(0,-.08) arc (-224:-57:2 and .35) -- ++(0,.75) arc (-57:-224:2 and .35) -- ++(0,-.08); \draw[double, draw=white, double=black, double distance = .4, line width = .9](-224:2 and .35) ++(0,-.1) ++(0,-.08) -- ++(0,.08) arc (-224:-57:2 and .35) -- ++(0,-.75) arc (-57:-224:2 and .35) -- ++(0,.08); \draw[double, draw=white, double=black, double distance = .4, line width = .9] (-45:2 and .35) arc (-45:-48:2 and .35) --  +(0,-1.8) arc (-48:-45:2 and .35) (-55:2 and .35) +(0,-1.8) arc (-54:-52:2 and .35) -- (-52:2 and .35) arc (-52:-228:2 and .35) -- +(0,-.08); \draw[very thin, loosely dashed] (2.005,-.05) -- (2.005,-1.8)(-2.005,-.05) -- (-2.005,-1.8); \node[scale=.8] at (-2.3,-.45) {$S_I^+$}; \node[scale=.8] at (-2.3,-1.35) {$S_I^-$};\node[scale=1] at (2.4,-.9) {$S_{I'}$}; \node[scale=1] at (-.7,-2.4) {$S_I$};\draw[->] (-104:2 and .35) arc (-104:-105:2 and .35);\draw[->] (70:2 and .35) arc (70:69:2 and .35);\draw[->] (0,-.1) ++(-94:2 and .35) arc (-94:-95:2 and .35);\draw[->] (0,-.95) ++(-94:2 and .35) arc (-94:-95:2 and .35);} 
\]
Recall that $\partial(I\times [0,1])_\top$ denotes the upper half of $S_I$, and
let us write $\partial(I\times [0,1])_\bot$ for the corresponding lower half.

Let also $H_I:=H_0(S_I)$,  $H_I':=H_0(S_{I'})$, $H_I^+:=H_0(S_I^+)$, $H_I^-:=H_0(S_I^-)$ be the correspdonding vacuum sectors.
More precisely, we take $H_I$ to be the vacuum sector associated to the circle $S_I$, its upper half $\partial(I\times [0,1])_\top$, and the involution $j:(x,t)\mapsto(x,1-t)$.
Similarly, we let $H_I'$ be the vacuum sector associated to $S_{I'}$, its upper half, and $j$.
The vacuum sector $H_I^+\in\Rep_{S_I^+}(\cala)$ is chosen arbitrarily in its isomorphism class.
Finally, we take $H_I^-:=j^*(H_I^+)$.\footnote{The functor $j^*:\Rep_{S_I^+}(\cala)\to\Rep_{S_I^-}(\cala)$ 
    is defined in~\cite[\eqfunctorphi]{BDH(nets)}.}
Let also
\[
\begin{split}
A:=\cala(I'\times\{0,1\}),\quad
B:=\cala(I\times\{0,1\}),\quad
C:=\cala(\partial I\times[0,1]),\\
B_0:=\cala(I\times\{0\}),\quad\,
B_1:=\cala(I\times\{1\}),\quad
B_{1/2}:=\cala(I\times\{1/2\}),\\
D_0:=\cala\big(\partial (I\times [0,1])_\bot\big),\quad
D_1:=\cala\big(\partial (I\times [0,1])_\top\big)\hspace{1cm}&
\end{split}
\]
with orientations as indicated here:
\[
\begin{split} A\,:\,\,
\tikzmath[scale=.4]{\draw[line width=.7](-52:2 and .35) arc (-52:128:2 and .35);\draw[line width=.7](0,-1.8)+(-52:2 and .35) arc (-52:128:2 and .35);\draw[densely dotted, very thin](0,-1.8)+(128:2 and .35) arc (128:360-52:2 and .35);\draw[densely dotted, very thin](128:2 and .35) -- ++(0,-.9) ++(0,-.9);\draw[densely dotted, very thin](128:2 and .35) ++(0,-.9) -- ++(0,-.9);\fill[white](-1.23,-1.177) ++(.07,-.08) -- ++ (0,.14) -- ++ (-.14,.02) -- ++ (0,-.14) -- cycle(-1.23,-.277) ++(.07,-.08) -- ++ (0,.14) -- ++ (-.14,.02) -- ++ (0,-.14) -- cycle(1.23,.277-1.8) ++(.07,-.08) -- ++ (0,.14) -- ++ (-.14,.02) -- ++ (0,-.14) -- cycle; \draw[densely dotted, very thin](128:2 and .35) arc (128:360-52:2 and .35);\draw[densely dotted, very thin](0,-.9)+(128:2 and .35) arc (128:360-52:2 and .35);\draw[densely dotted, very thin] (-52:2 and .35) -- ++(0,-.9) ++(0,-.9);\draw[densely dotted, very thin] (-52:2 and .35) ++(0,-.9) -- ++(0,-.9);\draw[very thin, dash pattern=on .4pt off 5pt] (2.005,-.05) -- (2.005,-1.8)(-2.005,-.05) -- (-2.005,-1.8);\draw[->,line width=.7] (70:2 and .35) arc (70:69:2 and .35);\draw[->,line width=.7] (74:2 and .35)+(0,-1.8) arc (74:75:2 and .35);}\, 
\qquad B\,&:\,\,
\tikzmath[scale=.4]{\draw[densely dotted, very thin](-52:2 and .35) arc (-52:128:2 and .35);\draw[densely dotted, very thin](0,-1.8)+(-52:2 and .35) arc (-52:128:2 and .35); \draw[line width=.7](0,-1.8)+(128:2 and .35) arc (128:360-52:2 and .35);\draw[densely dotted, very thin](128:2 and .35) -- ++(0,-1.8); \fill[white](-1.23,-1.177) ++(.07,-.08) -- ++ (0,.14) -- ++ (-.14,.02) -- ++ (0,-.14) -- cycle(-1.23,-.277) ++(.07,-.08) -- ++ (0,.14) -- ++ (-.14,.02) -- ++ (0,-.14) -- cycle(1.23,.277-1.8) ++(.07,-.08) -- ++ (0,.14) -- ++ (-.14,.02) -- ++ (0,-.14) -- cycle; \draw[line width=.7](128:2 and .35) arc (128:360-52:2 and .35); \draw[densely dotted, very thin](0,-.9)+(128:2 and .35) arc (128:360-52:2 and .35);\draw[densely dotted, very thin](-52:2 and .35) -- ++(0,-1.8);\draw[very thin, dash pattern=on .4pt off 7pt] (2.005,-.05) -- (2.005,-1.8)(-2.005,-.05) -- (-2.005,-1.8);\draw[->,line width=.7] (-107:2 and .35) arc (-107:-108:2 and .35);\draw[->,line width=.7] (-106:2 and .35)+(0,-1.8) arc (-106:-105:2 and .35);} 
\qquad C\,:\,\,
\tikzmath[scale=.4]{\draw[densely dotted, very thin](-52:2 and .35) arc (-52:128:2 and .35);\draw[densely dotted, very thin](0,-1.8)+(-52:2 and .35) arc (-52:128:2 and .35);\draw[densely dotted, very thin](0,-1.8)+(128:2 and .35) arc (128:360-52:2 and .35);\draw[line width=.7](128:2 and .35) -- ++(0,-1.8);\fill[white](-1.23,-1.177) ++(.07,-.08) -- ++ (0,.14) -- ++ (-.14,.02) -- ++ (0,-.14) -- cycle(-1.23,-.277) ++(.07,-.08) -- ++ (0,.14) -- ++ (-.14,.02) -- ++ (0,-.14) -- cycle(1.23,.277-1.8) ++(.07,-.08) -- ++ (0,.14) -- ++ (-.14,.02) -- ++ (0,-.14) -- cycle; \draw[densely dotted, very thin](128:2 and .35) arc (128:360-52:2 and .35);\draw[densely dotted, very thin](0,-.9)+(128:2 and .35) arc (128:360-52:2 and .35);\draw[line width=.7] (-52:2 and .35) -- ++(0,-1.8);\draw[very thin, dash pattern=on .4pt off 7pt] (2.005,-.05) -- (2.005,-1.8)(-2.005,-.05) -- (-2.005,-1.8);\draw[->] (-52:2 and .35) ++(0,-.65) -- ++(0,-.1);\draw[->] (128:2 and .35) ++(0,-1.15) -- ++(0,.1);}\\  
B_0\,:\,\,
\tikzmath[scale=.4]{\draw[densely dotted, very thin](-52:2 and .35) arc (-52:128:2 and .35);\draw[densely dotted, very thin](0,-1.8)+(-52:2 and .35) arc (-52:128:2 and .35);\draw[line width=.7](0,-1.8)+(128:2 and .35) arc (128:360-52:2 and .35);\draw[densely dotted, very thin](128:2 and .35) -- ++(0,-1.8);\fill[white](-1.23,-1.177) ++(.07,-.08) -- ++ (0,.14) -- ++ (-.14,.02) -- ++ (0,-.14) -- cycle(-1.23,-.277) ++(.07,-.08) -- ++ (0,.14) -- ++ (-.14,.02) -- ++ (0,-.14) -- cycle(1.23,.277-1.8) ++(.07,-.08) -- ++ (0,.14) -- ++ (-.14,.02) -- ++ (0,-.14) -- cycle; \draw[densely dotted, very thin](128:2 and .35) arc (128:360-52:2 and .35);\draw[densely dotted, very thin](0,-.9)+(128:2 and .35) arc (128:360-52:2 and .35);\draw[densely dotted, very thin](-52:2 and .35) -- ++(0,-1.8);\draw[very thin, dash pattern=on .4pt off 7pt] (2.005,-.05) -- (2.005,-1.8)(-2.005,-.05) -- (-2.005,-1.8);\draw[->,line width=.7] (-106:2 and .35)+(0,-1.8) arc (-106:-105:2 and .35);} 
\qquad B_1&:\,\,
\tikzmath[scale=.4]{\draw[densely dotted, very thin](-52:2 and .35) arc (-52:128:2 and .35);\draw[densely dotted, very thin](0,-1.8)+(-52:2 and .35) arc (-52:128:2 and .35);\draw[densely dotted, very thin](0,-1.8)+(128:2 and .35) arc (128:360-52:2 and .35);\draw[densely dotted, very thin](128:2 and .35) -- ++(0,-1.8);\fill[white](-1.23,-1.177) ++(.07,-.08) -- ++ (0,.14) -- ++ (-.14,.02) -- ++ (0,-.14) -- cycle(-1.23,-.277) ++(.07,-.08) -- ++ (0,.14) -- ++ (-.14,.02) -- ++ (0,-.14) -- cycle(1.23,.277-1.8) ++(.07,-.08) -- ++ (0,.14) -- ++ (-.14,.02) -- ++ (0,-.14) -- cycle; \draw[line width=.7](128:2 and .35) arc (128:360-52:2 and .35);\draw[densely dotted, very thin](0,-.9)+(128:2 and .35) arc (128:360-52:2 and .35);\draw[densely dotted, very thin](-52:2 and .35) -- ++(0,-1.8);\draw[very thin, dash pattern=on .4pt off 7pt] (2.005,-.05) -- (2.005,-1.8)(-2.005,-.05) -- (-2.005,-1.8);\draw[->,line width=.7] (-106:2 and .35) arc (-106:-105:2 and .35);} 
\qquad\!\! B_{1/2}:\,\,
\tikzmath[scale=.4]{\draw[densely dotted, very thin](-52:2 and .35) arc (-52:128:2 and .35);\draw[densely dotted, very thin](0,-1.8)+(-52:2 and .35) arc (-52:128:2 and .35);\draw[densely dotted, very thin](0,-1.8)+(128:2 and .35) arc (128:360-52:2 and .35);\draw[densely dotted, very thin](128:2 and .35) -- ++(0,-1.8);\fill[white](-1.23,-1.177) ++(.07,-.08) -- ++ (0,.14) -- ++ (-.14,.02) -- ++ (0,-.14) -- cycle(-1.23,-.277) ++(.07,-.08) -- ++ (0,.14) -- ++ (-.14,.02) -- ++ (0,-.14) -- cycle(1.23,.277-1.8) ++(.07,-.08) -- ++ (0,.14) -- ++ (-.14,.02) -- ++ (0,-.14) -- cycle; \draw[densely dotted, very thin](128:2 and .35) arc (128:360-52:2 and .35);\draw[line width=.7](0,-.9)+(128:2 and .35) arc (128:360-52:2 and .35);\draw[densely dotted, very thin](-52:2 and .35) -- ++(0,-1.8);\draw[very thin, dash pattern=on .4pt off 7pt] (2.005,-.05) -- (2.005,-1.8)(-2.005,-.05) -- (-2.005,-1.8);\draw[->,line width=.7] (-106:2 and .35)+(0,-.9) arc (-106:-105:2 and .35);} 
\\D_0\,:\,\,
\tikzmath[scale=.4]{\draw[densely dotted, very thin] (-52:2 and .35) arc (-52:128:2 and .35);\draw[line width=.7, line join = bevel](0,-1.8)++(-52:2 and .35) arc (-52:128:2 and .35) -- ++(0,.9);\draw[densely dotted, very thin](0,-1.8)+(128:2 and .35) arc (128:360-52:2 and .35);\draw[densely dotted, very thin](128:2 and .35) -- ++(0,-.9) ++(0,-.9);\fill[white](-1.23,-1.177) ++(.07,-.08) -- ++ (0,.14) -- ++ (-.14,.02) -- ++ (0,-.14) -- cycle(-1.23,-.277) ++(.07,-.08) -- ++ (0,.14) -- ++ (-.14,.02) -- ++ (0,-.14) -- cycle(1.23,.277-1.8) ++(.07,-.08) -- ++ (0,.14) -- ++ (-.14,.02) -- ++ (0,-.14) -- cycle; \draw[densely dotted, very thin](128:2 and .35) arc (128:360-52:2 and .35);\draw[densely dotted, very thin](0,-.9)+(128:2 and .35) arc (128:360-52:2 and .35);\draw[densely dotted, very thin] (-52:2 and .35) -- ++(0,-.9) ++(0,-.9);\draw[line width=.7, line cap=round] (-52:2 and .35) ++(0,-1.8) -- ++(0,.86);\draw[very thin, dash pattern=on .4pt off 5pt] (2.005,-.05) -- (2.005,-1.8)(-2.005,-.05) -- (-2.005,-1.8);\draw[->,line width=.7] (74:2 and .35)+(0,-1.8) arc (74:75:2 and .35);}\!\!&  
\qquad\,\,D_1\,:\,\,
\tikzmath[scale=.4]{\draw[line width=.7, line join = bevel](-52:2 and .35)++(0,-.9) -- ++(0,.9) arc (-52:128:2 and .35) -- ++(0,-.9);\draw[densely dotted, very thin](0,-1.8)+(-52:2 and .35) arc (-52:128:2 and .35);\draw[densely dotted, very thin](0,-1.8)+(128:2 and .35) arc (128:360-52:2 and .35);\draw[densely dotted, very thin](128:2 and .35) ++(0,-.9) -- ++(0,-.9);\fill[white](-1.23,-1.177) ++(.07,-.08) -- ++ (0,.14) -- ++ (-.14,.02) -- ++ (0,-.14) -- cycle(-1.23,-.277) ++(.07,-.08) -- ++ (0,.14) -- ++ (-.14,.02) -- ++ (0,-.14) -- cycle(1.23,.277-1.8) ++(.07,-.08) -- ++ (0,.14) -- ++ (-.14,.02) -- ++ (0,-.14) -- cycle;\draw[densely dotted, very thin](128:2 and .35) arc (128:360-52:2 and .35);\draw[densely dotted, very thin](0,-.9)+(128:2 and .35) arc (128:360-52:2 and .35);\draw[densely dotted, very thin] (-52:2 and .35) ++(0,-.9) -- ++(0,-.9);\draw[very thin, dash pattern=on .4pt off 5pt] (2.005,-.05) -- (2.005,-1.8)(-2.005,-.05) -- (-2.005,-1.8);\draw[->,line width=.7] (70:2 and .35) arc (70:69:2 and .35);}.\end{split} 
\]
Finally, let us denote by $D$ the algebra $\cala\big([0,\frac12]\cup I'\cup[0,\frac12]\big)$, which we identify with $D_1$ in the obvious way.
With all those preliminaries in place, we can now define the map \eqref{eq: map w_I} as the composite of the following isomorphisms:
\begin{equation}\label{eq: defin. of w_I}
\begin{split}
\hspace{-.7cm}\varpi_I:\,\hom\big(H_\lambda\otimes \overline{H_\lambda},H_\Sigma\big)
&\to\hom_{A{\scriptscriptstyle\vee}B}\big(H_\lambda\otimes \overline{H_\lambda},H_I\boxtimes_C H'_I\big)\hspace{-1.2cm}\\\hspace{-1.2cm}
&\to\hom_{A{\scriptscriptstyle\vee}C}\big((H_\lambda\otimes \overline{H_\lambda})\boxtimes_B \overline{H_I},H'_I\big)\\\hspace{-1.2cm}
&\to\hom_{A{\scriptscriptstyle\vee}C}\big((H_\lambda\otimes \overline{H_\lambda})\boxtimes_B H_I,H'_I\big)\\\hspace{-1.2cm}
&\to\hom_{A{\scriptscriptstyle\vee}C}\big(H_\lambda\boxtimes_{B_0} H_I\boxtimes_{B_1} \overline{H_\lambda},H'_I\big)\\\hspace{-1.2cm}
&\to\hom_{A{\scriptscriptstyle\vee}C}\big(H_\lambda\boxtimes_{B_0} H_I^-\boxtimes_{B_{1/2}} H_I^+\boxtimes_{B_1} \overline{H_\lambda},H'_I\big)\\\hspace{-1.2cm}
&=\hom_{D_0{\scriptscriptstyle\vee}D_1}\!\big(H_\lambda\boxtimes_{B_0} H_I^-\boxtimes_{B_{1/2}} H_I^+\boxtimes_{B_1} \overline{H_\lambda},H'_I\big)\\\hspace{-1.2cm}
&=\hom_{D,D}\!\big(H_\lambda\boxtimes_{B_0} H_I^-\boxtimes_{B_{1/2}} H_I^+\boxtimes_{B_1} \overline{H_\lambda},L^2(D)\big)\\\hspace{-1.2cm}
&\to\IC.
\end{split}
\end{equation}
The first map is the isomorphism \eqref{eq:  u^(cali) } associated to the $c$-cover $\{I,I'\}$ of $S$.
The second map is a duality isomorphism, using that $\cala$ has finite index.
The third map is the modular conjugation of $H_I=L^2\cala(\partial(I\times [0,1])_\top)$.
The fourth map is an instance Lemma \ref{lem: HKK}.
The fifth map is the counit of the duality between $H_I^-$ and $H_I^+$. 
Finally, the sixth and last map sends the counit of the duality between $H_\lambda\boxtimes_{B_0} H_I^-$ and $H_I^+\boxtimes_{B_1} \overline{H_\lambda}$
to the complex number $1\in \IC$.

At this point, we can check that \eqref{eq: (+)H_l bar H_l --> H_Sig} intertwines the modular conjugation $J$ on $L^2\cala(S)$ and the involution $J_\Sigma$ on $H_\Sigma$.
Note that the restriction of $J$ to $H_\lambda\otimes\overline{H_\lambda}$ is the map that exchanges the two factors.
We need to show that $\varpi_{\lambda}^{-1}(1):H_\lambda\otimes\overline{H_\lambda}\to H_\Sigma$ intertwines that involution with $J_\Sigma$.
In other words, we need to show that $\varpi_\lambda^{-1}(1)$ is invariant under the natural involution on $\hom(H_\lambda\otimes \overline{H_\lambda},H_\Sigma)$.
All the spaces in \eqref{eq: defin. of w_I} are equipped with their own antilinear involution, and all the maps are compatible with the involutions.
The invariance of $\varpi_\lambda^{-1}(1)$ then follows from the invariance of $1\in \mathbb C$ under complex conjugation.

Given two intervals $I_1,I_2\subset S$, each one with local coordinates at their boundary, we still need to show that $\varpi_{I_1}=\varpi_{I_2}$.
Without loss of generality, we may assume that $I_1\subset I_2$, that the two intervals share a common boundary point $p$, and that the local coordinates agree at that point.
Let $\varphi\in\Diff_+(S)$ be a diffeomorphism that sends $I_1$ to $I_2$ and that is compatible with the local coordinates.
In particular, $\varphi$ fixes a neighborhood of the point $p$.
Pick an interval $J$ that contains $\supp(\varphi)$ in its interior and that misses the point $p$,
and let $v\in\cala(J)$ be a unitary that implements~$\varphi$, that is, such that $\ad(v)=\cala(\varphi)$.

Let $H_1:=H_{I_1}$, $H_1':=H_{I_1}'$, $H_1^+:=H_{I_1}^+$, $H_1^-:=H_{I_1}^-$ be the Hilbert spaces that enter in the definition of $\varpi_{I_1}$,
and let $H_2$, $H_2'$, $H_2^+$, $H_2^-$ be the corresponding Hilbert spaces for $\varpi_{I_2}$.
By evaluating $L^2(\cala(-))$ on the diffeomorphisms that $\varphi$ induces between the upper halves of the appropriate circles,
we get unitary isomorphisms $a:H_1 \to H_2$ and $a':H_1' \to H_2'$.
Let $b^+:H_1^+ \to \varphi^*(H_2^+)$ be an arbitrary $S_{I_1}^+$-sector isomorphism, and let $b^-:=j^*(b^+):H_1^- \to \varphi^*(H_2^-)$.
Finally, let $c:H_\lambda\to H_\lambda$ and $\bar c:\overline{H_\lambda}\to \overline{H_\lambda}$ be the maps induced by multiplication by $v$.
We then get a diagram
\[
\tikzmath{
\node[anchor=east] (a0) at (0,10) {$\hom\big(H_\lambda\otimes \overline{H_\lambda},H_\Sigma\big)$};
\node[anchor=east] (a1) at (0,9) {$\hom\big(H_\lambda\otimes \overline{H_\lambda},H_1\boxtimes H'_1\big)$};
\node[anchor=east] (a2) at (0,8) {$\hom\big((H_\lambda\otimes \overline{H_\lambda})\boxtimes \overline{H_1},H'_1\big)$};
\node[anchor=east] (a3) at (0,7) {$\hom\big((H_\lambda\otimes \overline{H_\lambda})\boxtimes H_1,H'_1\big)$};
\node[anchor=east] (a4) at (0,6) {$\hom\big(H_\lambda\boxtimes H_1\boxtimes \overline{H_\lambda},H'_1\big)$};
\node[anchor=east] (a5) at (0,5) {$\hom\big(H_\lambda\boxtimes H_1^-\boxtimes H_1^+\boxtimes \overline{H_\lambda},H'_1\big)$};
\node[anchor=east] (a6) at (-1.5,4) {$\IC$};
\node[anchor=west] (b0) at (1,10) {$\hom\big(H_\lambda\otimes \overline{H_\lambda},H_\Sigma\big)$};
\node[anchor=west] (b1) at (1,9) {$\hom\big(H_\lambda\otimes \overline{H_\lambda},H_2\boxtimes H'_2\big)$};
\node[anchor=west] (b2) at (1,8) {$\hom\big((H_\lambda\otimes \overline{H_\lambda})\boxtimes \overline{H_2},H'_2\big)$};
\node[anchor=west] (b3) at (1,7) {$\hom\big((H_\lambda\otimes \overline{H_\lambda})\boxtimes H_2,H'_2\big)$};
\node[anchor=west] (b4) at (1,6) {$\hom\big(H_\lambda\boxtimes H_2\boxtimes \overline{H_\lambda},H'_2\big)$};
\node[anchor=west] (b5) at (1,5) {$\hom\big(H_\lambda\boxtimes H_2^-\boxtimes H_2^+\boxtimes \overline{H_\lambda},H'_2\big)$};
\node[anchor=west] (b6) at (2.5,4) {$\IC$};
\foreach \x in {1,...,5}{\draw[->] (a\x) -- (b\x);}
\foreach \x in {0,6}{\draw[double, double distance=1pt] (a\x) -- (b\x);}
\foreach \x/\y in {0/1,1/2,2/3,3/4,4/5,5/6}{
\draw[->] (a\x.south -| a6) -- (a\y.north -| a6);
\draw[->] (b\x.south -| b6) -- (b\y.north -| b6);
\node at ($(a\x.east)+(.5,-.5)$) {\squared\y};}
\draw[->, shorten >=1.5] (a5.south -| a6) -- (a6.north -| a6);
\draw[->, shorten >=1.5] (b5.south -| b6) -- (b6.north -| b6);
}
\]
where various subscripts are left implicit, the vertical arrows compose to $\varpi_{I_1}$ and $\varpi_{I_2}$,
and the horizontal arrows are induced by $a$, $a'$, $b^+$, $b^-$, $c$, $\bar c$.
The squares $\squared2,\, \squared3, \ldots, \squared6$ are easily seen to commute.
Finally, the commutativity of $\squared1$ is the content of Lemma \ref{lem: finish proof of annulus sector}.
\end{proof}

Recall that $H_1=H_{I_1}$, $H_2=H_{I_2}$, $H_1'=H_{I_1}'$, $H_2'=H_{I_2}'$ are the vacuum sectors associated to the boundaries of $I_1\times [0,1]$,  $I_2\times [0,1]$, $I_1'\times [0,1]$, and $I_2'\times [0,1]$.

\begin{lemma}\label{lem: finish proof of annulus sector}
Let $H_\lambda$, $H_\Sigma$, $H_{I_1}$, $H_{I_1}'$, $H_{I_2}$, $H_{I_2}'$, $a$, $a'$, $c$, $\bar c$ be as above,
and let
$
u_1:H_\Sigma\to \tikzmath{  
\node (A) at (0,0) {$H_{I_1}\boxtimes H_{I_1}'\boxtimes$};
\def\dd{.3}
\def\ll{.1}
\def\rr{.15}
\draw[dashed, rounded corners = 5] (A.east) -- ++(\rr,0) -- ++(0,-\dd) -- ($(A.west) + (-\ll,-\dd)$) -- +(0,\dd) -- (A.west);
}
$ and
$u_2:H_\Sigma\to \tikzmath{  
\node (A) at (0,0) {$H_{I_2}\boxtimes H_{I_2}'\boxtimes$};
\def\dd{.3}
\def\ll{.1}
\def\rr{.15}
\draw[dashed, rounded corners = 5] (A.east) -- ++(\rr,0) -- ++(0,-\dd) -- ($(A.west) + (-\ll,-\dd)$) -- +(0,\dd) -- (A.west);
}
$
be instances of \eqref{eq:  u^(cali) }.
Then the triangle
\[
\!\tikzmath{
\node at (-2.4,.8) {$\scriptstyle u_1\circ -$};
\node at (2.4,.8) {$\scriptstyle u_2\circ -$};
\node at (0,.25) {$\scriptstyle (a\boxtimes a')\circ - \circ(c\otimes \bar c)^*$};
\node (a) at (0,1.1) {$\hom\big(H_\lambda\otimes \overline{H_\lambda},H_\Sigma\big)$};
\node (b) at (-3.85,0) {$\hom\big(H_\lambda\otimes \overline{H_\lambda},
\tikzmath{  
\node (A) at (0,0) {$H_{I_1}\boxtimes H_{I_1}'\boxtimes$};
\def\dd{.3}
\def\ll{.1}
\def\rr{.15}
\draw[dashed, rounded corners = 5] (A.east) -- ++(\rr,0) -- ++(0,-\dd) -- ($(A.west) + (-\ll,-\dd)$) -- +(0,\dd) -- (A.west);
}
\,\big)$};
\node (c) at (3.85,0) {$\hom\big(H_\lambda\otimes \overline{H_\lambda},
\tikzmath{  
\node (A) at (0,0) {$H_{I_2}\boxtimes H_{I_2}'\boxtimes$};
\def\dd{.3}
\def\ll{.1}
\def\rr{.15}
\draw[dashed, rounded corners = 5] (A.east) -- ++(\rr,0) -- ++(0,-\dd) -- ($(A.west) + (-\ll,-\dd)$) -- +(0,\dd) -- (A.west);
}\,
\big)$};
\draw[->](a)--(b);\draw[->](a)--(c);\draw[->](b)--(c);
}
\]
is commutative, where the $\hom$ are taken in the category of $S\sqcup\bar S$-sectors.
\end{lemma}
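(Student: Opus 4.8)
The plan is to eliminate the quantifier over $\psi$ and reduce the triangle to a single operator identity on all of $H_\Sigma$. Precomposing the claimed relation with the unitary $c\otimes\bar c$, the triangle is equivalent to $u_2\circ\psi\circ(c\otimes\bar c)=(a\boxtimes a')\circ u_1\circ\psi$ for every $S\sqcup\bar S$-sector morphism $\psi$. By construction $c$ is the action of $v$ on the $S$-sector $H_\lambda$ and $\bar c$ the action of $v$ on the $\bar S$-sector $\overline{H_\lambda}$, so $c\otimes\bar c$ is the action of $v$ on both boundary factors of $H_\lambda\otimes\overline{H_\lambda}$. Writing $\Phi\colon H_\Sigma\to H_\Sigma$ for the commuting composite of the actions of $v$ through the $S$- and through the $\bar S$-structures of $H_\Sigma$ (these commute by locality, as $S$ and $\bar S$ are distinct boundary circles), equivariance of $\psi$ gives $\psi\circ(c\otimes\bar c)=\Phi\circ\psi$. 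Hence it suffices to prove the $\psi$-free identity
\[
(a\boxtimes a')\circ u_1 \;=\; u_2\circ\Phi\colon\quad H_\Sigma\to H_\Sigma^{(\cali_2)},
\]
where $\cali_1=\{I_1,I_1'\}$ and $\cali_2=\{I_2,I_2'\}$; composing on the right with $\psi$ then recovers the lemma.

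To prove this identity I would first reinterpret $\Phi$ geometrically: since $\supp(\varphi)\subset J$ and $\ad(v)=\cala(\varphi)$, inner covariance identifies the action of $v$ on any $S$-sector with the action of the diffeomorphism $\varphi$. Thus $\Phi$ is the canonical self-map of $H_\Sigma$ induced by $\varphi\times\mathrm{id}\colon\Sigma\to\Sigma$, and the desired identity becomes the assertion that the canonical comparison $u^{(\cali)}$ of Theorem~\ref{thm: any cover of the circle} is natural with respect to $\varphi$, which carries $\cali_1$ to $\cali_2$ and whose functorial image on the cyclic fusion $H_\Sigma^{(\cali_1)}$ is exactly $a\boxtimes a'$. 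To compute $u_2\circ u_1^{-1}=u^{(\cali_1,\cali_2)}$ concretely I would pass to the common refinement $\cali_0=\{I_1,K,I_2'\}$, where $K$ is the closure of $I_2\setminus I_1$, so that $\cali_0$ refines both $\cali_1$ (combining $K$ and $I_2'$ into $I_1'$) and $\cali_2$ (combining $I_1$ and $K$ into $I_2$); by Proposition~\ref{prop: H_sig is canonical} one then has $u^{(\cali_1,\cali_2)}=v^{(\cali_0,\cali_2)}\circ(v^{(\cali_0,\cali_1)})^{-1}$, with both refinement maps assembled from the combination isomorphism $u\colon H_1\boxtimes_K H_2\to H_3$ of Lemma~\ref{lem: u:HH->H} and organised by Lemma~\ref{lem: associativity of u}.

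The crux, and the step I expect to be the main obstacle, is to show that the geometric map $a\boxtimes a'$ is compatible with these combination isomorphisms, i.e. that inserting $\varphi$ into the refinement picture over $\cali_0$ reproduces $a\boxtimes a'$ on the nose. Here one must track $\varphi$ through the explicit definition \eqref{eq: u:HH->H: the def.} of $u$ in terms of the diffeomorphisms $f_1,f_2,f_3$ to the standard circle and the vacuum transition maps $v_\top,v_\vdash$. Two features make this tractable: near the shared endpoint $p$ the diffeomorphism $\varphi$ is the identity and the local coordinates agree, so the corresponding gluing is untwisted and $a,a'$ act there by genuine $L^2\cala$ of diffeomorphisms; away from $p$, around $K$, the twist is governed entirely by $v$, and the independence of $u$ from the auxiliary choices already established in the proof of Lemma~\ref{lem: u:HH->H}---which rests on the identity $L^2(\ad(v))(\xi)=v\xi v^*$ together with inner covariance---lets one absorb the difference between the $\cali_1$- and $\cali_2$-normalisations of the intervening vacuum sectors into exactly the factor $\Phi$. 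Assembling these local comparisons over the three intervals of $\cali_0$, and invoking the associativity of Lemma~\ref{lem: associativity of u} to see that they glue coherently, yields $(a\boxtimes a')\circ u_1=u_2\circ\Phi$, completing the proof.
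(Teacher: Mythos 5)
Your opening reduction is correct and coincides with the paper's: since $c\otimes\bar c$ is multiplication by $v\otimes\bar v$ and every morphism of $S\sqcup\bar S$-sectors intertwines it with the corresponding action on $H_\Sigma$, the triangle is equivalent to the $\psi$-free identity $(a\boxtimes a')\circ(v\otimes\bar v)^*\circ u_1=u_2$, i.e.\ your $(a\boxtimes a')\circ u_1=u_2\circ\Phi$. The gap lies in how you propose to verify this identity. Your common refinement $\cali_0=\{I_1,K,I_2'\}$ with $K=\overline{I_2\setminus I_1}$ places the two new cut points $p'=I_1\cap K$ and $p''=K\cap I_2'$ exactly in the region where $\varphi$ is nontrivial: since $\varphi$ carries $p'$ to $p''$, any interval $J\supset\supp(\varphi)$ must straddle both cuts. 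Consequently $v\in\cala(J)$ acts on the cyclic fusion $H_{I_1}\boxtimes H_K\boxtimes H_{I_2'}$ only through the generated algebra, not on any single fusion factor; moreover $v^{(\cali_0,\cali_1)}$ and $v^{(\cali_0,\cali_2)}$ fuse \emph{different} adjacent pairs ($K$ with $I_2'$, resp.\ $I_1$ with $K$), so there is no step at which the twist localizes. Your ``assembling of local comparisons over the three intervals'' therefore cannot be carried out interval by interval, and the appeal to ``naturality of $u^{(\cali)}$ with respect to $\varphi$'' is essentially a restatement of the lemma rather than a proof of it. The independence-of-choices argument inside Lemma \ref{lem: u:HH->H} does not rescue this, because it concerns unitaries in $\cala(S^1_\top)$ acting on a single $L^2$ factor, whereas your $v$ straddles the cuts.

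The missing idea — and what the paper's proof supplies — is an auxiliary cut placed \emph{away} from the support: choose $q\in I_2'$ with $K:=[q,p]$ disjoint from $\supp(\varphi)$, refine to $\{I_1,J_1,K\}$ and $\{I_2,J_2,K\}$ with $J_i=\overline{I_i'\setminus K}$, and then fuse $H_{I_1}\boxtimes H_{J_1}$ and $H_{I_2}\boxtimes H_{J_2}$ down to the single vacuum sector $H_{K'}$, $K'=\overline{S\setminus K}$. Now $v\in\cala(K')$, so $v\otimes\bar v$ acts on one vacuum factor only; transporting it along $f_K$ to $w=\cala(f_K)(v)$ on the standard circle, the remaining triangle decomposes via the explicit formula \eqref{eq: u:HH->H: the def.} (with $g_\alpha=f_\alpha\circ\varphi_\alpha^{-1}$) into squares whose commutativity reduces to $\ad(w)=\cala(f_K\circ g_K^{-1})$ and $L^2(\ad(w))(\xi)=w\xi w^*$ — the very identities you invoke, but which are only applicable once the twist has been concentrated in a single $L^2$ factor. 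If you re-run your outline with this refinement, cutting where $\varphi$ is the identity rather than where it moves points and collapsing the complement to one factor, your argument becomes the paper's proof.
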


\begin{proof}
Given a morphism of $S\sqcup\bar S$-sectors $f:H_\lambda\otimes \overline{H_\lambda}\to H_\Sigma$,
we need to show that $(a\boxtimes a')\circ u_1\circ f \circ(c\otimes \bar c)^*=u_2\circ f$.
Recall that $c$ is the operator of multiplication by the unitary $v$, defined above.
The operator
\[
c\otimes \bar c: H_\lambda\otimes \overline{H_\lambda}\to H_\lambda\otimes \overline{H_\lambda}
\]
is therefore given by multiplication by $v\otimes \bar v$.
Since $u_1\circ f$ commutes with the action of that element, we have
\[
(a\boxtimes a')\circ u_1\circ f \circ(c\otimes \bar c)^*=(a\boxtimes a')\circ(v\otimes \bar v)^*\circ u_1\circ f,
\]
and so it is enough to show that $(a\boxtimes a')(v\otimes \bar v)^*u_1=u_2$.
In other words, we need to show that the following triangle commutes:
\begin{equation}\label{eq: triangle that must commute to finish proof of annular sector}
\,\tikzmath{
\node at (-1.2,.8) {$\scriptstyle u_1$};\node at (1.2,.8) {$\scriptstyle u_2$};\node at (.02,.25) {$\scriptstyle (a\boxtimes a')(v\otimes \bar v)^*$};\node (a) at (0,1) {$H_\Sigma$};\node (b) at (-2.8,0) {$
\tikzmath{\node (A) at (0,0) {$H_{I_1}\boxtimes H_{I_1}'\boxtimes$};\def\dd{.3}\def\ll{.1}\def\rr{.15}
\draw[dashed, rounded corners = 5] (A.east) -- ++(\rr,0) -- ++(0,-\dd) -- ($(A.west) + (-\ll,-\dd)$) -- +(0,\dd) -- (A.west);}$};
\node (c) at (2.8,0) {$\tikzmath{\node (A) at (0,0) {$H_{I_2}\boxtimes H_{I_2}'\boxtimes$};\def\dd{.3}\def\ll{.1}\def\rr{.15}
\draw[dashed, rounded corners = 5] (A.east) -- ++(\rr,0) -- ++(0,-\dd) -- ($(A.west) + (-\ll,-\dd)$) -- +(0,\dd) -- (A.west);}$};
\draw[->, shorten >=-2](a)--(b);\draw[->, shorten >=-2](a)--(c);\draw[->,shorten <=2, shorten >=2](b)--(c);
}
\end{equation}

Recall that $I_1$ is contained in $I_2$, and that $p$ is their common boundary point.
Recall that $\varphi$ is a diffeomorphism that satisfies $\varphi(I_1)=I_2$, and that it is the identity in a neighborhood of $p$.
Finally, recall that $\ad(v)=\cala(\varphi)$.
Let $q\in I_2'$ be such that the interval $K:=[q,p]\subset S$ is disjoint from $\supp(\varphi)$, and let $J_1$ and $J_2$ be the closures of $I_1'\setminus K$ and $I_2'\setminus K$.
\[
\tikzmath[scale=.7]{
\node at (-.73,.65) {$\scriptstyle p$};\fill (128:1 and .5) circle (0.04);
\draw[line width=.7] (360-128:1 and .5) arc (360-128:128:1 and .5);\draw[densely dotted] (-128:1 and .5) arc (-128:128:1 and .5);\draw[->] (182:1 and .5) -- +(0,-.01);\node[scale=.95] at (-1.6,0) {$I_1:$};\pgftransformxshift{105}
\node at (-.73,.65) {$\scriptstyle p$};\fill (128:1 and .5) circle (0.04);
\draw[line width=.7] (360-52:1 and .5) arc (360-52:128:1 and .5);\draw[densely dotted] (-52:1 and .5) arc (-52:128:1 and .5);\draw[->] (182+45:1 and .5) -- +(.02,-.01);\node[scale=.95] at (-1.6,0) {$I_2:$};\pgftransformxshift{105}
\draw[densely dotted] (360-128:1 and .5) arc (360-128:128:1 and .5);\draw[line width=.7] (-128:1 and .5) arc (-128:128:1 and .5);\draw[->] (2:1 and .5) -- +(0,.01);\node[scale=.95] at (-1.6,0) {$I_1':$};
\pgftransformxshift{105}
\draw[densely dotted] (360-52:1 and .5) arc (360-52:128:1 and .5);\draw[line width=.7] (-52:1 and .5) arc (-52:128:1 and .5);\draw[->] (47:1 and .5) -- +(-.02,.01);\node[scale=.95] at (-1.6,0) {$I_2':$};
\pgftransformxshift{-325}\pgftransformyshift{-60}
\draw[line width=.7] (52:1 and .5) arc (52:-128:1 and .5);\draw[densely dotted] (52:1 and .5) arc (52:360-128:1 and .5);\draw[->] (182+135:1 and .5) -- +(.02,.01);\node[scale=.95] at (-1.6,0) {$J_1:$};
\pgftransformxshift{105}
\draw[line width=.7] (52:1 and .5) arc (52:-52:1 and .5);\draw[densely dotted] (52:1 and .5) arc (52:360-52:1 and .5);\draw[->] (2:1 and .5) -- +(0,.01);\node[scale=.95] at (-1.6,0) {$J_2:$};
\pgftransformxshift{105}
\node at (-.73,.65) {$\scriptstyle p$};
\node at (.73,.65) {$\scriptstyle q$};
\draw[densely dotted] (52:1 and .5) arc (52:128-360:1 and .5);\draw[line width=.7] (52:1 and .5) arc (52:128:1 and .5);\draw[->] (91:1 and .5) -- +(-.01,0);\node[scale=.95] at (-1.6,0) {$K:$};
\pgftransformxshift{140}
\draw[densely dotted](0:1 and .5) arc (0:180:1 and .5);
\draw[line width=.7] (0:1 and .5) arc (0:-180:1 and .5);\node[scale=.95] at (-2.1,0) {$\supp(\varphi):$};}
\]
Chose a local coordinate at the point $q$, and
let $H_{J_1}$, $H_{J_2}$, $H_{K}$ be the vacuum sectors associated to the boundaries of $J_1\times [0,1]$,  $J_2\times [0,1]$, $K\times [0,1]$, their upper halves, and the involution $j$.
We then have the following two commutative triangles
\begin{equation}\label{eq: two triangles}
\!\tikzmath{
\node (a1) at (-2.75,1) {$H_\Sigma$};
\node (a2) at (3.69,1) {$H_\Sigma$};
\draw[->] (a1) -- +(1,-.5);
\draw[->] (a1) -- +(-1,-.5);
\draw[->] (a2) -- +(1,-.5);
\draw[->] (a2) -- +(-1,-.5);
\draw[->] (a1)+(-.1,-.9) --node[above, scale=.9]{$\scriptstyle k_1$} +(.21,-.9);
\draw[->] (a2)+(-.1,-.9) --node[above, scale=.9]{$\scriptstyle k_2$} +(.21,-.9);
\node (b) at (0,.1) {$
\tikzmath{
\node (a) at (0,0) {$H_{I_1} \boxtimes H_{J_1} \boxtimes H_K \boxtimes$};
\def\dd{.3}
\def\ll{.1}
\def\rr{.1}
\draw[dashed, rounded corners = 5] (a.east) -- ++(\rr,0) -- ++(0,-\dd) -- ($(a.west) + (-\ll,-\dd)$) -- +(0,\dd) -- (a.west);
}
\hspace{.43cm}
\tikzmath{\node (A) at (0,0) {$H_{I_1}\boxtimes H_{I_1}'\boxtimes$};
\def\dd{.3}
\def\ll{.1}
\def\rr{.1}
\draw[dashed, rounded corners = 5] (A.east) -- ++(\rr,0) -- ++(0,-\dd) -- ($(A.west) + (-\ll,-\dd)$) -- +(0,\dd) -- (A.west);}
\,,\,\,
\tikzmath{  
\node (a) at (0,0) {$H_{I_2} \boxtimes H_{J_2} \boxtimes H_K \boxtimes$};
\def\dd{.3}
\def\ll{.1}
\def\rr{.1}
\draw[dashed, rounded corners = 5] (a.east) -- ++(\rr,0) -- ++(0,-\dd) -- ($(a.west) + (-\ll,-\dd)$) -- +(0,\dd) -- (a.west);
}\hspace{.43cm}
\tikzmath{\node (A) at (0,0) {$H_{I_2}\boxtimes H_{I_2}'\boxtimes$};
\def\dd{.3}
\def\ll{.1}
\def\rr{.1}
\draw[dashed, rounded corners = 5] (A.east) -- ++(\rr,0) -- ++(0,-\dd) -- ($(A.west) + (-\ll,-\dd)$) -- +(0,\dd) -- (A.west);}$};
}
\end{equation}
where the horizontal maps $k_1$ and $k_2$ are induced by \eqref{eq: u:HH->H},
and the maps from $H_\Sigma$ are instances of \eqref{eq:  u^(cali) }.
There is also a commutative square
\begin{equation}\label{eq: two triangles+}
\tikzmath{ \matrix [matrix of math nodes,column sep=2.5cm,row sep=.5cm]
{ |(a)| \tikzmath{\node (A) at (0,0) {$H_{I_1}\boxtimes H_{I_1}'\boxtimes$};
\def\dd{.3}
\def\ll{.1}
\def\rr{.1}
\draw[dashed, rounded corners = 5] (A.east) -- ++(\rr,0) -- ++(0,-\dd) -- ($(A.west) + (-\ll,-\dd)$) -- +(0,\dd) -- (A.west);
}
\pgfmatrixnextcell |(b)| 
\tikzmath{\node (A) at (0,0) {$H_{I_2}\boxtimes H_{I_2}'\boxtimes$};
\def\dd{.3}
\def\ll{.1}
\def\rr{.1}
\draw[dashed, rounded corners = 5] (A.east) -- ++(\rr,0) -- ++(0,-\dd) -- ($(A.west) + (-\ll,-\dd)$) -- +(0,\dd) -- (A.west);
}\\
|(c)|  \tikzmath{
\node (A) at (0,0) {$H_{I_1} \boxtimes H_{J_1} \boxtimes H_K \boxtimes$};
\def\dd{.3}
\def\ll{.1}
\def\rr{.1}
\draw[dashed, rounded corners = 5] (A.east) -- ++(\rr,0) -- ++(0,-\dd) -- ($(A.west) + (-\ll,-\dd)$) -- +(0,\dd) -- (A.west);
} \pgfmatrixnextcell |(d)| 
\tikzmath{
\node (A) at (0,0) {$H_{I_2} \boxtimes H_{J_2} \boxtimes H_K \boxtimes$};
\def\dd{.3}
\def\ll{.1}
\def\rr{.1}
\draw[dashed, rounded corners = 5] (A.east) -- ++(\rr,0) -- ++(0,-\dd) -- ($(A.west) + (-\ll,-\dd)$) -- +(0,\dd) -- (A.west);
}\\ }; 
\draw[->] (a) --node[left]{$\scriptstyle k_1^{-1}$} (c); \draw[->] (b) --node[right]{$\scriptstyle k_2^{-1}$} (d);
\draw[->] (a) --node[above]{$\scriptstyle (a\boxtimes a')(v\otimes \bar v)^*$} (b); \draw[->] (c) --node[above]{$\scriptstyle (a\boxtimes a''\boxtimes 1)(v\otimes \bar v)^*$} (d); }
\end{equation}
where $a'':H_{J_1}\to H_{J_2}$ is the value of $L^2(\cala(-))$ on the diffeomorphism that $\varphi$ induces from the upper half of $\partial(J_1\times [0,1])$ to the upper half of $\partial(J_2\times [0,1])$.
In view of \eqref{eq: two triangles} and \eqref{eq: two triangles+}, the commutativity of \eqref{eq: triangle that must commute to finish proof of annular sector} is equivalent to that of 
\begin{equation}
\label{eq: triangle that must commute to finish proof of annular sector ++}
\tikzmath{
\node at (0,.25) {$\scriptstyle (a\boxtimes a''\boxtimes 1)(v\otimes \bar v)^*$};\node (a) at (0,1) {$H_\Sigma$};
\node (b) at (-3,0) {$
\tikzmath{\node (A) at (0,0) {$H_{I_1}\boxtimes H_{J_1}\boxtimes H_K$};\def\dd{.3}\def\ll{.1}\def\rr{.15}
\draw[dashed, rounded corners = 5] (A.east) -- ++(\rr,0) -- ++(0,-\dd) -- ($(A.west) + (-\ll,-\dd)$) -- +(0,\dd) -- (A.west);}$};
\node (c) at (3,0) {$\tikzmath{\node (A) at (0,0) {$H_{I_2}\boxtimes H_{J_2}\boxtimes H_K$};\def\dd{.3}\def\ll{.1}\def\rr{.15}
\draw[dashed, rounded corners = 5] (A.east) -- ++(\rr,0) -- ++(0,-\dd) -- ($(A.west) + (-\ll,-\dd)$) -- +(0,\dd) -- (A.west);}$};
\draw[->, shorten >=-2](a)--(b);\draw[->, shorten >=-2](a)--(c);\draw[->,shorten <=2, shorten >=2](b)--(c);
}\hspace{-.81cm}
\end{equation}
where the maps down from $H_\Sigma$ are again instances of \eqref{eq:  u^(cali) }.
Consider now the following two commutative triangles
\begin{equation}\label{eq: two triangles next}
\!\tikzmath{
\node (a) at (0,1) {$H_\Sigma$};
\node (b1) at (-3,0) {$
\tikzmath{
\node (A) at (0,0) {$H_{I_1} \boxtimes H_{J_1} \boxtimes H_K \boxtimes$};
\def\dd{.3}
\def\ll{.1}
\def\rr{.1}
\draw[dashed, rounded corners = 5] (A.east) -- ++(\rr,0) -- ++(0,-\dd) -- ($(A.west) + (-\ll,-\dd)$) -- +(0,\dd) -- (A.west);
}$};
\node (b2) at (0,-1.15) {$
\tikzmath{\node (A) at (0,0) {$H_{K'}\boxtimes H_{K}\boxtimes$};
\def\dd{.3}
\def\ll{.1}
\def\rr{.1}
\draw[dashed, rounded corners = 5] (A.east) -- ++(\rr,0) -- ++(0,-\dd) -- ($(A.west) + (-\ll,-\dd)$) -- +(0,\dd) -- (A.west);
}$};
\node (b3) at (3,0) {$
\tikzmath{  
\node (A) at (0,0) {$H_{I_2} \boxtimes H_{J_2} \boxtimes H_K \boxtimes$};
\def\dd{.3}
\def\ll{.1}
\def\rr{.1}
\draw[dashed, rounded corners = 5] (A.east) -- ++(\rr,0) -- ++(0,-\dd) -- ($(A.west) + (-\ll,-\dd)$) -- +(0,\dd) -- (A.west);
}$};
\draw[->,shorten >=-4] (a) -- (b1);\draw[->] (a) -- (b2);\draw[->,shorten >=-4] (a) -- (b3);
\draw[->, shorten >=-5] (b1) -- (b2);
\draw[->, shorten >=-5] (b3) -- (b2);
\node at (-1.8,-.8) {$\scriptstyle \omega_1\boxtimes 1$};
\node at (1.8,-.8) {$\scriptstyle \omega_2\boxtimes 1$};
}
\end{equation}
where $\omega_1:H_{I_1}\boxtimes H_{J_1}\to H_{K'}$ and $\omega_2:H_{I_2}\boxtimes H_{J_2}\to H_{K'}$ are instances of \eqref{eq: u:HH->H},
and the maps from $H_\Sigma$ are as in \eqref{eq:  u^(cali) }.
Using \eqref{eq: two triangles next},
we can further reduce \eqref{eq: triangle that must commute to finish proof of annular sector ++}
to the commutativity of this triangle:
\begin{equation}
\label{eq: triangle that must commute to finish proof of annular sector ++ ++}
\,\tikzmath{
\node at (-1.9,-.8) {$\scriptstyle \omega_1\boxtimes 1$};
\node at (1.9,-.8) {$\scriptstyle \omega_2\boxtimes 1$};
\node at (0,.25) {$\scriptstyle (a\boxtimes a''\boxtimes 1)(v\otimes \bar v)^*$};
\node (a) at (0,-1.15) {$
\tikzmath{\node (A) at (0,0) {$H_{K'}\boxtimes H_{K}\boxtimes$};
\def\dd{.3}
\def\ll{.1}
\def\rr{.1}
\draw[dashed, rounded corners = 5] (A.east) -- ++(\rr,0) -- ++(0,-\dd) -- ($(A.west) + (-\ll,-\dd)$) -- +(0,\dd) -- (A.west);}$};
\node (b) at (-3.2,0) {$
\tikzmath{\node (A) at (0,0) {$H_{I_1}\boxtimes H_{J_1}\boxtimes H_K$};\def\dd{.3}\def\ll{.1}\def\rr{.15}
\draw[dashed, rounded corners = 5] (A.east) -- ++(\rr,0) -- ++(0,-\dd) -- ($(A.west) + (-\ll,-\dd)$) -- +(0,\dd) -- (A.west);}$};
\node (c) at (3.2,0) {$\tikzmath{\node (A) at (0,0) {$H_{I_2}\boxtimes H_{J_2}\boxtimes H_K$};\def\dd{.3}\def\ll{.1}\def\rr{.15}
\draw[dashed, rounded corners = 5] (A.east) -- ++(\rr,0) -- ++(0,-\dd) -- ($(A.west) + (-\ll,-\dd)$) -- +(0,\dd) -- (A.west);}$};
\draw[<-, shorten <=-5](a)--(b);\draw[<-, shorten <=-5](a)--(c);\draw[->,shorten <=2, shorten >=2](b)--(c);
}
\end{equation}
Recall that the support of $\varphi$ is disjoint from $K$.
It follows that $v\in \cala(K')$, and that the operator $v\otimes \bar v$ only acts on $H_{I_1}\boxtimes H_{J_1}$.
The commutativity of \eqref{eq: triangle that must commute to finish proof of annular sector ++ ++} therefore boils down to the commutativity of this diagram:
\begin{equation}
\label{eq: triangle that must commute to finish proof of annular sector ++ ++ ++}
\,\tikzmath{
\node at (-1.23,-.77) {$\scriptstyle \omega_1$};
\node at (1.23,-.77) {$\scriptstyle \omega_2$};
\node at (0,.25) {$\scriptstyle (a\boxtimes a'')(v\otimes \bar v)^*$};
\node (a) at (0,-1.15) {$H_{K'}$};
\node (b) at (-2,0) {$H_{I_1}\boxtimes H_{J_1}$};
\node (c) at (2,0) {$H_{I_2}\boxtimes H_{J_2}$};
\draw[<-, shorten <=-2](a)--(b);\draw[<-, shorten <=-2](a)--(c);\draw[->](b)--(c);
}
\end{equation}
Write
\[
\begin{split}
\varphi_I:\partial(I_1\times [0,1])_\top \to \partial(I_2\times [0,1])_\top\qquad
\varphi_J:\partial(J_1\times [0,1])_\top &\to \partial(J_2\times [0,1])_\top\\
\varphi_K:\partial(K'\times [0,1])_\top \to \partial(K'\times [0,1])_\top\,&
\end{split}
\]
for the maps induced by $\varphi$,
and recall that $a=L^2\cala(\varphi_I)$ and $a''=L^2\cala(\varphi_J)$.
Pick
\[
f_I\!:\partial(I_1\!\times\! [0,1])_\top \to S^1_\top\quad\,\,\,\,\,
f_J\!:\partial(J_1\!\times\! [0,1])_\top \to S^1_\top\quad\,\,\,\,\,
f_K\!:\partial(K'\!\times\! [0,1])_\top \to S^1_\top
\]
with the same properties as the maps $f_1$, $f_2$, $f_3$ that enter \eqref{eq: u:HH->H: the def.},
and let 
\[
g_I\!:\partial(I_2\!\times\! [0,1])_\top \to S^1_\top\quad\,\,\,\,\,
g_J\!:\partial(J_2\!\times\! [0,1])_\top \to S^1_\top\quad\,\,\,\,\,
g_K\!:\partial(K'\!\times\! [0,1])_\top \to S^1_\top
\]
be given by $g_\alpha=f_\alpha\circ \varphi_\alpha^{-1}$, $\alpha\in\{I,J,K\}$.
By definition, we have
\[
\begin{split}
\omega_1&=\big(L^2\cala(f_K)^{-1}\big)    \big(1\boxtimes v_\vdash v_\top^{-1}\big)    \big(L^2\cala(f_I)\boxtimes L^2\cala(f_J)\big)\\
\omega_2&=\big(L^2\cala(g_K)^{-1}\big)    \big(1\boxtimes v_\vdash v_\top^{-1}\big)    \big(L^2\cala(g_I)\boxtimes L^2\cala(g_J)\big).
\end{split}
\]
For $w:=\cala(f_K)(v)$, it follows from $\ad(v)=\cala(\varphi_K)$ that $\ad(w)=\cala(f_K\circ g_K^{-1})$.
Writing $H_0$ for $L^2\cala(S^1_\top)$, the triangle \eqref{eq: triangle that must commute to finish proof of annular sector ++ ++ ++} can then be decomposed as
\[
\tikzmath{
\node (a) at (0,-3.5) {$H_{K'}$};
\node (b) at (-3.2,0) {$H_{I_1}\boxtimes H_{J_1}$};
\node (c) at (3.2,0) {$H_{I_2}\boxtimes H_{J_2}$};
\node (bc) at ($(b)!.5!(c)$) {$H_{I_1}\boxtimes H_{J_1}$};
\node (bba) at ($(b)!.5!(a)$) {$H_0\boxtimes H_0$};
\node (baa) at ($(b)!.75!(a)$) {$H_0$};
\node (cca) at ($(c)!.5!(a)$) {$H_0\boxtimes H_0$};
\node (caa) at ($(c)!.75!(a)$) {$H_0$};
\draw[->] (b) --node[left, yshift=1]{$\scriptstyle L^2\cala(f_I)\boxtimes L^2\cala(f_J)$} (bba);\draw[->] (bba) --node[left, yshift=-2]{$\scriptstyle 1\boxtimes v_\dashv v_\top^{-1}$} (baa);\draw[->] (baa) --node[left, yshift=-1]{$\scriptstyle L^2\cala(f_K)^{-1}$} (a);
\draw[->] (c) --node[right, yshift=1]{$\scriptstyle L^2\cala(g_I)\boxtimes L^2\cala(g_J)$} (cca);\draw[->] (cca) --node[right, yshift=-2]{$\scriptstyle 1\boxtimes v_\dashv v_\top^{-1}$} (caa);\draw[->] (caa) --node[right, yshift=-1]{$\scriptstyle L^2\cala(g_K)^{-1}$} (a);
\draw[->] (caa) --node[above]{$\scriptstyle w\otimes \bar w$} (baa);
\draw[->] (cca) --node[above]{$\scriptstyle w\otimes \bar w$} (bba);
\draw[->] (bc) --node[left, fill=white, xshift=12, yshift=1, inner sep=1]{$\scriptstyle L^2\cala(f_I)\boxtimes L^2\cala(f_J)$} (cca);
\draw[->] (bc) --node[above]{$\scriptstyle v\otimes \bar v$} (b);
\draw[->] (bc) --node[above, yshift=4]{$\scriptstyle L^2\cala(\varphi_I)\boxtimes L^2\cala(\varphi_J)$} (c);
}
\]
and it is now clear that each one of the pieces commutes.
\end{proof}


\bibliographystyle{abbrv}
\bibliography{../Files/db-cn3}

\end{document}